\newcommand{\ms}{\mathsf}
\newcommand{\mb}{\mathbf}
\newcommand{\mi}{\mathit}
\newcommand{\mt}{\mathtt}
\newcommand{\mc}{\mathcal}
\newcommand{\red}[1]{{\color{Maroon}{#1}}}
\newcommand{\blue}[1]{{\color{MidnightBlue}{#1}}}
\let\c@author\relax
\renewcommand{\thm@space@setup}{%
  \thm@preskip=.5\baselineskip\@plus.2\baselineskip \@minus.2\baselineskip
  \thm@postskip=\thm@preskip
}
\newtheoremstyle{amsplain}
{\thm@preskip}
{\thm@postskip}
{\itshape}
{\parindent}
{\scshape}
{.}
{ }
{}
\crefname{diagram}{diagram}{diagrams}
\crefname{property}{property}{properties}
\crefname{intn}{interpretation}{interpretations}
\crefname{msr}{rule}{rules}
\def\Index{\varindex(){\varindextwoScan}{\varindextwo}[][]}
\def\DIndex{\varindex(){\varindextwoScan}{\varindextwo}[][|defin]}
\tikzset{
  place/.style={
    circle,
    thick,
    draw=blue!75,
    fill=blue!20,
    minimum size=6mm
  },
  htransition/.style={
    rectangle,
    thick,
    fill=black,
    minimum width=8mm,
    inner ysep=1pt
  },
  vtransition/.style={
    rectangle,
    thick,
    fill=black,
    minimum height=8mm,
    inner xsep=1pt
  }
}
\tikzset{
  >={Computer Modern Rightarrow[scale=0.8]}
}
\tikzset{triple/.style={-,preaction={draw,double distance=3pt}}}
\tikzset{
  process common/.style = {
    draw,
    thick,
    rounded corners,
    text centered,
  },
  node distance=0.5cm and 1cm,
  medium box/.style={
    rectangle,
    draw=black,
    fill=white,
    shape=rectangle,
    minimum height=0.75cm,
    minimum width=0.75cm
  },
  none/.style={},
  ctxt/.style = {
    process common,
    fill = gray!30,
  },
  obs/.style = {
    fill,
    orange,
    midway,
    circle,
    inner sep=0pt,
    minimum size=0.5em,
    anchor=center
  },
  node/.style = {
    process common,
    align=center,
  },
}
\definecolor{definedrblue}{RGB}{8,0,128}
\definecolor{definedrred}{RGB}{128,0,8}
\newlist{thmlist}{enumerate}{1}
\setlist[thmlist]{label=(\arabic{thmlisti}), ref=\thetheorem(\arabic{thmlisti}),noitemsep}
\newlist{lemlist}{enumerate}{1}
\setlist[lemlist]{label=(\arabic{lemlisti}), ref=\thelemma(\arabic{lemlisti}),noitemsep}
\newlist{proplist}{enumerate}{1}
\setlist[proplist]{label=(\arabic{proplisti}), ref=\theproposition(\arabic{proplisti}),noitemsep}
\crefname{thmlisti}{theorem}{theorems}
\crefname{lemlisti}{lemma}{lemmas}
\crefname{proplisti}{proposition}{propositions}
\def\Montserrat@scale{0.8}
\def\lato@scale{0.88}
\def\definfontfamily{lato-OsF}
\newcommand{\defin}[2][]{%
  \ifcsempty{#1}{%
    {{\usefont{T1}{\definfontfamily}{b}{n}{#2}}}%
  }{%
    {{\usefont{T1}{\definfontfamily}{b}{n}{#2}}}\index{#1}%
  }}
\newcommand{\mdefin}[2][]{%
  \ifcsempty{#1}{%
    \emph{#2}%
  }{%
    \emph{#2}\index{#1}%
  }}
\newcommand{\ie}{i.e.\@\xspace}
\newcommand{\eg}{e.g.\@\xspace}
\newcommand{\cf}{cf.\@\xspace}
\newcommand*{\etc}{%
  \@ifnextchar{.}%
  {etc}%
  {etc.\@\xspace}%
}
\theoremstyle{amsplain}
\newtheorem{theorem}{Theorem}[section]
\newtheorem{conjecture}[theorem]{Conjecture}
\newtheorem{corollary}[theorem]{Corollary}
\newtheorem{lemma}[theorem]{Lemma}
\newtheorem{proposition}[theorem]{Proposition}
\newtheorem*{proposition*}{Proposition}
\newtheorem*{falsehood}{Falsehood}
\theoremstyle{definition} %
\newtheorem{definition}[theorem]{Definition}
\newtheorem{example}[theorem]{Example}
\newtheorem{problem}[theorem]{Problem}
\theoremstyle{remark} %
\newtheorem{remark}[theorem]{Remark}
\crefname{Remark}{remark}{remarks}
\newtheorem{assumption}[theorem]{Assumption}
\crefname{Assumption}{assumption}{assumptions}
\newlist{proofcases}{description}{3}
\setlist[proofcases]{font=\normalfont, labelindent=\listparindent,
  leftmargin=0pt, style=sameline}
\newcommand{\theproofcaselabel}[1]{{\normalfont \textsc{Case} #1.}}
\setlist[proofcases,1]{format=\theproofcaselabel}
\newcommand{\theproofsubcaselabel}[1]{{\normalfont \textsc{Subcase} #1.}}
\setlist[proofcases,2]{format=\theproofsubcaselabel}
\newcommand{\theproofsubsubcaselabel}[1]{{\normalfont \textsc{Subsubcase} #1.}}
\setlist[proofcases,3]{format=\theproofsubsubcaselabel}
\newcommand{\uscore}{\mbox{\tt\char`\_}}
\newcommand{\limplies}{\supset}
\newcommand{\Iota}{\mkern2mu\mathrm{I}}
\providecommand\given{}
\newcommand\SetSymbol[1][]{%
  \nonscript\:#1\vert
  \allowbreak
  \nonscript\:
  \mathopen{}}
\DeclarePairedDelimiterX\Set[1]\{\}{%
  \renewcommand\given{\,\SetSymbol[\delimsize]\,}
  #1
}
\newcommand{\gact}[2]{#1 \mathbin{\cdot} #2}
\newcommand{\N}{\mathbb{N}}
\newcommand{\ordinals}[1]{\mb{#1}}
\DeclarePairedDelimiterX{\sembr}[1]{\lsem}{\rsem}{
  \ifblank{#1}{\:\cdot\:}{#1}
}
\DeclarePairedDelimiterX{\obsbr}[1]{\langlebar}{\ranglebar}{
  \ifblank{#1}{\:\cdot\:}{#1}
}
\DeclarePairedDelimiterX{\gq}[1]{\ulcorner}{\urcorner}{
  \ifblank{#1}{\:\cdot\:}{#1}
}
\DeclarePairedDelimiterX{\sqgq}[1]{\ullcorner}{\ulrcorner}{
  \ifblank{#1}{\:\cdot\:}{#1}
}
\DeclarePairedDelimiterX{\ceil}[1]{\lceil}{\rceil}{
  \ifblank{#1}{\:\cdot\:}{#1}
}
\newcommand{\rn}[1]{(\textup{\textsc{#1}})}
\NewDocumentCommand{\defrule}{m m m m}{
  \rakthesis_define_rule:nnnn { #1 } { #2 } { #3} { #4 }
}
\NewDocumentCommand{\getrule}{m}{
  \rakthesis_get_rule:n { #1 }
}
\NewDocumentCommand{\getrn}{m}{
  \rakthesis_get_rule_name_rn:n { #1 }
}
\NewDocumentCommand{\getrc}{m}{
  \tl_use:c { g_rule_#1_conclusion }
}
\NewDocumentCommand{\getrh}{m m}{
  \seq_item:cn { g_rule_#1_hypotheses } { #2 }
}
\NewDocumentCommand{\definterp}{s m m m}{
  \rakthesis_define_interp:nnnn { #1 } { #2 } { #3 } { #4 }
}
\NewDocumentCommand{\getinterp}{s o m}{
  \rakthesis_get_interp:nnn { #1 } { #2 } { #3 }
}
\NewDocumentCommand{\getlhs}{m}{
  \tl_use:c { g_interp_#1_lhs  }
}
\NewDocumentCommand{\getrhs}{m}{
  \tl_use:c { g_interp_#1_rhs  }
}
\NewDocumentCommand{\getintlbl}{m}{
  \tl_use:c { g_interp_#1_label }
}
\NewDocumentCommand{\refint}{m}{
  \rakthesis_ref_intlabel:n { #1 }
}
\NewDocumentCommand{\Refint}{m}{
  \rakthesis_Ref_intlabel:n { #1 }
}
\NewDocumentCommand{\getictr}{m}{
  \int_to_arabic:n { \int_use:c { g_interp_#1_counter } }
}
\newcommand{\outm}[1]{\red{#1}}
\newcommand{\inm}[1]{\blue{#1}}
\newcommand{\subst}[3]{[#1/#2]#3}
\newcommand{\derivable}[3]{\ensuremath{\blacktriangleright_{#1}^{#2}\,#3}}
\newcommand{\pderiv}[2]{#1 \mathrel{\Updownline} #2}
\newcommand{\rename}[3]{#1 \colon #2 \leftrightarrow #3}
\newcommand{\apprs}[2]{[#1]{#2}}
\newcommand{\relfnt}{\mathfrak}
\newcommand{\opr}[1]{{#1}^{\mathrm{op}}}
\DeclarePairedDelimiterX{\relcpl}[1]{\lceil}{\rceil}{
  \ifblank{#1}{\:\cdot\:}{#1}
}
\newcommand{\reltransc}[1]{\mathrel{#1^+}}
\newcommand{\relreflc}[1]{\mathrel{#1^r}}
\newcommand{\unirel}{\mathrel{\relfnt{U}}}
\newcommand\operatorupX[1]{\,\ThisStyle{\ensurestackMath{%
      #1\stackengine{-0pt}{\,}{\SavedStyle\!^{\mathord{\uparrow}}}{O}{l}{F}{T}{S}}}}
\newcommand\operatorup[1]{\mathop{\operatorupX{#1}}}
\newcommand{\dirsqcup}{\operatorup{\bigsqcup}}
\newcommand{\dirsup}{\dirsqcup}
\DeclarePairedDelimiterX{\upim}[1]{[}{]}{
  \ifblank{#1}{\:\cdot\:}{#1}
}
\DeclarePairedDelimiterX{\prel}[1]{|}{|}{
  \ifblank{#1}{\:\cdot\:}{#1}
}
\newcommand{\sfix}[1]{\ensuremath{#1^\dag}}
\newcommand{\opc}[1]{{#1}^{\mathrm{op}}}
\newcommand{\olap}[2]{\Omega_{#1}(#2)}
\newcommand{\emptymset}{{\emptyset}}
\newcommand{\supp}{\ms{supp}}
\newcommand{\persfnt}[1]{\boldsymbol{\mathbf{#1}}}
\newcommand{\ephemfnt}{\mathsf}
\newcommand{\steparrow}[1][{}]{\xrightarrow{#1}}
\newcommand{\steps}[1][{}]{\steparrow[#1]^*}
\newcommand{\msinc}[2]{#1 \mathrel{;} #2}
\newcommand{\jstep}[3][{}]{#2 \xrightarrow{#1} #3}
\newcommand{\msstep}[1][{}]{\steparrow[#1]}
\newcommand{\mssteps}[1][{}]{\steps[#1]}
\newcommand{\tSendC}[3]{\mathsf{send}\ #1\ #2;\ #3}
\newcommand{\tSendL}[3]{#1.#2;\ #3}
\newcommand{\tSendS}[2]{\mathsf{send}\ #1\ \mathsf{shift};\ #2}
\newcommand{\tSendV}[3]{\uscore \leftarrow \ms{output}\ #1\ #2;\ #3}
\newcommand{\tSendU}[2]{\ms{send}\ #1\ \mathsf{unfold};\ #2}
\newcommand{\tRecvC}[3]{#1 \leftarrow \mathsf{recv}\ #2;\ #3}
\newcommand{\tRecvS}[2]{\ms{shift} \leftarrow \mathsf{recv}\ #1;\ #2}
\newcommand{\tRecvV}[3]{#1 \leftarrow \ms{input}\ #2;\ #3}
\newcommand{\tRecvU}[2]{\ms{unfold} \leftarrow \ms{recv}\ #1;\ #2}
\newcommand{\tClose}[1]{\mathsf{close}\ #1}
\newcommand{\tWait}[2]{\mathsf{wait}\ #1;\ #2}
\newcommand{\tCase}[2]{\mathsf{case}\ #1\ #2}
\newcommand{\tFwdP}[2]{#1 \rightarrow #2}
\newcommand{\tFwdN}[2]{#1 \leftarrow #2}
\newcommand{\tCut}[3]{#1 \leftarrow #2;\ #3}
\newcommand{\tFix}[2]{\ms{fix}\ #1.#2}
\newcommand{\tpFix}[3]{\ms{fix}^{#1}\ #2.#3}
\newcommand{\tProc}[3]{#1 \leftarrow \{ #2 \} \leftarrow #3}
\newcommand{\tProcQ}[3]{#1 \leftarrow \{ #2 \} \leftarrow #3}
\newcommand{\tProcU}[3]{#1 \leftarrow \{ #2 \} \leftarrow #3}
\newcommand{\Tu}{\mathbf{1}}
\newcommand{\Tplus}{\oplus}
\newcommand{\Tot}{\otimes}
\newcommand{\Tamp}{\&}
\newcommand{\Tlolly}{\multimap}
\newcommand{\Tand}[2]{#1 \land #2}
\newcommand{\Timp}[2]{#1 \limplies #2}
\newcommand{\Trec}[2]{\rho #1.#2}
\newcommand{\Trecn}[3]{\rho^{#1} #2.#3}
\newcommand{\Tus}[1]{{{\uparrow} #1}}
\newcommand{\Tds}[1]{{{\downarrow} #1}}
\newcommand{\Tproc}[2]{\{#1 \leftarrow #2\}}
\newcommand{\Tnat}{\mathbf{nat}}
\newcommand{\mClose}[1]{\tClose{#1}}
\newcommand{\mSendLP}[3]{\tSendL{#1}{#2}{\tFwdP{#3}{#1}}}
\newcommand{\mSendLN}[3]{\tSendL{#1}{#2}{\tFwdN{#1}{#3}}}
\newcommand{\mSendCP}[3]{\tSendC{#1}{#2}{\tFwdP{#3}{#1}}}
\newcommand{\mSendCN}[3]{\tSendC{#1}{#2}{\tFwdN{#1}{#3}}}
\newcommand{\mSendVP}[3]{\tSendV{#1}{#2}{\tFwdP{#3}{#1}}}
\newcommand{\mSendVN}[3]{\tSendV{#1}{#2}{\tFwdN{#1}{#3}}}
\newcommand{\mSendSP}[2]{\tSendS{#1}{\tFwdN{#2}{#1}}}
\newcommand{\mSendSN}[2]{\tSendS{#1}{\tFwdP{#1}{#2}}}
\newcommand{\mSendUP}[2]{\tSendU{#1}{\tFwdP{#2}{#1}}}
\newcommand{\mSendUN}[2]{\tSendU{#1}{\tFwdN{#1}{#2}}}
\newcommand{\cclose}{\ms{close}}
\newcommand{\cunfold}{\ms{unfold}}
\newcommand{\cshift}{\ms{shift}}
\newcommand{\cval}[1]{\ms{val}\ #1}
\newcommand{\capxn}[2]{\lfloor #1 \rfloor_{#2}}
\DeclarePairedDelimiter{\commbrdelim}{\llangle}{\rrangle}
\newcommand{\commbr}[2][{}]{\commbrdelim{#2}_{#1}}
\newcommand{\fnval}[1]{#1\ \ms{val}}
\NewDocumentCommand{\fnstep}{o}{%
  \msg_warning:nn {thesis} {fnstep-obs}
  \steparrow[#1]
}
\NewDocumentCommand{\fnsteps}{o}{%
  \msg_warning:nn {thesis} {fnstep-obs}
  \steps[#1]
}
\newcommand{\fneval}[2]{#1 \mathrel{\Downarrow} #2}
\newcommand{\jsynt}[2]{\inm{#1} \mathrel{\varepsilon} \inm{#2}}
\newcommand{\jisst}[2][{}]{#2\ \ms{type}_{\ms{s}}^{#1}}
\newcommand{\jisft}[1]{#1\ \ms{type}_{\ms{f}}}
\newcommand{\jtypef}[3]{#1 \Vdash #2 : #3}
\newcommand{\jtypem}[5]{{#1}\mathrel{;} {#2} \vdash #3 \mathrel{{:}{:}} {#4 : #5}}
\newcommand{\jstype}[3][{}]{{#2} \vdash \jisst[#1]{#3}}
\newcommand{\jftype}[2]{{#1} \vdash \jisft{#2}}
\newcommand{\jcfgt}[4][]{
  \ifstrempty{#1}{%
    #2 \vdash #3 :: #4%
  }{%
    \pderiv{#1}{#2 \vdash #3 :: #4}%
  }}
\newcommand{\jcfgti}[5][]{%
  \ifstrempty{#1}{%
    #2 \medvert #3 \vdash #4 :: #5%
  }{%
    \pderiv{#1}{#2 \medvert #3 \vdash #4 :: #5}%
  }}
\newcommand{\freecn}{\mathsf{fc}}
\newcommand{\inpcn}{\mathsf{ic}}
\newcommand{\outcn}{\mathsf{oc}}
\newcommand{\carrcn}{\mathsf{cc}}
\newcommand{\contcn}{\mathsf{kc}}
\newcommand{\jproc}[2]{\ephemfnt{proc}(#1, #2)}
\newcommand{\jmsg}[2]{\ephemfnt{msg}(#1, #2)}
\newcommand{\jeval}[2]{{\persfnt{eval}(#1, #2)}}
\newcommand{\jttp}[3]{\inm{#1} \vdash \inm{#2} : \outm{#3}}
\newcommand{\jtoc}[5][{}]{\inm{#2} \leadsto^{#1} \outm{#3} \mathrel{\varepsilon} \outm{#5} \mathrel{/} \inm{#4}}
\DeclarePairedDelimiterX{\ppi}[1]{\langlebar}{\ranglebar}{
  \ifblank{#1}{\:\cdot\:}{#1}
}
\DeclarePairedDelimiterX{\mg}[1]{\langlebar}{\ranglebar^p}{
  \ifblank{#1}{\:\cdot\:}{#1}
}
\newcommand{\jcmf}[3]{#1 :_{\ms{f}} #2 \leadsto #3}
\newcommand{\jtrelf}[5]{\jtypef{#2}{#3 \mathrel{#1} #4}{#5}}
\newcommand{\jtrelp}[7]{\jtypem{#2}{#3}{#4 \mathrel{#1} #5}{#6}{#7}}
\newcommand{\jtrelc}[5]{\jcfgt[]{#2}{#3 \mathrel{#1} #4}{#5}}
\newcommand{\mkcg}[1]{\mathrel{#1^c}}
\newcommand{\mkocg}[1]{\mathrel{#1^{\mathcal{O}}}}
\newcommand{\mkbcg}[1]{\mathrel{#1^b}}
\newcommand{\mkpcg}[1]{\mathrel{#1^p}}
\newcommand{\sobssim}[1]{\leqdot_{#1}}
\newcommand{\sobseq}[1]{\doteq_{#1}}
\newcommand{\sobsprec}[1]{\left(\leqdot_{#1}\right)^c}
\newcommand{\sobscong}[1]{\left(\doteq_{#1}\right)^c}
\newcommand{\strobsc}{\sobseq{T}}
\newcommand{\eocssim}{\sobssim{E}}
\newcommand{\iocssim}{\sobssim{I}}
\newcommand{\tocssim}{\sobssim{T}}
\newcommand{\eocsprec}{\sobsprec{E}}
\newcommand{\iocsprec}{\sobsprec{I}}
\newcommand{\tocsprec}{\sobsprec{T}}
\newcommand{\icommeq}{\doteq_I}
\newcommand{\tcommeq}{\doteq_T}
\NewDocumentCommand{\commsim}{o}{%
  \IfNoValueTF{#1}{%
    \leqdot
  }{%
    \ensuremath{\mathrel{\raisebox{1pt}{$\leqdot$}/\raisebox{-1pt}{$#1$}}}
  }
}
\NewDocumentCommand{\ncommsim}{o}{%
  \IfNoValueTF{#1}{%
    \not\leqdot
  }{%
    \ensuremath{\mathrel{\raisebox{1pt}{$\not\leqdot$}/\raisebox{-1pt}{$#1$}}}
  }
}
\NewDocumentCommand{\commeq}{o}{%
  \IfNoValueTF{#1}{%
    \doteq
  }{%
    \ensuremath{\mathrel{\raisebox{1pt}{$\doteq$}/\raisebox{-1pt}{$#1$}}}
  }
}
\newcommand{\barb}[2][{}]{#2 \mskip\medmuskip{\downarrow}_{#1}}
\newcommand{\wbarb}[2][{}]{#2 \mskip\medmuskip{\Downarrow}_{#1}}
\newcommand{\nwbarb}[2][{}]{#2 \mskip\medmuskip{\nDownarrow}_{#1}}
\newcommand{\wbsim}{\precapprox}
\newcommand{\wbbisim}{\approx}
\newcommand{\ctxh}[4][\cdot]{#2[#1]^{#3}_{#4}}
\title{Fairness and Communication-Based Semantics for Session-Typed Languages}
\author[1]{Ryan Kavanagh}
\affiliation[1]{organization={School of Computer Science},
  addressline={McGill University},
  city={Montreal, Quebec},
  postcode={H3A 2A7},
  country={Canada}}
\begin{document}

\begin{abstract}
  We give communication-based semantics and reasoning techniques for Polarized SILL~\cite{toninho_2013:_higher_order_proces_funct_session, pfenning_griffith_2015:_polar_subst_session_types}, a rich session-typed programming language with general recursion.
  Its features include channel and code transmission, synchronous and asynchronous communication, and functional programming.
  Our contributions are distinguished by their faithfulness to the \emph{process abstraction}, \ie, to the premise that communication is the only observable phenomenon of processes.
  We give the first observed communication semantics that supports general recursion and code transmission.
  Observed communication semantics~\cite{atkey_2017:_obser_commun_seman_class_proces} define the meaning of processes in terms of their observed communications.
  We use this observational semantics to define experiments on processes, and we give a communication-based testing equivalences framework~\cite{denicola_hennessy_1984:_testin_equiv_proces} for defining observational simulations and equivalences on processes.
  This framework captures several natural equivalences, and we show that one of these coincides with barbed congruence, the canonical notion of process equivalence.

  Polarized SILL is defined using a substructural operational semantics based on multiset rewriting.
  To ensure that our contributions are well-defined in the presence of non-termination, we introduce fairness for multiset rewriting systems.
  We construct a fair scheduler, we give sufficient conditions for traces to be fair, and we study the effects of permutation on fair traces.
\end{abstract}

\begin{keyword}
  fairness \sep multiset rewriting \sep session types \sep observed communication semantics \sep testing preorders
  \MSC[2020]{Primary: 68Q55; %
    Secondary: %
    03B70. %
  }
\end{keyword}

\maketitle

\section{Introduction}

Communicating systems are ubiquitous, but their complexity makes them hard to get right and reason about.
To reason about communicating systems, we can use \emph{program equivalence}.
``Program equivalence is arguably one of the most interesting and at the same time important problems in formal verification''~\cite{lahiri_2018:_progr_equiv_dagst_semin}, and its practical applications include both program optimization and compiler verification.
But before we can do so, we must know what it means for communicating systems to be equivalent.

To make this question tractable, we abstract away inessentials and consider systems of \emph{communicating processes}.
Processes are computational agents that interact with their environment only through communication.
Importantly, communication is their only phenomenon.
This means that we can observe processes' communication patterns, but that we cannot observe their internal states or workings.
It also implies that we can think of systems of processes as processes themselves.

To respect this process abstraction, we are interested in an \textit{extensional} notions of equivalence, where processes are equivalent ``if we cannot tell them apart without pulling them apart''~\cite[2]{milner_1980:_calcul_commun_system}.
This suggests that two processes should be equivalent if, regardless of the experiments to which they are subjected, we cannot observe any differences between them.
But what are legitimate experiments on processes, and what does it mean to observe processes?

To illustrate these questions' many subtleties, we consider a session-typed process that flips bits in a bit stream.
Session types specify communication protocols, and communication on a session-typed communication channel respects that channel's type.
Let the session type ``$\ms{bits}$'' specify a bit stream, \ie, a potentially infinite sequence of bits $\mt{1}$ and $\mt{0}$.
The following recursive process\footnote{It is written in an idealized fragment of the session-typed language Polarized SILL presented in \cref{cha:sill-background}. We revisit this process in \cref{ex:sill-background-typing-mult-rewr:4}.} receives a bit stream on the channel $i$, flips its bits, and sends the result on the channel~$o$:
\begin{align*}
  i : \ms{bits} \vdash \tFix{F}{\;\tCase{i}{\{\; &\mt{0} \Rightarrow \tSendL{o}{\mt{1}}{\tProc{i}{F}{o}}\\
  {} \mid\; &\mt{1} \Rightarrow \tSendL{o}{\mt{0}}{\tProc{i}{F}{o}} \}}} :: o : \ms{bits}
\end{align*}
Operationally, the process waits until it receives a bit $\mt{0}$ or $\mt{1}$ on the channel \(i\).
If it receives the bit $\mt{0}$, then it takes the $\mt{0}$ branch of the $\ms{case}$ statement, sends the bit $\mt{1}$ on $o$ (syntax: $o.\mt{1}$), and then recurses.
Its behaviour if it receives the bit $\mt{1}$ is analogous.
Call this process $F_{i,o}$.

Now consider the process $C_{i,o}$ obtained by composing two copies of $F$ to form a chain, as in the following picture, where we draw processes as boxes and channels as lines or wires.
Is it ``equivalent'' to the identity process $\tFwdP{i}{o}$ that copies all communications from $i$ to $o$ unchanged?
\begin{center}
  \begin{tikzpicture}
    \node [node] (F1) {$F_{i,c}$};
    \node [node, right=of F1] (F2) {$F_{c,o}$};
    \node [left=of F1] (i) {};
    \node [right=of F2] (o) {};
    \draw (i) -- (F1) node [pos=0.5, above] {$i$};
    \draw (F1) -- (F2) node [pos=0.5, above] {$c$};
    \draw (F2) -- (o) node [pos=0.5, above] {$o$};

    \node [right=of o] (i2) {};
    \node [node, right=of i2] (fwd) {$\tFwdP{i}{o}$};
    \node [right=of fwd] (o2) {};
    \draw (i2) -- (fwd) node [pos=0.5, above] {$i$};
    \draw (fwd) -- (o2) node [pos=0.5, above] {$o$};
  \end{tikzpicture}
\end{center}
If we are to give an extensional notion of equivalence, then we must determine if both processes are indistinguishable to external observers.
A long-standing idea in concurrency theory is that we can only observe or interact with processes through communication~\cites[2]{milner_1980:_calcul_commun_system}[12]{milner_1989:_commun_concur}.
Accordingly, a natural experiment to test their indistinguishability is to compose both processes with a process $S$ that sends a bit stream on $i$, and to compare the resulting communications on the channel $o$:
\begin{center}
  \begin{tikzpicture}
    \node [node] (S) {$S$};
    \node [node, right=of S] (F1) {$F_{i,c}$};
    \node [node, right=of F1] (F2) {$F_{c,o}$};
    \node [right=of F2] (o) {};
    \draw (S) -- (F1) node [pos=0.5, above] {$i$};
    \draw (F1) -- (F2) node [pos=0.5, above] {$c$};
    \draw (F2) -- (o) node [pos=0.5, above] {$o$};

    \node [node, right=of o] (S2) {$S$};
    \node [node, right=of S2] (fwd) {$\tFwdP{i}{o}$};
    \node [right=of fwd] (o2) {};
    \draw (S2) -- (fwd) node [pos=0.5, above] {$i$};
    \draw (fwd) -- (o2) node [pos=0.5, above] {$o$};
  \end{tikzpicture}
\end{center}
Intuitively, we should observe the same communications on $o$ in both cases.
In the first case, if $S$ sends a bit on \(i\), then $F_{i,c}$ flips it to its complement, and then $F_{c,o}$ flips it back, so we observe the original bit on $o$.
In the second case, $\tFwdP{i}{o}$ immediately copies the bit unchanged from $i$ to $o$.
So these two processes produce indistinguishable communications on $o$ when subjected to the above experiment.

Can we subject processes to any experiments other than composition with communication partners?
Not if we take seriously the premise that we can only interact with processes through communication.
Even then, only the contents of communications matter: timing differences are not meaningful because of non-deterministic process scheduling.
This example suggests that processes are observationally equivalent if we observe ``the same'' communications under all communication experiments.
Though this notion of equivalence is intuitively and semantically reasonable, it faces many~challenges.

First, assume that communication is higher-order, \ie, that we can send channels and programs.
Processes that send ``equivalent processes'' should be equivalent.
This means that higher-order communications do not need to be be equal on-the-nose to be considered ``the same''.
But then what does it mean for communications to be ``the same''?

Next, which channels should we deem to be observable in experiments?
We could treat experiments as black boxes that experiment on processes and then report their findings on their external channels.
In this ``external communication'' case, we observe the experiments' external channels, and we deem two processes to be equivalent if subjecting them to any experiment produces the same report on its external channels.
Alternatively, we could view experiments as questioners.
In this ``internal communication'' case, we observe the dialog between experiments and processes, and we deem two processes to be equivalent if subjecting them to any experiment produces the same dialogues.
Do these two styles of experimentation induce the same notion of process equivalence?
Is one preferable to the other?

Finally, assume that communication is asynchronous, \ie, that channels are buffered and that a process can send on a channel without first synchronizing with its recipient.
Though processes execute independently, an unfair scheduler could neglect to execute the process $F_{c,o}$.
In this case, the bits sent by $S$ and flipped by $F_{i,c}$ accumulate on the channel $c$, and they are never processed by $F_{c,o}$.
This means that, as external observers, we observe no communications on the channel $o$.
This contradicts our intuition that we should observe the same bit stream on $o$ as was sent by $S$.
It implies that fairness underlies our notion of process equivalence.

\subsection{Executive Summary of Contributions}
\label{sec:exec-summ-contr}

To make the above intuitions rigorous and answer the above questions, we make three contributions:
\begin{enumerate}
\item the first observed communication semantics for a session-typed language supporting general recursion and code transmission;
\item a communication-based testing equivalences framework; and
\item the first analysis of fairness for multiset rewriting systems, which are used to define the operational semantics of many session-typed programming languages.
\end{enumerate}
We give an overview of each of these contributions in turn.

\subsubsection{Observed Communication Semantics}

To make our notion of observation rigorous, we give an observed communication semantics~\cite{atkey_2017:_obser_commun_seman_class_proces} to Polarized SILL~\cite{toninho_2013:_higher_order_proces_funct_session, pfenning_griffith_2015:_polar_subst_session_types}.
Observed communication semantics define the meaning of a process to be the communications observed on its channels.
Polarized SILL is a rich session-typed language that cohesively integrates functional programming with session-typed message-passing concurrency.
Its functional layer is the simply-typed \(\lambda\)-calculus with a fixed-point operator, and it includes quoted processes as a base type.
Its process layer is based on a proofs-as-processes correspondence between intuitionistic linear logic and the session-typed \(\pi\)-calculus.
This layer's is specified by a substructural operational semantics in the form of a multiset rewriting system.
It supports recursive types and processes, synchronization (communication is asynchronous), choices (a form of branching), and value transmission (including quoted processes).
Polarized SILL provides an ideal setting for our analysis: it has many desirable real-world features while also remaining tractable.

We start by giving a semantic account of session-typed communications.
Communications are described using potentially infinite trees, and they are associated to session types by a typing judgment.
We then endow session-typed communications with an approximation order.
It serves two purposes.
First, we use it to specify what it means for communications to be ``the same''.
Second, we use it to characterize infinite communications in terms of their finite approximations.
Our account of session-typed communications is language independent: it describes in general what it means to be a session-typed communication.

Next, we explain how to observe the communications of processes in Polarized SILL.
Given a process execution (a sequence multiset rewriting steps), we use a coinductively defined judgment to observe the process's communications.
We show two important results.
First, if we consider only fair executions of a process, then all executions give the same observed communications.
This reflects the confluence property enjoyed by Polarized SILL.
Second, we show that whenever we observe a communication on a session-typed channel, that the type of the channel and the type of the communication agree.
This ensures that our observed communication semantics is semantically sound.

To help situate our observed communication semantics, we contrast it to the prior work.
\Textcite{atkey_2017:_obser_commun_seman_class_proces} introduced observed communication semantics to provide a notion of equivalence for Wadler's Classical Processes~\cite{wadler_2014:_propos_as_session}.
Classical Processes arises from a proofs-as-processes correspondence between classical linear logic and the session-typed $\pi$-calculus.
Our semantics differs on several key points.
First, we support a broader range of communication protocols, including general recursion and value transmission.
Second, our observed communication semantics requires no changes to the underlying operational semantics.
Third, we assume that communication is asynchronous rather than synchronous.
This assumption costs us nothing because synchronous communication can be encoded in asynchronous systems~\cite{pfenning_griffith_2015:_polar_subst_session_types}.
However, asynchronous communication simplifies observing communications (processes do not need to be provided with communication partners), and it lets us define a better-behaved ``external'' notion of equivalence.

\subsubsection{Communication-Based Testing Equivalences}

We make our notion of experimentation rigorous by giving a communication-based testing equivalences framework.
Testing equivalence frameworks~\cite{denicola_hennessy_1984:_testin_equiv_proces,hennessy_1983:_synch_async_exper_proces,denicola_1985:_testin_equiv_fully} use experiments to determine if processes are equivalent.
Intuitively, processes are deemed equivalent if no experiment can differentiate them.
Classical approaches rely on observing process states.
Instead of observing states, we observe communications using our observed communication semantics.
Advantageously, this gives a language-agnostic framework: equivalence depends only on observed communications and not on concrete operational details.

Concretely, we define an experiment to be a context with a hole, along with a choice of channels to observe.
Experimenting on a process then involves composing it with the context and observing the chosen channels.
Two processes are deemed equivalent according to an experiment if they both induce the same observed communications.
Because single experiments are often insufficient for discriminating between processes, we consider collections of experiments.
We then say that processes are equivalent if they are indistinguishable according to all experiments in a collection.

As motivated above, we have a certain latitude in choosing which channels to observe during experimentation.
One possibility is to observe the channels between an experiment and a process, leading to ``internal'' notions of equivalence à la \textcite{darondeau_1982:_enlar_defin_compl} and \textcite{atkey_2017:_obser_commun_seman_class_proces}.
A second possibility is to imagine that an experiment communicates with the process and reports its findings on its external channels.
Observing these external channels leads to an ``external'' notion of equivalence, and we show that this equivalence is a congruence.
These two equivalences are distinct, and they are both closed under process execution.
Our main result is that external equivalence coincides with barbed congruence, the canonical notion of progress equivalence.
This new characterization sheds light on the nature of barbed congruence, and it provides techniques for showing that processes are barbed congruent.

\subsubsection{Fairness for Multiset Rewriting Systems}

Polarized SILL's substructural operational semantics is specified by multiset rewriting system, and its rewrite rules can be applied non-deterministically.
In the case of Polarized SILL, this non-determinism means that processes that could make progress might not do so.
These ``unfair'' executions are undesirable in practice, and they make it harder to reason about processes.
As a result, we would like to restrict our attention to only ``fair'' executions.
To do so, we introduce and study fairness for multiset rewriting systems.

We start by introducing three varieties of fairness, each of which subdivides along the axis of weak and strong fairness.
We study the merits of each form of fairness, and we consider situations in which each might be desirable.
We show that these three varieties are independent, \ie, that no two varieties imply the third.

We then study properties of fairness.
We start by considering sufficient conditions for multiset rewriting systems to have fair executions, and we construct a fair scheduler.
An important sufficient condition is called ``interference-freedom''.
Assuming interference-freedom, we show that all varieties of fairness coincide, that fair executions are closed under permutation of steps, and that all fair executions are permutations of each other.
These results simplify reasoning about observed communications and processes: given a fair execution, we can often assume without loss of generality that steps of interest occur in sequence.

\subsection{Changes Since the Workshop Version}

This paper significantly expands on the work we presented at EXPRESS/SOS~2020~\cite{kavanagh_2020:_subst_obser_commun_seman}.
Several of the key changes include:
\begin{enumerate}
\item A more refined analysis of fairness.
  The previous single notion of fairness (called ``über fairness'' below) is extremely strong and it subsumes the various more refined notions of fairness considered below.
\item We support the entirety of Polarized SILL instead of a small idealized fragment.
\item Our observed communication semantics no longer requires changes to the language's substructural operational semantics.
\item We introduce a communication-based experimentation framework for defining and reasoning about program equivalences.
  We relate ``external'' observational equivalences to barbed congruence.
\item We explore the relationship between observational congruences and process congruences.
\end{enumerate}
This paper also builds on \cite[chapters~3, 5--7, and 10]{kavanagh_2021:_commun_based_seman}.
The key changes are added exposition and examples.

\subsection{Outline of Paper}

We give a survey of multiset rewriting systems in \cref{sec:ssos-fairness:mult-rewr-syst}.
In \cref{sec:three-vari-fairn}, we introduce fairness for multiset rewriting systems, and we study its properties in \cref{sec:ssos-fairness:prop-fair-trac}.
We present Polarized SILL in \cref{cha:sill-background}, our observed communication semantics in \cref{sec:sill-obs-equiv:observ-comm}, and our communication-based testing equivalences framework in \cref{cha:sill-obs-equiv}.
We discuss related work in \cref{sec:ssos-fairness:related-work}.
In \cref{sec:conclusion}, we take stock of our contributions and discuss future work.
For convenience, a listing of symbols follows our references.

Readers primarily interested in fairness are invited to focus on sections \cref{sec:ssos-fairness:mult-rewr-syst,sec:three-vari-fairn,sec:ssos-fairness:prop-fair-trac}, while skimming \cref{cha:sill-background,sec:sill-obs-equiv:observ-comm} for applications of fairness to session-typed processes.
Those primarily interested in session-typed processes can treat fairness as a ``black box'' concept and focus on \cref{cha:sill-background,sec:sill-obs-equiv:observ-comm,cha:sill-obs-equiv}.

\newcommand{\queue}{\ephemfnt{queue}}
\newcommand{\enq}{\ephemfnt{enq}}

\section{Multiset Rewriting Systems}
\label{sec:ssos-fairness:mult-rewr-syst}

In this section, we review (first-order) multiset rewriting systems.
For expository reasons, we start with the simpler formalism MSR${}_1$ of \textcite{cervesato_scedrov_2009:_relat_state_based} in \cref{sec:ssos-fairness:mult-rewr-syst:first-order-multiset}.
We extend it in \cref{sec:ssos-fairness:mult-rewr-syst:first-order-multiset-persistence} to handle the \emph{persistency} features of the multiset rewriting system MSR of \cite{cervesato_2005:_compar_between_stran}.
These first sections are expository, and they serve solely to give a uniform presentation to pre-existing work.
Our contribution comes in \cref{sec:ssos-fairness:mult-rewr-syst:parall-rule-appl}, where we extend parallel multiset rewriting~\cite[\S\S~5.3--5.4]{cervesato_2001:_typed_multis_rewrit} to support persistency.

\begin{definition}
  \label{def:ssos-fairness/mult-rewr-syst:1}
  A \defin{multiset}\DIndex{multiset} $M$ is a pair $(S, m)$ where $S$ is a set (the \mdefin{underlying set}) and $m : S \to \N$ is a function.
  It is finite if $\sum_{s \in S} m(s)$ is finite.
  We say $s$ is an \defin{element}\varindex{multiset element}{1!2}[|defin] of $M$, $s \in M$, if $m(s) > 0$.
  The \defin{support}\varindex{support of@ a@ multiset}{4!1 1!~234}[|defin] of $M$ is the set $\supp(M) = \{ s \in S \mid s \in M \}$\glsadd{supp}.
\end{definition}

\begin{remark}
  A multiset with a finite underlying set is always finite.
  The converse is false: the multiset \( (\N, \lambda x \in \N . 0) \) is finite, but the underlying set \(\N\) is infinite.
\end{remark}

When considering several multisets at once, we assume without loss of generality that they have equal underlying sets.

\begin{definition}
  \label{def:ssos-fairness/mult-rewr-syst:2}
  Multisets $M_1 = (S, m_1)$ and $M_2 = (S, m_2)$ are equipped with the following operations and relations:
  \begin{enumerate}
  \item the \defin{sum}\varindex{multiset sum}{1!2}[|defin] of $M_1$ and $M_2$ is the multiset $M_1,M_2 = (S, \lambda s \in S.m_1(s) + m_2(s))$;
  \item the \defin{union}\varindex{multiset union}{1!2}[|defin] of $M_1$ and $M_2$ is the multiset $M_1 \cup M_2 = (S, \lambda s \in S . \max(m_1(s), m_2(s)))$;
  \item the \defin{intersection}\varindex{multiset intersection}{1!2}[|defin] of $M_1$ and $M_2$ is the multiset $M_1 \cap M_2 = (S, \lambda s \in S . \min(m_1(s), m_2(s)))$;
  \item the \defin{difference}\varindex{multiset difference}{1!2}[|defin] of $M_1$ and $M_2$ is the multiset $M_1 \setminus M_2 = (S, \lambda s \in S . \max(0, m_1(s) - m_2(s)))$;
  \item $M_1$ is \defin{included}\varindex{multiset inclusion}{1!2}[|defin] in $M_2$, written $M_1 \subseteq M_2$, if $m_1(s) \leq m_2(s)$ for all $s \in S$.\qedhere
  \end{enumerate}
\end{definition}

We abuse terminology and call multisets $(S, m)$ \mdefin{sets} if $m(s) \leq 1$ for all $s \in S$.
We write $\emptymset$\glsadd{emptymset} for \defin{empty}\varindex{multiset empty}{1!2~}[|defin] multisets, \ie, for multisets $(S, m)$ such that $m(s) = 0$ for all $s \in S$.

\begin{example}
  \label{ex:ssos-fairness/mult-rewr-syst:1}
  A finite string $s$ over an alphabet $\Sigma$ describes a multiset $(\Sigma, m)$, where $m(\sigma)$ is the multiplicity of $\sigma$ in $s$.
\end{example}

\subsection{First-Order Multiset Rewriting}
\label{sec:ssos-fairness:mult-rewr-syst:first-order-multiset}

Consider finite multisets of first-order atomic formulas.
We call formulas \defin{facts}\DIndex{fact}.
We write $M(\vec x)$ to mean that the facts in the multiset $M$ draw their variables from $\vec x$, where $\vec x = x_1, \dotsc, x_m$ for some $m$.
Given $M(\vec x)$ and some choice of terms $\vec t$ for $\vec x$, we write $M(\vec t)$ for the simultaneous substitution $\subst{\vec t}{\vec x}{M}$.

Multiset rewrite rules describe localized changes to multisets of facts.
A \defin{multiset rewrite rule}\varindex{multiset rewrite rule}{123}[|defin] $r$ is a pair of multisets $F(\vec x)$ and $G(\vec x, \vec n)$, and it is schematically represented by:
\[
  r : \forall \vec x . F(\vec x) \to \exists \vec n . G(\vec x, \vec n).
\]
Informally, we interpret the variables $\vec x$ as being universally quantified in $F$ and $G$, and the variables $\vec n$ as being existentially quantified in $G$.
In particular, we treat $\vec x$ and $\vec n$ as bound variables, and assume that they can be freely $\alpha$-varied.
A \defin{multiset rewriting system}\varindex{multiset rewriting system}{123}[|defin] (MRS) is a set $\mc{R}$ of multiset rewrite rules.

Given a rule $r : \forall \vec x . F(\vec x) \to \exists \vec n . G(\vec x, \vec n)$ in $\mc{R}$ and some choice of constants $\vec t$ for $\vec x$, we say that the \defin{instantiation}\varindex{multiset rewrite rule instantiation}{123!~4}[|defin] $r(\vec t) : F(\vec t) \to \exists \vec n.G(\vec t, \vec n)$\glsadd{mrsinst} is \defin{applicable} to a multiset $M$ if there exists a multiset $M'$ such that $M = F(\vec t),M'$.
The rule $r$ is applicable\varindex{multiset rewrite rule applicable}{123!4~}[|defin] to $M$ if $r(\vec t)$ is applicable to $M$ for some $\vec t$.
In these cases, the \defin{result}\varindex{{multiset rewriting system} result as@ a@ multiset}{1!2!345}[|defin] of applying $r(\vec t)$ to $M$ is the multiset $G(\vec t, \vec d),M'$, where $\vec d$ is a choice of pairwise-distinct fresh constants.
In particular, we assume that the constants $\vec d$ do not appear in $M$ or in $\mc{R}$.
We call $\theta = \subst{\vec t}{\vec x}{}$ the \defin{matching substitution}\DIndex<{matching substitution}\varindex{multiset rewrite rule matching substitution}{123!5!4=}[|defin] and $\xi = \subst{\vec d}{\vec n}{}$ the \defin{fresh-constant substitution}\DIndex<{fresh-constant substitution}\varindex{multiset rewrite rule fresh-constant substitution}{123!5!4=}[|defin].
Intuitively, the matching substitution specifies to which portion of $M$ the rule $r$ is applied.
The \defin{instantiating substitution}\DIndex<{instantiating substitution}\varindex{multiset rewrite rule instantiating substitution}{123!5!4=}[|defin] for $r$ relative to $M$ is the composite substitution $\delta = (\theta, \xi)$.
We capture this relation using the syntax
\[
    F(\vec t),M' \xrightarrow{(r;\delta)} G(\vec t, \vec d),M'.
\]
We often abuse notation and write $r(\theta)$, $F(\theta)$, and $G(\theta,\xi)$ for $r(\vec t)$, $F(\vec t)$, and $G(\vec t, \vec d)$.
We call $F(\vec t)$ the \defin{active}\varindex{multiset rewriting system active}{1!4~ 123!41}[|defin] multiset and $M'$ the \defin{stationary}\varindex{multiset rewriting system stationary}{1!4~ 123!41}[|defin] multiset.

\begin{definition}
  Given an MRS $\mc{R}$ and a multiset $M_0$, a \defin{trace}\DIndex/>{{multiset rewriting system} trace}\index{trace!for multiset rewriting systems|see{multiset rewriting system, trace}} from $M_0$ is a countable sequence of steps
  \begin{equation}
    \label{eq:ssos-fairness/mult-rewr-syst:7}
    M_0 \xrightarrow{(r_1;\delta_1)} M_1 \xrightarrow{(r_2;\delta_2)} M_2 \xrightarrow{(r_3;\delta_3)} \cdots
  \end{equation}
  such that, where $\delta_i = (\theta_i, \xi_i)$, the constants in $M_i$ and $\xi_j$ are disjoint for all $i < j$.

  The notation $(M_0, (r_i;\delta_i)_{i \in I})$\glsadd{msrtrace} abbreviates the trace \eqref{eq:ssos-fairness/mult-rewr-syst:7}, where $I$ always ranges over $\N$ or $\ordinals{n} = \{1, \dotsc, n \}$\glsadd{cardinal} for some $n \in \N$.
  An \defin{execution}\DIndex/>{{multiset rewriting system} execution}\index{execution|see{multiset rewriting system, execution}} is a maximally long trace.\label{def:ssos-fairness/mult-rewr-syst:3}
\end{definition}

Let the \defin{support}\varindex{support of@ a@ trace}{1!~234 4!1}[|defin] $\supp(T)$\glsadd{supptrace} of a trace $T = (M_0;(r_i,\delta_i)_I)$ be the set $\supp(T) = \bigcup_{i \geq 0} \supp(M_i)$.
Write \(M \msstep M'\) if \(M \xrightarrow{(r,\theta)} M'\) for some \((r,\theta)\), and let $\mssteps$ be the reflexive, transitive closure of $\msstep$.

\begin{example}
  \label{ex:fair-mult-rewr:1}%
  We model computations with queues.
  Let the fact $\queue(q, \$)$ mean that $q$ is the empty queue, and let $\queue(q, v \leadsto q')$ mean that the queue $q$ has value $v$ at its head and that its tail is the queue $q'$.
  Then the multiset $Q = \queue(q, 0 \leadsto q'), \queue(q', \$)$ describes a one-element queue containing $0$.
  The following two rules capture enqueuing values on empty and non-empty queues, respectively, where the fact $\enq(q, v)$ is used to enqueue $v$ onto the queue $q$:
  \begin{gather*}
    e_1 : \forall x, y . \enq(x, y), \queue(x, \$) \to \exists z . \queue(x, y \leadsto z), \queue(z, \$),\\
    e_2 : \forall x, y, z, w . \enq(x, y), \queue(x, z \leadsto w) \to \queue(x, z \leadsto w), \enq(w, y).
  \end{gather*}

  The following execution from $Q,\enq(q,1)$ captures enqueuing 1 on the queue $q$:
  \begin{gather*}
    Q,\enq(q,1) \xrightarrow{(e_2;(\subst{q,1,0,q'}{x,y,z,w}{}, \emptyset))} Q,\enq(q',1)\hspace{12em}\\
    \hspace{8em}\xrightarrow{(e_1;(\subst{q',1}{x,y}{}, \subst{a}{z}{}))} \queue(q, 0 \leadsto q'), \queue(q', 1 \leadsto a), \queue(a, \$).\qedhere
  \end{gather*}
\end{example}

We will use a variant of \cref{ex:fair-mult-rewr:1} when we define the substructural operational semantics of Polarized SILL in \cref{cha:sill-background}.
Its substructural operational semantics is given by a multiset rewriting system, and it uses queues of messages to ensure that messages sent by processes arrive in order.

\begin{example}
  \label{ex:ssos-fairness/mult-rewr-syst:10}
  We define addition on unary natural numbers.
  The MRS uses the facts \(\ephemfnt{add}(m, n, l)\) and \(\ephemfnt{val}(l, v)\), where $\ephemfnt{val}(l,v)$ represents a memory cell $l$ with value $v$, and $\ephemfnt{add}(m, n, l)$ causes the sum of \(m\) and \(n\) to be stored in cell $l$.
  It is given by the following rules:
  \begin{gather}
    a_{\mt{z}} : \forall n, l . \ephemfnt{add}(\mt{z}, n, l) \to \ephemfnt{val}(l, n)\label{eq:ssos-fairness/mult-rewr-syst:1}\\
    a_{\mt{s}} : \forall m, n, l . \ephemfnt{add}(\mt{s}(m), n, l) \to \ephemfnt{add}(m, \mt{s}(n), l)
  \end{gather}
  Write \(3\) for the unary representation \(\mt{s}(\mt{s}(\mt{s}(\mt{z})))\) of three.
  The following execution stores the sum of two and three in \(l\):
  \begin{gather*}
    \ephemfnt{add}(\mt{s}(\mt{s}(\mt{z})), 3, l) \msstep \ephemfnt{add}(\mt{s}(\mt{z}), \mt{s}(3), l) \msstep \ephemfnt{add}(\mt{z}, \mt{s}(\mt{s}(3)), l) \msstep \ephemfnt{val}(l, \mt{s}(\mt{s}(3))).
  \end{gather*}
  The first two rules in this execution are instances of \(a_{\mt{s}}\), while the last rule is an instance of \(a_{\mt{z}}\).
\end{example}

\begin{example}
  \label{ex:ssos-fairness/mult-rewr-syst:8}
  We build on \cref{ex:ssos-fairness/mult-rewr-syst:10} to recursively compute the $n$-th Fibonacci number.
  The fact $\ephemfnt{fib}(n, l)$ causes the $n$-th Fibonacci number to be stored in cell $l$.
  The MRS is given by \(a_{\mt{z}}\), \(a_{\mt{s}}\), and the following new rules:
  \begin{gather}
    f_0 : \forall l . \ephemfnt{fib}(\mt{z}, l) \to \ephemfnt{val}(l, \mt{s}(\mt{z}))\label{eq:ssos-fairness/mult-rewr-syst:10}\\
    f_1 : \forall l . \ephemfnt{fib}(\mt{s}(\mt{z}), l) \to \ephemfnt{val}(l, \mt{s}(\mt{z}))\label{eq:ssos-fairness/mult-rewr-syst:111}\\
    f : \forall l, n . \ephemfnt{fib}(\mt{s}(\mt{s}(n)), l) \to \exists l', l'' . \ephemfnt{cont}(l, l', l''), \ephemfnt{fib}(\mt{s}(n), l'), \ephemfnt{fib}(n, l'')\label{eq:ssos-fairness/mult-rewr-syst:122}\\
    c : \forall l, l', l'', m, n . \ephemfnt{cont}(l, l', l''), \ephemfnt{val}(l', m), \ephemfnt{val}(l'', n) \to \ephemfnt{add}(m, n, l)\label{eq:ssos-fairness/mult-rewr-syst:13}
  \end{gather}
  Rules $f_0$ and $f_1$ directly calculate the zeroth and first Fibonacci numbers.
  The rule $f$ makes two ``recursive calls'' that will store their results in fresh locations $l'$ and $l''$.
  It uses the continuation fact $\ephemfnt{cont}(l, l', l'')$ to signal that the values in locations $l'$ and $l''$ need to be added and stored in location $l$.
  Once values \(m\) and \(n\) are available in locations \(l'\) and \(l''\), the rule \(c\) causes \(m\) and \(n\) to be added and stored in location \(l\).

  We remark that this implementation builds in garbage collection.
  Indeed, because we represent memory locations \(\ephemfnt{val}(l, n)\) using ephemeral facts, these locations are discarded as soon as they are no longer needed by future computation.
  However, this implementation is not very efficient: it repeatedly recomputes the same $n$-th Fibonacci number and requires exponential time.
  In \cref{ex:ssos-fairness/mult-rewr-syst:2}, we will use \emph{persistent} facts to implement a memoized version of this algorithm.
\end{example}

\begin{remark}
  The order in which rewrite rules are applied is non-deterministic and is outside of the control of a multiset rewriting system.\index{multiset rewriting system!non-determinism}
  For example, when computing \(\ephemfnt{fib}(2,l)\), a scheduler could non-deterministically choose to apply rule \(f_1\) or \(f_0\) after applying \(f_2\).
  Moreover, multiset rewriting systems need not satisfy any confluence properties.
  This means that, in general, finite executions from a given multiset need not result in the same final multiset.
  In \cref{ex:ssos-fairness/mult-rewr-syst:2}, we illustrate design considerations for multiset rewrite systems that force a scheduler to order certain rule applications.
  In \cref{sec:ssos-fairness:prop-fair-trac}, we present a condition on MRSs called ``interference-freedom''.
  Intuitively, it states that the order in which rules are applied does not matter, because rules do not interfere with or disable each other.
\end{remark}

Though the above presentation of multiset rewriting due to \textcite{cervesato_2005:_compar_between_stran} is relatively concise, its implicit treatment of eigenvariables and signatures introduces some ambiguity.
For greater clarity, we sometimes adopt an equivalent presentation due to \textcite{cervesato_scedrov_2009:_relat_state_based}.
In this presentation, we explicitly track the signatures used to form facts, and multiset rewriting systems rewrite multisets-in-context.\varindex{{multiset rewriting system} for@ multisets-in-context}{1!~23}\index{context!multiset-in-context|see{multiset, multiset-in-context}}
In particular, we assume that all facts appearing in rules are formed over some fixed initial signature \(\Sigma_i\).\footnote{The terminology ``initial signature'' follows \textcite[\S~3.2]{cervesato_scedrov_2009:_relat_state_based}.
  Here, we are not using ``initial'' in a category-theoretic sense.
  Instead, it should be interpreted as ``first'' and in opposition to ``subsequent''.
  Indeed, rewriting rules below extend signatures with fresh symbols, and the initial signature \(\Sigma_i\) is the first signature in this sequence of extensions.}
A \defin{multiset-in-context}\DIndex/>{multiset multiset-in-context} is a pair $\msinc{\Sigma}{M}$\glsadd{msinc} where the symbols used by $M$ appear in the signature $\Sigma$, and where $\Sigma$ contains the initial signature $\Sigma_i$.
We write $\Sigma \vdash t$ to mean that the term $t$ is valid over the signature $\Sigma$, and $\Sigma \vdash \vec{t}$ for the obvious extension to collections of terms $\vec{t}$.

Consider a rule $r : \forall \vec x . F(\vec x) \to \exists \vec n . G(\vec x, \vec n)$.
Given a signature $\Sigma$, an \defin{instantiation}\DIndex>{{multiset rewrite rule} instantiation} $r(\vec t)$ of $r$ is a rule of the form $r(\vec t) : F(\vec t) \to \exists \vec n . G(\vec t, \vec n)$ for some $\vec t$ with $\Sigma \vdash \vec t$.
This instantiation is \defin{applicable}\varindex{{multiset rewrite rule} applicable}{1!2~} to a multiset-in-context $\msinc{\Sigma}{M}$ if $M = F(\vec t), M'$ for some $M'$.
The \defin{result}\varindex{{multiset rewriting system} result as@ a@ multiset-in-context}{1!2!345}[|defin] of applying the instantiation $r(\vec t)$ to $\msinc{\Sigma}{M}$ is the multiset-in-context $\msinc{\Sigma,\vec n}{G(\vec t),M'}$, where we extend the signature $\Sigma$ with the globally fresh symbols\footnote{In particular, we expect the symbols to be disjoint from $M$ and from those appearing in any other rule.
  It is sufficient to choose \(\vec n\) disjoint from \(\Sigma\): all symbols in \(M\) appear in \(\Sigma\) by definition of multiset-in-context, and all symbols appearing in rules appear in \(\Sigma_i \subseteq \Sigma\) by assumption.} $\vec n$.
We can represent this transition schematically as:
\[
  \msinc{\Sigma}{F(\vec t), M'} \longrightarrow_{\mc{R},r : \forall \vec x . F(\vec x) \to \exists \vec n . G(\vec x, \vec n)} \msinc{\Sigma,\vec n}{G(\vec t),M'} \quad \text{if }\Sigma \vdash \vec t.
\]
Here, $\mc{R}$ lists the other rules in the multiset rewriting system, and the substitution $\subst{\vec t}{\vec x}{}$ applied to \(F\) and \(G\) corresponds to the matching substitution.
Extending the signature $\Sigma$ with $\vec n$ captures the fresh-constant substitution $\subst{\vec n}{\vec n}{}$.

Given an MRS $\mc{R}$ and a multiset-in-context $\msinc{\Sigma_0}{M_0}$, a \defin{trace} from $\msinc{\Sigma_0}{M_0}$ is a countable sequence of applications $\msinc{\Sigma_0}{M_0} \longrightarrow_{\mc{R}} \msinc{\Sigma_0,\Sigma_1}{M_1} \longrightarrow_{\mc{R}} \msinc{\Sigma_0,\Sigma_1,\Sigma_2}{M_2} \longrightarrow \dotsb$.
Again, an \defin{execution} is a maximally long trace.

\subsection{First-Order Multiset Rewriting with Persistence}
\label{sec:ssos-fairness:mult-rewr-syst:first-order-multiset-persistence}

Facts in multisets represent pieces of knowledge.
In \cref{sec:ssos-fairness:mult-rewr-syst:first-order-multiset}, these facts were ephemeral: they could be consumed or destroyed by applying multiset rewrite rules.
Often times, we would like some facts to be \textit{persistent}, \ie, for some facts to be reusable and preserved by all rules.
To this end, we partition facts as \defin{persistent}\DIndex<{persistent fact} (indicated by bold face, $\persfnt{p}$\glsadd{persfnt}) and \defin{ephemeral}\DIndex<{ephemeral fact} (indicated by sans serif face, $\ephemfnt{p}$\glsadd{ephemfnt}).
We then extend the multiset rewriting system of the previous section to support persistence.
In doing so, we diverge slightly from \textcite{cervesato_2005:_compar_between_stran} to allow for the set of persistent facts to grow across time.

Persistent facts are reusable, so we do not care about their multiplicities in multisets.
For simplicity, we assume throughout that they form a set.
We also separate persistent facts from ephemeral facts and write a generic multiset as $\persfnt{\Pi}, M$, where the set $\persfnt{\Pi}$ contains the persistent facts and the multiset $M$ contains the ephemeral facts.
A \defin{multiset-in-context}\DIndex/>{multiset multiset-in-context} is now a triple $\msinc{\Sigma}{\persfnt{\Pi}, M}$\glsadd{msinc}.
As before, all term-level symbols appearing in $\persfnt{\Pi}$ and $M$ are contained in the signature $\Sigma$.

A \defin{multiset rewrite rule}\DIndex{{multiset rewrite rule}} is now schematically represented by
\[
  r : \forall \vec x . \persfnt{\pi}(\vec x), F(\vec x) \to \exists \vec n . \persfnt{\pi'}(\vec x, \vec n), G(\vec x, \vec n),
\]
where $F$ and $G$ are as before, and $\persfnt{\pi}$ and $\persfnt{\pi'}$ are sets of persistent facts.
As before, multiset rewrite rules describe localized changes to multisets.
Fix a multiset rewriting system $\mc{R}$.
Given a rule $r : \forall \vec x . \persfnt{\pi}(\vec x), F(\vec x) \to \exists \vec n . \persfnt{\pi'}(\vec x, \vec n), G(\vec x, \vec n)$ in $\mc{R}$ and some choice of constants $\vec t$ for $\vec x$, we say that the \defin{instantiation}\varindex{multiset rewrite rule instantiation}{123!~4}[|defin] $r(\vec t) : \persfnt{\pi}(\vec t), F(\vec t) \to \exists \vec n . \persfnt{\pi'}(\vec t, \vec n), G(\vec t, \vec n)$ is \defin{applicable}\varindex{multiset rewrite rule applicable}{123!4~}[|defin] to a multiset $\persfnt{\Pi}, M$ if there exists a multiset $M'$ such that $M = F(\vec t),M'$ and if $\persfnt{\pi}(\vec t) \subseteq \persfnt{\Pi}$.
The rule $r$ is applicable to $\persfnt{\Pi}, M$ if $r(\vec t)$ is applicable to $\persfnt{\Pi}, M$ for some $\vec t$.
In these cases, the \defin{result}\varindex{{multiset rewriting system} result as@ a@ multiset}{1!2!345}[|defin] of applying $r(\vec t)$ to $\persfnt{\Pi}, M$ is the multiset $\left(\persfnt{\Pi} \cup \persfnt{\pi'}(\vec t,\vec d)\right), G(\vec t, \vec d), M'$, where $\vec d$ is a choice of fresh constants.
Again, we assume that the constants $\vec d$ do not appear in $M$ or in $\mc{R}$.
Because $\persfnt{\Pi}$ and $\persfnt{\pi'}(\vec t, \vec d)$ were both assumed to be sets, the multiset $\persfnt{\Pi} \cup \persfnt{\pi'}(\vec t, \vec d)$ in the result is again a set.

Multiset rewrite rules again equivalently describe localized changes to multisets-in-context.
Consider a rule $r : \forall \vec x . \persfnt{\pi}(\vec x), F(\vec x) \to \exists \vec n . \persfnt{\pi'}(\vec x, \vec n), G(\vec x, \vec n)$ in a multiset rewriting system $\mc{R}$.
Given a signature $\Sigma$, an \defin{instantiation}\varindex{multiset rewrite rule instantiation}{123!~4}[|defin] $r(\vec t)$ of $r$ is a rule of the form $r(\vec t) : \persfnt{\pi}(\vec t), F(\vec t) \to \exists \vec n . \persfnt{\pi'}(\vec t, \vec n), G(\vec t, \vec n)$ for some $\vec t$ with $\Sigma \vdash \vec t$.
This instantiation is \defin{applicable}\varindex{multiset rewrite rule applicable}{123!4~}[|defin] to a multiset-in-context $\msinc{\Sigma}{\persfnt{\Pi}, M}$ if there exists a multiset $M'$ such that $M = F(\vec t),M'$ and if $\persfnt{\pi}(\vec t) \subseteq \persfnt{\Pi}$.
The \defin{result}\varindex{{multiset rewriting system} result as@ a@ multiset-in-context}{1!2!345}[|defin] of applying the instantiation to $\msinc{\Sigma}{\persfnt{\Pi}, M}$ is the multiset-in-context $\msinc{\Sigma,\vec n}{\left(\persfnt{\Pi} \cup \persfnt{\pi'}(\vec t)\right),G(\vec t),M'}$, where we extend the signature $\Sigma$ with the globally fresh symbols $\vec n$ (modulo $\alpha$-renaming).
We can represent this transition schematically as:
\begin{gather*}
  \msinc{\Sigma}{\persfnt{\pi}(\vec t), \persfnt{\Pi'}, F(\vec t), M'} \longrightarrow_{\mc{R},(r : \forall \vec x . \persfnt{\pi}(\vec x), F(\vec x) \to \exists \vec n . \persfnt{\pi'}(\vec x, \vec n), G(\vec x, \vec n))} \qquad\qquad\\
  \qquad\qquad\qquad\msinc{\Sigma,\vec n}{\left(\persfnt{\pi}(\vec t) \cup \persfnt{\Pi'} \cup \persfnt{\pi'}(\vec t)\right), G(\vec t), M'} \quad \text{if }\Sigma \vdash \vec t,
\end{gather*}
where $\mc{R}$ lists the other rules in the multiset rewriting system.
The \defin{active multiset}\varindex{multiset rewriting system active}{1!4~ 123!41}[|defin] is $(\persfnt{\pi},F)(\vec t)$, while the \defin{stationary multiset}\varindex{multiset rewriting system stationary}{1!4~ 123!41}[|defin] is $\persfnt{\Pi'},M'$.

The definitions of trace and execution are as before.

\begin{example}
  \label{ex:ssos-fairness/mult-rewr-syst:2}
  We use memoization to improve the time complexity of \cref{ex:ssos-fairness/mult-rewr-syst:8}, which computed the $n$-th Fibonacci number.
  Memoization uses a persistent fact $\persfnt{memo}(n, m)$ that means ``$m$ is the $n$-th Fibonacci number''.
  Care is often needed when designing multiset rewriting systems to counter the effects of non-deterministic rule application.
  Indeed, in the case of memoization, the scheduler could choose to ignore available memoized values.
  To see how, consider a naïve implementation of memoization, where we extend \cref{ex:ssos-fairness/mult-rewr-syst:8} with the rule
  \[
    f_{\mi{memo}} : \forall l, n, m . \ephemfnt{fib}(n, l), \persfnt{memo}(n, m) \to \ephemfnt{val}(l, m).
  \]
  The scheduler could apply the rule $f$ to the multiset $\ephemfnt{fib}(\mt{s}(\mt{s}(\mt{z})), l), \persfnt{memo}(\mt{s}(\mt{s}(\mt{z})), 2)$ instead of \(f_{\mi{memo}}\), even though a memoized value is available.

  We can force the scheduler's hand by disabling rules after one use.
  We do so by making them depend on an ephemeral fact that is never replaced.
  Concretely, we use an ephemeral fact $\ephemfnt{notyet}(n)$ that is consumed on the first invocation of $\ephemfnt{fib}(n, l)$:
  \begin{gather}
    f_0 : \forall l . \ephemfnt{fib}(\mt{z}, l), \ephemfnt{notyet}(\mt{z}) \to \persfnt{memo}(\mt{z}, \mt{s}(\mt{z})), \ephemfnt{val}(l, \mt{s}(\mt{z}))\label{ex:ssos-fairness/mult-rewr-syst:3}\\
    f_1 : \forall l . \ephemfnt{fib}(\mt{s}(\mt{z}), l), \ephemfnt{notyet}(\mt{s}(\mt{z})) \to \persfnt{memo}(\mt{s}(\mt{z}), \mt{s}(\mt{z})), \ephemfnt{val}(l, \mt{s}(\mt{z}))\label{ex:ssos-fairness/mult-rewr-syst:4}\\
    \begin{split}
      f : \forall l, n . \ephemfnt{fib}(\mt{s}(\mt{s}(n)), l), \ephemfnt{notyet}(\mt{s}(\mt{s}(n))) \to  {} \hspace{8em} \\
      \hspace{8em} {} \to \exists l', l'' . \ephemfnt{fibcont}(\mt{s}(\mt{s}(n)), l, l', l''), \ephemfnt{fib}(\mt{s}(n), l'), \ephemfnt{fib}(n, l'')
    \end{split}\label{ex:ssos-fairness/mult-rewr-syst:5}
  \end{gather}
  The rule \(f\) uses the modified continuation fact \(\ephemfnt{fibcont}(k, l, l', l'')\).
  As with the continuation fact \(\ephemfnt{cont}(l, l', l'')\) of \cref{ex:ssos-fairness/mult-rewr-syst:8}, it means that the values in locations \(l'\) and \(l''\) should be added and stored in \(l\).
  We also use it to mean that this sum should be memoized as the value of the \(k\)-th Fibonacci number.
  Because the fact \(\ephemfnt{notyet}(n)\) gets consumed, subsequent attempts to compute $\ephemfnt{fib}(n, l')$ are forced to used the memoized value.
  Using the memoized value is captured by:
  \begin{gather}
    r : \forall l, n, m . \persfnt{memo}(n, m), \ephemfnt{fib}(n, l) \to \ephemfnt{val}(l, m)\label{ex:ssos-fairness/mult-rewr-syst:6}
  \end{gather}
  We split the rule \(c\) of \cref{ex:ssos-fairness/mult-rewr-syst:8} in two.
  The first rule \(c_f\) waits until values \(n\) and \(m\) are available in the locations \(l'\) and \(l''\).
  It then causes them to be added and stored in location \(l\).
  It also creates a continuation fact \(\ephemfnt{addcont}(k, l, l')\).
  This fact is used by the rule \(c_a\) to memoize the value in \(l'\) as the value of the \(k\)-th Fibonacci number before storing it in location \(l\):
  \begin{gather}
    \begin{split}
      c_f : \forall l, l', l'', k, m, n . \ephemfnt{fibcont}(k, l, l', l''), \ephemfnt{val}(l', n), \ephemfnt{val}(l'', m) \to {}\\
      {} \to \exists l''' . \ephemfnt{addcont}(k, l, l'''), \ephemfnt{add}(l''', n, m)\\
    \end{split}\label{ex:ssos-fairness/mult-rewr-syst:7}\\
    c_a : \forall l, l', k, m . \ephemfnt{addcont}(k, l, l'), \ephemfnt{val}(l', m) \to \persfnt{memo}(k, m), \ephemfnt{val}(l, m)
  \end{gather}

  To compute the $n$-th Fibonacci number, take an arbitrary execution from the multiset
  \[
    \ephemfnt{notyet}(\mt{z}), \ephemfnt{notyet}(\mt{s}(\mt{z})), \dotsc, \ephemfnt{notyet}(n), \ephemfnt{fib}(n, l).
  \]
  We can show that this execution is finite.
  Its final multiset will contain a fact $\ephemfnt{val}(l, m)$, where $m$ is the desired value.
\end{example}

Sometimes matching substitutions can make a pair of rules indistinguishable:

\begin{definition}
  \label{def:ssos-fairness/mult-rewr-syst:4}
  Two rule instantiations \(r_1(\theta_1)\) and \(r_2(\theta_2)\) are \defin{equivalent}\varindex{multiset rewrite rule instantiation equivalent}{123!~4!5=}[|defin], $r_1(\theta_1) \equiv r_2(\theta_2)$\glsadd{mrsequiv}, if they are applicable to the same multisets, and if whenever they are applicable to some multiset \(M\), then the result of applying either to \(M\) is the same (up to choice of fresh constants).
  Otherwise, they are \defin{distinct}\varindex{multiset rewrite rule instantiation distinct}{123!~4!5=}[|defin].
\end{definition}

We will use instantiation equivalence in \cref{sec:ssos-fairness:prop-fair-trac} to study the relationship between various forms of fairness and properties of fair traces.
We can characterize it as follows:

\begin{proposition}
  \label{prop:ssos-fairness/mult-rewr-syst:3}
  Consider rules $r_i : \forall \vec x_i . \persfnt{\pi_i}(\vec x_i), F_i(\vec x_i) \to \exists \vec n_i . \persfnt{\pi_i'}(\vec x_i, \vec n_i), G_i(\vec x_i, \vec n_i)$ and matching substitutions $\theta_i$ for $i = 1, 2$.
  The instantiations $r_1(\theta_1)$ and $r_2(\theta_2)$ are equivalent if and only if
  \begin{enumerate}
  \item $\persfnt{\pi_1}(\theta_1), F_1(\theta_1) = \persfnt{\pi_2}(\theta_2), F_2(\theta_2)$;
  \item $\exists \vec n_1 . G_1(\theta_1, \vec n_1) = \exists \vec n_2 . G_2(\theta_2, \vec n_2)$ (up to renaming of bound variables); and
  \item \( \exists \vec n_1 . \persfnt{\pi_1}(\theta_1) \cup \persfnt{\pi_1'}(\theta_1, \vec n_1) = \exists \vec n_2 . \persfnt{\pi_2}(\theta_2) \cup \persfnt{\pi_2'}(\theta_2, \vec n_2) \) (up to renaming of bound variables).
  \end{enumerate}
\end{proposition}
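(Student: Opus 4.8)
The plan is to unfold \cref{def:ssos-fairness/mult-rewr-syst:4} and prove the two implications separately, exploiting the fact that persistent facts (bold) and ephemeral facts (sans serif) are syntactically disjoint, so that every equality of a combined multiset $\persfnt{\Pi}, M$ decomposes into an equality of persistent parts and an equality of ephemeral parts. Throughout I use that $r_i(\theta_i)$ is applicable to $\persfnt{\Pi}, N$ exactly when $\persfnt{\pi_i}(\theta_i) \subseteq \persfnt{\Pi}$ and $F_i(\theta_i) \subseteq N$, and that its result is $(\persfnt{\Pi} \cup \persfnt{\pi_i'}(\theta_i, \vec d_i)), G_i(\theta_i, \vec d_i), M'$ for the stationary remainder $M'$.

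For the forward direction, assume $r_1(\theta_1) \equiv r_2(\theta_2)$. First I would derive condition~(1) by testing applicability on minimal multisets: the $\subseteq$-least multiset to which $r_1(\theta_1)$ applies is $\persfnt{\pi_1}(\theta_1), F_1(\theta_1)$. Since equivalent instantiations apply to the same multisets, $r_2(\theta_2)$ also applies there, giving $\persfnt{\pi_2}(\theta_2) \subseteq \persfnt{\pi_1}(\theta_1)$ and $F_2(\theta_2) \subseteq F_1(\theta_1)$; the symmetric argument on the least multiset for $r_2(\theta_2)$ yields the reverse inclusions, and hence condition~(1). Next I would apply both instantiations to this common least multiset, where $M'$ is empty. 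The two results are $(\persfnt{\pi_1}(\theta_1) \cup \persfnt{\pi_1'}(\theta_1, \vec d_1)), G_1(\theta_1, \vec d_1)$ and $(\persfnt{\pi_2}(\theta_2) \cup \persfnt{\pi_2'}(\theta_2, \vec d_2)), G_2(\theta_2, \vec d_2)$; by equivalence they coincide for a suitable choice of fresh constants, and projecting this equality onto its ephemeral and persistent components yields conditions~(2) and~(3) respectively.

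For the backward direction, assume the three conditions. Condition~(1) makes the applicability criteria of the two instantiations literally identical, so they apply to the same multisets. Whenever both apply to $\persfnt{\Pi}, N$, I write $N = F_1(\theta_1), M' = F_2(\theta_2), M'$ with the same stationary $M'$ by~(1). The ephemeral part of each result is $G_i(\theta_i, \vec d_i), M'$, the $M'$ cancels, and condition~(2) lets me choose the fresh constants so these agree. For the persistent parts, the key observation is that $\persfnt{\pi_i}(\theta_i) \subseteq \persfnt{\Pi}$ by applicability, so $\persfnt{\Pi} \cup \persfnt{\pi_i'}(\theta_i, \vec d_i) = \persfnt{\Pi} \cup (\persfnt{\pi_i}(\theta_i) \cup \persfnt{\pi_i'}(\theta_i, \vec d_i))$, reducing the comparison to condition~(3). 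The full results then coincide up to the choice of fresh constants, which is equivalence.

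The main obstacle I anticipate is the bookkeeping of the existentially bound fresh variables $\vec n_i$: the same constants $\vec d_i$ are substituted into both $G_i$ and $\persfnt{\pi_i'}$, so equivalence of the \emph{full} results demands a single renaming $\vec n_1 \leftrightarrow \vec n_2$ that simultaneously witnesses~(2) and~(3). In the forward direction this compatibility is automatic, since both conditions are read off from one equality of results under one choice of fresh constants. In the backward direction I must take care that the renamings supplied by~(2) and~(3) can be amalgamated — in particular that any variable shared between $G_i$ and $\persfnt{\pi_i'}$ is mapped consistently — rather than treating the two conditions as independent alpha-equivalences. This is the delicate point where the argument must be most careful.
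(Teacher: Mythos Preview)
Your proposal is correct and follows essentially the same approach as the paper: test applicability on the minimal multiset $\persfnt{\pi_1}(\theta_1), F_1(\theta_1)$ to obtain condition~(1) by double inclusion, then apply both instantiations there (with empty stationary part) and split the resulting equality into its ephemeral and persistent components to read off conditions~(2) and~(3); the converse is the direct check you describe. The compatibility issue you flag for the backward direction---that a single renaming of the $\vec n_i$ must witness~(2) and~(3) simultaneously---is a genuine subtlety that the paper's proof simply does not address, so your caution there is warranted rather than a defect in your plan.
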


\begin{proof}
  We start with sufficiency.
  Assume that \(r_1(\theta_1)\) and \(r_2(\theta_2)\) are equivalent.
  Then they are both applicable to the multiset \(\persfnt{\pi_1}(\theta_1), F_1(\theta_1)\), so we deduce \(\left(\persfnt{\pi_2}(\theta_2), F_2(\theta_2)\right) \subseteq \left(\persfnt{\pi_1}(\theta_1), F_1(\theta_1)\right)\).
  A symmetric observation gives the opposite inclusion, so we deduce that the multisets are equal.
  Next, the result of applying either to \(M = \persfnt{\pi_1}(\theta_1), F_1(\theta_1)\) is the same, so
  \[
    \exists \vec n_1 . \left(\persfnt{\pi_1}(\theta_1) \cup \persfnt{\pi_1'}(\theta_1, \vec n_1) \right), G_1(\theta_1, \vec n_1) =
    \exists \vec n_2 . \left(\persfnt{\pi_2}(\theta_2) \cup \persfnt{\pi_2'}(\theta_2, \vec n_2) \right), G_2(\theta_2, \vec n_2)
  \]
  up to renaming of bound variables.
  Recall that the collections of ephemeral and persistent facts are disjoint, so this implies the remaining two conditions.

  Next, we show necessity.
  The first condition implies that \(r_1(\theta_1)\) and \(r_2(\theta_2)\) are applicable to the same multisets.
  The last two conditions imply that the result of applying either rule to a given multiset is the same.
  We deduce that the two rule instantiations are equivalent.
\end{proof}

\begin{example}
  \label{ex:ssos-fairness/mult-rewr-syst:9}
  Consider the rules
  \begin{gather*}
    r_1 : \forall x, y . A(x, y) \to \exists n . B(x, n),\\
    r_2 : \forall x . A(x, x) \to \exists n . B(x, n)
  \end{gather*}
  and matching substitutions $\theta_1 = [a,a/x,y]$ and $\theta_2 = [a/x]$.
  Then $r_1(\theta_1) \equiv r_2(\theta_2)$:
  \begin{gather*}
    r_1(\theta_1) : A(a, a) \to \exists n . B(a, n),\\
    r_2(\theta_2) : A(a, a) \to \exists n . B(a, n).
  \end{gather*}
  Moreover, applying either $r_1(\theta_1)$ or $r_2(\theta_2)$ to the multiset $A(a,a), C(b,b)$ gives $B(a,a'), C(b,b)$ for some fresh constant $a'$.
\end{example}

\subsection{Semantic Irrelevance of Fresh Constants}
\label{sec:ssos-fairness:mult-rewr-syst:semant-irrel-fresh}

The constants in fresh-constant substitutions are, by construction, not semantically meaningful.
Indeed, they are arbitrarily chosen globally fresh constants.
This fact that can be made precise by appealing to interpretations of multiset rewriting as fragments of linear logic~\cite{cervesato_scedrov_2009:_relat_state_based}.
In these interpretations, fresh constants come from symbols bound by an existential quantifier, and these symbols can be freely $\alpha$-varied.

As a result of these observations, we identify traces up to refreshing substitutions.
A \defin{refreshing substitution} for a trace $T = (M_0,(r_i;(\theta_i,\xi_i))_i)$ is a collection of fresh-constant substitutions $\eta = (\eta_i)_i$ such that, where \(\theta_1' = \theta_1\) and \(\theta_{i + 1}' = \subst{\eta_1, \dotsc, \eta_i}{\xi_1, \dotsc, \xi_i}{\theta_{i + 1}}\), $[\eta]T = (M_0, (r_i; (\theta_i', \eta_i))_i)$\glsadd{tracerefresh} is also a trace.
Intuitively, $[\eta]T$ is the trace obtained by redoing the steps in \(T\), but with the fresh constants \(\eta_i\) instead of \(\xi_i\).
Explicitly, we identify traces $T$ and $T'$ if there exists a refreshing substitution $\eta$ such that $T' = [\eta]T$.

\subsection{Parallel Rule Applications}
\label{sec:ssos-fairness:mult-rewr-syst:parall-rule-appl}

We have so far only considered sequential rule application.
However, we are interested in modelling parallel computation.
To do so, we would expect parallel (or simultaneous) rule application to be required.
To this end, we briefly discuss parallel multiset rewriting systems.
We show that we can emulate parallel rule application using sequential rule application and vice-versa.
As a result, it will be sufficient to consider only sequential rule applications.

We define a binary operator $\ast$ on rules that combines them for parallel application.
Given rules
\[
  r_i : \forall \vec x_i . \persfnt{\pi_i}(\vec x_i), F_i (\vec x_i) \to \exists \vec n_i . \persfnt{\pi_i'}(\vec x_i, \vec n_i), G_i (\vec x_i, \vec n_i)
\]
for $i = 1, 2$, let the rule $r_1 \ast r_2$ be given by:
\begin{multline*}
  r_1 \ast r_2 : \forall \vec x_1,\vec x_2 . (\persfnt{\pi_1}(\vec x_1) \cup \persfnt{\pi_2}(\vec x_2)), F_1(\vec x_1), F_2(\vec x_2) \to {} \\
  {} \to \exists \vec n_1, \vec n_2 . (\persfnt{\pi_1'}(\vec x_1, \vec n_1) \cup \persfnt{\pi_2'}(\vec x_2, \vec n_2)), G_1(\vec x_1, \vec n_1), G_2(\vec x_2, \vec n_2).
\end{multline*}
The multiset rewrite rule $r_1 \ast r_2$ captures applying $r_1$ and $r_2$ in parallel.
Intuitively, every application of this rule splits the ephemeral portion of a multiset in two, applies each of the rules separately, and then recombines the results at the end.
It is clear that $\ast$ is an associative and commutative operator with identity $1_\ast : \emptymset \to \emptymset$.

Given a multiset rewriting system $\mc{R}$, let the \mdefin{parallel multiset rewriting system}\DIndex<{parallel {multiset rewriting system}} $\mc{R}^\ast$\glsadd{parallelmrs} be the multiset rewriting system given by:
\[
  \mc{R}^\ast = \Set{ r_1 \ast \dotsb \ast r_n \given n \in \N, r_i \in \mc{R} }.
\]
The following proposition shows that parallel rewriting can be emulated by sequential rewriting.
At a high level, its proof replaces each rule \( r_1 \ast \dotsb \ast r_n \) appearing in a trace $M \mssteps_{\mc{R}^\ast} M'$ by the sequence of rules \(r_1, \dotsc, r_n\).
We can make this replacement because, by definition of \( \ast \), the rule \(r_1 \ast \dotsb \ast r_n\) describes making the localized changes described by each \(r_i\) to a disjoint portion (modulo persistence) of the multiset.
Because each \(r_i\) rewrites a disjoint portion of the multiset, the rules \(r_1, \dotsc, r_n\) do not disable or otherwise interfere with each other, so applying each \(r_i\) sequentially gives the same result.

\begin{proposition}
  \label{prop:ssos-fairness/mult-rewr-syst:2}
  For all multiset rewriting systems $\mc{R}$ and multisets $M$ and $M'$, ${M \mssteps_{\mc{R}} M'}$ if and only if $M \mssteps_{\mc{R}^\ast} M'$.
\end{proposition}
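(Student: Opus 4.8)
The plan is to prove the two directions separately, with the forward implication being immediate and the backward implication carrying all of the content. The crucial observation is that a single parallel step unfolds into a finite sequence of sequential steps, after which one iterates over the whole trace.

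For the forward direction, I would note that every rule $r \in \mc{R}$ is itself the one-fold product in $\mc{R}^\ast$ (the case $n = 1$), so $\mc{R} \subseteq \mc{R}^\ast$ and hence $\msstep_{\mc{R}} \;\subseteq\; \msstep_{\mc{R}^\ast}$ as relations on multisets. Passing to reflexive transitive closures preserves this inclusion, which gives $M \mssteps_{\mc{R}} M' \implies M \mssteps_{\mc{R}^\ast} M'$.

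For the backward direction, by induction on the length of a trace witnessing $M \mssteps_{\mc{R}^\ast} M'$ it suffices to simulate a single parallel step by finitely many sequential ones. So the key lemma to establish is: if $\persfnt{\Pi}, M \msstep_{\mc{R}^\ast} \persfnt{\Pi'}, M'$ via an instantiation of $r = r_1 \ast \dotsb \ast r_n$, then $\persfnt{\Pi}, M \mssteps_{\mc{R}} \persfnt{\Pi'}, M'$. I would prove this by induction on $n$ (using associativity of $\ast$), or equivalently by unfolding the definition of the product once. By definition of $\ast$, the active multiset of the instantiation is the sum $F_1(\theta_1), \dotsc, F_n(\theta_n)$ of the factors' active multisets together with the union $\bigcup_i \persfnt{\pi_i}(\theta_i)$ of their persistent preconditions, where $\theta_i$ is the restriction of the matching substitution to $\vec x_i$; the result augments the persistent facts by $\bigcup_i \persfnt{\pi_i'}(\theta_i, \vec n_i)$ and the ephemeral facts by $G_1(\theta_1, \vec n_1), \dotsc, G_n(\theta_n, \vec n_n)$. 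I then apply $r_1, \dotsc, r_n$ in turn, at the $i$-th step using exactly the instantiation $r_i(\theta_i)$ with globally fresh constants chosen disjoint from those used earlier.

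The crux is verifying that each sequential step is enabled and that the accumulated outcome agrees with the parallel one. Since the ephemeral active facts $F_i(\theta_i)$ are disjoint summands of $M$, consuming $F_1, \dotsc, F_{i-1}$ leaves $F_i(\theta_i)$ intact, and the stationary remainder — the untouched part of $M$, plus the not-yet-consumed $F_{i+1}, \dotsc, F_n$, plus the already-produced $G_1, \dotsc, G_{i-1}$ — is a legitimate stationary multiset for $r_i(\theta_i)$. The persistent preconditions $\persfnt{\pi_i}(\theta_i)$ remain available throughout, as persistent facts are never removed and are only enlarged by earlier steps. After all $n$ steps the ephemeral facts are $M'$ together with the $G_j(\theta_j, \vec n_j)$ and the persistent facts are $\persfnt{\Pi} \cup \bigcup_i \persfnt{\pi_i'}(\theta_i, \vec n_i)$, which is precisely $\persfnt{\Pi'}, M'$, the only residual discrepancy being the particular fresh constants chosen along the way. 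This last point is absorbed by our identification of traces up to refreshing substitutions (\cref{sec:ssos-fairness:mult-rewr-syst:semant-irrel-fresh}). The main obstacle I anticipate is exactly this bookkeeping of intermediate stationary multisets and fresh-constant choices; once it is set up, chaining the $n$ single steps and iterating over the trace is routine, and the degenerate factor $1_\ast$ contributes a vacuous step handled by reflexivity.
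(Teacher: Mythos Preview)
Your proposal is correct and follows essentially the same approach as the paper: the forward direction via $\mc{R} \subseteq \mc{R}^\ast$, the backward direction by induction on trace length, and the key lemma that a single $r_1 \ast \dotsb \ast r_n$ step unfolds into the sequence $r_1, \dotsc, r_n$ by decomposing the active ephemeral multiset into its disjoint summands. The only minor differences are that the paper chooses the fresh constants $\xi_i$ directly as restrictions of the parallel step's fresh-constant substitution (so no appeal to refreshing is needed), and it makes the intermediate multisets $M_j$ explicit and inducts on $j$ rather than invoking associativity of $\ast$; both are cosmetic.
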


\begin{proof}
  Sufficiency is clear: every trace over $\mc{R}$ is a trace over $\mc{R}^\ast$.
  Conversely, assume that $M \mssteps_{\mc{R}^\ast} M'$.
  We proceed by induction on the number of steps taken.
  If no steps were taken, \ie, if $M \mssteps_{\mc{R}^\ast} M'$ by reflexivity, then we are done.
  Now assume that $m + 1$ steps were taken, \ie, that $M \msstep_{\mc{R}^\ast} M'' \mssteps_{\mc{R}^\ast} M'$ for some $M''$.
  Assume that the first step is given by the rule
  \begin{multline*}
    r_1 \ast \dotsb \ast r_k : \forall \vec x_1, \dotsc, \vec x_k . \left(\bigcup_{i = 1}^k \persfnt{\pi}_i(\vec x_i)\right),F_1(\vec x_1),\dotsc,F_k(\vec x_k) \to {} \\
    {} \to \exists \vec n_1, \dotsc, \vec n_k . \left(\bigcup_{i = 1}^k \persfnt{\pi}_i'(\vec x_i, \vec n_i)\right),G_1(\vec x_1, \vec n_1), \dotsc, G_k(\vec x_k, \vec n_k)
  \end{multline*}
  for some rules $r_i \in \mc{R}$ and $k \geq 1$, with matching substitution $\theta$ and fresh-constant substitution $\xi$.
  Let $\theta_i$ and $\xi_i$ be the obvious restrictions of $\theta$ and $\xi$ to $\vec x_i$ and $\vec n_i$, respectively.
  By definition of rule application, it follows that for some \(\persfnt{\Pi}'\) and \(N\),
  \begin{align}
    M &= \left(\bigcup_{i = 1}^k \persfnt{\pi}_i(\theta)\right),\persfnt{\Pi'},F_1(\theta),\dotsc,F_k(\theta), N\nonumber\\
      &= \left(\bigcup_{i = 1}^k \persfnt{\pi}_i(\theta_i)\right),\persfnt{\Pi'},F_1(\theta_1),\dotsc,F_k(\theta_k), N,\label{eq:ssos-fairness/mult-rewr-syst:11}\\
    M'' &= \left(\left(\bigcup_{i = 1}^k \persfnt{\pi}_i(\theta)\right) \cup \persfnt{\Pi'} \cup \left(\bigcup_{i = 1}^k \persfnt{\pi'}_i(\theta,\xi)\right)\right) ,G_1(\theta,\xi), \dotsc, G_k(\theta, \xi), N\nonumber\\
      &= \left(\left(\bigcup_{i = 1}^k \persfnt{\pi}_i(\theta_i)\right) \cup \persfnt{\Pi'} \cup \left(\bigcup_{i = 1}^k \persfnt{\pi'}_i(\theta_i,\xi_i)\right)\right) ,G_1(\theta_1,\xi_1), \dotsc, G_k(\theta_k, \xi_k), N.\label{eq:ssos-fairness/mult-rewr-syst:12}
  \end{align}

  We claim that, where $M_0 = M$, the following is a trace for some multisets $M_1$, \ldots, $M_{k}$, and that $M_{k} = M''$:
  \begin{equation}
    \label{eq:ssos-fairness/mult-rewr-syst:100}
    M_0 \msstep[(r_1;(\theta_1,\xi_1))] M_1 \msstep \dotsb \msstep M_{k - 1} \msstep[(r_k;(\theta_k,\xi_k))] M_k.
  \end{equation}
  This claim implies that $M \mssteps_{\mc{R}} M''$.
  By the induction hypothesis, $M'' \mssteps_{\mc{R}} M'$.
  Because $\mssteps_{\mc{R}}$ is transitive, we can then conclude $M \mssteps_{\mc{R}} M'$.

  We show that \eqref{eq:ssos-fairness/mult-rewr-syst:100} is a trace.
  To do so, we proceed by induction on $j$, $0 \leq j \leq k$, to show that
  \[
    M_j = \left(\bigcup_{i = 1}^k \persfnt{\pi}_i(\theta_i)\right) \cup \persfnt{\Pi'} \cup \left(\bigcup_{i = 1}^j \persfnt{\pi'}_i(\theta_i,\xi_i)\right), G_1(\theta_1,\xi_1), \dotsc, G_j(\theta_j, \xi_j), F_{j + 1}(\theta_{j + 1}), \dotsc, F_k(\theta_k), N.
  \]
  It will then follow that $r_j$ is applicable to $M_j$ for all $0 \leq j \leq k$, \ie, that \eqref{eq:ssos-fairness/mult-rewr-syst:100} is a trace.
  It will also be immediate by \eqref{eq:ssos-fairness/mult-rewr-syst:12} that $M_{k} = M''$.

  The case $k = 0$ is immediate by \cref{eq:ssos-fairness/mult-rewr-syst:11}.
  Assume the result for some $j$, then by definition of rule application, we have
  \[
    M_{j + 1} = \left(\bigcup_{i = 1}^k \persfnt{\pi}_i(\theta_i)\right) \cup \persfnt{\Pi'} \cup \left(\bigcup_{i = 1}^{j + 1} \persfnt{\pi'}_i(\theta_i,\xi_i)\right), G_1(\theta_1,\xi_1), \dotsc, G_{j + 1}(\theta_{j + 1}, \xi_{j + 1}), F_{j + 2}(\theta_{j + 2}), \dotsc, F_k(\theta_k), N.
  \]
  This is exactly what we wanted to show.
  We conclude the result.
\end{proof}

Parallel rule applications for multiset rewriting systems are discussed by \textcite[\S\S~5.3--5.4]{cervesato_2001:_typed_multis_rewrit}.
Our approach is inspired by Cervesato's: we both decompose a multiset into disjoint pieces, apply rules in parallel, and recombining multisets.
However, we diverge in the details from Cervesato's approach by defining a new MRS $\mc{R}^\ast$ that captures parallel execution, instead of using an inductively defined parallel rewriting judgment.
Our approach also allows for persistent facts.

\section{Three Varieties of Fairness}
\label{sec:three-vari-fairn}
\label{sec:ssos-fairness:three-vari-fairn}

\varindex{{multiset rewriting system} fairness}{1!2}[|see{fairness}]
Intuitively, fairness properties provide progress guarantees for components in computational systems.
Countless varieties of fairness were introduced in the 1980s, and they were classified according to various taxonomies by \textcite{francez_1986:_fairn,kwiatkowska_1989:_survey_fairn_notion}.
A common classification is along the axis of \emph{strength}.
\defin{Weak fairness}\DIndex<{weak fairness} ensures that components that are almost always able to make progress do make progress infinitely often.
In contrast, \defin{strong fairness}\DIndex<{strong fairness} ensures that components that are able to make progress infinitely often do make progress infinitely often.

We introduce three varieties of fairness for multiset rewriting systems---\textit{rule fairness}, \textit{fact fairness}, and \textit{instantiation fairness}---and we give a weak and a strong formulation for each.
We motivate each variety of fairness by an example.
In \cref{sec:ssos-fairness:comp-types-fairn}, we show that these three varieties are independent.

\subsection{Rule Fairness}
\label{sec:ssos-fairness:rule-fairness}

We motivate rule fairness using an encoding of Petri nets as multiset rewriting systems.
Petri nets~\cite{petri_1980:_introd_to_gener_net_theor, peterson_1977:_petri_nets} are structures used to model concurrency.
A \textit{Petri net} is given by two sets---a set $P$ of \textit{places} and a set $T$ of \textit{transitions}---and a pair of functions $I, O : T \to \powerset(P)$ specifying the \textit{inputs} and \textit{outputs} of the transitions in $T$.

Informally, one executes a Petri net by placing tokens in places and observing how the tokens move through the net.
An assignment $\mu : P \to \N$ of tokens to places is called a \textit{marking}.
A \textit{marked Petri net} $(P, T, I, O, \mu)$ is conveniently depicted as a directed graph, where circles represent places, solid dots represent tokens, thick lines represent transitions, and arrows represent input/output.
For example, \cref{fig:ssos-fairness:three-vari-fairn:petri-two-markings:a} depicts the marked Petri net given by $P = \{p_1, p_2, p_3\}$, $T = \{t_1\}$, $I(t_1) = \{p_1,p_2\}$, $O(t_1) = \{t_3\}$, and $\mu_0 = (p_1 \mapsto 2, p_2 \mapsto 1, p_3 \mapsto 0)$.

\begin{figure}[htb]
  \centering
  \begin{subfigure}{0.4\textwidth}
    \centering
    \begin{tikzpicture}[node distance=3em]
      \node [place,tokens=2] (p1) [label=left:$p_1$] {};
      \node [place,tokens=1] (p2) [below of=p1, label=left:$p_2$] {};
      \node [vtransition] (t1) at ($(p1)!0.5!(p2)+(3em,0)$) [label=above:$t_1$] {};
      \node [place] (p3) [right of=t1] [label=right:$p_3$] {};
      \draw (t1) edge [pre] (p1)
      edge [pre] (p2)
      edge [post] (p3);
    \end{tikzpicture}
    \caption{Marking $\mu_0$}
    \label{fig:ssos-fairness:three-vari-fairn:petri-two-markings:a}
  \end{subfigure}
  \begin{subfigure}{0.4\textwidth}
    \centering
    \begin{tikzpicture}[node distance=3em]
      \node [place,tokens=1] (p1) [label=left:$p_1$] {};
      \node [place,tokens=0] (p2) [below of=p1, label=left:$p_2$] {};
      \node [vtransition] (t1) at ($(p1)!0.5!(p2)+(3em,0)$) [label=above:$t_1$] {};
      \node [place,tokens=1] (p3) [right of=t1] [label=right:$p_3$] {};
      \draw (t1) edge [pre] (p1)
      edge [pre] (p2)
      edge [post] (p3);
    \end{tikzpicture}
    \caption{Marking $\mu_1$}
    \label{fig:ssos-fairness:three-vari-fairn:petri-two-markings:b}
  \end{subfigure}
  \caption{Two markings of the same Petri net}
  \label{fig:ssos-fairness:three-vari-fairn:petri-two-markings}
\end{figure}
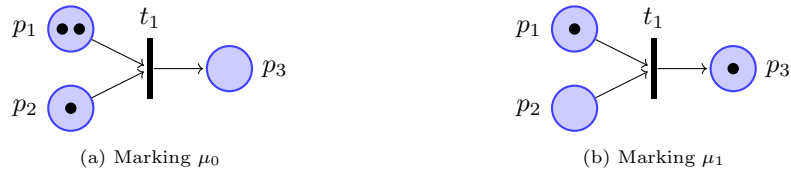

A transition is \textit{enabled} if all of its inputs have at least one token.
In this case, it \textit{fires} by taking one token from each of its input places and adding one token to each of its output places; this changes the marking of the Petri net.
A Petri net executes by repeatedly firing enabled transitions.
An \textit{execution} is then a sequence $\mu_0,\mu_1,\dotsc$ of markings, each obtained from its predecessor by firing an enabled transition.
For example, in the Petri net of \cref{fig:ssos-fairness:three-vari-fairn:petri-two-markings:a}, the transition $t_1$ is enabled, and the result of firing $t_1$ is the Petri net of \cref{fig:ssos-fairness:three-vari-fairn:petri-two-markings:b}.
When considering executions from a given marked Petri net, it is often more convenient to notate executions by their \textit{firing sequence}, \ie, by the sequence of transitions that fired, than by the sequence of markings.
The firing sequence for the execution $\mu_0, \mu_1$ is $t_1$.

Transitions fire non-deterministically.
Consider, for example, the marked Petri net in \cref{fig:ssos-fairness:three-vari-fairn:non-det:a}.
It could fire $t_2$ first to obtain the marking $\mu_1$ of \cref{fig:ssos-fairness:three-vari-fairn:non-det:b}.
From here, it can fire $t_1$ to return to the marking $\mu_0$.
Alternatively, it could have fired the transition $t_3$ to get the marking $\mu_2$ of \cref{fig:ssos-fairness:three-vari-fairn:non-det:c}.
No transitions are enabled in this marking, so an execution ends as soon as it reaches this marking.
It follows that all executions from $\mu_0$ are given by the firing sequences $t_2$, $(t_2t_1)^\infty$, or $(t_2t_1)^*t_3$.\footnote{We adopt notation from $\omega$-regular languages, where $\Sigma^*$ and $\Sigma^\omega$ respectively denote finite and infinite words over the alphabet $\Sigma$, and $\Sigma^\infty = \Sigma^* \cup \Sigma^\omega$.}

\begin{figure}[htb]
  \begin{subfigure}{0.3\textwidth}
    \centering
    \begin{tikzpicture}[node distance=3em]
      \node [vtransition] (t1) [label=above left:$t_1$] {};
      \node [vtransition] (t2) [below of=t1, label=below left:$t_2$] {};
      \node [vtransition] (t3) [below of=t2, label=below left:$t_3$] {};
      \node [place,tokens=1] (p1) [left of=t2, label=left:$p_1$] {};
      \node [place] (p2) [right of=t2, label=right:$p_2$] {};
      \node [place] (p3) [right of=t3, label=right:$p_3$] {};

      \draw (t1) edge [pre, bend left] (p2) edge [post, bend right] (p1);
      \draw (t2) edge [pre] (p1) edge [post] (p2);
      \draw (t3) edge [pre,bend left] (p1) edge [post] (p3);
    \end{tikzpicture}
    \caption{Marking $\mu_0$}
    \label{fig:ssos-fairness:three-vari-fairn:non-det:a}
  \end{subfigure}
  \begin{subfigure}{0.3\textwidth}
    \centering
    \begin{tikzpicture}[node distance=3em]
      \node [vtransition] (t1) [label=above left:$t_1$] {};
      \node [vtransition] (t2) [below of=t1, label=below left:$t_2$] {};
      \node [vtransition] (t3) [below of=t2, label=below left:$t_3$] {};
      \node [place] (p1) [left of=t2, label=left:$p_1$] {};
      \node [place, tokens=1] (p2) [right of=t2, label=right:$p_2$] {};
      \node [place] (p3) [right of=t3, label=right:$p_3$] {};

      \draw (t1) edge [pre, bend left] (p2) edge [post, bend right] (p1);
      \draw (t2) edge [pre] (p1) edge [post] (p2);
      \draw (t3) edge [pre,bend left] (p1) edge [post] (p3);
    \end{tikzpicture}
    \caption{Marking $\mu_1$}
    \label{fig:ssos-fairness:three-vari-fairn:non-det:b}
  \end{subfigure}
  \begin{subfigure}{0.3\textwidth}
    \centering
    \begin{tikzpicture}[node distance=3em]
      \node [vtransition] (t1) [label=above left:$t_1$] {};
      \node [vtransition] (t2) [below of=t1, label=below left:$t_2$] {};
      \node [vtransition] (t3) [below of=t2, label=below left:$t_3$] {};
      \node [place] (p1) [left of=t2, label=left:$p_1$] {};
      \node [place] (p2) [right of=t2, label=right:$p_2$] {};
      \node [place, tokens=1] (p3) [right of=t3, label=right:$p_3$] {};

      \draw (t1) edge [pre, bend left] (p2) edge [post, bend right] (p1);
      \draw (t2) edge [pre] (p1) edge [post] (p2);
      \draw (t3) edge [pre,bend left] (p1) edge [post] (p3);
    \end{tikzpicture}
    \caption{Marking $\mu_2$}
    \label{fig:ssos-fairness:three-vari-fairn:non-det:c}
  \end{subfigure}
  \caption{Markings reachable from \cref{fig:ssos-fairness:three-vari-fairn:non-det:a}, illustrating non-deterministic firings.}
  \label{fig:ssos-fairness:three-vari-fairn:non-det}
\end{figure}
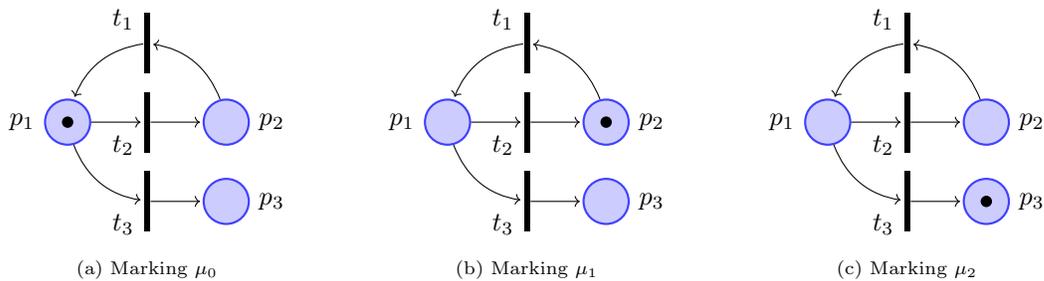

In applications, it is often desirable to rule out so-called ``unfair'' executions.
For example, we could deem the execution $(t_2t_1)^\omega$ to be unfair because, though the transition $t_3$ is enabled infinitely often, it never fires.
To this end, we recall the definitions of weak and strong fairness for Petri nets~\cite{leu_1988:_inter_among_various}.
We say that an execution $(\mu_i)_{i \in I}$ of a Petri net is
\begin{itemize}
\item \textit{weakly fair} if it is finite, or if it is infinite and for all transitions $t \in T$, if $t$ is enabled on all but finitely many markings $\mu_i$, then there exist infinitely many $i \in I$ such that $\mu_{i + 1}$ was obtained from $\mu_i$ by firing~$t$;
\item \textit{strongly fair} if it is finite, or if it is infinite and for all transitions $t \in T$, if $t$ is enabled on infinitely many markings $\mu_i$, then there exist infinitely many $i \in I$ such that $\mu_{i + 1}$ was obtained from $\mu_i$ by firing~$t$.
\end{itemize}

Weak and strong fairness rule out different executions.
For example, the firing sequence $(t_2t_1)^\omega$ describes a weakly fair execution from $\mu_0$: the execution alternates between the markings $\mu_0$ and $\mu_1$, so no transitions are enabled on all but finitely many markings.
The execution is not strongly fair: the transition $t_3$ is enabled infinitely often, but it never fires.
The only strongly fair executions from $\mu_0$ are given by the finite firing sequences $t_2$ and $(t_2t_1)^*t_3$.

We can encode a Petri net $(P, T, I, O)$ as a multiset rewriting system.\footnote{This encoding is very similar to the one used to encode Petri nets in Concurrent LF~\cite[\S~5.3.1]{cervesato_2003:_concur_logic_framew_ii}.}
Each transition $t \in T$ induces a rule
\[
  t : \ephemfnt{i}_1, \dotsc, \ephemfnt{i}_m \to \ephemfnt{o}_1, \dotsc, \ephemfnt{o}_n
\]
where $I(t) = \{ i_1,\dotsc,i_m \}$ and $O(t) = \{ o_1, \dotsc, o_n \}$.
A marking \(\mu\) specifies a multiset containing $\mu(p)$ many facts $\ephemfnt{p}$ for each place $p \in P$.
Firing a transition $t$ corresponds to applying the rule $t$.
Observe that a transition is enabled if and only if its rule is applicable.

\begin{example}
  \label{ex:ssos-fairness/three-vari-fairn:1}
  The Petri net of \cref{fig:ssos-fairness:three-vari-fairn:petri-two-markings} induces the single rule $t_1 : \ephemfnt{p}_1, \ephemfnt{p}_2 \to \ephemfnt{p}_3$.
  The marking $\mu_0$ corresponds to the multiset $\ephemfnt{p}_1, \ephemfnt{p}_1, \ephemfnt{p}_2$.
  Firing the transition $t_1$, \ie, applying the rule $t_1$, results in the multiset $\ephemfnt{p}_1, \ephemfnt{p}_3$.
\end{example}

Weak and strong fairness for Petri nets exactly correspond to the concepts of weak and strong rule-fairness for multiset rewriting systems.
Consider an MRS $\mc{R}$.
A trace $T = (M_0,(r_i;\delta_i)_{i \in I})$ is:
\begin{itemize}
\item \defin{weakly rule-fair}\DIndex<{rule fairness}\varindex{{multiset rewrite rule} fairness}{1!~2}[|defin] if it is finite, or if it is infinite and for all rules $r \in \mc{R}$, if $r$ is applicable to all but finitely many $M_i$, then there exist infinitely many $i \in I$ such that $r_i = r$;
\item \defin{strongly rule-fair} if it is finite, or if it is infinite and for all $r \in \mc{R}$, if $r$ is applicable to infinitely many $M_i$, then there exist infinitely many $i \in I$ such that $r_i = r$.
\end{itemize}

\subsection{Fact Fairness}
\label{sec:ssos-fairness:fact-fairness}

Rule fairness alone can be insufficient for an intuitively reasonable notion of fairness.
We illustrate this fact by means of an MRS that grows and shrinks trees of finite height.
Consider a collection of formulas $\ephemfnt{B}_n(a, s)$ capturing branches in a tree.
Here, $n \in \N$ denotes some amount of ``growth potential'', $a$ is the branch's ancestor, and $s$ is a globally unique symbol identifying the branch.
The root of a tree is given by a formula $\ephemfnt{B}_n(a,a)$.
We can depict trees-as-multisets graphically, using dots to represent growth potential.
For example, \cref{fig:ssos-fairness:three-vari-fairn:tree} depicts multiset $\ephemfnt{B}_2(a,a), \ephemfnt{B}_1(a,b), \ephemfnt{B}_0(a,c), \ephemfnt{B}_3(c,d)$.

\begin{figure}[htb]
  \centering
  \begin{tikzpicture}[node distance=3em]
    \node [place, tokens=2] (a) [label=below:$a$] at (0,0) {};
    \node [place, tokens=0] (c) [right of=a, label=below:$c$] {};
    \node [place, tokens=1] (b) [above of=c, label=right:$b$] {};
    \node [place, tokens=3] (d) [right of=c, label=below:$d$] {};
    \draw [->] (a) edge (c);
    \draw [->] (c) edge (d);
    \draw [->] (a) edge (b);
  \end{tikzpicture}
  \caption{Visualization of the multiset $\ephemfnt{B}_2(a,a), \ephemfnt{B}_1(a,b), \ephemfnt{B}_0(a,c), \ephemfnt{B}_3(c,d)$ as a tree}
  \label{fig:ssos-fairness:three-vari-fairn:tree}
\end{figure}

A branch can sprout a new branch if it has positive growth potential.
Branching is given by a family of rules, where we have a ``branching rule'' $b_{j, k}$ for all $n > 0$ and $j, k \geq 0$ such that $j + k = n$:
\[
  b_{j,k} : \forall x y. \ephemfnt{B}_n(x,y) \to \exists z. \ephemfnt{B}_j(x,y), \ephemfnt{B}_k(y,z).
\]
It takes a branch $y$ with potential $n$ and creates a new descendent $z$ with potential $k$, leaving $y$ with potential~$j$.

Consider the following execution starting from a root $a_0$ with two units of growth potential:
\begin{equation}
  \begin{split}
    &\ephemfnt{B}_2(a_0,a_0)\\
    &\xrightarrow{b_{1,1}} \ephemfnt{B}_1(a_0,a_0), \ephemfnt{B}_1(a_0,a_1)\\
    &\xrightarrow{b_{0,1}} \ephemfnt{B}_1(a_0,a_0), \ephemfnt{B}_0(a_0,a_1), \ephemfnt{B}_1(a_1,a_2)\\
    &\vdotswithin{\xrightarrow{b_{0,1}}}\\
    &\xrightarrow{b_{0,1}} \ephemfnt{B}_1(a_0,a_0), \ephemfnt{B}_0(a_0,a_1), \ephemfnt{B}_0(a_1,a_2), \dotsc, \ephemfnt{B}_1(a_n,a_{n + 1})\\
    &\vdotswithin{\xrightarrow{b_{0,1}}}\\
  \end{split}
  \label{eq:ssos-fairness/three-vari-fairn:2}
\end{equation}
It is graphically depicted by the sequence of trees in \cref{fig:ssos-fairness/three-vari-fairn:1}.
This execution grows the tree forever by applying $b_{0,1}$ to the right-most branch in the tree.
It is both weakly and strongly rule-fair.
Indeed, the only rule that is applicable infinitely often is $b_{0,1}$, and it is applied infinitely often.
However, the execution could be deemed ``unfair'' to the branch $a_0$.
Indeed, though $a_0$ has a unit of potential left, it never gets to use it to grow a second branch.
This motivates the notion of fact fairness, which ensures that facts that could be used to take a step are not ignored.

\begin{figure}[tb]
  \centering
  \begin{tikzpicture}
    \node [place, tokens=2] (p1) [label=below:$a_0$] at (0,0) {};

    \node [place, tokens=1] (p2) [label=below:$a_0$] at (1,0) {};
    \node [place, tokens=1] (p3) [label=below:$a_1$] at (2,0) {};
    \draw (p2) edge [->] (p3);

    \node [place, tokens=1] (p4) [label=below:$a_0$] at (3,0) {};
    \node [place, tokens=0] (p5) [label=below:$a_1$] at (4,0) {};
    \node [place, tokens=1] (p6) [label=below:$a_2$] at (5,0) {};
    \draw (p4) edge [->] (p5);
    \draw (p5) edge [->] (p6);

    \node at (6,0) {$\cdots$};

    \node [place, tokens=1] (p7) [label=below:$a_0$] at (7,0) {};
    \node [place, tokens=0] (p8) [label=below:$a_1$] at (8,0) {};
    \node (p9) at (9,0) {$\cdots$};
    \node [place, tokens=0] (p10) [label=below:$a_n$] at (10,0) {};
    \node [place, tokens=1] (p11) [label=below:$a_{n + 1}$] at (11,0) {};
    \draw (p7) edge [->] (p8);
    \draw (p8) edge [->] (p9);
    \draw (p9) edge [->] (p10);
    \draw (p10) edge [->] (p11);

    \node at (12,0) {$\cdots$};
  \end{tikzpicture}
  \caption{Graphical depiction of the multisets in execution \eqref{eq:ssos-fairness/three-vari-fairn:2}}
  \label{fig:ssos-fairness/three-vari-fairn:1}
\end{figure}
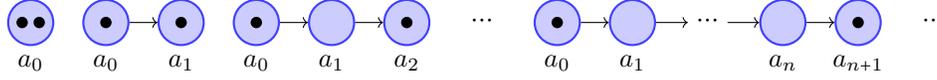

Consider an MRS $\mc{R}$.
We say that a fact $J \in M$ is \defin{enabled}\DIndex/>{fact enabled} in $M$ if there exists an instantiation $r(\theta) : F(\theta) \to \exists \vec n .G(\theta, \vec n)$ of a rule $r \in \mc{R}$ such that $J \in F(\theta)$.
A trace $T = (M_0,(r_i;\delta_i)_{i \in I})$~is:
\begin{itemize}
\item \defin{weakly fact-fair}\DIndex{fact fairness} if it is finite, or if it is infinite and for all facts $J \in \supp(T)$, if $J$ is enabled in all but finitely many $M_i$, then there exist infinitely many $i \in I$ such that $J$ is in the active multiset of $r_i(\theta_i)$;
\item \defin{strongly fact-fair} if it is finite, or if it is infinite and for all facts $J \in \supp(T)$, if $J$ is enabled in infinitely many $M_i$, then there exist infinitely many $i \in I$ such that $J$ is in the active multiset of $r_i(\theta_i)$.
\end{itemize}

\subsection{Instantiation Fairness}
\label{sec:ssos-fairness:inst-fairn}

To illustrate a final variety of fairness, we suspend disbelief and assume that we can water individual branches.
Watering a branch gives it a unit of potential:
\[
  w_n : \forall x y. B_n(x, y) \to B_{n + 1}(x, y).
\]
Intuitively, fairness should ensure that no branch in the tree is left unwatered forever.
Rule-fairness and fact-fairness are insufficient to ensure this in general.
To illustrate this, we make the simplifying assumption that we only have the rule $w_0$ plus a family of rules that redistribute potential:
\[
  r_{m,n} : \forall x y z . B_{m + 1}(x,y), B_{n}(y,z) \to B_m(x,y), B_{n + 1}(y, z).
\]
Consider the execution starting from the multiset $B_0(a,a), B_0(a,b), B_0(b,c)$ given by the sequence of rule instantiations
\begin{multline*}
  w_0(a), r_{0,0}(a,a,b), r_{0,0}(a,b,c),\\
  w_0(a), r_{0,0}(a,a,b), r_{0,1}(a,b,c),\\
  \dotsc, w_0(a), r_{0,0}(a,a,b), r_{0,n}(a,b,c), \dotsc.
\end{multline*}
After the $3n$-th rule, the multiset is $B_0(a,a), B_0(a,b), B_n(b,c)$.
This execution is strongly rule-fair: the only rules applicable infinitely often are $w_0$ and $r_{0,0}$, and they are both applied infinitely often.
It is strongly fact-fair: the only facts enabled infinitely often are $B_0(a,a)$, $B_1(a,a)$, $B_0(a,b)$, and $B_1(a,b)$.
Each of these appears in the active multisets of infinitely many rule applications.
However, the execution is still intuitively unfair because though the branch $b$ could be watered infinitely often, it never gets watered.
Explicitly, the rule instantiation $w_0(a,b)$ is applicable infinitely often, but it is never applied.
To address this, we introduce \textit{instantiation fairness}.

Recall the definition of equivalent instantiations $r_1(\theta_1) \equiv r_2(\theta_2)$ from \cref{def:ssos-fairness/mult-rewr-syst:4}.
Consider an MRS $\mc{R}$.
A trace $T = (M_0,(r_i;\delta_i)_{i \in I})$ is:
\begin{itemize}
\item \defin{weakly instantiation-fair}\DIndex<{instantiation fairness}\varindex{{multiset rewrite rule} instantiation fairness}{1!~2!=3}[|defin] if it is finite, or if it is infinite and for all rules $r \in \mc{R}$ and $\theta$, if $r(\theta)$ is applicable to all but finitely many $M_i$, then there exist infinitely many $i \in I$ such that $r_i(\theta_i) \equiv r(\theta)$;
\item \defin{strongly instantiation-fair} if it is finite, or if it is infinite and for all $r \in \mc{R}$ and $\theta$, if $r(\theta)$ is applicable to infinitely many $M_i$, then there exist infinitely many $i \in I$ such that $r_i(\theta) = r(\theta)$.
\end{itemize}

\subsection{Comparing Varieties of Fairness}
\label{sec:ssos-fairness:comp-types-fairn}

We have a proliferation of varieties of fairness, and it raises the question: are they all useful and independent from each other?
The first question is normative and implicitly presupposes an answer to the question: \textit{useful to what end}?
We motivated each kind of fairness by an application, but it is up to the practitioner to decide whether or not a particular kind of fairness is useful or desirable in a particular setting.
We can, however, answer the second question.
We show that rule, fact, and instantiation fairness are independent notions, no two of which imply the other.
To do so, we construct traces that satisfy two of the forms of fairness, but not the third.

Consider the MRS given by the following rules:
\begin{gather*}
  a : \forall x . A(x) \to A(x)\\
  b : \forall x . A(x), B \to A(x), B.
\end{gather*}
Consider the multiset $M_0 = A(c),A(d),B$.
The trace given by alternating applications of $a(c)$ and $b(d)$ is strongly fact-fair and strongly-rule fair, but it is not weakly or strongly instantiation fair.
Indeed, though the instantiation $b(c)$ is always applicable, it never gets applied.

Now consider the MRS given by the following rule:
\begin{gather*}
  r : \forall xy . A(x),B(y) \to \exists z . A(x),B(z).
\end{gather*}
Consider the multiset $M_0 = A(a),A(a'),B(b_0)$.
Assume that we generate the fresh constants $b_1, b_2, \dotsc$.
Then the trace given by $r(a,b_0),r(a,b_1),r(a,b_2),\dotsc$ is strongly instantiation-fair and strongly rule-fair, but it is not weakly or strongly fact-fair.
Indeed, the only rule $r$ is applied infinitely often, so the trace is strongly rule-fair, and all instantiations of $r$ are globally unique, so the trace is instantiation-fair.
It is not fact fair because, though the fact $A(a')$ is always enabled (it is in the active multiset of $r(a', b_n)$ for each $n$), it is never in the active multiset of a rule in the trace.

Finally, consider the MRS given by the following rules:
\begin{gather*}
  a : \forall x . A(x) \to \exists y . A(y)\\
  b : \forall x . B(x) \to \exists y . B(y)\\
  i : \forall x,y . A(x),B(y) \to A(x),B(y).
\end{gather*}
Consider the multiset $A(a_0),B(b_0)$.
Consider the trace given by alternating applications of $a$ and $b$:
\[
  A(a_0),B(b_0) \to A(a_1)B(b_0) \to A(a_1),B(b_1) \to A(a_2),B(b_1) \to A(a_2),B(b_2) \to \cdots
\]
It is strongly fact-fair: each fact appears at most twice in the trace.
It is strongly instantiation-fair: each rule instantiation is applicable at most twice in the trace.
However, it is not weakly or strongly rule-fair: the rule $i$ never gets applied.

\subsection{Weak, Strong, and Über Fairness}
\label{sec:ssos-fairness:weak-strong-fairness}

We say that a trace is \defin{weakly fair} if it is weakly rule-, fact-, and instantiation-fair, and \defin{strongly fair} if it is strongly rule-, fact-, and instantiation-fair.
Surprisingly, a much stronger notion of fairness arises naturally in applications of multiset rewriting systems.
We say that a trace $(M_0,(r_i;\delta_i)_{i \in I})$ is \defin{über fair}\DIndex<{über fairness} if it is finite, or if for all $i \in I$, $r \in \mc{R}$, and $\theta$, whenever $r(\theta)$ is applicable to $M_i$, there exists a $j > i$ such that $r_j(\theta_j) \equiv r(\theta)$.
Given an über fair trace $T$, we write $\upsilon_T(i,r,\theta)$ for the least such $j$ if it exists.
Every über fair trace is also strongly fair, and every strongly fair is also weakly fair.

\section{Properties of Fair Traces}
\label{sec:ssos-fairness:prop-fair-trac}

We study sufficient conditions for multiset rewriting systems to have fair traces.
One of these conditions, ``interference-freedom'', implies that all fair traces are permutations of each other.
We also study the effects of permuting steps in traces.
We start with various general fairness properties.

The fair tail property is immediate from the definitions of fairness:

\begin{proposition}[Fair Tail Property]\varindex{fairness {fair tail property}}{1!2}
  \label{prop:main:4}
  Let ``fair'' range over the nine notions of fairness considered in \cref{sec:ssos-fairness:weak-strong-fairness}.
  If $(M_0, (t_i;\delta_i)_{i \in I})$ is fair, then so is $(M_n, (t_i;\delta_i)_{\substack{n<i, i \in I}})$ for all $n \in I$.
\end{proposition}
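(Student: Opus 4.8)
The plan is to exploit the fact that every one of the nine notions is a \emph{tail property}: each is phrased purely in terms of which rules, facts, or instantiations are applicable or enabled on \emph{cofinitely many} or on \emph{infinitely many} markings, and of which steps serve them \emph{infinitely often}. Deleting the finite prefix $M_0, \dotsc, M_{n-1}$ of markings together with the steps $t_1, \dotsc, t_n$ changes none of these quantities, so fairness is inherited by the tail. First I would dispatch the finite case: a suffix of a finite trace is finite, and every notion of fairness (including über fairness) holds vacuously for finite traces. In the remaining case the whole trace is infinite, so since $n \in I$ the tail $(M_n, (t_i;\delta_i)_{n < i, i \in I})$ is again infinite, with marking set $\{M_i : i \geq n\}$ and step set $\{t_i : i > n\}$.

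Next I would treat the six weak and strong variants uniformly. Each has the shape: for every witness $w$ --- a rule $r$, a fact $J \in \supp(T)$, or an instantiation $r(\theta)$ --- if $w$ is applicable or enabled on cofinitely many (respectively, infinitely many) markings, then infinitely many steps serve $w$. The three quantifier patterns involved (``on all but finitely many markings'', ``on infinitely many markings'', ``for infinitely many steps'') are each insensitive to a finite change of the marking set or the step set. Concretely, for strong rule-fairness: if $r$ is applicable to infinitely many $M_i$ with $i \geq n$, then $r$ is applicable to infinitely many $M_i$ overall, so by fairness of $T$ infinitely many indices satisfy $t_i = r$; only finitely many of these satisfy $i \leq n$, so infinitely many satisfy $i > n$ and hence belong to the tail. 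The weak variants instead use that ``cofinitely often'' is both preserved and reflected under a finite change of the marking set; the fact variants additionally use $\supp(T') \subseteq \supp(T)$, so that the tail's quantification over facts ranges over a subset of those already governed by $T$; and the instantiation variants are verbatim the same with $r(\theta)$ in place of $r$.

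Because weakly fair (respectively, strongly fair) is by definition the conjunction of the three weak (respectively, strong) variants, these two composite notions follow immediately from the previous paragraph. Über fairness is even more direct and needs no finite-difference bookkeeping: its defining clause universally quantifies over indices $i$, and for each such $i$ with $r(\theta)$ applicable to $M_i$ it furnishes a witness $j > i$ with $t_j(\theta_j) \equiv r(\theta)$. Restricting this universal quantifier to indices $i \geq n$ yields exactly the über-fairness condition for the tail, since any witness $j > i \geq n$ is automatically an index of the tail.

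I expect no deep difficulty here; the proposition is essentially a bookkeeping exercise, as its billing as ``immediate from the definitions'' suggests. The only care required is to track, for each of the six weak and strong notions, that the applicability/enabledness hypothesis transfers \emph{upward} (from the tail's markings to all markings of $T$) while the ``infinitely many serving steps'' conclusion transfers \emph{downward} (from all of $T$'s steps to the tail's steps), and to record the containment $\supp(T') \subseteq \supp(T)$ in the fact-fairness cases so that the tail's quantification over facts is legitimate.
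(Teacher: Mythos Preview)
Your proposal is correct and matches the paper's approach: the paper gives no proof beyond stating that the proposition is ``immediate from the definitions of fairness,'' and your argument is a careful and accurate unpacking of exactly that observation. The bookkeeping you outline---that ``cofinitely many,'' ``infinitely many,'' and ``infinitely often'' are all tail-invariant, together with $\supp(T') \subseteq \supp(T)$ for the fact variants and the direct restriction of the universal quantifier for \"uber fairness---is precisely what makes the claim immediate.
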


Most notions of fairness are closed under concatenation with finite prefixes:

\begin{proposition}[Fair Concatenation Property]\varindex{fairness {fair concatenation property}}{1!2}
  \label{prop:ssos-fairness/prop-fair-trac:6}
  Let ``fair'' range over weak and strong rule, fact, and instantiation fairness.
  If $M_0 \mssteps M_n$, and $T$ is a fair trace from $M_n$, then $M_0 \mssteps M_n$ followed by $T$ is a fair trace from $M_0$.
  If $T$ is finite, then the result also holds for über fairness.
\end{proposition}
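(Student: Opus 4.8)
The plan is to case on whether $T$ is finite. If $T$ is finite, then the concatenation of the finite prefix $M_0 \mssteps M_n$ with $T$ is itself finite, and every notion of fairness---including über fairness---holds vacuously for finite traces directly by definition; this settles the finite case, and in particular the über-fairness clause. So assume henceforth that $T$ is infinite, whence the concatenation $T' = (M_0, (r_i;\delta_i)_{i \in \N})$ is infinite and I must verify each of the six notions of weak and strong rule, fact, and instantiation fairness. First, by the semantic irrelevance of fresh constants (\cref{sec:ssos-fairness:mult-rewr-syst:semant-irrel-fresh}) I may refresh the fresh constants of $T$ so that they avoid the finitely many constants occurring in the prefix states $M_0, \dotsc, M_{n-1}$; this makes $T'$ a bona fide trace and does not disturb its fairness.

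The driving observation is that prepending finitely many states and steps leaves all relevant asymptotic quantities unchanged. In $T'$ the states $M_n, M_{n+1}, \dotsc$ are exactly the states of $T$, and the steps of $T'$ at indices greater than $n$ are exactly the steps of $T$, so the prefix contributes only the finitely many states $M_0, \dotsc, M_{n-1}$ and the finitely many steps $(r_1;\delta_1), \dotsc, (r_n;\delta_n)$. Consequently, for any predicate $P$ on states the index set $\{\, i : P(M_i) \,\}$ is cofinite, respectively infinite, in $T'$ if and only if its restriction to indices at least $n$ is cofinite, respectively infinite, in $T$; and any property of steps holds at infinitely many indices of $T'$ exactly when it holds at infinitely many indices of $T$. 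Taking $P(M)$ to be ``$r$ is applicable to $M$'' together with the step-property ``$r_i = r$'' transfers weak and strong rule fairness from $T$ to $T'$, and taking $P(M)$ to be ``$r(\theta)$ is applicable to $M$'' together with the defining step-property of instantiation fairness transfers the weak and strong instantiation notions.

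Fact fairness requires one additional remark, since $\supp(T')$ may strictly contain $\supp(T)$: the prefix can mention facts that never recur in $T$. For a fact $J \in \supp(T)$ the argument above applies verbatim, with $P(M)$ taken to be ``$J$ is enabled in $M$'' and the step-property ``$J$ lies in the active multiset of $r_i(\theta_i)$.'' For a fact $J \in \supp(T') \setminus \supp(T)$ I use that enabledness presupposes membership: a fact is enabled in $M$ only if it belongs to $M$, so $J$ can be enabled only in the finitely many prefix states $M_0, \dotsc, M_{n-1}$. Thus in the infinite trace $T'$ the fact $J$ is enabled in only finitely many states, so the hypotheses of both weak and strong fact fairness fail and the corresponding implications hold vacuously. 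Together with the finite case this establishes all the required notions. I expect this treatment of prefix-only facts to be the one genuinely delicate point; every other notion reduces mechanically to the principle that a finite prefix is asymptotically invisible.
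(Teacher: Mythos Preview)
Your argument is correct. The paper states this proposition without proof, presumably treating it as immediate from the definitions (much like the preceding Fair Tail Property). Your approach is exactly the natural one: the finite case is vacuous by the definitional clause ``if it is finite,'' and in the infinite case a finite prefix cannot affect any of the ``cofinitely/infinitely often'' hypotheses or conclusions. Your observation that fact fairness needs a separate check for facts in $\supp(T') \setminus \supp(T)$, and that such facts can be enabled in only finitely many states because enabledness requires membership, is the one genuinely non-mechanical step, and you handle it correctly. The refreshing remark is also apt: without it the concatenation could fail the disjointness condition in the definition of trace.
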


Interference-freedom roughly means that at any given point, the order in which we apply applicable rules does not matter.
Write $S_I$ for the group of bijections on $I$; its elements are called \mdefin{permutations}\DIndex{permutation}.
A permutation $\sigma \in S_I$ acts on a trace $T = (M_0, (t_i;\delta_i)_{i \in I})$ to produce a sequence $\gact{\sigma}{T} = (M_0, (t_{\sigma(i)};\delta_{\sigma(i)})_{i \in I})$.
This sequence $\gact{\sigma}{T}$\glsadd{gact} is a \defin{permutation of}\varindex{{multiset rewriting system} trace permutation}{1!2!3}[|defin] $T$ whenever it is also a trace.
We adopt group-theoretic notation for cyclic permutations and write $(x, \sigma(x), \sigma(\sigma(x)), \dotsc)$ for a cyclic permutation $\sigma : I \to I$; implicit is that all elements not in the orbit of $x$ are fixed by $\sigma$.
For example, let $T$ be given by the sequence of rule instances $t_1, t_2, t_3, t_4, t_5$.
Then $\gact{(4, 3, 2)}{T}$ is the sequence $t_1, t_4, t_2, t_3, t_5$.
Cyclic permutations of length two are called \mdefin{transpositions}\index{transposition|see{permutation}}.

Consider an MRS $\mc{R}$ and let $r_1(\theta_1), \dotsc, r_n(\theta_n)$ enumerate all distinct instantiations of rules in $\mc{R}$ applicable to $M_0$.
We say that $\mc{R}$ \defin{commutes}\DIndex<{commuting {multiset rewriting system}} on $M_0$ or is \defin{interference-free}\DIndex<{interference-free {multiset rewriting system}} on $M_0$ if for all corresponding pairwise-disjoint fresh-constant substitutions $\xi_i$, the following diagram commutes for all permutations $\sigma \in S_{\ordinals{n}}$, and both paths around it are traces:
\[
  \begin{adjustbox}{max width=\linewidth}
    \begin{tikzcd}[column sep=6em, row sep=1em]
      &
      M_1
      \ar[r, "{(r_2;(\theta_2,\xi_2))}"]
      &[1ex]
      \cdots
      \ar[r, "{(r_{n - 1};(\theta_{n - 1},\xi_{n - 1}))}"]
      &[2.5em]
      M_{n - 1}
      \ar[dr, "{(r_n;(\theta_n,\xi_n))}" pos=0.3]
      &
      \\
      M_0
      \ar[ur, "{(r_1;(\theta_1,\xi_1))}"  pos=0.7]
      \ar[dr, swap, "{(r_{\sigma(1)};(\theta_{\sigma(1)},\xi_{\sigma(1)}))}" pos=0.7]
      &
      &
      &
      &
      M_n
      \\
      &
      M_1'
      \ar[r, "{(r_{\sigma(2)};(\theta_{\sigma(2)},\xi_{\sigma(2)}))}"]
      &
      \cdots
      \ar[r, "{(r_{\sigma(n - 1)};(\theta_{\sigma(n - 1)},\xi_{\sigma(n - 1)}))}"]
      &
      M_{n - 1}'
      \ar[ur, swap, "{(r_{\sigma(n)};(\theta_{\sigma(n)},\xi_{\sigma(n)}))}" pos=0.3]
      &
    \end{tikzcd}
  \end{adjustbox}
\]
We note that interference-freedom is only defined if the enumeration of distinct applicable instantiations is finite.
The following proposition is an immediate consequence of the definition of commuting rules:

\begin{proposition}
  \label{prop:fair-mult-rewr:3}
  Let $\mc{R}$ commute on $M_0$, and let $r_i(\theta_i)$ with $1 \leq i \leq n$ be the distinct instantiations applicable on $M_0$.
  If $M_0 \xrightarrow{(r_1;(\theta_1,\xi_1))} M_1$, then $r_2(\theta_2), \dotsc, r_n(\theta_n)$ are applicable to and commute on $M_1$.
\end{proposition}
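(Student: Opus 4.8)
The plan is to read both conclusions off the commuting diagram for $M_0$, specialized to permutations that keep $r_1(\theta_1)$ in first position. Throughout I fix the fresh-constant substitution $\xi_1$ to be the one witnessing the given step $M_0 \xrightarrow{(r_1;(\theta_1,\xi_1))} M_1$. This is legitimate because the definition of interference-freedom quantifies over \emph{all} pairwise-disjoint fresh-constant substitutions, and because we identify traces up to refreshing substitutions (\cref{sec:ssos-fairness:mult-rewr-syst:semant-irrel-fresh}); consequently the first step of any leg of the diagram whose leading instantiation is $r_1(\theta_1)$ lands in $M_1$ on the nose rather than in an $\alpha$-variant of it.

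First I would establish applicability. Fix $j$ with $2 \leq j \leq n$ and choose a permutation $\sigma \in S_{\ordinals{n}}$ with $\sigma(1) = 1$ and $\sigma(2) = j$; such a $\sigma$ exists since $j \neq 1$. By the commuting hypothesis the corresponding leg of the diagram is a trace, so in particular its prefix $M_0 \xrightarrow{(r_1;(\theta_1,\xi_1))} M_1 \xrightarrow{(r_j;(\theta_j,\xi_j))} \dotsb$ is a trace. Since a trace is by definition a sequence of valid rule applications, its second step witnesses that $r_j(\theta_j)$ is applicable to $M_1$. Letting $j$ range over $\{2, \dotsc, n\}$ gives applicability of all of $r_2(\theta_2), \dotsc, r_n(\theta_n)$ to $M_1$.

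Next I would establish commutation. Let $\tau$ be any permutation of $\{2, \dotsc, n\}$ and let $\sigma \in S_{\ordinals{n}}$ be the permutation fixing $1$ and acting as $\tau$ elsewhere. The $\sigma$-leg of the diagram is a trace $M_0 \xrightarrow{(r_1;(\theta_1,\xi_1))} M_1 \to \dotsb \to M_n$ in which the instantiations after the first are $r_{\tau(2)}(\theta_{\tau(2)}), \dotsc, r_{\tau(n)}(\theta_{\tau(n)})$, and by commutativity of the diagram the endpoint $M_n$ is independent of $\sigma$, hence of $\tau$. Deleting the common first step $M_0 \to M_1$ leaves a suffix, which is again a trace, now from $M_1$ and reaching $M_n$. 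Thus every ordering of $r_2(\theta_2), \dotsc, r_n(\theta_n)$ is a trace from $M_1$ and all such orderings reach the single multiset $M_n$; this is exactly the assertion that these instantiations commute on $M_1$.

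The step I expect to require the most care is the fresh-constant bookkeeping in the reduction to $M_1$: one must verify that fixing $\xi_1$ as above is compatible with the universal quantification over fresh-constant substitutions in the definition of commuting, and that the identification of traces up to refreshing genuinely makes the first step of each $\sigma$-leg with $\sigma(1) = 1$ coincide with the given step, so that the deleted-suffix arguments really start at $M_1$. I would also note, for precision, that ``commute on $M_1$'' here is the statement that the $n-1$ instantiations $r_2(\theta_2), \dotsc, r_n(\theta_n)$ commute among themselves from $M_1$: applying $r_1$ may enable further instantiations absent at $M_0$, so these $n-1$ need not exhaust the instantiations applicable at $M_1$, and the argument only controls the ones inherited from the enumeration at $M_0$.
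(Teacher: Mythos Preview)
Your proposal is correct and takes essentially the same approach as the paper, which merely states that the proposition ``is an immediate consequence of the definition of commuting rules'' without further elaboration. Your argument is precisely the unpacking of that immediate consequence: specializing the commuting diagram to permutations with $\sigma(1)=1$ and reading off applicability from the second step and commutation from the common endpoint, with appropriate care about fresh-constant bookkeeping and the precise meaning of ``commute on $M_1$'' for the inherited instantiations.
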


Interference-freedom implies the existence of über fair executions.
To prove this, we construct a scheduler that enqueues all applicable rules and applies them one by one.
Interference freedom ensures that a rule is still applicable once it reaches the front of the queue.

\begin{proposition}[Fair Scheduler]
  \label{prop:ssos-fairness/prop-fair-trac:9}\varindex{fairness {fair scheduling}}{1!2}
  If $\mc{R}$ is interference-free from $M_0$, then there exists an über fair execution from $M_0$.
\end{proposition}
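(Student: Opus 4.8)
The plan is to build an explicit queue-based scheduler, constructing the execution one step at a time while maintaining a first-in-first-out (FIFO) queue of pending rule instantiations. The guiding intuition is that über fairness demands that every instantiation applicable at every point be applied at some strictly later point, and a FIFO discipline guarantees exactly this kind of eventual service. Concretely, at each stage $k$ I keep the current multiset $M_k$ together with a finite queue $Q_k$ of distinct (pairwise non-$\equiv$-equivalent) instantiations. I initialize $M_0$ as given and $Q_0$ by listing the finitely many distinct instantiations applicable to $M_0$ (finite, since interference-freedom presupposes this). To produce the transition out of $M_k$, I first append to the back of the queue every instantiation applicable to $M_k$ that is not already $\equiv$-represented in the queue; then, if the queue is nonempty, I dequeue the front instantiation $r(\theta)$, choose globally fresh constants, apply it to obtain $M_{k+1}$, and record this as the next step. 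If the queue is ever empty with no rule applicable to $M_k$, the scheduler halts, yielding a finite trace, which is über fair by definition.

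The first key step is to show that a queued element stays applicable until it is dequeued. Suppose $e = r(\theta)$ sits in the queue at $M_k$ and has not yet been applied. Until $e$ reaches the front, every applied instantiation is some other front element, hence (by the deduplication invariant) distinct from and non-equivalent to $e$. Since $\mc{R}$ is interference-free on $M_k$, and $e$ together with the applied instantiation are both distinct applicable instantiations there, \cref{prop:fair-mult-rewr:3} shows that applying the latter preserves the applicability of $e$, with the same matching substitution $\theta$ (its active facts are never consumed, because interference-freedom forces the active multisets of distinct applicable instantiations to be disjoint modulo persistence). By induction on the number of intervening steps, $e$ remains applicable at every multiset up to the moment it is dequeued, at which point it can genuinely be applied. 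This argument relies on interference-freedom holding at every reachable multiset, which is precisely what ``interference-free from $M_0$'' provides.

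The second key step is the FIFO progress argument: any element that enters the queue is dequeued within finitely many steps. Because each step appends new instantiations only at the back and removes exactly one element from the front, an element at position $p$ strictly decreases its distance to the front at each step and so reaches the front after at most $p-1$ steps. Combining the two steps yields über fairness: given any index $i$ and any $r(\theta)$ applicable to $M_i$, the append phase of the transition out of $M_i$ ensures some $e \equiv r(\theta)$ is in the queue; by FIFO progress $e$ is applied at some step $j > i$, and by the preservation argument it is indeed applicable when dequeued, so the applied instantiation $r_j(\theta_j) = e \equiv r(\theta)$. If the construction never halts the trace is infinite; in either case it is maximal, hence an execution.

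The main obstacle I anticipate is making the preservation-of-applicability argument fully rigorous in the presence of fresh constants and the equivalence $\equiv$: I must ensure that the pending instantiation stays \emph{literally} applicable (its active facts untouched, via the disjointness that interference-freedom forces), and that the deduplication invariant genuinely guarantees no $\equiv$-copy of a queued element is applied before that element itself. The remaining bookkeeping — selecting globally fresh constants so the sequence meets the disjointness conditions of a trace, and confirming the applicable set is finite at each reachable multiset — is routine given interference-freedom.
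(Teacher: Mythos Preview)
Your proposal is correct and takes essentially the same approach as the paper: a FIFO queue of pending distinct instantiations, with \cref{prop:fair-mult-rewr:3} guaranteeing that queued instantiations remain applicable until dequeued, and the FIFO discipline guaranteeing eventual service. The only cosmetic difference is that you append new instantiations before dequeuing whereas the paper appends after, and you spell out the preservation-of-applicability induction more explicitly than the paper does (one caveat: your parenthetical about active multisets being disjoint is the non-overlapping condition rather than interference-freedom per se, but since your actual argument rests on \cref{prop:fair-mult-rewr:3} this is harmless).
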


\begin{proof}
  We define an über fair execution $(r_i;(\theta_i,\xi_i)_i)$ from $M_0$ by induction on the number $n$ of steps.

  Let $Q_0 = r_{01}(\theta_{01}),\dotsc,r_{0m}(\theta_{0m})$ be an enumeration of the distinct instantiations of rules from $\mc{R}$ applicable to $M_0$.
  The enumeration is finite because $\mc{R}$ commutes on $M_0$.
  We maintain for all $n$ the following invariants:
  \begin{itemize}
  \item all distinct rules instantiations applicable to $M_n$ appear in $Q_n$;
  \item all rules in $Q_n$ are applicable to $M_n$ and are distinct instantiations of rules in $\mc{R}$.
  \end{itemize}

  Let $Q_n$ be given; we construct $r_{n + 1}$, $\theta_{n + 1}$, $\xi_{n + 1}$, $M_{n + 1}$ and $Q_{n + 1}$.
  If $Q_n$ is empty, then no rules are applicable to $M_n$ and $(M_0, (r_i;(\theta_i,\xi_i)_i))$ is maximal.
  Otherwise, let $r_{n + 1}(\theta_{n + 1})$ be first element in $Q_n$ so that $Q_n = r_{n + 1}(\theta_{n + 1}),Q_n'$ for some $Q_n'$.
  Pick a suitably fresh $\xi_{n + 1}$.
  Take
  \[
    M_n \xrightarrow{(r_{n + 1};(\theta_{n + 1},\xi_{n + 1}))} M_{n + 1}.
  \]
  By assumption, all rules in $Q_n$ are applicable to $M_n$, and these are all instantiations of rules in $\mc{R}$.
  By interference-freedom, $\mc{R}$ commutes on $M_n$, so all rules in $Q_n'$ are applicable to $M_{n + 1}$ by \cref{prop:fair-mult-rewr:3}.
  Let $Q_{n + 1} = Q_n',N_{n + 1}$, where $N_{n + 1}$ is an enumeration of the distinct rule instantiations applicable to $M_{n + 1}$ not already in $Q_n'$.
  It is finite because $\mc{R}$ commutes on $M_{n + 1}$.
  The invariants hold by construction.

  The resulting trace is clearly über fair: for all $j$, if $r$ is applicable to $M_j$, then it appears at some finite depth $d$ in $Q_{j + 1}$ and will appear in the trace after $d$ steps.
\end{proof}

Though interference-freedom simplifies fair scheduling, it is primarily of interest for reasoning about executions.
For example, it is useful for showing confluence properties.
It also lets us safely permute certain steps in a trace without affecting observations for session-typed processes (see~\cref{sec:sill-obs-equiv:observ-comm}).
This can simplify process equivalence proofs, for it lets us assume that related steps in an execution happen one after another.

Interference-freedom is a strong property, but it holds for many multiset rewriting systems of interest.
This is because many systems can be captured using rules whose active multisets do not intersect, and rules with disjoint active multisets commute.
In fact, even if their active multisets intersect, rules do not disable each other so long as they preserve these intersections.
Because persistent facts are always preserved, it is sufficient to consider only ephemeral facts.

To make this intuition explicit, consider multisets $M_i \subseteq M$ for $1 \leq i \leq n$.
Their \defin{overlap}\DIndex/>{multiset overlap} in $M$ is $\olap{M}{M_1, \dotsc, M_n} = M_1,\dotsc,M_n \setminus M$\glsadd{olap}.
Fix an MRS \(\mc{R}\) and let $r_i(\theta_i) : \persfnt{\pi_i}(\theta_i), F_i(\theta_i) \to \exists \vec n_i . \persfnt{\pi_i'}(\theta_i), G_i(\theta_i, \vec n_i)$, $1 \leq i \leq n$, enumerate all distinct instantiations of rules in $\mc{R}$ applicable to $M$.
We say that $\mc{R}$ is \defin{non-overlapping}\DIndex<{non-overlapping {multiset rewriting system}} on $M$ if for all $1 \leq i \leq n$ and fresh-constant substitutions $\xi_i$, $F_i(\theta_i) \cap \olap{M}{F_1(\theta_1), \dotsc, F_n(\theta_n)} \subseteq G_i(\theta_i, \xi_i)$, \ie, if each rule instantiation preserves its portion of the overlap.

\begin{example}
  \label{ex:ssos-fairness/prop-fair-trac:1}
  The overlap of $A, B$ and $B, C$ in $A, B, C$ is $\olap{A, B, C}{(A, B), (B, C)} = B$.
  The overlap of $A, B$ and $B, C$ in $A, B, B, C$ is the empty multiset $\olap{A, B, B, C}{(A, B), (B, C)} = \emptymset$.
  The overlap of $A, B$ and $B, C$, and $C, A$ in $A, B, C$ is $\olap{A, B, C}{(A, B), (B, C), (C, A)} = A, B, C$.
\end{example}

\begin{example}
  \label{ex:fair-mult-rewr:2}
  The MRS given by \cref{ex:fair-mult-rewr:1} is non-overlapping from any multiset of the form $Q,E$ where $Q$ is a queue rooted at $q$, and $E$ contains at most one fact of the form $\enq(q,v)$.
\end{example}

\Cref{prop:ssos-fairness/prop-fair-trac:3} characterizes the application of non-overlapping rules, while \cref{prop:ssos-fairness/prop-fair-trac:1} characterizes the relationship between commuting and non-overlapping rules.
Because persistent facts pose no difficulty (multiset union is a commutative operation), we elide them from these results for clarity.

\begin{proposition}
  \label{prop:ssos-fairness/prop-fair-trac:3}
  Let $\mc{R}$ be non-overlapping on $M_0$ and let $r_i(\theta_i) : F_i(\theta_i) \to \exists \vec n_i.G(\theta_i, \vec n_i)$ with $1 \leq i \leq n$ be the distinct instantiations applicable to $M_0$.
  If $M_0 \xrightarrow{(r_1;(\theta_1,\xi_1))} M_1$ and $r_1, \dotsc, r_n$ are non-overlapping on $M_0$, then $r_2(\theta_2), \dotsc, r_n(\theta_n)$ are applicable to and non-overlapping on $M_1$.

  In particular, abbreviate $F_i(\theta_i)$ and $G_i(\theta_i,\xi_i)$ by $F_i$ and $G_i$, respectively, and let \(O\) be the overlap $O = \olap{M_0}{F_1, \dotsc, F_n} \cap F_1$.
  There exist $F_1'$ and $G_1'$ be such that $F_1 = O,F_1'$ and $G_1 = O,G_1'$, and there exists an $M$ such that $M_0 = O,F_1',M$ and $M_1 = O,G_1',M$.
  The instantiations $r_2(\theta_2), \dotsc, r_n(\theta_n)$ are all applicable to $O,M \subseteq M_1$.
\end{proposition}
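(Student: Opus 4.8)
The plan is to reduce the proposition to a multiplicity count on individual facts, after first extracting the decompositions asserted in the ``in particular'' clause. By definition $O = \olap{M_0}{F_1, \dotsc, F_n} \cap F_1 \subseteq F_1$, so setting $F_1' = F_1 \setminus O$ gives $F_1 = O, F_1'$. For the second decomposition I would apply the non-overlapping hypothesis at index $i = 1$: it states $F_1 \cap \olap{M_0}{F_1, \dotsc, F_n} \subseteq G_1$, and its left-hand side is exactly $O$, so $O \subseteq G_1$ and $G_1' = G_1 \setminus O$ gives $G_1 = O, G_1'$. Writing $M = M_0 \setminus F_1$ for the stationary multiset of the step $M_0 \xrightarrow{(r_1;(\theta_1,\xi_1))} M_1$, the definition of rule application gives $M_0 = F_1, M = O, F_1', M$ and $M_1 = G_1, M = O, G_1', M$, as claimed.

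Next I would prove that each $r_i(\theta_i)$ with $2 \le i \le n$ is applicable to $O, M \subseteq M_1$. Fix a fact $J$ and write $f_k(J)$, $m_0(J)$, $o(J)$ for its multiplicities in $F_k$, $M_0$, $O$. Since $M_0 = F_1, M$, the multiplicity of $J$ in $O, M$ is $o(J) + m_0(J) - f_1(J)$, so $F_i \subseteq O, M$ reduces to the inequality $f_i(J) + f_1(J) - m_0(J) \le o(J)$. Recalling that $o(J) = \min(f_1(J), E(J))$ with $E(J) = \max(0, \sum_{k} f_k(J) - m_0(J))$, this splits into the bound $f_i(J) \le m_0(J)$, which is just applicability of $r_i$ on $M_0$, and the bound $f_i(J) + f_1(J) - m_0(J) \le E(J)$, which is trivial when its left side is nonpositive and otherwise follows from $\sum_k f_k(J) \ge f_i(J) + f_1(J)$ (valid because $i \ne 1$). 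I expect this to be the main obstacle, since it is exactly where the qualitative non-overlapping hypothesis must be turned into the quantitative slack guaranteeing that returning the $O$-part of $F_1$ leaves every surviving $F_i$ intact.

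Finally I would check that $r_2(\theta_2), \dotsc, r_n(\theta_n)$ remain non-overlapping once restricted to $O, M$. The key observation is that the overlap can only shrink: for each fact $J$, the excess defining $\olap{O, M}{F_2, \dotsc, F_n}$ is $\sum_{k \ge 2} f_k(J) - (o(J) + m_0(J) - f_1(J))$, which is exactly $o(J) \ge 0$ less than the excess $\sum_{k \ge 1} f_k(J) - m_0(J)$ defining $\olap{M_0}{F_1, \dotsc, F_n}$; since $\max(0, -)$ is monotone, this yields $\olap{O, M}{F_2, \dotsc, F_n} \subseteq \olap{M_0}{F_1, \dotsc, F_n}$. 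Intersecting with $F_i$ and invoking the non-overlapping hypothesis on $M_0$ then gives $F_i \cap \olap{O, M}{F_2, \dotsc, F_n} \subseteq F_i \cap \olap{M_0}{F_1, \dotsc, F_n} \subseteq G_i$, which is precisely the non-overlap inequality for the restricted system. The one subtlety I would flag is that non-overlap is officially a property of the enumeration of \emph{all} applicable instantiations; as in the parallel statement for commuting rules (\cref{prop:fair-mult-rewr:3}), I would state the conclusion for the surviving instantiations $r_2, \dotsc, r_n$ acting on $O, M$, which is what the inductive use of the proposition requires.
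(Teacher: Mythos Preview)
Your proposal is correct and follows essentially the same route as the paper: the same decomposition $M_0 = O, F_1', M$ and $M_1 = O, G_1', M$, the same multiplicity-counting argument for applicability, and the same reduction of non-overlap to the inclusion of overlaps. Two cosmetic differences: the paper establishes $F_i \subseteq O, M$ via a three-way case split on which branch of the $\min$ defining $O(s)$ is active, whereas your direct reduction to the two inequalities $f_i \le m_0$ and $f_i + f_1 - m_0 \le E$ is slightly cleaner; and the paper proves the overlap inclusion against $M_1$ rather than $O, M$, but since $O, M \subseteq M_1$ your version is the stronger one and immediately yields the paper's.
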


\begin{proof}
  Where $M = (S, m)$ is a multiset and $s \in S$, we abuse notation and write $M(s)$ for the multiplicity $m(s)$ of \(s\) in \(M\).

  Set $O = \olap{M_0}{F_1, \dotsc, F_n} \cap F_1$.
  By assumption, $O \subseteq G_1$, so $F_1 = O,F_1'$ and $G_1 = O,G_1'$ for some $F_1'$ and $G_1'$.
  It follows that $M_0 = O,F_1',M$ and $M_1 = O,G_1',M$ for some $M$.
  We show that $r_2(\theta_2), \dotsc, r_n(\theta_n)$ are all applicable to $O,M$.
  Without loss of generality, we show that $r_2(\theta_2)$ is applicable to $O,M$.
  This requires that we show that $F_2 \subseteq O,M$.
  To do so, we use the function-based definitions of multiset operations and relations of \cref{def:ssos-fairness/mult-rewr-syst:2}.
  To show $F_2 \subseteq O,M$ means to show that $F_2(s) \leq O(s) + M(s)$ for all $s$.
  Recall that \((M_1 \setminus M_2)(s) = \max(0, M_1(s) - M_2(s))\), that \((M_1,M_2)(s) = M_1(s) + M_2(s)\), and that \((M_1 \cap M_2)(s) = \min(M_1(s), M_2(s))\).
  We compute:
  \begin{align}
    (\olap{M_0}{F_1,\dotsc,F_n})(s) &= \left(\left(F_1, \dotsc, F_n\right)\setminus M_0\right)(s)\nonumber\\
                                    &= \max\left(0, \left(\sum_{i = 1}^n F_i(s)\right) - M_0(s)\right)\nonumber\\
                                    &= \max\left(0, \left(\sum_{i = 2}^n F_i(s)\right) - M(s)\right),\label{eq:ssos-fairness/prop-fair-trac:1}\\
    O(s) &= \olap{M_0}{F_1, \dotsc, F_n} \cap F_1\nonumber\\
                                    &= \min\left(F_1(s), \max\left(0, \left(\sum_{i = 2}^n F_i(s)\right) - M(s)\right)\right).\label{eq:ssos-fairness/prop-fair-trac:2}
  \end{align}
  Because $r_2(\theta_2)$ is applicable to $M_0$, we have $F_2 \subseteq M_0$, \ie, $F_2(s) \leq M_0(s) = O(s) + F_1'(s) + M(s)$ for all $s$.
  If $F_1'(s) = 0$, then we are done.
  Assume now that $F_1'(s) > 0$.
  From this it follows that $F_1(s) = F_1'(s) + O(s) > O(s)$.
  We consider three cases for the value of $O(s)$, based on the three possibilities in \cref{eq:ssos-fairness/prop-fair-trac:2}:
  \begin{proofcases}
  \item[$O(s) = F_1(s)$]
    Impossible, for it contradicts $F_1(s) > O(s)$.
  \item[$O(s) = 0$]
    Then $F_1(s) = 0$ or $\max\left(0, \left(\sum_{i = 2}^n F_i(s)\right) - M(s)\right) = 0$.
    The case $F_1(s) = 0$ is impossible because $F_1(s) > O(s) = 0$.
    So $0 \geq \left(\sum_{i = 2}^n F_i(s)\right) - M(s)$, so $M(s) \geq \sum_{i = 2}^n F_i(s)$.
    Because the $F_i$ are all non-negative, $M(s) \geq F_2(s)$, so we are done.
  \item[$O(s) = \left(\sum_{i = 2}^n F_i(s)\right) - M(s)$]
    Then $O(s) + M(s) = \sum_{i = 2}^n F_i(s) \geq F_2(s)$, so we are done.
  \end{proofcases}
  We conclude that $r_2(\theta_2)$ is applicable to $M_1$.

  Next, we show that $r_2(\theta_2), \dotsc, r_n(\theta_n)$ is non-overlapping on $M_1$.
  In particular, we must show that for all $i$, $\olap{M_1}{F_2, \dotsc, F_n} \cap F_i \subseteq G_i$.
  By assumption,
  \[
    \olap{M_0}{F_1, \dotsc, F_n} \cap F_i \subseteq G_i
  \]
  for all $i$, so it is sufficient to show that
  \[
    \olap{M_1}{F_2, \dotsc, F_n} \subseteq \olap{M_0}{F_1, \dotsc, F_n}.
  \]
  To do so, we show that for all $s$,
  \[
    \left(\olap{M_1}{F_2, \dotsc, F_n}\right)(s) \leq \left(\olap{M_0}{F_1, \dotsc, F_n}\right)(s).
  \]
  Let $M_1 = O,G_1',M$ as before.
  We compute:
  \begin{align*}
    (\olap{M_1}{F_2,\dotsc,F_n})(s) &= \max\left(0, \left(\sum_{i = 2}^n F_i(s)\right) - M_1(s)\right)\\
                                    &= \max\left(0, \left(\sum_{i = 2}^n F_i(s)\right) - O(s) - G_1'(s) - M(s)\right).
  \end{align*}
  Because all values involved are non-negative,
  \[
    \left(\sum_{i = 2}^n F_i(s)\right) - O(s) - G_1'(s) - M(s) \leq \left(\sum_{i = 2}^n F_i(s)\right) - M(s).
  \]
  Recall that $\max$ is monotone and recall \cref{eq:ssos-fairness/prop-fair-trac:1}.
  It follows that
  \[
    \left(\olap{M_1}{F_2, \dotsc, F_n}\right)(s) \leq \left(\olap{M_0}{F_1, \dotsc, F_n}\right)(s)
  \]
  as desired.
  We conclude that $r_2(\theta_2),\dotsc,r_n(\theta_n)$ are non-overlapping on $M_1$.
\end{proof}

An inductive argument in the proof of \cref{prop:ssos-fairness/prop-fair-trac:1} relies on the following \namecref{lemma:fair-mult-rewr:2}:

\begin{lemma}
  \label{lemma:fair-mult-rewr:2}
  Consider distinct instantiations $r_i(\theta_i) : F_i(\theta_i) \to \exists \vec n_i.G_i(\theta_i, \vec n_i)$ that are applicable to $M_0$ for $i = 1, 2$.
  If they are non-overlapping on $M_0$, then they commute on $M_0$.
\end{lemma}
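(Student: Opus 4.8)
The plan is to establish the two ingredients demanded by the definition of commutation in the case $n = 2$: that both orders of application are valid traces, and that they reach a common multiset. Write $F_i$ for $F_i(\theta_i)$ and $G_i$ for $G_i(\theta_i,\xi_i)$, fixing at the outset a pair of disjoint fresh-constant substitutions $\xi_1,\xi_2$, as the definition of commutation permits. Since each $r_i(\theta_i)$ is applicable to $M_0$, we have $F_i \subseteq M_0$, so the single steps $M_0 \xrightarrow{(r_1;(\theta_1,\xi_1))} M_1$ and $M_0 \xrightarrow{(r_2;(\theta_2,\xi_2))} M_2$ are defined, with $M_1 = (M_0 \setminus F_1), G_1$ and $M_2 = (M_0 \setminus F_2), G_2$.

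First I would show that $r_2(\theta_2)$ remains applicable to $M_1$, and symmetrically that $r_1(\theta_1)$ is applicable to $M_2$. This is the only place the non-overlapping hypothesis is used; it is the two-instantiation instance of \cref{prop:ssos-fairness/prop-fair-trac:3}, but it is short enough to check directly. Using the multiplicity notation of \cref{def:ssos-fairness/mult-rewr-syst:2}, applicability of $r_2(\theta_2)$ to $M_1$ amounts to $F_2(s) \leq M_1(s) = M_0(s) - F_1(s) + G_1(s)$ for every fact $s$, that is, to $F_1(s) + F_2(s) - M_0(s) \leq G_1(s)$. When the left-hand side is non-positive this is immediate; when it is positive, then since $F_2 \subseteq M_0$ gives $F_2(s) \leq M_0(s)$, one has $F_1(s) + F_2(s) - M_0(s) \leq F_1(s)$, so $(F_1 \cap \olap{M_0}{F_1, F_2})(s) = \min(F_1(s), \max(0, F_1(s) + F_2(s) - M_0(s))) = F_1(s) + F_2(s) - M_0(s)$, and the non-overlapping condition $F_1 \cap \olap{M_0}{F_1, F_2} \subseteq G_1$ yields the bound. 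The symmetric argument, using non-overlapping for the index $2$, gives applicability of $r_1(\theta_1)$ to $M_2$.

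Next I would verify that the two orders reach a common multiset. Because both steps in each order are genuine applications, every multiset difference is a true subtraction with no clipping, so a direct computation of multiplicities gives, for every fact $s$,
\[
  M_{12}(s) = M_0(s) - F_1(s) - F_2(s) + G_1(s) + G_2(s) = M_{21}(s),
\]
where $M_{12}$ and $M_{21}$ denote the results of applying $r_1$ then $r_2$, and $r_2$ then $r_1$, respectively. Since $\xi_1$ and $\xi_2$ were fixed disjointly in advance, the two computed multisets are literally equal, not merely equal up to refreshing. Finally, disjointness of $\xi_1$ and $\xi_2$ and the freshness of each relative to $M_0$ ensure the trace-formation side conditions hold along both paths, so both are traces; this closes the $n = 2$ commuting diagram and establishes that $r_1(\theta_1)$ and $r_2(\theta_2)$ commute on $M_0$.

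I expect the applicability step to be the only real obstacle: the equality of final multisets is routine commutative arithmetic over $\N$, but it is valid only once one knows the second application in each order does not underflow. The non-overlapping condition is precisely the hypothesis preventing the first rule from consuming shared facts that the second still needs — captured above by the case analysis on the sign of $F_1(s) + F_2(s) - M_0(s)$ — and handling this interaction with the $\max$ in the multiset difference correctly is the crux.
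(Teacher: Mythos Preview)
Your proposal is correct and follows the same two-step structure as the paper's proof: first establish that each instantiation remains applicable after the other fires, then show the two orders reach the same multiset. The paper delegates the applicability step to \cref{prop:ssos-fairness/prop-fair-trac:3}, whereas you inline the $n=2$ case of that argument directly; for the equality step the paper decomposes $M_0$ explicitly as $O_{12},O_1,O_2,F_1',F_2',M$ using the overlap pieces and tracks the intermediate multisets structurally, while you compute multiplicities pointwise and rely on the observation that applicability makes each multiset difference a genuine subtraction. Your route is a bit more elementary and self-contained; the paper's decomposition gives a more explicit picture of the intermediate states. Both are sound, and your identification of the applicability step as the crux is exactly right.
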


\begin{proof}
  We must show that for all disjoint fresh-constant substitutions $\xi_1$ and $\xi_2$, the following diagram commutes and both paths around it are traces:
  \[
    {\begin{tikzcd}[column sep=5em,ampersand replacement=\&]
        M_0
        \ar[r, "{(r_1;(\theta_1,\xi_1))}"]
        \ar[d, swap, "{(r_2;(\theta_2,\xi_2))}"]
        \&
        M_1
        \ar[d, "{(r_2;(\theta_2,\xi_2))}"]
        \\
        M_1'
        \ar[r, "{(r_1;(\theta_1,\xi_1))}"]
        \&
        M_2.
      \end{tikzcd}}
  \]
  Fix some such substitutions and abbreviate $F_i(\theta_i)$ and $G_i(\theta_i,\xi_i)$ by $F_i$ and $G_i$, respectively.
  Both paths around the square are traces by \cref{prop:ssos-fairness/prop-fair-trac:3}; we show that it commutes.

  Let $O_i' = F_i \cap \olap{M_0}{F_1, F_2}$, $O_{12} = O_1' \cap O_2'$, and $O_i = O_i' \setminus O_{12}$.
  By assumption, $O_i' \subseteq G_i$, so for some $F_i'$ and $G_i'$ we have $F_i(\theta_i) = O_i',F_i' = O_{12},O_i',F_i'$ and $G_i = O_i',G_i' = O_{12},O_i',G_i'$.
  Because $F_i \subseteq M_0$ for $i = 1,2$, it follows that $M_0 = O_{12},O_1,O_2,F_1',F_2',M$.
  Then the two paths are
  \begin{gather*}
    M_0 \xrightarrow{(r_1;(\theta_1,\xi_1))} O_{12},O_1,O_2,G_1',F_2',M \xrightarrow{(r_2;(\theta_2,\xi_2))} O_{12},O_1,O_2,G_1',G_2',M\\
    M_0 \xrightarrow{(r_2;(\theta_2,\xi_2))} O_{12},O_1,O_2,F_1',G_2',M \xrightarrow{(r_1;(\theta_1,\xi_1))} O_{12},O_1,O_2,G_1',G_2',M.
  \end{gather*}
  We conclude that the diagram commutes.
\end{proof}

In \cite[Proposition~5]{kavanagh_2020:_subst_obser_commun_seman}, we claimed that the converse of \cref{prop:ssos-fairness/prop-fair-trac:1} was false.
The counter-example used does not support this claim.

\begin{proposition}
  \label{prop:ssos-fairness/prop-fair-trac:1}
  An MRS commutes on $M_0$ if it is non-overlapping on $M_0$.
\end{proposition}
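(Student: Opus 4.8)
The plan is to bootstrap from the two-rule case, \cref{lemma:fair-mult-rewr:2}, using the fact that non-overlapping is preserved by rule application, \cref{prop:ssos-fairness/prop-fair-trac:3}, thereby lifting from a single transposition of two instantiations to an arbitrary permutation of all $n$ applicable ones. Let $r_1(\theta_1), \dotsc, r_n(\theta_n)$ enumerate the distinct instantiations applicable to $M_0$, fix pairwise-disjoint fresh-constant substitutions $\xi_i$, and fix $\sigma \in S_{\ordinals{n}}$. I must show that both the identity-order path and the $\sigma$-order path of the commuting diagram are traces and that they reach the same $M_n$.

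First I would show that for \emph{every} ordering $\rho \in S_{\ordinals{n}}$ the sequence $r_{\rho(1)}(\theta_{\rho(1)}), \dotsc, r_{\rho(n)}(\theta_{\rho(n)})$ is a trace from $M_0$. This follows by induction on its length using \cref{prop:ssos-fairness/prop-fair-trac:3}, applied after relabeling so that the instantiation being applied is listed first: at each stage the not-yet-applied instantiations remain applicable to and non-overlapping on the current multiset, so the next one in $\rho$ can be applied. This discharges the ``both paths are traces'' requirement of the commuting condition. Potential fresh-constant clashes are immaterial, since we identify traces up to refreshing substitutions (\cref{sec:ssos-fairness:mult-rewr-syst:semant-irrel-fresh}) and each $\xi_i$ is globally fresh.

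Next I would reduce equality of the final multisets to the case of adjacent transpositions. As the adjacent transpositions generate $S_{\ordinals{n}}$, the identity can be carried to $\sigma$ by finitely many adjacent transpositions, each yielding a valid ordering by the previous paragraph; so it suffices to show that swapping two consecutive applications in any valid ordering leaves $M_n$ unchanged. Suppose $\rho'$ is obtained from $\rho$ by transposing positions $k$ and $k+1$. The two orderings agree on the first $k-1$ positions and hence reach a common multiset $M'$, on which, by the iterated form of \cref{prop:ssos-fairness/prop-fair-trac:3}, the instantiations $r_{\rho(k)}(\theta_{\rho(k)})$ and $r_{\rho(k+1)}(\theta_{\rho(k+1)})$ are applicable and the collection $r_{\rho(k)}, \dotsc, r_{\rho(n)}$ is non-overlapping. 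Since the overlap $\olap{M'}{\cdot}$ is monotone in its list of arguments, $\olap{M'}{F_{\rho(k)}, F_{\rho(k+1)}} \subseteq \olap{M'}{F_{\rho(k)}, \dotsc, F_{\rho(n)}}$, so $r_{\rho(k)}$ and $r_{\rho(k+1)}$ are pairwise non-overlapping on $M'$ and \cref{lemma:fair-mult-rewr:2} shows they commute there, reaching a common $M''$ in either order. The remaining instantiations $r_{\rho(k+2)}, \dotsc, r_{\rho(n)}$, which $\rho$ and $\rho'$ share, are then applied to the identical $M''$ in the identical order, so both orderings reach the same $M_n$.

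I expect the main obstacle to be the bookkeeping behind the second and third paragraphs: justifying that after applying an initial segment of any ordering the remaining instantiations are still applicable to and non-overlapping on the intermediate multiset (so that \cref{prop:ssos-fairness/prop-fair-trac:3} and \cref{lemma:fair-mult-rewr:2} genuinely apply), and confirming that the local two-rule commutation at positions $k,k+1$ propagates to equality of the \emph{global} final multiset. The small monotonicity observation $\olap{M'}{F_{\rho(k)}, F_{\rho(k+1)}} \subseteq \olap{M'}{F_{\rho(k)}, \dotsc, F_{\rho(n)}}$ is the linchpin that lets the pairwise hypothesis of \cref{lemma:fair-mult-rewr:2} be extracted from the collection-relative non-overlapping condition maintained along the trace.
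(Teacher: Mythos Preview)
Your proposal is correct and follows essentially the same route as the paper: iterate \cref{prop:ssos-fairness/prop-fair-trac:3} to show every ordering is a trace, factor the permutation into adjacent transpositions, and invoke \cref{lemma:fair-mult-rewr:2} at each swap. Your explicit monotonicity observation $\olap{M'}{F_{\rho(k)}, F_{\rho(k+1)}} \subseteq \olap{M'}{F_{\rho(k)}, \dotsc, F_{\rho(n)}}$ fills in a step the paper leaves implicit when it passes from the collection-level non-overlapping invariant to the pairwise hypothesis of \cref{lemma:fair-mult-rewr:2}, but the overall architecture is the same.
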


\begin{proof}
  Assume that the rules are non-overlapping on $M_0$.
  Let $r_i(\theta_i) : F_i(\theta_i) \to \exists \vec n_i.G(\theta_i, \vec n_i)$ enumerate the distinct instantiations that are applicable to $M_0$ with $1 \leq i \leq n$.
  Consider pairwise-disjoint fresh-constant substitutions $\xi_i$ for $1 \leq i \leq n$.
  We must show that $(M_0, (r_i;(\theta_i,\xi_i))_{1 \leq i \leq n})$ is a trace and that permuting its steps does not change the last multiset.

  We proceed by induction on $n$ to show that it is a trace.
  Informally, \cref{prop:ssos-fairness/prop-fair-trac:3} will ensure that no matter which instantiation we apply to get to the next multiset, the remaining instantiations will be applicable to and non-overlapping on that multiset.
  The result is immediate when $n = 0$.
  Assume that the result holds for some $k$, and assume that $n = k + 1$.
  By the induction hypothesis, the following sequence of $k$ steps is a trace:
  \[
    M_0 \xrightarrow{(r_1;(\theta_1,\xi_1))} M_1 \xrightarrow{(r_2;(\theta_2,\xi_2))} \cdots \xrightarrow{(r_{k - 1};(\theta_{k - 1},\xi_{k - 1}))} M_{k - 1} \xrightarrow{(r_k;(\theta_k,\xi_k))} M_k.
  \]
  We proceed by induction on $0 \leq j \leq k$ to show that $r_{j + 1}(\theta_{j + 1}), \dotsc, r_{k + 1}(\theta_{k + 1})$ are all applicable and non-overlapping on $M_j$.
  \begin{proofcases}
  \item[$j = 0$] Immediate from the assumption that the rules are non-overlapping on $M_0$.
  \item[$j = j' + 1$ with $j' < k$] The instantiations $r_j(\theta_j), \dotsc, r_{k + 1}(\theta_{k + 1})$ are all applicable to $M_j$ by the induction hypothesis on $j'$.
    By \cref{prop:ssos-fairness/prop-fair-trac:3}, the instantiations $r_{j + 1}(\theta_{j + 1}), \dotsc, r_{k + 1}(\theta_{k + 1})$ are all applicable and non-overlapping on $M_{j + 1}$.
  \end{proofcases}
  This completes the nested induction on $j$, and we conclude that $r_{k + 1}(\theta_{k + 1})$ is applicable to $M_k$.
  Pairwise-disjointness of the $\xi_i$ guarantees that the resulting sequence of $k + 1$ steps is a trace.
  This completes the induction on $n$.

  Next, we observe that any permutation of the above trace is a trace.
  Indeed, the numbering of the $r_i(\theta_i) : F_i(\theta_i) \to \exists \vec n_i.G(\theta_i, \vec n_i)$ was arbitrary, and the proof that $(M_0, (r_i;(\theta_i,\xi_i))_{1 \leq i \leq n})$ was a trace did not depend in any way on the numbering of the rules or on the individual rules themselves.

  Finally, we show that all permutations of the trace have the same final multiset.
  Concretely, fix some $\sigma$ and let $(N_0, (r_{\sigma(i)}; (\theta_{\sigma(i)},\xi_{\sigma(i)}))_{i \in \ordinals{n}})) = \gact{\sigma}{(M_0, (r_i;(\theta_i,\xi_i))_{i \in \ordinals{n})}}$.
  We show that $M_n = N_n$.
  The permutation $\sigma$ factors as a product of transpositions of adjacent steps by the proof of \cite[Corollary~I.6.5]{hungerford_1974:_algeb}.
  By the previous paragraph, each of these transpositions preserves the property of being a trace.
  By \cref{lemma:fair-mult-rewr:2}, each transposition preserves the multisets at its endpoints.
  In particular, each transposition preserves $M_n$.
  An induction on the number of transpositions in the factorization of $\sigma$ gives that $M_n = N_n$.
\end{proof}

Given an MRS $\mc{R}$ and a property $P$ of multisets, we say $P$ holds \defin{from} $M_0$ if for all traces $(M_0, (r_i;\delta_i)_{i \in I})$, $P$ holds for $M_0$ and for $M_i$ for all $i \in I$.
Weak, strong, and über fairness coincide in the presence of interference-freedom:

\begin{proposition}\varindex{fairness equivalence under interference-freedom}{1!234}
  \label{prop:ssos-fairness/prop-fair-trac:2}
  If $\mc{R}$ is interference-free from $M_0$ and $T$ is a trace from $M_0$, then the following are equivalent: $T$ is weakly instantiation-fair, $T$ is strongly instantiation-fair, $T$ is über fair.
\end{proposition}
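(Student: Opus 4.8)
The plan is to prove the three properties equivalent by establishing the cycle über $\Rightarrow$ strong $\Rightarrow$ weak $\Rightarrow$ über, where only the last implication requires interference-freedom. The first two implications hold for arbitrary traces and were already recorded in \cref{sec:ssos-fairness:weak-strong-fairness}: every über fair trace is strongly (instantiation-)fair, and every strongly fair trace is weakly fair. Indeed, strong instantiation-fairness follows from über fairness by collecting, for infinitely many indices $i$ at which $r(\theta)$ is applicable, the witnessing later indices $j>i$ with $r_j(\theta_j) \equiv r(\theta)$; and weak instantiation-fairness follows from strong because its hypothesis (applicability at all but finitely many $M_i$) is stronger than the strong hypothesis (applicability at infinitely many $M_i$). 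So the entire content of the proposition is the implication from weak to über, and this is where I would spend my effort.

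The crux is a \emph{persistence of applicability} lemma: if $\mc{R}$ is interference-free from $M_0$, $r(\theta)$ is applicable to some reachable $M_j$, and the step $M_j \xrightarrow{(r_{j+1};\delta_{j+1})} M_{j+1}$ applies an instantiation with $r_{j+1}(\theta_{j+1}) \not\equiv r(\theta)$, then $r(\theta)$ is still applicable to $M_{j+1}$. I would derive this directly from \cref{prop:fair-mult-rewr:3}: enumerate the distinct instantiations applicable to $M_j$; since $r(\theta)$ and $r_{j+1}(\theta_{j+1})$ are applicable and inequivalent, both appear (up to $\equiv$) in this enumeration, and \cref{prop:fair-mult-rewr:3} guarantees that applying one leaves all the others applicable to $M_{j+1}$. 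Because equivalent instantiations are applicable to exactly the same multisets (\cref{def:ssos-fairness/mult-rewr-syst:4}), applicability of $r(\theta)$ itself is preserved.

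With this lemma in hand, I would argue the contrapositive. Suppose $T$ is weakly instantiation-fair but not über fair; then $T$ is infinite, and there are $i$, $r$, $\theta$ such that $r(\theta)$ is applicable to $M_i$ yet $r_j(\theta_j) \not\equiv r(\theta)$ for every $j>i$. A straightforward induction on $j \geq i$, using the persistence lemma at each step (legitimate because interference-freedom holds from $M_0$, hence at every reachable $M_j$), shows that $r(\theta)$ is applicable to every $M_j$ with $j \geq i$. In particular $r(\theta)$ is applicable to all but finitely many $M_k$, so weak instantiation-fairness yields infinitely many $k$ with $r_k(\theta_k) \equiv r(\theta)$; taking any such $k>i$ contradicts the choice of $i,r,\theta$. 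Hence weak instantiation-fairness implies über fairness, closing the cycle.

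I expect the persistence lemma to be the main obstacle, and the only place interference-freedom is used: once applicability is known to survive the application of any inequivalent instantiation, the rest is bookkeeping. The two subtleties to handle carefully are (i) working consistently up to instantiation equivalence $\equiv$, so that ``never applied after $i$'' really means no later step is $\equiv$-equal to $r(\theta)$, matching the conclusion of weak fairness; and (ii) justifying that \cref{prop:fair-mult-rewr:3} applies at every $M_j$ along the tail, which is exactly the content of interference-freedom holding \emph{from} $M_0$ rather than merely at $M_0$.
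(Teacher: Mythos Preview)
Your proposal is correct and follows essentially the same approach as the paper: reduce to the weak $\Rightarrow$ \"uber implication, use the persistence of applicability under inequivalent steps (via \cref{prop:fair-mult-rewr:3}, which is the right citation for the interference-freedom hypothesis), and then argue that an instantiation never fired after some point must be applicable cofinitely often, contradicting weak instantiation-fairness. Your contrapositive framing and the explicit attention to the $\equiv$ subtlety and to interference-freedom holding \emph{from} $M_0$ are nice touches, but the underlying argument is the same as the paper's.
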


\begin{proof}
  It is clear that über fairness implies both forms of instantiation-fairness, and that strong instantiation-fairness implies weak instantiation-fairness.
  It is sufficient to show that if $T$ is weakly instantiation-fair, then it is über fair.

  Assume that $T$ is weakly instantiation-fair.
  If $T$ is finite, then we are done, so assume that $T = (M_0;(r_i,\delta_i)_i)$ is infinite.
  Consider some arbitrary $M_i$, and assume that $r(\theta)$ is applicable to $M_i$.
  We must show that there exists some $k \geq 1$ such that $r_{i + k}(\theta_{i + k}) \equiv r(\theta)$.
  By induction on $k \geq 1$ with \cref{prop:ssos-fairness/prop-fair-trac:3}, we know that if $r_{i + k}(\theta_{i + k}) \not \equiv r(\theta)$, then $r(\theta)$ is applicable to $M_{i + k + 1}$.
  This implies that $r_{i + k}(\theta_{i + k}) \equiv r(\theta)$ for some $k \geq 1$ or that $r(\theta)$ is applicable to all but finitely many $M_j$ (\ie, at most the first $i$ multisets).
  In the first case, we are done.
  In the latter case, we know that $r(\theta)$ is applied infinitely often by weak instantiation-fairness.
  We conclude that $T$ is über fair.
\end{proof}

\begin{corollary}[Fair Concatenation for Über Fairness]\varindex{fairness {fair concatenation property}}{1!2}
  \label{cor:ssos-fairness/prop-fair-trac:2}
  If $\mc{R}$ is interference-free from $M_0$, $M_0 \mssteps M_n$, and $T$ is an über fair trace from $M_n$, then $M_0 \mssteps M_n$ followed by $T$ is an über fair trace from $M_0$.
\end{corollary}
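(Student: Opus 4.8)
The plan is to avoid a fresh induction and instead chain together the two fairness propositions already proved for this setting, using \cref{prop:ssos-fairness/prop-fair-trac:2} to collapse über fairness to weak instantiation-fairness, applying the Fair Concatenation Property for weak instantiation-fairness, and then converting back. The key enabling observation is that interference-freedom propagates along traces: since $M_0 \mssteps M_n$, the multiset $M_n$ is reachable from $M_0$, and every multiset reachable from $M_n$ is itself reachable from $M_0$. Hence if $\mc{R}$ is interference-free from $M_0$, it is interference-free from $M_n$ as well, which lets me apply \cref{prop:ssos-fairness/prop-fair-trac:2} at both $M_0$ and $M_n$.

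Write $T'$ for the trace $M_0 \mssteps M_n$ followed by $T$. Because $T'$ is a trace from $M_0$ and $\mc{R}$ is interference-free from $M_0$, \cref{prop:ssos-fairness/prop-fair-trac:2} tells us that $T'$ is über fair if and only if it is weakly instantiation-fair. So it suffices to establish that $T'$ is weakly instantiation-fair. To obtain this, I would apply \cref{prop:ssos-fairness/prop-fair-trac:2} at $M_n$: since $T$ is über fair from $M_n$ and $\mc{R}$ is interference-free from $M_n$, the trace $T$ is in particular weakly instantiation-fair. Weak instantiation-fairness is one of the notions covered by the Fair Concatenation Property (\cref{prop:ssos-fairness/prop-fair-trac:6}), so prepending the finite trace $M_0 \mssteps M_n$ to $T$ yields a weakly instantiation-fair trace from $M_0$, namely $T'$. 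A final application of \cref{prop:ssos-fairness/prop-fair-trac:2} then promotes this back to über fairness, which completes the argument.

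The only real care needed is bookkeeping rather than a genuine obstacle: one must check that interference-freedom ``from $M_0$'' really does carry over to ``from $M_n$'' (handled by the reachability remark above), and that the Fair Concatenation Property genuinely applies to weak instantiation-fairness (it does, since that proposition ranges over weak and strong rule, fact, and instantiation fairness). Notably, invoking \cref{prop:ssos-fairness/prop-fair-trac:6} directly for über fairness would not work here, because that case is restricted to \emph{finite} $T$, whereas $T$ may be infinite; routing through weak instantiation-fairness is precisely what circumvents that restriction. No new induction or computation is required, since \cref{prop:ssos-fairness/prop-fair-trac:2}---whose proof in turn relies on \cref{prop:fair-mult-rewr:3} to keep applicable instantiations enabled---already performs the heavy lifting of collapsing the fairness hierarchy under interference-freedom.
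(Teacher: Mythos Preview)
Your proposal is correct and follows essentially the same route as the paper's one-line proof, which also invokes \cref{prop:ssos-fairness/prop-fair-trac:6,prop:ssos-fairness/prop-fair-trac:2}. The only minor redundancy is invoking \cref{prop:ssos-fairness/prop-fair-trac:2} at $M_n$: über fairness implies weak instantiation-fairness directly from the definitions (no interference-freedom needed), so the propagation of interference-freedom to $M_n$ is unnecessary---you only need interference-freedom from $M_0$ for the final upgrade of $T'$.
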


\begin{proof}
  Immediate by \cref{prop:ssos-fairness/prop-fair-trac:6,prop:ssos-fairness/prop-fair-trac:2}.
\end{proof}

In light of \cref{prop:ssos-fairness/prop-fair-trac:2}, we hereinafter use the words ``fair trace'' to equivalently mean ``weakly fair trace'', ``strongly fair trace'', or ``über fair trace'' when assuming interference-freedom.

\begin{assumption}
  For the remainder of this section, we assume that if $(M_0,(r_i;\delta_i)_i)$ is a fair trace, then its MRS is interference-free from $M_0$.
\end{assumption}

In the remainder of this section, we study permutations of fair traces.
We first show that fairness is invariant under permutation.
Then, we show that all fair executions are permutations of each other.

Interference-freedom implies the ability to safely permute finitely many steps that do not depend on each other.
However, it is not obvious that finite permutations, let alone infinite permutations, preserve fairness.
We begin by showing that finite permutations preserve fairness.
Our proof relies on the fact that finite permutations factor as products of cycles, which themselves factor as products of transpositions.
We begin by showing that transpositions of adjacent steps preserve fairness.

\begin{lemma}
  \label{lemma:fair-mult-rewr:1}
  Let $\mc{R}$ be interference-free from $M_0$ and let $T = (M_0,(r_i;(\theta_i,\xi_i))_{i \in I})$ be a trace, an execution, or a fair trace.
  For all transpositions $(m + 1, m) \in S_I$, if $r_{m + 1}(\theta_{m + 1})$ is applicable to $M_{m - 1}$, then $\gact{(m + 1, m)}{T}$ is respectively a trace, an execution, or a fair trace.
\end{lemma}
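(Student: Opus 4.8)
The plan is to reduce the whole statement to a single local two-step commutation at the site of the swap, and then to read off the three cases (trace, execution, fair trace) from the tail and concatenation properties already established.

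First I would observe that $\gact{(m+1,m)}{T}$ agrees with $T$ on the first $m-1$ steps, so both reach the same multiset $M_{m-1}$, and they again agree on every step with index $> m+1$; only the steps at positions $m$ and $m+1$ are interchanged. Hence the problem localizes to the segment $M_{m-1} \xrightarrow{(r_m;(\theta_m,\xi_m))} M_m \xrightarrow{(r_{m+1};(\theta_{m+1},\xi_{m+1}))} M_{m+1}$, and it suffices to prove that the swapped segment $M_{m-1} \xrightarrow{(r_{m+1};(\theta_{m+1},\xi_{m+1}))} M_m' \xrightarrow{(r_m;(\theta_m,\xi_m))} M_{m+1}$ consists of valid steps landing on the original $M_{m+1}$. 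The hypothesis supplies the first swapped step directly, since $r_{m+1}(\theta_{m+1})$ is applicable to $M_{m-1}$; the content is that the second step is then applicable and returns exactly $M_{m+1}$, so that the untouched tail of $T$ remains literally a trace.

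For that local commutation I would use that $\mc{R}$ is interference-free from $M_0$, hence commutes on the reachable multiset $M_{m-1}$, and that both $r_m(\theta_m)$ (applied in $T$) and $r_{m+1}(\theta_{m+1})$ (by hypothesis) are applicable to $M_{m-1}$. I would split on whether these instantiations are equivalent in the sense of \Cref{def:ssos-fairness/mult-rewr-syst:4}. If they are equivalent, then by definition they are applicable to the same multisets and return the same result, so applying $r_{m+1}$ then $r_m$ trivially reaches $M_{m+1}$. If they are distinct, they both occur in the enumeration of distinct applicable instantiations on $M_{m-1}$, and the commuting diagram from the definition of interference-freedom, together with \Cref{prop:fair-mult-rewr:3} (firing one of two applicable instantiations leaves the other applicable), yields the commuting square whose two paths are traces with common endpoint $M_{m+1}$. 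Global freshness and pairwise disjointness of $\xi_m$ and $\xi_{m+1}$ are unaffected by reordering, so the disjointness side-conditions in the definition of a trace still hold. This establishes that $\gact{(m+1,m)}{T}$ is a trace whenever $T$ is.

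With the trace case in hand, the other two follow quickly. Since $\gact{(m+1,m)}{T}$ has the same index set $I$, hence the same length and the same final multiset as $T$ when finite, and agrees with $T$ from index $m+1$ onward, applicability at the final multiset is unchanged and maximality transfers; thus $\gact{(m+1,m)}{T}$ is an execution iff $T$ is. For fairness I would exploit that the two traces share the identical tail $(M_{m+1},(r_i;\delta_i)_{i>m+1,\,i\in I})$. By the Fair Tail Property (\Cref{prop:main:4}) this tail is fair, and $\gact{(m+1,m)}{T}$ is exactly the finite prefix $M_0 \mssteps M_{m+1}$ (the $m+1$ swapped steps) followed by that fair tail; the Fair Concatenation Property (\Cref{prop:ssos-fairness/prop-fair-trac:6}) then delivers each of the six weak and strong rule-, fact-, and instantiation-fairness notions, hence the two conjunctive notions as well, while \Cref{cor:ssos-fairness/prop-fair-trac:2} handles über fairness using interference-freedom (note the tail may be infinite, which is why the corollary rather than the über clause of \Cref{prop:ssos-fairness/prop-fair-trac:6} is needed). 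The main obstacle is precisely the local commutation: one must argue carefully that the swapped order not only keeps both rules applicable but returns the very same multiset $M_{m+1}$, and must separate the equivalent-instantiation case, where the commuting diagram of interference-freedom does not literally apply, from the distinct case, where it does.
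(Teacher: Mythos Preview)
Your proof is correct. For the trace and execution cases you argue essentially as the paper does: localize to the two-step segment at $M_{m-1}$, use interference-freedom there (with your explicit case split on whether $r_m(\theta_m)$ and $r_{m+1}(\theta_{m+1})$ are equivalent instantiations) to obtain applicability of the swapped pair and the common endpoint $M_{m+1}$, and then observe that length and final multiset are unchanged so maximality transfers.

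Where you genuinely diverge from the paper is the fairness case. The paper verifies \"uber fairness directly by a three-way case split on whether the index $j$ of interest lies below, at, or above $m$, invoking \cref{prop:fair-mult-rewr:3} for the middle case. You instead exploit that $\gact{(m+1,m)}{T}$ and $T$ share the identical tail from $M_{m+1}$: you pull fairness of that tail from \cref{prop:main:4} and then push it back through the finite swapped prefix via \cref{prop:ssos-fairness/prop-fair-trac:6} for the six weak/strong notions and via \cref{cor:ssos-fairness/prop-fair-trac:2} for \"uber fairness. Your route is more modular, reusing the tail and concatenation lemmas rather than reopening the fairness definition, and it cleanly handles all nine notions at once; the paper's direct case analysis is more self-contained but is written specifically for \"uber fairness (relying on the ambient equivalence of notions under interference-freedom). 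Your observation that \cref{cor:ssos-fairness/prop-fair-trac:2} is needed rather than the \"uber clause of \cref{prop:ssos-fairness/prop-fair-trac:6} when the tail is infinite is exactly right, and there is no circularity since that corollary precedes the lemma.
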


\begin{proof}
  Consider some transpositions $(m + 1, m) \in S_I$ such that $r_{m + 1}(\theta_{m + 1})$ is applicable to $M_{m - 1}$.
  By non-interference, it follows that $\gact{(m + 1, m)}{T} = (M_0, (r_i';(\theta_i',\xi_i'))_i)$ is also a trace.
  Observe that for all $j$, if $j \neq m$, then $M_j' = M_j$, and if $j < m$ or $j > m + 1$, then $(r_j';(\theta_j',\xi_j')) = (r_j;(\theta_j,\xi_j))$.

  Assume that $T$ is an execution.
  We must show that $\gact{(m + 1, m)}{T}$ is also maximal.
  If it is infinite, then we are done.
  If it ends at some $M_n'$, then $M_n' = M_n$ by the above observation because $m < m + 1 \leq n$.
  Because $T$ is maximal, no rules are applicable to $M_n$, so no rules are applicable to $M_n'$.
  We conclude that $\gact{(m + 1, m)}{T}$ is maximal.

  Assume now that $T$ is fair.
  We show that $\gact{(m + 1, m)}{T}$ is also fair.
  Consider some $r(\theta)$ applicable to $M_j'$.
  We proceed by case analysis on $j < m$, $j = m$, and $j > m$ to show that $r(\theta)$ appears as some $r_k'(\theta_k')$ with $k > j$.
  \begin{proofcases}
  \item[$j < m$]
    By the above observations, $M_j' = M_j$.
    Because $T$ is fair, there exists a $k' > j$ such that $r_{k'}(\theta_{k'}) \equiv r(\theta)$.
    Because $j < m$ and $j < k$, it follows that $k = (m + 1, m)(k') > j$.
    So $r(\theta)$ appears as $r_k'(\theta_k')$ after $M_j'$ as desired.
  \item[$j = m$]
    By \cref{prop:fair-mult-rewr:3}, $r(\theta)$ is applicable to $M_{m + 1}$.
    Because $T$ is fair, there exists a $k > m + 1$ such that $r_k(\theta_k) \equiv r(\theta)$.
    So $r_k'(\theta_k') = r_k(\theta_k) \equiv (r, \theta)$.
  \item[$j > m$]
    Because $T$ is fair, there exists a $k > j$ such that $r_k(\theta_k) \equiv r(\theta)$.
    Because $k > j > m$, $k > m + 1$, so $r_k'(\theta_k') = r_k(\theta_k) \equiv (r, \theta)$.
  \end{proofcases}
  We conclude that $\gact{(m + 1, m)}{T}$ is fair whenever $T$ is fair.
\end{proof}

\begin{proposition}
  \label{prop:fair-mult-rewr:7}
  Let $\mc{R}$ be interference-free from $M_0$ and let $T = (M_0,(r_i;(\theta_i,\xi_i))_{i \in I})$ be a trace, an execution, or a fair trace.
  For all cycles $(m + k, \dotsc, m + 1, m) \in S_I$ with $k \geq 1$, if $r_{(m + k)}(\theta_{m + k})$ is applicable to $M_{m - 1}$, then $\gact{(m + k, \dotsc, m + 1, m)}{T}$ is respectively a trace, an execution, or a fair trace.
  It is equal to $T$ after the $(m + k)$-th step.
\end{proposition}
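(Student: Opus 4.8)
The plan is to induct on $k \ge 1$, reducing the cycle to the adjacent-transposition case already settled by \cref{lemma:fair-mult-rewr:1}. For the base case $k = 1$, the cycle $(m+1, m)$ is exactly an adjacent transposition, and the hypothesis that $r_{m+1}(\theta_{m+1})$ is applicable to $M_{m-1}$ is precisely the hypothesis of \cref{lemma:fair-mult-rewr:1}. Applying that lemma shows $\gact{(m+1,m)}{T}$ is respectively a trace, an execution, or a fair trace; since transposing positions $m$ and $m+1$ leaves every $M_j$ with $j \neq m$ unchanged, it agrees with $T$ from $M_{m+1}$ onward, giving the tail claim for $k = 1$.

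For the inductive step I would factor the cycle as $\sigma = \rho \circ \tau$, where $\tau = (m+1, m)$ is the leftmost adjacent transposition and $\rho = (m+k, \dotsc, m+2, m+1)$ is a cycle of the same shape as in the statement but with parameters $m' = m+1$ and $k' = k - 1$. Checking on generators, $\rho \circ \tau$ sends $m \mapsto m+k$ and $m+j \mapsto m+j-1$ for $1 \le j \le k$, so it equals $\sigma$. Because the group action satisfies $\gact{\beta}{(\gact{\alpha}{T})} = \gact{\alpha \circ \beta}{T}$, this factorization yields $\gact{\sigma}{T} = \gact{\tau}{(\gact{\rho}{T})}$: first slide the $(m+k)$-th step forward to position $m+1$ by $\rho$, then swap it past the $m$-th step by $\tau$.

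To apply the induction hypothesis to $\rho$, I need $r_{m+k}(\theta_{m+k})$ applicable to $M_{(m+1)-1} = M_m$. This follows from \cref{prop:fair-mult-rewr:3}: by interference-freedom $\mc{R}$ commutes on $M_{m-1}$, and both $r_m(\theta_m)$ (the step actually taken) and $r_{m+k}(\theta_{m+k})$ (by the standing hypothesis) are among its distinct applicable instantiations, so after applying $r_m(\theta_m)$ the instantiation $r_{m+k}(\theta_{m+k})$ remains applicable to $M_m$. The induction hypothesis then makes $T_1 := \gact{\rho}{T}$ a trace, execution, or fair trace of the same kind as $T$; moreover $\rho$ fixes positions $\le m$, so $T_1$ agrees with $T$ through $M_m$ (in particular $M^{T_1}_{m-1} = M_{m-1}$), its $(m+1)$-th step is $r_{m+k}(\theta_{m+k})$, and by the tail part of the hypothesis $T_1$ agrees with $T$ from $M_{m+k}$ onward. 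Finally I would apply \cref{lemma:fair-mult-rewr:1} to $T_1$ with the transposition $(m+1,m)$: its hypothesis is that the $(m+1)$-th instantiation of $T_1$, namely $r_{m+k}(\theta_{m+k})$, is applicable to $M^{T_1}_{m-1} = M_{m-1}$, which is exactly the standing hypothesis. The lemma gives that $\gact{\tau}{T_1} = \gact{\sigma}{T}$ is a trace, execution, or fair trace of the same kind, completing the induction. For the tail claim, $\gact{\tau}{T_1}$ agrees with $T_1$ from $M_{m+1}$ onward and $T_1$ agrees with $T$ from $M_{m+k}$ onward; since $m+1 \le m+k$, we conclude $\gact{\sigma}{T}$ equals $T$ after the $(m+k)$-th step.

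The main obstacle is bookkeeping rather than any deep idea: getting the composition order of the group action correct in the factorization $\sigma = \rho \circ \tau$, and verifying that the applicability hypothesis required at each stage is supplied correctly. What makes the argument go through cleanly is that the inner cycle's hypothesis is produced by \cref{prop:fair-mult-rewr:3}, while the outer transposition's hypothesis is, reassuringly, the \emph{unmodified} standing hypothesis, precisely because $T_1$ and $T$ share their first $m$ multisets.
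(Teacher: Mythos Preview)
Your proof is correct and follows essentially the same route as the paper: induction on $k$, with the inductive step first applying the shorter cycle $(m+k,\dotsc,m+1)$ via the induction hypothesis (using interference-freedom to push applicability of $r_{m+k}(\theta_{m+k})$ from $M_{m-1}$ to $M_m$), then applying \cref{lemma:fair-mult-rewr:1} for the final transposition $(m+1,m)$. Your bookkeeping of the group action and the factorization $\sigma = \rho\circ\tau$ is in fact more explicit than the paper's, which handles the same composition somewhat informally.
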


\begin{proof}
  By induction on $k$.
  If $k = 1$, then we are done by \cref{lemma:fair-mult-rewr:1} and non-interference.
  Assume the result for some $k'$, and consider the case $k = k' + 1$.
  Because $\mc{R}$ is interference-free from $M_0$, it follows that if $r_{(m + k)}(\theta_{m + k})$ is applicable to $M_{m - 1}$, then it is also applicable to $M_m$.
  By the induction hypothesis, $T' = \gact{(m + k, \dotsc, m + 1)}{T}$ is respectively a trace, an execution, or a fair trace.
  By \cref{lemma:fair-mult-rewr:1}, so is $T'' = \gact{(m, m + 1)}{T'}$, for the $(m + 1)$-th step in $T'$ is $r_{(m + k)}(\theta_{m + k})$, and it is assumed to be applicable to $M_{m - 1}$.
  The transposition $(m + 1, m)$ does not alter $T'$ beyond the $(m + k)$-th step, so $T''$ still agrees with $T$ after the $(m + k)$-th step.
  Observe that
  \[
    T'' = \gact{(m + 1, m)}{T'} = \gact{\left((m + k, m) \circ (m + k, \dotsc, m + 1)\right)}{T} = \gact{(m + k, \dotsc, m + 1, m)}{T},
  \]
  so we are done.
  We note that the second equality in the above sequence is subtle: the transposition $(m + 1, m)$ is relative to the ordering of rules in $T'$.
  It becomes $(m + k, m)$ on the right of the equality because the $(m + 1)$-th step in $T'$ is the $(m + k)$-th step of $T$.
\end{proof}

We conclude that finite permutations preserve fairness:

\begin{proposition}
  \label{prop:fair-mult-rewr:6}
  Let $\mc{R}$ be interference-free from $M_0$ and let $T = (M_0,(r_i;(\theta_i,\xi_i))_{i \in I})$ be a trace.
  Let $\sigma \in S_I$ be a finite permutation, \ie, assume that there exists an $n \in I$ such that $\sigma(i) = i$ for all $i > n$.
  Further assume that $\gact{\sigma}{T}$ is a trace.
  If $T$ is an execution or a fair trace, then $\gact{\sigma}{T}$ is respectively an execution or a fair trace.
  The traces $\gact{\sigma}{T}$ and $T$ are equal after $n$-th step.
\end{proposition}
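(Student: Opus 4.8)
The plan is to prove the statement by reducing an arbitrary finite permutation to the contiguous descending cycles already handled by \cref{prop:fair-mult-rewr:7}, applied one after another in the manner of an insertion sort. Since $\sigma$ fixes every $i > n$, it restricts to a permutation of $\{1, \dotsc, n\}$, so it suffices to reorder the first $n$ steps of $T$ while the tail is untouched. I would construct a finite chain of traces $T = T^{(0)}, T^{(1)}, \dotsc, T^{(n)}$ in which $T^{(j)}$ already agrees with $\gact{\sigma}{T}$ on its first $j$ steps, each $T^{(j)}$ obtained from $T^{(j-1)}$ by a single application of \cref{prop:fair-mult-rewr:7}. Because that proposition preserves the property of being a trace, an execution, or a fair trace, so will the final $T^{(n)}$, which I will argue equals $\gact{\sigma}{T}$.

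For the step from $T^{(j-1)}$ to $T^{(j)}$, the rule that $\gact{\sigma}{T}$ applies at step $j$ is $r_{\sigma(j)}(\theta_{\sigma(j)})$. Since the first $j-1$ positions of $T^{(j-1)}$ are already occupied by $r_{\sigma(1)}, \dotsc, r_{\sigma(j-1)}$ and $\sigma$ is a bijection, this instantiation sits at some position $q \ge j$ in $T^{(j-1)}$. I would apply the contiguous cycle $(q, q-1, \dotsc, j)$, which moves it down to position $j$ and leaves positions outside $[j,q]$ fixed. The hypothesis of \cref{prop:fair-mult-rewr:7} requires that this instantiation be applicable to the multiset preceding position $j$. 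This is exactly where the assumption that $\gact{\sigma}{T}$ is a trace is used: because the first $j-1$ steps of $T^{(j-1)}$ coincide with those of $\gact{\sigma}{T}$, the two traces have the same $(j-1)$-st multiset, and $\gact{\sigma}{T}$ being a trace means it applies $r_{\sigma(j)}(\theta_{\sigma(j)})$ to precisely that multiset; hence the instantiation is applicable there. \cref{prop:fair-mult-rewr:7} then yields that $T^{(j)}$ is respectively a trace, an execution, or a fair trace, and that it agrees with $T^{(j-1)}$ beyond position $q \le n$.

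After $n$ stages, $T^{(n)}$ agrees with $\gact{\sigma}{T}$ on every position $\le n$, while both use $r_i$ at each position $i > n$, so $T^{(n)} = \gact{\sigma}{T}$, which is therefore an execution or a fair trace as required; since every cycle applied affects only positions $\le n$, the traces $\gact{\sigma}{T}$ and $T$ coincide after the $n$-th step. The main obstacle is the bookkeeping in the inductive step — tracking where $r_{\sigma(j)}$ currently lies and verifying the applicability hypothesis of \cref{prop:fair-mult-rewr:7} at each stage. This verification relies squarely on the standing hypothesis that $\gact{\sigma}{T}$ is a trace: without it one cannot guarantee that a step moved into its target position is applicable there, so the hypothesis genuinely cannot be dropped.
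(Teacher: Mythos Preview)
Your proposal is correct and takes essentially the same approach as the paper: both decompose $\sigma$ insertion-sort style into contiguous descending cycles and apply \cref{prop:fair-mult-rewr:7} iteratively, using the hypothesis that $\gact{\sigma}{T}$ is a trace to verify the applicability condition at each stage. The paper phrases the iteration as strong induction on $n - m$ where $m$ is the least index not yet fixed, while you build the chain $T^{(0)}, \dotsc, T^{(n)}$ explicitly, but the underlying argument is the same; one minor technicality is that when $q = j$ (the step is already in place) \cref{prop:fair-mult-rewr:7} does not literally apply since it requires $k \ge 1$, but then the identity suffices and nothing needs to be done.
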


\begin{proof}
  Informally, the approach is to decompose $\sigma$ into a finite composition of cyclic permutations of the form $(\sigma(m), \dotsc, m + 1, m)$.
  \Cref{prop:fair-mult-rewr:7} ensures that each of these cycles preserves the desired properties.

  Let $m$ be minimal in $I$ such that $\sigma(m) \neq m$; if no such $m$ exists, \ie, if $\sigma$ is the identity, then set $m = n$.
  We proceed by strong induction on $d = n - m$.
  If $d = 0$, then $\gact{\sigma}{T} = T$ and we are done.
  Assume the result for some $d'$, and consider the case $d = d' + 1$.
  By minimality of $m$, the traces $T$ and $\gact{\sigma}{T}$ are equal up until the multiset $M_{m - 1}$.
  By the assumption that $T$ and $\gact{\sigma}{T}$ are both traces, it follows that $r_m(\theta_m)$ and $r_{\sigma(m)}(\theta_{\sigma(m)})$ are both applicable to $M_{m - 1}$.
  By minimality, it also follows that $m < \sigma(m)$.
  By \cref{prop:fair-mult-rewr:7}, $T' = \gact{(\sigma(m), \dotsc, m + 1, m)}{T}$ is also respectively a trace, an execution, or a fair trace.
  By the same \namecref{prop:fair-mult-rewr:7}, it also agrees with $T$ on all multisets after the $\sigma(m)$-th.
  Since $\sigma(m) < n$, it follows that $T'$ and $T$ are equal after the $n$-th step.
  The trace $T'$ agrees with $\gact{\sigma}{T}$ on at least the first $m$ steps.
  This decreases $d$ by at least one.
  We conclude the result by the strong induction hypothesis on $\gact{(\sigma(m), \dotsc, m + 1, m)}{T}$ and $\gact{\sigma}{T}$.
\end{proof}

Next, we show that infinite permutations preserve fairness.
To do so, we use the following \namecref{lemma:main:9} to reduce arguments about infinite permutations to arguments about finite permutations.
Intuitively, it decomposes any infinite permutation $\sigma$ on $\N$ into the composition of a permutation $\tau$ that only permutes natural numbers less than some $\chi_\sigma(n)$, and of an infinite permutation $\rho$ that only permutes natural numbers greater than $\chi_\sigma(n)$.
The mechanics of the decomposition are best understood by means of a picture.
We refer the reader to \cref{fig:ssos-fairness/prop-fair-trac:1} for an illustration, where we note that $\chi_\sigma(0) = 1$ and $\chi_\sigma(1) = 2$.

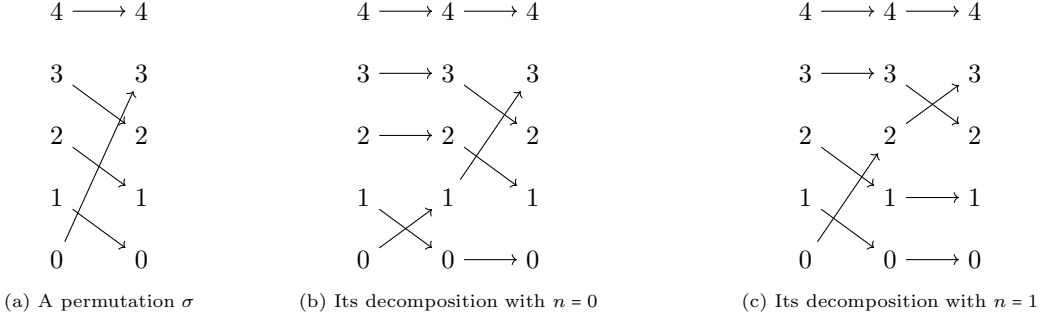
\begin{figure}
  \centering
  \begin{subfigure}{0.2\textwidth}
    \centering
    \begin{tikzpicture}[node distance={1em and 2em},auto]
      \node at (0,0) (l0) {$0$};
      \node (l1) [above=of l0] {$1$};
      \node (l2) [above=of l1] {$2$};
      \node (l3) [above=of l2] {$3$};
      \node (l4) [above=of l3] {$4$};

      \node (r0) [right=of l0] {$0$};
      \node (r1) [above=of r0] {$1$};
      \node (r2) [above=of r1] {$2$};
      \node (r3) [above=of r2] {$3$};
      \node (r4) [above=of r3] {$4$};

      \draw[->] (l0) to (r3);
      \draw[->] (l1) to (r0);
      \draw[->] (l2) to (r1);
      \draw[->] (l3) to (r2);
      \draw[->] (l4) to (r4);
    \end{tikzpicture}
    \caption{A permutation $\sigma$}
  \end{subfigure}
  \begin{subfigure}{0.35\textwidth}
    \centering
    \begin{tikzpicture}[node distance={1em and 2em},auto]
      \node at (0,0) (l0) {$0$};
      \node (l1) [above=of l0] {$1$};
      \node (l2) [above=of l1] {$2$};
      \node (l3) [above=of l2] {$3$};
      \node (l4) [above=of l3] {$4$};

      \node (m0) [right=of l0] {$0$};
      \node (m1) [above=of m0] {$1$};
      \node (m2) [above=of m1] {$2$};
      \node (m3) [above=of m2] {$3$};
      \node (m4) [above=of m3] {$4$};

      \node (r0) [right=of m0] {$0$};
      \node (r1) [above=of r0] {$1$};
      \node (r2) [above=of r1] {$2$};
      \node (r3) [above=of r2] {$3$};
      \node (r4) [above=of r3] {$4$};

      \draw[->] (l0) to (m1);
      \draw[->] (m1) to (r3);
      \draw[->] (l1) to (m0);
      \draw[->] (m0) to (r0);
      \draw[->] (l2) to (m2);
      \draw[->] (m2) to (r1);
      \draw[->] (l3) to (m3);
      \draw[->] (m3) to (r2);
      \draw[->] (l4) to (m4);
      \draw[->] (m4) to (r4);
    \end{tikzpicture}
    \caption{Its decomposition with $n = 0$}
  \end{subfigure}
  \begin{subfigure}{0.35\textwidth}
    \centering
    \begin{tikzpicture}[node distance={1em and 2em},auto]
      \node at (0,0) (l0) {$0$};
      \node (l1) [above=of l0] {$1$};
      \node (l2) [above=of l1] {$2$};
      \node (l3) [above=of l2] {$3$};
      \node (l4) [above=of l3] {$4$};

      \node (m0) [right=of l0] {$0$};
      \node (m1) [above=of m0] {$1$};
      \node (m2) [above=of m1] {$2$};
      \node (m3) [above=of m2] {$3$};
      \node (m4) [above=of m3] {$4$};

      \node (r0) [right=of m0] {$0$};
      \node (r1) [above=of r0] {$1$};
      \node (r2) [above=of r1] {$2$};
      \node (r3) [above=of r2] {$3$};
      \node (r4) [above=of r3] {$4$};

      \draw[->] (l0) to (m2);
      \draw[->] (m2) to (r3);
      \draw[->] (l1) to (m0);
      \draw[->] (m0) to (r0);
      \draw[->] (l2) to (m1);
      \draw[->] (m1) to (r1);
      \draw[->] (l3) to (m3);
      \draw[->] (m3) to (r2);
      \draw[->] (l4) to (m4);
      \draw[->] (m4) to (r4);
    \end{tikzpicture}
    \caption{Its decomposition with $n = 1$}
  \end{subfigure}
  \caption{An illustration of decompositions of $\sigma$ given by \cref{lemma:main:9}}
  \label{fig:ssos-fairness/prop-fair-trac:1}
\end{figure}

\begin{lemma}
  \label{lemma:main:9}
  For all $n \in \N$ and permutations $\sigma : \N \to \N$, set $\chi_\sigma(n) = \max_{k \leq n} \sigma^{-1}(k)$.
  Then there exist permutations $\tau_n, \rho_n : \N \to \N$ such that $\sigma = \rho_n \circ \tau_n$, $\tau_n(k) = k$ for all $k > \chi_\sigma(n)$, and $\rho_n(k) = k$ for all $k \leq n$.
\end{lemma}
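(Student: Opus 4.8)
The plan is to construct $\tau_n$ explicitly, and then \emph{define} $\rho_n := \sigma \circ \tau_n^{-1}$ so that the factorization $\sigma = \rho_n \circ \tau_n$ and the fact that $\rho_n$ is a permutation both hold automatically; all the work is then to choose $\tau_n$ so that property~(2) holds while forcing property~(3) on $\rho_n$. Write $C = \chi_\sigma(n) = \max_{k \leq n}\sigma^{-1}(k)$. I would first record two elementary observations. Since $\sigma^{-1}(0), \dotsc, \sigma^{-1}(n)$ are $n + 1$ distinct naturals, their maximum satisfies $C \geq n$; and by definition of the maximum, $\sigma^{-1}(k) \leq C$ for every $k \leq n$. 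Hence the $n + 1$ distinct source indices $\sigma^{-1}(0), \dotsc, \sigma^{-1}(n)$ and the $n + 1$ target indices $0, \dotsc, n$ all lie in $\{0, \dotsc, C\}$.

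Next I would define $\tau_n$ to be the permutation of $\N$ that fixes every $k > C$ and, on $\{0, \dotsc, C\}$, sends $\sigma^{-1}(k) \mapsto k$ for each $k \leq n$ while mapping the remaining $C - n$ indices of $\{0,\dotsc,C\} \setminus \{\sigma^{-1}(0), \dotsc, \sigma^{-1}(n)\}$ bijectively (say, monotonically) onto the remaining targets $\{n + 1, \dotsc, C\}$. This is well-defined: the specified source indices are distinct, their specified images $\{0, \dotsc, n\}$ are distinct and disjoint from the free targets $\{n + 1, \dotsc, C\}$, and the source and target blocks have matching cardinalities ($n+1$ and $C-n$ respectively), so the assignment extends to a bijection of $\{0, \dotsc, C\}$ and thus to a permutation of $\N$. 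Property~(2) holds by construction.

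Finally I would set $\rho_n = \sigma \circ \tau_n^{-1}$, so that $\sigma = \rho_n \circ \tau_n$ and $\rho_n$ is a permutation as a composite of permutations, giving~(1). To verify~(3), fix $k \leq n$: by construction $\tau_n(\sigma^{-1}(k)) = k$, so $\tau_n^{-1}(k) = \sigma^{-1}(k)$, whence $\rho_n(k) = \sigma(\sigma^{-1}(k)) = k$. This establishes all three properties, and the construction can be cross-checked against the two decompositions in \cref{fig:ssos-fairness/prop-fair-trac:1}.

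The step I expect to require the most care is pinning down the freedom in defining $\tau_n$ on the complement $\{0,\dotsc,C\} \setminus \{\sigma^{-1}(0),\dotsc,\sigma^{-1}(n)\}$ and checking that it does not clash with the forced assignments. The cardinality count $C \geq n$ together with the disjointness of the source and target blocks is exactly what guarantees the partial assignment extends to a bijection; once that is in hand, the remaining verifications are direct computations.
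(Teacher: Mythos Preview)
Your proposal is correct and takes essentially the same approach as the paper: your $\tau_n$ is identical to the paper's $\tau$ (including the monotone assignment $s_j \mapsto n+j$ on the leftover indices), and the paper's explicit $\rho$ is exactly your $\sigma\circ\tau_n^{-1}$ written out casewise. The only presentational difference is that you define $\rho_n$ as the composite, which makes the factorization and bijectivity automatic, whereas the paper defines $\rho$ explicitly and then verifies $\sigma = \rho\circ\tau$ by case analysis before concluding that $\rho$ is a bijection from $\sigma = \rho\circ\tau$ and the invertibility of $\tau$.
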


\begin{proof}
  Let $s_1 < \dotsb < s_m$ be the elements of $\{ 0, 1, \dotsc, \chi_\sigma(n) \} \setminus \sigma^{-1}(\{ 0, \dotsc, n \})$.
  Explicitly, these are the natural numbers less than $\chi_\sigma(n)$ whose image under $\sigma$ is greater than $n$.

  Let $\tau$ be given by
  \[
    \tau(k) =
    \begin{cases}
      \sigma(k) & \sigma(k) \leq n\\
      n + j & k = s_j\\
      k & k > \chi_\sigma(n).
    \end{cases}
  \]
  Intuitively, $\tau$ acts as $\sigma$ on elements whose image is less than $n$; it then ``stacks'' the remaining elements less than $\chi_\sigma(n)$ in order on top of $n$ (we refer the reader to \cref{fig:ssos-fairness/prop-fair-trac:1} for this spatial intuition); and it fixes elements greater than $\chi_\sigma(n)$.

  Let $\rho$ be given by
  \[
    \rho(k) =
    \begin{cases}
      k & k \leq n\\
      \sigma(s_{k - n}) & n < k \leq \chi_\sigma(n)\\
      \sigma(k) & k > \chi_\sigma(n).
    \end{cases}
  \]
  Intuitively, $\rho$ takes the ``stacked'' elements and recovers their original value before applying $\sigma$; it directly applies $\sigma$ to those greater than $\chi_\sigma(n)$.

  We show that $\sigma = \rho \circ \tau$.
  Let $k \in \N$ be arbitrary.
  We proceed by case analysis:
  \begin{proofcases}
  \item[$\sigma(k) \leq n$] Then $\rho(\tau(k)) = \rho(\sigma(k))) = \sigma(k)$.
  \item[$k = s_j$] Then $\rho(\tau(k)) = \rho(n + j) = \sigma(s_{(n + j) - n}) = \sigma(s_j) = \sigma(k)$.
  \item[$k > \chi_\sigma(n)$] Then $\rho(\tau(k)) = \rho(k) = \sigma(k)$.
  \end{proofcases}
  The function $\tau$ is clearly total and an isomorphism.
  Because $\sigma$ is an isomorphism and $\sigma = \rho \circ \tau$, it follows that $\rho$ is also an isomorphism.
  So $\tau$ and $\rho$ are two permutations with the desired property.
\end{proof}

\begin{lemma}
  \label{lemma:ssos-fairness/prop-fair-trac:1}
  Assume that $\mc{R}$ is interference-free from $M_0$, that $T$ is a fair trace from $M_0$, and that $\gact{\sigma}{T}$ is a permutation of $T$.
  For all $n$, $\gact{\tau_n}{T}$ is a fair trace, where $\sigma = \rho_n \circ \tau_n$ is the decomposition given by \cref{lemma:main:9}.
\end{lemma}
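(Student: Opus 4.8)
The plan is to reduce the lemma to \cref{prop:fair-mult-rewr:6}, which already tells us that finite permutations preserve fairness. First I would record that $\tau_n$ is a finite permutation: by \cref{lemma:main:9} it fixes every $k > \chi_\sigma(n)$, so it moves only the finitely many indices in $\{0, \dotsc, \chi_\sigma(n)\}$. Once I know that $\gact{\tau_n}{T}$ is a trace, \cref{prop:fair-mult-rewr:6} applies to the finite permutation $\tau_n$ (with the bound of that proposition taken to be $\chi_\sigma(n)$), and since $T$ is assumed fair it immediately delivers that $\gact{\tau_n}{T}$ is a fair trace. Hence the whole content of the lemma is the claim that \emph{$\gact{\tau_n}{T}$ is a trace at all}; the fairness upgrade is then a one-line appeal to \cref{prop:fair-mult-rewr:6}.

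To establish trace-ness I would combine interference-freedom with the hypothesis that $\gact{\sigma}{T}$ is already a permutation of $T$, hence a trace. The structural fact I would lean on is that $\tau_n$ and $\sigma$ place the first $n+1$ original steps identically: for $j \le n$ the formula for $\tau_n$ gives $\tau_n^{-1}(j) = \sigma^{-1}(j)$, so the steps $t_0, \dotsc, t_n$ occupy the same positions in $\gact{\tau_n}{T}$ as in $\gact{\sigma}{T}$, while the remaining steps indexed by $\{0,\dotsc,\chi_\sigma(n)\} \setminus \sigma^{-1}(\{0,\dotsc,n\})$ are kept in their original relative order. I would then decompose $\tau_n$ into the adjacent cyclic permutations handled by \cref{prop:fair-mult-rewr:7} and \cref{lemma:fair-mult-rewr:1}, and discharge, one cycle at a time, the single applicability hypothesis each of those results requires. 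Interference-freedom is exactly what makes these checks go through: by \cref{prop:fair-mult-rewr:3}, applying one applicable instantiation leaves the others applicable, so an instantiation applicable at one multiset stays applicable after being commuted past independent steps.

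The main obstacle is precisely this trace-ness argument, and I would not expect it to be routine. The difficulty is that one \emph{cannot} simply transport trace-ness from $\gact{\sigma}{T}$ by a finite permutation: the permutation carrying $\gact{\sigma}{T}$ to $\gact{\tau_n}{T}$ is the conjugate $\tau_n^{-1} \circ \rho_n^{-1} \circ \tau_n$ of $\rho_n^{-1}$, which is infinite, so none of the finite-permutation machinery transfers across it. I must instead construct $\gact{\tau_n}{T}$ directly from $T$ through a finite sequence of valid transpositions, and the delicate point is verifying that each transposition only moves an instantiation past steps on which it does not depend — that is, that the applicability premise of \cref{lemma:fair-mult-rewr:1} genuinely holds at each intermediate multiset. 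Reconciling the $\sigma$-order of the ``locked'' steps with the original order of the ``stacked'' steps, and confirming that interference-freedom rules out the bad interleavings, is where I expect the real work to lie; the careful bookkeeping of which cycle acts on which intermediate multiset is the technically demanding part.
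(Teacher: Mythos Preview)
Your proposal is correct and takes essentially the same approach as the paper. The paper also builds $\gact{\tau_n}{T}$ by iterated cycles $(\tau_n(m),\dotsc,m+1,m)$, where $m$ is the running minimal disagreement point, appealing to \cref{prop:fair-mult-rewr:7} at each step; since that proposition already preserves fairness, your final appeal to \cref{prop:fair-mult-rewr:6} is redundant (though not wrong). The applicability check you flag as the hard part is discharged in the paper by the observation that at the minimal disagreement $m$ one has $\tau_n(m) = \sigma(m)$ and the current trace agrees with $T$ on the first $m-1$ steps, so applicability of step $\sigma(m)$ to $M_{m-1}$ follows directly from $\gact{\sigma}{T}$ being a trace---the same idea underlying your remark that $\tau_n^{-1}(j) = \sigma^{-1}(j)$ for $j \le n$.
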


\begin{proof}
  This proof is analogous to the proof of \cref{prop:fair-mult-rewr:6}.
  Informally, we decompose $\tau_n$ as a potentially empty composition of cyclic permutations such that each successive cyclic permutation increases the length of the prefix of the trace that agrees with $\gact{\tau_n}{T}$ and preserves the relative ordering of the $s_i$ described in the proof of \cref{lemma:main:9}.

  Let $m$ be minimal such that $\tau_n(m) \neq m$; if no such $m$ exists, then $\gact{\tau_n}{T} = T$ and we are done.
  Otherwise, the $\sigma(m)$-th step $r_{\sigma(m)}(\theta_{\sigma(m)})$ of $T$ is applicable to $M_{m - 1}$.
  Indeed, $\sigma(m) = \tau_n(m)$, and $\gact{\sigma}{T}$ is a trace by assumption.
  By \cref{prop:fair-mult-rewr:7}, $T' = \gact{(\sigma(m), \dotsc, m + 1, m)}{T}$ is a fair trace.
  The trace $T'$ agrees with $\gact{\tau_n}{T}$ on at least the first $m$ steps.
  Moreover, the relative ordering of the $s_i$ in $T'$ is the same as in $T$.

  Iterating this procedure results in a trace $T''$ that agrees with $\gact{\tau_n}{T}$ on the first $n$ steps.
  Indeed, this procedure terminates because after each iteration, the number of steps in the first $n$ that disagree decreases by one.
  The resulting trace $T''$ also agrees with $\gact{\tau_n}{T}$ on all steps after the $(\chi_\sigma(n))$-th.
  Indeed, for each cycle $(\sigma(m), \dotsc, m + 1, m)$, $\sigma(m) \leq \chi_\sigma(n)$.
  Because both $\tau_n$ and the above procedure preserves the relative ordering of the $s_i$, and both result in permutations, it follows that their images agree on all of the steps between the $n$-th and the $\chi_\sigma(n)$-th.
  So $T''$ and $\gact{\tau_n}{T}$ are equal.
  Because $T''$ is fair, we conclude that $\gact{\tau_n}{T}$ is fair.
\end{proof}

\begin{corollary}
  \label{prop:main:6}\varindex{fairness effects of@ permutation}{1!234}\varindex{permutation effects on@ fairness}{1!234}
  Fairness is invariant under permutation, that is, if $\mc{R}$ is interference-free from $M_0$, $T$ is a fair trace from $M_0$, and $\Sigma = \gact{\sigma}{T}$ is a permutation of $T$, then $\Sigma$ is also fair.
\end{corollary}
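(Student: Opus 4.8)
The plan is to show that $\Sigma = \gact{\sigma}{T}$ is über fair; since every über fair trace is strongly fair and every strongly fair trace is weakly fair, this establishes fairness in all of the senses at once (and is consistent with the collapse recorded in \cref{prop:ssos-fairness/prop-fair-trac:2}). The two ingredients I would lean on are the decomposition $\sigma = \rho_n \circ \tau_n$ of \cref{lemma:main:9}, in which $\tau_n$ is supported on $\{0, \dotsc, \chi_\sigma(n)\}$ while $\rho_n$ fixes every index $\leq n$, together with \cref{lemma:ssos-fairness/prop-fair-trac:1}, which already tells us that each $\gact{\tau_n}{T}$ is a fair trace. I would stress up front that one cannot simply feed $\rho_n$ into \cref{prop:fair-mult-rewr:6}: in general $\rho_n$ is an infinite permutation, so the finite-permutation machinery does not apply to it directly. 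Instead the idea is to verify the über-fairness condition for $\Sigma$ one obligation at a time, approximating $\Sigma$ arbitrarily well by the fair traces $\gact{\tau_n}{T}$.

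Concretely, I would fix an index $i$ and an instantiation $r(\theta)$ applicable to the $i$-th multiset of $\Sigma$, and aim to produce a $j > i$ whose step in $\Sigma$ is equivalent to $r(\theta)$. Choosing $n \geq \max_{l \leq i} \sigma(l)$ forces $\sigma(l) \leq n$ for all $l \leq i$, and the first clause in the definition of $\tau_n$ then gives $\tau_n(l) = \sigma(l)$ for every such $l$; hence $\Sigma$ and $\gact{\tau_n}{T}$ perform the same first $i$ steps and share the $i$-th multiset. Applying über fairness of $\gact{\tau_n}{T}$ at that multiset yields a position $j > i$ at which $\gact{\tau_n}{T}$ performs the original step $r_{\tau_n(j)} \equiv r(\theta)$.

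The remaining task is to transport this witness back into $\Sigma$, and this is the step I expect to be the main obstacle, precisely because $\Sigma$ and $\gact{\tau_n}{T}$ agree only at the scattered positions $\sigma^{-1}(\{0,\dotsc,n\})$ carrying the early steps, and not on any initial segment (the anti-homomorphism $\gact{\rho_n \circ \tau_n}{T} = \gact{\tau_n}{\gact{\rho_n}{T}}$ obstructs the naive reduction). The two facts that make the transport go through are: (i) every step $r_k$ with $k \leq n$ occupies the identical position $\sigma^{-1}(k) = \tau_n^{-1}(k)$ in both traces; and (ii) every step whose original index exceeds $n$ must occupy a position $> i$ in $\Sigma$, since the positions up to $i$ carry steps $r_{\sigma(l)}$ with $\sigma(l) \leq n$. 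Splitting on whether $\tau_n(j) \leq n$ or $\tau_n(j) > n$, fact (i) places the witness at the very position $j > i$ in $\Sigma$, while fact (ii) places it at $\sigma^{-1}(\tau_n(j)) > i$; in either case $\Sigma$ meets its obligation strictly after $i$, so $\Sigma$ is über fair and hence fair. The genuinely delicate bookkeeping is confined to verifying that the transported witness lands strictly after $i$ and to pinning down the index arithmetic behind (i) and (ii).
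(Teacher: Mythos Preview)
Your proposal is correct and follows essentially the same strategy as the paper: decompose $\sigma = \rho_n \circ \tau_n$ via \cref{lemma:main:9}, invoke \cref{lemma:ssos-fairness/prop-fair-trac:1} to obtain fairness of $\gact{\tau_n}{T}$, use agreement on an initial segment to see that $r(\theta)$ is applicable there, and transport the resulting \"uber-fairness witness back to $\Sigma$. Your choice $n \geq \max_{l \leq i}\sigma(l)$ (the paper takes $n=\sigma(i)$) and your explicit case split on whether $\tau_n(j)\le n$ make the prefix-agreement and transport steps a bit more transparent than the paper's presentation, but the route is the same.
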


\begin{proof}
  Let $T = (M_0, (t_i;\delta_i)_i)$, and let $\Sigma$ be the trace $M_0 = \Sigma_0 \xrightarrow{(t_{\sigma(1)};\delta_{\sigma(1)})} \Sigma_1 \xrightarrow{(t_{\sigma(2)};\delta_{\sigma(2)})} \cdots$.
  Consider some rule $r \in \mc{R}$ and $\theta$ such that $r(\theta)$ is applicable to $\Sigma_i$.
  We must show that there exists a $j$ such that $\sigma(j) > \sigma(i)$, $t_{\sigma(j)}(\theta_{\sigma(j)}) \equiv r(\theta)$.

  Let the factorization $\sigma = \rho \circ \tau$ be given by \cref{lemma:main:9} for $n = \sigma(i)$.
  The trace $\gact{\tau}{T}$ is fair by \cref{lemma:ssos-fairness/prop-fair-trac:1}.
  By construction of $\tau$, $\gact{\tau}{T}$ and $\Sigma$ agree on the first $n$ steps and $n + 1$ multisets.
  By fairness, there exists a $k > \sigma(i)$ such that the $k$-th step in $\gact{\tau}{T}$ is $r(\theta)$.
  By construction of $\rho$, $\rho(k) > \sigma(i)$, so this step appears after $\Sigma_i$ in $\Sigma$ as desired.
  We conclude that $\Sigma$ is fair.
\end{proof}

It is not the case that every permutation of the steps of a fair trace is a fair trace: it could fail to be a trace.
\Cref{prop:main:6} simply states that if the result of permuting the steps of a fair trace is a trace, then that trace is fair.

\Cref{prop:main:6} established that permutations preserve fairness.
Relatedly, all fair traces from a given multiset are permutations of each other.
To show this, we construct a potentially infinite sequence of permutations.
We use the following lemma to compose them:

\begin{lemma}
  \label{lemma:main:2}
  Let $(\sigma_n)_{n \in I}$ be a family of bijections on $I$ such that for all $m < n$,
  \[
    (\sigma_n \circ \dotsb \circ \sigma_1)(m) = (\sigma_m \circ \dotsb \circ \sigma_1)(m).
  \]
  Let $\sigma : I \to I$ be given by $\sigma(m) = (\sigma_m \circ \dotsb \circ \sigma_1)(m)$.
  Then $\sigma$ is injective, but need not be surjective.
\end{lemma}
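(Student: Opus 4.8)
The plan is to exploit the stabilization that the hypothesis builds into the partial composites. Abbreviating $T_n = \sigma_n \circ \dotsb \circ \sigma_1$, each $T_n$ is a bijection on $I$ (a finite composite of bijections), and $\sigma(m) = T_m(m)$ by definition. The hypothesis reads $T_n(m) = T_m(m)$ for $m < n$; combined with the trivial case $m = n$, this yields the single fact I will use throughout: $T_n(m) = \sigma(m)$ for every $n \ge m$. Intuitively, once the factor $\sigma_m$ has been applied, none of the later factors $\sigma_{m+1}, \sigma_{m+2}, \dotsc$ disturbs the value sitting at position $m$.

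With this stabilization identity, injectivity is one line. Suppose $\sigma(m_1) = \sigma(m_2)$ with $m_1 \ne m_2$, and assume without loss of generality $m_1 < m_2$. Evaluating the identity at $n = m_2$ gives both $T_{m_2}(m_1) = \sigma(m_1)$ and $T_{m_2}(m_2) = \sigma(m_2)$; since the right-hand sides agree by assumption, $T_{m_2}(m_1) = T_{m_2}(m_2)$. As $T_{m_2}$ is a bijection, hence injective, we get $m_1 = m_2$, a contradiction. Thus $\sigma$ is injective.

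For the last clause I would record a single witness showing surjectivity can fail. On $I = \N$, let $\sigma_n$ be the transposition swapping $1$ and $n+1$ (fixing all other indices). A direct check shows that the value at position $m$ is carried to $1$ by $\sigma_{m-1}$ and then to $m+1$ by $\sigma_m$, after which every later factor fixes it, so $T_n(m) = m+1$ for all $n \ge m$. Hence the hypothesis holds and $\sigma(m) = m+1$, which is injective but omits $1$ from its image and so is not surjective.

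The only delicate point, and the one I would state precisely, is the extraction and use of the stabilization identity $T_n(m) = \sigma(m)$ for $n \ge m$: the hypothesis is exactly strong enough to make the single bijection $T_{m_2}$ agree with $\sigma$ at both $m_1$ and $m_2$, which is the entire content of the injectivity argument. Beyond this there is no computation of note, so I expect no real obstacle.
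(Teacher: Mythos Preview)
Your proposal is correct and follows essentially the same approach as the paper: both arguments observe that for $m \le n$ one has $\sigma(m) = (\sigma_n \circ \dotsb \circ \sigma_1)(m)$, then use injectivity of the single bijection $\sigma_n \circ \dotsb \circ \sigma_1$ (your $T_n$) to conclude $m = n$ whenever $\sigma(m) = \sigma(n)$. Your counterexample (transpositions $(1,\,n{+}1)$ yielding the shift $m \mapsto m+1$) is the same construction as the paper's $\sigma_n = (0,n)$, just reindexed.
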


\begin{proof}
  Let $m, n \in I$ be arbitrary such that $\sigma(m) = \sigma(n)$.
  Assume without loss of generality that $m \leq n$.
  Observe that
  \[
    \sigma(m) = (\sigma_m \circ \dotsb \circ \sigma_0)(m) = (\sigma_n \circ \dotsb \circ \sigma_m \circ \dotsb \circ \sigma_0)(m)
  \]
  and
  \[
    \sigma(n) = (\sigma_n \circ \dotsb \circ \sigma_m \circ \dotsb \circ \sigma_0)(n).
  \]
  Because $\sigma_0,\dotsc,\sigma_n$ are all bijections, so is their composition.
  It follows that $m = n$, so $\sigma$ is injective.

  To see that $\sigma$ need not be surjective, consider the family $\sigma_n = (0, n)$ for $n \geq 1$.
  Then $\sigma(n) = n + 1$ for all $n$.
  It follows that $0$ is not in the image of $\sigma$.
\end{proof}

Recall from \cref{sec:ssos-fairness:weak-strong-fairness} that, given an über fair trace $T$ and an instantiation $t(\tau)$ applicable to its $i$-th multiset, $\upsilon_T(i,t,\tau)$ is the least $j > i$ such that the $j$-th step of $T$ is equivalent to $t(\tau)$.
The following \namecref{lemma:main:1} is a special case of \cref{prop:fair-mult-rewr:7}:

\begin{lemma}
  \label{lemma:main:1}
  Let $\mc{R}$ be interference-free from $M_0$ and $T$ a fair execution from $M_0$.
  If $t(\tau)$ is applicable to $M_0$, then $\gact{(\upsilon_T(0, t, \tau), \dotsc, 0)}{T}$ is a permutation of $T$ with $t(\tau)$ as its first step, and it is a fair execution.
\end{lemma}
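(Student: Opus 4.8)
The plan is to obtain the result as the front-of-trace instance of \cref{prop:fair-mult-rewr:7}, after dispatching two small preliminaries: that the index $\upsilon_T(0,t,\tau)$ is defined, and that the step it names is applicable at the very start of the trace. First I would record that $\upsilon_T(0,t,\tau)$ makes sense. Since $\mc{R}$ is interference-free from $M_0$ and $T$ is a fair execution, $T$ is über fair by \cref{prop:ssos-fairness/prop-fair-trac:2}; and since $t(\tau)$ is applicable to $M_0$, über fairness guarantees a step strictly after the start that is equivalent to $t(\tau)$. Hence $j := \upsilon_T(0,t,\tau)$ is a well-defined finite index, and the $j$-th step $r_j(\theta_j)$ of $T$ is the first step with $r_j(\theta_j) \equiv t(\tau)$.

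Next I would observe that the cyclic permutation $(\upsilon_T(0,t,\tau),\dotsc,0)$ is precisely the cycle of \cref{prop:fair-mult-rewr:7} that pulls the $j$-th step to the front of the execution. To apply that proposition it suffices to verify its single hypothesis, namely that $r_j(\theta_j)$ is applicable to the multiset from which the cycle departs, here the initial multiset $M_0$. This is immediate: $r_j(\theta_j) \equiv t(\tau)$ by the choice of $j$, equivalent instantiations are applicable to exactly the same multisets by \cref{def:ssos-fairness/mult-rewr-syst:4}, and $t(\tau)$ is applicable to $M_0$ by hypothesis.

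With the hypothesis in hand, \cref{prop:fair-mult-rewr:7} yields that $\gact{(\upsilon_T(0,t,\tau),\dotsc,0)}{T}$ is again a trace---hence a permutation of $T$---that it inherits being a fair execution from $T$, and that it agrees with $T$ after the $j$-th step. The action of the cycle places $r_j(\theta_j)$ as its first step; and since $r_j(\theta_j) \equiv t(\tau)$ has the same result as $t(\tau)$ up to the choice of fresh constants (\cref{def:ssos-fairness/mult-rewr-syst:4}), this is, up to refreshing, the fair execution whose first step is $t(\tau)$. This establishes all three assertions.

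I expect no serious obstacle, since the statement is essentially the boundary instance of \cref{prop:fair-mult-rewr:7} in which the cycle reaches the start of the trace. The only points demanding care are bookkeeping ones: justifying the existence of $\upsilon_T(0,t,\tau)$ through the coincidence of the fairness notions under interference-freedom, and noting that $\upsilon_T$ returns a step merely \emph{equivalent} to $t(\tau)$, so that the phrase ``with $t(\tau)$ as its first step'' must be read up to instantiation equivalence and refreshing of fresh constants.
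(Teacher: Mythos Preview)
The proposal is correct and takes essentially the same approach as the paper: the paper simply records that the lemma is a special case of \cref{prop:fair-mult-rewr:7} without further argument, and your proof spells out exactly why that specialization applies, including the justification that $\upsilon_T(0,t,\tau)$ is defined (via \cref{prop:ssos-fairness/prop-fair-trac:2}) and that the moved step is applicable at $M_0$ (via instantiation equivalence). Your closing remark about reading ``with $t(\tau)$ as its first step'' up to instantiation equivalence and refreshing is a useful clarification that the paper leaves implicit.
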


\begin{proposition}
  \label{prop:ssos-fairness/prop-fair-trac:4}\varindex{fairness effects of@ permutation}{1!234}\varindex{permutation effects on@ fairness}{1!234}
  If $\mc{R}$ is interference-free from $M_0$, then all fair executions from $M_0$ are permutations of each other.
\end{proposition}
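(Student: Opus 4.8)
The plan is to fix two fair executions $T$ and $T'$ from $M_0$ and construct a permutation $\sigma$ with $\gact{\sigma}{T} = T'$. By \cref{prop:ssos-fairness/prop-fair-trac:2} both $T$ and $T'$ are über fair, which makes the index function $\upsilon$ available. The construction is a step-by-step matching process that locks the prefix of $T$ onto $T'$ one step at a time, exactly in the spirit of \cref{prop:fair-mult-rewr:6}, but now iterated infinitely and assembled with \cref{lemma:main:2}.

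First I would build, by induction on $n$, fair executions $T = T_0, T_1, T_2, \dotsc$ and cyclic permutations $\sigma_n$ with $T_n = \gact{\sigma_n}{T_{n-1}}$, such that $\sigma_n$ fixes every index $< n$ and $T_n$ agrees with $T'$ on its first $n$ steps (up to the identification of traces by refreshing substitutions). For the inductive step, suppose $T_{n-1}$ already agrees with $T'$ on the first $n-1$ steps; then it shares the $(n-1)$-th multiset $N$ with $T'$, and the $n$-th step $t(\tau)$ of $T'$ is applicable to $N$. By the Fair Tail Property (\cref{prop:main:4}) the suffix of $T_{n-1}$ starting at $N$ is a fair execution, and it is interference-free from $N$ because interference-freedom holds from $M_0$. \cref{lemma:main:1} then supplies a cyclic permutation of this suffix that brings an instantiation equivalent to $t(\tau)$ to its front while preserving fairness; extending it by the identity on the first $n-1$ indices yields $\sigma_n$, and $T_n = \gact{\sigma_n}{T_{n-1}}$ is fair by \cref{prop:main:6}.

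Next I would assemble the $\sigma_n$ into a single map via \cref{lemma:main:2}. Because each $\sigma_n$ is the identity on the indices matched at earlier stages, the family $(\sigma_n)_n$ satisfies the compatibility hypothesis of that \namecref{lemma:main:2}: for $m < n$ the permutations $\sigma_{m+1}, \dotsc, \sigma_n$ all fix the index locked in at stage $m$. Hence $\sigma(m) = (\sigma_m \circ \dotsb \circ \sigma_1)(m)$ defines an injection, and by construction $\gact{\sigma}{T}$ agrees with $T'$ on every finite prefix, so $\gact{\sigma}{T} = T'$ as sequences.

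The hard part will be upgrading $\sigma$ from an injection to a genuine permutation, since \cref{lemma:main:2} only guarantees injectivity and an infinite composite of finite permutations can fail to be surjective (as the counterexample in that \namecref{lemma:main:2} shows). Concretely, I must rule out that some step of $T$ is shuffled ever further down the tail and never locked into $T'$. The plan is to show that every instantiation applied in $T$ also appears, up to $\equiv$, in $T'$: taking the least index $k$ whose step $s = r_k(\theta_k)$ is missing from $T'$, every earlier step $s_1, \dotsc, s_{k-1}$ does appear in $T'$, so there is a finite prefix of $T'$ after which all of them have been applied. A confluence argument—that applying the same commuting instantiations from $M_0$ yields the same multiset, which follows from the commuting diamonds of \cref{lemma:fair-mult-rewr:2,prop:ssos-fairness/prop-fair-trac:1} together with \cref{prop:fair-mult-rewr:3}—shows that $s$ is then applicable at some multiset of $T'$, so über fairness of $T'$ forces an equivalent step into $T'$, contradicting the choice of $k$. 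The same argument run with the roles of $T$ and $T'$ exchanged matches their lengths, so $\sigma$ is surjective, hence a permutation, and $T'$ is a permutation of $T$.
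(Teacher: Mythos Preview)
Your overall strategy matches the paper's: iterate \cref{lemma:main:1} to produce fair executions $T = T_0, T_1, T_2, \ldots$ with $T_n$ agreeing with $T'$ on the first $n$ steps, then assemble the stage-wise cycles into a single injection $\sigma$ via \cref{lemma:main:2}. The construction itself is correct and identical to the paper's; the divergence is only in the surjectivity argument.

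Your surjectivity argument has a genuine gap. You set out to rule out that some index $k$ of $T$ is never locked in, but what you actually prove is the weaker claim that every \emph{instantiation} occurring in $T$ also occurs (up to $\equiv$) somewhere in $T'$. This is not enough: the same instantiation can occur at several indices of $T$, and even if a step equivalent to $r_k(\theta_k)$ sits at some position of $T'$, the construction may have matched that position to a \emph{different} index of $T$ carrying an equivalent step, leaving $k$ outside the image of $\sigma$. Your closing remark that swapping roles ``matches their lengths'' does not help in the infinite case, since an injection $\mathbb N\to\mathbb N$ need not be surjective. What is missing is that the cycles supplied by \cref{lemma:main:1} preserve the \emph{relative order} of the not-yet-locked steps: once indices $1,\ldots,k-1$ are all locked in (say by stage $N$), $r_k(\theta_k)$ sits at the very front of the remaining pool---at position $N{+}1$ of $T_N$---and is therefore applicable to the $N$-th multiset of $T'$ with no separate confluence argument needed. Über fairness of $T'$ then yields some $m>N$ whose step is equivalent to $r_k(\theta_k)$, and because the construction always selects the \emph{least} matching position via $\upsilon$, at stage $m$ it must pick position $m$ (where $r_k$ still sits), giving $\sigma(m)=k$.
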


\begin{proof}
  Consider traces $R = (R_0, (r_i; (\theta_i, \xi_i))_{i \in I})$ and $T = (T_0, (t_j; (\tau_j, \zeta_j))_{j \in J})$ where $R_0 = M_0 = T_0$.
  We construct a sequence of permutations $\sigma_0, \sigma_1, \dotsc$, where $\Phi_0 = R$ and the step $\Phi_{n + 1} = \gact{\sigma_{n + 1}}{\Phi_n}$ is given by \cref{lemma:main:1} such that $\Phi_{n + 1}$ agrees with $T$ on the first $n + 1$ steps.
  We then assemble these permutations $\sigma_n$ into an injection $\sigma$ using \cref{lemma:main:2}; fairness ensures that it is a surjection.
  We have $T = \gact{\sigma}{R}$ by construction.

  The construction is as follows.
  We write $\Sigma_0 = R$ horizontally, and then down from $R_0$, we write $T$.
  Let $\Gamma_{n + 1}$ be the fair trace obtained by applying \cref{lemma:main:1} to $\Sigma_n$ and $(t_{n + 1};(\tau_{n + 1},\zeta_{n + 1}))$; it is a permutation of $\Sigma_n$.
  Without loss of generality, we assume that the fresh constant substitutions $\zeta_{n + 1}$ and the corresponding $\xi_k$ in $\Sigma_n, \dotsc, \Sigma_0$ are equal; refreshing both $S$ and $T$ makes this possible.
  Let $\Sigma_{n + 1}$ be the tail of $\Gamma_{n + 1}$ starting at $T_{n + 1}$; we write it horizontally to the right of $T_{n + 1}$.
  We get the following picture:
  \[
    \begin{tikzcd}[column sep=3.5em, ampersand replacement=\&]
      R_0 = T_0
      \ar[r, "{(r_1;(\theta_1,\xi_1))}"]
      \ar[d, swap, "{(t_1;(\tau_1,\zeta_1))}"]
      \&
      R_1
      \ar[r, "{(r_2;(\theta_2,\xi_2))}"]
      \&
      R_2
      \ar[r, "{(r_3;(\theta_3,\xi_3))}"]
      \&
      R_3
      \ar[r, "{(r_4;(\theta_4,\xi_4))}"]
      \&
      \cdots
      \&[-2.5em]
      \Sigma_0 = R
      \\
      T_1
      \ar[d, swap, "{(t_2;(\tau_2,\zeta_2))}"]
      \ar[r]
      \&
      \Sigma_{11}
      \ar[r]
      \&
      \Sigma_{12}
      \ar[r]
      \&
      \Sigma_{13}
      \ar[r]
      \&
      \cdots
      \&
      \Sigma_1
      \\
      T_2
      \ar[d, swap, "{(t_3;(\tau_3,\zeta_3))}"]
      \ar[r]
      \&
      \Sigma_{21}
      \ar[r]
      \&
      \Sigma_{22}
      \ar[r]
      \&
      \Sigma_{23}
      \ar[r]
      \&
      \cdots
      \&
      \Sigma_2
      \\
      T_3
      \ar[d, swap, "{(t_4;(\tau_4,\zeta_4))}"]
      \ar[r]
      \&
      \Sigma_{31}
      \ar[r]
      \&
      \Sigma_{32}
      \ar[r]
      \&
      \Sigma_{33}
      \ar[r]
      \&
      \cdots
      \&
      \Sigma_3
      \\
      \vdots
      \&
      \&
      \&
      \&
      \&
      \vdots
    \end{tikzcd}
  \]
  Set $\Phi_0 = \Sigma_0$, and for each $n > 0$, let $\Phi_n$ be the trace given by the trace $T_0 \to \cdots \to T_n$ followed by the trace $\Sigma_n$.
  Each of the permutations $\mu_n : \Sigma_n \to \Gamma_{n + 1}$ determines a permutation $\sigma_n$ from $\Phi_n$ to $\Phi_{n + 1}$ that fixes the first $n$ steps.
  The family of these permutations satisfies the hypothesis of \cref{lemma:main:2}, and gives us an injection $\sigma : I \to J$.
  It is a surjection by construction: $t_n(\tau_n)$ appears as some $r_k(\theta_k)$ by fairness, and $\sigma(k) = n$.
  To see that $\gact{\sigma}{R} = T$, it is sufficient to observe that for all $n$, $\gact{\sigma}{R}$ and $T$ agree on the first $n$ steps.
\end{proof}

Sometimes we are only interested in the set of facts that appear in a trace \(T\), \ie, in its support \(\supp(T)\).
In these cases, it we may not care about individual traces, so long as they have the same support.
This motivates the following notion of trace equivalence:

\begin{definition}
  \label{def:ssos-fairness/prop-fair-trac:1}
  Two traces $T = (M_0;(r_i,\delta_i)_I)$ and $T'$ are \defin{union-equivalent}\DIndex{union-equivalence}\varindex{{multiset rewriting system} trace union-equivalence}{1!2!3}[|defin] if $T'$ can be refreshed to a trace $[\eta]T'$ such that $\supp(T) = \supp([\eta]T')$.
\end{definition}

\begin{lemma}
  \label{lemma:obssem:6}
  If $T$ is a permutation of $S$, then $T$ and $S$ are union-equivalent.
\end{lemma}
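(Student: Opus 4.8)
The plan is to establish the on-the-nose equality $\supp(T) = \supp(S)$, from which union-equivalence follows immediately by taking the identity refreshing. The key observation is that the support of a trace is completely determined by its initial multiset together with the facts \emph{produced} by its steps, and that a permutation reuses exactly the same initial multiset and the same collection of rule instantiations, merely reordered.

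First I would fix notation: write $S = (M_0, (r_i;(\theta_i,\xi_i))_{i \in I})$ with multisets $M_0 \msstep M_1 \msstep \cdots$, and $T = \gact{\sigma}{S} = (M_0, (r_{\sigma(i)};(\theta_{\sigma(i)},\xi_{\sigma(i)}))_{i \in I})$ with multisets $N_0 = M_0 \msstep N_1 \msstep \cdots$. For each $i \in I$, let $P_i = \supp(G_i(\theta_i,\xi_i)) \cup \persfnt{\pi_i'}(\theta_i,\xi_i)$ collect the facts produced by applying the instantiation $(r_i;(\theta_i,\xi_i))$. The crucial point, immediate from the definition of rule application, is that these produced facts depend only on the instantiation $(r_i;(\theta_i,\xi_i))$ and not on the stationary multiset to which the rule is applied.

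The central step is a lemma-within-the-proof: for any trace with initial multiset $M_0$ and produced-fact supports $P_i$, its support equals $\supp(M_0) \cup \bigcup_i P_i$. I would prove $\subseteq$ by induction on the index $k$, showing that every fact occurring in the $k$-th multiset is either already in $M_0$ or was produced at some step $j \leq k$; this is immediate since each multiset is obtained from its predecessor by deleting the active multiset and adjoining the produced facts (persistent facts, never consumed, are covered by the same argument). The inclusion $\supseteq$ is clear, as $M_0$'s facts occur in $M_0$ and the facts produced at step $i$ occur in the $i$-th multiset. Applying this identity to both $S$ and $T$, and noting that as $i$ ranges over $I$ the index $\sigma(i)$ also ranges over all of $I$, gives
\[
  \supp(T) = \supp(M_0) \cup \bigcup_{i \in I} P_{\sigma(i)} = \supp(M_0) \cup \bigcup_{j \in I} P_j = \supp(S).
\]

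Finally, taking the refreshing substitution $\eta$ with $\eta_i = \xi_i$ for all $i$ yields $[\eta]S = S$, so $\supp(T) = \supp(S) = \supp([\eta]S)$, and $T$ and $S$ are union-equivalent by \cref{def:ssos-fairness/prop-fair-trac:1}. I expect the main subtlety to be the bookkeeping in the inductive identity for $\supp$, together with confirming that the permutation action reuses each fresh-constant substitution $\xi_i$ verbatim, so that the produced facts of $T$ are literally those of $S$ rather than merely refreshed copies. Should the ambient identification of traces up to refreshing intervene here, the slack is exactly the freedom absorbed by the refreshing $\eta$ permitted in the definition of union-equivalence, so the argument is unaffected. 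Note that, unlike the preceding results in this section, the lemma requires no interference-freedom hypothesis.
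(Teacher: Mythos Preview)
Your proof is correct and follows essentially the same approach as the paper: both argue that every fact in any $M_n$ is either in $M_0$ or produced by some rule instantiation, and since a permutation shares the same initial multiset and the same collection of rule instantiations, the supports coincide. You simply spell out the support identity $\supp(\text{trace}) = \supp(M_0) \cup \bigcup_i P_i$ and the identity refreshing more explicitly than the paper's three-sentence sketch.
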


\begin{proof}
  Consider a trace $(M_0,(r_i;\delta_i)_i)$.
  For all $n$, each fact in $M_n$ appears either in $M_0$ or in the result of some rule $r_i$ with $i \leq n$.
  Traces $T$ and $S$ start from the same multiset and have the same rule instantiations.
  It follows that they are union-equivalent.
\end{proof}

\Cref{cor:ssos-fairness/prop-fair-trac:1} will be key to showing in \cref{sec:sill-obs-equiv:observ-comm} that processes have unique observed communications:

\begin{corollary}
  \label{cor:ssos-fairness/prop-fair-trac:1}
  If $\mc{R}$ is interference-free from $M$, then all fair executions from $M$ are union-equivalent.
\end{corollary}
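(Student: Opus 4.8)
The plan is to derive this corollary directly by composing the two immediately preceding results, since together they already supply everything that is needed. First I would fix two arbitrary fair executions $R$ and $T$ from $M$; because they are chosen arbitrarily, proving the corollary amounts to showing that this particular $R$ and $T$ are union-equivalent.

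Since $\mc{R}$ is interference-free from $M$ by hypothesis, \cref{prop:ssos-fairness/prop-fair-trac:4} applies and tells me that any two fair executions from $M$ are permutations of each other; in particular $T$ is a permutation of $R$. Then \cref{lemma:obssem:6} states precisely that traces related by a permutation are union-equivalent, so $R$ and $T$ are union-equivalent. As $R$ and $T$ were arbitrary, all fair executions from $M$ are pairwise union-equivalent, which is the desired conclusion.

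There is essentially no obstacle here: the corollary is a one-step combination of \cref{prop:ssos-fairness/prop-fair-trac:4} and \cref{lemma:obssem:6}, and all the real work---constructing the permutation witnessing that two fair executions are rearrangements of one another, and checking that a permutation preserves the support up to refreshing substitutions---has already been carried out in those results. The only point worth a moment's care is that ``all fair executions from $M$ are union-equivalent'' should be read as the pairwise statement, which is exactly what the two-step argument delivers.
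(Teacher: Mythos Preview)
Your proposal is correct and matches the paper's approach exactly: the paper states the corollary immediately after \cref{prop:ssos-fairness/prop-fair-trac:4} and \cref{lemma:obssem:6} without an explicit proof, treating it as the obvious composition of those two results. Your unpacking of that composition is precisely what the paper intends.
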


\defrule{F-app}{F-App}{
  \jtypef{\Psi}{MN}{\sigma}
}{
  \jtypef{\Psi}{M}{\tau \to \sigma}
  &
  \jtypef{\Psi}{N}{\tau}
}
\defrule{F-fix}{F-Fix}{
  \jtypef{\Psi}{\tFix{x}{M}}{\tau}
}{
  \jtypef{\Psi,x:\tau}{M}{\tau}
}
\defrule{F-fun}{F-Fun}{
  \jtypef{\Psi}{\lambda x : \tau.M}{\tau \to \sigma}
}{
  \jtypef{\Psi, x:\tau}{M}{\sigma}
}
\defrule{F-var}{F-Var}{
  \jtypef{\Psi, x: \tau}{x}{\tau}
}{}
\defrule{F-zero}{F-Z}{
  \jtypef{\Psi}{0}{\Tnat}
}{}
\defrule{F-succ}{F-S}{
  \jtypef{\Psi}{s(M)}{\Tnat}
}{
  \jtypef{\Psi}{M}{\Tnat}
}
\defrule{F-hole}{F-Hole}{
  \jtypef{\Gamma}{\ctxh{}{\Gamma}{\sigma}}{\sigma}
}{}
\defrule{I-proc}{I-\{\}}{
  \jtypef{\Psi}{\tProc{a}{P}{\overline{a_i}}}{\Tproc{a:A}{\overline{a_i:A_i}}}
}{
  \jtypem{\Psi}{\overline{a_i:A_i}}{P}{a}{A}
}
\defrule{L-rhon}{$\rho^-$L}{
  \jtypem{\Psi}{\Delta, a : \Trec{\alpha}{A}}{\tSendU{a}{P}}{c}{C}
}{
  \jtypem{\Psi}{\Delta, a : \subst{\Trec{\alpha}{A}}{\alpha}{A}}{P}{c}{C}
  &
  \cdot \vdash \jisst[-]{\Trec{\alpha}{A}}
}
\defrule{L-rhop}{$\rho^+$L}{
  \jtypem{\Psi}{\Delta, a : \Trec{\alpha}{A}}{\tRecvU{a}{P}}{c}{C}
}{
  \jtypem{\Psi}{\Delta, a : \subst{\Trec{\alpha}{A}}{\alpha}{A}}{P}{c}{C}
  &
  \cdot \vdash \jisst[+]{\Trec{\alpha}{A}}
}
\defrule{L-tamp}{$\Tamp$L}{
  \jtypem{\Psi}{\Delta,a:{\Tamp\{l : A_l\}}_{l \in L}}{\tSendL{a}{k}{P}}{c}{C}
}{
  \jtypem{\Psi}{\Delta,a:A_k}{P}{c}{C}
  &
  (k \in L)
}
\defrule{L-tand}{$\Tand{}{}$L}{
  \jtypem{\Psi}{\Delta, a:\Tand{\tau}{A}}{\tRecvV{x}{a}{P}}{c}{C}
}{
  \jtypem{\Psi,x:\tau}{\Delta, a:A}{P}{c}{C}
}
\defrule{L-tds}{$\Tds{}$L}{
  \jtypem{\Psi}{\Delta,a : \Tds A}{\tRecvS{a}{P}}{c}{C}
}{
  \jtypem{\Psi}{\Delta,a : A}{P}{c}{C}
}
\defrule{L-timp}{$\Timp{}{}$L}{
  \jtypem{\Psi}{\Delta,a : \Timp{\tau}{A}}{\tSendV{a}{M}{P}}{c}{C}
}{
  \jtypef{\Psi}{M}{\tau}
  &
  \jtypem{\Psi}{\Delta, a : A}{P}{c}{C}
}
\defrule{L-tlolly}{$\Tlolly$L}{
  \jtypem{\Psi}{\Delta, b : B, a : B \Tlolly A}{\tSendC{a}{b}{P}}{c}{C}
}{
  \jtypem{\Psi}{\Delta,a : A}{P}{c}{C}
}
\defrule{L-tot}{$\Tot$L}{
  \jtypem{\Psi}{\Delta, a : B \Tot A}{\tRecvC{b}{a}{P}}{c}{C}
}{
  \jtypem{\Psi}{\Delta, a : A, b : B}{P}{c}{C}
}
\defrule{L-tplus}{$\Tplus$L}{
  \jtypem{\Psi}{\Delta,a:{\Tplus\{l : A_l\}}_{l \in L}}{\tCase{a}{\left\{ l \Rightarrow P_l \right\}_{l \in L}}}{c}{C}
}{
  \jtypem{\Psi}{\Delta,a:A_l}{P_l}{c}{C}
  &
  (\forall l \in L)
}
\defrule{L-tus}{$\Tus{}$L}{
  \jtypem{\Psi}{\Delta,a : \Tus A}{\tSendS{a}{P}}{c}{C}
}{
  \jtypem{\Psi}{\Delta,a : A}{P}{c}{C}
}
\defrule{L-tu}{$\Tu$L}{
  \jtypem{\Psi}{\Delta, a : \Tu}{\tWait{a}{P}}{c}{C}
}{
  \jtypem{\Psi}{\Delta}{P}{c}{C}
}
\defrule{R-rhon}{$\rho^-$R}{
  \jtypem{\Psi}{\Delta}{\tRecvU{a}{P}}{a}{\Trec{\alpha}{A}}
}{
  \jtypem{\Psi}{\Delta}{P}{a}{\subst{\Trec{\alpha}{A}}{\alpha}{A}}
  &
  \cdot \vdash \jisst[-]{\Trec{\alpha}{A}}
}
\defrule{R-rhop}{$\rho^+$R}{
  \jtypem{\Psi}{\Delta}{\tSendU{a}{P}}{a}{\Trec{\alpha}{A}}
}{
  \jtypem{\Psi}{\Delta}{P}{a}{\subst{\Trec{\alpha}{A}}{\alpha}{A}}
  &
  \cdot \vdash \jisst[+]{\Trec{\alpha}{A}}
}
\defrule{R-tamp}{$\Tamp$R}{
  \jtypem{\Psi}{\Delta}{\tCase{a}{\left\{l \Rightarrow P_l\right\}_{l \in L}}}{a}{{\Tamp\{l :A_l \}}_{l \in L}}
}{
  \jtypem{\Psi}{\Delta}{P_l}{a}{A_l}
  &
  (\forall l \in L)
}
\defrule{R-tand}{$\Tand{}{}$R}{
  \jtypem{\Psi}{\Delta}{\tSendV{a}{M}{P}}{a}{\Tand{\tau}{A}}
}{
  \jtypef{\Psi}{M}{\tau}
  &
  \jtypem{\Psi}{\Delta}{P}{a}{A}
}
\defrule{R-tds}{$\Tds{}$R}{
  \jtypem{\Psi}{\Delta}{\tSendS{a}{P}}{a}{\Tds A}
}{
  \jtypem{\Psi}{\Delta}{P}{a}{A}
}
\defrule{R-timp}{$\Timp{}{}$R}{
  \jtypem{\Psi}{\Delta}{\tRecvV{x}{a}{P}}{a}{\Timp{\tau}{A}}
}{
  \jtypem{\Psi,x:\tau}{\Delta}{P}{a}{A}
}
\defrule{R-tlolly}{$\Tlolly$R}{
  \jtypem{\Psi}{\Delta}{\tRecvC{b}{a}{P}}{a}{B \Tlolly A}
}{
  \jtypem{\Psi}{\Delta, b : B}{P}{a}{A}
}
\defrule{R-tot}{$\Tot$R}{
  \jtypem{\Psi}{\Delta, b : B}{\tSendC{a}{b}{P}}{a}{B \Tot A}
}{
  \jtypem{\Psi}{\Delta}{P}{a}{A}
}
\defrule{R-tplus}{$\Tplus$R}{
  \jtypem{\Psi}{\Delta}{\tSendL{a}{k}{P}}{a}{\Tplus {\{l:A_l\}}_{l \in L}}
}{
  \jtypem{\Psi}{\Delta}{P}{a}{A_k}
  &
  (k \in L)
}
\defrule{R-tus}{$\Tus{}$R}{
  \jtypem{\Psi}{\Delta}{\tRecvS{a}{P}}{a}{\Tus{A}}
}{
  \jtypem{\Psi}{\Delta}{P}{a}{}
}
\defrule{R-tu}{$\Tu$R}{
  \jtypem{\Psi}{\cdot}{\tClose a}{a}{\Tu}
}{}
\defrule{T-nat}{T-\(\N\)}{
  \jftype{\Xi}{\Tnat}
}{
}
\defrule{T-proc}{T\{\}}{
  \jftype{\Xi}{\Tproc{a_0:A_0}{a_1:A_1,\dotsc,a_n:A_n}}
}{
  \jstype{\Xi}{A_i}
  &
  (0\leq i \leq n)
}
\defrule{T-rhop}{C$\rho^+$}{
  \jstype[+]{\Xi}{\Trec{\alpha}{A}}
}{
  \jstype[+]{\Xi, \jisst[+]{\alpha}}{A}
}
\defrule{T-rhopn}{C$\rho_n^+$}{
  \jstype[+]{\Xi}{\Trecn{n}{\alpha}{A}}
}{
  \jstype[+]{\Xi, \jisst[+]{\alpha}}{A}
}
\defrule{T-rhopz}{C$\rho_0^+$}{
  \jstype[+]{\Xi}{\Trecn{0}{\alpha}{A}}
}{
  \jstype[+]{\Xi, \jisst[+]{\alpha}}{A}
}
\defrule{T-rhopsn}{C$\rho_{n + 1}^+$}{
  \jstype[+]{\Xi}{\Trecn{n + 1}{\alpha}{A}}
}{
  \jstype[+]{\Xi, \jisst[+]{\alpha}}{A}
  &
  \jstype[+]{\Xi}{\Trecn{n}{\alpha}{A}}
}
\defrule{T-rhon}{C$\rho^-$}{
  \jstype[-]{\Xi}{\Trec{\alpha}{A}}
}{
  \jstype[-]{\Xi, \jisst[-]{\alpha}}{A}
}
\defrule{T-rhonn}{C$\rho_n^-$}{
  \jstype[-]{\Xi}{\Trecn{n}{\alpha}{A}}
}{
  \jstype[-]{\Xi, \jisst[-]{\alpha}}{A}
}
\defrule{T-rhonz}{C$\rho_0^-$}{
  \jstype[-]{\Xi}{\Trecn{0}{\alpha}{A}}
}{
  \jstype[-]{\Xi, \jisst[-]{\alpha}}{A}
}
\defrule{T-rhonsn}{C$\rho_{n + 1}^-$}{
  \jstype[-]{\Xi}{\Trecn{n + 1}{\alpha}{A}}
}{
  \jstype[-]{\Xi, \jisst[-]{\alpha}}{A}
  &
  \jstype[-]{\Xi}{\Trecn{n}{\alpha}{A}}
}
\defrule{T-tamp}{C$\Tamp$}{
  \Xi\vdash\jisst[-]{{\Tamp\{l :A_l \}}_{l \in L}}
}{
  \Xi\vdash\jisst[-]{A_l}
  &
  (\forall l \in L)
}
\defrule{T-tand}{C$\Tand{}{}$}{
  \Xi\vdash\jisst[+]{\Tand{\tau}{A}}
}{
  \jftype{\Xi}{\tau}
  &
  \Xi\vdash\jisst[+]{A}
}
\defrule{T-tds}{C$\Tds{}$}{
  \Xi\vdash\jisst[+]{\Tds A}
}{
  \Xi\vdash\jisst[-]{A}
}
\defrule{T-timp}{C$\Timp{}{}$}{
  \Xi\vdash\jisst[-]{\Timp{\tau}{A}}
}{
  \jftype{\Xi}{\tau}
  &
  \Xi\vdash\jisst[-]{A}
}
\defrule{T-tlolly}{C$\Tlolly$}{
  \Xi\vdash\jisst[-]{B \Tlolly A}
}{
  \Xi\vdash\jisst[+]{B}
  &
  \Xi\vdash\jisst[-]{A}
}
\defrule{T-to}{T$\to$}{
  \jftype{\Xi}{\tau \to \sigma}
}{
  \jftype{\Xi}{\tau}
  &
  \jftype{\Xi}{\sigma}
}
\defrule{T-tot}{C$\Tot$}{
  \Xi\vdash\jisst[+]{A \Tot B}
}{
  \Xi\vdash\jisst[+]{A}
  &
  \Xi\vdash\jisst[+]{B}
}
\defrule{T-tplus}{C$\Tplus$}{
  \Xi\vdash\jisst[+]{{\Tplus\{l : A_l\}}_{l \in L}}
}{
  \Xi\vdash\jisst[+]{A_l}
  &
  (\forall l \in L)
}
\defrule{T-tus}{C$\Tus{}$}{
  \Xi\vdash\jisst[-]{\Tus A}
}{
  \Xi\vdash\jisst[+]{A}
}
\defrule{T-tu}{C$\Tu$}{
  \Xi\vdash\jisst[+]{\Tu}
}{}
\defrule{T-var}{CVar}{
  \Xi,\jisst[p]{\alpha}\vdash\jisst[p]{\alpha}
}{}
\defrule{E-proc}{E-\{\}}{
  \jtypem{\Psi}{\overline{a_i:A_i}}{\tProc{a}{M}{\overline a_i}}{a}{A}
}{
  \jtypef{\Psi}{M}{\Tproc{a:A}{\overline{a_i:A_i}}}
}
\defrule{cut}{Cut}{
  \jtypem{\Psi}{\Delta_1,\Delta_2}{\tCut{a}{P}{Q}}{c}{C}
}{
  \jtypem{\Psi}{\Delta_1}{P}{a}{A}
  &
  \jtypem{\Psi}{a:A,\Delta_2}{Q}{c}{C}
}
\defrule{fwdn}{Fwd${}^-$}{
  \jtypem{\Psi}{a:A}{\tFwdN{a}{b}}{b}{A}
}{
  \jstype[-]{\cdot}{A}
}
\defrule{fwdp}{Fwd${}^+$}{
  \jtypem{\Psi}{a:A}{\tFwdP{a}{b}}{b}{A}
}{
  \jstype[+]{\cdot}{A}
}
\defrule{O-bot}{O-$\bot$}{
  \jtoc{T}{\bot}{c}{A}
}{
  \jttp{T}{c}{A}
}
\defrule{O-z-bot}{O\({}_0\)-$\bot$}{
  \jtoc[0]{T}{\bot}{c}{A}
}{
  \jttp{T}{c}{A}
}
\defrule{O-n-bot}{O\({}_{n + 1}\)-$\bot$}{
  \jtoc[n + 1]{T}{\bot}{c}{A}
}{
  \jttp{T}{c}{A}
}
\defrule{O-tamp}{O-$\Tamp$}{
  \jtoc{T}{(l,v)}{c}{\Tamp \{ l : A_l \}_{l \in L}}
}{
  \jmsg{d}{\mSendLN{c}{l}{d}} \in \mc{T}
  &
  \jtoc{T}{v}{d}{A_l}
  &
  \jttp{T}{c}{\Tamp \{ l : A_l \}_{l \in L}}
}
\defrule{O-tand}{O-$\Tand{}{}$}{
  \jtoc{T}{(\cval f, v)}{c}{\Tand{\tau}{A}}
}{
  \jmsg{c}{\mSendVP{c}{f}{d}} \in \mc{T}
  &
  \jtoc{T}{v}{d}{A}
  &
  \jttp{T}{c}{\Tand{\tau}{A}}
}
\defrule{O-n-tand}{O\({}_{n + 1}\)-$\Tand{}{}$}{
  \jtoc[n + 1]{T}{(\cval f, v)}{c}{\Tand{\tau}{A}}
}{
  \jmsg{c}{\mSendVP{c}{f}{d}} \in \mc{T}
  &
  \jtoc[n]{T}{v}{d}{A}
  &
  \jttp{T}{c}{\Tand{\tau}{A}}
}
\defrule{O-tds}{O-$\Tds{}$}{
  \jtoc{T}{(\cshift, v)}{c}{\Tds{A}}
}{
  \jmsg{c}{\mSendSP{c}{d}} \in \mc{T}
  &
  \jtoc{T}{v}{d}{A}
}
\defrule{O-timp}{O-$\Timp{}{}$}{
  \jtoc{T}{(\cval f, v)}{c}{\Timp{\tau}{A}}
}{
  \jmsg{d}{\mSendVN{c}{f}{d}} \in \mc{T}
  &
  \jtoc{T}{v}{d}{A}
  &
  \jttp{T}{c}{\Timp{\tau}{A}}
}
\defrule{O-tlolly}{O-$\Tlolly$}{
  \jtoc{T}{(u,v)}{c}{A \Tlolly B}
}{
  \jmsg{d}{\mSendCN{c}{a}{d}} \in \mc{T}
  &
  \jtoc{T}{u}{a}{A}
  &
  \jtoc{T}{v}{d}{B}
}
\defrule{O-tot}{O-$\Tot$}{
  \jtoc{T}{(u,v)}{c}{A \Tot B}
}{
  \jmsg{c}{\mSendCP{c}{a}{d}} \in \mc{T}
  &
  \jtoc{T}{u}{a}{A}
  &
  \jtoc{T}{v}{d}{B}
}
\defrule{O-n-tot}{O\({}_{n + 1}\)-$\Tot$}{
  \jtoc[n + 1]{T}{(u,v)}{c}{A \Tot B}
}{
  \jmsg{c}{\mSendCP{c}{a}{d}} \in \mc{T}
  &
  \jtoc[n]{T}{u}{a}{A}
  &
  \jtoc[n]{T}{v}{d}{B}
}
\defrule{O-tplus}{O-$\Tplus$}{
  \jtoc{T}{(l,v)}{c}{\Tplus \{ l : A_l \}_{l \in L}}
}{
  \jmsg{c}{\mSendLP{c}{l}{d}} \in \mc{T}
  &
  \jtoc{T}{v}{d}{A_l}
  &
  \jttp{T}{c}{\Tplus \{ l : A_l \}_{l \in L}}
}
\defrule{O-rhon}{O-$\rho^-$}{
  \jtoc{T}{(\cunfold, v)}{c}{\Trec{\alpha}{A}}
}{
  \jmsg{d}{\mSendUN{c}{d}} \in \mc{T}
  &
  \jtoc{T}{v}{d}{[\Trec{\alpha}{A}/\alpha]A}
}
\defrule{O-rhop}{O-$\rho^+$}{
  \jtoc{T}{(\cunfold, v)}{c}{\Trec{\alpha}{A}}
}{
  \jmsg{c}{\mSendUP{c}{d}} \in \mc{T}
  &
  \jtoc{T}{v}{d}{[\Trec{\alpha}{A}/\alpha]A}
}
\defrule{O-tu}{O-$\Tu$}{
  \jtoc{T}{\cclose}{c}{\Tu}
}{
  \jmsg{c}{\mClose{c}} \in \mc{T}
}
\defrule{O-n-tu}{O\({}_{n + 1}\)-$\Tu$}{
  \jtoc[n + 1]{T}{\cclose}{c}{\Tu}
}{
  \jmsg{c}{\mClose{c}} \in \mc{T}
}
\defrule{O-tus}{O-$\Tus{}$}{
  \jtoc{T}{(\cshift, v)}{c}{\Tus{A}}
}{
  \jmsg{d}{\mSendSN{c}{d}} \in \mc{T}
  &
  \jtoc{T}{v}{d}{A}
}
\defrule{C-bot}{C-$\bot$}{
  \jsynt{\bot}{A}
}{
  \jstype{\cdot}{A}
}
\defrule{C-tu}{C-$\Tu$}{
  \jsynt{\cclose}{\Tu}
}{
}
\defrule{C-tplus}{C-$\Tplus$}{
  \jsynt{(k,v_k)}{\Tplus \{ l : A_l \}_{l \in L}}
}{
  \jsynt{v_k}{A_k}
  &
  (k \in L)
}
\defrule{C-tamp}{C-$\Tamp$}{
  \jsynt{(k,v_k)}{\Tamp \{ l : A_l \}_{l \in L}}
}{
  \jsynt{v_k}{A_k}
  &
  (k \in L)
}
\defrule{C-tot}{C-$\Tot$}{
  \jsynt{(v,v')}{A \Tot B}
}{
  \jsynt{v}{A}
  &
  \jsynt{v'}{B}
}
\defrule{C-tlolly}{C-$\Tlolly$}{
  \jsynt{(v,v')}{A \Tlolly B}
}{
  \jsynt{v}{A}
  &
  \jsynt{v'}{B}
}
\defrule{C-rho}{C-$\rho$}{
  \jsynt{(\cunfold, v)}{\Trec{\alpha}{A}}
}{
  \jsynt{v}{[\Trec{\alpha}{A}/\alpha]A}
}
\defrule{C-tds}{C-$\Tds{}$}{
  \jsynt{(\cshift, v)}{\Tds A}
}{
  \jsynt{v}{A}
}
\defrule{C-tus}{C-$\Tus{}$}{
  \jsynt{(\cshift, v)}{\Tus A}
}{
  \jsynt{v}{A}
}
\defrule{C-tand}{C-$\Tand{}{}$}{
  \jsynt{(\cval f, v)}{\Tand{\tau}{A}}
}{
  \jtypef{\cdot}{f}{\tau}
  &
  \jsynt{v}{A}
}
\defrule{C-timp}{C-$\Timp{}{}$}{
  \jsynt{(\cval f, v)}{\Timp{\tau}{A}}
}{
  \jtypef{\cdot}{f}{\tau}
  &
  \jsynt{v}{A}
}
\defrule{Ct-p-hole}{P-Hole}{%
  \jtypem{\Gamma}{\Lambda}{\ctxh{}{\Gamma;\Lambda}{b:B}}{b}{B}
}{
}
\defrule{Ct-hole-cut-l}{Hole-Cut-L}{%
  \jtypem{\cdot}{\Delta_1,\Delta_2}{\tCut{b}{\ctxh{O}{\Delta}{a:A}}{P}}{c}{C}
}{
  \jtypem{\cdot}{\Delta_1}{\ctxh{O}{\Delta}{a:A}}{b}{B}
  &
  \jtypem{\cdot}{\Delta_2,b : B}{P}{c}{C}
}
\defrule{Ct-hole-cut-r}{Hole-Cut-R}{%
  \jtypem{\cdot}{\Delta_1,\Delta_2}{\tCut{b}{P}{\ctxh{O}{\Delta}{a:A}}}{c}{C}
}{
  \jtypem{\cdot}{\Delta_1}{P}{b}{B}
  &
  \jtypem{\cdot}{b : B,\Delta_2}{\ctxh{O}{\Delta}{a:A}}{c}{C}
}
\defrule{CS-bot}{CS-$\bot$}{
  \jsynt{\bot \commsim w}{A}
}{
  \jsynt{w}{A}
}
\defrule{CS-tu}{CS-$\Tu$}{
  \jsynt{\cclose \commsim \cclose}{\Tu}
}{
}
\defrule{CS-tplus}{CS-$\Tplus$}{
  \jsynt{(k,v_k) \commsim (k,w_k)}{\Tplus \{ l : A_l \}_{l \in L}}
}{
  \jsynt{v_k \commsim w_k}{A_k}
  &
  (k \in L)
}
\defrule{CS-tamp}{CS-$\Tamp$}{
  \jsynt{(k,v_k) \commsim (k,w_k)}{\Tamp \{ l : A_l \}_{l \in L}}
}{
  \jsynt{v_k \commsim w_k}{A_k}
  &
  (k \in L)
}
\defrule{CS-tot}{CS-$\Tot$}{
  \jsynt{(v,v') \commsim (w,w')}{A \Tot B}
}{
  \jsynt{v \commsim w}{A}
  &
  \jsynt{v' \commsim w'}{B}
}
\defrule{CS-tlolly}{CS-$\Tlolly$}{
  \jsynt{(v,v') \commsim (w,w')}{A \Tlolly B}
}{
  \jsynt{v \commsim w}{A}
  &
  \jsynt{v' \commsim w'}{B}
}
\defrule{CS-rho}{CS-$\rho$}{
  \jsynt{(\cunfold, v) \commsim (\cunfold, w)}{\Trec{\alpha}{A}}
}{
  \jsynt{v \commsim w}{[\Trec{\alpha}{A}/\alpha]A}
}
\defrule{CS-tds}{CS-$\Tds{}$}{
  \jsynt{(\cshift, v) \commsim (\cshift, w)}{\Tds A}
}{
  \jsynt{v \commsim w}{A}
}
\defrule{CS-tus}{CS-$\Tus{}$}{
  \jsynt{(\cshift, v) \commsim (\cshift, w)}{\Tus A}
}{
  \jsynt{v \commsim w}{A}
}
\defrule{CS-tand}{CS-$\Tand{}{}$}{
  \jsynt{(\cval f, v) \commsim (\cval f', w)}{\Tand{\tau}{A}}
}{
  \jtypef{\cdot}{f \leqslant f'}{\tau}
  &
  \jsynt{v \commsim w}{A}
}
\defrule{CS-timp}{CS-$\Timp{}{}$}{
  \jsynt{(\cval f, v) \commsim (\cval f', w)}{\Timp{\tau}{A}}
}{
  \jtypef{\cdot}{f \leqslant f'}{\tau}
  &
  \jsynt{v \commsim w}{A}
}
\defrule{conf-h}{Conf-H}{%
  \jcfgti{\Lambda}{\cdot}{\ctxh{}{\Lambda}{\Xi}}{\Xi}
}{
}
\defrule{conf-m}{Conf-M}{
  \jcfgti[\Sigma]{\Delta}{\cdot}{\jmsg{c}{m}}{(c : A)}
}{
  \jtypem{\cdot}{\Delta}{m}{c}{A}
}
\defrule{conf-p}{Conf-P}{
  \jcfgti[\Sigma]{\Delta}{\cdot}{\jproc{c}{P}}{(c : A)}
}{
  \jtypem{\cdot}{\Delta}{P}{c}{A}
}
\defrule{conf-c}{Conf-C}{
  \jcfgti[\Sigma,\check \Pi,\Sigma']{\Gamma\Lambda}{\Iota_1\Pi\Iota_2}{\mc{C},\mc{D}}{\Phi\Xi}
}{
  \jcfgti[\Sigma,\check \Pi]{\Gamma}{\Iota_1}{\mc{C}}{\Phi\Pi}
  &
  \jcfgti[\check \Pi,\Sigma']{\Pi\Lambda}{\Iota_2}{\mc{D}}{\Xi}
}
\defrule{EV-fun}{EV-Fun}{
  \fneval{\lambda x : \tau.M}{\lambda x : \tau.M}
}{
}
\defrule{EV-proc}{EV-Proc}{
  \fneval{\tProc{c_0}{P}{\overline{c_i}}}{\tProc{c_0}{P}{\overline{c_i}}}
}{
}
\defrule{EV-fix}{EV-Fix}{
  \fneval{\tFix{x}{M}}{v}
}{
  \fneval{\subst{\tFix{x}{M}}{x}{M}}{v}
}
\defrule{EV-app}{EV-App}{
  \fneval{MN}{v}
}{
  \fneval{M}{\lambda x : \tau . M'}
  &
  \fneval{N}{w}
  &
  \fneval{\subst{w}{x}{M'}}{v}
}
\defrule{EV-zero}{EV-Zero}{
  \fneval{0}{0}
}{}
\defrule{EV-succ}{EV-Succ}{
  \fneval{\ms{s}(M)}{\ms{s}(v)}
}{
  \fneval{M}{v}
}
\defrule{EV-pfix}{EV-fix$^{n + 1}$}{
  \fneval{\tpFix{n + 1}{x}{M}}{v}
}{
  \fneval{\subst{\tpFix{n}{x}M}{x}{M}}{v}
}
\defrule{F-pfix}{F-Fix$^n$}{
  \jtypef{\Psi}{\tpFix{n}{x}{M}}{\tau}
}{
  \jtypef{\Psi, x : \tau}{M}{\tau}
}

\section{Background on Polarized SILL}
\label{cha:sill-background}

The Polarized SILL programming language~\cite{toninho_2013:_higher_order_proces_funct_session, pfenning_griffith_2015:_polar_subst_session_types,} cohesively integrates functional computation and message-passing concurrent computation.
Its concurrent computation layer arises from a proofs-as-processes correspondence between intuitionistic linear logic and the session-typed $\pi$-calculus~\cite{caires_pfenning_2010:_session_types_intuit_linear_propos}.
We give an overview of its statics and dynamics in \cref{sec:sill-background:overview-statics,sec:sill-background:overview-dynamics} before presenting the language in \cref{sec:sill-background:typing-mult-rewr}.
We highlight its key properties in \cref{sec:sill-background:key-props}.
In \cref{sec:sill-obs-equiv:relat-equiv}, we describe general properties of relations on its programs.

\subsection{Overview of Statics}
\label{sec:sill-background:overview-statics}

Processes are computational agents that interact with their environment solely through communication.
In Polarized SILL, communication happens over named channels, which we can intuitively think of as wires that carry messages.\DIndex{channel}
Moreover, communication on channels is bidirectional: in general, a process can both send and receive communications along the same channel.
Each channel has an associated session type $A$.
Session types~\cite{takeuchi_1994:_inter_based_languag,honda_1993:_types_dyadic_inter}\DIndex{{session type}} specify communication protocols, \ie, rules for communicating along channels.
Equivalently, we can think of session types as classifying communications, analogously to how data types classify values.
A channel's session type then specifies which communications are permitted on that channel.
The type system for processes ensures that communication on a channel of type $A$ respects the protocol specified by the session type $A$.

Processes in Polarized SILL are organized according to a client-server architecture, and we can think of session types as describing services\index{service|see{session type}} provided or used along channels.
A process $P$ always \defin{provides}\DIndex<{provided channel} or is a server for a service $A$ on a channel $c$, and it \defin{uses}\DIndex<{used channel} or is a client of zero or more services $A_i$ on channels $c_i$.
We write \(c : A\) to mean that the channel\varindex{session type of@ a@ channel}{12!~345}[|defin]\DIndex<{session-typed channel} \(c\) has type \(A\).
The used services form a linear context $\Delta = c_1 : A_1, \dotsc, c_n : A_n$.\varindex{context of@ channels}{1!~23}[|defin]\Index{linear context}
The process $P$ can use values from the functional layer.
These are abstracted by a structural context\Index{structural context} $\Psi$ of functional variables.
These data are captured by the inductively defined judgment $\jtypem{\Psi}{\Delta}{P}{c}{A}$\glsadd{jtypem}\DIndex{process typing judgment}.
We say that the process $P$ is \defin{closed}\DIndex{closed process} if it does not depend on any free variables, \ie, if $\jtypem{\cdot}{\Delta}{P}{c}{A}$.
This judgment is both generic\Index{generic judgment} and parametric\Index{parametric judgment}:\footnote{We refer the reader to \cite[\S~1.2]{harper_2016:_pract_found_progr_languag} or \cite[\S~2.5]{kavanagh_2021:_commun_based_seman} for background on generic and parametric judgments.} it is closed under renaming of channel names in $\Delta, c : A$, and it is closed under renaming and substitution of functional variables in $\Psi$.

At any given point in a computation, communication flows in a single direction on a channel ${c : A}$.
The direction of communication is determined by the \emph{polarity}\index{session type!polarity|defin}\index{polarity|see{session type, polarity}} of the type $A$, where session types are partitioned as \defin{positive}\DIndex<{positive {session type}} or \defin{negative}\DIndex<{negative {session type}}~\cite{pfenning_griffith_2015:_polar_subst_session_types}.
Consider a process judgment ${\jtypem{\Psi}{\Delta}{P}{c_0}{A_0}}$.
Communication on positively typed channels flows from left-to-right in this judgment: if $A_0$ is positive, then $P$ can only send on $c_0$, while if $A_i$ is positive for $1 \leq i \leq n$, then $P$ can only receive on $c_i$.
Symmetrically, communication on negatively typed channels flows from right-to-left in the judgment.
Bidirectional communication arises from the fact that the type of a channel evolves over the course of a computation, sometimes becoming positive, sometimes becoming negative.
We write $\jisst[+]{A}$\glsadd{jisst} to mean $A$ is positive and $\jisst[-]{A}$ to mean $A$ is negative.
Most session types have a polar-dual session type, where the direction of the communication is reversed.

Open session types are given by the inductively defined judgment $\jstype[p]{\Xi}{A}$\glsadd{jisft}, where $\Xi$ is a structural context\Index{structural context} of polarized type variables $\jisst[p_i]{\alpha_i}$ and $p, p_i \in \{{-},{+}\}$.\varindex{context of@ session-typed variables}{1!~234}[|defin]\varindex{session type judgment}{12!~3 3!1-2~}[|defin]\varindex{session type variable}{12!~3 3!1-2~}[|defin]
We abbreviate the judgment as $\jstype{\Xi}{A}$ when the polarity is unambiguous.
The session type \(A\) is closed\DIndex{closed {session type}} if it does not depend on any free variables, \ie, if \(\jstype{\cdot}{A}\).

The functional layer is the simply-typed $\lambda$-calculus with a fixed-point operator and a call-by-value evaluation semantics.
A judgment $\jtypef{\Psi}{M}{\tau}$\glsadd{jtypef} means the functional term $M$ has functional type $\tau$ under the structural context $\Psi$ of functional variables $x_i : \tau_i$.\DIndex{{functional term} typing judgment}\varindex{context of@ functional variables}{1!~234}[|defin]\DIndex{{functional term} variable}
This judgment's inductive definition is standard.
We say that the term \(M\) is closed\DIndex{closed {functional term}} if it does not depend on any free variables, \ie, if \(\jtypef{\cdot}{M}{\tau}\).
We use the judgment $\jftype{\Xi}{\tau}$\glsadd{jisft} to mean that $\tau$ is a functional type depending on polarized type variables $\jisst[p_i]{\alpha_i}$.\DIndex{{functional type} judgment}
The type \(\tau\) is closed\DIndex{closed {functional type}} if it does not depend on any free variables.
New is the base type $\Tproc{a:A}{\overline{a_i:A_i}}$ of quoted processes, where we abbreviate ordered lists using an overline.

We draw attention to the fundamental difference between \textit{variables} and \textit{channel names}.
A functional variable $x : \tau$ in a context $\Psi$ stands for a value of type $\tau$.
A channel name in $\Delta,c:A$ is a symbol\varindex{channel names are symbols symbol}{1!~234 5!12}: it stands not for a value, but for a channel of typed bidirectional communications.
In particular, channel names can only be renamed; unlike functional variables, nothing can be substituted for a channel name.

\subsection{Overview of Dynamics}
\label{sec:sill-background:overview-dynamics}

The operational behaviour of processes in Polarized SILL is defined by a substructural operational semantics~\cite{simmons_2012:_subst_logic_specif} in the form of a multiset rewriting system.
This multiset rewriting system uses three different kinds of facts.
The two most commonly encountered facts involve processes and messages.
The process fact $\jproc{c}{P}$\glsadd{jproc}\DIndex{process fact} means that the closed process $P$ provides a channel $c$.
The message fact $\jmsg{c}{m}$\glsadd{jmsg}\DIndex{message fact} means that the message process $m$ provides a channel $c$.
Message processes represent single messages or pieces of data sent on a channel, and they are closed processes written in a restricted fragment of the process language.
Process communication is asynchronous: processes send messages without synchronizing with recipients.
Messages sent on a given channel are received in order.
However, there is no global ordering on sent messages: messages sent on different channels can be received out of order.

The behaviour of the functional layer is specified by the set $\mb{F}$ of persistent evaluation facts $\jeval{M}{v}$\glsadd{jeval}\DIndex{evaluation fact}, where $\jeval{M}{v}$ if and only if the closed term $M$ evaluates to the value $v$ under the standard call-by-value semantics.
Explicitly, \(\jeval{M}{v}\) if and only if \(\fneval{M}{v}\)\glsadd{fneval}\DIndex{evaluation judgment}, where \(\fneval{M}{v}\) is the usual evaluation semantics.
It is inductively defined in \cref{fig:sill-background:2}.
We write $\fnval{v}$\glsadd{fnval} if $v$ is a value, \ie, if $\fneval{v}{v}$.
The fact \(\jeval{M}{v}\) captures an evaluation relation instead of a transition relation because we only ever observe values from the process layer, and we never need to observe its individual steps.
For conciseness, we do not mention this set of facts in our multisets and instead treat it implicitly.

\begin{figure}
  \begin{gather*}
    \getrule{EV-fun}
    \qquad
    \getrule{EV-app}
    \\
    \getrule{EV-proc}
    \qquad
    \getrule{EV-fix}
  \end{gather*}
  \caption{Big-step semantics underlying Polarized SILL's functional layer}
  \label{fig:sill-background:2}
\end{figure}

Recall from \cref{sec:ssos-fairness:mult-rewr-syst:first-order-multiset-persistence} that a multiset-in-context \(\msinc{\Sigma}{\persfnt{\Pi}, M}\) is formed of a set \(\persfnt{\Pi}\) of persistent facts, a multiset \(M\) of ephemeral facts, and a signature \(\Sigma\) extending an initial signature \(\Sigma_i\) to contain all symbols appearing in \(\persfnt{\Pi}, M\).
We let \(\mc{C}\) range over multisets formed of process, message, and evaluation facts.
For consistency with the literature, we call a multiset-in-context \(\msinc{\Sigma}{\mc{C}}\) in a process trace a \defin{configuration}\DIndex{configuration}\Index>{multiset-in-context configuration}.
We leave the initial signature \(\Sigma_i\) of process and fact symbols implicit when writing configurations.
A \defin{process trace}\DIndex{process trace} is a trace from the initial configuration of a process.
The \defin{initial configuration}\DIndex<{initial configuration} of a well-typed process $\jtypem{\cdot}{c_1:A_1,\dotsc,c_n:A_n}{P}{c_0}{A_0}$ is the multiset-in-context
\[
  \msinc{c_0,\dotsc,c_n}{\jproc{c_0}{P}},
\]
where the multiset $\mb{F}$ of $\jeval{M}{v}$ facts is implicitly present.
A \defin{fair execution}\varindex{fair process execution}{2!1~3 3!12~}[|defin] of ${\jtypem{\cdot}{\Delta}{P}{c}{A}}$ is a weakly fair execution from its initial configuration.
By \cref{prop:sill-background-dyn-prop-typed-config:1,prop:ssos-fairness/prop-fair-trac:1}, the multiset rewriting system specifying Polarized SILL is non-overlapping and interference-free from initial configurations.
It then follows by \cref{prop:ssos-fairness/prop-fair-trac:2,prop:sill-background-dyn-prop-typed-config:1} that all weakly fair executions also strongly fair and über fair.

The substructural operational semantics maintains several invariants.
Chief among these is a type preservation property (\cref{prop:sill-background:1}), where each configuration appearing in a trace is well-typed, and where each multiset rewrite rule preserves the type of the configuration.
Concretely, we introduce a type system for configurations inspired by one due to \textcite[\S~4.4]{gommerstadt_2018:_session_typed_concur_contr}.
It assigns a session type to each free channel appearing in a configuration.
The preservation property states that the types of channels not free in the active multiset of a rule remain unchanged in the result, and that the types of external channels remain unchanged.
This approach is in contrast to the one taken by \textcite{kavanagh_2020:_subst_obser_commun_seman}, which conservatively extended the underlying substructural operational semantics to track typing information at runtime.
Advantageously, our approach preserves the distinction between operational rules and typing concerns, and it requires no changes to the original substructural operational semantics.

The typing judgment $\jcfgti[\Sigma]{\Gamma}{\Iota}{\mc{C}}{\Delta}$\glsadd{jcfgti}\DIndex{configuration typing judgment} means that the configuration $\msinc{\Sigma}{\mc{C}}$ \textit{uses}\Index<{used channel} the channels in $\Gamma$, \textit{provides}\Index<{provided channel} the channels in $\Delta$, and has internal channels\DIndex<{internal channel} $\Iota$.
Here, $\Gamma = \gamma_1 : A_1, \dotsc, \gamma_n : A_n$, $\Delta = \delta_1 : B_1, \dotsc, \delta_m : B_m$, and $\Iota = \iota_1 : C_1, \dotsc, \iota_k : C_k$ are linear contexts\Index{linear context} of session-typed channel names, with $n, k \geq 0$ and $m \geq 1$.
Write $\check\Gamma$\glsadd{check} for the list $\gamma_1, \dotsc, \gamma_n$ of channel names appearing in $\Gamma$.
The judgment $\jcfgti[\Sigma]{\Gamma}{\Iota}{\mc{C}}{\Delta}$ is well-formed only if the channel names in $\check \Gamma, \check \Delta, \check \Iota$ are pairwise distinct and $\check \Gamma, \check \Delta, \check \Iota \subseteq \Sigma$.
This judgment is parametric in $\Sigma$: it is closed under renaming of channel names in \(\Sigma\), and we can freely extend \(\Sigma\) with new channel names.
We usually leave $\Sigma$ implicit and write $\jcfgti{\Gamma}{\Iota}{\mc{C}}{\Delta}$ or \(\jcfgt{\Gamma}{\mc{C}}{\Delta}\) for $\jcfgti[\Sigma]{\Gamma}{\Iota}{\mc{C}}{\Delta}$.
We call the pair $(\Gamma, \Delta)$ the \defin{interface}\DIndex>{configuration interface}\index{interface|see{configuration, interface}} of $\mc{C}$.
In contrast to processes, configurations can provide multiple channels.
This is to allow for applications like run-time monitoring~\cite{gommerstadt_2018:_session_typed_concur_contr}.
To avoid overloading commas in interfaces and elsewhere, we often use juxtaposition to denote the concatenation of contexts of session-typed channels, \ie, we write \(\Gamma\Delta\) for the concatenation \(\gamma_1 : A_1, \dotsc, \gamma_n : A_n, \delta_1 : B_1, \dotsc, \delta_m : B_m\) of \(\Gamma\) and \(\Delta\).
We remark that the multiset $\mb{F}$ of $\jeval{M}{v}$ facts is implicitly contained in $\mc{C}$ in every judgment $\jcfgti[\Sigma]{\Gamma}{\Iota}{\mc{C}}{\Delta}$.

The judgment \( \jcfgti[\Sigma]{\Gamma}{\Iota}{\mc{C}}{\Delta} \) is inductively defined by the rules \getrn{conf-m}, \getrn{conf-p}, and \getrn{conf-c}:
\[
  \getrule{conf-m}
  \quad
  \getrule{conf-p}
\]
\[
  \getrule{conf-c}
\]
The rules \getrn{conf-m} and \getrn{conf-p} lift closed messages and processes to message and process facts, while preserving their used and provided channels.
In \getrn{conf-m}, we assume that $m$ ranges over ``message processes'' $m^+$ and $m^-_{b,c}$.
These message processes are a restricted class of processes defined in \cref{eq:sill-background-typing-mult-rewr:1,eq:sill-background-typing-mult-rewr:2}.
The composition\DIndex{configuration composition} rule \getrn{conf-c} is a ``parallel composition plus hiding'' operation (\cf~\cite[pp.~20f.]{milner_1980:_calcul_commun_system}).
It composes two configurations \(\mc{C}\) and \(\mc{D}\) so that they communicate along some common (but potentially empty) collection of channels \(\Pi\).
These channels are then hidden from external view: they do not appear in the interface \((\Gamma\Lambda, \Phi\Xi)\), but instead appear in the composition's context \(\Iota_1 \Pi \Iota_2\) of internal channels.
Without loss of generality, we assume that $\Sigma \cap \Sigma' = \emptyset$ (so $\check I_1 \cap \check I_2 = \emptyset$) when composing $\mc{C}$ and $\mc{D}$.
This requirement ensures that the internal channels in $\mc{C}$ do not interfere with those in $\mc{D}$, and vice-versa.
Because the hypotheses are parametric in $\Sigma$ and $\Sigma'$, we can always rename those channels to ensure that this is the case.

We can recognize \getrn{conf-c} as a composition operator in a pluricategory.\footnote{Pluricategories~\cite[Definition~2.1.11]{kavanagh_2021:_commun_based_seman} generalize polycategories~\cite{szabo_1975:_polyc} to allow for composition along multiple objects.
  In our case, the pluricategory's objects are session-typed channels, and its morphisms \(\Gamma \to \Delta\) are typed configuration \(\jcfgt{\Gamma}{\mc{C}}{\Delta}\).}
We make this observation only to draw attention to the fact that the rule \getrn{conf-c} determines an associative and partially commutative partial composition operator.

\subsection{Typing and Multiset Rewriting Rules}
\label{sec:sill-background:typing-mult-rewr}

In this section, we give the typing rules that inductively define well-typed functional terms and processes.
In each case, we also give the associated multiset rewriting rules.
With a few exceptions and for conciseness, we only give the rules for processes that provide positive session types.
All rules can be found in \cref{sec:sill-background:compl-list-mult}.

\subsubsection{Manipulating Channels}
\label{sec:sill-background:manip-chann}

The forwarding process $\tFwdP{b}{a}$ forwards all messages between channels $a$ and $b$ of the same positive type.\DIndex<{forwarding process}
The process $\tFwdN{b}{a}$ is the dual for channels of negative type.
We remark that the syntax reflects the direction in which messages flow.
Though some presentations use a single forwarding process for both polarities, it is useful for practical and semantic concerns to syntactically differentiate between forwarding positive communications and forwarding negative communications.
\begin{gather}
  \getrule{fwdp}
  \quad
  \getrule{fwdn}
  \nonumber
\end{gather}
These processes act on messages $m^+$\glsadd{posmsg} and $m^-_{b,c}$\glsadd{negmsg} travelling in the positive and negative directions.
These messages are respectively \defin{message processes}\DIndex{message process} given by \cref{eq:sill-background-typing-mult-rewr:1,eq:sill-background-typing-mult-rewr:2}.
Their meaning will be explained below.
The letters \(a\), \(b\), \(c\), and \(d\) range over channel names, \(l\) ranges over labels, and \(v\) ranges over functional values.
\begin{align}
  m^+ &\Coloneqq \mSendVP{a}{v}{d} \mid \mSendSP{a}{d} \mid \mSendLP{a}{l}{d} \mid \mSendCP{a}{b}{d}\nonumber\\
      &\;\;\mid \mSendUP{a}{d} \mid \mClose{a}\label{eq:sill-background-typing-mult-rewr:1}\\
  m^-_{b,c} &\Coloneqq \mSendVN{b}{v}{c} \mid \mSendSN{b}{c} \mid \mSendLN{a}{l}{c} \mid \mSendCN{a}{b}{c}\nonumber\\
      &\;\;\mid \mSendUN{a}{c}\label{eq:sill-background-typing-mult-rewr:2}
\end{align}
The subscripts on $m^-_{b,c}$ serve to indicate which channel names appear in the message fact, and they ensure that \cref{eq:sill:msr-fwdn} is only applicable when the process fact and the message fact have a common channel.
The operational behaviour is given by \cref{eq:sill:msr-fwdp,eq:sill:msr-fwdn}.
Properly speaking, there is an instance of these rules for each different kind of message $m^+$ and $m^-_{b,c}$.
We implicitly universally quantify on all free channel names appearing in these multiset rewriting rules:
\begin{gather}
  \label[msr]{eq:sill:msr-fwdp}
  \jmsg{a}{m^+},
\jproc{b}{\tFwdP{a}{b}}
\to
\jmsg{b}{\subst{b}{a}{m^+}}
\\
  \label[msr]{eq:sill:msr-fwdn}
  \jproc{b}{\tFwdN{a}{b}},
\jmsg{c}{m^-_{b,c}}
\to
\jmsg{c}{\subst{a}{b}{m^-_{b,c}}}

\end{gather}

Process composition $\tCut{a}{P}{Q}$\DIndex{process composition} captures Milner's ``parallel composition plus hiding'' operation~\cite[pp.~20f.]{milner_1980:_calcul_commun_system}.
It spawns processes $P$ and $Q$ that communicate over a shared private channel~$a$ of type $A$.
Those familiar with the \(\pi\)-calculus may like to think of this syntax as analogous to the \(\pi\)-calculus process \((\nu a)(P \mid Q)\).
\begin{gather}
  \getrule{cut}
  \nonumber\\
  \label[msr]{eq:sill:msr-cut}
  \jproc{c}{\tCut{a}{P}{Q}} \to \exists b. \jproc{b}{\subst{b}{a}{P}}, \jproc{c}{\subst{b}{a}{Q}}

\end{gather}
We remark that in \eqref{eq:sill:msr-cut} we do not quantify over the channel name $a$.
This is because the channel name $a$ is a bound in processes $P$ and $Q$.

Processes can close channels of type $\Tu$\glsadd{T-tu}.\varindex{{session type} {unit type}}{1!2}[|defin]
To do so, the process $\tClose a$ sends a ``close message'' over the channel $a$ and terminates.
The process $\tWait{a}{P}$ blocks on $a$ until it receives the close message and then continues as $P$.
\begin{gather}
  \getrule{T-tu}
  \quad
  \getrule{R-tu}
  \quad
  \getrule{L-tu}
  \nonumber\\
  \label[msr]{eq:sill:msr-tu-l}
  \jmsg{a}{\mClose{a}},
\jproc{c}{\tWait{a}{P}}
\to
\jproc{c}{P}
\\
  \label[msr]{eq:sill:msr-tu-r}
  \jproc{a}{\tClose{a}}
\to
\jmsg{a}{\mClose{a}}

\end{gather}
The positive type $\Tu$ does not have a negative dual.
It would require a detached process with no client.

\begin{example}
  \label{ex:sill-background-typing-mult-rewr:6}
  The following process is, informally speaking, equivalent to the forwarding process \(\jtypem{\cdot}{a : \Tu}{\tFwdP{a}{b}}{b}{\Tu}\) for channels of type \(\Tu\):
  \[
    \jtypem{\cdot}{a : \Tu}{\tWait{a}{\tClose{b}}}{b}{\Tu}.
  \]
  Indeed, if no close message arrives on \(a\), then neither process does anything.
  If a close message arrives on \(a\), then both processes send a close message on \(b\) and terminate:
  \begin{align*}
    \jmsg{a}{\mClose{a}}, \jproc{b}{\tFwdP{a}{b}} &\msstep \jmsg{b}{\mClose{b}}\\
    \jmsg{a}{\mClose{a}}, \jproc{b}{\tWait{a}{\tClose{b}}} &\mssteps \jmsg{b}{\mClose{b}}
  \end{align*}
\end{example}

Processes can send and receive channels over channels.
The protocol $B \Tot A$ prescribes transmitting a channel of type $B$ followed by communication of type $A$.
The process $\tSendC{a}{b}{P}$ sends the channel $b$ over the channel $a$ and continues as $P$.
The process $\tRecvC{b}{a}{P}$ receives a channel over $a$, binds it to the name $b$, and continues as $P$.
In particular, the channel name \(b\) is bound in \(P\) in $\tRecvC{b}{a}{P}$.
To ensure a queue-like structure for messages on $a$, we generate a fresh channel name $d$ in \cref{eq:sill:msr-tot-r} for the ``continuation channel'' that will carry subsequent communications.
Operationally, we rename $a$ in $P$ to the continuation channel $d$ carrying the remainder of the communications.
\begin{gather}
  \getrule{T-tot}
  \nonumber
  \\
  \getrule{R-tot}
  \quad
  \getrule{L-tot}
  \nonumber
  \\
  \label[msr]{eq:sill:msr-tot-r}
  \jproc{a}{\tSendC{a}{b}{P}}
\to
\exists d.
\jproc{d}{\subst{d}{a}{P}},
\jmsg{a}{\mSendCP{a}{b}{d}}

  \\
  \label[msr]{eq:sill:msr-tot-l}
  \jmsg{a}{\mSendCP{a}{e}{d}},
\jproc{c}{\tRecvC{b}{a}{P}}
\to
\jproc{c}{\subst{e,d}{b,a}{P}}

\end{gather}

\begin{example}
  \label{ex:sill-background-typing-mult-rewr:7}
  The following closed process \(P\) sends a channel \(a : \Tu\) over \(b : \Tu \Tot \Tu\), and closes \(b\):
  \[
    \jtypem{\cdot}{a : \Tu}{\tSendC{b}{a}{\tClose{b}}}{b}{\Tu \Tot \Tu}
  \]
  The following closed process \(Q\) receives a channel of type \(\Tu\) on \(b\) and binds it to the name \(d\).
  Then it waits for a close message on \(b\) and forwards \(d\) over \(c\):
  \[
    \jtypem{\cdot}{b : \Tu \Tot \Tu}{\tRecvC{d}{b}{\tWait{b}{\tFwdP{d}{c}}}}{c}{\Tu}.
  \]
  Their composition
  \[
    \jtypem{\cdot}{a : \Tu}{\tCut{b}{\left(\tSendC{b}{a}{\tClose{b}}\right)}{\left(\tRecvC{d}{b}{\tWait{b}{\tFwdP{d}{c}}}\right)}}{c}{\Tu}
  \]
  spawns \(P\) and \(Q\) and eventually forwards the channel \(a\) over \(c\):
  \begin{align*}
    &\jproc{c}{\tCut{b}{P}{Q}}\\
    &\msstep \jproc{b_1}{\tSendC{b_1}{a}{\tClose{b_1}}}, \jproc{c}{\tRecvC{d}{b_1}{\tWait{b_1}{\tFwdP{d}{c}}}}\\
    &\mssteps \jproc{b_2}{\tClose{b_2}}, \jproc{c}{\tWait{b_2}{\tFwdP{a}{c}}}\\
    &\mssteps \jproc{c}{\tFwdP{a}{c}}.\qedhere
  \end{align*}
\end{example}

The protocol \(B \Tlolly A\) is the negative dual of \(B \Tot A\).
Recall that polar-dual session types prescribe the same kind of communications, but in opposite directions.
In this case, where a provider of type $B \Tot A$ \emph{sends} a channel of type $B$, a provider of type $B \Tlolly A$ \emph{receives} a channel of type $B$.
The negative dual $B \Tlolly A$ of $B \Tot A$ is subtle: for linear-logical reasons, the polarities of $A$ and $B$ differ.
Indeed, though $B \Tlolly A$ and $A$ are negative, $B$ must be positive:
\begin{gather}
  \getrule{T-tlolly}
  \nonumber
  \\
  \getrule{R-tlolly}
  \quad
  \getrule{L-tlolly}
  \nonumber
  \\
  \label[msr]{eq:sill:msr-tlolly-r}
  \jproc{a}{\tRecvC{b}{a}{P}},
\jmsg{d}{\mSendCN{a}{e}{d}}
\to
\jproc{d}{\subst{e,d}{b,a}{P}}

  \\
  \label[msr]{eq:sill:msr-tlolly-l}
  \jproc{c}{\tSendC{a}{b}{P}}
\to
\exists d.
\jmsg{d}{\mSendCN{a}{b}{d}},
\jproc{c}{\subst{d}{a}P}

\end{gather}

\subsubsection{Functional Programming and Value Transmission}
\label{sec:sill-background:funct-progr-value}

The only base types in the functional layer are the types $\Tproc{a_0:A_0}{a_1:A_1,\dotsc,a_n:A_n}$ of quoted processes.\footnote{Polarized SILL can straightforwardly be extended to support other base types.}
These types are formed by the rule \rn{T\{\}}\glsadd{T-proc}.
The functional layer also supports function types \(\tau \to \sigma\).\varindex{functional type function type}{12!34}[|defin]
These are formed by the rule \rn{T$\to$}\glsadd{T-to}.
We assume that types are closed whenever they appear in a typing judgment for terms or processes.
\[
  \getrule{T-proc}
  \quad
  \getrule{T-to}
\]

Most of the introduction and elimination rules\varindex{{functional term} {introduction and elimination rules}}{1!2} for functional terms are standard.
New is the introduction rule \rn{I-\{\}} for quoted processes.
It encapsulates a process \(P\) as a value \(\tProcQ{a}{P}{\overline{a_i}}\) of quoted process type, where we abbreviate ordered lists using an overline.
Again, functional terms are not associated with any multiset rewriting rules: their operational behaviour is captured by the relation $\jeval{M}{v}$.\Index{fixed point}
\begin{gather*}
  \getrule{F-var}
  \quad
  \getrule{F-fix}
  \\
  \getrule{F-fun}
  \quad
  \getrule{F-app}
  \\
  \getrule{I-proc}
\end{gather*}

The elimination form for quoted process terms \(M\) is the process \(\tProcU{a}{M}{\overline{a_i}}\).
Operationally, this form evaluates the quoted process term \(M\) to a value \(v\), and then spawns the associated quoted process.
By the canonical forms property~\cite[Proposition~5.8.2]{kavanagh_2021:_commun_based_seman}, $v$ will always be a quoted process $\tProcQ{a}{P}{\overline{a_i}}$.
\[
  \getrule{E-proc}
\]
\begin{equation}
  \label[msr]{eq:sill:msr-e-qt}
  \jeval{M}{\tProc{a}{P}{\overline{a_i}}}, \jproc{a}{\tProc{a}{M}{\overline{a_i}}} \msstep \jproc{a}{P}

\end{equation}

The elimination rule \getrn{E-proc} differs from the original elimination rule given by \textcite[354]{toninho_2013:_higher_order_proces_funct_session}.
There, the elimination rule was a monadic bind similar to the \rn{Cut} rule below, where a quoted process could only be unquoted if it was composed with a continuation process.
Though the original rule has the advantage of enforcing a monadic discipline on the interaction between the functional and process layers, it complicates writing and reasoning about recursive processes.
Indeed, one cannot directly make a recursive tail call, but must instead always compose the recursive call with a continuation process.
We lose nothing by not requiring a continuation process: the original rule is a derived rule in our setting, and our rule can be defined as syntactic sugar in the original version of SILL.

\begin{example}
  \label{ex:sill-background-typing-mult-rewr:2}
  Recursive processes\Index{recursive process} are implemented using the functional layer's fixed point operator.
  An important recursive process is the \defin{divergent process}\DIndex<{divergent process} $\jtypem{\Psi}{\Delta}{\Omega}{c}{A}$\glsadd{diverge}, which exists for all $\Psi$ and $\Delta, {c : A}$.
  Let $\Omega'$ be the term given by $\jtypef{\Psi}{\tFix{\omega}{\tProc{c}{\tProc{c}{\omega}{\check \Delta}}{\check \Delta}}}{\Tproc{c : A}{\Delta}}$.
  Observe that $\jeval{\Omega'}{\tProc{c}{\tProc{c}{\Omega'}{\check \Delta}}{\check \Delta}}$.
  The process $\Omega$ is given by:
  \[
    \jtypem{\Psi}{\Delta}{\tProc{c}{\Omega'}{\check \Delta}}{c}{A}.
  \]
  By \cref{eq:sill:msr-e-qt}, $\jproc{c}{\Omega} \msstep \jproc{c}{\Omega}$.
\end{example}

\begin{remark}
  \label{rem:sill-background-typing-mult-rewr:5}
  \Cref{ex:sill-background-typing-mult-rewr:2} illustrates the subtle interplay between recursion and linearity.
  Though the linearity of channel contexts ensures that no channels are discarded, linearity cannot guarantee that all channels are used in the presence of recursion.
  Indeed, the divergent process \(\Omega\) communicates on none of its channels.
\end{remark}

Functional values can be sent over channels of type $\Tand{\tau}{A}$.\varindex{{session type} {value transmission}}{1!2}[|defin]
This positive protocol, formed by the rule \rn{C$\Tand{}{}$}\glsadd{T-tand}, specifies that the sent value has type $\tau$ and that subsequent communication has type $A$.
The process $\tSendV{a}{M}{P}$ evaluates the term $M$ to a value $v$, sends $v$ over the channel $a$, and continues as $P$.
The process $\tRecvV{x}{a}{Q}$ receives a value $v$ on $a$ and continues as $\subst{v}{x}{Q}$.
These behaviours are captured by multiset rewrite rules \labelcref{eq:sill:msr-tand-r,eq:sill:msr-tand-l}.
\begin{gather}
  \getrule{T-tand}
  \nonumber\\
  \begin{adjustbox}{max width=\textwidth}
    \getrule{R-tand}
    \quad
    \getrule{L-tand}
  \end{adjustbox}\nonumber\\
  \label[msr]{eq:sill:msr-tand-r}
  \jeval{M}{v}, \jproc{a}{\tSendV{a}{M}{P}} \to \exists d . \jproc{d}{\subst{d}{a}{P}}, \jmsg{a}{\mSendVP{a}{v}{d}}
\\
  \label[msr]{eq:sill:msr-tand-l}
  \jmsg{a}{\mSendVP{a}{v}{d}},\jproc{c}{\tRecvV{x}{a}{P}} \to \jproc{c}{\subst{d,v}{a,x}{P}}

\end{gather}

\begin{example}
  \label{ex:sill-background-typing-mult-rewr:8}
  The following process sends values \(v_1 : \tau_1\), \(v_2 : \tau_2\), and \(v_3 : \tau_3\) on the channel \(a\) before divergently providing a service of type \(A\):
  \[
    \jtypem{\cdot}{\cdot}{\tSendV{a}{v_1}{\tSendV{a}{v_2}{\tSendV{a}{v_3}{\Omega}}}}{a}{\Tand{\tau_1}{\left(\Tand{\tau_2}{\left(\Tand{\tau_3}{A}\right)}\right)}}.
  \]
  The queue-like structure of message processes \(\mSendVP{a}{v}{d}\) is analogous to the queues to \cref{ex:fair-mult-rewr:1}.
  Combined with \cref{eq:sill:msr-tand-l}, it ensures that messages are received in order.
  Indeed, consider the following trace of the above process:
  \begin{align*}
    &\jproc{a}{\tSendV{a}{v_1}{\tSendV{a}{v_2}{\tSendV{a}{v_3}{\Omega}}}}\\
    &\msstep \jproc{b}{\tSendV{a}{v_2}{\tSendV{a}{v_3}{\Omega}}}, \jmsg{a}{\mSendVP{a}{v_1}{b}}\\
    &\msstep \jproc{c}{\tSendV{c}{v_3}{\Omega}}, \jmsg{b}{\mSendVP{b}{v_2}{c}}, \jmsg{a}{\mSendVP{a}{v_1}{b}}\\
    &\msstep \left(\jproc{d}{\Omega}, \jmsg{c}{\mSendVP{c}{v_3}{d}}, \jmsg{b}{\mSendVP{b}{v_2}{c}},\right.\\
    &\qquad\qquad\qquad\left.\jmsg{a}{\mSendVP{a}{v_1}{b}}\right).
  \end{align*}
  Though multisets do not impose an order on their elements, any process \(P\) using the channel \(a\) will receive the values \(v_1\), \(v_2\), and \(v_3\) in that order.
  This is because \(P\) cannot know the name of the continuation channel \(b\) carrying \(v_2\) until it has received the value \(v_1\).
  Similarly, it cannot know the name of the continuation channel \(c\) carrying \(v_3\) until it has received the value \(v_2\).
\end{example}

\begin{example}
  \label{ex:sill-background-typing-mult-rewr:5}
  The following closed process \(P\) receives a function \(f\) of type \(\tau \to \sigma\) and a value \(x\) of type \(\tau\) on the channel \(a\).
  It evaluates \(f(x)\) and sends the corresponding value of type \(\sigma\) on \(b\), before forwarding \(a\) to \(b\):
  \[
    \jtypem{\cdot}{a : \Tand{(\tau \to \sigma)}{(\Tand{\tau}{A})}}{\tRecvV{f}{a}{\tRecvV{x}{a}{\tSendV{b}{f(x)}{\tFwdP{a}{b}}}}}{b}{\Tand{\sigma}{A}}
  \]
  If we take \(\tau = \sigma = (\delta \to \delta)\) for some type \(\delta\), then the following execution shows that sending \(P\) the process the values \(\lambda x : \tau . x\) and \(\lambda x : \delta . x\) on \(a\) causes it to send the value \(\lambda x : \delta . x\) on \(b\):
  \begin{align*}
    &\jmsg{c}{\mSendVP{b}{(\lambda x : \delta . x)}{d}}, \jmsg{a}{\mSendVP{a}{(\lambda x : \tau . x)}{c}}, \jproc{b}{P}\\
    &\msstep \jmsg{c}{\mSendVP{b}{(\lambda x : \delta . x)}{d}}, \jproc{b}{\tRecvV{x}{c}{\tSendV{b}{(\lambda x : \tau.x)x}{\tFwdP{c}{b}}}}\\
    &\msstep \jproc{b}{\tSendV{b}{(\lambda x : \tau.x)(\lambda x : \delta .x)}{\tFwdP{d}{b}}}\\
    &\msstep \jproc{e}{\tFwdP{d}{e}}, \jmsg{b}{\mSendVP{c}{(\lambda x : \delta . x)}{e}}\qedhere
  \end{align*}
\end{example}

\begin{example}
  \label{ex:sill-background-typing-mult-rewr:3}
  Polarized SILL supports remote code execution.
  The following process receives a quoted process $p$ of type $\Tproc{c : C}{a : B}$ over a channel $a$.
  After receipt of $p$, the channel $a$ has type $B$.
  The continuation process unquotes $p$ to provide a channel $c$ of type $C$ using the channel $a$:
  \[
    \jtypem{\Psi}{a : \Tand{\Tproc{c : C}{a : B}}{B}}{\tRecvV{p}{a}{\tProc{c}{p}{a}}}{c}{C}.\qedhere
  \]
\end{example}

The protocol $\Timp{\tau}{A}$ is the negative dual of $\Tand{\tau}{A}$.

\subsubsection{Choices}
\label{sec:sill-background:choices}

Processes can choose between services.\varindex{{session type} {choice type}}{1!2}[|defin]\index{internal choice|see{session type, choice type}}
An internal choice type  $\Tplus \{ l : A_l \}_{l \in L}$\glsadd{T-tplus} prescribes a choice between session types $\{ A_l \}_{l \in L}$ ($L$ finite).
The process $\tSendL{a}{k}{P}$ chooses to provide the service $A_k$ by sending the label $k$ on $a$, and then continues as $P$.
The process $\tCase{a}{\left\{l \Rightarrow P_l\right\}_{l \in L}}$ blocks until it receives a label $k$ on $a$ and then continues as $P_k$.
\begin{gather}
  \getrule{T-tplus}
  \qquad
  \getrule{R-tplus}
  \nonumber
  \\
  \getrule{L-tplus}
  \nonumber
  \\
  \label[msr]{eq:sill:msr-tplus-r}
  \jproc{a}{\tSendL{a}{k}{P}}
\to {}
\exists d.
\jproc{d}{\subst{d}{a}{P}},
\jmsg{a}{\mSendLP{a}{k}{d}}

  \\
  \label[msr]{eq:sill:msr-tplus-l}
  \jmsg{a}{\mSendLP{a}{k}{d}},
\jproc{c}{\tCase{a}{\left\{l \Rightarrow P_l\right\}_{l \in L}}}
\to
\jproc{c}{\subst{d}{a}{P_k}}

\end{gather}
The polar dual of the internal choice type $\Tplus \{l:A_l\}_{l \in L}$ is the external choice\index{external choice|see{session type, choice type}} type $\Tamp \{l:A_l\}_{l \in L}$\glsadd{T-tamp}.

\begin{example}
  \label{ex:sill-background-typing-mult-rewr:10}
  The following process \(P\) diverges if it receives the label \(\mt{div}\) on \(a\), and it forwards \(a\) to \(b\) if it receives the label \(\mt{fwd}\):
  \(
  \jtypem{\cdot}{a : \Tplus \{ \mt{div} : A, \mt{fwd} : B \}}{\tCase{a}{\{\mt{div} \Rightarrow \Omega \mid \mt{fwd} \Rightarrow \tFwdP{a}{b}\}}}{b}{B} \).
  Indeed, if \(a\) carries the label \(\mt{div}\), then \( \jmsg{a}{\tSendL{a}{\mt{div}}{c}}, \jproc{b}{P} \msstep \jproc{b}{\Omega} \).
  If \(a\) carries the label \(\mt{fwd}\), then \( \jmsg{a}{\tSendL{a}{\mt{fwd}}{c}}, \jproc{b}{P} \msstep \jproc{b}{\tFwdP{c}{b}} \).
\end{example}

\subsubsection{Shifts in Polarity}
\label{sec:sill-background:shifts-polarity}

Process communication is asynchronous.\varindex{{session type} {polarity shift}}{1!2}[|defin]\index{synchronization|see{session type, polarity shift}}
Synchronization on a channel is encoded using ``polarity shifts''~\cite{pfenning_griffith_2015:_polar_subst_session_types}.
The positive protocol $\Tds A$\glsadd{T-tds} prescribes a synchronization (a shift message) followed by communication satisfying the negative type $A$.
The process $\tSendS{a}{P}$ signals that it is ready to receive on $a$ by sending a ``shift message'' on $a$, and continues as $P$.
The process $\tRecvS{a}{P}$ blocks until it receives the shift message and continues as~$P$.
\begin{gather}
  \getrule{T-tds}
  \nonumber
  \\
  \getrule{R-tds}
  \quad
  \getrule{L-tds}
  \nonumber
  \\
  \label[msr]{eq:sill:msr-tds-r}
  \jproc{a}{\tSendS{a}{P}}
\to
\exists d.
\jproc{d}{\subst{d}{a}{P}},
\jmsg{a}{\mSendSP{a}{d}}

  \\
  \label[msr]{eq:sill:msr-tds-l}
  \jmsg{a}{\mSendSP{a}{d}},
\jproc{c}{\tRecvS{a}{P}}
\to
\jproc{c}{\subst{d}{a}{P}}

\end{gather}
The dual of the positive type $\Tds A$ is the negative type $\Tus A$\glsadd{T-tus}.

At first glance, polarity shifts may appear to be special cases of choice types.
The key difference is that the direction of communication does not change with choice types: the label and subsequent communications travel in the same direction.
In contrast, the shift message and subsequent communications travel in opposite directions.

\subsubsection{Recursive Types}
\label{sec:sill-background:recursive-types}

The recursive type $\Trec{\alpha}{A}$\glsadd{T-trec} prescribes an ``unfold'' message followed by communication of type $[\Trec{\alpha}{A}/\alpha]A$.\DIndex{recursive {session type}}
To ensure that unfolding a recursive type is well-defined, we require that the type variable $\alpha$ have the same polarity as the recursive type.
The process $\tSendU{a}{P}$ sends an unfold message and continues as $P$.
The process $\tRecvU{a}{P}$ receives an unfold message and continues as $P$.
\begin{gather}
  \getrule{T-var}
  \quad
  \getrule{T-rhop}
  \nonumber
  \\
  \getrule{R-rhop}
  \nonumber
  \\
  \getrule{L-rhop}
  \nonumber
  \\
  \label[msr]{eq:sill:msr-rhop-r}
  \jproc{a}{\tSendU{a}{P}}
\to
\exists d.
\jproc{d}{\subst{d}{a}{P}},
\jmsg{a}{\mSendUP{a}{d}}

  \\
  \label[msr]{eq:sill:msr-rhop-l}
  \jmsg{a}{\mSendUP{a}{d}},
\jproc{c}{\tRecvU{a}{P}}
\to
\jproc{c}{\subst{d}{a}{P}}

\end{gather}

\begin{example}
  \label{ex:sill-background-typing-mult-rewr:1}
  The set of conatural numbers is given by \(\N \cup \{ \omega \}\), where \(\N\) is the usual set of natural numbers, and \(\omega\) corresponds to a countably infinite stack of successors \(s(s(s(\cdots)))\).
  The protocol $\mt{conat} = \Trec{\alpha}{\Tplus \{ \mt{z} : \Tu, \mt{s} : \alpha \}}$ encodes conatural numbers.
  Indeed, a communication is either an infinite sequence of successor labels $\mt{s}$, or some finite number of $\mt{s}$ labels followed by the zero label $\mt{z}$ and termination.
  The following recursive process \(\mt{omega}\) outputs \(\omega\) on $o$:
  \[
    \jtypem{\cdot}{\cdot}{\tFix{\omega}{\tSendU{o}{\tSendL{\mt{s}}{o}{\omega}}}}{o}{\mt{conat}}.
  \]
  It has an infinite fair execution where for $n \geq 1$, the \((3n-2)\)-th, \((3n-1)\)-th, and \(3n\)-th rules are respectively instantiations of rules \eqref{eq:sill:msr-tot-l}, \eqref{eq:sill:msr-rhop-r}, and \eqref{eq:sill:msr-tot-r}.
\end{example}

\begin{example}
  \label{ex:sill-background-typing-mult-rewr:4}
  Consider the type $\mt{bits} = \Trec{\beta}{\Tplus \{ \mt{0} : \beta, \mt{1} : \beta \}}$ of bit streams.
  Its communications consist of potentially infinite sequences of $\cunfold$ messages interleaved with labels $\mt{0}$ and $\mt{1}$.
  The following process receives a bit stream on $i$, flips its bits, and outputs the result on $o$:
  \begin{align*}
    \jtypem{\Psi}{i : \mt{bits}}{%
    \tProc{o}{%
    \tFix{f}{%
    \tProc{o}{%
    &\tRecvU{i}{\\
    &\tSendU{o}{\\
    &\tCase{i}{\{\,\mt{0} \Rightarrow \tSendL{o}{\mt{1}}{\tProc{o}{f}{i}}\\
    &\phantom{\tCase{i}{}}\mid \mt{1} \Rightarrow \tSendL{o}{\mt{0}}{\tProc{o}{f}{i}}\}}\\
    &}%
      }%
      }{i}%
      }%
      }{x}%
      }{o}{\mt{bits}}\qedhere
  \end{align*}
\end{example}

We refer the reader to \cite{toninho_2013:_higher_order_proces_funct_session, kavanagh_2021:_commun_based_seman} for further example processes.

\subsection{Key Properties of Polarized SILL}
\label{sec:sill-background:key-props}

We highlight some of the key properties of Polarized SILL that will be used in later portions of this paper.
We refer the reader to \cite[\bibstring{chapter}~5]{kavanagh_2021:_commun_based_seman} for a full development of these properties.

\subsubsection{Static Properties of Processes and Terms}
\label{sec:stat-prop-proc}

The set $\freecn(P)$ of \defin{free channel names} in a process $P$ is inductively defined in the usual way on the structure of $P$.
Recall that the only process formers that bind channel names are $\tCut{a}{P}{Q}$, which binds $a$ in $P$ and $Q$, and $\tRecvC{a}{b}{P}$, which binds $a$ in $P$.
The following \namecref{prop:sill-background:2} follows by induction on the typing derivation:

\begin{proposition}
  \label{prop:sill-background:2}
  If $\jtypem{\Psi}{c_1 : A_1, \dotsc, c_n : A_n}{P}{c_0}{A_0}$, then $\freecn(P) = \{ c_0, \dotsc, c_n \}$.
\end{proposition}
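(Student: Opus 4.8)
The plan is to argue by structural induction on the derivation of the process typing judgment $\jtypem{\Psi}{\Delta}{P}{c_0}{A_0}$, writing $\Delta = c_1 : A_1, \dotsc, c_n : A_n$ and $\check\Delta = \{c_1, \dotsc, c_n\}$, and establishing in every case that $\freecn(P) = \{c_0\} \cup \check\Delta$. I take the inductive definition of $\freecn$ as given, together with the fact recalled above that the only binding process formers are $\tCut{a}{P}{Q}$, which binds $a$ in $P$ and $Q$, and the channel-receiving form $\tRecvC{b}{a}{P}$, which binds $b$ in $P$.

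The base cases are the axioms. For the forwarding rules \getrn{fwdp} and \getrn{fwdn} the process is $\tFwdP{a}{b}$ or $\tFwdN{a}{b}$, whose free channels are $\{a, b\}$, matching the single used channel $a$ and the provided channel $b$; and for \getrn{R-tu} the process $\tClose{a}$ has free channels $\{a\}$, matching the empty context and the provided channel $a$. The bulk of the inductive cases are routine: the rules that send or receive a value (\getrn{R-tand}, \getrn{L-tand}, \getrn{R-timp}, \getrn{L-timp}), a label (\getrn{R-tplus}, \getrn{L-tamp}) or case (\getrn{L-tplus}, \getrn{R-tamp}), a polarity shift, or an unfold message, as well as \getrn{L-tu}. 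In each of these the process former names the principal channel explicitly, and the immediate subprocess's interface contains the same channel names as the conclusion except possibly the principal channel itself, which the former reintroduces; no other channel names are created or destroyed. Thus the induction hypothesis yields the free channels of the subprocess directly, and adjoining the principal channel gives the claim. The only subtlety to note here is that transmitted functional terms and the bound functional variable of $\tRecvV{x}{a}{P}$ contribute no channel names.

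The delicate cases, which I expect to be the main obstacle, are those where channel names are bound or where a channel leaves the context while remaining syntactically free. In \getrn{cut} the hypotheses give $\freecn(P) = \{a\} \cup \check\Delta_1$ and $\freecn(Q) = \{c_0\} \cup \{a\} \cup \check\Delta_2$; since $a$ is bound in $\tCut{a}{P}{Q}$, its free channels are $(\freecn(P) \cup \freecn(Q)) \setminus \{a\} = \{c_0\} \cup \check\Delta_1 \cup \check\Delta_2$, which is exactly the interface of the conclusion. In the channel-receiving rules \getrn{L-tot} and \getrn{R-tlolly} the bound channel $b$ occurs in the subprocess's context but is removed both from that context and from the free set upon abstraction, while the channel received on is still named by the former; tracking this shows the two changes cancel. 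Symmetrically, in the channel-sending rules \getrn{R-tot} and \getrn{L-tlolly} the sent channel $b$ disappears from the continuation's context yet stays syntactically free in $\tSendC{a}{b}{P}$, so it must, and does, reappear as a free channel of the conclusion. Verifying that these context manipulations match the corresponding set operations on $\freecn$ is the crux, though each is a short computation.

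Finally, the functional-layer elimination rule \getrn{E-proc} introduces the process form $\tProc{a}{M}{\overline{a_i}}$, whose free channels are exactly $\{a\} \cup \{\overline{a_i}\}$ by definition, matching the used channels $\overline{a_i}$ and the provided channel $a$; the quoted term $M$ is handled entirely by the functional judgment $\jtypef{\Psi}{M}{\Tproc{a:A}{\overline{a_i:A_i}}}$ and contributes no free channel names. Collecting the cases completes the induction.
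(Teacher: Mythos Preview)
Your proposal is correct and takes essentially the same approach as the paper, which simply states that the result follows by induction on the typing derivation. You have supplied considerably more detail than the paper does, and your identification of the delicate cases (\getrn{cut}, the channel-sending and channel-receiving rules, and \getrn{E-proc}) is accurate.
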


Using polarity, we can statically partition free channels into the sets $\outcn(P)$ of \defin{output channel names}\DIndex<{output channel} and $\inpcn(P)$ of \defin{input channel names}\DIndex<{input channel}.
Intuitively, $c \in \outcn(P)$ if the next time $P$ communicates on $c$, it sends a message on $c$; the meaning of $c \in \inpcn(P)$ is symmetric.
Explicitly:

\begin{definition}
  \label{def:sill-background-stat-prop-terms:1}
  Assume $\jtypem{\Psi}{c_1 : A_1, \dotsc, c_n : A_n}{P}{c_0}{A_0}$.
  Let $\outcn(P)$\glsadd{outcn} and $\inpcn(P)$\glsadd{inpcn} be the least subsets of $\freecn(P)$ such that:
  \begin{enumerate}
  \item $c_0 \in \outcn(P)$ if and only if $A_0$ is positive;
  \item $c_0 \in \inpcn(P)$ if and only if $A_0$ is negative;
  \item for $1 \leq i \leq n$, $c_i \in \outcn(P)$ if and only if $A_i$ is negative;
  \item for $1 \leq i \leq n$, $c_i \in \inpcn(P)$ if and only if $A_i$ is positive.\qedhere
  \end{enumerate}
\end{definition}

A \defin{substitution} ${\jcmf{\sigma}{\Phi}{x_1 : \tau_1, \dotsc, x_n : \tau_n}}$ is a list $\sigma$ of terms $N_1, \dotsc, N_n$ satisfying $\jtypef{\Phi}{N_i}{\tau_i}$ for all $1 \leq i \leq n$.
We write $\apprs{\sigma}{M}$ for the simultaneous substitution $\subst{\vec N}{\vec x}{M}$.
The typing judgments for terms and processes are closed under substitution.

\subsubsection{Static Properties of Configurations}

The free and bound channel names in a configuration are given by the unions of the free and bound channel names that appear in its process and message facts.
Input and output channels of configurations are a direct adaptation of \cref{def:sill-background-stat-prop-terms:1}:

\begin{definition}
  \label{def:sill-background-stat-prop-typed-config:3}
  Assume $\jcfgti{c_1 : C_1, \dotsc, c_n : C_n}{\Iota}{\mc{C}}{a_0 : A_0, \dotsc, a_m : A_m}$.
  Let the \defin{output channels}\DIndex<{output channel} $\outcn(\mc{C})$\glsadd{outcn} and \defin{input channels}\DIndex<{input channel} $\inpcn(\mc{C})$\glsadd{inpcn} of $\mc{C}$ be the least subsets of $\freecn(\mc{C})$ such that:
  \begin{enumerate}
  \item for $0 \leq i \leq m$, $a_i \in \outcn(\mc{C})$ if and only if $A_i$ is positive;
  \item for $0 \leq i \leq m$, $a_i \in \inpcn(\mc{C})$ if and only if $A_i$ is negative;
  \item for $1 \leq i \leq n$, $c_i \in \outcn(\mc{C})$ if and only if $C_i$ is negative;
  \item for $1 \leq i \leq n$, $c_i \in \inpcn(\mc{C})$ if and only if $C_i$ is positive.\qedhere
  \end{enumerate}
\end{definition}

\begin{remark}
  Unlike the free channels of processes, the free channels of typed configurations are not partitioned as input and output channels, but as input, output, and internal channels.
  Recall that the internal channels of $\jcfgti{\Gamma}{\Iota}{\mc{C}}{\Delta}$ are those in $\Iota$.
\end{remark}

It is also useful to specify the channel on which a message was sent---its ``carrier channel''---as well as its continuation channel.

\begin{definition}
  \label{def:sill-background-stat-prop-typed-config:1}
  The \defin[channel!carrier]{carrier channel} $\carrcn(\jmsg{a}{m})$ and \defin[channel!continuation]{continuation channel} $\contcn(\jmsg{a}{m})$ and of a message fact $\jmsg{a}{m}$, if defined, are:
  \begin{align*}
    \carrcn(\jmsg{a}{\mClose{a}}) &= a & \contcn(\jmsg{a}{\mClose{a}})\ \text{undefined}\\
    \carrcn(\jmsg{a}{\mSendCP{a}{b}{d}}) &= a &     \contcn(\jmsg{a}{\mSendCP{a}{b}{d}}) &= d\\
    \carrcn(\jmsg{d}{\mSendCN{a}{b}{d}}) &= a &     \contcn(\jmsg{d}{\mSendCN{a}{b}{d}}) &= d\\
    \carrcn(\jmsg{a}{\mSendLP{a}{k}{d}}) &= a &     \contcn(\jmsg{a}{\mSendLP{a}{k}{d}}) &= d\\
    \carrcn(\jmsg{d}{\mSendLN{a}{k}{d}}) &= a &     \contcn(\jmsg{d}{\mSendLN{a}{k}{d}}) &= d\\
    \carrcn(\jmsg{a}{\mSendVP{a}{v}{d}}) &= a &     \contcn(\jmsg{a}{\mSendVP{a}{v}{d}}) &= d\\
    \carrcn(\jmsg{d}{\mSendVN{a}{v}{d}}) &= a &     \contcn(\jmsg{d}{\mSendVN{a}{v}{d}}) &= d\\
    \carrcn(\jmsg{a}{\mSendSP{a}{d}}) &= a &    \contcn(\jmsg{a}{\mSendSP{a}{d}}) &= d\\
    \carrcn(\jmsg{d}{\mSendSN{a}{d}}) &= a &     \contcn(\jmsg{d}{\mSendSN{a}{d}}) &= d\\
    \carrcn(\jmsg{a}{\mSendUP{a}{d}}) &= a &    \contcn(\jmsg{a}{\mSendUP{a}{d}}) &= d\\
    \carrcn(\jmsg{d}{\mSendUN{a}{d}}) &= a &    \contcn(\jmsg{d}{\mSendUN{a}{d}}) &= d\qedhere
  \end{align*}
\end{definition}

\begin{remark}
  \label{rem:sill-background-stat-prop-typed-config:1}
  The carrier channel of a message fact is always an output channel, but the converse need not be true.
  Consider for example the message fact $\jmsg{a}{\mSendSP{a}{d}}$.
  Its carrier channel is $a$, but its output channels are $a$ and its continuation channel $d$.
\end{remark}

\begin{proposition}[Subformula Property]\varindex{configuration {subformula property}}{1!2 2}
  \label{prop:sill-background-stat-prop-typed-config:3}
  If\/ $\jcfgti{\Gamma'}{\Iota'}{\mc{E}}{\Delta'}$ appears in the derivation of $\jcfgti{\Gamma}{\Iota}{\mc{F}}{\Delta}$, then $\Gamma' \subseteq \Gamma \Iota$, $\Iota' \subseteq \Iota$, and $\Delta' \subseteq \Iota \Delta$.
\end{proposition}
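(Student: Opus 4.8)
The plan is to argue by structural induction on the derivation $D$ of the root judgment $\jcfgti{\Gamma}{\Iota}{\mc{F}}{\Delta}$, establishing the uniform claim that \emph{every} configuration typing judgment $\jcfgti{\Gamma'}{\Iota'}{\mc{E}}{\Delta'}$ occurring as a node of $D$ satisfies $\Gamma' \subseteq \Gamma\Iota$, $\Iota' \subseteq \Iota$, and $\Delta' \subseteq \Iota\Delta$ relative to the root. Throughout I treat a context of typed channel names as a set, so that juxtaposition of contexts is union and the well-formedness condition guarantees these unions are disjoint; the relation $\subseteq$ is then ordinary set inclusion and concatenation is monotone, which is all the bookkeeping below relies on.

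For the base cases \getrn{conf-m} and \getrn{conf-p}, the derivation has a single configuration-typing node, its conclusion, and its only premise is a message or process typing judgment rather than a configuration judgment. Taking $\mc{E}=\mc{F}$, the three required containments reduce to $\Gamma\subseteq\Gamma\Iota$, $\Iota\subseteq\Iota$, and $\Delta\subseteq\Iota\Delta$, all of which hold by reflexivity (here in fact $\Iota=\cdot$).

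For the inductive step \getrn{conf-c}, the conclusion $\jcfgti{\Gamma\Lambda}{\Iota_1\Pi\Iota_2}{\mc{C},\mc{D}}{\Phi\Xi}$ is assembled from a subderivation $D_1$ of $\jcfgti{\Gamma}{\Iota_1}{\mc{C}}{\Phi\Pi}$ and a subderivation $D_2$ of $\jcfgti{\Pi\Lambda}{\Iota_2}{\mc{D}}{\Xi}$. A node of $D$ is either the conclusion itself (immediate by reflexivity as above) or lies in $D_1$ or in $D_2$. If it lies in $D_1$, the induction hypothesis gives $\Gamma'\subseteq\Gamma\Iota_1$, $\Iota'\subseteq\Iota_1$, and $\Delta'\subseteq\Iota_1\Phi\Pi$; composing with the evident inclusions $\Gamma\Iota_1\subseteq\Gamma\Lambda\Iota_1\Pi\Iota_2$, $\Iota_1\subseteq\Iota_1\Pi\Iota_2$, and $\Iota_1\Phi\Pi\subseteq\Iota_1\Pi\Iota_2\Phi\Xi$ yields the containments against the root. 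The case of $D_2$ is symmetric, using $\Gamma'\subseteq\Pi\Lambda\Iota_2$, $\Iota'\subseteq\Iota_2$, $\Delta'\subseteq\Iota_2\Xi$ together with $\Pi\Lambda\Iota_2\subseteq\Gamma\Lambda\Iota_1\Pi\Iota_2$, $\Iota_2\subseteq\Iota_1\Pi\Iota_2$, and $\Iota_2\Xi\subseteq\Iota_1\Pi\Iota_2\Phi\Xi$.

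The only delicate point, and where I would focus the verification, is the treatment of the shared channels $\Pi$: they occur on the \emph{provided} side of $\mc{C}$ and on the \emph{used} side of $\mc{D}$, yet are demoted to internal channels of the composite. The argument succeeds precisely because $\Pi$ is a subcontext of the root's internal context $\Iota_1\Pi\Iota_2$, so the $D_1$-bound $\Delta'\subseteq\Iota_1\Phi\Pi$ (which mentions $\Pi$) and the $D_2$-bound $\Gamma'\subseteq\Pi\Lambda\Iota_2$ (which also mentions $\Pi$) both collapse into bounds of the required shape $\Gamma'\subseteq(\Gamma\Lambda)(\Iota_1\Pi\Iota_2)$ and $\Delta'\subseteq(\Iota_1\Pi\Iota_2)(\Phi\Xi)$. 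Once this inclusion pattern is recorded, every remaining step is pure monotonicity of union and requires no computation.
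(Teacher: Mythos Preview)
Your proof is correct and follows exactly the approach the paper indicates: structural induction on the derivation of $\jcfgti{\Gamma}{\Iota}{\mc{F}}{\Delta}$. The paper's own proof is the one-line ``By induction on the derivation of $\jcfgti{\Gamma}{\Iota}{\mc{F}}{\Delta}$'', and your write-up simply unpacks that induction in full, correctly handling the crucial point that the shared channels $\Pi$ in \getrn{conf-c} migrate into the internal context of the composite.
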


\begin{proof}
  By induction on the derivation of $\jcfgti{\Gamma}{\Iota}{\mc{F}}{\Delta}$.
\end{proof}

If two multisets in a configuration share a channel, then it is an internal channel:

\begin{lemma}
  \label{lemma:sill-background-properties-traces:5}
  If\/ $\jcfgti{\Psi}{\Iota}{\mc{E},\mc{F}}{\Theta}$ and $c \in \freecn(\mc{E}) \cap \freecn(\mc{F})$, then $c \in \check \Iota$.
\end{lemma}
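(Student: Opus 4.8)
The plan is to argue by induction on the derivation of the configuration typing judgment $\jcfgti{\Psi}{\Iota}{\mc{G}}{\Theta}$, proving the statement uniformly for \emph{every} decomposition of the underlying multiset as $\mc{G} = \mc{E}, \mc{F}$: whenever $c \in \freecn(\mc{E}) \cap \freecn(\mc{F})$, then $c \in \check{\Iota}$. Quantifying over all splits is what makes the induction go through, since the top-level split $\mc{E},\mc{F}$ need not align with the structure of the derivation. As a standing preliminary I will use that the free channel names of a well-typed configuration are drawn from its signature and in fact coincide with $\check{\Gamma} \cup \check{\Iota} \cup \check{\Delta}$; this follows by a routine induction from \cref{prop:sill-background:2} at the leaves \getrn{conf-m} and \getrn{conf-p} and from the shape of \getrn{conf-c} at the inductive step.

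For the base cases, $\mc{G}$ is a single fact introduced by \getrn{conf-m} or \getrn{conf-p}, and $\Iota = \cdot$. Any decomposition $\mc{G} = \mc{E}, \mc{F}$ of a one-element multiset leaves one summand empty, so $\freecn(\mc{E}) \cap \freecn(\mc{F}) = \emptyset$ and there is nothing to prove. For the inductive step the derivation ends in \getrn{conf-c}, so $\mc{G} = \mc{C}, \mc{D}$ with $\Iota = \Iota_1 \Pi \Iota_2$, and the premises type $\mc{C}$ (internal channels $\Iota_1$, over signature $\Sigma, \check{\Pi}$) and $\mc{D}$ (internal channels $\Iota_2$, over signature $\check{\Pi}, \Sigma'$), where $\Sigma \cap \Sigma' = \emptyset$. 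The key observation is that $\freecn(\mc{C}) \cap \freecn(\mc{D}) \subseteq \check{\Pi}$: by the preliminary $\freecn(\mc{C}) \subseteq \Sigma \cup \check{\Pi}$ and $\freecn(\mc{D}) \subseteq \check{\Pi} \cup \Sigma'$, so a common name outside $\check{\Pi}$ would lie in $\Sigma \cap \Sigma' = \emptyset$. Since $\check{\Pi} \subseteq \check{\Iota}$, it remains only to locate $c$. As $c \in \freecn(\mc{E}) \subseteq \freecn(\mc{G}) = \freecn(\mc{C}) \cup \freecn(\mc{D})$, there are three cases. If $c \in \freecn(\mc{C}) \cap \freecn(\mc{D})$, then $c \in \check{\Pi} \subseteq \check{\Iota}$ and we are done.

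Otherwise $c$ is free in exactly one of $\mc{C}, \mc{D}$, say $\mc{C}$ (the other case is symmetric, yielding $c \in \check{\Iota_2}$). Then every fact containing $c$ has multiplicity $0$ in $\mc{D}$, so all its occurrences in $\mc{G}$ lie in $\mc{C}$. Pick an occurrence $J_1 \in \mc{E}$ and an occurrence $J_2 \in \mc{F}$ with $c \in \freecn(J_1) \cap \freecn(J_2)$; these are distinct occurrences of $\mc{G}$ since $\mc{E}$ and $\mc{F}$ are disjoint summands, and by the previous remark both lie in $\mc{C}$. A short multiplicity count shows their fact types occur in $\mc{C}$ with enough multiplicity to be separated: if they are the same type, its $\mc{G}$-multiplicity is at least $2$ and, being $0$ in $\mc{D}$, its $\mc{C}$-multiplicity is at least $2$. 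Hence there is a split $\mc{C} = \mc{E}', \mc{F}'$ with $J_1 \in \mc{E}'$ and $J_2 \in \mc{F}'$, whence $c \in \freecn(\mc{E}') \cap \freecn(\mc{F}')$. Applying the induction hypothesis to the premise deriving $\mc{C}$ gives $c \in \check{\Iota_1} \subseteq \check{\Iota}$, completing the step.

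I expect the main obstacle to be exactly this last piece of multiset bookkeeping in the one-sided case: one must verify that the witnesses for $c \in \freecn(\mc{E})$ and $c \in \freecn(\mc{F})$ survive as separable occurrences inside the single sub-configuration $\mc{C}$, so that a legitimate split of $\mc{C}$ exposes $c$ on both sides and the induction hypothesis applies. The signature-disjointness observation $\Sigma \cap \Sigma' = \emptyset$ handles the genuinely shared case cleanly, but the one-sided case is where care with multiplicities is needed.
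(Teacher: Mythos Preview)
Your proof is correct and takes the same approach as the paper, which simply states ``By induction on the derivation of $\jcfgti{\Psi}{\Iota}{\mc{E},\mc{F}}{\Theta}$'' without further detail. Your elaboration---strengthening the induction hypothesis to quantify over all multiset splits, using the signature-disjointness assumption $\Sigma \cap \Sigma' = \emptyset$ to bound $\freecn(\mc{C}) \cap \freecn(\mc{D}) \subseteq \check{\Pi}$, and handling the multiplicity bookkeeping in the one-sided case---is exactly what that induction requires.
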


\begin{proof}
  By induction on the derivation of $\jcfgti{\Psi}{\Iota}{\mc{E},\mc{F}}{\Theta}$.
\end{proof}

Every configuration can be decomposed as the composition of independent ``simply branched'' configurations with no common channels.
The absence of common channels implies that these simply branched subconfigurations do not interact during executions.
We will use this fact to reduce proofs about arbitrary configurations to proofs about simply branched configurations.

\begin{definition}
  \label{def:sill-obs-equiv/observ-comm:9}
  A configuration $\jcfgti{\Gamma}{\Iota}{\mc{C}}{\Delta}$ is \defin{simply branched}\DIndex<{{simply branched} configuration} if it has a derivation in which every instance of the rule
  \[
    \getrule{conf-c}
  \]
  has exactly one channel in the context $\Pi$.\qedhere
\end{definition}

\begin{proposition}[Simply Branched Decomposition]
  \label{prop:sill-obs-equiv/observ-comm-equiv:7}
  Every configuration $\jcfgt{\Gamma}{\mc{C}}{d_0 : D_0, \dotsc, d_n : D_n}$ is the composition $\jcfgti{\Gamma_0, \dotsc, \Gamma_n}{\Iota_0, \dotsc, \Iota_n}{\mc{C}_0, \dotsc, \mc{C}_n}{d_0 : D_0, \dotsc, d_n : D_n}$ of simply-branched configurations $\jcfgti{\Gamma_i}{\Iota_i}{\mc{C}_i}{d_i : D_i}$ for $0 \leq i \leq n$.
\end{proposition}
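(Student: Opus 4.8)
The plan is to induct on the derivation of the configuration typing judgment $\jcfgt{\Gamma}{\mc{C}}{d_0 : D_0, \dotsc, d_n : D_n}$, which for a plain configuration is built only from the rules \getrn{conf-m}, \getrn{conf-p}, and \getrn{conf-c}. The rules \getrn{conf-m} and \getrn{conf-p} give the base cases: each produces a configuration consisting of a single message or process fact providing exactly one channel $d_0 : D_0$, so $n = 0$ and the derivation contains no instance of \getrn{conf-c}. Such a configuration is vacuously simply branched, and it is its own one-piece decomposition.

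The only inductive case is \getrn{conf-c}, where $\mc{C} = \mc{E}, \mc{F}$ arises from subderivations $\jcfgti{\Gamma_E}{\Iota_E}{\mc{E}}{\Phi\Pi}$ and $\jcfgti{\Pi\Gamma_F}{\Iota_F}{\mc{F}}{\Xi}$ composed along the shared channels $\Pi$, with $\Phi\Xi = d_0 : D_0, \dotsc, d_n : D_n$. Applying the induction hypothesis to each subderivation yields a decomposition of $\mc{E}$ into simply-branched pieces $\mc{E}_g$, one per $g \in \Phi\Pi$, and of $\mc{F}$ into simply-branched pieces $\mc{F}_\xi$, one per $\xi \in \Xi$. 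By the form of the decomposition furnished by the hypothesis---where the component contexts are simply concatenated and no new internal channels are introduced---the pieces $\mc{E}_g$ share no channels; hence none uses a channel of the provided interface $\Phi\Pi$, and each uses only channels of $\Gamma_E$.

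Next I would route the shared channels and reassemble. Each $\pi \in \Pi$ is an input of $\mc{F}$, so by \cref{prop:sill-background:2} and \cref{lemma:sill-background-properties-traces:5} it appears free in exactly one fact of $\mc{F}$ and therefore lies in exactly one piece $\mc{F}_\xi$; writing $\Pi_\xi$ for the set of such $\pi$, the $\Pi_\xi$ partition $\Pi$. I then build the final decomposition by taking, for each $\phi \in \Phi$, the piece $\mc{C}_\phi = \mc{E}_\phi$ unchanged (it provides $\phi$ and uses only $\Gamma_E$-channels), and, for each $\xi \in \Xi$, the configuration $\mc{C}_\xi$ obtained by composing $\mc{F}_\xi$ with the pieces $\{\mc{E}_\pi \mid \pi \in \Pi_\xi\}$, each joined along the single channel $\pi$. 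Since a single-channel composition of simply-branched configurations is again simply branched, each $\mc{C}_\xi$ is simply branched and provides $\xi$. Finally, the $n+1$ pieces $\{\mc{C}_\phi\}_{\phi \in \Phi} \cup \{\mc{C}_\xi\}_{\xi \in \Xi}$ share no channels, so composing them with the context $\Pi$ of \getrn{conf-c} empty---appealing to the associativity and partial commutativity of the composition operator---reconstructs a configuration whose facts are exactly $\mc{E}, \mc{F} = \mc{C}$, whose interface is $(\Gamma, \Phi\Xi)$, and whose internal channels are $\Iota_E \Pi \Iota_F$, namely the original configuration.

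The main obstacle I anticipate is the interface bookkeeping in the inductive step: verifying that the $\Pi$-channels distribute uniquely among the pieces $\mc{F}_\xi$, that the $\Phi$-pieces are genuinely independent of $\Pi$, and that reassembling one channel at a time (rather than along all of $\Pi$ simultaneously) yields a well-formed derivation that recovers the original interface and the original internal context $\Iota_E \Pi \Iota_F$. Each of these facts rests on linearity---every channel is provided by one fact and used by one fact (\cref{prop:sill-background:2}, \cref{lemma:sill-background-properties-traces:5})---together with the associativity and partial commutativity of \getrn{conf-c}, so the crux is to marshal these ingredients carefully rather than to surmount a deep difficulty.
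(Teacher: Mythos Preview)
Your proposal is correct and follows the same approach as the paper, which simply states ``By induction on the derivation of $\jcfgt{\Gamma}{\mc{C}}{d_0 : D_0, \dotsc, d_n : D_n}$.'' You have fleshed out the routing of the shared channels $\Pi$ in the \getrn{conf-c} case in considerably more detail than the paper provides, but the underlying induction is identical.
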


\begin{proof}
  By induction on the derivation of $\jcfgt{\Gamma}{\mc{C}}{d_0 : D_0, \dotsc, d_n : D_n}$.
\end{proof}

We can characterize simply branched configurations by looking at their provided channels:

\begin{proposition}
  \label{prop:sill-obs-equiv/observ-comm:3}
  A configuration $\jcfgti{\Gamma}{\Iota}{\mc{C}}{\Delta}$ is simply branched if and only if $\Delta$ contains exactly one channel.
\end{proposition}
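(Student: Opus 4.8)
The plan is to prove the two implications separately, using throughout that the provided context of any configuration has $m \geq 1$ channels, so that ``exactly one channel'' amounts to the claim $m = 1$.

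For the forward implication I would induct on a simply branched derivation of $\jcfgti{\Gamma}{\Iota}{\mc{C}}{\Delta}$, that is, one in which every instance of \getrn{conf-c} has exactly one channel in its shared context $\Pi$. The base cases \getrn{conf-m} and \getrn{conf-p} each provide the single channel $c : A$, so $|\Delta| = 1$. For the inductive step, suppose the final rule is \getrn{conf-c}, with premises providing $\Phi\Pi$ and $\Xi$ respectively and with $|\Pi| = 1$. Both premises are derived by simply branched subderivations, so the induction hypothesis gives $|\Phi\Pi| = 1$ and $|\Xi| = 1$. From $|\Phi\Pi| = |\Phi| + |\Pi| = |\Phi| + 1 = 1$ we conclude that $\Phi$ is empty, so the conclusion provides $\Phi\Xi = \Xi$, a single channel. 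This yields $|\Delta| = 1$.

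For the backward implication I would appeal directly to the Simply Branched Decomposition (\cref{prop:sill-obs-equiv/observ-comm-equiv:7}). Given a configuration $\jcfgt{\Gamma}{\mc{C}}{d_0 : D_0}$ that provides exactly one channel, instantiating the decomposition with $n = 0$ exhibits $\mc{C}$ as the composition of a single simply branched configuration $\jcfgti{\Gamma_0}{\Iota_0}{\mc{C}_0}{d_0 : D_0}$. As a one-piece composition this is $\mc{C}$ itself, so $\mc{C}$ is simply branched.

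I expect the genuine content to sit in the backward direction, and it is exactly what the decomposition absorbs. A naive induction on an arbitrary derivation stalls in the \getrn{conf-c} case: the arithmetic above forces $|\Xi| = 1$ and $\Phi$ empty, so that $\mc{C}$ provides precisely the shared context $\Pi$, yet nothing prevents $|\Pi| > 1$. In that case $\mc{C}$ need not provide a single channel, so the induction hypothesis does not apply to it, and the \getrn{conf-c} instance is not simply branched as written. Repairing this would require re-associating the composition so that $\mc{D}$ is combined with the simply branched pieces of $\mc{C}$ one channel at a time, invoking the associativity and partial commutativity of the composition operator noted after \getrn{conf-c}. Since \cref{prop:sill-obs-equiv/observ-comm-equiv:7} already packages this re-association, invoking it lets me avoid carrying out the restructuring by hand.
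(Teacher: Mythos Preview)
Your proposal is correct and matches the paper's proof essentially verbatim: the paper proves the forward direction (simply branched $\Rightarrow$ single provided channel) by induction on the derivation, and appeals to \cref{prop:sill-obs-equiv/observ-comm-equiv:7} for the converse. Your additional diagnosis of why a naive induction on an arbitrary derivation stalls in the \getrn{conf-c} case (nothing bounds $|\Pi|$, so the left premise need not provide a single channel) is accurate and explains precisely why the decomposition result is invoked.
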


\begin{proof}
  Sufficiency follows by induction on the derivation of $\jcfgt{\Gamma}{\mc{C}}{\Delta}$.
  \Cref{prop:sill-obs-equiv/observ-comm-equiv:7} implies necessity.
\end{proof}

The following \namecref{lemma:sill-background-properties-traces:9} shows that we can ``swap out'' any subset of a multiset with one that has the same interface, even if their internal channels differ.
We use the notation $\mc{J} \derivable{}{} \mc{J}'$ to mean that there is a (hypothetical) derivation of $\mc{J}'$ assuming the hypothesis $\mc{J}$.

\begin{lemma}[Replacement Property]
  \label{lemma:sill-background-properties-traces:9}
  Assume $\jcfgti{\Gamma'}{\Iota_F}{\mc{F}}{\Theta'} \derivable{}{} \jcfgti{\Gamma}{\Iota_E \Iota_F}{\mc{E},\mc{F}}{\Theta}$.
  If\/ $\jcfgti{\Gamma'}{\Iota_G}{\mc{G}}{\Theta'}$ is well-formed and $\check \Iota_G$ is disjoint from $\check \Gamma, \check \Iota_E, \check \Theta$, then $\jcfgti{\Gamma'}{\Iota_G}{\mc{G}}{\Theta'} \derivable{}{} \jcfgti{\Gamma}{\Iota_E \Iota_G}{\mc{E},\mc{G}}{\Theta}$.
\end{lemma}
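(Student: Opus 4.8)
The plan is to proceed by structural induction on the hypothetical derivation $\mathcal{D}$ witnessing the assumption $\jcfgti{\Gamma'}{\Iota_F}{\mc{F}}{\Theta'} \derivable{}{} \jcfgti{\Gamma}{\Iota_E\Iota_F}{\mc{E},\mc{F}}{\Theta}$. The guiding observation is that, above the hole, the derivation refers only to the interface $(\Gamma',\Theta')$ of $\mc{F}$ and never to its internal channels $\Iota_F$, so replacing $\mc{F}$ by any $\mc{G}$ with the same interface should leave the derivation skeleton intact, provided the new internal channels $\Iota_G$ do not collide with the channels already in play. Since the hole is a configuration-typing hypothesis, and the leaf rules \getrn{conf-m} and \getrn{conf-p} build configuration judgments only from message/process typing premises, the hole can occur in $\mathcal{D}$ only as a leaf used verbatim, and every internal node of $\mathcal{D}$ that combines configuration judgments must be an instance of the composition rule \getrn{conf-c}. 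Hence there are exactly two cases: either $\mathcal{D}$ is the hole itself, or its last rule is \getrn{conf-c}.

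In the base case $\mc{E}$ and $\Iota_E$ are empty, $\Gamma = \Gamma'$, and $\Theta = \Theta'$, so the goal reduces to $\jcfgti{\Gamma'}{\Iota_G}{\mc{G}}{\Theta'} \derivable{}{} \jcfgti{\Gamma'}{\Iota_G}{\mc{G}}{\Theta'}$, which holds by using the new hole. In the inductive case, the final \getrn{conf-c} splits $\mc{E},\mc{F}$ into subconfigurations $\mc{C}$ and $\mc{D}$ composed along some context $\Pi$; because $\mc{F}$ is a single leaf, all of its facts lie in exactly one premise, say the one deriving $\mc{C} = \mc{E}_1,\mc{F}$, while the other premise is a hole-free derivation of $\mc{D} = \mc{E}_2$ with $\mc{E} = \mc{E}_1,\mc{E}_2$. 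Matching against \getrn{conf-c}, the internal channels decompose as $\Iota_1 = \Iota_{E_1}\Iota_F$ for the left premise and $\Iota_E = \Iota_{E_1}\Pi\Iota_2$ for the conclusion. I would then apply the induction hypothesis to the sub-derivation of the left premise to swap $\mc{F}$ for $\mc{G}$, obtaining a derivation of $\jcfgti{\Gamma_1}{\Iota_{E_1}\Iota_G}{\mc{E}_1,\mc{G}}{\Phi\Pi}$, and re-apply \getrn{conf-c} with the untouched right premise to conclude.

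The real work of the proof is the channel bookkeeping needed to invoke the induction hypothesis and to discharge the side conditions of the re-applied \getrn{conf-c}. The crucial point, and what makes the replacement legitimate at all, is that the internal channels $\Iota_F$ of the hole never appear in the interface of any sub-derivation above it: they are hidden inside $\mc{F}$, so by the Subformula Property (\cref{prop:sill-background-stat-prop-typed-config:3}) the used and provided channels of the left premise lie within $\check\Gamma, \check\Iota_E, \check\Theta$ and avoid $\check\Iota_F$ entirely. Consequently, the hypothesis that $\check\Iota_G$ is disjoint from $\check\Gamma, \check\Iota_E, \check\Theta$ is exactly enough to (i) meet the disjointness requirement of the induction hypothesis on the left premise, whose interface and residual internal channels $\Iota_{E_1}$ sit inside these sets, and (ii) guarantee that $\check\Iota_G$ is disjoint from $\check\Pi$, $\check\Iota_2$, and the external channels of the final composition, so that well-formedness and the signature-disjointness side condition of \getrn{conf-c} still hold after the swap. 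One mild subtlety is cosmetic: the re-applied composition produces internal channels in the order $\Iota_{E_1}\Iota_G\Pi\Iota_2$ rather than $\Iota_E\Iota_G = \Iota_{E_1}\Pi\Iota_2\Iota_G$, which I would reconcile by reading internal-channel contexts up to reordering (exchange being admissible for them). I expect this freshness propagation to be the main obstacle, in the precise sense that it is the only place where the stated disjointness hypothesis is actually consumed.
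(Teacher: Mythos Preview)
Your proposal is correct and follows exactly the approach the paper takes: the paper's proof is the single line ``By induction on the hypothetical derivation $\jcfgti{\Gamma'}{\Iota_F}{\mc{F}}{\Theta'} \derivable{}{} \jcfgti{\Gamma}{\Iota_E \Iota_F}{\mc{E},\mc{F}}{\Theta}$,'' and your write-up simply unpacks that induction in detail, correctly identifying that the only cases are the hole itself and \getrn{conf-c}, and handling the channel-disjointness bookkeeping that the paper leaves implicit.
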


\begin{proof}
  By induction on the hypothetical derivation ${\jcfgti{\Gamma'}{\Iota_F}{\mc{F}}{\Theta'} \derivable{}{} \jcfgti{\Gamma}{\Iota_E \Iota_F}{\mc{E},\mc{F}}{\Theta}}$.
\end{proof}

Configuration contexts are configuration contexts with a hole:

\begin{definition}
  \label{def:sill-obs-equiv/relat-equiv:1}
  A \defin[context!typed!for configurations|defin]{(typed) configuration context}\DIndex{configuration context} $\jcfgti{\Gamma}{\Iota}{\ctxh{\mc{C}}{\Lambda}{\Xi}}{\Delta}$\glsadd{confctx} is a configuration formed by the rules of \cref{sec:sill-background:overview-dynamics} plus one instance of the axiom \getrn{conf-h}:
  \[
    \getrule{conf-h}
  \]
  Consider a configuration context $\jcfgti{\Gamma}{\Iota}{\ctxh{\mc{C}}{\Lambda}{\Xi}}{\Delta}$ and a configuration $\jcfgti{\Lambda}{\Iota'}{\mc{D}}{\Xi}$ such that $\check \Iota'$ is disjoint from $\check \Gamma,\check \Iota,\check \Delta$.
  The result of ``plugging'' $\mc{D}$ into the hole is the configuration $\jcfgti{\Gamma}{\Iota,\Iota'}{\ctxh[\mc{D}]{\mc{C}}{\Lambda}{\Xi}}{\Delta}$ given by \cref{lemma:sill-background-properties-traces:9}, where we replace the axiom \getrn{conf-h} by the derivation of $\jcfgti{\Lambda}{\Iota'}{\mc{D}}{\Xi}$ and thread the added internal channels $\Iota'$ through the derivation.
\end{definition}

\begin{remark}
  We can always plug a configuration in a hole with a matching interface in \cref{def:sill-obs-equiv/relat-equiv:1} by suitably renaming the internal channel names in $\Iota'$.
\end{remark}

\subsubsection{Dynamic Properties of Typed Configurations}

The type-preservation property for Polarized SILL states that its substructural operational semantics preserves configuration interfaces and that it does not change the types of internal channel names.
We formulate our preservation result in terms of configuration contexts.\Index{configuration context}
In particular, our formulation makes explicit the fact that the type of the stationary multiset (seen as a configuration context) does not change, and the fact that the active multiset and its replacement have the same interface.

\begin{proposition}[Preservation]\varindex{{preservation property} for@ configuration configurations}{3!1 1!~24}[|defin]
  \label{prop:sill-background:1}
  Assume $\jcfgti[\Sigma]{\Gamma}{\Iota}{\mc{C}}{\Delta}$.
  If $\msinc{\Sigma}{\mc{C}} \msstep \msinc{\Sigma'}{\mc{C}'}$ by some rule instance $\mc{E} \msstep \mc{E}'$, then there exist $\Psi \subseteq \Gamma \Iota$, $\Iota_L \subseteq \Iota$, and $\Theta \subseteq \Iota \Delta$ such that
  \begin{enumerate}
  \item $\jcfgti{\Psi}{\Iota_L}{\mc{E}}{\Theta}$,
  \item $\jcfgti{\Psi}{\Iota_R}{\mc{E}'}{\Theta}$ for some $\Iota_R$ whose channel names are disjoint from those in $\Gamma\Iota_L\Delta$,
  \item $\mc{C}$ is given by $\jcfgti[\Sigma]{\Gamma}{\Iota_L \Iota'}{\ctxh[\mc{E}]{\mc{D}}{\Psi}{\Theta}}{\Delta}$ for some configuration context $\jcfgti{\Gamma}{\Iota'}{\ctxh{\mc{D}}{\Psi}{\Theta}}{\Delta}$ and some $\Iota'$, and
  \item $\mc{C}'$ is given by $\jcfgti[\Sigma, \check \Iota_R]{\Gamma}{\Iota_R \Iota'}{\ctxh[\mc{E}']{\mc{D}}{\Psi}{\Theta}}{\Delta}$ and $\Sigma' = \Sigma, \check \Iota_R$.
  \end{enumerate}
\end{proposition}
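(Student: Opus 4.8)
The plan is to prove the statement in two stages. The first, \emph{structural} stage extracts the active multiset $\mc{E}$ as a typed sub-configuration and exhibits $\mc{C}$ as a configuration context with $\mc{E}$ plugged in, which yields clauses~(1) and~(3). The second, \emph{rule-by-rule} stage checks that the replacement $\mc{E}'$ has the same interface, yielding~(2), after which~(4) follows by the Replacement Property. Throughout I would exploit that \getrn{conf-c} is an associative and partially commutative composition operator, as observed in \cref{sec:sill-background:overview-dynamics}, so that derivations may be freely reassociated.

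\textbf{Decomposition.} The first step is to show that $\mc{E}$ occurs as a typed sub-configuration $\jcfgti{\Psi}{\Iota_L}{\mc{E}}{\Theta}$ inside some derivation of $\jcfgti[\Sigma]{\Gamma}{\Iota}{\mc{C}}{\Delta}$. I would argue this by induction on the derivation. The active multiset of any rule instance is either a single process fact $\jproc{c}{P}$, or a process fact together with a message fact $\jmsg{a}{m},\jproc{c}{P}$ sharing exactly the carrier channel $a$, in each case possibly alongside the implicit persistent evaluation facts (which carry no channels and so do not affect the interface). In the leaf cases \getrn{conf-m} and \getrn{conf-p}, the active multiset is all of $\mc{C}$ and the context is the bare hole. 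In the \getrn{conf-c} case $\mc{C} = \mc{C}_1,\mc{C}_2$ composed along $\Pi$, either $\mc{E}$ lies within one branch---where I apply the induction hypothesis and recompose the resulting context with the other branch---or $\mc{E}$ is split across both branches. In the split case the shared channel $a$ lies in $\Pi$, so \cref{lemma:sill-background-properties-traces:5} confirms it is internal; I would extract each fact from its branch by induction, compose the two along $a$ by a fresh instance of \getrn{conf-c} to obtain $\jcfgti{\Psi}{\Iota_L}{\mc{E}}{\Theta}$ with $a \in \check\Iota_L$, and reassemble the remaining material into the context $\mc{D}$. The bounds $\Psi \subseteq \Gamma\Iota$, $\Iota_L \subseteq \Iota$, and $\Theta \subseteq \Iota\Delta$ then follow from the Subformula Property (\cref{prop:sill-background-stat-prop-typed-config:3}). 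Replacing the extracted sub-derivation by a hole gives the context $\mc{D}$ with $\mc{C} = \ctxh[\mc{E}]{\mc{D}}{\Psi}{\Theta}$, establishing~(1) and~(3).

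\textbf{Rule-by-rule preservation.} With $\jcfgti{\Psi}{\Iota_L}{\mc{E}}{\Theta}$ in hand, I would verify clause~(2) by case analysis on the rule that fired. For each rule I invert the typing of $\mc{E}$---which by the leaf rules comes from the process typing judgment of the principal process via \getrn{conf-p}, possibly together with a message typing via \getrn{conf-m}---read off the premises of the corresponding process typing rule, and reassemble them into $\jcfgti{\Psi}{\Iota_R}{\mc{E}'}{\Theta}$ with the same interface. Rules that allocate a continuation channel via an existential, such as \cref{eq:sill:msr-cut,eq:sill:msr-tot-r,eq:sill:msr-tand-r,eq:sill:msr-tplus-r,eq:sill:msr-tds-r,eq:sill:msr-rhop-r}, place the fresh name into $\Iota_R$, which is disjoint from $\Gamma\Iota_L\Delta$ by freshness; rules that merely consume a message, such as \cref{eq:sill:msr-tu-l,eq:sill:msr-tot-l,eq:sill:msr-tand-l,eq:sill:msr-tplus-l}, or forward, as in \cref{eq:sill:msr-fwdp,eq:sill:msr-fwdn}, take $\Iota_R$ empty or a single relabelled name. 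In every case $\Psi$ and $\Theta$ are unchanged. Clause~(4) is then immediate: plugging $\mc{E}'$ into the same context $\mc{D}$ by the Replacement Property (\cref{lemma:sill-background-properties-traces:9}, via \cref{def:sill-obs-equiv/relat-equiv:1}) reconstructs $\mc{C}'$ as $\ctxh[\mc{E}']{\mc{D}}{\Psi}{\Theta}$ over the signature $\Sigma' = \Sigma,\check\Iota_R$.

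\textbf{Main obstacle.} I expect the decomposition stage, and within it the split \getrn{conf-c} case, to be the crux: there the two facts of the active multiset are typed in separate branches of the given derivation, and bringing them into a single sub-derivation requires genuinely reassociating the composition along the shared channel. The supporting facts---that this channel is internal (\cref{lemma:sill-background-properties-traces:5}) and that \getrn{conf-c} is associative and partially commutative---make the reassociation possible, but the careful bookkeeping lies in tracking which channels become internal and in keeping the freshly allocated names in $\Iota_R$ disjoint from the ambient interface. The rule-by-rule stage, by contrast, is long but mechanical, each case being a direct inversion of a single process typing rule.
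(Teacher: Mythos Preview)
The paper does not include a proof of this proposition; it states the result and defers the full development to \cite[\bibstring{chapter}~5]{kavanagh_2021:_commun_based_seman}. So there is no paper-side argument to compare against directly.

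Your two-stage plan is the standard shape for a preservation result of this kind and is essentially sound. One point to tighten: in the split \getrn{conf-c} case you say you would ``extract each fact from its branch by induction'', but the induction hypothesis as you have stated it concerns active multisets of rule instances, and a lone message fact is not itself such an active multiset. The fix is routine---either strengthen the inductive statement to cover extraction of any single fact (then handle the two-fact case by the reassociation you describe), or argue the single-fact extraction directly from the leaf rules \getrn{conf-m} and \getrn{conf-p} without invoking the hypothesis. This is bookkeeping rather than a conceptual gap; your identification of the split-case reassociation as the crux, and of the supporting facts you need (\cref{lemma:sill-background-properties-traces:5}, associativity and partial commutativity of \getrn{conf-c}, and the Replacement Property \cref{lemma:sill-background-properties-traces:9}), is on target.
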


As a corollary of \cref{prop:sill-background:1}, we know that in a trace $T = (\mc{C}_0, (r_i;\delta_i)_i)$ from a well-typed configuration $\jcfgti[\Sigma]{\Gamma}{\Iota_0}{\mc{C}_0}{\Delta}$, there exist for all $i$ some $\Sigma_i$ and $\Iota_i$ such that $\jcfgti[\Sigma_i]{\Gamma}{\Iota_i}{\mc{C}_i}{\Delta}$.
Indeed, $\Sigma_i$ and $\Iota_i$ are given by induction on $n$, where each step is given by an application of \cref{prop:sill-background:1}.
We also know that every free channel name appearing in a trace has an associated session type, a fact that we will use repeatedly when reasoning about traces.
\Cref{def:sill-background-properties-traces:1,cor:sill-background-properties-traces:1} capture this relationship between traces, channels, and session types.
We use colour to indicate the \defin[judgment!mode]{modes} of use of judgments, where we designate \inm{inputs} in blue and \outm{outputs} in red.
The set of \defin{free channel names in a trace} \(T = (\mc{C}_0, (r_i;\delta_i)_i)\) is \(\freecn(T) = \bigcup_i \freecn(\mc{C}_i)\).

\begin{definition}
  \label{def:sill-background-properties-traces:1}
  Let $T = (\mc{C}_0, (r_i;\delta_i)_i)$ be a trace from ${\jcfgti[\Sigma]{\Gamma}{\Iota_0}{\mc{C}_0}{\Delta}}$.
  We write ${\jttp{T}{c}{A}}$\glsadd{jttp} to mean that $c : A$ appears in $\Gamma$, $\Delta$, or $\Iota_i$ for some $i$, where \(\Iota_i\) is given by recursion on \(i\) and \cref{prop:sill-background:1}.
\end{definition}

\begin{proposition}
  \label{cor:sill-background-properties-traces:1}
  Let $T = (\mc{C}_0, (r_i;\delta_i)_i)$ be a trace from $\jcfgt{\Gamma}{\mc{C}_0}{\Delta}$.
  The judgment $\jttp{T}{c}{A}$ is a total function from free channel names in \(T\) to session types, \ie, for all \(c \in \freecn(T)\), there exists unique \(A\) such that \(\jttp{T}{c}{A}\).
\end{proposition}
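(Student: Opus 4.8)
The plan is to prove totality and functionality separately, in both cases reducing to the typings $\jcfgti[\Sigma_i]{\Gamma}{\Iota_i}{\mc{C}_i}{\Delta}$ that \cref{prop:sill-background:1} furnishes for each $i$ (as recorded in the corollary stated immediately after it, and used in \cref{def:sill-background-properties-traces:1} to define $\jttp{T}{c}{A}$). Throughout, the external interface $(\Gamma, \Delta)$ is the same constant interface for every $i$, and only the internal contexts $\Iota_i$ vary.

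For totality, fix $c \in \freecn(T)$, so that $c \in \freecn(\mc{C}_i)$ for some $i$. I would first establish, by a routine induction on the derivation of $\jcfgti{\Gamma}{\Iota_i}{\mc{C}_i}{\Delta}$ (using the containment of \cref{prop:sill-background-stat-prop-typed-config:3}), that every free channel name of a well-typed configuration lies in $\check\Gamma \cup \check\Iota_i \cup \check\Delta$. Since $\Gamma$, $\Delta$, and $\Iota_i$ are linear contexts, $c$ is assigned a session type $A$ by whichever of these contexts contains it, and this witnesses $\jttp{T}{c}{A}$.

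For functionality, suppose $\jttp{T}{c}{A}$ and $\jttp{T}{c}{A'}$; I must show $A = A'$. Within a single index $i$, the names of $\Gamma, \Delta, \Iota_i$ are pairwise distinct by well-formedness, so each assigns at most one type to $c$, settling the case where both witnesses arise from the same $i$. The substance is cross-index coherence, and I would phrase it as the claim that for every step $\mc{C}_k \msstep \mc{C}_{k+1}$ the assignments $(\Gamma, \Delta, \Iota_k)$ and $(\Gamma, \Delta, \Iota_{k+1})$ agree on their common channels. This is exactly the content of \cref{prop:sill-background:1}: the interface $(\Gamma,\Delta)$ is unchanged, and writing $\Iota_k = \Iota_L\Iota'$ and $\Iota_{k+1} = \Iota_R\Iota'$, the context's internal channels $\Iota'$ (and the preserved hole interface) keep their types, while $\Iota_R$ is disjoint from $\Gamma\Iota_L\Delta$; hence on $\check\Iota_k \cap \check\Iota_{k+1} = \check\Iota'$ the two assignments coincide.

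Finally I would upgrade consecutive-step agreement to global agreement using the global freshness built into the definition of a trace: the constants introduced by each $\xi_j$ are disjoint from every earlier $M_i$, so an internal channel name is created exactly once and, once consumed, never reappears. Consequently the set of indices at which $c$ occurs as an internal channel is an interval, along which consecutive-step agreement propagates to a single type, while channels of the fixed interface $\Gamma, \Delta$ carry one type throughout; thus $\Gamma \cup \Delta \cup \bigcup_i \Iota_i$ is a well-defined partial function and $A = A'$. I expect this last coherence argument—combining the preservation invariant with global freshness to exclude a name being consumed and later reborn with a different type—to be the main obstacle, whereas the totality half is essentially bookkeeping.
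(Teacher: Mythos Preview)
The paper does not include a proof of this proposition; it states the result and defers the full development of these properties to \cite[\bibstring{chapter}~5]{kavanagh_2021:_commun_based_seman}. Your argument is correct and is the natural one: totality follows from the easy induction that $\freecn(\mc{C}_i) \subseteq \check\Gamma \cup \check\Iota_i \cup \check\Delta$ for any well-typed configuration, and uniqueness combines the per-step preservation invariant (\cref{prop:sill-background:1}) with the trace-level global freshness condition to rule out a consumed internal name being reborn later with a different type. Your identification of that last point as the only real obstacle is accurate, and your ``interval'' argument handles it cleanly.
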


\Cref{prop:main:3} is reminiscent of a type inversion principle (contrast it with, \eg, \cite[Lemma~8.2.2]{pierce_2002:_types_progr_languag}).
It implies that types assigned by $\jttp{T}{c}{A}$\varindex{session type of@ a@ channel}{12!~345}\Index<{session-typed channel} are consistent with those for message and process facts appearing in the trace $T$:

\begin{proposition}[Inversion Principle]
  \label{prop:main:3}
  Let $T = (\mc{C}_0, (r_i;\delta_i)_i)$ be a trace from $\jcfgti[\Sigma_0]{\Gamma}{\Iota_0}{\mc{C}_0}{\Delta_0}$, and let $\Sigma_i$ and $\Iota_i$ be given by induction on $i$ and \cref{prop:sill-background:1}.
  For all $n$, if $\jproc{c_0}{P} \in \mc{C}_n$, then
  \begin{enumerate}
  \item $c_0 \in \freecn(P)$;
  \item for all $c_i \in \freecn(P)$, then $\jttp{T}{c_i}{A_i}$ for some $A_i$; and
  \item where $\freecn(P) = \{ c_0, \dotsc, c_m \}$, we have $\jtypem{\cdot}{c_1 : A_1, \dotsc, c_m : A_m}{P}{c_0}{A_0}$.
  \end{enumerate}
  If $\jmsg{c_0}{m} \in \mc{C}_n$, then
  \begin{itemize}
  \item if $m = \mClose{c}$, then $\jttp{T}{c}{\Tu}$;
  \item if $m = \mSendLP{c}{l_j}{d}$, then $\jttp{T}{c}{\Tplus \{ l_i : A_i \}_{i \in I}}$ for some $A_i$ ($i \in I$), and $\jttp{T}{d}{A_j}$ for some $j \in I$;
  \item if $m = \mSendLN{c}{l_j}{d}$, then $\jttp{T}{c}{\Tamp \{ l_i : A_i \}_{i \in I}}$ for some $A_i$ ($i \in I$), and $\jttp{T}{d}{A_j}$ for some $j \in I$;
  \item if $m = \mSendCP{c}{a}{b}$, then $\jttp{T}{c}{A \Tot B}$, $\jttp{T}{a}{A}$, and $\jttp{T}{b}{B}$ for some $A$ and $B$;
  \item if $m = \mSendCN{c}{a}{b}$, then $\jttp{T}{c}{A \Tlolly B}$, $\jttp{T}{a}{A}$, and $\jttp{T}{b}{B}$ for some $A$ and $B$;
  \item if $m = \mSendVP{c}{v}{d}$, then $\jttp{T}{c}{\Tand{\tau}{A}}$ and $\jttp{T}{d}{A}$ for some $A$ and $\tau$ such that $\jtypef{\cdot}{v}{\tau}$;
  \item if $m = \mSendVN{c}{v}{d}$, then $\jttp{T}{c}{\Timp{\tau}{A}}$ and $\jttp{T}{d}{A}$ for some $A$ and $\tau$ such that $\jtypef{\cdot}{v}{\tau}$;
  \item if $m = \mSendUP{c}{d}$, then $\jttp{T}{c}{\Trec{\alpha}{A}}$ and $\jttp{T}{d}{[\Trec{\alpha}{A}/\alpha]A}$ for some $\jstype[+]{\alpha}{A}$; and
  \item if $m = \mSendUN{c}{d}$, then $\jttp{T}{c}{\Trec{\alpha}{A}}$ and $\jttp{T}{d}{[\Trec{\alpha}{A}/\alpha]A}$ for some $\jstype[-]{\alpha}{A}$.
  \end{itemize}
\end{proposition}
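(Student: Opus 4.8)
The plan is to prove the Inversion Principle by induction on $n$, the number of steps in the trace, using the Preservation property (\cref{prop:sill-background:1}) to relate $\mc{C}_{n}$ to $\mc{C}_{n+1}$ and the configuration typing rules \getrn{conf-m}, \getrn{conf-p}, \getrn{conf-c} to extract the typing information for each process or message fact. The overall strategy rests on two observations already available in the excerpt. First, by the corollary to \cref{prop:sill-background:1} noted immediately before \cref{def:sill-background-properties-traces:1}, every configuration $\mc{C}_n$ in the trace is itself well-typed as $\jcfgti[\Sigma_n]{\Gamma}{\Iota_n}{\mc{C}_n}{\Delta_0}$, with the external interface $(\Gamma, \Delta_0)$ unchanged and the internal channels $\Iota_n$ obtained by threading through \cref{prop:sill-background:1}. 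Second, by \cref{cor:sill-background-properties-traces:1}, the judgment $\jttp{T}{c}{A}$ is a well-defined total function, so once I locate a channel's type in some $\Gamma$, $\Delta_0$, or $\Iota_i$, I may freely refer to ``the'' type $A$ with $\jttp{T}{c}{A}$.

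\textbf{First I would handle the base case $n = 0$} by analysing the derivation of $\jcfgti[\Sigma_0]{\Gamma}{\Iota_0}{\mc{C}_0}{\Delta_0}$. Since the configuration typing judgment is defined inductively by \getrn{conf-m}, \getrn{conf-p}, and \getrn{conf-c}, I would argue that any fact $\jproc{c_0}{P}$ or $\jmsg{c_0}{m}$ appearing in $\mc{C}_0$ is introduced by a leaf instance of \getrn{conf-p} or \getrn{conf-m} respectively, and then appears in the conclusion only by repeated application of the composition rule \getrn{conf-c}. The subformula property (\cref{prop:sill-background-stat-prop-typed-config:3}) guarantees that the local interface of that leaf is contained in $\Gamma \Iota_0 \Delta_0$, so every channel named by the fact carries a type recorded in $\Gamma$, $\Delta_0$, or $\Iota_0$, whence $\jttp{T}{c}{A}$ holds for it. For a process fact, \getrn{conf-p} gives exactly a premise $\jtypem{\cdot}{\Delta'}{P}{c_0}{A_0}$; combining this with \cref{prop:sill-background:2} (so that $\freecn(P) = \{c_0, \dots, c_m\}$) yields the three process clauses directly. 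For a message fact, \getrn{conf-m} gives $\jtypem{\cdot}{\Delta'}{m}{c}{A}$, where $m$ is a message process; inverting the (unique) process typing rule applicable to each syntactic form of $m^+$ or $m^-_{b,c}$---namely the right/left rules such as \getrn{R-tplus}, \getrn{R-tot}, \getrn{R-tand}, \getrn{R-rhop}, \getrn{R-tu}, and their negative counterparts---produces precisely the displayed type relationships for each bullet. This is a mechanical case analysis over the grammar in \cref{eq:sill-background-typing-mult-rewr:1,eq:sill-background-typing-mult-rewr:2}.

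\textbf{For the inductive step} I would assume the statement holds for $\mc{C}_n$ and consider a fact in $\mc{C}_{n+1}$, where $\mc{C}_n \msstep \mc{C}_{n+1}$ by some rule instance $\mc{E} \msstep \mc{E}'$. By \cref{prop:sill-background:1} there is a configuration context $\ctxh{\mc{D}}{\Psi}{\Theta}$ with $\jcfgti{\Psi}{\Iota_L}{\mc{E}}{\Theta}$ and $\jcfgti{\Psi}{\Iota_R}{\mc{E}'}{\Theta}$ sharing the same interface $(\Psi, \Theta)$, and with the stationary part $\mc{D}$ unchanged. A fact of $\mc{C}_{n+1}$ lies either in the stationary part $\mc{D}$ or in the replacement active multiset $\mc{E}'$. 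In the former case it already occurs in $\mc{C}_n$ (modulo the internal channels that \cref{prop:sill-background:1} guarantees are unchanged in type), so the induction hypothesis and totality of $\jttp{T}{\cdot}{\cdot}$ apply. In the latter case, $\mc{E}'$ is the right-hand side of one of the operational rules \cref{eq:sill:msr-fwdp,eq:sill:msr-cut,eq:sill:msr-tu-r,eq:sill:msr-tot-r,eq:sill:msr-tot-l,eq:sill:msr-tplus-r,eq:sill:msr-tplus-l,eq:sill:msr-tand-r,eq:sill:msr-tds-r,eq:sill:msr-rhop-r} (and their negative/left variants); I would read off the well-typing $\jcfgti{\Psi}{\Iota_R}{\mc{E}'}{\Theta}$ supplied by \cref{prop:sill-background:1} and invert the configuration typing rule that produced it, exactly as in the base case, to obtain the required clauses for the newly created process and message facts. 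The continuation channels introduced by the existential $\exists d$ in rules like \cref{eq:sill:msr-tand-r} receive their types from $\Iota_R$, which by \cref{prop:sill-background:1}(2) has fresh names disjoint from the ambient interface, so $\jttp{T}{d}{A}$ is well-defined for them.

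\textbf{The main obstacle I expect} is bookkeeping rather than conceptual: ensuring that the type assigned to a continuation channel $d$ by the local judgment $\jcfgti{\Psi}{\Iota_R}{\mc{E}'}{\Theta}$ coincides with the global type recorded by $\jttp{T}{d}{A}$ through the threading of internal channels across steps. This requires carefully tracking that \cref{prop:sill-background:1} threads $\Iota_R$ into $\Iota_{n+1}$ consistently and that the type of a channel, once assigned in some $\Iota_i$, never changes---a fact underwritten by \cref{cor:sill-background-properties-traces:1} together with the claim in \cref{prop:sill-background:1} that types of internal channels are preserved. A secondary subtlety is the forwarding rules \cref{eq:sill:msr-fwdp,eq:sill:msr-fwdn}, where a message is relabelled by a channel substitution $\subst{b}{a}{m^+}$; here I must check that the substituted message fact still satisfies the appropriate bullet, which follows because forwarding is only well-typed between channels $a$ and $b$ of the same type (rules \getrn{fwdp}, \getrn{fwdn}), so the carrier and continuation channels inherit matching types. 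Once these consistency checks are in place, every clause reduces to a single inversion of a process typing rule, and the induction closes.
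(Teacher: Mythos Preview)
Your proposal is correct; the paper does not include a proof of this proposition (it defers all proofs in \cref{sec:sill-background:key-props} to the cited thesis), but your strategy---invert the configuration-typing derivation of $\mc{C}_n$ to locate each fact at a \getrn{conf-p} or \getrn{conf-m} leaf, apply the subformula property to place the leaf's interface inside $\Gamma\Iota_n\Delta_0$, then case-analyse the process-typing rule for each message form---is the natural one and would succeed.

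One simplification is worth noting: your explicit induction on $n$ is redundant. You already invoke the corollary to \cref{prop:sill-background:1} that each $\mc{C}_n$ is well-typed as $\jcfgti[\Sigma_n]{\Gamma}{\Iota_n}{\mc{C}_n}{\Delta_0}$, and by \cref{def:sill-background-properties-traces:1} the judgment $\jttp{T}{c}{A}$ is defined precisely by membership of $c:A$ in $\Gamma$, $\Delta_0$, or some $\Iota_i$. Given this, your ``base case'' argument applies verbatim to $\mc{C}_n$ for arbitrary $n$: analyse the derivation of $\jcfgti[\Sigma_n]{\Gamma}{\Iota_n}{\mc{C}_n}{\Delta_0}$ directly. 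Your inductive step, which splits on whether a fact lies in the stationary part $\mc{D}$ or the replacement $\mc{E}'$, is effectively re-proving Preservation and can be dropped. This also dissolves the bookkeeping obstacle you flag about reconciling local types in $\Iota_R$ with the global $\jttp{T}{\cdot}{\cdot}$: that reconciliation is already packaged in the construction of $\Iota_n$ and \cref{cor:sill-background-properties-traces:1}.
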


\Cref{cor:sill-background-dyn-prop-typed-config:1} states that each channel in a trace appears as the carrier channel of at most one message fact.
In \cref{sec:sill-obs-equiv:observ-comm:comm-single-chan}, this \namecref{cor:sill-background-dyn-prop-typed-config:1} will imply that we observe a unique communication on each channel.

\begin{proposition}
  \label{cor:sill-background-dyn-prop-typed-config:1}
  Let $T = (\mc{C}_0, (r_i;\delta_i)_i)$ be a trace from $\jcfgti{\Gamma}{\Iota_0}{\mc{C}_0}{\Delta_0}$.
  For all $j \leq k$, if $\jmsg{c_j}{m_j} \in \mc{C}_j$ and $K \in \mc{C}_k$, then $\carrcn(\jmsg{c_j}{m_j}) \notin \outcn(K)$ or $K = \jmsg{c_j}{m_j}$.
\end{proposition}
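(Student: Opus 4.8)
The plan is to reduce the statement to a \emph{linearity} property of well-typed configurations and then propagate it along the trace by induction on $k$. Throughout, write $c = \carrcn(\jmsg{c_j}{m_j})$, and recall from \cref{rem:sill-background-stat-prop-typed-config:1} that $c \in \outcn(\jmsg{c_j}{m_j})$. First I would prove the following \textbf{linearity lemma}: in any well-typed configuration $\jcfgti{\Gamma}{\Iota}{\mc{C}}{\Delta}$, every channel name lies in $\outcn(K)$ for at most one fact $K \in \mc{C}$, and likewise in $\inpcn(K)$ for at most one $K$. This follows by induction on the derivation of the configuration typing judgment. The rules \getrn{conf-m} and \getrn{conf-p} produce a single fact, so the claim is immediate. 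For \getrn{conf-c}, the two subconfigurations share exactly the channels in $\Pi$ (their disjoint signatures rule out other coincidences, as in \cref{lemma:sill-background-properties-traces:5}); each $\pi \in \Pi$ is provided by the left configuration and used by the right one, so by \cref{def:sill-background-stat-prop-typed-config:3} its polarity makes $\pi$ an output channel on exactly one side and an input channel on the other, and the two induction hypotheses then give uniqueness on each side.

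With this in hand, I would fix $j$ and prove by induction on $k \geq j$ the invariant: \emph{either} $\jmsg{c_j}{m_j} \in \mc{C}_k$ and it is the unique fact of $\mc{C}_k$ having $c$ as an output channel, \emph{or} $c \notin \freecn(\mc{C}_k)$. Every $\mc{C}_k$ is well-typed by \cref{prop:sill-background:1}, so the linearity lemma applies throughout the trace. The base case $k = j$ is immediate: the message is present, and being the output-holder of its own carrier $c$, it is by linearity the unique such fact.

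For the inductive step, write the firing as $\mc{E} \to \mc{E}'$ with $\mc{C}_k = \mc{E},\mc{S}$ and $\mc{C}_{k+1} = \mc{E}',\mc{S}$, and split on the invariant for $\mc{C}_k$. If $c \notin \freecn(\mc{C}_k)$, then locality of rewriting keeps $c$ out of $\mc{E}'$ (its non-fresh names come from $\mc{E}$, and $c$ is not fresh) and out of $\mc{S}$, so $c \notin \freecn(\mc{C}_{k+1})$. If instead the message is present and unique in $\mc{C}_k$, there are two possibilities. When the message lies in the stationary part $\mc{S}$, it survives unchanged into $\mc{C}_{k+1}$; applying the linearity lemma to the well-typed $\mc{C}_{k+1}$ shows it is again the unique output-holder of $c$. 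When the message lies in the active multiset $\mc{E}$, the firing rule must be one of the message-consuming rules, each of which pairs the message with the unique process that interacts on the carrier $c$, i.e.\ the unique input-holder of $c$. By linearity these are the only two facts of $\mc{C}_k$ containing $c$, and both are consumed; a direct inspection of the consuming rules (including \cref{eq:sill:msr-fwdp,eq:sill:msr-fwdn}) shows the carrier is substituted away, so $c \notin \freecn(\mc{E}')$, and hence $c \notin \freecn(\mc{C}_{k+1})$. In every case the invariant is restored.

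The proposition is then immediate from the invariant: given $K \in \mc{C}_k$, if $c \notin \outcn(K)$ the first disjunct of the conclusion holds (this covers evaluation facts, which carry no channels); if $c \in \outcn(K)$, then $c \in \freecn(\mc{C}_k)$, so the invariant forces $\jmsg{c_j}{m_j} \in \mc{C}_k$ to be the unique output-holder of $c$, whence $K = \jmsg{c_j}{m_j}$. I expect the main obstacle to be the linearity lemma together with the per-rule verification in the consuming case: the polarity shifts ($\Tds$ and $\Tus$) and the two forwarding rules either reverse the direction of communication or relocate a message's carrier, so I would check carefully that in each of these the old carrier $c$ is genuinely eliminated and that no rule converts a pre-existing \emph{input} occurrence of $c$ into a fresh \emph{output} occurrence on the same name.
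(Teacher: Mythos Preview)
The paper does not prove this proposition in-text; it is one of the ``key properties'' of \cref{sec:sill-background:key-props} that are stated with reference to \cite[\bibstring{chapter}~5]{kavanagh_2021:_commun_based_seman} for full development. So there is no in-paper argument to compare against directly.

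Your approach is sound and is essentially what one would expect such a proof to look like. The linearity lemma (each channel is an output of at most one fact and an input of at most one fact in a well-typed configuration) is the right structural invariant, and your induction on the typing derivation is the natural way to establish it; the key observation in the \getrn{conf-c} case---that the only shared channels are those in $\Pi$, where polarity forces the output/input split across the two sides---is exactly right and follows from the disjointness conventions the paper imposes on the rule. Your trace invariant (message present and unique output-holder, or carrier absent entirely) is strong enough to close the induction, and your case split on whether the message is stationary or active is the correct one. The ``active'' case is where the work lies, and your observation that the consuming rules always pair the message with the unique input-holder of its carrier, consume both, and substitute the carrier away is accurate; a per-rule check confirms this, including for \cref{eq:sill:msr-fwdp,eq:sill:msr-fwdn} where the carrier is renamed to the forwarding target. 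The shift rules pose no special difficulty: the carrier is still substituted by the continuation channel on consumption.

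One small point worth making explicit in a full write-up: when you invoke $\outcn(K)$ for a fact $K$ inside a larger configuration, you are relying on the subformula property (\cref{prop:sill-background-stat-prop-typed-config:3}) and inversion (\cref{prop:main:3}) to ensure that $K$'s channel types as a standalone fact agree with those assigned by the ambient derivation, so that the polarity-based partition into inputs and outputs is unambiguous. This is true, but it is doing real work in making the linearity lemma well-posed.
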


Finally, we apply the results of \cref{sec:ssos-fairness:prop-fair-trac} to Polarized SILL.
We use the fact that its multiset rewriting system is non-overlapping on well-typed configurations to characterize fair executions of processes.

\begin{proposition}
  \label{prop:sill-background-dyn-prop-typed-config:1}
  If\/ $\jcfgti{\Gamma}{\Iota}{\mc{C}}{\Delta}$, then the MRS given by \cref{sec:sill-background:compl-list-mult} is non-overlapping on $\mc{C}$.
\end{proposition}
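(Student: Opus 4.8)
The plan is to prove that for every well-typed configuration $\mc{C}$ the distinct applicable rule instantiations have pairwise-disjoint active multisets; the non-overlapping inclusions $F_i \cap \olap{\mc{C}}{F_1, \dotsc, F_n} \subseteq G_i$ then hold vacuously because the overlap is empty. Since the evaluation facts $\jeval{M}{v}$ are persistent and are therefore always preserved, I would work only with the ephemeral facts---the process facts $\jproc{c}{P}$ and message facts $\jmsg{a}{m}$---exactly as is permitted when defining non-overlapping in \cref{sec:ssos-fairness:prop-fair-trac}. Throughout I would apply \cref{prop:main:3,cor:sill-background-dyn-prop-typed-config:1} to the length-zero trace consisting of $\mc{C}$ alone, so that their conclusions become statements about the single configuration $\mc{C}$.

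First I would inspect the rules of \cref{sec:sill-background:compl-list-mult} and record that every rule has an ephemeral active multiset of one of two shapes: a single process fact $\jproc{c}{P}$ (the ``sending'' rules such as \cref{eq:sill:msr-cut,eq:sill:msr-tot-r,eq:sill:msr-tplus-r}, together with the unquoting rule \cref{eq:sill:msr-e-qt}, whose only ephemeral premise is a process fact), or a process fact together with a single message fact $\jmsg{a}{m}$ (the ``receiving'' rules and the forwarding rules \cref{eq:sill:msr-fwdp,eq:sill:msr-fwdn}). In the second shape the consumed message is always the one whose carrier channel $\carrcn(\jmsg{a}{m})$ equals the channel on which the process is receiving; I would confirm this uniformly, being careful with the negative cases (e.g.\ \cref{eq:sill:msr-tlolly-r}) where the message fact is indexed by its continuation channel even though its carrier is the receive channel.

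The heart of the argument is pairwise disjointness, which I would split into two claims. (i) No process fact is shared by two distinct instantiations: for a sending process the rule schema and matching substitution are forced by the top-level form of $P$, whose typing is pinned down by the inversion principle \cref{prop:main:3}, so any two instantiations using it are equivalent; for a receiving process or a forwarder the consumed message must have carrier equal to the receive channel, and by \cref{cor:sill-background-dyn-prop-typed-config:1} each channel is the carrier of at most one message fact, so the instantiation is again determined. (ii) No message fact is shared by two distinct instantiations: a message with carrier $c$ can only be consumed alongside a process receiving on $c$, and two distinct such processes would both use $c$ as an input channel; I would rule this out using input-linearity of the configuration typing, namely that each channel is an input of at most one fact, which I expect to obtain by induction on the derivation of $\jcfgti{\Gamma}{\Iota}{\mc{C}}{\Delta}$ (the composition rule \getrn{conf-c} glues a uniquely-provided channel to a uniquely-used one), in the style of \cref{lemma:sill-background-properties-traces:5}.

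Finally I would assemble these facts. Applicability gives $F_i \subseteq \mc{C}$ for each $i$, and pairwise disjointness ensures that for each fact at most one $F_i$ contains it, so the multiset sum $F_1, \dotsc, F_n$ is itself included in $\mc{C}$ and hence $\olap{\mc{C}}{F_1, \dotsc, F_n} = \emptymset$; the required inclusions are then immediate. Along the way the enumeration of distinct applicable instantiations is seen to be finite (at most one per process fact, and $\mc{C}$ is a finite multiset), which is what makes the non-overlapping condition well-defined. The main obstacle I anticipate is precisely the input-linearity step in claim (ii): \cref{cor:sill-background-dyn-prop-typed-config:1} supplies uniqueness on the output side (message carriers) but not on the input side (users of a channel), so I would either establish input-linearity as a separate induction on the configuration typing or read it off from the pluricategorical composition structure of \getrn{conf-c}.
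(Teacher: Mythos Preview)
Your proposal is correct and follows essentially the same strategy as the paper: reduce non-overlapping to pairwise disjointness of the ephemeral active multisets (so that the overlap is empty), then argue by contradiction that any two instantiations sharing a fact must be equivalent. The paper's proof compresses your claims (i) and (ii) into the single sentence ``a case analysis on the possible facts in the intersection implies that $s_1(\phi_1)$ and $s_2(\phi_2)$ are equivalent instantiations, a contradiction''; your unpacking of this case analysis, including the input-linearity step you flag for claim (ii), is exactly what that sentence abbreviates.
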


\begin{proof}
  It is sufficient to show that if $s_1(\phi_1)$ and $s_2(\phi_2)$ are distinct instantiations applicable to $\mc{C}$, then $F_1(\phi_1)$ and $F_2(\phi_2)$ are disjoint multisets: $F_1(\phi_1) \cap F_2(\phi_2) = \emptyset$.
  Indeed, if this is the case and $s_1(\phi_1), \dotsc, s_k(\phi_k)$ are the distinct rule instantiations applications to $\mc{C}$, then $F_1(\phi_1), \dotsc, F_k(\phi_k)$ are all pairwise-disjoint multisets.
  It follows that $F_1(\phi_1),\dotsc,F_k(\phi_k) \subseteq \mc{C}$, so the overlap in $\mc{C}$ is empty: $\olap{\mc{C}}{F_1(\phi_1),\dotsc,F_k(\phi_k)} = \emptyset$.

  To show that $F_1(\phi_1)$ and $F_2(\phi_2)$ are disjoint, we suppose to the contrary.
  A case analysis on the possible facts in the intersection implies that $s_1(\phi_1)$ and $s_2(\phi_2)$ are equivalent instantiations, a contradiction.
\end{proof}

\begin{corollary}\varindex{fairness fair execution}{1!23}\varindex{{multiset rewriting system} execution}{1!2}
  \label{cor:sill-background-dyn-prop-typed-config:2}
  Every configuration $\jcfgti{\Gamma}{\Iota}{\mc{C}}{\Delta}$ has a fair execution.
  Its fair executions are all permutations of each other and they are all union-equivalent.
\end{corollary}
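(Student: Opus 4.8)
Corollary \ref{cor:sill-background-dyn-prop-typed-config:2} asserts three facts about an arbitrary configuration: it has a fair execution, its fair executions are all permutations of each other, and they are all union-equivalent. Let me plan how to derive each.

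The key observation is that all three claims are immediate consequences of results already established, once we know that the MRS for Polarized SILL is interference-free from the given configuration. So my first step is to establish interference-freedom. By Proposition \ref{prop:sill-background-dyn-prop-typed-config:1}, the MRS is non-overlapping on $\mc{C}$. Then by Proposition \ref{prop:ssos-fairness/prop-fair-trac:1}, which says an MRS commutes on $M_0$ if it is non-overlapping on $M_0$, the MRS is interference-free (= commutes) from $\mc{C}$. This is the linchpin: it lets me invoke the machinery of Section \ref{sec:ssos-fairness:prop-fair-trac}.

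Once interference-freedom is in hand, each of the three conclusions follows by citing a prior result. For existence of a fair execution: Proposition \ref{prop:ssos-fairness/prop-fair-trac:9} (the Fair Scheduler) guarantees an über fair execution from $\mc{C}$, and by Proposition \ref{prop:ssos-fairness/prop-fair-trac:2} über fair executions are in particular weakly fair, hence fair. For "all fair executions are permutations of each other": this is exactly the content of Proposition \ref{prop:ssos-fairness/prop-fair-trac:4}, applied with $M_0 = \mc{C}$. For "they are all union-equivalent": this is exactly Corollary \ref{cor:ssos-fairness/prop-fair-trac:1}. So the proof is essentially a short assembly of citations.

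There is a small bookkeeping point worth attending to regarding the definition of a fair execution for a process. The initial configuration is defined for a \emph{well-typed process}, but the corollary is stated for an arbitrary configuration $\jcfgti{\Gamma}{\Iota}{\mc{C}}{\Delta}$; I should make sure that the notion of fairness used (weakly fair, via the convention fixed after Proposition \ref{prop:ssos-fairness/prop-fair-trac:2}) is the one that applies, and note that under interference-freedom the weak, strong, and über notions all coincide, so the word "fair" is unambiguous. I do not expect any genuine obstacle here; the only mild subtlety is confirming that Propositions \ref{prop:ssos-fairness/prop-fair-trac:4} and Corollary \ref{cor:ssos-fairness/prop-fair-trac:1} apply to \emph{executions} (maximally long traces) and not merely finite traces, which they do by their statements.

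\begin{proof}
By \cref{prop:sill-background-dyn-prop-typed-config:1}, the multiset rewriting system of \cref{sec:sill-background:compl-list-mult} is non-overlapping on $\mc{C}$. By \cref{prop:ssos-fairness/prop-fair-trac:1}, it therefore commutes on $\mc{C}$, \ie, it is interference-free from $\mc{C}$. We may now apply the results of \cref{sec:ssos-fairness:prop-fair-trac}. Existence of a fair execution is given by \cref{prop:ssos-fairness/prop-fair-trac:9}, which produces an über fair execution from $\mc{C}$; by \cref{prop:ssos-fairness/prop-fair-trac:2}, this execution is in particular weakly fair, hence fair. That all fair executions from $\mc{C}$ are permutations of each other is \cref{prop:ssos-fairness/prop-fair-trac:4}, and that they are all union-equivalent is \cref{cor:ssos-fairness/prop-fair-trac:1}.
\end{proof}
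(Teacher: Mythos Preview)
Your proof follows the same line as the paper's, citing non-overlapping, then commutation, then the fair scheduler, \cref{prop:ssos-fairness/prop-fair-trac:4}, and \cref{cor:ssos-fairness/prop-fair-trac:1}. There is one small gap: \cref{prop:sill-background-dyn-prop-typed-config:1} gives non-overlapping \emph{on} $\mc{C}$, and \cref{prop:ssos-fairness/prop-fair-trac:1} then gives interference-freedom \emph{on} $\mc{C}$; but the fair scheduler (\cref{prop:ssos-fairness/prop-fair-trac:9}), \cref{prop:ssos-fairness/prop-fair-trac:4}, and \cref{cor:ssos-fairness/prop-fair-trac:1} all require interference-freedom \emph{from} $\mc{C}$, i.e., on every configuration reachable from $\mc{C}$. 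The paper closes this gap by also invoking Preservation (\cref{prop:sill-background:1}): every reachable configuration is again well-typed, so \cref{prop:sill-background-dyn-prop-typed-config:1} applies to each, yielding non-overlapping (hence interference-freedom) from $\mc{C}$. Your sentence ``commutes on $\mc{C}$, \ie, it is interference-free from $\mc{C}$'' elides exactly this step.
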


\begin{proof}
  By \cref{prop:sill-background:1,prop:sill-background-dyn-prop-typed-config:1}, $\mc{P}$ is non-overlapping from $\mc{C}$.
  By \cref{prop:ssos-fairness/prop-fair-trac:1}, this implies that it commutes from $\mc{C}$, so a fair execution exists by \cref{prop:ssos-fairness/prop-fair-trac:3}.
  All of its fair executions are permutations of each other by \cref{prop:ssos-fairness/prop-fair-trac:4}.
  They are union-equivalent by \cref{cor:ssos-fairness/prop-fair-trac:1}.
\end{proof}

\begin{corollary}
  \label{cor:sill-background-properties-traces:2}
  Every process $\jtypem{\cdot}{\Delta}{P}{c}{A}$ has a fair execution.
  Its fair executions are all permutations of each other and they are all union-equivalent.
\end{corollary}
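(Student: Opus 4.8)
The plan is to reduce the statement about processes to the immediately preceding statement about configurations, \cref{cor:sill-background-dyn-prop-typed-config:2}. Recall that a fair execution of $\jtypem{\cdot}{\Delta}{P}{c}{A}$ is by definition a weakly fair execution from its initial configuration $\msinc{\check\Delta, c}{\jproc{c}{P}}$, and that a process trace is nothing more than a trace from this initial configuration. So the three claims for $P$ — existence of a fair execution, that all fair executions are permutations of each other, and that they are all union-equivalent — are literally the corresponding three claims of \cref{cor:sill-background-dyn-prop-typed-config:2} applied to the single configuration $\jproc{c}{P}$, provided we can exhibit that configuration as a well-typed configuration.

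Concretely, I would first observe that the initial configuration is well-typed: applying the rule \getrn{conf-p} to the hypothesis $\jtypem{\cdot}{\Delta}{P}{c}{A}$ yields the configuration judgment $\jcfgti{\Delta}{\cdot}{\jproc{c}{P}}{c:A}$ (with the persistent evaluation facts $\mb{F}$ implicitly present, as throughout). Second, I would note that executions of $P$ and executions of this configuration are the same objects, so the two notions of ``fair execution'' coincide; here it is worth recalling that by \cref{prop:sill-background-dyn-prop-typed-config:1,prop:ssos-fairness/prop-fair-trac:2} weak, strong, and über fairness all agree on traces from a well-typed configuration, so there is no discrepancy between the weakly-fair definition used for processes and the fairness notion invoked in \cref{cor:sill-background-dyn-prop-typed-config:2}. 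Third, I would simply invoke \cref{cor:sill-background-dyn-prop-typed-config:2} with $\mc{C} = \jproc{c}{P}$, $\Gamma = \Delta$, $\Iota = \cdot$, and the provided context $c:A$, which delivers all three conclusions at once.

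There is essentially no genuine obstacle here: the entire content has already been discharged at the configuration level in \cref{cor:sill-background-dyn-prop-typed-config:2} (which itself rests on non-overlap and interference-freedom via \cref{prop:sill-background-dyn-prop-typed-config:1}). The only thing that needs checking is the trivial bookkeeping step that the initial configuration of a well-typed \emph{closed} process is a well-typed \emph{configuration}, which is immediate from \getrn{conf-p}. Accordingly I expect the proof to be a one- or two-sentence appeal to \cref{cor:sill-background-dyn-prop-typed-config:2}.
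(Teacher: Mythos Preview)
Your proposal is correct and matches the paper's own proof, which is a one-line appeal to \cref{cor:sill-background-dyn-prop-typed-config:2} applied to the initial configuration $\jcfgti{\Delta}{\cdot}{\jproc{c}{P}}{c : A}$. Your additional remarks about \getrn{conf-p} and the coincidence of fairness notions are accurate supporting detail, but the paper leaves them implicit.
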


\begin{proof}
  Immediate by \cref{cor:sill-background-dyn-prop-typed-config:2} with the initial configuration ${\jcfgti{\Delta}{\cdot}{\jproc{c}{P}}{c : A}}$.
\end{proof}

\subsection{Type-Indexed Relations}
\label{sec:sill-obs-equiv:relat-equiv}

Our ultimate goal is to relate programs that are equivalent or that somehow approximate each other.
We define various desirable properties for relations on programs and configurations.

Polarized SILL and its configurations do not have unicity of typing, and processes could be equivalent at one type but not at another.
Accordingly, we would like our relations to be type-indexed:

\begin{definition}
  \label{def:sill-obs-equiv/bisimulations:1}
  \defin{Type-indexed (binary) relations}\DIndex<{type-indexed relation} are families of relations indexed by typing sequents.
  Explicitly:
  \begin{enumerate}
  \item A \defin{type-indexed relation $\relfnt{R}$ on configurations}\DIndex<{{type-indexed relation} configuration} is a family of relations $(\relfnt{R}_{\Delta \vdash \Phi})_{\Delta,\Phi}$ where $(\mc{C}, \mc{D}) \in \relfnt{R}_{\Delta \vdash \Phi}$ only if $\jcfgt{\Delta}{\mc{C}}{\Phi}$ and $\jcfgt{\Delta}{\mc{D}}{\Phi}$.
    In this case, we write $\jtrelc{\relfnt{R}}{\Delta}{\mc{C}}{\mc{D}}{\Phi}$\glsadd{jtrelc}.
  \item A \defin{type-indexed relation $\relfnt{R}$ on processes}\DIndex<{{type-indexed relation} process} is a family of relations $(\relfnt{R}_{\Psi; \Delta \vdash c : A})_{\Psi,\Delta,c : A}$ where $(P, Q) \in \relfnt{R}_{\Psi, \Delta \vdash c : A}$ only if $\jtypem{\Psi}{\Delta}{P}{c}{A}$ and $\jtypem{\Psi}{\Delta}{Q}{c}{A}$.
    In this case, we write $\jtrelp{\relfnt{R}}{\Psi}{\Delta}{P}{Q}{c}{A}$\glsadd{jtrelp}.
  \item A \defin{type-indexed relation $\relfnt{R}$ on terms}\DIndex<{{type-indexed relation} {functional term}} is a family of relations $(\relfnt{R}_{\Psi \vdash \tau})_{\Psi,\tau}$ where $(M, N) \in \relfnt{R}_{\Psi \vdash \tau}$ only if $\jtypef{\Psi}{M}{\tau}$ and $\jtypef{\Psi}{N}{\tau}$.
    In this case, we write $\jtrelf{\relfnt{R}}{\Psi}{M}{N}{\tau}$\glsadd{jtrelf}.
  \end{enumerate}

  Type-indexed relations are assumed to satisfy the exchange, renaming, and weakening structural properties whenever their underlying judgments do.
  The \defin{renaming}\varindex{{structural property} renaming}{1!2 2!1} property for type-indexed relations on configurations is subtle because we elided internal channels.
  Explicitly, it is the property:
  \begin{itemize}
  \item If $\jtrelc{\relfnt{R}}{\Gamma}{\mc{C}}{\mc{D}}{\Delta}$, $\jcfgti{\Gamma}{\Iota_1}{\mc{C}}{\Delta}$, and $\jcfgti{\Gamma}{\Iota_2}{\mc{D}}{\Delta}$, and $\rename{\sigma}{\Gamma\Delta}{\Gamma'\Delta'}$, $\rename{\sigma_1}{\Iota_1}{\Iota_1'}$, and $\rename{\sigma_2}{\Iota_2}{\Iota_2'}$ are renamings, then $\jtrelc{\relfnt{R}}{\Gamma'}{\apprs{\sigma,\sigma_1}{\mc{C}}}{\apprs{\sigma,\sigma_2}{\mc{D}}}{\Delta'}$.\qedhere
  \end{itemize}
\end{definition}

We will study the effects of running ``equivalent programs'' in various program contexts.
Contexts are programs with holes:

\begin{definition}
  \label{def:sill-obs-equiv/relat-equiv:3}
  A \defin[context!typed!for processes|defin]{(typed) process context}\DIndex{process context} $\jtypem{\Psi}{\Delta}{\ctxh{C}{\Gamma;\Lambda}{b:B}}{a}{A}$\glsadd{procctx} is a process formed by the rules of \cref{sec:sill-den-sem:rules-proc-form} plus one instance of the axiom \getrn{Ct-p-hole}:
  \[
    \getrule{Ct-p-hole}
  \]
  Given a process context $\jtypem{\Psi}{\Delta}{\ctxh{C}{\Gamma;\Lambda}{b:B}}{a}{A}$ and a process $\jtypem{\Gamma}{\Lambda}{P}{b}{B}$, the result of ``plugging'' $P$ into the hole is the term $\jtypem{\Psi}{\Delta}{\ctxh[P]{C}{\Gamma;\Lambda}{b:B}}{a}{A}$ obtained by replacing the axiom \getrn{Ct-p-hole} by the derivation of $\jtypem{\Gamma}{\Lambda}{P}{b}{B}$.
\end{definition}

We most often work only with closed processes, and listing empty functional contexts becomes tiresome.
Consequently, we write $\ctxh{C}{\Lambda}{b : B}$ for $\ctxh{C}{{\cdot}\,;\,\Lambda}{b : B}$.

\begin{definition}
  \label{def:sill-obs-equiv/bisimulations:2}
  A type-indexed relation is \defin{contextual}\varindex{context contextual relation}{1!23 3!2~}[|defin]  if it is closed under contexts.
  Explicitly:
  \begin{enumerate}
  \item A type-indexed relation $\relfnt{R}$ on configurations is contextual if $\jtrelc{\relfnt{R}}{\Lambda}{\mc{C}}{\mc{D}}{\Xi}$ implies that\\
    \( \jtrelc{\relfnt{R}}{\Gamma}{\ctxh[\mc{C}]{\mc{E}}{\Lambda}{\Xi}}{\ctxh[\mc{D}]{\mc{E}}{\Lambda}{\Xi}}{\Delta} \) for all $\jcfgt{\Gamma}{\ctxh{\mc{E}}{\Lambda}{\Xi}}{\Delta}$.
  \item A type-indexed relation $\relfnt{R}$ on processes is contextual if $\jtrelp{\relfnt{R}}{\Psi}{\Delta}{P}{Q}{c}{A}$ implies that\\
    \( {\jtrelp{\relfnt{R}}{\Psi'}{\Delta'}{\ctxh[P]{C}{\Psi; \Delta}{c:A}}{\ctxh[Q]{C}{\Psi; \Delta}{c:A}}{b}{B}} \)
    for all $\jtypem{\Psi'}{\Delta'}{\ctxh{C}{\Psi; \Delta}{c:A}}{b}{B}$.\qedhere
  \end{enumerate}
\end{definition}

\begin{definition}
  \label{def:sill-obs-equiv/relat-equiv:5}
  The \defin{contextual interior}\varindex{relation contextual interior context}{1!23 4!23}[|defin] $\mkcg{\relfnt{R}}$\glsadd{mkcg} of a type-indexed relation $\relfnt{R}$ is the greatest contextual type-indexed relation contained in $\relfnt{R}$.
\end{definition}

\begin{lemma}
  \label{lemma:sill-obs-equiv/observ-comm-equiv:2}
  Taking the contextual interior of a relation is a monotone operation, and it preserves arbitrary intersections.
\end{lemma}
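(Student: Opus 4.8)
The plan is to treat the contextual interior as the interior operator associated with the lattice of contextual type-indexed relations, so that both claims reduce to two elementary closure facts about contextuality. First I would record, directly from \cref{def:sill-obs-equiv/bisimulations:2}, that the collection of contextual type-indexed relations (in each of the three flavours — configurations, processes, and terms) is closed under arbitrary unions and arbitrary intersections. For unions: if each $\relfnt{R}_i$ is contextual and a pair lies in $\bigcup_i \relfnt{R}_i$, then it lies in some $\relfnt{R}_j$, so plugging into any context keeps the plugged pair in $\relfnt{R}_j \subseteq \bigcup_i \relfnt{R}_i$. For intersections: if a pair lies in $\bigcap_i \relfnt{R}_i$, then the plugged pair lies in each $\relfnt{R}_i$ by the contextuality of each, hence in the intersection. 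Each verification is a one-line unfolding of the contextuality clause, carried out identically in the configuration and process cases (and likewise for terms). Closure under unions in particular guarantees that the interior is well-defined and equals $\bigcup \{\relfnt{S} : \relfnt{S}\text{ contextual},\ \relfnt{S} \subseteq \relfnt{R}\}$, the greatest contextual relation below $\relfnt{R}$, matching \cref{def:sill-obs-equiv/relat-equiv:5}.

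Monotonicity then follows immediately: if $\relfnt{R} \subseteq \relfnt{S}$, then $\mkcg{\relfnt{R}}$ is a contextual relation contained in $\relfnt{R} \subseteq \relfnt{S}$, hence contained in the greatest contextual relation below $\relfnt{S}$, namely $\mkcg{\relfnt{S}}$.

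For preservation of arbitrary intersections, write $\relfnt{R} = \bigcap_{i \in I} \relfnt{R}_i$. The inclusion $\mkcg{\relfnt{R}} \subseteq \bigcap_{i} \mkcg{\relfnt{R}_i}$ is a direct consequence of monotonicity, since $\relfnt{R} \subseteq \relfnt{R}_j$ yields $\mkcg{\relfnt{R}} \subseteq \mkcg{\relfnt{R}_j}$ for every $j$. For the reverse inclusion, I would observe that $\bigcap_i \mkcg{\relfnt{R}_i}$ is contextual, by closure of contextual relations under intersections, and is contained in $\bigcap_i \relfnt{R}_i = \relfnt{R}$, since $\mkcg{\relfnt{R}_i} \subseteq \relfnt{R}_i$ for each $i$; being a contextual relation below $\relfnt{R}$, it is contained in the greatest such, i.e. in $\mkcg{\relfnt{R}}$. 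Combining the two inclusions gives equality.

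There is no genuine obstacle here: the content is entirely order-theoretic, and the only thing to check by hand is that the contextuality clause of \cref{def:sill-obs-equiv/bisimulations:2} passes through unions and intersections. The mild care required is to carry out these verifications uniformly across the configuration, process, and term cases, and to note that it is precisely closure under unions that legitimises the existence of the greatest contextual relation invoked in \cref{def:sill-obs-equiv/relat-equiv:5}.
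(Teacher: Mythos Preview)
Your proposal is correct and is the standard order-theoretic argument. The paper states this lemma without proof, treating it as routine; your write-up supplies exactly the expected verification (closure of contextual relations under unions and intersections, then the two containments), so there is nothing to compare.
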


Contextual preorders are called precongruences:

\begin{definition}
  \label{def:sill-obs-equiv/bisimulations:4}
  A typed relation $\relfnt{R}$ on configurations is a \defin{precongruence}\DIndex/{precongruence relation} if:
  \begin{enumerate}
  \item each relation in the family is a preorder; and
  \item the relation respects composition: if $\jtrelc{\relfnt{R}}{\Gamma}{\mc{C}}{\mc{C}'}{\Phi\Pi}$ and $\jtrelc{\relfnt{R}}{\Pi\Lambda}{\mc{D}}{\mc{D}'}{\Xi}$,\\
    then $\jtrelc{\relfnt{R}}{\Gamma\Lambda}{\mc{C},\mc{D}}{\mc{C}',\mc{D}'}{\Phi\Xi}$.
  \end{enumerate}
  It is a \defin{congruence}\DIndex/{congruence relation} if it is also an equivalence relation.\qedhere
\end{definition}

Congruence relations are desirable because they let us ``replace equals by equals''.
The following proposition is standard:

\begin{proposition}
  \label{prop:sill-obs-equiv/bisimulations:2}
  A typed equivalence relation $\relfnt{R}$ on configurations is a precongruence if and only if it is a contextual preorder.
\end{proposition}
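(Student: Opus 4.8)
The plan is to observe first that, since $\relfnt{R}$ is an equivalence relation, each member of the family is reflexive, symmetric, and transitive, hence in particular a preorder. Both ``precongruence'' (\cref{def:sill-obs-equiv/bisimulations:4}) and ``contextual preorder'' therefore already contain the preorder requirement for free, so the entire content of the proposition reduces to showing that, for such an $\relfnt{R}$, respecting composition is equivalent to being contextual in the sense of \cref{def:sill-obs-equiv/bisimulations:2}. I would prove the two implications separately, using only transitivity for one and only reflexivity for the other.

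For the direction contextual $\implies$ respects composition, suppose $\jtrelc{\relfnt{R}}{\Gamma}{\mc{C}}{\mc{C}'}{\Phi\Pi}$ and $\jtrelc{\relfnt{R}}{\Pi\Lambda}{\mc{D}}{\mc{D}'}{\Xi}$. Since membership in a type-indexed relation forces both components to share an interface (\cref{def:sill-obs-equiv/bisimulations:1}), $\mc{C}'$ has interface $(\Gamma, \Phi\Pi)$ and $\mc{D}$ has interface $(\Pi\Lambda, \Xi)$. I would replace the two components one at a time using two composition contexts built from \getrn{conf-c}: first the context $\ctxh{}{\Gamma}{\Phi\Pi}, \mc{D}$ (the hole composed with $\mc{D}$ along $\Pi$), whose contextual closure applied to the first hypothesis yields $\jtrelc{\relfnt{R}}{\Gamma\Lambda}{\mc{C},\mc{D}}{\mc{C}',\mc{D}}{\Phi\Xi}$; then the context $\mc{C}', \ctxh{}{\Pi\Lambda}{\Xi}$, whose contextual closure applied to the second hypothesis yields $\jtrelc{\relfnt{R}}{\Gamma\Lambda}{\mc{C}',\mc{D}}{\mc{C}',\mc{D}'}{\Phi\Xi}$. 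Transitivity then gives $\jtrelc{\relfnt{R}}{\Gamma\Lambda}{\mc{C},\mc{D}}{\mc{C}',\mc{D}'}{\Phi\Xi}$, as required.

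For the direction respects composition $\implies$ contextual, I would argue by induction on the derivation of the configuration context $\jcfgt{\Gamma}{\ctxh{\mc{E}}{\Theta}{\Upsilon}}{\Delta}$, where $(\Theta, \Upsilon)$ is the hole interface and $\jtrelc{\relfnt{R}}{\Theta}{\mc{C}}{\mc{D}}{\Upsilon}$ is given. Since a configuration context contains exactly one hole, the only rules that can be the final step are the hole axiom \getrn{conf-h} and the composition rule \getrn{conf-c}; the leaf rules \getrn{conf-m} and \getrn{conf-p} introduce a single message or process fact and have no configuration premises, so no hole can occur beneath them. In the base case the context is the bare hole, plugging returns $\mc{C}$ and $\mc{D}$, and $\jtrelc{\relfnt{R}}{\Theta}{\mc{C}}{\mc{D}}{\Upsilon}$ holds by hypothesis. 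In the inductive case the context is a composition of two premises in which the hole lies in exactly one of them, say the left; the induction hypothesis relates the two pluggings of that premise, reflexivity relates the hole-free sibling to itself, and the ``respects composition'' clause combines these into the two pluggings of the whole context (the right-hole case is symmetric).

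The routine content of both directions is the interface and channel bookkeeping. The main thing to get right is that the composition contexts in the first direction are genuinely well-formed configuration contexts: one must choose the internal channels and signature extensions so that the freshness conditions of \getrn{conf-c} and of plugging (\cref{def:sill-obs-equiv/relat-equiv:1}) are met, renaming internal channels via the structural renaming property where necessary. In the second direction the analogous care is to keep track, through \getrn{conf-c}, of which sub-interface is shared ($\Pi$) and which remain external, so that the hypotheses of the ``respects composition'' clause match the premises of the rule exactly. No step involves genuine difficulty beyond this bookkeeping.
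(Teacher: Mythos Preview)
Your proposal is correct and follows essentially the same approach as the paper. The paper dismisses the direction ``precongruence $\Rightarrow$ contextual'' as obvious (which your induction on the context derivation spells out), and for ``contextual $\Rightarrow$ precongruence'' it uses exactly your two one-sided composition contexts followed by transitivity.
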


\begin{proof}
  It is obvious that every precongruence is contextual.
  To show that every contextual preorder is a precongruence, assume $\jtrelc{\relfnt{R}}{\Gamma}{\mc{C}}{\mc{C}'}{\Phi\Pi}$ and $\jtrelc{\relfnt{R}}{\Pi\Lambda}{\mc{D}}{\mc{D}'}{\Xi}$.
  By contextuality, $\jtrelc{\relfnt{R}}{\Gamma\Lambda}{\mc{C},\mc{D}}{\mc{C}',\mc{D}}{\Phi\Xi}$ and $\jtrelc{\relfnt{R}}{\Gamma\Lambda}{\mc{C}',\mc{D}}{\mc{C}',\mc{D}'}{\Phi\Xi}$.
  By transitivity, $\jtrelc{\relfnt{R}}{\Gamma\Lambda}{\mc{C},\mc{D}}{\mc{C}',\mc{D}'}{\Phi\Xi}$.
\end{proof}

We can use contextual interiors and \cref{prop:sill-obs-equiv/bisimulations:2} to extract precongruences from preorders (\cf~\cite[Theorem~7.5]{milner_1980:_calcul_commun_system}).
\Cref{prop:sill-obs-equiv/bisimulations:2} and the definitions of precongruence and congruence translate from configurations to processes in the obvious way.

\section{Observed Communication Semantics}
\label{sec:sill-obs-equiv:observ-comm}

A longstanding idea in concurrency theory is that processes can only interact with their environments through communication, and that we can only observe systems by communicating with them.
Indeed, as far back as 1980, \Textcite[2]{milner_1980:_calcul_commun_system} wrote ``we suppose that the only way to observe a [concurrent] system is to communicate with it''.

In this \namecref{sec:sill-obs-equiv:observ-comm}, we make the above intuitions mathematically rigorous by giving Polarized SILL an \textit{observed communication semantics}.
Observed communication semantics, introduced by \Textcite{atkey_2017:_obser_commun_seman_class_proces}, define the meaning of a process to be the communications observed on its channels.

In \cref{sec:sill-obs-equiv:observ-comm:sess-typed-comm}, we make the notion of a session-typed communication explicit.
We endow session-typed communications with a notion of approximation.
This approximation will be used later to relate various notions of equivalence.
We also characterize infinite communications by their finite approximations.

In \cref{sec:sill-obs-equiv:observ-comm:comm-single-chan} we show how to observe communications on free channels of configurations.
We first do so using a coinductively defined judgment that observes communications on fair executions.
We show that the choice of fair trace does not matter.
We also show that we can instead consider only finite prefixes of fair executions.
The communications observed on the finite prefixes approximate and determine those observed on the complete fair execution.
We generalize from single channels to collections of channels in \cref{sec:sill-obs-equiv:observ-comm:comm-mult-chans}.

We view these three sections as the three steps of a recipe for defining observed communication semantics for session-typed languages.
The notion of session-typed communication given in \cref{sec:sill-obs-equiv:observ-comm:sess-typed-comm} is language independent: it describes what it means to be a session-typed communication \emph{in general}.
The constructions of \cref{sec:sill-obs-equiv:observ-comm:comm-single-chan,sec:sill-obs-equiv:observ-comm:comm-mult-chans} are necessarily language-dependent: they specify how to observe (language-independent) communications in (language-dependent) process executions.
To define an observed communication semantics for a different session-typed language then consists of extending \cref{sec:sill-obs-equiv:observ-comm:sess-typed-comm} to handle any new kinds of session-typed communications, followed by showing how to map process executions to session-typed communications.

\subsection{Session-Typed Communications \textit{qua} Communications}
\label{sec:sill-obs-equiv:observ-comm:sess-typed-comm}

We begin by defining session-typed communications.
Let a communication $v$ be a (potentially infinite) tree generated by the following grammar, where $k$ ranges over labels and $f$ ranges over functional values such that $\jtypef{\cdot}{f}{\tau}$ for some $\tau$.
We explain these communications $v$ below when we associate them with session types.
\begin{align*}
  v,v' &\Coloneqq \bot & \text{empty communication}\\
       &\mathrel{\mid} \cclose & \text{close message}\\
       &\mathrel{\mid} (\cunfold, v) &\text{unfolding message}\\
       &\mathrel{\mid} (k, v) & \text{choice message}\\
       &\mathrel{\mid} (v, v') & \text{channel message}\\
       &\mathrel{\mid} (\cshift, v) &\text{shift message}\\
       &\mathrel{\mid} (\cval{f}, v) &\text{functional value message}
\end{align*}

The judgment $\jsynt{v}{A}$\glsadd{jsynt} means that the syntactic communication $v$ has closed type $A$.
It is coinductively defined by the following rules:
\begin{gather*}
  \getrule{C-bot}
  \qquad
  \getrule{C-tu}
  \qquad
  \getrule{C-rho}
  \\
  \getrule{C-tplus}
  \qquad
  \getrule{C-tamp}
  \\
  \getrule{C-tot}
  \qquad
  \getrule{C-tlolly}
  \\
  \getrule{C-tds}
  \qquad
  \getrule{C-tus}
  \\
  \getrule{C-tand}
  \qquad
  \getrule{C-timp}
\end{gather*}
Every closed session type $A$ has an empty communication $\bot$ representing the absence of communication of that type.
The communication $\cclose$ represents the close message.
A communication of type $\Tplus \{ l : A_l \}_{l \in L}$ or $\Tamp \{ l : A_l \}_{l \in L}$ is a label $k \in L$ followed by a communication $v_k$ of type $A_k$, whence the communication $(k, v_k)$.
Though by itself the communication $(k, v_k)$ does not capture the direction in which the label $k$ travelled, this poses no problem to our development: we almost never consider communications without an associated session type, and the polarity of the type specifies the direction in which $k$ travels.
We cannot directly observe channels, but we can observe communications over channels.
Consequently, we observe a communication of type $A \Tot B$ or $A \Tlolly B$ as a pair $(v, v')$ of communications $v$ of type $A$ and $v'$ of type $B$.
This is analogous to the semantics of $A \Tot B$ in the ``folklore'' relational semantics of classical linear logic proofs~\cite{atkey_2017:_obser_commun_seman_class_proces,barr_1991:_auton_categ_linear_logic}.
A communication of type $\Trec{\alpha}{A}$ is an unfold message followed by a communication of type $[\Trec{\alpha}{A}/\alpha]A$.
A communication of type $\Tand{\tau}{A}$ or $\Timp{\tau}{A}$ is a value $f$ of type $\tau$ followed by a communication of type $A$.

In \cref{cha:sill-obs-equiv}, we will relate processes based on their observed communications.
To do so, we must first be able to relate communications.
We expect relations on communications to be ``type-indexed'':

\begin{definition}
  \label{def:sill-obs-equiv/observ-comm:10}
  A \defin{type-indexed relation $\relfnt{R}$ on communications} is a family of relations $(\relfnt{R}_{A})_A$ indexed by session types $A$, where $(v, w) \in \relfnt{R}_A$ only if $\jsynt{v}{A}$ and $\jsynt{w}{A}$.
  In this case, we write $\jsynt{v \mathrel{\relfnt{R}} w}{A}$\glsadd{jtrelcomms}.
\end{definition}

Given some relation $\leqslant$ on terms, we can endow session-typed communications with a notion of simulation $\commsim[\leqslant]$.
Intuitively, $u \commsim[\leqslant] w$\glsadd{commsim} means that $u$ approximates $w$, or that $w$ carries at least as much information as $u$.
Functional values aside, it suggests that $u$ is a potentially incomplete version of $w$.
In this regard, it is analogous to the ordering on domains of lazy natural numbers~\cite{freyd_1990:_recur_types_reduc, escardo_1993:_lazy_natur_number}.
Though we intend for $\leqslant$ to be a preorder, it is not required to be one.
This relaxation is for purely technical reasons: it simplifies the task of relating $\commsim[\leqslant]$ to other relations on communications.

\begin{definition}
  \label{def:sill-obs-equiv/observ-comm:7}
  Let $\leqslant$ be a type-indexed relation on terms.
  \defin{Communication simulation $\commsim$ modulo $\leqslant$}\glsadd{commsim} is the largest type-indexed family $\commsim[\leqslant]$ of relations $(\commsim_A)_A$ on session-typed communications defined by the following rules.
  When $\leqslant$ is clear from context, we write $\commsim$ for $\commsim[\leqslant]$.
  \begin{gather*}
    \getrule{CS-bot}
    \quad
    \getrule{CS-tu}
    \quad
    \getrule{CS-rho}
    \\
    \getrule{CS-tplus}
    \qquad
    \getrule{CS-tamp}
    \\
    \getrule{CS-tot}
    \qquad
    \getrule{CS-tlolly}
    \\
    \getrule{CS-tds}
    \qquad
    \getrule{CS-tus}
    \\
    \getrule{CS-tand}
    \qquad
    \getrule{CS-timp}\qedhere
  \end{gather*}
\end{definition}

\begin{remark}
  We do not ask for $\commsim$ to be a partial order.
  This is because antisymmetry forces communication equivalence (defined in \cref{def:sill-obs-equiv/observ-comm:11} below as the intersection of \(\commsim\) and \(\opr{\commsim}\)) for communications of type $\Tand{\tau}{A}$ to hold only when the transmitted values are equal on the nose.
  This is too fine of an equivalence: we would like to allow communications of type $\Tand{\tau}{A}$ to be ``equivalent'' whenever the values of type $\tau$ are in some sense ``equivalent'', without insisting that that equivalence be syntactic equality.
\end{remark}

\begin{proposition}
  \label{prop:sill-obs-equiv/observ-comm-equiv:3}
  The function $\commsim[({-})]$ is monotone, $\omega$-continuous, and $\omega$-cocontinuous.
  The relation $\commsim[\leqslant]$ is respectively reflexive or transitive whenever $\leqslant$ is reflexive or transitive.
  It is a type-indexed relation.
\end{proposition}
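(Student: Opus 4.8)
The plan is to treat $\commsim[\leqslant]$ as the greatest fixed point $\gfp(\Phi_\leqslant)$ of a monotone generating operator $\Phi_\leqslant$ on type-indexed relations on communications, where $\Phi_\leqslant(\mathcal{S})$ contains exactly those pairs matching the conclusion of one of the rules \getrn{CS-bot} through \getrn{CS-timp} all of whose communication-premises lie in $\mathcal{S}$ and all of whose term-premises hold in $\leqslant$. This operator is monotone both in $\mathcal{S}$ and in $\leqslant$; each rule has finitely many premises (at most two); and the rules are \emph{syntax-directed}, in that the outermost shape of $v$, of $w$, and of the type $A$ determines which rule may apply. Since $\Phi_\leqslant$ preserves the invariant that $v$ and $w$ both have type $A$, so does its greatest fixed point, which immediately yields the last assertion that $\commsim[\leqslant]$ is a type-indexed relation. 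The workhorse throughout is the coinduction principle: to prove $\mathcal{S} \subseteq \commsim[\leqslant]$ it suffices to exhibit $\mathcal{S}$ as a post-fixed point, $\mathcal{S} \subseteq \Phi_\leqslant(\mathcal{S})$.

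For monotonicity, if $\leqslant_1 \subseteq \leqslant_2$ then $\Phi_{\leqslant_1}(\mathcal{S}) \subseteq \Phi_{\leqslant_2}(\mathcal{S})$ for every $\mathcal{S}$, so $\commsim[\leqslant_1] = \Phi_{\leqslant_1}(\commsim[\leqslant_1]) \subseteq \Phi_{\leqslant_2}(\commsim[\leqslant_1])$ exhibits $\commsim[\leqslant_1]$ as a post-fixed point of $\Phi_{\leqslant_2}$, whence $\commsim[\leqslant_1] \subseteq \commsim[\leqslant_2]$ by coinduction. For reflexivity, assuming $\leqslant$ reflexive, I would show the diagonal $\{(v,v) : \jsynt{v}{A}\}$ is a post-fixed point: inverting the coinductive typing judgment $\jsynt{v}{A}$ reveals the outermost shape of $v$, the corresponding rule applies, and the only term-premises that arise are of the form $f \leqslant f$, discharged by reflexivity. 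For transitivity, assuming $\leqslant$ transitive, I would show the relational composite $\commsim[\leqslant] \mathbin{;} \commsim[\leqslant]$ is a post-fixed point: given $u \commsim[\leqslant] v \commsim[\leqslant] w$, either $u = \bot$ and \getrn{CS-bot} applies directly, or $u \neq \bot$ and syntax-directedness forces $v$ and then $w$ into the same shape, after which the component premises recompose and any term-premises $f \leqslant f'$, $f' \leqslant f''$ compose to $f \leqslant f''$ by transitivity.

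For $\omega$-cocontinuity, let $(\leqslant_n)_n$ be a descending chain with meet $\bigcap_n \leqslant_n$. The inclusion $\commsim[\bigcap_n \leqslant_n] \subseteq \bigcap_n \commsim[\leqslant_n]$ is immediate from monotonicity. For the reverse, I would show $\bigcap_n \commsim[\leqslant_n]$ is a post-fixed point of $\Phi_{\bigcap_n \leqslant_n}$: if $(v,w)$ lies in every $\commsim[\leqslant_n]$, then by syntax-directedness the \emph{same} rule witnesses membership for all $n$; its finitely many communication-premises therefore hold in every $\commsim[\leqslant_n]$, hence in the intersection, and its single term-premise $f \leqslant_n f'$ holds for all $n$, hence in $\bigcap_n \leqslant_n$. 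Coinduction then gives $\bigcap_n \commsim[\leqslant_n] \subseteq \commsim[\bigcap_n \leqslant_n]$.

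The main obstacle is $\omega$-continuity. For an ascending chain $(\leqslant_n)_n$ with join $\bigcup_n \leqslant_n$, the inclusion $\bigcup_n \commsim[\leqslant_n] \subseteq \commsim[\bigcup_n \leqslant_n]$ is again just monotonicity, but the reverse is an \emph{upper} bound on a greatest fixed point, so the post-fixed-point method does not apply directly. My plan is to use the explicit $\omega$-stage description $\commsim[\leqslant] = \bigcap_{k} \Phi_\leqslant^{k}(\top)$, valid because $\Phi_\leqslant(\mathord{-})$ is $\omega$-cocontinuous in its relation argument (once more by syntax-directedness and finiteness of premises), and then to appeal to the finite-approximation characterization of $\commsim[\leqslant]$ developed in this section, whereby a pair is related exactly when all finite truncations of the left-hand communication are, each truncation involving only finitely many term-comparisons and so needing only a finitely large chain index. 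The crux is to pass from ``each finite truncation is handled by some $\leqslant_n$'' to ``a single $\leqslant_n$ handles all truncations at once'', i.e.\ to exchange the descending intersection over unfolding depth $k$ with the ascending union over the chain index $n$. I expect this uniformity step to be the genuinely delicate point, since for infinitely deep communications (for instance at a recursive value-transmission type such as $\Trec{\alpha}{\Tand{\tau}{\alpha}}$) the indices can a priori grow with depth; securing the exchange is where the real work lies, and where any additional admissibility hypothesis on the term relations would be brought to bear.
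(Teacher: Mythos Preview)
Your elementary arguments for monotonicity, reflexivity, $\omega$-cocontinuity, and type-indexedness are correct and take a more hands-on route than the paper, which instead packages all of these (and $\omega$-continuity) into a single step: it recognizes $\commsim[({-})]$ as the order-dualized parametrized initial fixed point $\opc{\left(\sfix{\left(\opc{\Phi}\right)}\right)}$ of the two-argument rule functional $\Phi : \mc{F} \times \mc{R} \to \mc{R}$ and then appeals to an external result (\cite[Proposition~4.3.1]{kavanagh_2021:_commun_based_seman}) about such parametrized fixed points. For transitivity your argument via the relational composite ${\commsim[\leqslant]} \mathbin{;} {\commsim[\leqslant]}$ is also correct and a bit slicker than the paper's, which shows instead that each iterated power $\left(\commsim[\leqslant]\right)^n$ is a $\Phi(\leqslant,{-})$-post-fixed point and then uses $\omega$-continuity of $\Phi(\leqslant,{-})$ in its relation argument to pass to the full transitive closure $\bigcup_n \left(\commsim[\leqslant]\right)^n$; your direct $n=2$ case already delivers transitivity without that detour.

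On $\omega$-continuity, you put your finger on precisely the obstruction: for a communication at a type like $\Trec{\alpha}{\Tand{\tau}{\alpha}}$ carrying infinitely many values, the chain index needed to witness each successive term comparison can grow without bound, so the exchange of the intersection over unfolding depth with the union over the chain index is not available in general, and your finite-approximation sketch will not close without an extra admissibility hypothesis on the term relations. The paper does not confront this difficulty directly; it simply defers the entire $\omega$-continuity claim to the cited proposition from~\cite{kavanagh_2021:_commun_based_seman}. So you are not missing an idea that the paper supplies---the paper's argument for this particular clause is a black-box citation---and a self-contained proof along the lines you sketch would indeed require the additional hypothesis you anticipate.
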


\begin{proof}
  We begin by showing that the function is well-defined.
  Let $\mc{R}$ be the complete lattice of all type-indexed relations on session-typed communications, and let $\mc{F}$ be the complete lattice of all type-indexed relations on functional terms.
  For each $\relfnt{F} \in \mc{F}$, the above rules define a rule functional $\Phi(\relfnt{F}, {-}) : \mc{R} \to \mc{R}$.
  It is $\omega$-cocontinuous by \cite[Theorem~2.9.4]{sangiorgi_2012:_introd_bisim_coind}.
  It extends to a monotone function $\Phi : \mc{F} \times \mc{R} \to \mc{R}$.
  We observe that $\commsim[{(-)}]$ is given by $\opc{\left(\sfix{\left(\opc{\Phi}\right)}\right)} : \mc{F} \to \mc{R}$.
  Indeed, the greatest fixed point of $\Phi(\relfnt{F}, {-})$ is the initial fixed point $\sfix{\left(\opc{\Phi}\right)}(\relfnt{F})$ of $\opc{\Phi}(\relfnt{F}, {-}) : \opc{\mc{R}} \to \opc{\mc{R}}$, where $\opc{\Phi} : \opc{\mc{F}} \times \opc{\mc{R}} \to \opc{\mc{R}}$, and $\sfix{\left(\opc{\Phi}\right)} : \opc{\mc{F}} \to \opc{\mc{R}}$ is given by \cite[Proposition~4.3.1]{kavanagh_2021:_commun_based_seman}.
  By the same proposition, $\commsim[{(-)}]$ is monotone, $\omega$-continuous, and $\omega$-cocontinuous.

  We use the coinduction proof principle to show that $\commsim$ is reflexive.
  Let $\Delta$ be the identity relation on session-typed communications.
  A case analysis on the rules shows that $\Delta \subseteq \Phi({\leqslant}, \Delta)$.
  Because $\commsim[\leqslant]$ is the greatest post-fixed point of $\Phi({\leqslant}, {-})$, we conclude that it contains $\Delta$, \ie, that it is reflexive.

  Assume now that $\leqslant$ is a preorder.
  We use the same technique to show that $\commsim[\leqslant]$ is a preorder.
  Now let $\reltransc{\commsim}$ be the transitive closure of $\commsim[\leqslant]$.
  Recall that the transitive closure $\reltransc{\relfnt{R}}$ of a relation $\relfnt{R}$ can be calculated by
  \[
    {\reltransc{\relfnt{R}}} = \bigcup_{n = 1}^\infty {\relfnt{R}^n},
  \]
  where $\relfnt{R}^n$ is the $n$-fold composition of $\relfnt{R}$ with itself.
  The functional $\Phi(\leqslant, {-})$ is $\omega$-continuous by \cite[Exercise~2.9.2]{sangiorgi_2012:_introd_bisim_coind}.
  In particular, this implies that
  \[
    \Phi({\leqslant}, {\reltransc{\relfnt{R}}}) = \bigcup_{n = 1}^\infty \Phi({\leqslant}, {\relfnt{R}^n}).
  \]
  Thus, to show that ${\reltransc{\relfnt{R}}} \subseteq \Phi({\leqslant}, {\reltransc{\relfnt{R}}})$, it is sufficient to show that ${\relfnt{R}^n} \subseteq \Phi({\leqslant}, {\relfnt{R}^n})$ for all $n$.
  Recall that $\commsim$ is the greatest post-fixed point of $\Phi({\leqslant}, {\relfnt{R}^n})$.
  This means that to show that $\commsim$ is transitive, \ie, ${\reltransc{\commsim}} \subseteq {\commsim}$, it is sufficient to show that ${\reltransc{\commsim}} \subseteq \Phi({\leqslant}, {\reltransc{\commsim}})$.
  We proceed by case analysis on $n$ to show that ${\commsim^n} \subseteq \Phi({\leqslant}, {\commsim^n})$.
  The case $n = 1$ is immediate by definition of $\commsim$ as the greatest fixed point of $\Phi({\leqslant}, {-})$.
  We now show the case $n = m + 1$.
  Assume that $\jsynt{u \commsim^n w}{A}$ because $\jsynt{u \commsim v}{A}$ and $\jsynt{v \commsim^m w}{A}$.
  We show that $\jsynt{u \mathrel{\Phi({\leqslant}, {\commsim^n})} w}{A}$.
  We proceed by case analysis on the rule that formed $\jsynt{u \commsim v}{A}$, giving several illustrative cases:
  \begin{proofcases}
  \item[\getrn{CS-bot}] Then $u = \bot$, and $\jsynt{w}{A}$ because we assumed that $\commsim$ was a type-indexed relation on session-typed communications.
    So $\jsynt{u \mathrel{\Phi({\leqslant}, {\commsim^n})} w}{A}$ thanks to $\getrn{CS-bot}$.
  \item[\getrn{CS-tplus}] Then $A = \Tplus \{ l : A_l \}_{l \in L}$, $u = (k, u')$, and $v = (k, v')$ for some $u'$ and $v'$.
    By hypothesis, $\jsynt{u' \commsim v'}{A_k}$.
    And induction on $m$ reveals that $w = (k, w')$ for some $\jsynt{v' \mathrel{\commsim^m} w'}{A_k}$.
    So $\jsynt{u' \mathrel{\commsim^n} w'}{A_k}$.
    By \getrn{CS-tplus}, we then get $\jsynt{u \mathrel{\Phi({\leqslant}, {\commsim^n})} w}{A}$ as desired.
  \item[\getrn{CS-tand}] Then $A = \Tand{\tau}{B}$, $u = (\cval f, u')$, and $v = (\cval g, v')$ for some $f$, $g$, $u'$, and $v'$.
    By hypothesis, $\jsynt{u' \commsim v'}{B}$ and $\jtrelf{\leqslant}{\cdot}{f}{g}{\tau}$.
    And induction on $m$ reveals that $w = (\cval h, w')$ for some $\jsynt{v' \mathrel{\commsim^m} w'}{B}$ and $\jtrelf{\leqslant^m}{\cdot}{g}{h}{\tau}$.
    So $\jtrelf{\leqslant}{\cdot}{f}{h}{\tau}$ and $\jsynt{u' \mathrel{\commsim^n} w'}{B}$.
    By \getrn{CS-tand}, we then get $\jsynt{u \mathrel{\Phi({\leqslant}, {\commsim^n})} w}{A}$ as desired.\qedhere
  \end{proofcases}
\end{proof}

\begin{definition}
  \label{def:sill-obs-equiv/observ-comm:11}
  Let $\equiv$ be a type-indexed relation on terms.
  \defin{Communication equivalence $\commeq$ modulo $\equiv$}, written $\commeq[\equiv]$\glsadd{commeq}, is given by $\jsynt{v \commeq[\equiv] w}{A}$ if and only if both $\jsynt{v \commsim[\equiv] w}{A}$ and $\jsynt{w \commsim[\equiv] v}{A}$.
  When $\equiv$ is clear from context, we write $\commeq$ for $\commeq[\equiv]$.
\end{definition}

\begin{proposition}
  \label{prop:sill-obs-equiv/observ-comm:1}
  Communication equivalence $\commeq[\equiv]$ is a type-indexed relation.
  It is an equivalence relation whenever $\equiv$ is a preorder.
\end{proposition}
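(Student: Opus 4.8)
The plan is to reduce the entire statement to \cref{prop:sill-obs-equiv/observ-comm-equiv:3}, which has already established the substantive facts about $\commsim[\equiv]$. Recall from \cref{def:sill-obs-equiv/observ-comm:11} that $\commeq[\equiv]$ is the intersection of $\commsim[\equiv]$ with its converse: $\jsynt{v \commeq[\equiv] w}{A}$ holds if and only if both $\jsynt{v \commsim[\equiv] w}{A}$ and $\jsynt{w \commsim[\equiv] v}{A}$. Every claim then follows by unfolding this conjunction and appealing to the corresponding property of $\commsim[\equiv]$.

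First I would handle the type-indexing claim. Since $\commsim[\equiv]$ is a type-indexed relation by \cref{prop:sill-obs-equiv/observ-comm-equiv:3}, any pair $(v,w)$ related by $\commeq[\equiv]$ at $A$ is in particular related by $\commsim[\equiv]$ at $A$, which forces $\jsynt{v}{A}$ and $\jsynt{w}{A}$; this is exactly the condition required by \cref{def:sill-obs-equiv/observ-comm:10}. Next, assuming $\equiv$ is a preorder (hence reflexive and transitive), \cref{prop:sill-obs-equiv/observ-comm-equiv:3} tells us that $\commsim[\equiv]$ is reflexive and transitive, and I would verify the three equivalence properties of $\commeq[\equiv]$ from this. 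Symmetry is free: the defining conjunction is invariant under swapping $v$ and $w$, so $\commeq[\equiv]$ equals its own converse. Reflexivity follows by conjoining the single fact $\jsynt{v \commsim[\equiv] v}{A}$ with itself. For transitivity, given $\jsynt{v \commeq[\equiv] w}{A}$ and $\jsynt{w \commeq[\equiv] x}{A}$, I would unfold to the four facts $v \commsim w$, $w \commsim v$, $w \commsim x$, $x \commsim w$ (all at $A$) and apply transitivity of $\commsim[\equiv]$ to obtain $v \commsim x$ and $x \commsim v$, which together give $\jsynt{v \commeq[\equiv] x}{A}$.

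There is no real obstacle here; the proposition is a routine corollary of the fixed-point analysis already carried out for $\commsim[\equiv]$, and it is precisely the sort of statement the paper flags as ``standard''. The only points worth mild care are keeping the type index $A$ fixed across each chain of $\commsim[\equiv]$-steps (which is automatic, since all the defining rules preserve the index) and observing that forming the intersection with the converse is what makes symmetry immediate, so that no separate coinductive argument is needed beyond the reflexivity and transitivity already in hand.
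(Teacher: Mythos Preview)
Your proposal is correct and takes essentially the same approach as the paper: both reduce everything to \cref{prop:sill-obs-equiv/observ-comm-equiv:3} and the observation that $\commeq[\equiv]$ is $\commsim[\equiv]$ intersected with its converse. The only difference is cosmetic: the paper invokes the general fact that ${<} \cap {\opr{<}}$ is an equivalence relation whenever $<$ is a preorder, whereas you unfold that fact into the three explicit checks.
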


\begin{proof}
  It follows from \cref{prop:sill-obs-equiv/observ-comm-equiv:3} that it is type-indexed.
  Assume now that $\equiv$ is a preorder.
  By \cref{prop:sill-obs-equiv/observ-comm-equiv:3}, $\commsim[\equiv]$ is a preorder.
  Then by definition, $\commeq[\equiv]$ is the intersection of a preorder and its opposite.
  But in general ${<} \cap {\opr{<}}$ is an equivalence relation whenever $<$ is a preorder.
  We conclude that $\commeq[\equiv]$ is an equivalence relation.
\end{proof}

Communication equivalence modulo~$=$ holds if and only if two communications are equal on the nose:

\begin{proposition}
  \label{lemma:sill-obs-equiv/observ-comm:3}
  For all $A$, $\jsynt{u \commeq[=] v}{A}$ if and only if $u = v$.
\end{proposition}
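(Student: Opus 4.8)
The plan is to prove the biconditional by establishing each direction separately; the forward implication is immediate, while the converse is the substantive part and relies on coinduction.

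For the direction $u = v \Rightarrow \jsynt{u \commeq[=] v}{A}$, I would argue as follows. Since equality $=$ is a reflexive type-indexed relation on terms, \cref{prop:sill-obs-equiv/observ-comm-equiv:3} gives that $\commsim[=]$ is reflexive. Hence from $u = v$ we obtain both $\jsynt{u \commsim[=] v}{A}$ and $\jsynt{v \commsim[=] u}{A}$, and by \cref{def:sill-obs-equiv/observ-comm:11} this is precisely $\jsynt{u \commeq[=] v}{A}$.

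For the converse I would show that the relation $\commeq[=]$ is contained in equality of communications, viewing communications as potentially infinite trees so that equality is characterized coinductively as the greatest bisimulation on the communication grammar. It then suffices to check that $\commeq[=]$ is such a bisimulation: whenever $\jsynt{u \commeq[=] v}{A}$, that is, $\jsynt{u \commsim[=] v}{A}$ and $\jsynt{v \commsim[=] u}{A}$ both hold, the roots of $u$ and $v$ coincide and their corresponding immediate subcommunications are again $\commeq[=]$-related. This is a case analysis on the rule (among \getrn{CS-bot} through \getrn{CS-timp}) concluding $\jsynt{u \commsim[=] v}{A}$. The base cases force literal equality: if $u = \bot$, then by inspecting which rules can conclude a judgment with $\bot$ on the right of $\commsim$ -- only \getrn{CS-bot} with its second argument equal to $\bot$ -- the hypothesis $\jsynt{v \commsim[=] \bot}{A}$ forces $v = \bot$; and $\jsynt{\cclose \commsim[=] v}{\Tu}$ can only arise from \getrn{CS-tu}, forcing $v = \cclose$. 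In each structural case the rule forces $v$ to share the top constructor of $u$ (including matching labels for \getrn{CS-tplus} and \getrn{CS-tamp}, and, for \getrn{CS-tand} and \getrn{CS-timp}, forcing the transmitted functional values to satisfy $\jtypef{\cdot}{f = f'}{\tau}$, i.e.\ to be equal, because $\leqslant$ is instantiated to $=$). Combining the derivation of $\jsynt{u \commsim[=] v}{A}$ with that of $\jsynt{v \commsim[=] u}{A}$ then yields mutual simulation of the corresponding subcommunications, so they are $\commeq[=]$-related. By coinduction, $u = v$.

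The main obstacle will be the potential infiniteness of communications, which precludes a naive structural induction and forces the bisimulation packaging above. The structural insight that makes this antisymmetry-up-to-equality argument go through is that $\commsim[=]$ is an approximation order whose only nontrivial slack is the rule \getrn{CS-bot}, which places $\bot$ below every communication, while the rules for functional-value messages compare transmitted values by strict equality. Consequently mutual simulation collapses all slack: $\bot$ can be simulated only by $\bot$, every constructor must be matched on the nose, and the transmitted values must agree exactly, so the two trees coincide node by node.
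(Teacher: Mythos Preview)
Your proposal is correct and follows essentially the same approach as the paper: both directions match (reflexivity for $u = v \Rightarrow {\commeq[=]}$, and a coinductive/bisimulation argument for the converse). The only difference is granularity: you spell out the rule-by-rule case analysis establishing that $\commeq[=]$ is a bisimulation for the tree coalgebra, whereas the paper compresses this into the single observation that $\commeq[=]$ is the bisimilarity induced by the coinductive definition of $\jsynt{w}{A}$ and then invokes the general coalgebraic fact that bisimilar elements of a terminal coalgebra are equal.
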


\begin{proof}
  Necessity is immediate by reflexivity of $\commeq[=]$.
  Sufficiency comes from recognizing $\commeq[=]$ as the notion of bisimulation given by the coinductive definition of $\jsynt{w}{A}$, and that by \cite[Theorem~2.7.2]{jacobs_rutten_2012:_introd_coalg_coind}, bisimilar elements of the terminal coalgebra are equal.
\end{proof}

\begin{proposition}
  \label{prop:sill-obs-equiv/observ-comm-equiv:13}
  ``Communication simulation modulo'' and ``communication equivalence modulo'' are related by the identity $\left({\commsim[\leqslant]} \cap {\opr{(\commsim[\leqslant])}}\right) = \left({\commeq[({\leqslant} \cap {\opr{\leqslant}})]}\right)$.
\end{proposition}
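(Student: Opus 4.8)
The plan is to write $\equiv = {\leqslant} \cap {\opr{\leqslant}}$ and abbreviate the left-hand side by $S = {\commsim[\leqslant]} \cap {\opr{(\commsim[\leqslant])}}$, and to prove the two inclusions $\commeq[\equiv] \subseteq S$ and $S \subseteq \commeq[\equiv]$ separately. Throughout I would use that $\commeq[\equiv] = {\commsim[\equiv]} \cap {\opr{(\commsim[\equiv])}}$ by \cref{def:sill-obs-equiv/observ-comm:11}, that $S$ is symmetric (since $\opr{S} = {\opr{(\commsim[\leqslant])}} \cap {\commsim[\leqslant]} = S$), and the monotonicity of $\commsim[({-})]$ from \cref{prop:sill-obs-equiv/observ-comm-equiv:3}. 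I write $\Phi(\equiv, {-})$ for the rule functional whose greatest fixed point is $\commsim[\equiv]$, as in the proof of \cref{prop:sill-obs-equiv/observ-comm-equiv:3}.

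The reverse inclusion $\commeq[\equiv] \subseteq S$ follows from monotonicity alone. Since $\equiv \subseteq {\leqslant}$, monotonicity gives $\commsim[\equiv] \subseteq \commsim[\leqslant]$, and hence $\opr{(\commsim[\equiv])} \subseteq \opr{(\commsim[\leqslant])}$ by taking opposites. As $\commeq[\equiv]$ is contained in both $\commsim[\equiv]$ and $\opr{(\commsim[\equiv])}$ by definition, it is contained in both $\commsim[\leqslant]$ and $\opr{(\commsim[\leqslant])}$, and therefore in their intersection $S$.

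The forward inclusion $S \subseteq \commeq[\equiv]$ is the substantive direction, which I would establish by coinduction, showing $S \subseteq \Phi(\equiv, S)$. Fix $(u, w) \in S$, so that both $\jsynt{u \commsim[\leqslant] w}{A}$ and $\jsynt{w \commsim[\leqslant] u}{A}$ hold, and proceed by case analysis on the rule forming $\jsynt{u \commsim[\leqslant] w}{A}$. For every rule other than \getrn{CS-tand} and \getrn{CS-timp}, the rule of $\Phi(\equiv, S)$ with the same conclusion applies: its side conditions are well-typedness hypotheses, which hold because $S$ is a type-indexed relation, and any premises relating the immediate subcommunications of $u$ and $w$ are discharged by observing that the companion derivation of $\jsynt{w \commsim[\leqslant] u}{A}$ must use the structurally dual rule—its shape is forced, since $\commsim$ is type-indexed and the top constructor of $u$ fixes the applicable rule—so the two derivations jointly place each pair of subcommunications in $S$. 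In the value-transmission cases, $u = (\cval f, u')$ and $w = (\cval g, w')$; the derivation of $\jsynt{u \commsim[\leqslant] w}{A}$ yields $f \leqslant g$ and $u' \commsim[\leqslant] w'$, while that of $\jsynt{w \commsim[\leqslant] u}{A}$ yields $g \leqslant f$ and $w' \commsim[\leqslant] u'$. Hence $f \mathrel{\equiv} g$ and $(u', w') \in S$, which is exactly what the \getrn{CS-tand}/\getrn{CS-timp} rule of $\Phi(\equiv, S)$ requires. This proves $S \subseteq \Phi(\equiv, S)$, whence $S \subseteq \commsim[\equiv]$ by the coinduction proof principle; since $S = \opr{S}$, also $S \subseteq \opr{(\commsim[\equiv])}$, and so $S \subseteq \commeq[\equiv]$.

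The only delicate point is the bookkeeping in the forward inclusion: one must confirm that the two companion derivations decompose $u$ and $w$ the same way, so that their premises can legitimately be combined, and—in the value cases—that the two opposite functional inequalities $f \leqslant g$ and $g \leqslant f$ merge into the single relation $\equiv$. Both are immediate once the case analysis is organized by the top constructor of $u$, so no genuine obstacle remains beyond this routine verification.
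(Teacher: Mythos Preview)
Your proof is correct. It differs structurally from the paper's: the paper expresses both sides as greatest fixed points $\nu L$ and $\nu R$ of certain combined functionals and then shows that $L$ and $R$ have the same post-fixed points, whereas you prove the two inclusions directly---the inclusion $\commeq[\equiv] \subseteq S$ by monotonicity of $\commsim[({-})]$ alone, and $S \subseteq \commeq[\equiv]$ by a single coinduction showing $S \subseteq \Phi(\equiv, S)$, followed by the symmetry $S = \opr{S}$. The case analysis at the core is the same in both arguments, but your decomposition is more elementary: it sidesteps the fixed-point algebra the paper uses to pass from an intersection of greatest fixed points to a single greatest fixed point of an intersection, and it cleanly separates the trivial direction (monotonicity) from the substantive one (coinduction).
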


\begin{proof}
  Let $\Phi$ be the functional defining $\commsim[({-})]$, and set $I = {\leqslant} \cap {\opr{\leqslant}}$.
  Observe for all relations $\relfnt{X}$, $\relfnt{Y}$, $\relfnt{Z}$ that $\Phi(\relfnt{X} \cap \relfnt{Y}, \relfnt{Z}) = \Phi(\relfnt{X}, \relfnt{Z}) \cap \Phi(\relfnt{Y}, \relfnt{Z})$.
  We compute, where we use the syntax $\nu X. F(X)$ for the greatest fixed point of $F$, that:
  \begin{align*}
    & \commeq[I]\\
    &= \left({\commsim[I]}\right) \cap \opr{\left(\commsim[I]\right)}\\
    &= \left(\nu \relfnt{V}.\Phi(I, \relfnt{V})\right) \cap \opr{\left( \nu \relfnt{V}. \Phi(I, \relfnt{V}) \right)}\\
    &= \left(\nu \relfnt{V}.\Phi(I, \relfnt{V})\right) \cap  \nu \relfnt{V}. \opr{\left(\Phi(I, \relfnt{V}) \right)}\\
    &= \nu \relfnt{V}.\Phi({\leqslant}, \relfnt{V}) \cap \Phi({\opr\leqslant}, \relfnt{V}) \cap \opr{\left(\Phi({\leqslant}, \relfnt{V}) \right)} \cap \opr{\left(\Phi({\opr\leqslant}, \relfnt{V}) \right)},
      \shortintertext{and analogously,}
    & \left({\commsim[\leqslant]} \cap {\opr{(\commsim[\leqslant])}}\right)\\
    &= \nu \relfnt{V}. \Phi({\leqslant}, \relfnt{V}) \cap \opr{\left(\Phi({\leqslant}, \relfnt{V}) \right)}.
  \end{align*}
  To show  that the fixed points are equal, it is sufficient to show that they have the same post-fixed points.
  Set
  \begin{align*}
    L(\relfnt{V}) &= \Phi({\leqslant}, \relfnt{V}) \cap \Phi({\opr\leqslant}, \relfnt{V}) \cap \opr{\left(\Phi({\leqslant}, \relfnt{V}) \right)} \cap \opr{\left(\Phi({\opr\leqslant}, \relfnt{V}) \right)},\\
    R(\relfnt{V}) &= \Phi({\leqslant}, \relfnt{V}) \cap \opr{\left(\Phi({\leqslant}, \relfnt{V}) \right)}.
  \end{align*}
  Clearly every post-fixed point of $L$ is a post-fixed point of $R$: $L(\relfnt{V}) \subseteq R(\relfnt{V})$ for all $\relfnt{V}$.
  Conversely, assume that $\relfnt{V}$ is a post-fixed point of $R$, \ie, $\relfnt{V} \subseteq R(\relfnt{V})$.
  We show that $\relfnt{V}$ is a post-fixed point of $L$ by showing that $R(\relfnt{V}) \subseteq L(\relfnt{V})$.
  Assume that $\jsynt{v \mathrel{R(\relfnt{V})} w}{A}$.
  Then $\jsynt{v \mathrel{\Phi({\leqslant}, \relfnt{V})} w}{A}$.
  A case analysis then reveals that $\jsynt{v \mathrel{\opr{\left(\Phi({\leqslant}, \relfnt{V})\right)}} w}{A}$ by the same rule.
  The result follows by case analysis on this rule.
  We give a few illustrative cases.
  \begin{proofcases}
  \item[\getrn{CS-bot}] Then $v = w = \bot$, and a straightforward check gives $\jsynt{\bot \mathrel{L(\relfnt{V})} \bot}{A}$.
  \item[\getrn{CS-tplus}] Then $v = (k, v')$, $w = (k, w')$, $\jsynt{v' \mathrel{\relfnt{V}} w'}{A_k}$, and $\jsynt{w' \mathrel{\relfnt{V}} v'}{A_k}$.
    A straightforward check again gives $\jsynt{v \mathrel{L(\relfnt{V})} w}{A}$.
  \item[\getrn{CS-tand}] Then $A = \Tand{\tau}{B}$, $v = (\cval f, v')$, $w = (\cval g, w')$, $f \leqslant g$, $g \leqslant f$, $\jsynt{v' \mathrel{\relfnt{V}} w'}{B}$, and $\jsynt{w' \mathrel{\relfnt{V}} v'}{B}$.
    A straightforward check gives that $\jsynt{v \mathrel{L(\relfnt{V})} w}{A}$.\qedhere
  \end{proofcases}
\end{proof}

We now show that communications are uniquely determined by their finite approximations.
This opens the door to reasoning about $\commsim[\leqslant]$ using inductive techniques.
We will also use sequences of finite approximations to construct infinite observed communications ``in the limit'' in \cref{sec:sill-obs-equiv:observ-comm:comm-single-chan}.
We can think of height \(n\) approximations as truncating communications \(w\) (viewed as trees) at height \(n\):

\begin{definition}
  \label{def:sill-obs-equiv/barb-cont-congr:5}
  The \defin{height $n$ approximation} $\capxn{w}{n}$\glsadd{capxn} of a communication $w$ is defined by induction on $n$ and recursion on $w$:
  \begin{align*}
    \capxn{w}{0} &= \bot & \capxn{\bot}{n + 1} &= \bot\\
    \capxn{\cclose}{n + 1} &= \cclose & \capxn{(\cval f, v)}{n + 1} &= (\cval f, \capxn{v}{n})\\
    \capxn{(k, v)}{n + 1} &= (k, \capxn{v}{n}) & \capxn{(u, v)}{n + 1} &= (\capxn{u}{n}, \capxn{v}{n})\\
    \capxn{(\cshift, v)}{n + 1} &= (\cshift, \capxn{v}{n}) & \capxn{(\cunfold, v)}{n + 1} &= (\cunfold, \capxn{v}{n})\qedhere
  \end{align*}
\end{definition}

This approximation does not affect the typing:

\begin{proposition}
  \label{prop:sill-obs-equiv/barb-cont-congr:10}
  If $\jsynt{w}{A}$, then $\jsynt{\capxn{w}{n}}{A}$ for all $n$.
\end{proposition}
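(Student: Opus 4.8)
The plan is to argue by induction on $n$, with $w$ and $A$ universally quantified, building a finite typing derivation for each truncation $\capxn{w}{n}$ directly from the generating rules. The key observation is that, although $\jsynt{\cdot}{\cdot}$ is defined coinductively, each approximation $\capxn{w}{n}$ is a finite tree of height at most $n$, so establishing $\jsynt{\capxn{w}{n}}{A}$ requires only a finite derivation and there is no need to invoke the coinduction proof principle for the conclusion. I would, however, use the hypothesis $\jsynt{w}{A}$ by inverting the coinductive definition once: since the greatest fixed point is in particular a fixed point, $\jsynt{w}{A}$ holds because of one of the rules \getrn{C-bot} through \getrn{C-timp}, whose premises are again instances of $\jsynt{\cdot}{\cdot}$ (and of $\jtypef{\cdot}{\cdot}{\cdot}$ in the cases \getrn{C-tand} and \getrn{C-timp}).

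For the base case $n = 0$, we have $\capxn{w}{0} = \bot$. The judgment $\jsynt{w}{A}$ is only defined for closed session types $A$, so $\jstype{\cdot}{A}$ holds, and rule \getrn{C-bot} gives $\jsynt{\bot}{A}$ as required.

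For the inductive step, assume the result for $n$ and consider $\capxn{w}{n+1}$. I would invert $\jsynt{w}{A}$ and proceed by case analysis on the rule that formed it. In each case the shapes of $w$ and $A$ are determined, $\capxn{w}{n+1}$ reduces by \cref{def:sill-obs-equiv/barb-cont-congr:5} to the corresponding constructor applied to height $n$ approximations of the immediate subcommunications, and the induction hypothesis supplies the typing of those approximations, so reapplying the same rule yields the desired judgment. For instance, if $\jsynt{w}{A}$ was formed by \getrn{C-tplus}, then $w = (k, v_k)$, $A = \Tplus \{ l : A_l \}_{l \in L}$, $k \in L$, and $\jsynt{v_k}{A_k}$; the induction hypothesis gives $\jsynt{\capxn{v_k}{n}}{A_k}$, and since $\capxn{(k, v_k)}{n+1} = (k, \capxn{v_k}{n})$, rule \getrn{C-tplus} gives $\jsynt{\capxn{w}{n+1}}{A}$. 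The cases \getrn{C-tamp}, \getrn{C-tot}, \getrn{C-tlolly}, \getrn{C-rho}, \getrn{C-tds}, \getrn{C-tus}, \getrn{C-tand}, and \getrn{C-timp} are entirely analogous, applying the induction hypothesis to each immediate subcommunication (one for the unary connectives, two for $\Tot$ and $\Tlolly$) and carrying along unchanged the side conditions $\jtypef{\cdot}{f}{\tau}$ in the \getrn{C-tand} and \getrn{C-timp} cases. The cases \getrn{C-bot} and \getrn{C-tu}, where $w = \bot$ or $w = \cclose$, are immediate because there $\capxn{w}{n+1} = w$.

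The only subtlety worth flagging is the interplay of induction and coinduction: one must not be tempted to reason coinductively about the conclusion. The argument stays on safe ground precisely because truncation produces finite objects, so an ordinary induction on $n$ suffices, and the coinductive hypothesis enters only through a single step of inversion to expose the generating rule and its premises.
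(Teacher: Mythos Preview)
Your proposal is correct and follows essentially the same approach as the paper: induction on $n$, with the base case given by \getrn{C-bot} and the inductive step by case analysis on the rule used to form $\jsynt{w}{A}$. The paper's proof is a two-sentence sketch of precisely this argument, and your expansion fills in the details faithfully.
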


\begin{proof}
  By induction on $n$.
  The base case is immediate.
  The inductive step follows by a case analysis on the rule used to form $\jsynt{w}{A}$.
\end{proof}

The height \(n\) approximations of a communication form an ascending chain:

\begin{proposition}
  \label{prop:sill-obs-equiv/barb-cont-congr:12}
  For all reflexive $\leqslant$, all $n$, and all $\jsynt{w}{A}$, $\jsynt{\capxn{w}{n} \commsim[\leqslant] \capxn{w}{n + 1}}{A}$.
\end{proposition}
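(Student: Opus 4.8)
The plan is to argue by induction on $n$, leveraging the fact that $\commsim[\leqslant]$ is the greatest fixed point of the rule functional $\Phi(\leqslant, {-})$ established in the proof of \cref{prop:sill-obs-equiv/observ-comm-equiv:3}. Since $\commsim[\leqslant] = \Phi(\leqslant, \commsim[\leqslant])$, to conclude $\jsynt{u \commsim[\leqslant] v}{A}$ it suffices to exhibit a single rule from \cref{def:sill-obs-equiv/observ-comm:7} whose conclusion is $\jsynt{u \commsim[\leqslant] v}{A}$ and whose subcommunication premises already lie in $\commsim[\leqslant]$ (with functional values related by $\leqslant$). The induction on $n$ will supply exactly these premises for the immediate subcommunications. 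For the base case $n = 0$, we have $\capxn{w}{0} = \bot$ by \cref{def:sill-obs-equiv/barb-cont-congr:5}, and $\jsynt{\capxn{w}{1}}{A}$ by \cref{prop:sill-obs-equiv/barb-cont-congr:10}; a single application of \getrn{CS-bot} then yields $\jsynt{\capxn{w}{0} \commsim[\leqslant] \capxn{w}{1}}{A}$.

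For the inductive step $n = m + 1$, I would case on the (coinductive) rule forming $\jsynt{w}{A}$. In each case, both $\capxn{w}{m+1}$ and $\capxn{w}{m+2}$ share the same top-level constructor, and their immediate subcommunications are the height-$m$ and height-$(m+1)$ approximations of a common subcommunication of $w$. The case \getrn{CS-bot} (when $w = \bot$) and \getrn{CS-tu} (when $w = \cclose$) are immediate since the two approximations are identical and we appeal to typing of the trivial communication. For the unary constructors---unfold $(\cunfold, v)$ via \getrn{CS-rho}, choice $(k, v_k)$ via \getrn{CS-tplus} or \getrn{CS-tamp}, shift $(\cshift, v)$ via \getrn{CS-tds} or \getrn{CS-tus}---the induction hypothesis gives $\jsynt{\capxn{v}{m} \commsim[\leqslant] \capxn{v}{m+1}}{A'}$ at the appropriate residual type $A'$, and one application of the matching rule finishes the case. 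For channel messages $(v, v')$ of type $A \Tot B$ or $A \Tlolly B$, I would apply the induction hypothesis to both components and conclude by \getrn{CS-tot} or \getrn{CS-tlolly}. For functional value messages $(\cval f, v)$ of type $\Tand{\tau}{A'}$ or $\Timp{\tau}{A'}$, the approximations are $(\cval f, \capxn{v}{m})$ and $(\cval f, \capxn{v}{m+1})$, so the premise $\jtypef{\cdot}{f \leqslant f}{\tau}$ holds \emph{precisely} by reflexivity of $\leqslant$, and the induction hypothesis handles $v$, giving the result by \getrn{CS-tand} or \getrn{CS-timp}.

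The main obstacle is conceptual rather than computational: one must be careful about the interplay between the inductive argument on $n$ and the coinductive definition of $\commsim[\leqslant]$. The resolution is the fixed-point observation above---because $\commsim[\leqslant]$ equals $\Phi(\leqslant, \commsim[\leqslant])$, a single backward rule application with premises drawn from $\commsim[\leqslant]$ certifies membership, and the induction on $n$ is what legitimately provides those premises for the strictly shallower subcommunications. I would also note that the reflexivity hypothesis on $\leqslant$ is used in an essential way in the value-message cases and cannot be dropped. Beyond this, the argument is routine bookkeeping across the symmetric cases, with \cref{prop:sill-obs-equiv/barb-cont-congr:10} ensuring that every approximation remains well-typed at $A$ throughout.
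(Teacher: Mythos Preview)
Your proposal is correct and follows essentially the same approach as the paper: induction on $n$, with the base case handled by \getrn{CS-bot} and the inductive step by case analysis on the rule forming $\jsynt{w}{A}$, invoking reflexivity of $\leqslant$ precisely in the $\Tand{\tau}{B}$ and $\Timp{\tau}{B}$ cases. Your explicit fixed-point justification for mixing induction on $n$ with the coinductive $\commsim[\leqslant]$ is a welcome elaboration of what the paper leaves implicit.
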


\begin{proof}
  By induction on $n$.
  The base case is given by \getrn{CS-bot}.
  The inductive step is given by case analysis on $\jsynt{w}{A}$.
  Reflexivity of $\leqslant$ is required for the cases $A = \Tand{\tau}{B}$ and $A = \Timp{\tau}{B}$.
\end{proof}

Height \(n\) approximations also characterize communication simulation:

\begin{proposition}
  \label{prop:sill-obs-equiv/barb-cont-congr:11}
  For all $\jsynt{w}{A}$ and $\jsynt{u}{A}$, $\jsynt{u \commsim[\leqslant] w}{A}$ if and only if, for all $n$, ${\jsynt{\capxn{u}{n} \commsim[\leqslant] w}{A}}$.
\end{proposition}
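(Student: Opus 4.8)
The plan is to prove the two implications separately, handling necessity by induction on $n$ and sufficiency by coinduction.

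For the forward implication I would prove the universally quantified statement ``for all $A$, $\jsynt{u}{A}$, $\jsynt{w}{A}$, and $n$, if $\jsynt{u \commsim[\leqslant] w}{A}$ then $\jsynt{\capxn{u}{n} \commsim[\leqslant] w}{A}$'' by induction on $n$, keeping $u$ and $w$ quantified so the hypothesis is available at the children. The base case $n = 0$ is immediate, since $\capxn{u}{0} = \bot$ by \cref{def:sill-obs-equiv/barb-cont-congr:5} and $\jsynt{\bot \commsim[\leqslant] w}{A}$ holds by \getrn{CS-bot} from $\jsynt{w}{A}$. For $n + 1$ I would case on the rule forming $\jsynt{u \commsim[\leqslant] w}{A}$. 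When $u = \bot$ the approximation is again $\bot$ and the claim reduces to the base case; in every other case $u$ and $w$ carry a common top constructor and $\capxn{u}{n+1}$ applies that constructor to height-$n$ approximations of the children, so the induction hypothesis relates the truncated children to the children of $w$ and re-applying the same rule gives the conclusion. The cases \getrn{CS-tand} and \getrn{CS-timp} go through untouched because truncation does not affect the side condition on the transmitted values; in particular no reflexivity of $\leqslant$ is required here.

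For the backward implication I would invoke the coinduction proof principle, recalling from the proof of \cref{prop:sill-obs-equiv/observ-comm-equiv:3} that $\commsim[\leqslant]$ is the greatest post-fixed point of the rule functional $\Phi({\leqslant}, {-})$. I would define the candidate relation $\relfnt{S}$ by $\jsynt{u \mathrel{\relfnt{S}} w}{A}$ if and only if $\jsynt{u}{A}$, $\jsynt{w}{A}$, and $\jsynt{\capxn{u}{n} \commsim[\leqslant] w}{A}$ for all $n$, and then show $\relfnt{S} \subseteq \Phi({\leqslant}, \relfnt{S})$. Since the hypothesis of the implication places $(u, w)$ in $\relfnt{S}$, this yields $\jsynt{u \commsim[\leqslant] w}{A}$ as desired.

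The heart of the argument, and the step I expect to be the main obstacle, is the closure check $\relfnt{S} \subseteq \Phi({\leqslant}, \relfnt{S})$, which rests on an inversion observation: for $n \geq 1$ the approximation $\capxn{u}{n}$ shares its top constructor with $u$, and each constructor is introduced by a unique rule of \cref{def:sill-obs-equiv/observ-comm:7}. I would case on the shape of $u$. If $u = \bot$, then \getrn{CS-bot} immediately gives $\jsynt{u \mathrel{\Phi({\leqslant},\relfnt{S})} w}{A}$. Otherwise $u$ begins with some constructor; taking the representative case $u = (k, u')$ with $A = \Tplus\{l : A_l\}_{l\in L}$, for each $n$ we have $\capxn{u}{n+1} = (k, \capxn{u'}{n})$, and the only rule that can derive $\jsynt{(k, \capxn{u'}{n}) \commsim[\leqslant] w}{A}$ is \getrn{CS-tplus}, forcing $w = (k, w')$ with $\jsynt{\capxn{u'}{n} \commsim[\leqslant] w'}{A_k}$. (One checks $w \neq \bot$, as no rule lets a non-$\bot$ communication simulate $\bot$ whereas $\capxn{u}{1} = (k, \bot) \neq \bot$.) Letting $n$ range over $\N$ gives $\jsynt{\capxn{u'}{m} \commsim[\leqslant] w'}{A_k}$ for all $m$, that is $\jsynt{u' \mathrel{\relfnt{S}} w'}{A_k}$, and \getrn{CS-tplus} then places $(u, w)$ in $\Phi({\leqslant}, \relfnt{S})$. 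The remaining constructors $\cclose$, $(\cunfold, v)$, the binary channel message $(u_1, u_2)$, $(\cshift, v)$, and the value messages $(\cval f, v)$ are handled identically, the binary case collecting the ``for all $n$'' data on both children and the value cases carrying the side condition $f \leqslant g$ into the conclusion. The only delicacy is confirming that this inversion is genuinely deterministic, so that the top constructor of $\capxn{u}{n}$ admits exactly one applicable rule and forces a matching constructor on $w$, and that reassembling the per-$n$ facts about the children really does witness membership in $\relfnt{S}$.
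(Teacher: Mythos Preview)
Your proposal is correct and matches the paper's proof essentially step for step: induction on $n$ with case analysis on the last rule of $\jsynt{u \commsim[\leqslant] w}{A}$ for the forward direction, and the coinduction proof principle with the candidate relation $\{(u,w,A) \mid \forall n.\ \jsynt{\capxn{u}{n} \commsim[\leqslant] w}{A}\}$ for the converse. The only cosmetic difference is that the paper, after disposing of $u = \bot$, organizes the backward case analysis by the type $A$ rather than by the top constructor of $u$, but this amounts to the same inversion.
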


\begin{proof}
  We proceed by induction on $n$ to show that for all $n \in \N$ and for all $\jsynt{w}{A}$ and $\jsynt{u}{A}$, $\jsynt{u \commsim[\leqslant] w}{A}$ implies $\jsynt{\capxn{u}{n} \commsim[\leqslant] w}{A}$.
  The base case is immediate by \getrn{CS-bot}.
  Assume the result for some $n$.
  We show that $\jsynt{\capxn{u}{n + 1} \commsim[\leqslant] w}{A}$ by case analysis on the rule used to form $\jsynt{u \commsim[\leqslant] w}{A}$.
  We give two illustrative cases; the rest follow by analogy.
  \begin{proofcases}
  \item[\getrn{CS-bot}] Then $u = \bot$ and $\capxn{u}{n + 1} = \bot$.
    We are done by \getrn{CS-bot}.
  \item[\getrn{CS-tand}] Then $A = \Tand{\tau}{B}$, $u = (\cval f, u')$, and $w = (\cval g, w')$ with $\jtrelf{\leqslant}{\cdot}{f}{g}{\tau}$ and $\jsynt{u' \commsim[\leqslant] w'}{B}$.
    By definition, $\capxn{u}{n + 1} = (\cval f, \capxn{u'}{n})$.
    By the induction hypothesis, $\jsynt{\capxn{u'}{n} \commsim[\leqslant] w'}{B}$.
    By \getrn{CS-tand}, $\jsynt{(\cval f, \capxn{u'}{n}) \commsim[\leqslant] (\cval g, w')}{A}$ as desired.
  \end{proofcases}

  To show the converse, let $T$ be the set of triples $\{ (u, w, A) \mid \forall n \in \N \;.\; \jsynt{\capxn{u}{n} \commsim[\leqslant] w}{A} \}$.
  We want to show that if $(u, w, A) \in T$, then $\jsynt{u \commsim[\leqslant] w}{A}$.
  By the coinduction proof principle~\cite[49]{sangiorgi_2012:_introd_bisim_coind}, it is sufficient to show that $T$ is ``closed backwards'' under the rules defining $\commsim[\leqslant]$.
  Let $(u, v, A) \in T$ be arbitrary.
  We proceed by case analysis on $u$ and $A$ to show that there is a rule whose conclusion is $(u, v, A)$ and whose premises are in $T$.
  If $u = \bot$, then we are done by \getrn{CS-bot}, so assume that $u \neq \bot$.
  We proceed by case analysis on $A$.
  We show two cases; the rest follow by analogy.
  \begin{proofcases}
  \item[$A = \Tu$]
    The only possible value for $u$ and $v$ is $u = v = \cclose$.
    So $(u, w, \Tu) \in T$ by \getrn{CS-tu}.
  \item[$A = \Tand{\tau}{B}$] Then $u = (\cval f, u')$ for some value $\jtypef{\cdot}{f}{\tau}$ and some $\jsynt{u'}{B}$, and $v = (\cval g, v')$ for some value $\jtypef{\cdot}{g}{\tau}$ and some $\jsynt{v'}{B}$.
    By assumption, $\jsynt{\capxn{(\cval f, u')}{n} \commsim[\leqslant] (\cval g, v')}{A}$ for all $n$.
    By inversion, for all $n = m + 1 \geq 1$, the last rule in the derivation must have been \getrn{CS-tand} with $\jsynt{\capxn{u'}{m} \commsim[\leqslant] v'}{B}$ and with its side condition $\jtrelf{\leqslant}{\cdot}{f}{g}{\tau}$ satisfied.
    So $(u', w', B) \in T$.
    Then $(u, w, A) \in T$ by \getrn{CS-tand} with the premise $(u', w', B) \in T$ and the side condition $\jtrelf{\leqslant}{\cdot}{f}{g}{\tau}$.\qedhere
  \end{proofcases}
\end{proof}

\begin{definition}
  \label{def:sill-obs-equiv/observ-comm:12}
  Where $\leqslant$ is a type-indexed preorder on functional values, let $\commbr[\leqslant]{A}$\glsadd{commbr} be the set of communications $\jsynt{w}{A}$ ordered by the preorder $\commsim[\leqslant]$.
\end{definition}

\begin{corollary}
  \label{cor:sill-obs-equiv/barb-cont-congr:1}
  If $\leqslant$ is a preorder, then for all $\jsynt{w}{A}$, $w$ is a\footnote{In contrast to least upper bounds in partial orders, least upper bounds in preorders are not necessarily unique.} least upper bound of $\left(\capxn{w}{n}\right)_{n \in \N}$ in the preorder $\commbr[\leqslant]{A}$.
\end{corollary}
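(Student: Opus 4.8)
The plan is to observe that this is a short corollary of Proposition~\ref{prop:sill-obs-equiv/barb-cont-congr:11}, which characterizes communication simulation by height-$n$ approximations, together with reflexivity of $\commsim[\leqslant]$. Throughout I work in the preorder $\commbr[\leqslant]{A}$, recalling from \cref{prop:sill-obs-equiv/observ-comm-equiv:3} that $\commsim[\leqslant]$ is reflexive and transitive since $\leqslant$ is a preorder, and from \cref{prop:sill-obs-equiv/barb-cont-congr:10} that each $\capxn{w}{n}$ lives in $\commbr[\leqslant]{A}$ so that the chain $\left(\capxn{w}{n}\right)_{n \in \N}$ is genuinely a chain in $\commbr[\leqslant]{A}$ (it is ascending by \cref{prop:sill-obs-equiv/barb-cont-congr:12}).

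First I would show that $w$ is an upper bound of the chain, \ie, that $\jsynt{\capxn{w}{n} \commsim[\leqslant] w}{A}$ for every $n$. By reflexivity we have $\jsynt{w \commsim[\leqslant] w}{A}$, and applying \cref{prop:sill-obs-equiv/barb-cont-congr:11} with $u = w$ converts this into exactly the statement that $\jsynt{\capxn{w}{n} \commsim[\leqslant] w}{A}$ for all $n$. Hence $w$ is an upper bound.

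Next I would show that $w$ is least among upper bounds. Let $\jsynt{z}{A}$ be any upper bound of the chain, so that $\jsynt{\capxn{w}{n} \commsim[\leqslant] z}{A}$ holds for all $n$. Applying \cref{prop:sill-obs-equiv/barb-cont-congr:11} with $u = w$ in the reverse direction, these hypotheses yield $\jsynt{w \commsim[\leqslant] z}{A}$, which is precisely $w \commsim[\leqslant] z$ in $\commbr[\leqslant]{A}$. Thus $w$ is below every upper bound, so it is a least upper bound of $\left(\capxn{w}{n}\right)_{n \in \N}$. There is no real obstacle here: all the difficulty has already been absorbed into \cref{prop:sill-obs-equiv/barb-cont-congr:11}, and the only point worth flagging is that, since $\commbr[\leqslant]{A}$ is merely a preorder, least upper bounds need not be unique (as noted in the footnote to the statement), so we claim only that $w$ \emph{is a} least upper bound rather than \emph{the} one.
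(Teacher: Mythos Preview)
Your proof is correct and is exactly the argument the paper intends: the corollary is stated without proof in the paper, as it follows immediately from \cref{prop:sill-obs-equiv/barb-cont-congr:11} in precisely the way you describe (using reflexivity for the forward direction to get that $w$ is an upper bound, and the converse direction to show it is least).
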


\subsection{Session-Typed Communications on Single Channels}
\label{sec:sill-obs-equiv:observ-comm:comm-single-chan}

In this section, we show how to observe session-typed communications $\jsynt{v}{A}$ on a single channel $c$ in a trace $T$.
We capture these observations using a coinductively defined judgment $\jtoc{T}{v}{c}{A}$\glsadd{jtoc}.
This judgment defines a total function from free channel names in $T$ to session-typed communications $\jsynt{v}{A}$.
We show that the type of the observed communications agrees with the statically determined type of the channel, \ie, that $\jtoc{T}{v}{c}{A}$ implies $\jttp{T}{c}{A}$.\footnote{Recall from \cref{def:sill-background-properties-traces:1} that every free channel in \(T\) has a type, and that this type is given by the type-preservation theorem.}
We will also show that the communication observed on $c$ is \textit{independent} of the choice of execution $T$, provided that $T$ is a fair execution.
Finally, we show that the communications observed from finite prefixes of $T$ both approximate and determine the observations on the entirety of $T$.

Given a trace $T = (\mc{C}_0, (r_i;(\theta_i,\xi_i))_{i})$, we write $\mc{T}$ for the support of $T$, that is, $x \in \mc{T}$ if and only if $x \in \mc{C}_i$ for some $i$.
The judgment $\jtoc{T}{v}{c}{A}$ is coinductively defined by the following rules, \ie, it is the largest set of triples $(v,c,A)$ closed under the rules \getrn{O-bot} through \getrn{O-timp} below.

We observe no communications on a channel $c$ if and only if $c$ carried no message.
Subject to the side condition that $c \neq \carrcn(\jmsg{d}{m})$ for all $\jmsg{d}{m} \in \mc{T}$, we have the rule
\[
  \getrule{O-bot} \quad \text{\raisebox{0.6em}{whenever \(\forall \jmsg{d}{m} \in \mc{T} .\; c \neq \carrcn(\jmsg{d}{m})\).}}
\]
We observe a close message on $c$ if and only if the close message was sent on $c$:
\[
  \getrule{O-tu}
\]
We observe label transmission as labelling communications on the continuation channel.
We rely on the judgment $\jttp{T}{c}{\Tplus \{ l : A_l \}_{l \in L}}$ or $\jttp{T}{c}{\Tamp \{ l : A_l \}_{l \in L}}$ to determine the type of $c$:
\begin{gather*}
  \getrule{O-tplus}
  \\
  \getrule{O-tamp}
\end{gather*}
As described above, we observe channel transmission as pairing of communications:
\begin{gather*}
  \getrule{O-tot}
  \\
  \getrule{O-tlolly}
\end{gather*}
We observe the unfold and shift messages directly:
\begin{gather*}
  \getrule{O-rhop}
  \\
  \getrule{O-rhon}
  \\
  \getrule{O-tds}
  \\
  \getrule{O-tus}
\end{gather*}
Finally, we observe functional values:
\begin{gather*}
  \getrule{O-tand}
  \\
  \getrule{O-timp}
\end{gather*}

We set out to show that for any process trace $T$, the judgment $\jtoc{T}{v}{c}{A}$ defines a total function from channel names $c$ in $T$ to session-typed communications $\jsynt{v}{A}$.

We begin by showing that if $\jtoc{T}{v}{c}{A}$, then this session-typed communication $\jsynt{v}{A}$ is unique.
We use a bisimulation approach and follow standard techniques to define bisimulations for $\jtoc{T}{v}{c}{A}$.
We interpret the premises the rules defining $\jtoc{T}{v}{c}{A}$ that are not of the form $\jtoc{T}{w}{d}{B}$ as side conditions, giving an instance of the rule for each such premise.
For example, the rule \getrn{O-tplus} should be seen as a family of rules \rn{O-$\Tplus$-$c$-$d$-$l$}, where we have a rule
\[
  \infer[\rn{O-$\Tplus$-$c$-$d$-$l$}]{
    \jtoc{T}{(l,v)}{c}{\Tplus \{ l : A_l \}_{l \in L}}
  }{
    \jtoc{T}{v}{d}{A_l}
  }
\]
for each $\jmsg{c}{\mSendLP{c}{l}{d}} \in \mc{T}$ such that $\jttp{T}{c}{\Tplus \{ l : A_l \}_{l \in L}}$.
A symmetric binary relation $\relfnt{R}$ on observed communications in $T$ is a bisimulation if:
\begin{itemize}
\item if $(\jtoc{T}{\bot}{c}{A},\jtoc{T}{w}{c}{A'}) \in \relfnt{R}$ and $\jtoc{T}{\bot}{c}{A}$ by the instance of \getrn{O-bot} for $\jttp{T}{c}{A}$, then $w = \bot$ and $A' = A$;
\item if $(\jtoc{T}{\cclose}{c}{\Tu},\jtoc{T}{w}{c}{A}) \in \relfnt{R}$, then $w = \cclose$ and $A = \Tu$;
\item if $(\jtoc{T}{(l,v)}{c}{\Tplus \{ l : A_l \}_{l \in L}},\jtoc{T}{w}{c}{A}) \in \relfnt{R}$ and $\jtoc{T}{(l,v)}{c}{\Tplus \{ l : A_l \}_{l \in L}}$ by the instance of \getrn{O-tplus} for $\jmsg{c}{\mSendLP{c}{l}{d}} \in \mc{T}$, then $w = (l,v')$ for some $v'$, $A = \Tplus \{ l : A_l \}_{l \in L}$, and $(\jtoc{T}{v}{d}{A_l}, \jtoc{T}{w}{d}{A_l}) \in \relfnt{R}$;
\item if $(\jtoc{T}{(l,v)}{c}{\Tamp \{ l : A_l \}_{l \in L}},\jtoc{T}{w}{c}{A}) \in \relfnt{R}$ and $\jtoc{T}{(l,v)}{c}{\Tamp \{ l : A_l \}_{l \in L}}$ by the instance of \getrn{O-tamp} for $\jmsg{c}{\mSendLN{c}{l}{d}} \in \mc{T}$, then $w = (l, v')$ for some $v'$, $A = \Tamp \{ l : A_l \}_{l \in L}$, and $(\jtoc{T}{v}{d}{A_l}, \jtoc{T}{w}{d}{A_l}) \in \relfnt{R}$;
\item if $(\jtoc{T}{(u,v)}{c}{A \Tot B}, \jtoc{T}{w}{c}{C}) \in \relfnt{R}$ and $\jtoc{T}{(u,v)}{c}{A \Tot B}$ was formed by the instance of \getrn{O-tot} for $\jmsg{c}{\mSendCP{c}{a}{d}} \in \mc{T}$, then $v = (u', v')$ and $C = A' \Tot B'$ for some $u', v', A', B'$, and $(\jtoc{T}{u}{a}{A},\jtoc{T}{u'}{a}{A'}) \in \relfnt{R}$ and $(\jtoc{T}{v}{d}{A}, \jtoc{T}{v'}{d}{A'}) \in \relfnt{R}$;
\item if $(\jtoc{T}{(u,v)}{c}{A \Tlolly B}, \jtoc{T}{w}{c}{C}) \in \relfnt{R}$ and $\jtoc{T}{(u,v)}{c}{A \Tlolly B}$ was formed by the instance of \getrn{O-tlolly} for $\jmsg{c}{\mSendCN{c}{a}{d}} \in \mc{T}$, then $v = (u', v')$ and $C = A' \Tlolly B'$ for some $u', v', A', B'$, and $(\jtoc{T}{u}{a}{A},\jtoc{T}{u'}{a}{A'}) \in \relfnt{R}$ and $(\jtoc{T}{v}{d}{A}, \jtoc{T}{v'}{d}{A'}) \in \relfnt{R}$;
\item if $(\jtoc{T}{(\cunfold, v)}{c}{\Trec{\alpha}{A}},\jtoc{T}{w}{c}{B}) \in \relfnt{R}$ and $\jtoc{T}{(\cunfold, v)}{c}{\Trec{\alpha}{A}}$ was formed by the instance of \getrn{O-rhop} for $\jmsg{c}{\mSendUP{c}{d}} \in \mc{T}$, then $w = (\cunfold, v')$ for some $v'$ and $\Trec{\alpha'}{A'}$ such that $(\jtoc{T}{v}{d}{[\Trec{\alpha}{A}/\alpha]A},\jtoc{T}{v'}{d}{[\Trec{\alpha'}{A'}/\alpha']A'}) \in \relfnt{R}$;
\item if $(\jtoc{T}{(\cunfold, v)}{c}{\Trec{\alpha}{A}},\jtoc{T}{w}{c}{B}) \in \relfnt{R}$ and $\jtoc{T}{(\cunfold, v)}{c}{\Trec{\alpha}{A}}$ was formed by the instance of \getrn{O-rhon} for $\jmsg{c}{\mSendUN{c}{d}} \in \mc{T}$, then $w = (\cunfold, v')$ for some $v'$ and $\Trec{\alpha'}{A'}$ such that $(\jtoc{T}{v}{d}{[\Trec{\alpha}{A}/\alpha]A},\jtoc{T}{v'}{d}{[\Trec{\alpha'}{A'}/\alpha']A'}) \in \relfnt{R}$;
\item if $(\jtoc{T}{(\cshift, v)}{c}{\Tds{A}},\jtoc{T}{w}{c}{B}) \in \relfnt{R}$ and $\jtoc{T}{(\cshift, v)}{c}{\Tds{A}}$ was formed by the instance of \getrn{O-tds} for $\jmsg{c}{\mSendSP{c}{d}} \in \mc{T}$, then $w = (\cshift, v')$ and $B = \Tds{B'}$ for some $v'$ and $B'$ such that $(\jtoc{T}{v}{d}{A'},\jtoc{T}{v'}{d}{B'}) \in \relfnt{R}$;
\item if $(\jtoc{T}{(\cshift, v)}{c}{\Tds{A}},\jtoc{T}{w}{c}{B}) \in \relfnt{R}$ and $\jtoc{T}{(\cshift, v)}{c}{\Tus{A}}$ was formed by the instance of \getrn{O-tus} for $\jmsg{c}{\mSendSN{c}{d}} \in \mc{T}$, then $w = (\cshift, v')$ and $B = \Tus{B'}$ for some $v'$ and $B'$ such that $(\jtoc{T}{v}{d}{A'},\jtoc{T}{v'}{d}{B'}) \in \relfnt{R}$;
\item if $(\jtoc{T}{(\cval f, v)}{c}{\Tand{\tau}{A}},\jtoc{T}{w}{c}{B}) \in \relfnt{R}$ and $\jtoc{T}{(\cval f, v)}{c}{\Tand{\tau}{A}}$ was formed by the instance of \getrn{O-tand} for $\jmsg{c}{\mSendVP{c}{f}{d}} \in \mc{T}$, then $w = (\cval f, v')$ and $B = \Tand{\tau}{B'}$ for some $v'$ and $B'$ such that $(\jtoc{T}{v}{d}{A'},\jtoc{T}{v'}{d}{B'}) \in \relfnt{R}$;
\item if $(\jtoc{T}{(\cval f, v)}{c}{\Timp{\tau}{A}},\jtoc{T}{w}{c}{B}) \in \relfnt{R}$ and $\jtoc{T}{(\cval f, v)}{c}{\Timp{\tau}{A}}$ was formed by the instance of \getrn{O-timp} for $\jmsg{c}{\mSendVN{c}{f}{d}} \in \mc{T}$, then $w = (\cval f, v')$ and $B = \Timp{\tau}{B'}$ for some $v'$ and $B'$ such that $(\jtoc{T}{v}{d}{A'},\jtoc{T}{v'}{d}{B'}) \in \relfnt{R}$.
\end{itemize}

\begin{proposition}[Uniqueness]
  \label{prop:obssem:4}
  If $T$ is a trace from a well-typed configuration, then for all channels $c$, if $\jtoc{T}{v}{c}{A}$ and $\jtoc{T}{w}{c}{B}$, then $v = w$ and $A = B$.
  If $\jtoc{T}{v}{c}{A}$, then its derivation is unique.
\end{proposition}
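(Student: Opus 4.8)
The plan is to reduce both claims to a single determinism property of the observation rules: for a fixed trace $T$ and channel $c$, the rule concluding any derivation of $\jtoc{T}{v}{c}{A}$ is forced, so two observations on the same channel must agree at top level and then recurse on a common continuation channel. I would first dispatch the type and then the communication, and read off derivation uniqueness from the same bookkeeping.

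First I would show $A = B$. Every rule concluding $\jtoc{T}{v}{c}{A}$ ties the type of $c$ to $\jttp{T}{c}{A}$: either through an explicit premise, as in \getrn{O-bot}, \getrn{O-tplus}, \getrn{O-tamp}, \getrn{O-tand}, and \getrn{O-timp}, or through the type of the carrier message via the Inversion Principle (\cref{prop:main:3}), as in \getrn{O-tu}, \getrn{O-tot}, \getrn{O-tlolly}, \getrn{O-rhop}, \getrn{O-rhon}, \getrn{O-tds}, and \getrn{O-tus}. Hence $\jtoc{T}{v}{c}{A}$ implies $\jttp{T}{c}{A}$, and since \cref{cor:sill-background-properties-traces:1} makes $\jttp{T}{c}{-}$ a total function, $\jttp{T}{c}{A}$ together with $\jttp{T}{c}{B}$ forces $A = B$.

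For $v = w$ I would invoke the coinduction proof principle via the bisimulation notion set up just above the statement. Consider the symmetric relation
\[
  \relfnt{R} = \{ (v,w) \mid \exists c, A .\ \jtoc{T}{v}{c}{A} \text{ and } \jtoc{T}{w}{c}{A} \}
\]
and show it is a bisimulation on observed communications; the terminal-coalgebra fact that bisimilar communications coincide (the ingredient used in the proof of \cref{lemma:sill-obs-equiv/observ-comm:3}) then yields $v = w$. To verify the bisimulation conditions, take a witnessing channel $c$ of common type $A$ and do a case analysis on the last rule forming $\jtoc{T}{v}{c}{A}$. The decisive ingredient is \cref{cor:sill-background-dyn-prop-typed-config:1}: the channel $c$ is the carrier of at most one message fact in $\mc{T}$. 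So either no message in $\mc{T}$ has carrier $c$, in which case both derivations must end with \getrn{O-bot} and $v = w = \bot$; or there is a unique such message, whose shape (close, label, channel, shift, unfold, or value) determines which rule applies and hence the top-level constructor shared by $v$ and $w$. In each such case the continuation observations sit on the same continuation channel $\contcn$ and are again $\relfnt{R}$-related, which is precisely the backward closure a bisimulation requires.

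Uniqueness of the derivation follows from the same determinism: at the root the applicable rule is fixed by the presence or absence of the unique carrier message on $c$ and by $\jttp{T}{c}{A}$, the non-recursive premises are side conditions on $T$, and each recursive premise is an observation on a uniquely determined continuation channel, so a coinductive argument shows the whole (possibly infinite) derivation tree is determined. The main obstacle will be making the case analysis genuinely watertight, that is, confirming that \cref{cor:sill-background-dyn-prop-typed-config:1} rules out competing top-level rules arising from two distinct carrier messages on $c$, and that the Inversion Principle excludes any type/message mismatch (for example a label message on a channel whose type is not a choice type). Together these two facts collapse the apparent nondeterminism of the rule system into one forced choice at every step.
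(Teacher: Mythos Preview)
Your proposal is correct and follows essentially the same approach as the paper: use \cref{cor:sill-background-dyn-prop-typed-config:1} to pin down a unique applicable rule at each channel, run a bisimulation argument, and appeal to the terminal-coalgebra equality principle; derivation uniqueness then falls out of the same rule determinism. The only cosmetic difference is that the paper defines its bisimulation directly on pairs of observation judgments $(\jtoc{T}{v}{c}{A}, \jtoc{T}{w}{c}{B})$ and extracts $A = B$ and $v = w$ simultaneously, whereas you first dispatch $A = B$ via functionality of $\jttp{T}{c}{-}$ and then run the bisimulation on communications alone; both work and the decomposition is a matter of taste.
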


\begin{proof}
  Fix some trace \(T\) and let $\relfnt{R}$ be the relation
  \[
    \relfnt{R} = \{ (\jtoc{T}{v}{c}{A}, \jtoc{T}{w}{c}{B}) \mid \exists v, w, c, A, B . \jtoc{T}{v}{c}{A} \land \jtoc{T}{w}{c}{B} \}.
  \]
  We show that it is a bisimulation.
  Let $(\jtoc{T}{v}{c}{A}, \jtoc{T}{w}{c}{B}) \in \relfnt{R}$ be arbitrary.
  It follows from \cref{cor:sill-background-dyn-prop-typed-config:1,cor:sill-background-properties-traces:1} that at most one rule is applicable to form a judgment of the form $\jtoc{T}{\cdot}{c}{\cdot}$ (with $c$ fixed), so $\jtoc{T}{v}{c}{A}$ and $\jtoc{T}{w}{c}{B}$ were both formed by the same rule.
  We proceed by case analysis on this rule.
  We only give a few illustrative cases; the rest will follow by analogy.
  \begin{proofcases}
  \item[\getrn{O-bot}] The conclusions are equal, so we are done.
  \item[\getrn{O-tot} for $\jmsg{c}{\mSendCP{c}{a}{d}}$] Then there exist $r,r',u,u',C,C',D,D'$ such that $v = (r, u)$, $w = (r', u')$, $A = C \Tot D$, $B = C' \Tot D'$, $\jtoc{T}{r}{a}{C}$, $\jtoc{T}{r'}{a}{C'}$, $\jtoc{T}{u}{d}{D}$, and $\jtoc{T}{u'}{d}{D'}$.
    But $(\jtoc{T}{r}{a}{C}, \jtoc{T}{r'}{a}{C'}) \in \relfnt{R}$ and $(\jtoc{T}{u}{d}{D}, \jtoc{T}{u'}{d}{D'}) \in \relfnt{R}$, so we are done.
  \item[\getrn{O-tand} for $\jmsg{c}{\mSendVP{c}{f}{d}}$] Then there exist $v', w', A', B'$ such that $v = (\cval f, v')$, $w = (\cval f, w')$, $A = \Tand{\tau}{A'}$, $B = \Tand{\tau}{B'}$, $\jtoc{T}{v'}{d}{A'}$, and $\jtoc{T}{w'}{d}{B'}$.
    But $(\jtoc{T}{v'}{d}{A'}, \jtoc{T}{w'}{d}{B'}) \in \relfnt{R}$, so we are done.
  \end{proofcases}
  It follows that $\relfnt{R}$ is a bisimulation.

  Consider arbitrary $\jtoc{T}{v}{c}{A}$ and $\jtoc{T}{w}{c}{B}$.
  They are related by $\relfnt{R}$, so they are bisimilar.
  By \cite[Theorem~2.7.2]{jacobs_rutten_2012:_introd_coalg_coind}, bisimilar elements of the terminal coalgebra are equal.
  It follows that $v = w$ and $A = B$.

  To see that the derivation of $\jtoc{T}{v}{c}{A}$ is unique, recall from above that at most one rule is applicable to form a judgment of the form $\jtoc{T}{\cdot}{c}{\cdot}$ (with $c$ fixed).
  Because each rule has only judgments of this form as its hypotheses, it follows that at each step in the derivation, exactly one rule instance can be applied to justify a given hypothesis.
  So the derivation is unique.
\end{proof}

Next, we show that an observed communication \(v\) exists for every channel \(c\) appearing in a trace.
This fact is not immediately obvious: we must construct this potentially infinite \(v\) and show that it satisfies \(\jtoc{T}{v}{c}{A}\).
We construct this potentially infinite \(v\) as a least upper bound of a sequence of height \(n\) approximations \(v_n\).
To do so, we introduce an auxiliary judgment \(\jtoc[n]{T}{v}{c}{A}\).
Its intended meaning is that \(v\) is the height \(n\) approximation of the communication on \(c\), \ie, if \(\jtoc{T}{w}{c}{A}\), then \(\jtoc[n]{T}{\capxn{w}{n}}{c}{A}\).
It is defined by the following \(n\)-indexed family of rules.
We split \getrn{O-bot} into two rules.
The first, \getrn{O-z-bot}, truncates an observation.
The second, \getrn{O-n-bot}, is the direct analog of \getrn{O-bot}.
\begin{gather*}
  \getrule{O-z-bot}\\
  \getrule{O-n-bot} \quad \text{\raisebox{0.6em}{whenever \(\forall \jmsg{d}{m} \in \mc{T} .\; c \neq \carrcn(\jmsg{d}{m})\).}}
\end{gather*}
We modify the remaining rules such that conclusions become judgment of the form \(\jtoc[n + 1]{T}{v}{c}{A}\), and any hypotheses become judgments of the form \(\jtoc[n]{T}{v}{c}{A}\).
For example:
\begin{gather*}
  \getrule{O-n-tu}\\
  \getrule{O-n-tot}
\end{gather*}
By decrementing the superscript with each rule application, we ensure that the derivation has finite height.
\Cref{prop:sill-obs-equiv-main-obs:1} states that \(\jtoc[n]{T}{v}{c}{A}\) captures the intended semantics:

\begin{proposition}
  \label{prop:sill-obs-equiv-main-obs:1}
  If \(T\) is a trace from \(\jcfgt{\Gamma}{\mc{C}}{\Delta}\), then for all \(n \in \N\) and free channel names \(c \in \freecn(T)\):
  \begin{enumerate}
  \item If \(\jttp{T}{c}{A}\), then there exists a unique \(v_n\) such that \(\jtoc[n]{T}{v_n}{c}{A}\).\label{item:sill-obs-equiv-main-obs:1}
  \item If \(\jtoc{T}{v}{c}{A}\), then \( \jtoc[n]{T}{\capxn{v}{n}}{c}{A} \).
  \end{enumerate}
  Moreover, the following type agreement properties are satisfied for all \(c \in \freecn(T)\):
  \begin{enumerate}[resume]
  \item For all \(n \in \N\), if \(\jtoc[n]{T}{v_n}{c}{A}\), then \(\jsynt{v_n}{A}\) and \(\jttp{T}{c}{A}\).\label{item:sill-obs-equiv-main-obs:2}
  \item If \(\jtoc[n]{T}{v_n}{c}{A_n}\) and \(\jtoc[m]{T}{v_m}{c}{A_m}\) for \(n, m \in \N\), then \(A_n = A_m\).
  \end{enumerate}
\end{proposition}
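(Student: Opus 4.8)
The plan is to prove all four claims by induction on the approximation height \(n\), exploiting the fact that the superscript in \(\jtoc[n]{T}{v}{c}{A}\) strictly decreases with every rule application, so every derivation of a height-\(n\) observation has height at most \(n\). Three earlier results carry the load: the type assignment \(\jttp{T}{c}{A}\) is a total, single-valued function on \(\freecn(T)\) (\cref{cor:sill-background-properties-traces:1}); each channel is the carrier of at most one message fact (\cref{cor:sill-background-dyn-prop-typed-config:1}); and the inversion principle (\cref{prop:main:3}) ties the shape of a message fact carried on \(c\) to the session type of \(c\) and of its continuation channels. I would establish the type-agreement claims first, then existence and uniqueness, and finally compatibility with the coinductive judgment.

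For the type-agreement claim \cref{item:sill-obs-equiv-main-obs:2}, I induct on \(n\) with a case analysis on the last rule of the derivation of \(\jtoc[n]{T}{v_n}{c}{A}\). The base case \getrn{O-z-bot} has \(v_n = \bot\), so \(\jsynt{\bot}{A}\) holds by \getrn{C-bot} (using that \(A\) is the closed type witnessed by \(\jttp{T}{c}{A}\)), and \(\jttp{T}{c}{A}\) is exactly the side condition; \getrn{O-n-bot} is identical. In each remaining case the premises are a membership \(\jmsg{c}{m}\in\mc{T}\) together with observations \(\jtoc[n-1]{T}{\cdot}{d}{\cdot}\) on continuation channels. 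Applying the induction hypothesis to the latter yields both the component typings and the continuation types \(\jttp{T}{d}{\cdot}\); reassembling the communication with the matching formation rule (one of \getrn{C-bot}--\getrn{C-timp}) gives \(\jsynt{v_n}{A}\), while inversion of the message \(m\) (\cref{prop:main:3}) supplies a type for \(c\) and for the continuations. Single-valuedness of \(\jttp{T}{\cdot}{\cdot}\) forces the inverted continuation types to match those just obtained, so the inverted type of \(c\) is exactly \(A\). The fourth claim is then immediate: observations at heights \(n\) and \(m\) yield \(\jttp{T}{c}{A_n}\) and \(\jttp{T}{c}{A_m}\), and single-valuedness forces \(A_n = A_m\).

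For existence and uniqueness \cref{item:sill-obs-equiv-main-obs:1} I again induct on \(n\). At \(n=0\) only \getrn{O-z-bot} can conclude a height-\(0\) judgment, so \(v_0=\bot\) exists and is unique. At \(n=m+1\), given \(\jttp{T}{c}{A}\), I case on \(A\) and on whether some message fact carries \(c\). By \cref{cor:sill-background-dyn-prop-typed-config:1} there is at most one such fact, and by inversion (\cref{prop:main:3}) together with single-valuedness its shape is forced to be the unique one compatible with \(A\), since distinct message formers yield distinct top-level type constructors and polarities. If no message carries \(c\), then \getrn{O-n-bot} is the only applicable rule; otherwise the matching rule is, and inversion provides the types of its continuation channels, so the induction hypothesis supplies unique height-\(m\) sub-observations. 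In both situations exactly one rule applies and its arguments are determined, giving existence and uniqueness of \(v_{m+1}\).

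Finally, for the second claim I induct on \(n\), unfolding the coinductive judgment \(\jtoc{T}{v}{c}{A}\) once (it is a greatest post-fixed point, so each element is justified by a rule whose premises lie again in the relation). At \(n=0\), \(\capxn{v}{0}=\bot\) and \getrn{O-z-bot} applies once \(\jttp{T}{c}{A}\) is known, which the top-level rule of \(\jtoc{T}{v}{c}{A}\) yields through its side condition or by inversion. At \(n=k+1\), I pair the rule justifying \(\jtoc{T}{v}{c}{A}\) with its height-indexed analogue (e.g. \getrn{O-tot} with \getrn{O-n-tot}, value transmission with \getrn{O-n-tand}), rewriting \(\capxn{v}{k+1}\) via the clauses of \cref{def:sill-obs-equiv/barb-cont-congr:5} into the \(\capxn{\cdot}{k}\) of its continuations, to which the induction hypothesis applies. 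The main obstacle throughout is the coordination inside \cref{item:sill-obs-equiv-main-obs:1,item:sill-obs-equiv-main-obs:2}: showing that the type of \(c\) and the at-most-one message carried on \(c\) jointly pin down a single applicable rule and render its type annotations consistent. This is precisely where \cref{cor:sill-background-dyn-prop-typed-config:1,prop:main:3,cor:sill-background-properties-traces:1} must be combined, and it is the step demanding the most care.
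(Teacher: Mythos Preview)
Your proposal is correct and follows essentially the same approach as the paper: induction on $n$, case analysis on the last rule or on the unique message fact carried by $c$, and appeals to \cref{cor:sill-background-dyn-prop-typed-config:1}, \cref{prop:main:3}, and \cref{cor:sill-background-properties-traces:1} at exactly the same points. The only organizational difference is that the paper fuses parts \cref{item:sill-obs-equiv-main-obs:1} and \cref{item:sill-obs-equiv-main-obs:2} into a single induction establishing the biconditional ``$\jttp{T}{c}{A}$ iff there is a unique $v_n$ with $\jtoc[n]{T}{v_n}{c}{A}$ and $\jsynt{v_n}{A}$'', and in the existence direction cases directly on the message fact rather than on $A$; these are cosmetic variations that do not change the substance of the argument.
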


\begin{proof}
  We proceed by induction on \(n\) to show \cref{item:sill-obs-equiv-main-obs:1,item:sill-obs-equiv-main-obs:2}.
  In particular, we show that \(\jttp{T}{c}{A}\) if and only if \(\jtoc[n]{T}{v_n}{c}{A}\) for some \(v_n\), and in this case \(v_n\) is unique and satisfies \(\jsynt{v_n}{A}\).

  In the base case \(n = 0\), assume first that \(\jttp{T}{c}{A}\).
  There is a unique rule forming judgments \(\jtoc[0]{T}{c}{v_0}{A}\), namely, \getrn{O-z-bot}.
  It follows that \(\jtoc[0]{T}{c}{\bot}{A}\) and that \(\bot\) is the unique such \(v_0\).
  Moreover, \(\jsynt{\bot}{A}\) by \getrn{C-bot}.
  Conversely, if \(\jtoc[0]{T}{c}{v_0}{A}\) for some \(v_0\), then it must have been by \getrn{O-z-bot}.
  But \(\jttp{T}{c}{A}\) is its hypothesis, so we are done.

  Assume now that the result holds for some \(n\).
  We split into two cases, depending whether or not \(c\) is the carrier channel of some message fact \(\jmsg{a_0}{m_0} \in \mc{T}\).

  Assume first that it is not the case, \ie, that \( c \neq \carrcn(\jmsg{a_0}{m_0}) \) for all \( \jmsg{a_0}{m_0} \in \mc{T} \).
  The proof is analogous to the proof in the base case, except that we use \getrn{O-n-bot} instead of \getrn{O-z-bot}.

  Assume next that \(c = \carrcn(\jmsg{a_0}{m_0})\) for some \(\jmsg{a_0}{m_0} \in \mc{T}\).
  This message fact is unique with this property by \cref{cor:sill-background-dyn-prop-typed-config:1}.
  Assume that \(\jttp{T}{c}{A}\).
  A case analysis on the message fact reveals that exactly one rule is applicable to form the judgment \(\jtoc[n + 1]{T}{v_{n + 1}}{c}{A}\).
  From these observations, we deduce the existence and uniqueness of \(v_{n + 1}\).
  We give two possible cases; the remaining cases will follow by analogy.
  \begin{proofcases}
  \item[\(\jmsg{c}{\mClose{c}}\)] By \cref{prop:main:3}, \(A = \Tu\).
    By \getrn{O-n-tu}, \(\jtoc[n + 1]{T}{\cclose}{c}{\Tu}\), and \(\jsynt{\cclose}{\Tu}\) by \getrn{C-tu}.
  \item[\(\jmsg{c}{\mSendCP{c}{a}{d}}\)] By \cref{prop:main:3}, \(A = A' \Tot B\), \(\jttp{T}{a}{A'}\), and \(\jttp{T}{d}{B}\).
    By the induction hypothesis, there exist unique communications \(u\) and \(v\) such that \(\jtoc[n]{T}{u}{a}{A'}\) and \(\jtoc[n]{T}{v}{d}{B}\), and they satisfy \(\jsynt{u}{A'}\) and \(\jsynt{v}{B}\).
    Then \(\jtoc[n + 1]{T}{(u, v)}{c}{A \Tot B}\) by \getrn{O-n-tot}, and \(\jsynt{(u, v)}{A' \Tot B}\) by \getrn{C-tot}.
  \end{proofcases}
  Conversely, assume that \(\jtoc[n + 1]{T}{v_n}{c}{A}\).
  We proceed by case analysis on the rule used to form this judgment.
  We again give only two cases:
  \begin{proofcases}
  \item[\getrn{O-n-tu}] Then the judgment is \(\jtoc[n + 1]{T}{\cclose}{c}{\Tu}\), and \(\jmsg{c}{\mClose{c}} \in \mc{T}\) by hypothesis.
    We conclude \(\jttp{T}{c}{\Tu}\) by \cref{prop:main:3}.
  \item[\getrn{O-n-tot}] Then the judgment is \(\jtoc[n + 1]{T}{(u, v)}{c}{B \Tot D}\).
    By hypothesis, \(\jmsg{c}{\mSendCP{c}{b}{d}} \in \mc{T}\) for some \(b\) and \(d\), \(\jtoc[n]{T}{b}{u}{B}\), and \(\jtoc[n]{T}{d}{v}{D}\).
    By the induction hypothesis, \(\jttp{T}{b}{B}\) and \(\jttp{T}{d}{D}\).
    By \cref{prop:main:3}, \(\jttp{T}{c}{B' \Tot D'}\) for some \(B'\) and \(D'\) such that \(\jttp{T}{b}{B'}\) and \(\jttp{T}{d}{D'}\).
    But \(B = B'\) and \(D = D'\) by \cref{cor:sill-background-properties-traces:1}, so \(\jttp{T}{c}{B \Tot D}\) as desired.
  \end{proofcases}
  This completes the induction.

  Next, we proceed by induction on \(n\) to show that for all \(n\) and \(c\), if \(\jtoc{T}{v}{c}{A}\), then \(\jtoc[n]{T}{\capxn{v}{n}}{c}{A}\).
  The base case \(n = 0\) is immediate by \getrn{O-z-bot} and the fact that \(\capxn{v}{0} = \bot\).
  Assume the result for some \(n\).
  We proceed by case analysis on the rule used to form \(\jtoc{T}{v}{c}{A}\).
  We give a few model cases; the remaining cases follow by analogy.
  \begin{proofcases}
  \item[\getrn{O-bot}] The judgment is \(\jtoc{T}{\bot}{c}{A}\).
    Observe that \(\capxn{\bot}{n + 1} = \bot\).
    We are done by \getrn{O-n-bot}.
  \item[\getrn{O-tu}] The judgment is \(\jtoc{T}{\cclose}{c}{\Tu}\).
    Observe that \(\capxn{\cclose}{n + 1} = \cclose\).
    We are done by \getrn{O-n-tu}.
  \item[\getrn{O-tot}] The judgment is \(\jtoc{T}{(u, v)}{c}{B \Tot D}\).
    By hypothesis, \(\jmsg{c}{\mSendCP{c}{b}{d}} \in \mc{T}\) for some \(b\) and \(d\), \(\jtoc{T}{b}{u}{B}\), and \(\jtoc{T}{d}{v}{D}\).
    By the induction hypothesis, \(\jtoc[n]{T}{\capxn{u}{n}}{b}{B}\) and \(\jtoc[n]{T}{\capxn{v}{n}}{d}{D}\).
    Observe that \(\capxn{(u, v)}{n + 1} = (\capxn{u}{n}, \capxn{v}{n})\).
    We are done by \getrn{O-n-tot}.
  \end{proofcases}
  This completes the induction.

  Finally, we show that if \(\jtoc[n]{T}{v_n}{c}{A_n}\) and \(\jtoc[m]{T}{v_m}{c}{A_m}\) for \(n, m \in \N\), then \(A_n = A_m\).
  By \cref{item:sill-obs-equiv-main-obs:2}, \(\jttp{T}{c}{A_n}\) and \(\jttp{T}{c}{A_m}\).
  We conclude \(A_n = A_m\) by \cref{cor:sill-background-properties-traces:1}.
\end{proof}

Using \cref{lemma:sill-obs-equiv/observ-comm:3,prop:sill-obs-equiv/observ-comm-equiv:13}, we recognize partially ordered sets \(\commbr[{=}]{A}\) of communications of type \(A\) as complete partial orders.
In particular, every ascending chain \(v_n \commsim[{=}] v_{n + 1}\) in \(\commbr[{=}]{A}\) has a least upper bound \(\dirsup_{n \in N} v_n\) in \(\commbr[{=}]{A}\).
We are finally in a position to show the existence of observed communications.

\begin{proposition}[Existence]
  \label{prop:main:11}
  If $T$ is a trace from $\jcfgt{\Gamma}{\mc{C}}{\Delta}$, then for all $c$, if $\jttp{T}{c}{A}$, then $\jtoc{T}{\dirsup_{n \in \N} v_n}{c}{A}$ where \(\jtoc[n]{T}{v_n}{c}{A}\) for \(n \in \N\) and \(\dirsup_{n \in N} v_n\) is computed in \(\commbr[{=}]{A}\).
\end{proposition}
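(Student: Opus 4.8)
The plan is to construct the observed communication as a least upper bound of the height-$n$ approximations and then verify coinductively that this limit satisfies $\jtoc{T}{\cdot}{c}{A}$. Fix $c$ with $\jttp{T}{c}{A}$. By \cref{prop:sill-obs-equiv-main-obs:1}, for each $n \in \N$ there is a unique $v_n$ with $\jtoc[n]{T}{v_n}{c}{A}$, and each satisfies $\jsynt{v_n}{A}$. First I would show that these approximations are \emph{coherent}, namely $\capxn{v_{n+1}}{n} = v_n$ for all $n$. This goes by induction on $n$: the base case gives $v_0 = \bot = \capxn{v_1}{0}$ by \getrn{O-z-bot}, and the inductive step uses \cref{cor:sill-background-dyn-prop-typed-config:1} (so $c$ is the carrier of at most one message fact) together with a case analysis on that message, matching each truncated rule such as \getrn{O-n-tot} against the definition of $\capxn{\cdot}{n+1}$ and applying the induction hypothesis on the subchannels. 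Since a derivation of $\jtoc[n+1]{T}{v_{n+1}}{c}{A}$ has height at most $n+1$, we have $\capxn{v_{n+1}}{n+1} = v_{n+1}$, so coherence and \cref{prop:sill-obs-equiv/barb-cont-congr:12} give $v_n = \capxn{v_{n+1}}{n} \commsim[{=}] \capxn{v_{n+1}}{n+1} = v_{n+1}$. Hence $(v_n)_{n \in \N}$ is an ascending chain in the cpo $\commbr[{=}]{A}$, and its least upper bound $v = \dirsup_{n \in \N} v_n$ exists.

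Next I would establish $\jtoc{T}{v}{c}{A}$ by the coinduction proof principle. Define
\[
  \relfnt{R} = \Set{ (w, b, B) \given \jttp{T}{b}{B},\ w = \dirsup_{n \in \N} w_n \text{ with } \jtoc[n]{T}{w_n}{b}{B} }
\]
and show it is closed backwards under the rules \getrn{O-bot} through \getrn{O-timp}; since $\jtoc{T}{\cdot}{c}{\cdot}$ is the greatest such set of triples, this gives $(v, c, A) \in \relfnt{R} \subseteq {\jtoc{T}{\cdot}{c}{\cdot}}$, and in particular $\jtoc{T}{v}{c}{A}$. To check closure, take $(w, b, B) \in \relfnt{R}$ and split on whether $b$ is a carrier channel. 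If $b$ carries no message, then every $w_n = \bot$, so $w = \bot$ and the triple is justified by \getrn{O-bot}, whose only premise $\jttp{T}{b}{B}$ holds. Otherwise $b = \carrcn(\jmsg{e}{m})$ for the unique such fact (\cref{cor:sill-background-dyn-prop-typed-config:1}), and \cref{prop:main:3} pins down the shape of $B$ and the types of the continuation channels. A case analysis on $m$ then identifies the applicable rule: for instance, when $\jmsg{b}{\mSendCP{b}{a}{e}} \in \mc{T}$ we have $B = B_1 \Tot B_2$ with $\jttp{T}{a}{B_1}$ and $\jttp{T}{e}{B_2}$, and each $w_{n+1}$ is formed by \getrn{O-n-tot} as a pair $(p_n, q_n)$ with $\jtoc[n]{T}{p_n}{a}{B_1}$ and $\jtoc[n]{T}{q_n}{e}{B_2}$.

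The main obstacle — and the step deserving the most care — is showing that taking the least upper bound commutes with the message-determined constructor, so that the premises of the chosen rule land back in $\relfnt{R}$. Concretely, in the $\Tot$ case I must argue that $\dirsup_{n} (p_n, q_n) = (\dirsup_n p_n,\ \dirsup_n q_n)$, which relies on the fact that $\commsim[{=}]$ on $B_1 \Tot B_2$ is componentwise (rule \getrn{CS-tot}), so that $\commbr[{=}]{B_1 \Tot B_2}$ carries the product cpo structure in which least upper bounds are computed coordinatewise; the $\Tlolly$ case is identical, the unary constructors ($\Tplus$, $\Tamp$, $\Trec{\alpha}{A}$, $\Tds{\cdot}$, $\Tus{\cdot}$, $\Tand{\tau}{\cdot}$, $\Timp{\tau}{\cdot}$) carry a constant label or value along the chain and push the lub into the single subtree, and the leaves ($\Tu$ and the no-carrier case) are immediate. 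Granting this, $\dirsup_n p_n$ and $\dirsup_n q_n$ are exactly the observed communications on $a$ and $e$, so $(\dirsup_n p_n, a, B_1)$ and $(\dirsup_n q_n, e, B_2)$ lie in $\relfnt{R}$ and furnish the premises required by \getrn{O-tot}. This completes the verification that $\relfnt{R}$ is a post-fixed point, and hence that $\jtoc{T}{\dirsup_{n \in \N} v_n}{c}{A}$ as claimed.
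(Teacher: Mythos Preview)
Your proposal is correct and mirrors the paper's coinductive argument: both define the candidate set of triples $(\dirsup_n w_n, b, B)$ and verify it is a post-fixed point of the rule functional by case analysis on the (at most one) message fact with carrier $b$, relying on the same commutation of suprema with the communication constructors. Your explicit coherence step $\capxn{v_{n+1}}{n} = v_n$ to establish that $(v_n)_n$ is ascending is a detail the paper leaves implicit.
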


\begin{proof}
  We use the coinduction proof principle.
  Set
  \[
    \mc{O} = \Set*{ \left(\,\dirsup_{n \in \N} v_n, c, A\right) \given c \in \freecn(T), \jttp{T}{c}{A}, \forall n \in \N \exists v_n \;.\; \jtoc[n]{T}{v_n}{c}{A} }
  \]
  where the suprema are computed relative to \(\commsim[{=}]\).
  By \cref{prop:sill-obs-equiv-main-obs:1} and the remark preceding \cref{cor:sill-background-properties-traces:1}, there exists a unique element \((v,c,A) \in \mc{O}\) for all free channel names \(c\) appearing in \(T\).
  Let \(\Phi\) be the rule functional defining the judgment \(\jtoc{T}{v}{c}{A}\).
  To show that \(\jtoc{T}{v}{c}{A}\) for all \((v,c,A) \in \mc{O}\), it is sufficient to show that \(\mc{O}\) is a post-fixed point of \(\Phi\).
  Let \((v,c,A) \in \mc{O}\) be arbitrary.
  We proceed by case analysis on the fact \(\jmsg{a_0}{m_0} \in \mc{T}\) with \(\carrcn(\jmsg{a_0}{m_0}) = c\).
  If it exists, then it is unique with this property by \cref{cor:sill-background-dyn-prop-typed-config:1}.
  We give a few representative cases:
  \begin{proofcases}
  \item[(non-existence)] Assume first that there is no such message fact.
    Then \((v, c, A) = (\bot, c, A)\) because \(\jtoc[n + 1]{T}{\bot}{c}{A}\) by \getrn{O-n-bot} for all \(n\), and \(\dirsup_{n \in \N} \bot = \bot \).
    But \((\bot, c, A) \in \Phi(\mc{O})\) by \getrn{O-bot}, so we are done.
  \item[\(\jmsg{c}{\mClose{c}}\)]
    Then \((v, c, A) = (\cclose, c, \Tu)\) because \(\jtoc[n+1]{T}{\cclose}{c}{\Tu}\) by \getrn{O-n-tu} for all \(n\), and \(\dirsup_{n \in \N} \cclose = \cclose\).
    But \((\cclose, c, \Tu) \in \Phi(\mc{O})\) by \getrn{O-tu}, so we are done.
  \item[\(\jmsg{c}{\mSendCP{c}{b}{d}}\)]
    Then \( (v, c, A) = \left(\dirsup (u_n,w_n), c, B \Tot D\right) \) because \(\jtoc[n + 1]{T}{(u_n, w_n)}{B \Tot D}{c}\) by \getrn{O-n-tot} for all \(n\), where \(\jtoc[n]{T}{u_n}{B}{b}\) and \(\jtoc[n]{T}{w_n}{d}{D}\).
    Observe that \(\left(\dirsup u_n, b, B\right) \in \mc{O}\) and \(\left(\dirsup w_n, d, D\right) \in \mc{O} \).
    This implies that \(\left(\left(\dirsup u_n, \dirsup w_n\right), c, B \Tot D\right) \in \Phi(\mc{O})\) by \getrn{O-tot}.
    The result then follows from the fact that \(\dirsup (u_n, w_n) = \left(\dirsup u_n, \dirsup w_n\right)\) in \(\commbr[{=}]{B \Tot D}\), where \(\dirsup u_n\) and \(\dirsup w_n\) are computed in \(\commbr[{=}]{B}\) and \(\commbr[{=}]{D}\), respectively.
  \item[\(\jmsg{c}{\mSendVP{c}{f}{d}}\)]
    The corresponding modified rule is:
    \[
      \getrule{O-n-tand}
    \]
    Then \((v, c, A) = \left(\dirsup (\cval f, u_n), c, \Tand{\tau}{D}\right)\) because \(\jtoc[n]{T}{u_n}{D}{d}\) for all \(n\).
    Observe that \(\left(\dirsup u_n, d, D\right) \in \mc{O}\), so \(\left(\left(\cval f, \dirsup u_n\right), c, \Tand{\tau}{D}\right) \in \Phi(\mc{O})\) by \getrn{O-tand}.
    But \(\dirsup (\cval f, u_n) = \left(\cval f, \dirsup u_n\right)\), so we are done.\qedhere
  \end{proofcases}
\end{proof}

The following \namecref{theorem:main:2} is an immediate corollary of \cref{prop:main:11,prop:sill-obs-equiv-main-obs:1,prop:obssem:4}.
It states $\jtoc{T}{v}{c}{A}$ defines a total function from free channel names $c$ in $T$ to session-typed communications $\jsynt{v}{A}$.
It also gives a semantic account of session fidelity.
Session fidelity is a language property guaranteeing that well-typed processes respect the communication behaviour specified by their session types.
It is reflected in the fact that the session-typed communication \(\jsynt{v}{A}\) observed on \(c\) agrees with \(c\)'s statically determined type \(A\).

\begin{theorem}
  \label{theorem:main:2}
  If $T$ is a trace from $\jcfgt{\Gamma}{\mc{C}}{\Delta}$, then for all $c \in \freecn(T)$,
  \begin{enumerate}
  \item there exists a unique \(v\) such that \(\jtoc{T}{v}{c}{A}\);
  \item if \(\jtoc{T}{v}{c}{A}\), then \(\jsynt{v}{A}\);
  \item $\jtoc{T}{v}{c}{A}$ if and only if $\jttp{T}{c}{A}$.
  \end{enumerate}
\end{theorem}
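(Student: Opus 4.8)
The plan is to assemble the three cited propositions, unpacking the phrase ``immediate corollary'' into the correct bookkeeping of existence, uniqueness, and type agreement. First I would fix an arbitrary $c \in \freecn(T)$ and invoke \cref{cor:sill-background-properties-traces:1} to obtain the unique session type $A$ with $\jttp{T}{c}{A}$; this single $A$ is the type that all three conclusions will refer to, and keeping it consistent across the statements is the only point requiring attention.

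For the existence half of clause~(1), I would apply \cref{prop:main:11}: from $\jttp{T}{c}{A}$ it produces the communication $v = \dirsup_{n \in \N} v_n$, computed in $\commbr[=]{A}$, satisfying $\jtoc{T}{v}{c}{A}$, where the approximants $v_n$ with $\jtoc[n]{T}{v_n}{c}{A}$ are supplied by \cref{prop:sill-obs-equiv-main-obs:1}(1). Uniqueness of $v$ in clause~(1) is then exactly \cref{prop:obssem:4}, which moreover yields uniqueness of the derivation. This settles~(1).

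For clause~(2), I would note that the witness $v$ produced above already lies in $\commbr[=]{A}$, whose elements are by \cref{def:sill-obs-equiv/observ-comm:12} precisely the communications satisfying $\jsynt{\cdot}{A}$; hence $\jsynt{v}{A}$, and for any other $v$ with $\jtoc{T}{v}{c}{A}$ uniqueness from~(1) identifies it with this witness. A more robust route, avoiding reliance on the exact phrasing of \cref{prop:main:11}, is to use \cref{prop:sill-obs-equiv-main-obs:1}(2) to get $\jtoc[n]{T}{\capxn{v}{n}}{c}{A}$, then \cref{prop:sill-obs-equiv-main-obs:1}(3) to get $\jsynt{\capxn{v}{n}}{A}$ for every $n$, and finally \cref{cor:sill-obs-equiv/barb-cont-congr:1} to recover $v$ as the least upper bound of its well-typed approximations.

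For the equivalence~(3), the backward direction is immediate from the existence argument: $\jttp{T}{c}{A}$ yields $\jtoc{T}{v}{c}{A}$. For the forward direction, from $\jtoc{T}{v}{c}{A}$ I would apply \cref{prop:sill-obs-equiv-main-obs:1}(2) (say at $n = 1$) to obtain $\jtoc[1]{T}{\capxn{v}{1}}{c}{A}$, and then read off $\jttp{T}{c}{A}$ from the type-agreement clause \cref{prop:sill-obs-equiv-main-obs:1}(3). I do not anticipate a genuine obstacle here, since each clause is a direct lookup into an already-proved proposition; the sole care point, as flagged above, is that the type $A$ appearing in the static typing, in the finite approximations, and in the observed communication genuinely coincide, which is guaranteed by the uniqueness of static types (\cref{cor:sill-background-properties-traces:1}) together with the type-agreement clauses of \cref{prop:sill-obs-equiv-main-obs:1}.
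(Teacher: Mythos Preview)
Your proposal is correct and matches the paper's approach, which simply declares the theorem an immediate corollary of \cref{prop:main:11}, \cref{prop:sill-obs-equiv-main-obs:1}, and \cref{prop:obssem:4}; you have unpacked exactly the right bookkeeping. One caution: your ``more robust'' alternative for clause~(2) via \cref{cor:sill-obs-equiv/barb-cont-congr:1} is circular as stated, since that corollary assumes $\jsynt{v}{A}$ rather than deriving it from $\jsynt{\capxn{v}{n}}{A}$; stick with your primary route through the construction of $v$ in $\commbr[=]{A}$.
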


The following \namecref{theorem:main:1} captures the confluence property typically enjoyed by SILL-style languages.
In particular, it implies that the communications observed on free channels of well-typed configurations do not depend on the choice of fair execution.

\begin{theorem}
  \label{theorem:main:1}
  Let $T$ and $T'$ be fair executions of $\jcfgti{\Gamma}{\Iota}{\mc{C}}{\Delta}$.
  For all $c : A \in \Gamma \Iota \Delta$, if $\jtoc{T}{v}{c}{A}$ and $\jtoc{T'}{w}{c}{A}$, then $v = w$.
\end{theorem}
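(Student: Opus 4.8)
The plan is to show that the observed communication on a channel depends on the trace only through its \emph{support}, and then to invoke the fact that any two fair executions of a fixed configuration have the same support up to refreshing. The starting point is the observation that every rule defining $\jtoc{T}{v}{c}{A}$ refers to $T$ only in two ways: through side conditions of the form $\jmsg{c}{m} \in \mc{T}$ and $\forall \jmsg{d}{m} \in \mc{T}.\; c \neq \carrcn(\jmsg{d}{m})$, and through the typing judgment $\jttp{T}{\cdot}{\cdot}$. Crucially, the resulting communication tree records only labels, $\cclose$, $\cunfold$, $\cshift$, and transmitted functional values; the freshly generated continuation channels along which the derivation descends appear as side conditions and as the channel indices of subderivations, but never inside the tree itself. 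I would first make precise that observation is therefore invariant under refreshing: if $\eta$ is a refreshing substitution, then $\jtoc{T}{v}{c}{A}$ iff $\jtoc{[\eta]T}{v}{c}{A}$ (with the \emph{same} $v$) for every $c \in \freecn(T)$ that $\eta$ fixes, in particular for every $c \in \Gamma\Iota\Delta$, since those channels belong to the initial configuration and are not among the fresh constants renamed by $\eta$.

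Next I would reduce to the case of equal support. By \cref{cor:sill-background-dyn-prop-typed-config:2}, the fair executions $T$ and $T'$ of $\jcfgti{\Gamma}{\Iota}{\mc{C}}{\Delta}$ are union-equivalent, so by \cref{def:ssos-fairness/prop-fair-trac:1} there is a refreshing substitution $\eta$ with $\supp(T) = \supp([\eta]T')$. By the refreshing-invariance just discussed, $\jtoc{T'}{w}{c}{A}$ yields $\jtoc{[\eta]T'}{w}{c}{A}$ with the same $w$, so it suffices to compare observations from two traces sharing the common support $\mc{T} = \supp(T) = \supp([\eta]T')$. The two traces moreover assign the same type $A$ to $c$ (both by preservation, and as fixed by the hypotheses), so only $v = w$ remains to be shown.

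The core of the argument is then a bisimulation across the two traces, directly analogous to the proof of \cref{prop:obssem:4}. I would define the symmetric relation $\relfnt{R}$ relating $\jtoc{T}{v}{c}{A}$ with $\jtoc{[\eta]T'}{w}{c}{A}$ for every free channel $c$. For a fixed $c$, the message fact $\jmsg{a_0}{m_0} \in \mc{T}$ with $\carrcn(\jmsg{a_0}{m_0}) = c$ is unique by \cref{cor:sill-background-dyn-prop-typed-config:1} and, since the two traces have the same support $\mc{T}$, is the same on both sides; the types of $c$ and of its continuation and subchannels agree by \cref{cor:sill-background-properties-traces:1,prop:main:3}. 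Hence exactly one observation rule applies and it is the same rule on both sides, whose hypotheses relate observations of the two traces on the same continuation channels and thus again lie in $\relfnt{R}$. This establishes that $\relfnt{R}$ is a bisimulation, and since bisimilar elements of the terminal coalgebra of communications are equal~\cite[Theorem~2.7.2]{jacobs_rutten_2012:_introd_coalg_coind}, we conclude $v = w$.

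The hard part will be the first paragraph: rigorously justifying that observation depends only on $\supp(T)$ and is invariant under refreshing, because the derivation genuinely follows continuation channel names that \emph{do} change under $\eta$. The point to nail down is that these names occur only in side conditions and as indices of subderivations, never in the observed tree, so the trees produced from traces with the same support coincide exactly; once this is in hand, the bisimulation step is a routine adaptation of \cref{prop:obssem:4} from one trace to two.
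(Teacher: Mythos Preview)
Your proposal is correct and follows the same underlying idea as the paper: union-equivalence of fair executions gives equal supports, and the observation judgment depends on a trace only through its support. The paper's proof is shorter in two ways. First, your ``hard part'' about refreshing is handled silently by the paper's standing convention (Section~\ref{sec:ssos-fairness:mult-rewr-syst:semant-irrel-fresh}) that traces are identified up to refreshing substitutions; once that identification is in force, union-equivalence literally gives $\mc{T} = \mc{T}'$. Second, rather than building a new bisimulation across the two traces, the paper observes that with $\mc{T} = \mc{T}'$ the coinductive judgments $\jtoc{T}{\cdot}{c}{\cdot}$ and $\jtoc{T'}{\cdot}{c}{\cdot}$ are defined by identical rules and side conditions, so $\jtoc{T'}{w}{c}{A}$ immediately yields $\jtoc{T}{w}{c}{A}$, and then the already-established single-trace uniqueness (\cref{theorem:main:2}) gives $v = w$. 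Your cross-trace bisimulation would work, but it re-derives what \cref{prop:obssem:4} already provides.
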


\begin{proof}
  Assume $\jtoc{T}{v}{c}{A}$ and $\jtoc{T'}{w}{c}{A}$.
  By \cref{cor:sill-background-properties-traces:2}, traces $T$ and $T'$ are union-equivalent, \ie, $\mc{T} = \mc{T}'$.
  It easily follows that $\jtoc{T'}{w}{c}{A}$ if and only if $\jtoc{T}{w}{c}{A}$.
  So $v = w$ by \cref{theorem:main:2}.
\end{proof}

\Cref{theorem:main:1} crucially depends on fairness.
Indeed, without fairness a process could have infinitely many observations.
To see this, let $\Omega$ be the divergent process given by \cref{ex:sill-background-typing-mult-rewr:2} and let $B$ be given by
\[
  \jtypem{\cdot}{a : \Tu}{\tFix{p}{\tSendU{b}{\tSendL{b}{l}{p}}}}{b}{\Trec{\beta}{\Tplus\{l:\beta\}}}
\]
\Cref{eq:sill:msr-cut} is the first step of any execution of their composition $\jtypem{\cdot}{\cdot}{\tCut{a}{\Omega}{B}}{b}{\Trec{\beta}{\Tplus\{l:\beta\}}}$.
It spawns $\Omega$ and $B$ as separate processes.
Without fairness, an execution could then consist exclusively of applications of \cref{eq:sill:msr-tot-l} to $\Omega$.
This would give the observed communication $\bot$ on $b$.
Alternatively, $B$ could take finitely many steps, leading to observations where $b$ is a tree of correspondingly finite height.
Fairness ensures that $B$ and $\Omega$ both take infinitely many steps, leading to the unique observation $(\ms{unfold}, (l, (\ms{unfold}, \dotsc)))$ on $b$.

This notion of observed communication scales to support language extensions.
Indeed, for each new session type one first defines its corresponding session-typed communications.
Then, one specifies how to observe message facts $\jmsg{c}{m}$ in a trace as communications.

So far, we have assumed that whenever we want to observe a process's communications on a channel \(c\), that we already have a complete trace \(T\) of that process.
Indeed, the coinductively defined judgment \(\jtoc{T}{v}{c}{A}\) is defined in terms of the set of all message facts that appear in \(T\).
Though this assumption is mathematically expedient, it can be unrealistic when considering concrete non-terminating programs.
In these cases, we might want to observe communications in a step-by-step manner over the course of an execution.
Fortunately, communications observed in prefixes of a trace approximate and uniquely determine the observed communications of the complete trace.
Given some trace $T$, let $T^n$ be its prefix of $n$ steps.
Concretely, we show that \(\jtoc{T}{v}{c}{A}\) is the least upper bound of the ascending chain \(\jsynt{v_n}{A}\) (relative to \(\commsim[=]\)), where for each \(n\), $\jtoc{T^n}{v_n}{c}{A}$ is the communication observed on \(c\) in the first \(n\) steps of \(T\).
We begin by showing that the $v_n$ form an ascending chain, \ie, that observing communications is monotone in the length of a trace.

\begin{proposition}
  \label{prop:sill-obs-equiv/observ-comm:2}
  For all $n$ and all channels $c$, if $\jtoc{T^n}{v_n}{c}{A}$ and $\jtoc{T^{n + 1}}{v_{n + 1}}{c}{A}$, then $\jsynt{v_n \commsim[=] v_{n + 1}}{A}$.
\end{proposition}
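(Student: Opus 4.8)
The plan is to argue by coinduction on the definition of communication simulation. Fix $n$, write $\mc{T}^n$ for the support of the prefix $T^n$, and define the type-indexed relation
\[
  \relfnt{R} = \Set{ (v, w, A) \given \exists c .\; \jtoc{T^n}{v}{c}{A} \land \jtoc{T^{n + 1}}{w}{c}{A} }.
\]
By the coinduction proof principle~\cite[49]{sangiorgi_2012:_introd_bisim_coind}, and since $\commsim[{=}]$ is the greatest fixed point of the rule functional $\Phi({=}, {-})$ of \cref{def:sill-obs-equiv/observ-comm:7}, it suffices to show that $\relfnt{R}$ is a post-fixed point, i.e.\ that $\relfnt{R} \subseteq \Phi({=}, \relfnt{R})$. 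The pair $(v_n, v_{n + 1}, A)$ of the statement lies in $\relfnt{R}$ (witnessed by $c$), so this yields $\jsynt{v_n \commsim[{=}] v_{n + 1}}{A}$.

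The essential observation is that the support is monotone in prefix length, $\mc{T}^n \subseteq \mc{T}^{n + 1}$, since every fact appearing in the first $n$ steps also appears in the first $n + 1$ steps. Given $(v, w, A) \in \relfnt{R}$ witnessed by a channel $c$, I would case split according to whether some message fact $\jmsg{d}{m} \in \mc{T}^{n + 1}$ has $c$ as its carrier channel; by \cref{cor:sill-background-dyn-prop-typed-config:1}, at most one such fact exists. If none exists, then none exists in $\mc{T}^n$ either, so both observations are formed by \getrn{O-bot}, whence $v = w = \bot$ and $\jsynt{\bot \commsim[{=}] \bot}{A}$ by \getrn{CS-bot}. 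If such a fact exists but lies in $\mc{T}^{n + 1} \setminus \mc{T}^n$, i.e.\ it was created by the $(n + 1)$-th step, then uniqueness of the carrier channel forces $c$ to carry no message in $\mc{T}^n$, so again $v = \bot$ and $\jsynt{\bot \commsim[{=}] w}{A}$ by \getrn{CS-bot}. The remaining case is that the same message fact $\jmsg{d}{m}$ already appears in $\mc{T}^n$; then both $\jtoc{T^n}{v}{c}{A}$ and $\jtoc{T^{n + 1}}{w}{c}{A}$ are formed by the same observation rule applied to the same fact.

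In this last case I would invoke the inversion principle (\cref{prop:main:3}) to read off the common top-level structure of $v$ and $w$: they share the same head constructor (a label, a pair, an unfold or shift marker, a close message, or a functional value), and any transmitted functional value $f$ is literally identical in the two observations because it comes from the one shared message fact, so the side conditions of \getrn{CS-tand} and \getrn{CS-timp}, which here require $f = f$, are met. The continuation observations, made on the continuation channel $d$ and on any carried channel, are themselves observed in $T^n$ and $T^{n + 1}$ respectively, so the corresponding pairs lie in $\relfnt{R}$ by construction. Pairing each observation rule (\getrn{O-tplus}, \getrn{O-tot}, and their analogues) with its communication-simulation counterpart (\getrn{CS-tplus}, \getrn{CS-tot}, and so on) then exhibits $(v, w, A) \in \Phi({=}, \relfnt{R})$, completing the verification that $\relfnt{R}$ is a post-fixed point. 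I expect the bulk of the work to be bookkeeping—enumerating the message-fact cases and checking that each observation rule lines up with the right simulation rule—while the one genuinely delicate point is the carrier-channel case split, where \cref{cor:sill-background-dyn-prop-typed-config:1} is precisely what prevents $c$ from being the carrier of two distinct facts across the two prefixes.
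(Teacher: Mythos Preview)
Your coinductive argument is correct, but it differs from the paper's proof. The paper observes that communications observed from a \emph{finite} prefix $T^n$ are themselves finite (because there are only finitely many message facts, and each can appear at most once along any branch of the derivation of $\jtoc{T^n}{v_n}{c}{A}$). This lets the paper proceed by strong induction on the size of that derivation, with base cases \getrn{O-bot} and \getrn{O-tu} and an inductive case analysis on the other observation rules.

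Your approach instead works directly with the coinductive definition of $\commsim[{=}]$: you exhibit a post-fixed point $\relfnt{R}$ and do not need the finiteness fact at all. The trade-offs are mild. Your argument is slightly more robust, since it would survive unchanged even if one wanted an analogous monotonicity statement for infinite traces or for non-prefix pairs of traces with $\mc{T} \subseteq \mc{T}'$; it also avoids having to justify the finiteness claim, which the paper states without proof. The paper's inductive proof, on the other hand, is marginally more elementary and makes the structure of the argument (walk down the derivation of $v_n$, find the matching step in $v_{n+1}$) very explicit. Your three-way case split on whether the unique carrier message for $c$ lies in $\mc{T}^{n+1}$, in $\mc{T}^{n+1} \setminus \mc{T}^n$, or in $\mc{T}^n$, together with the appeal to \cref{cor:sill-background-dyn-prop-typed-config:1}, is exactly the right hinge and corresponds to the paper's implicit use of the same uniqueness fact inside each inductive case.
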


\begin{proof}
  All communications observed from a finite prefix are finite.
  We proceed by strong induction on the size of the derivation of $\jtoc{T^n}{v_n}{c}{A}$.
  The base cases for $\jsynt{v_n}{A}$ are given by the axioms:
  \begin{proofcases}
  \item[\getrn{O-bot}]
    Then $v_n = \bot$.
    We are done by \getrn{CS-bot}.
  \item[\getrn{O-tu}]
    Then $v_n = \cclose$.
    Extending the trace by a single step does not affect the observed communication on $c$, so $v_{n + 1} = \cclose$ as well.
    We are done by \getrn{CS-tu}.
  \end{proofcases}
  Now we proceed to the inductive step.
  We show several illustrative cases; the remainder are analogous.
  \begin{proofcases}
  \item[\getrn{O-tot}]
    Then $\jtoc{T^n}{(u_n, w_n)}{c}{A \Tot B}$ because $\jmsg{c}{\mSendCP{c}{a}{d}}$ appears in $T^n$, and $\jtoc{T^n}{u_n}{a}{A}$ and $\jtoc{T^n}{w_n}{d}{B}$.
    Let $u_{n + 1}$ and $w_{n + 1}$ be given by $\jtoc{T^{n + 1}}{u_{n + 1}}{a}{A}$ and $\jtoc{T^{n + 1}}{w_{n + 1}}{d}{B}$.
    By the induction hypothesis, $\jsynt{u_n \commsim[=] u_{n + 1}}{A}$ and $\jsynt{w_n \commsim[=] w_{n + 1}}{B}$.
    By \getrn{O-tot}, $\jtoc{T^{n+ 1}}{(u_{n + 1}, w_{n + 1})}{c}{A \Tot B}$.
    We are done by \getrn{CS-tot}.
  \item[\getrn{O-tand}]
    Then $\jtoc{T^n}{(\cval f, w_n)}{c}{\Tand{\tau}{A}}$ because $\jmsg{c}{\mSendVP{c}{f}{d}}$ appears in $T^n$, and $\jtoc{T^n}{w_n}{d}{A}$.
    Let $w_{n + 1}$ be given by $\jtoc{T^{n + 1}}{w_{n + 1}}{d}{A}$.
    By the induction hypothesis,  $\jsynt{w_n \commsim[=] w_{n + 1}}{A}$.
    By \getrn{O-tand}, $\jtoc{T^{n + 1}}{(\cval f, w_{n + 1})}{c}{\Tand{\tau}{A}}$.
    We are done by \getrn{CS-tand}.\qedhere
  \end{proofcases}
\end{proof}

\begin{proposition}
  \label{prop:sill-obs-equiv/observ-comm:4}
  Let $T$ be a trace, and let $v$ be given by $\jtoc{T}{v}{c}{A}$.
  For each $n$, let $T^n$ be the prefix of length $n$ of $T$, and let $v_n$ be given by $\jtoc{T^n}{v_n}{c}{A}$.
  Then $v = \dirsup_n v_n$ in $\commbr[=]{A}$, the partially ordered set of communications of type \(A\) ordered by \(\commsim[=]\).
\end{proposition}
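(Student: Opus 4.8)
The plan is to prove that $v$ is a least upper bound of the chain $(v_n)_n$ in the cpo $\commbr[=]{A}$. By \cref{prop:sill-obs-equiv/observ-comm:2} the $v_n$ form an ascending chain, and since $\commbr[=]{A}$ is a complete partial order this chain has a least upper bound $\dirsup_n v_n$; it therefore suffices to show that $v$ is one. I would split this into two facts: (i) $v_n \commsim[=] v$ for every $n$, so that $v$ is an upper bound, and (ii) for every $n$ there is an $m$ with $\capxn{v}{n} \commsim[=] v_m$, which (together with \cref{cor:sill-obs-equiv/barb-cont-congr:1}) forces $v$ below every competing upper bound. Since $\commeq[=]$ is literal equality by \cref{lemma:sill-obs-equiv/observ-comm:3}, uniqueness of least upper bounds then yields $v = \dirsup_n v_n$ on the nose.

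For (i) I would argue coinductively that observing over a prefix yields an approximation of observing over the whole trace. Concretely, set $\relfnt{S} = \{ (v', v'', A) : \exists c,m .\ \jtoc{T^m}{v'}{c}{A} \wedge \jtoc{T}{v''}{c}{A} \}$ and show it is a post-fixed point of the functional $\Phi$ defining $\commsim[=]$. Given $(v', v'', A) \in \relfnt{S}$, I case on the rule deriving $\jtoc{T^m}{v'}{c}{A}$. If it is \getrn{O-bot}, then $v' = \bot$ and \getrn{CS-bot} applies, using $\jsynt{v''}{A}$ from \cref{theorem:main:2}. Otherwise the rule witnesses a message fact with carrier $c$ in the support of $T^m$, which is a subset of $\mc{T}$; since each channel is the carrier of at most one message fact (\cref{cor:sill-background-dyn-prop-typed-config:1}), the observation $\jtoc{T}{v''}{c}{A}$ is forced to use the matching rule, so $v'$ and $v''$ share the same outermost constructor and their corresponding subobservations lie again in $\relfnt{S}$ (types agree by \cref{cor:sill-background-properties-traces:1}). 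The matching rule of $\commsim[=]$ then applies, giving $\relfnt{S} \subseteq \Phi({=}, \relfnt{S})$, and hence $v_n \commsim[=] v$.

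For (ii) I would route through the finite-depth judgment $\jtoc[n]{T}{\cdot}{c}{A}$. By \cref{prop:sill-obs-equiv-main-obs:1} we have $\jtoc[n]{T}{\capxn{v}{n}}{c}{A}$, and this derivation has height at most $n$, so it references only finitely many message facts of $\mc{T}$. Choosing $m$ large enough that all of them occur within the first $m$ steps---and noting that the carrier-channel side condition of \getrn{O-n-bot} only becomes easier to satisfy on the smaller support of $T^m$, while channel types are preserved along the prefix---the identical derivation gives $\jtoc[n]{T^m}{\capxn{v}{n}}{c}{A}$. Applying \cref{prop:sill-obs-equiv-main-obs:1} to $T^m$ and $v_m$ yields $\jtoc[n]{T^m}{\capxn{v_m}{n}}{c}{A}$, so the uniqueness clause of \cref{prop:sill-obs-equiv-main-obs:1} gives $\capxn{v}{n} = \capxn{v_m}{n}$, whence $\capxn{v}{n} \commsim[=] v_m$ by \cref{cor:sill-obs-equiv/barb-cont-congr:1}. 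If $u$ is any upper bound of $(v_n)$, then $\capxn{v}{n} \commsim[=] v_m \commsim[=] u$ for every $n$, so $u$ bounds the chain $(\capxn{v}{n})_n$; since $v = \dirsup_n \capxn{v}{n}$ by \cref{cor:sill-obs-equiv/barb-cont-congr:1}, transitivity of $\commsim[=]$ (\cref{prop:sill-obs-equiv/observ-comm-equiv:3}) gives $v \commsim[=] u$. Thus $v$ is the least upper bound.

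I expect the crux to be the claim in (ii) that a finite-depth observation taken over the entire trace can be reproduced over a sufficiently long finite prefix. This rests on the finiteness of the derivations of $\jtoc[n]{T}{\cdot}{c}{A}$ and on verifying that every side condition---membership of the relevant message facts, the negative carrier-channel condition in \getrn{O-n-bot}, and the type assignments recorded by $\jttp{T}{\cdot}{\cdot}$---transfers faithfully between $\mc{T}$ and the prefix support. The remaining bookkeeping, namely the cofinality argument reconciling the chains $(v_n)_n$ and $(\capxn{v}{n})_n$ and the passage to the limit, is routine once (i) and (ii) are established.
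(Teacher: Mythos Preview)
Your proposal is correct and follows essentially the same approach as the paper: both arguments hinge on interleaving the chain $(v_n)_n$ with the chain $(\capxn{v}{n})_n$ and then invoking \cref{cor:sill-obs-equiv/barb-cont-congr:1} and \cref{lemma:sill-obs-equiv/observ-comm:3}. The paper establishes two-way cofinality (each $\capxn{v}{l}$ is below some $v_m$, and each $v_m$ is below some $\capxn{v}{u}$, the latter by noting that $v_m$ is a finite tree), whereas you replace the second direction with a direct coinductive proof that each $v_n \commsim[=] v$; your part (ii) is exactly the paper's first item, spelled out in more detail via the finite-depth judgment.
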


\begin{proof}
  It is sufficient to show that:
  \begin{enumerate}
  \item for all $l$, there exists an $m$ such that $\capxn{v}{l} \commsim[=] v_m$; and\label{item:sill-obs-equiv/observ-comm:2}
  \item for all $m$, there exists an $u$ such that $v_m \commsim[=] \capxn{v}{u}$.\label{item:sill-obs-equiv/observ-comm:8}
  \end{enumerate}
  Indeed, \cref{item:sill-obs-equiv/observ-comm:2} implies that every upper bound of the $v_m$ is an upper bound of the $\capxn{v}{m}$, while \cref{item:sill-obs-equiv/observ-comm:8} implies that every upper bound of the $\capxn{v}{m}$ is an upper bound of the $v_m$.
  So $\jsynt{\dirsup_m \capxn{v}{m} \commeq[=] \dirsup_m v_m}{A}$.
  By \cref{lemma:sill-obs-equiv/observ-comm:3}, this is an equality.
  By \cref{cor:sill-obs-equiv/barb-cont-congr:1}, $\dirsup_m \capxn{v}{m} = v$.
  So $v = \dirsup_n v_n$ as desired.

  We begin with \cref{item:sill-obs-equiv/observ-comm:2}.
  Consider some $l$, and let $m$ be any $m$ such that all of the messages appearing in the derivation of $\capxn{v}{l}$ appear in $T^m$.
  For \cref{item:sill-obs-equiv/observ-comm:8}, consider some $m$, and let $u$ be height of $v_m$ plus one.\qedhere
\end{proof}

We now state a few basic properties of observations on single channels.
First, we show that observed communications are entirely determined by message facts deemed ``observable''.
This fact helps us relate the observed communications of different configurations, and we will use it later when defining observed communication-based equivalences on configurations.

\begin{definition}
  \label{def:sill-obs-equiv/observ-comm:13}
  Let $T$ be a trace of some configuration $\mc{C}$.
  A message fact $\jmsg{a}{m}$ is \defin{observable from $c$ in $T$} if appears in the derivation of $\jtoc{T}{v}{c}{A}$.
\end{definition}

The following proposition captures the intuitive fact that observed communications on $c$ are entirely determined by the set of message facts observable from $c$.
Recall from \cref{sec:ssos-fairness:mult-rewr-syst:semant-irrel-fresh} that a refreshing a trace involves renaming its fresh constants in a semantics-preserving manner.

\begin{proposition}
  \label{prop:sill-obs-equiv/observ-comm:5}
  Let $T$ and $T'$ be fair traces of $\mc{C}$ and $\mc{C}'$, respectively.
  For all $c$, let $v$ and $v'$ be given by $\jtoc{T}{v}{c}{A}$ and $\jtoc{T'}{v'}{c}{A}$.
  If $T$ can be refreshed to $T''$ such that $T''$ and $T'$ have the same sets of observable message facts from $c$, then $v = v'$.
\end{proposition}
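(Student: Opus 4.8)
The plan is to reduce the statement to a coinductive comparison of the two observed communications, exploiting the fact that the judgment $\jtoc{T}{v}{c}{A}$ reads $v$ off entirely from the message facts observable from $c$, together with typing data that is already pinned down by $A$. First I would dispose of the refreshing. By the semantic irrelevance of fresh constants (\cref{sec:ssos-fairness:mult-rewr-syst:semant-irrel-fresh}), observing on $T$ and on a refreshing $T'' = [\eta]T$ that fixes $c$ yields the same communication: a communication records only labels, the tokens $\cclose$, $\cunfold$, $\cshift$, and closed functional values $f$ (with $\jtypef{\cdot}{f}{\tau}$), none of which mention the channel constants renamed by $\eta$. Hence $\jtoc{T''}{v}{c}{A}$, and it suffices to treat $T''$ in place of $T$, that is, to assume that $T''$ and $T'$ have literally the same set of message facts observable from $c$. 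I write $O_{T}(c)$ for this set.

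The core is a bisimulation on session-typed communications, in the style of the proof of \cref{prop:obssem:4}. I would take the symmetric closure of
\[
  \relfnt{R} = \Set{ (v, v') \given \exists c, A .\; \jtoc{T''}{v}{c}{A} \land \jtoc{T'}{v'}{c}{A} \land O_{T''}(c) = O_{T'}(c) }
\]
and show it is a bisimulation for the terminal coalgebra of communications; then $v = v'$ follows since bisimilar elements are equal \cite[Theorem~2.7.2]{jacobs_rutten_2012:_introd_coalg_coind}. Fix $(v, v') \in \relfnt{R}$ witnessed by $c$ and $A$. Every message fact with carrier $c$ is consumed as a premise by the applicable observation rule, hence lies in $O_{T''}(c)$; thus if $O_{T''}(c) = O_{T'}(c) = \emptymset$, neither trace has a carrier message for $c$ and $v = v' = \bot$ by \getrn{O-bot}. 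Otherwise $c$ has a carrier message, unique by \cref{cor:sill-background-dyn-prop-typed-config:1}; call it $\mu$. Because $\mu$ is observable from $c$ and $O_{T''}(c) = O_{T'}(c)$, the fact $\mu$ lies in both sets, and uniqueness of carrier messages forces it to be the carrier message of $c$ in each of $T''$ and $T'$. So the same observation rule fires on both sides with the same $\mu$, producing the same top constructor of $v$ and $v'$, and the continuation types agree because they are determined by $A$ and $\mu$ via the inversion principle \cref{prop:main:3}.

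It remains to propagate the invariant to the continuation channels $d_1, \dotsc, d_k$ of $\mu$, namely $O_{T''}(d_i) = O_{T'}(d_i)$; this is the step I expect to be the main obstacle. It rests on the observable facts forming a forest: each channel is the carrier of at most one message (\cref{cor:sill-background-dyn-prop-typed-config:1}), and, by freshness in the operational rules, each continuation channel is generated for exactly one message, so distinct continuation channels root subtrees over disjoint sets of channels and facts. Consequently $O_{T}(c) = \{\mu\} \sqcup O_{T}(d_1) \sqcup \dotsb \sqcup O_{T}(d_k)$ is a disjoint union whose summands can be recovered from the set $O_T(c)$ alone, by matching the continuation channels recorded inside each fact to the carriers of the remaining facts. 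Since $O_{T''}(c) = O_{T'}(c)$ and both decompose in the same purely set-theoretic manner, $O_{T''}(d_i) = O_{T'}(d_i)$ for every $i$. The continuation judgments $\jtoc{T''}{u_i}{d_i}{B_i}$ and $\jtoc{T'}{u_i'}{d_i}{B_i}$ are precisely the immediate subderivations, so $(u_i, u_i') \in \relfnt{R}$, closing the bisimulation and yielding $v = v'$ by coinduction.
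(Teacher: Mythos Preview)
Your proof is correct and unpacks exactly what the paper treats as immediate: the paper's entire proof is the one sentence that $\jtoc{T}{v}{c}{A}$ is entirely determined by the set of messages observable from $c$ and is invariant under refreshing of channel names. Your coinductive argument makes this precise, including the forest-structure step that propagates the invariant $O_{T''}(c) = O_{T'}(c)$ down to the continuation channels.
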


\begin{proof}
  Immediate from the fact that $\jtoc{T}{v}{c}{A}$ is entirely determined by the set of messages observable from $c$, and that it is invariant under refreshing of channel names.
\end{proof}

Configurations with no common channels do not interfere with each other:

\begin{proposition}[Non-Interference]
  \label{prop:sill-obs-equiv/observ-comm:11}
  Consider multisets $\jcfgt{\Gamma}{\mc{C}}{\Delta}$ and $\jcfgt{\Phi}{\mc{D}}{\Xi}$ with disjoint sets of free channels, \ie, such that $\jcfgt{\Gamma\Phi}{\mc{C},\mc{D}}{\Delta\Xi}$ is well formed.
  Then for all fair traces $T$ of $\mc{C}, \mc{D}$ and $T'$ of $\mc{C}$ and all $c \in \freecn(\mc{C})$, $\jtoc{T}{v}{c}{A}$ if and only if $\jtoc{T'}{v}{c}{A}$.
\end{proposition}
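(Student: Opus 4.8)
The plan is to prove Non-Interference by exploiting \Cref{prop:sill-obs-equiv/observ-comm:5}: it suffices to show that the set of message facts observable from $c$ in a fair trace $T$ of $\mc{C}, \mc{D}$ agrees (up to refreshing) with the set of message facts observable from $c$ in a fair trace $T'$ of $\mc{C}$ alone. The key structural insight is that because $\freecn(\mc{C})$ and $\freecn(\mc{D})$ are disjoint, no rule instantiation can mix facts from the two parts. More precisely, since $\mc{C}$ and $\mc{D}$ share no free channels, every rule instance applicable to $\mc{C}, \mc{D}$ has its active multiset contained entirely within (a descendant of) $\mc{C}$ or entirely within (a descendant of) $\mc{D}$. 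This is the formal expression of non-overlapping as established by \cref{prop:sill-background-dyn-prop-typed-config:1}, together with \cref{lemma:sill-background-properties-traces:5}, which guarantees that any channel shared between two submultisets is internal---and here there are none initially.

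First I would argue that a fair trace $T$ of $\mc{C}, \mc{D}$ can be viewed as an interleaving of a fair trace of $\mc{C}$ and a fair trace of $\mc{D}$. Because the active multisets never span both components, each step of $T$ acts purely on the $\mc{C}$-side or purely on the $\mc{D}$-side. Projecting $T$ onto the steps that touch the $\mc{C}$-side yields a trace $T_{\mc{C}}$ of $\mc{C}$; I would check that $T_{\mc{C}}$ is fair using the fact that fairness of $T$ restricts to each independent component (the facts and rule instantiations pertinent to $\mc{C}$ are enabled in $T$ exactly when they are enabled in $T_{\mc{C}}$, since $\mc{D}$'s facts neither enable nor disable them). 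By \cref{cor:sill-background-dyn-prop-typed-config:2}, all fair executions of $\mc{C}$ are union-equivalent, so $T_{\mc{C}}$ and $T'$ have the same support up to refreshing; in particular they carry the same message facts.

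Next I would show that the message facts observable from $c$ are the same in $T$ and in $T_{\mc{C}}$. Since $c \in \freecn(\mc{C})$, the derivation of $\jtoc{T}{v}{c}{A}$ only ever traverses carrier and continuation channels that descend from $\mc{C}$'s channels (by the inversion principle \cref{prop:main:3} and the subformula structure, continuation channels of a message on a $\mc{C}$-channel are themselves $\mc{C}$-channels). Hence every message fact observable from $c$ in $T$ lives on the $\mc{C}$-side and therefore appears in $T_{\mc{C}}$, and conversely. Combining this with the support-equality from the previous paragraph, $T$ and $T_{\mc{C}}$ (and hence, after refreshing, $T'$) have identical sets of observable message facts from $c$. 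The conclusion $\jtoc{T}{v}{c}{A} \iff \jtoc{T'}{v}{c}{A}$ then follows immediately from \cref{prop:sill-obs-equiv/observ-comm:5}.

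The main obstacle I anticipate is making the ``interleaving and projection'' argument fully rigorous, specifically establishing that the projected trace $T_{\mc{C}}$ is genuinely fair rather than merely a trace. The delicate point is that fairness of $T$ quantifies over rule instantiations applicable to the \emph{combined} multisets $\mc{C}_i, \mc{D}_i$, and I must confirm that an instantiation applicable to a $\mc{C}$-component multiset in $T_{\mc{C}}$ is applicable to the corresponding combined multiset in $T$ at the aligned position, and vice versa---this relies on the disjointness of free channels being preserved throughout execution, which follows from preservation (\cref{prop:sill-background:1}) keeping the two components' channel sets separate (new internal channels generated on the $\mc{C}$-side stay fresh and disjoint from the $\mc{D}$-side). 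Once this alignment is in place, the fairness transfer is routine, and the rest of the proof is essentially bookkeeping with the observation judgment.
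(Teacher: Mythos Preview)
Your proposal is correct and follows essentially the same approach as the paper: project the fair trace $T$ onto its $\mc{C}$-side steps to obtain a fair trace $T_{\mc{C}}$ of $\mc{C}$, argue that the message facts observable from $c$ coincide in $T$ and $T_{\mc{C}}$, and then use uniqueness of observations across fair executions of $\mc{C}$ to handle the arbitrary $T'$. The only cosmetic difference is that the paper invokes \cref{theorem:main:1} directly for the last step, whereas you route through union-equivalence (\cref{cor:sill-background-dyn-prop-typed-config:2}) and \cref{prop:sill-obs-equiv/observ-comm:5}; these amount to the same thing, and your identification of the fairness-of-projection step as the point requiring care matches where the paper's proof is most terse.
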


\begin{proof}
  We claim that every fair trace $T$ of $\mc{C}, \mc{D}$ induces a fair trace $T'$ of $\mc{C}$.
  Explicitly, we use preservation (\cref{prop:sill-background:1}) to show that
  \begin{enumerate}
  \item every multiset in $T$ is of the form $\jcfgt{\Gamma\Phi}{\mc{C}',\mc{D}'}{\Delta\Xi}$ for some $\jcfgt{\Gamma}{\mc{C}'}{\Delta}$ and $\jcfgt{\Phi}{\mc{D}'}{\Xi}$;
  \item every step in $T$ is of the form $\ctxh[\mc{D}']{\mc{C}'}{\Phi}{\Xi} \msstep \ctxh[\mc{D}'']{\mc{C}'}{\Phi}{\Xi}$ or $\ctxh[\mc{C}']{\mc{D'}}{\Gamma}{\Delta} \msstep \ctxh[\mc{C}'']{\mc{D}'}{\Gamma}{\Delta}$.
  \end{enumerate}
  It is sufficient to show that these properties hold for every finite prefix of $T$.
  We do so by induction on the number $n$ of steps in the prefix.
  The result is immediate when $n = 0$.
  Assume the result for some $n$, then the last multiset is of the form $\jcfgti{\Gamma\Phi}{\Iota}{\mc{C}',\mc{D}'}{\Delta\Xi}$ for some $\jcfgt{\Gamma}{\mc{C}'}{\Delta}$ and $\jcfgt{\Phi}{\mc{D}'}{\Xi}$.
  Assume some rule instance $r(\theta)$ is applicable to $\mc{C}',\mc{D}'$.
  If its active multiset intersects with both $\mc{C}'$ and $\mc{D}'$, then a case analysis on the rules reveals that it contains a message fact.
  A case analysis on the rules (or, equivalently, \cite[Lemma~5.9.5]{kavanagh_2021:_commun_based_seman}) then implies that $\mc{C}'$ and $\mc{D}'$ have a free channel in common, a contradiction.
  So the active multiset of $r(\theta)$ is contained in $\mc{C}'$ or in $\mc{D}'$.
  We are done by preservation.

  Taking the subsequence of steps of the form $\ctxh[\mc{C}']{\mc{D'}}{\Gamma}{\Delta} \msstep \ctxh[\mc{C}'']{\mc{D}'}{\Gamma}{\Delta}$ gives a trace $T'$ of $\mc{C}$.
  It is fair because $T$ is fair.
  Let $c \in \freecn(\mc{C})$ be arbitrary.
  Every message fact observable from $c$ in $T$ is observable from $c$ in $T'$, and vice-versa.
  It follows that $\jtoc{T}{v}{c}{A}$ if and only if $\jtoc{T'}{v}{c}{A}$.
  The choice of trace $T'$ for $\mc{C}$ does not matter by \cref{theorem:main:1}.
\end{proof}

To summarize, in this section we coinductively defined a judgment \(\jtoc{T}{v}{c}{A}\) that observes a unique sessions-typed communication \(\jsynt{v}{A}\) for each free channel \(c\) in a trace \(T\) of a well-typed configuration \({\jcfgt{\Gamma}{\mc{C}}{\Delta}}\).
The observed communication \(\jsynt{v}{A}\) is independent of the choice of fair execution \(T\), reflecting a confluence property enjoyed by SILL-like languages.
Moreover, its type \(A\) agrees with the of \(c\) statically determined by the preservation theorem.
We also characterized observed communications on channels using their finite approximations.
In the next section, we use these results to define observations on collections of channels.

\subsection{Observed Communications of Configurations}
\label{sec:sill-obs-equiv:observ-comm:comm-mult-chans}

We use \cref{theorem:main:1,theorem:main:2} to define observations on channels in a configuration, independently of the fair execution:

\begin{definition}
  \label{def:sill-obs-equiv/observ-comm:5}
  Given $\Psi \subseteq \Gamma\Iota\Delta$, the \defin{observed communication on $\Psi$}\glsadd{obsbr} of $\jcfgti{\Gamma}{\Iota}{\mc{C}}{\Delta}$ is the tuple
  \[
    \obsbr{\jcfgti{\Gamma}{\Iota}{\mc{C}}{\Delta}}_\Psi = (c : v_c)_{c \in \check \Psi}
  \]
  of observed communications, where $\jtoc{T}{v_c}{c}{A}$ for $c : A \in \Psi$ for some fair execution $T$ of $\jcfgti{\Gamma}{\Iota}{\mc{C}}{\Delta}$.
  If $c \in \check \Psi$, then we occasionally write $\obsbr{\jcfgti{\Gamma}{\Iota}{\mc{C}}{\Delta}}_\Psi(c)$ for the communication $v_c$ observed on $c$.
\end{definition}

\Cref{def:sill-obs-equiv/observ-comm:5} is well-defined.
Indeed, a fair execution $T$ exists by \cref{cor:sill-background-properties-traces:2}, and the observed communications do not depend on the choice of $T$ by \cref{theorem:main:1}.
The communications $v$ such that $\jtoc{T}{v}{c}{A}$ exist and are unique by \cref{theorem:main:2}.

\Cref{def:sill-obs-equiv/observ-comm:5} is simplified by the fact that the multiset rewriting rules defining Polarized SILL are non-overlapping, so its fair executions are union-equivalent.
Indeed, this fact was used in the proof of \cref{theorem:main:1} to show that the communication observed on a channel is independent of the fair execution.
In language extensions whose rules are overlapping, \eg, those involving non-deterministic choice, observed communications will be sets of tuples instead of single tuples.
This is because fair executions need not be union-equivalent in the presence of overlapping rules, so each equivalence class of union-equivalent executions may give a different observation.
We conjecture that using sets of tuples should pose no significant difficulty to the theory.

Generally, we deem internal channels to be private and unobservable, and we only interact with configurations over their interfaces.
However, \cref{def:sill-obs-equiv/observ-comm:5} allows us to observe communications on a configuration's internal channels.
This will let us observe communications between configurations and experiments when defining ``internal'' observational equivalence in \cref{cha:sill-obs-equiv}.

\begin{example}
  \label{ex:sill-obs-equiv-main-obs:1}
  Recall the process \(\jtypem{\cdot}{\cdot}{\mt{omega}}{o}{\mt{conat}}\) from \cref{ex:sill-background-typing-mult-rewr:1} that sends an infinite sequence of successor labels \(\mt{s}\), and the divergent process \(\jtypem{\cdot}{o : \mt{conat}}{\Omega}{a}{\Tu}\) from \cref{ex:sill-background-typing-mult-rewr:2}.
  Consider the configuration \(\jcfgti{\cdot}{o : \mt{conat}}{\jproc{o}{\mt{omega}}, \jproc{a}{\Omega}}{a : \Tu}\).
  Its observed communication is
  \[
    \obsbr{\jcfgti{\cdot}{o : \mt{conat}}{\jproc{o}{\mt{omega}}, \jproc{a}{\Omega}}{a : \Tu}}_{a : \Tu, o : \mt{conat}} = (a : \bot, o : (\cunfold, (\mt{s}, (\cunfold, \dotsc)))).
  \]
  It reflects the fact that, thanks to fairness, both processes make progress, even though they are non-terminating.
  If we dropped the fairness assumption, then infinitely many observations on the channel \(o\) would be possible.
  Indeed, without fairness we could step \(\mt{omega}\) only finitely many times and then step \(\Omega\) forever.
  This observed communication also reflects Polarized SILL's asynchronous communication semantics.
  Indeed, though \(\Omega\) never synchronizes or attempts to receive communications on \(o\), the process \(\mt{omega}\) sends \(\Omega\) infinitely many messages on \(o\).
\end{example}

\begin{example}
  \label{ex:sill-obs-equiv-main-obs:3}
  Our observed communication semantics lets us differentiate sent channels in a name-independent manner.
  Indeed, consider processes
  \begin{gather*}
    \jtypem{\cdot}{a : \Tu, b : \Tu}{\tSendC{c}{a}{\Omega}}{c}{\Tu \Tot \Tu},\\
    \jtypem{\cdot}{a : \Tu, b : \Tu}{\tSendC{c}{b}{\Omega}}{c}{\Tu \Tot \Tu}.
  \end{gather*}
  We can differentiate these two processes by composing them with communication partners on \(a\) and \(b\) and then observing communication on \(c\).
  Indeed, contrast the observed communications
  \begin{align*}
    \obsbr{\jcfgti{\cdot}{a : \Tu, b : \Tu}{\jproc{a}{\tClose{a}}, \jproc{b}{\Omega}, \jproc{c}{\tSendC{c}{a}{\Omega}}}{c : \Tu \Tot \Tu}}_{c : \Tu \Tot \Tu} &= (c : (\cclose, \bot)),\\
    \obsbr{\jcfgti{\cdot}{a : \Tu, b : \Tu}{\jproc{a}{\tClose{a}}, \jproc{b}{\Omega}, \jproc{c}{\tSendC{c}{b}{\Omega}}}{c : \Tu \Tot \Tu}}_{c : \Tu \Tot \Tu} &= (c : (\bot, \bot)).
  \end{align*}
  In any fair execution of \( \jproc{a}{\tClose{a}}, \jproc{b}{\Omega}, \jproc{c}{\tSendC{c}{d}{\Omega}} \), the process \(\tClose a\) will send the close message over \(a\), while no messages are ever sent on \(b\).
  The observed communication \((\cclose, \bot)\) makes it apparent that the channel \(a\) was sent over \(c\).
  In contrast, the observation \((\bot, \bot)\) makes it apparent no communication occurred on the channel sent over \(c\), and we deduce that the sent channel must have been~\(b\).
\end{example}

\begin{example}
  \label{ex:sill-obs-equiv-main-obs:2}
  Observed communications do not take into account the order in which a process sends on channels.
  For example, the following configurations have the same observed communications $(a : (l, \bot), b : (r, \bot))$ on $a$ and $b$, even though they send on $a$ and on $b$ in different orders:
  \begin{gather*}
    \jcfgt{a : \Tamp \{ l : \Tus{\Tu} \}}{\jproc{b}{\tSendL{a}{l}{\tSendL{b}{r}{\Omega}}}}{b : \Tplus \{r : \Tu\}}\\
    \jcfgt{a : \Tamp \{ l : \Tus{\Tu} \}}{\jproc{b}{\tSendL{b}{r}{\tSendL{a}{l}{\Omega}}}}{b : \Tplus \{r : \Tu\}}.
  \end{gather*}
  The order in which channels are used is not reflected in observations for several reasons.
  First, messages are only ordered on a per-channel basis, and messages sent on different channels can arrive out of order.
  Second, each channel has a unique pair of endpoints, and the \rn{Conf-C} rule organizes processes in a forest-like structure (\cf \cref{prop:sill-obs-equiv/observ-comm-equiv:7}).
  This means that two configurations communicating with a configuration $\mc{C}$ at the same time cannot directly communicate with each other to compare the order in which $\mc{C}$ sent them messages.
  In other words, configurations cannot distinguish the ordering of messages on different channels.
\end{example}

We lift communication simulations and equivalences to tuples of communications component-wise:
\begin{align*}
  (c : v_c)_{c : C \in \Gamma} \commsim[\leqslant] (c : w_c)_{c : C \in \Gamma} &\iff \forall c : C \in \Gamma \;.\; \jsynt{v_c \commsim[\leqslant] w_c}{C},\\
  (c : v_c)_{c : C \in \Gamma} \commeq[\equiv] (c : w_c)_{c : C \in \Gamma} &\iff \forall c : C \in \Gamma \;.\; \jsynt{v_c \commeq[\equiv] w_c}{C}.
\end{align*}

\section{Communication-Based Testing Equivalences}
\label{cha:sill-obs-equiv}

We adopt an extensional view of process equivalence, where we say that two processes are equivalent if we cannot differentiate them through experimentation.
Recall that communication is our sole means of interacting with processes, and that communication their only phenomenon.
This suggests that processes should be deemed equivalent if, whenever we subject them to ``communicating experiments'', we observe equivalent communications.
We introduce ``observation systems'' as a general framework for experimentation on communicating systems, and we use them to define observational equivalence.
We study observation systems in general, and we consider three natural choices of observation systems.
We study these three systems in turn in \cref{sec:total-obs-equiv:total-observ,sec:sill-obs-equiv:intern-observ,sec:sill-obs-equiv:extern-observ}.
We show that one of them induces an observational equivalence that coincides with barbed congruence in \cref{sec:sill-obs-equiv:extern-observ}.
Because our substructural operational semantics and observed communication semantics are defined on configurations and not on processes, we define observation systems and observational equivalence on configurations instead of on processes.
We show how to restrict observational equivalences to processes in \cref{sec:sill-obs-equiv:proc-equiv}.

We follow \textcites[\bibstring{chapter}~2]{milner_1980:_calcul_commun_system}[65]{hoare_1985:_commun_sequen_proces} in identifying experimenting agents with processes themselves (in our setting, with configuration contexts).
Concretely, an \defin{experiment} on configurations with interface \((\Gamma, \Delta)\) is a pair \((\mc{E}, \Psi)\) where $\jcfgti{\Lambda}{\Iota}{\ctxh{\mc{E}}{\Gamma}{\Delta}}{\Xi}$ is a configuration context and $\Psi \subseteq \Lambda\Iota\,\Xi$ is a subset of the channels free in \(\ctxh{\mc{E}}{\Gamma}{\Delta}\).
The context \(\mc{E}\) is the experimenting agent, while \(\Psi\) is the collection of channels on which we observe communications.
We subject a configuration \(\jcfgt{\Gamma}{\mc{C}}{\Delta}\) to an experiment \((\mc{E}, \Psi)\) by executing \(\ctxh[\mc{C}]{\mc{E}}{\Gamma}{\Delta}\) and making the observation \(\obsbr{\jcfgt{\Lambda}{\ctxh[\mc{C}]{\mc{E}}{\Gamma}{\Delta}}{\Xi}}_\Psi\).
Configurations \(\jcfgt{\Gamma}{\mc{C}}{\Delta}\) and \(\jcfgt{\Gamma}{\mc{D}}{\Delta}\) are equivalent according to \((\mc{E}, \Psi)\) (relative to some preorder \(\leqslant\)) if
\[
  \obsbr{\jcfgt{\Lambda}{\ctxh[\mc{C}]{\mc{E}}{\Gamma}{\Delta}}{\Xi}}_\Psi \commeq[\leqslant] \obsbr{\jcfgt{\Lambda}{\ctxh[\mc{D}]{\mc{E}}{\Gamma}{\Delta}}{\Xi}}_\Psi.
\]
Intuitively, this means that we cannot observe any differences in communication on channels \(\Psi\) when we compose \(\mc{C}\) and \(\mc{D}\) with \(\mc{E}\).

A single experiment is typically insufficient to discriminate between configurations.
As a result, we instead use collections of experiments.
This approach builds on the ``testing equivalences'' framework introduced by De Nicola and Hennessy~\cite{denicola_hennessy_1984:_testin_equiv_proces,hennessy_1983:_synch_async_exper_proces,denicola_1985:_testin_equiv_fully}.
Roughly speaking, their framework subjects processes to collections experiments that could potentially succeed, and it deems two processes to be equivalent if they succeed the same experiments.
Their notion of experimental success was based on observing a ``success'' state.
Instead of observing states, our experiments involve observing communications.
They called a collection of experiments and success states a ``computational system''~\cite[Definition~2.1.1]{hennessy_1983:_synch_async_exper_proces}.
We adapt computational systems to our communication-based setting as follows:

\begin{definition}
  \label{def:sill-obs-equiv/main-equiv:1}
  An \defin{observation system} $\mc{S}$ on configurations is a pair $(\mc{X}, {\leqslant})$, where
  \begin{itemize}
  \item $\mc{X} = (\mc{X}_{\Gamma \vdash \Delta})_{\Gamma, \Delta}$ is a type-indexed family of sets, where $\mc{X}_{\Gamma \vdash \Delta}$ is a set of experiments $(\mc{E}, \Psi)$, where $\jcfgti{\Lambda}{\Iota}{\ctxh{\mc{E}}{\Gamma}{\Delta}}{\Xi}$ is a context and $\Psi \subseteq \Lambda\Iota\,\Xi$ is a subset of the channels free in \(\ctxh{\mc{E}}{\Gamma}{\Delta}\);
  \item $\leqslant$ is a type-indexed relation on terms.
  \end{itemize}
  We assume that $\mc{X}$ is closed under exchange: if $\Gamma'$ and $\Delta'$ are permutations of $\Gamma$ and $\Delta$, respectively, then $\mc{X}_{\Gamma' \vdash \Delta'} = \mc{X}_{\Gamma \vdash \Delta}$.
  We also assume that $\mc{X}$ is closed under renaming.\qedhere
\end{definition}

In our development, we assume that the experiments $\mc{E}$ of observation systems are configuration contexts written in the same language as the configurations on which we are experimenting.
However, this is not a necessary assumption: experiments could be written in any language, so long as its multiset rewriting semantics does not interfere with the hypotheses underlying the observed communication semantics of \cref{sec:sill-obs-equiv:observ-comm}.

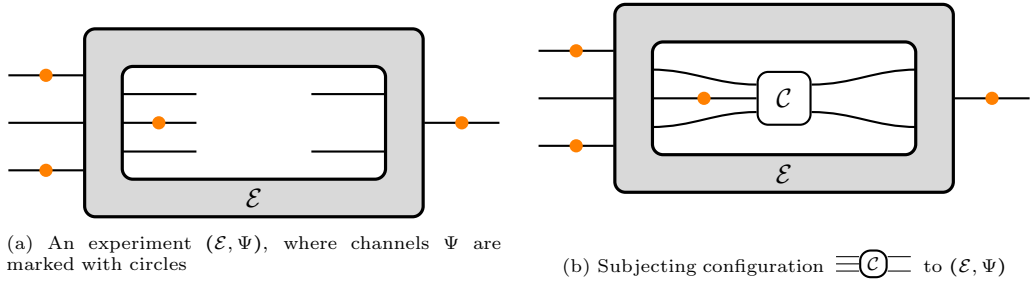
\begin{figure}
  \centering
  \begin{subfigure}{0.4\linewidth}
    \centering
    \begin{tikzpicture}[scale=0.5, thick, baseline]
      \draw [ctxt, very thick, even odd rule] (0,0) rectangle (9,5) (1,1) rectangle (8,4);
      \node [fit={(0,0) (9,5)}, inner sep=0] (outer) {};
      \node [fit={(1,1) (8,4)}, inner sep=0] (inner) {};
      \node [fit={(3,1) (6,4)}, inner sep=0] (cinner) {};
      \node [fit={(-2,0) (11,5)}, inner sep=0] (couter) {};
      \node at (4.5,0.5) {\(\mc{E}\)};

      \draw (inner.wnw) -- (inner.wnw -| cinner.west);
      \draw (inner.west) -- (inner.west -| cinner.west) node [obs] {};
      \draw (inner.wsw) -- (inner.wsw -| cinner.west);
      \draw (inner.ene) -- (inner.ene -| cinner.east);
      \draw (inner.ese) -- (inner.ese -| cinner.east);

      \draw (outer.wnw) -- (outer.wnw -| couter.west) node [obs] {};
      \draw (outer.west) -- (outer.west -| couter.west);
      \draw (outer.wsw) -- (outer.wsw -| couter.west) node [obs] {};
      \draw (outer.east) -- (outer.east -| couter.east) node [obs] {};
    \end{tikzpicture}
    \caption{An experiment \((\mc{E}, \Psi)\), where channels \(\Psi\) are marked with circles}
  \end{subfigure}
  \quad
  \begin{subfigure}{0.4\linewidth}
    \centering
    \begin{tikzpicture}[scale=0.5, thick, baseline=-0.25cm]
      \draw [ctxt, very thick, even odd rule] (0,0) rectangle (9,5) (1,1) rectangle (8,4);
      \node [fit={(0,0) (9,5)}, inner sep=0] (outer) {};
      \node [fit={(1,1) (8,4)}, inner sep=0] (inner) {};
      \node [fit={(3,1) (6,4)}, inner sep=0] (cinner) {};
      \node [fit={(-2,0) (11,5)}, inner sep=0] (couter) {};
      \node at (4.5,0.5) {\(\mc{E}\)};

      \node [node, minimum height=2em, minimum width=2em] (proc) at (4.5, 2.5) {\( \mc{C} \)};

      \draw (inner.wnw) to[out=0, in=180] (proc.wnw);
      \draw (inner.west) -- (proc.west) node [obs] {};
      \draw (inner.wsw) to[out=0, in=180] (proc.wsw);
      \draw (inner.ene) to[out=180, in=0] (proc.ene);
      \draw (inner.ese) to[out=180, in=0] (proc.ese);

      \draw (outer.wnw) -- (outer.wnw -| couter.west) node [obs] {};
      \draw (outer.west) -- (outer.west -| couter.west);
      \draw (outer.wsw) -- (outer.wsw -| couter.west) node [obs] {};
      \draw (outer.east) -- (outer.east -| couter.east) node [obs] {};
    \end{tikzpicture}
    \caption{Subjecting configuration\; \tikz[baseline=-0.1cm]{
        \node [node] (c) {\(\mc{C}\)};
        \node [fit={($(c.south west)+(-0.3,-0.3)$) ($(c.north east)+(0.3,0.3)$)}, inner sep=0cm] (rect) {};
        \useasboundingbox (rect);
        \draw (c.ene) -- (c.ene -| rect.east);
        \draw (c.ese) -- (c.ese -| rect.east);
        \draw (c.wnw) -- (c.wnw -| rect.west);
        \draw (c.west) -- (c.west -| rect.west);
        \draw (c.wsw) -- (c.wsw -| rect.west);
      }\; to \((\mc{E}, \Psi)\)}
  \end{subfigure}
  \caption{Pictorial representations of experiments}
  \label{fig:sill-obs-equiv-main-equiv:1}
\end{figure}

To help build intuition about various observation systems, it will be useful to represent experiments pictorially using diagrams like those in \cref{fig:sill-obs-equiv-main-equiv:1}.
In particular, we represent channels as wires and contexts \(\ctxh{\mc{E}}{\Gamma}{\Delta}\) as configurations with a hole.
We mark observed channel \(\Psi\) with circles.
Subjecting a configuration \(\mc{C}\) to an experiment \(\mc{E}\) can be thought of as ``plugging'' \(\mc{C}\) into the hole and eavesdropping on marked wires.

Intuitively, given an observation system \(\mc{S}\), two configurations are observationally \(\mc{S}\)-similar if subjecting them to any experiment in \(\mc{S}\) produces similar communications.
Recall from \cref{def:sill-obs-equiv/relat-equiv:5} that the contextual interior \(\mkcg{\relfnt{R}}\) of a relation \(\relfnt{R}\) is the largest contextual relation contained in \(\mc{R}\).

\begin{definition}
  \label{def:sill-obs-equiv/main-equiv:2}
  Let $\mc{S} = (\mc{X}, {\leqslant})$ be an observation system and $\leqslant$ a preorder.
  \defin{Observational $\mc{S}$-simulation} is the type-indexed relation $\sobssim{\mc{S}}$\glsadd{sobssim} on configurations such that $\jtrelc{\sobssim{\mc{S}}}{\Gamma}{\mc{C}}{\mc{D}}{\Delta}$ if and only if for all experiments $(\jcfgt{\Lambda}{\ctxh{\mc{E}}{\Gamma}{\Delta}}{\Xi},\; \Psi) \in \mc{X}_{\Gamma \vdash \Delta}$,
  \[
    \obsbr{\jcfgt{\Lambda}{\ctxh[\mc{C}]{\mc{E}}{\Gamma}{\Delta}}{\Xi}}_\Psi \commsim[\leqslant] \obsbr{\jcfgt{\Lambda}{\ctxh[\mc{D}]{\mc{E}}{\Gamma}{\Delta}}{\Xi}}_\Psi.
  \]
  In this case, we say that $\mc{C}$ and $\mc{D}$ are observationally $\mc{S}$-similar.
  We call $\sobsprec{\mc{S}}$ \defin{observational $\mc{S}$-precongruence}.
\end{definition}

Observational $\mc{S}$-simulation is a preorder by \cref{prop:sill-obs-equiv/observ-comm-equiv:3}.
It follows from \cref{prop:sill-obs-equiv/bisimulations:2} that \(\sobsprec{\mc{S}}\) is indeed a precongruence, and that it is the greatest precongruence contained in \(\sobssim{\mc{S}}\).

\begin{example}
  \label{ex:sill-obs-equiv-main-equiv:1}
  We recast the observations of \cref{ex:sill-obs-equiv-main-obs:2} using observation systems and observational simulations.
  Consider an observation system \(\mc{S} = (\mc{X}, {\leqslant})\) such that
  \[
    \mc{X}_{\cdot \vdash c : \Tu \Tot \Tu} = \Set*{ \left(\ctxh{\left(\jproc{a}{\tClose{a}}, \jproc{b}{\Omega}\right)}{a : \Tu, b : \Tu}{c : \Tu \Tot \Tu},\; c : \Tu \Tot \Tu\right) }.
  \]
  Then \(\jtrelc{\sobssim{\mc{S}}}{\cdot}{\jproc{c}{\tSendC{c}{b}{\Omega}}}{\jproc{c}{\tSendC{c}{a}{\Omega}}}{c : \Tu \Tot \Tu}\) by \cref{ex:sill-obs-equiv-main-obs:2} and \((\bot, \bot) \commsim[\leqslant] (\cclose, \bot)\).
  However, it is not the case that \(\jtrelc{\sobssim{\mc{S}}}{\cdot}{\jproc{c}{\tSendC{c}{a}{\Omega}}}{\jproc{c}{\tSendC{c}{b}{\Omega}}}{c : \Tu \Tot \Tu}\).
\end{example}

We define observational $\mc{S}$-equivalence analogously to \(\mc{S}\)-simulation:

\begin{definition}
  \label{def:sill-obs-equiv/main-equiv:6}
  Let $\mc{S} = (\mc{X}, {\leqslant})$ be an observation system and $\leqslant$ a preorder.
  \defin{Observational $\mc{S}$-equivalence} is the type-indexed relation $\sobseq{\mc{S}}$\glsadd{sobseq} on configurations such that $\jtrelc{\sobseq{\mc{S}}}{\Gamma}{\mc{C}}{\mc{D}}{\Delta}$ if and only if for all experiments $(\jcfgt{\Lambda}{\ctxh{\mc{E}}{\Gamma}{\Delta}}{\Xi}, \Psi) \in \mc{X}_{\Gamma \vdash \Delta}$,
  \[
    \obsbr{\jcfgt{\Lambda}{\ctxh[\mc{C}]{\mc{E}}{\Gamma}{\Delta}}{\Xi}}_\Psi \commeq[\leqslant] \obsbr{\jcfgt{\Lambda}{\ctxh[\mc{D}]{\mc{E}}{\Gamma}{\Delta}}{\Xi}}_\Psi.
  \]
  In this case, we say that $\mc{C}$ and $\mc{D}$ are observationally $\mc{S}$-equivalent.
  We call $\sobscong{\mc{S}}$ \defin{observational $\mc{S}$-congruence}.
\end{definition}

Observational $\mc{S}$-equivalence is an equivalence relation by \cref{prop:sill-obs-equiv/observ-comm:1}.
Apart from instances in which we want to emphasize the symmetry of a result, we will not consider observational $\mc{S}$-equivalence.
This does not restrict the applicability of our results in light of \cref{prop:sill-obs-equiv/observ-comm-equiv:10}, and the general fact that ${\leqslant} \cap {\opr{\leqslant}}$ is an equivalence relation whenever $\leqslant$ is a preorder.

\begin{proposition}
  \label{prop:sill-obs-equiv/observ-comm-equiv:10}
  Let $\mc{S} = (\mc{X}, {\leqslant})$ be an observation system.
  $\mc{S}$-simulation and $\mc{S}$-equivalence and their (pre)congruences are related as follows:
  \begin{enumerate}
  \item $\jtrelc{\sobseq{\mc{S}}}{\Gamma}{\mc{C}}{\mc{D}}{\Xi}$ if and only if $\jtrelc{\sobssim{\mc{S}}}{\Gamma}{\mc{C}}{\mc{D}}{\Xi}$ and $\jtrelc{\sobssim{\mc{S}}}{\Gamma}{\mc{D}}{\mc{C}}{\Xi}$;
  \item $\jtrelc{\sobscong{\mc{S}}}{\Gamma}{\mc{C}}{\mc{D}}{\Xi}$ if and only if $\jtrelc{\sobsprec{\mc{S}}}{\Gamma}{\mc{C}}{\mc{D}}{\Xi}$ and $\jtrelc{\sobsprec{\mc{S}}}{\Gamma}{\mc{D}}{\mc{C}}{\Xi}$.
  \end{enumerate}
\end{proposition}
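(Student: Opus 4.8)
The plan is to reduce both items to purely relational identities, the engine being the fact (available componentwise) that communication equivalence is exactly the symmetric part of communication simulation. I would prove the two items in order, since the second builds on the first. Throughout, recall that $\sobsprec{\mc{S}}$ and $\sobscong{\mc{S}}$ are by definition the contextual interiors $\mkcg{\sobssim{\mc{S}}}$ and $\mkcg{\sobseq{\mc{S}}}$ of $\sobssim{\mc{S}}$ and $\sobseq{\mc{S}}$, respectively (\cref{def:sill-obs-equiv/relat-equiv:5,def:sill-obs-equiv/main-equiv:2,def:sill-obs-equiv/main-equiv:6}).

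For item (1), I would unfold the definitions of $\sobseq{\mc{S}}$ and $\sobssim{\mc{S}}$. Fix an experiment $(\mc{E}, \Psi) \in \mc{X}_{\Gamma \vdash \Xi}$ and abbreviate by $O_{\mc{C}}$ and $O_{\mc{D}}$ the observation tuples obtained by plugging $\mc{C}$ and $\mc{D}$ into it. By the componentwise lifting of $\commeq[\leqslant]$ to tuples together with \cref{def:sill-obs-equiv/observ-comm:11}, we have $O_{\mc{C}} \commeq[\leqslant] O_{\mc{D}}$ if and only if both $O_{\mc{C}} \commsim[\leqslant] O_{\mc{D}}$ and $O_{\mc{D}} \commsim[\leqslant] O_{\mc{C}}$. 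Hence $\jtrelc{\sobseq{\mc{S}}}{\Gamma}{\mc{C}}{\mc{D}}{\Xi}$ holds if and only if for every experiment both simulations hold; since a universal quantifier distributes over a conjunction, this is equivalent to the conjunction of the two ``for all experiments'' statements, namely $\jtrelc{\sobssim{\mc{S}}}{\Gamma}{\mc{C}}{\mc{D}}{\Xi}$ and $\jtrelc{\sobssim{\mc{S}}}{\Gamma}{\mc{D}}{\mc{C}}{\Xi}$. In relational terms this establishes the identity $\sobseq{\mc{S}} = \sobssim{\mc{S}} \cap \opr{(\sobssim{\mc{S}})}$.

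For item (2), I would apply the contextual-interior operation to the identity from item (1) and invoke \cref{lemma:sill-obs-equiv/observ-comm-equiv:2}, which states that the contextual interior preserves intersections, obtaining $\sobscong{\mc{S}} = \mkcg{\sobseq{\mc{S}}} = \mkcg{\sobssim{\mc{S}}} \cap \mkcg{(\opr{(\sobssim{\mc{S}})})}$. The only nonroutine ingredient is that the contextual interior commutes with taking opposites, i.e.\ $\mkcg{(\opr{\relfnt{R}})} = \opr{(\mkcg{\relfnt{R}})}$ for every type-indexed relation $\relfnt{R}$. I would prove this by noting that a relation is contextual if and only if its opposite is, because the contextuality closure condition is symmetric under swapping the two configurations, and that $\relfnt{S} \subseteq \relfnt{R}$ if and only if $\opr{\relfnt{S}} \subseteq \opr{\relfnt{R}}$; therefore $\opr{(\mkcg{\relfnt{R}})}$ is a contextual relation contained in $\opr{\relfnt{R}}$, and it is the greatest such, so it coincides with $\mkcg{(\opr{\relfnt{R}})}$. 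Combining, $\sobscong{\mc{S}} = \mkcg{\sobssim{\mc{S}}} \cap \opr{(\mkcg{\sobssim{\mc{S}}})} = \sobsprec{\mc{S}} \cap \opr{(\sobsprec{\mc{S}})}$, which is precisely the asserted equivalence between $\sobscong{\mc{S}}$ and the two-sided $\sobsprec{\mc{S}}$.

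The main obstacle is entirely bookkeeping: verifying that the contextual interior commutes with opposites, as this fact is not recorded earlier in the excerpt. It is a short order-theoretic argument once the symmetry of the contextuality condition is observed, so I expect no substantive difficulty; the remainder is a direct combination of \cref{def:sill-obs-equiv/observ-comm:11}, the componentwise lifting of $\commsim[\leqslant]$ and $\commeq[\leqslant]$, and \cref{lemma:sill-obs-equiv/observ-comm-equiv:2}.
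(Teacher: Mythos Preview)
Your argument is correct. The paper states this proposition without proof, treating it as immediate from the definitions, so there is no paper proof to compare against; your write-up supplies exactly the details the paper omits. Item~(1) is a direct unfolding of \cref{def:sill-obs-equiv/observ-comm:11} together with the componentwise lifting and distribution of the universal quantifier over conjunction, and for item~(2) your use of \cref{lemma:sill-obs-equiv/observ-comm-equiv:2} together with the observation that the contextual interior commutes with opposites (because contextuality is preserved under taking opposites and the subset relation is too) is the natural route and is sound.
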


Though $\mc{S}$-simulation and $\mc{S}$-equivalence are defined using contexts, it is important to note that, unlike contextual-equivalence-style relations, they are stable under language extension.
Indeed, extending the process language with new language constructs does not affect $\mc{X}$'s ability to discriminate between pre-existing programs.
However, $\mc{S}$-precongruence and $\mc{S}$-congruence need not be stable under language extension.
This is because subjecting previously precongruent configurations to new language constructs could have effects that are discernible by $\mc{X}$.

There are three natural\footnote{We use ``natural'' in the same sense as in ``natural transformation'', where the choice of observed channels $\Psi$ is uniform across all contexts.} families of observation systems, and each reflects a different outlook on program testing.
The first is an ``external'' notion of experimentation, where we imagine experiments as black boxes into which we place configurations, and where the result of an experiment is reported on its exterior channels.
Processes are then similar if they produce similar reports under each experiment.
Because values in Polarized SILL are unobservable (they are all of function or quoted process type), we do not differentiate between values observed on experiments' exterior channels.
This gives:

\begin{definition}
  \label{def:sill-obs-equiv/main-equiv:3}
  The \defin{external observation system} $E$ is given by $(\mc{X}^E, \relfnt{U})$, where $\relfnt{U}$\glsadd{unirel} is the universal relation and $\mc{X}^E$ is the family
  \[
    \mc{X}^E_{\Gamma \vdash \Delta} = \Set{ (\ctxh{\mc{E}}{\Gamma}{\Delta}, \Lambda\Xi) \given \jcfgt{\Lambda}{\ctxh{\mc{E}}{\Gamma}{\Delta}}{\Xi} }.\qedhere
  \]
\end{definition}

\begin{figure}
  \centering
  \begin{subfigure}{0.4\linewidth}
    \centering
    \begin{tikzpicture}[scale=0.5, thick, baseline]
      \draw [ctxt, very thick, even odd rule] (0,0) rectangle (9,5) (1,1) rectangle (8,4);
      \node [fit={(0,0) (9,5)}, inner sep=0] (outer) {};
      \node [fit={(1,1) (8,4)}, inner sep=0] (inner) {};
      \node [fit={(3,1) (6,4)}, inner sep=0] (cinner) {};
      \node [fit={(-2,0) (11,5)}, inner sep=0] (couter) {};
      \node at (4.5,0.5) {\(\mc{E}\)};

      \draw (inner.wnw) -- (inner.wnw -| cinner.west);
      \draw (inner.west) -- (inner.west -| cinner.west);
      \draw (inner.wsw) -- (inner.wsw -| cinner.west);
      \draw (inner.ene) -- (inner.ene -| cinner.east);
      \draw (inner.ese) -- (inner.ese -| cinner.east);

      \draw (outer.wnw) -- (outer.wnw -| couter.west) node [obs] {};
      \draw (outer.west) -- (outer.west -| couter.west) node [obs] {};
      \draw (outer.wsw) -- (outer.wsw -| couter.west) node [obs] {};
      \draw (outer.east) -- (outer.east -| couter.east) node [obs] {};
    \end{tikzpicture}
    \caption{A representative experiment \((\mc{E}, \Psi) \in \mc{X}^E\)}
  \end{subfigure}
  \quad
  \begin{subfigure}{0.4\linewidth}
    \centering
    \begin{tikzpicture}[scale=0.5, thick, baseline]
      \draw [ctxt, very thick, even odd rule] (0,0) rectangle (9,5) (1,1) rectangle (8,4);
      \node [fit={(0,0) (9,5)}, inner sep=0] (outer) {};
      \node [fit={(1,1) (8,4)}, inner sep=0] (inner) {};
      \node [fit={(3,1) (6,4)}, inner sep=0] (cinner) {};
      \node [fit={(-2,0) (11,5)}, inner sep=0] (couter) {};
      \node at (4.5,0.5) {\(\mc{E}\)};

      \draw (inner.wnw) -- (inner.wnw -| cinner.west) node [obs] {};
      \draw (inner.west) -- (inner.west -| cinner.west) node [obs] {};
      \draw (inner.wsw) -- (inner.wsw -| cinner.west) node [obs] {};
      \draw (inner.ene) -- (inner.ene -| cinner.east) node [obs] {};
      \draw (inner.ese) -- (inner.ese -| cinner.east) node [obs] {};

      \draw (outer.wnw) -- (outer.wnw -| couter.west);
      \draw (outer.west) -- (outer.west -| couter.west);
      \draw (outer.wsw) -- (outer.wsw -| couter.west);
      \draw (outer.east) -- (outer.east -| couter.east);
    \end{tikzpicture}
    \caption{A representative experiment \((\mc{E}, \Psi) \in \mc{X}^I\)}
  \end{subfigure}
  \caption{Pictorial representations of experiments in external and internal observation systems}
  \label{fig:sill-obs-equiv-main-equiv:2}
\end{figure}
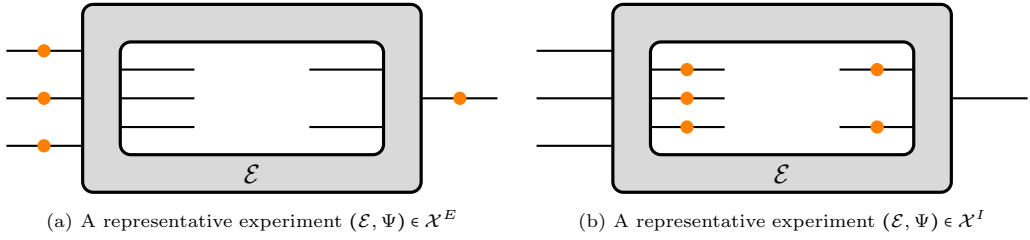

Alternatively, we could adopt an ``internal'' view of experimentation, where experiments question their subjects and we observe their answers.
This approach is reminiscent of the process equivalence \textcite{darondeau_1982:_enlar_defin_compl} gave to a calculus inspired by CCS.
This internal view is well-suited to synchronous settings like Classical Processes~\cite{wadler_2014:_propos_as_session} where processes cannot communicate unless we give them communication partners.
Accordingly, it is the approach \textcite{atkey_2017:_obser_commun_seman_class_proces} took when defining process equivalence for CP.
\Cref{fig:sill-obs-equiv-main-equiv:2} pictorially contrasts the internal and external views of experimentation.

\begin{definition}
  \label{def:sill-obs-equiv/main-equiv:5}
  The \defin{internal observation system} $I$ is given by $(\mc{X}^I, \relfnt{U})$, where $\relfnt{U}$ is the universal relation and $\mc{X}^I$ is the family
  \[
    \mc{X}^I_{\Gamma \vdash \Delta} = \Set{ (\ctxh{\mc{E}}{\Gamma}{\Delta}, \Gamma\Delta) \given \jcfgt{\Lambda}{\ctxh{\mc{E}}{\Gamma}{\Delta}}{\Xi} }.\qedhere
  \]
\end{definition}

The final approach takes a ``total'' view on experimentation, where we observe communications on all channels in the experimental context.
We use it strictly as a technical tool for relating observation systems.
We will see that total observational simulation implies both internal and external observational simulations.

\begin{definition}
  \label{def:sill-obs-equiv/main-equiv:4}
  The \defin{(strict) total observation system} $T$ is given by $(\mc{X}^T, {=})$, where $\mc{X}^T$ is the family
  \[
    \mc{X}^T_{\Gamma \vdash \Delta} = \Set{ (\ctxh{\mc{E}}{\Gamma}{\Delta}, \Lambda \Iota \Xi) \given \jcfgti{\Lambda}{\Iota}{\ctxh{\mc{E}}{\Gamma}{\Delta}}{\Xi} }.\qedhere
  \]
\end{definition}

We respectively call $\sobssim{E}$\glsadd{sobssime}, $\sobssim{I}$\glsadd{sobssimi}, and $\sobssim{T}$\glsadd{sobssimt} \defin{external}, \defin{internal}, and \defin{total observational simulation}.
Analogously, we respectively call \(\sobseq{E}\), \(\sobseq{I}\), and \(\sobseq{T}\) \defin{external}, \defin{internal}, and \defin{total observational equivalence}.

Before considering general properties of simulations, we consider some examples of similar and dissimilar configurations.
Our examples use total observational equivalence, but we can deduce external and internal equivalence by simply observing fewer channels (this fact is made explicit by \cref{prop:sill-obs-equiv/total-observ:1}).

\begin{example}
  \label{ex:sill-obs-equiv-main-equiv:4}
  We revisit \cref{ex:sill-obs-equiv-main-obs:2}.
  Write \(\mc{L}\) and \(\mc{R}\), respectively, for the configurations
  \begin{gather*}
    \jcfgt{a : \Tamp \{ l : \Tus{\Tu} \}}{\jproc{b}{\tSendL{a}{l}{\tSendL{b}{r}{\Omega}}}}{b : \Tplus \{r : \Tu\}}\\
    \jcfgt{a : \Tamp \{ l : \Tus{\Tu} \}}{\jproc{b}{\tSendL{b}{r}{\tSendL{a}{l}{\Omega}}}}{b : \Tplus \{r : \Tu\}}.
  \end{gather*}
  We show that they are total observationally equivalent.
  To do so, consider some arbitrary experiment \((\ctxh{\mc{E}}{a : \Tamp \{ l : \Tu\}}{b : \Tplus \{r : \Tu\}}, \Gamma\Lambda\Delta) \in \mc{X}^T_{a : \Tamp \{ l : \Tu \} \vdash b : \Tplus \{ r : \Tu \}}\) where \(\jcfgti{\Gamma}{\Lambda}{\ctxh{\mc{E}}{}{}}{\Delta}\) (we elide the annotations on the hole for clarity).
  We must show that \( \obsbr{\jcfgti{\Gamma}{\Lambda}{\ctxh[\mc{L}]{\mc{E}}{}{}}{\Delta}}_{\Gamma\Lambda\Delta} \sobseq{\relfnt{U}} \obsbr{\jcfgti{\Gamma}{\Lambda}{\ctxh[\mc{R}]{\mc{E}}{}{}}{\Delta}}_{\Gamma\Lambda\Delta} \).
  Observed communications are independent of our choice of fair execution, so we construct fair executions of our choosing.
  Consider the fair executions of \(\ctxh[\mc{L}]{\mc{E}}{}{}\) and \(\ctxh[\mc{R}]{\mc{E}}{}{}\) that respectively start
  \begin{gather*}
    \ctxh[\mc{L}]{\mc{E}}{}{} \msstep \ctxh[\jmsg{a'}{\mSendLN{a}{l}{a'}}, \jproc{b}{\tSendL{b}{r}{\Omega}}]{\mc{E}}{}{}
    \msstep \ctxh[\jmsg{a'}{\mSendLN{a}{l}{a'}}, \jproc{b'}{\Omega}, \jmsg{b}{\mSendLP{b}{r}{b'}}]{\mc{E}}{}{}\\
    \ctxh[\mc{R}]{\mc{E}}{}{} \msstep \ctxh[\jproc{b'}{\tSendL{b}{r}{\Omega}}, \jmsg{b}{\mSendLP{b}{r}{b'}}]{\mc{E}}{}{}
    \msstep \ctxh[\jmsg{a'}{\mSendLN{a}{l}{a'}}, \jproc{b'}{\Omega}, \jmsg{b}{\mSendLP{b}{r}{b'}}]{\mc{E}}{}{}
  \end{gather*}
  and have the same arbitrary fair tail.
  Both executions have the same message facts, and so induce the same observed communications on all channels (including all channels in \(\Gamma \Lambda \Delta\)).
  This gives the result.
\end{example}

\begin{example}
  \label{ex:sill-obs-equiv-main-equiv:3}
  We revisit \cref{ex:sill-background-typing-mult-rewr:6}.
  We show that configurations \(\jcfgt{a : \Tu}{\jproc{b}{\tFwdP{a}{b}}}{b : \Tu}\) and \(\jcfgt{a : \Tu}{\jproc{b}{\tWait{a}{\tClose{b}}}}{b : \Tu}\) are total observationally equivalent.
  Consider some arbitrary experiment \((\ctxh{\mc{E}}{a : \Tu}{b : \Tu}, \Gamma\Lambda\Delta) \in \mc{X}^T_{a : \Tu \vdash b : \Tu}\) where \(\jcfgti{\Gamma}{\Lambda}{\ctxh{\mc{E}}{a : \Tu}{b : \Tu}}{\Delta}\).
  We must show that
  \begin{equation}
    \label{eq:sill-obs-equiv-main-equiv:1}
    \obsbr{\jcfgti{\Gamma}{\Lambda}{\ctxh[\jproc{b}{\tFwdP{a}{b}}]{\mc{E}}{a : \Tu}{b : \Tu}}{\Delta}}_{\Gamma\Lambda\Delta} =
    \obsbr{\jcfgti{\Gamma}{\Lambda}{\ctxh[\jproc{b}{\tWait{a}{\tClose{b}}}]{\mc{E}}{a : \Tu}{b : \Tu}}{\Delta}}_{\Gamma\Lambda\Delta}.
  \end{equation}
  A case analysis on the multiset rewriting rules shows that the configurations in the hole step if and only if \(\mc{E} \mssteps \mc{E}', \jmsg{a}{\mClose{a}}\) for some \(\mc{E}'\).
  Assume first that this is the case.
  Then
  \begin{align*}
    &\ctxh[\jproc{b}{\tFwdP{a}{b}}]{\mc{E}}{}{} \mssteps \mc{E}', \jmsg{a}{\mClose{a}}, \jproc{b}{\tFwdP{a}{b}} \msstep \mc{E}', \jmsg{b}{\mClose{b}}\text{ and }\\
    &\ctxh[\jproc{b}{\tWait{a}{\tClose{b}}}]{\mc{E}}{}{} \mssteps \mc{E}', \jmsg{a}{\mClose{a}}, \jproc{b}{\tWait{a}{\tClose{b}}} \msstep {} \\
    &\qquad\qquad\qquad\qquad{} \msstep \mc{E}', \jproc{b}{\tClose{b}} \msstep \mc{E}', \jmsg{b}{\mClose{b}}.
  \end{align*}
  We remark that both traces have the same message facts, so we conclude \cref{eq:sill-obs-equiv-main-equiv:1}.
  Otherwise, if \(\mc{E}\) never steps to a configuration containing \(\jmsg{a}{\mClose{a}}\), then the configurations on both sides of \cref{eq:sill-obs-equiv-main-equiv:1} have equal traces (modulo the unused facts).
  We again conclude \cref{eq:sill-obs-equiv-main-equiv:1}.
\end{example}

Internal and external observational simulation do not coincide because of value transmission and our choice to compare functional values using the universal relation $\unirel$.
As described above, we use $\unirel$ because of philosophical objections to inspecting values of function type.
However, this means that we cannot discriminate between sent values until we try to use them.
In the internal observational simulation case, this use necessarily occurs after we have already observed the value as part of a communication between configuration and experiment, while in the external case we can observe the effects of experiments using the value.
This distinction is illustrated in the proof of \cref{prop:sill-obs-equiv-main-equiv:1}.
We conjecture that replacing \(\unirel\) by a suitable refinement might cause the two simulations to coincide.

\begin{proposition}
  \label{prop:sill-obs-equiv-main-equiv:1}
  Internal and external observational simulation do not coincide.
\end{proposition}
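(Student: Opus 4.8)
The plan is to refute coincidence by exhibiting two configurations $\mc{C}$ and $\mc{D}$ that are internally similar (indeed internally equivalent) but not externally similar. The mechanism, foreshadowed in the paragraph above, is value transmission compared through the universal relation $\unirel$: since $\unirel$ relates all functional values, two communications that differ only in a transmitted value are indistinguishable \emph{at the point of transmission}, yet an external experiment may \emph{run} the received value and thereby expose the difference on its own channels. Concretely, I would set $\tau = \Tproc{d : \Tu}{d' : \Tu}$ and take the two quoted processes
\[
  f = \tProcQ{d}{\tWait{d'}{\tClose{d}}}{d'}, \qquad g = \tProcQ{d}{\Omega}{d'},
\]
where $\jtypem{\cdot}{d' : \Tu}{\Omega}{d}{\Tu}$ is the divergent process of \cref{ex:sill-background-typing-mult-rewr:2}; both have type $\tau$ by \getrn{I-proc}. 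Then I would define
\[
  \mc{C} = \jproc{c}{\tSendV{c}{f}{\tClose{c}}}, \qquad \mc{D} = \jproc{c}{\tSendV{c}{g}{\tClose{c}}},
\]
so that $\jcfgt{\cdot}{\mc{C}}{c : \Tand{\tau}{\Tu}}$ and likewise for $\mc{D}$ by \getrn{R-tand} and \getrn{R-tu}.

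For internal similarity I would show that \emph{every} experiment yields the same observation on the hole's interface $\Psi = \{c\}$. In any fair execution, $\mc{C}$ fires \cref{eq:sill:msr-tand-r} to produce $\jmsg{c}{\mSendVP{c}{f}{c_1}}$, and its continuation fires \cref{eq:sill:msr-tu-r} to produce $\jmsg{c_1}{\mClose{c_1}}$; both facts enter the support of the trace regardless of how the surrounding experiment consumes the value message. Hence $\jtoc{T}{(\cval f, \cclose)}{c}{\Tand{\tau}{\Tu}}$ by \getrn{O-tand} and \getrn{O-tu}, and symmetrically $\mc{D}$ always yields $(\cval g, \cclose)$. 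Since $\unirel$ is universal, $\jsynt{(\cval f, \cclose) \commsim[\unirel] (\cval g, \cclose)}{\Tand{\tau}{\Tu}}$ and conversely by \getrn{CS-tand} and \getrn{CS-tu}, giving $\jtrelc{\sobssim{I}}{\cdot}{\mc{C}}{\mc{D}}{c : \Tand{\tau}{\Tu}}$ (in fact $\sobseq{I}$).

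For the failure of external similarity I would build a single discriminating experiment. Let $\mc{E}$ compose the hole with the process $\jproc{e}{\tRecvV{p}{c}{\tProcU{e}{p}{c}}}$ along $c$, yielding a context $\ctxh{\mc{E}}{\cdot}{c : \Tand{\tau}{\Tu}}$ that provides the external channel $e : \Tu$; its observed channels are $\Psi = \{e\}$. Running $\ctxh[\mc{C}]{\mc{E}}{}{}$, the experiment receives $f$, unquotes it via \cref{eq:sill:msr-e-qt}, and the spawned body $\tWait{c_1}{\tClose{e}}$ consumes the close on $c_1$ and closes $e$, so $\cclose$ is observed on $e$. Running $\ctxh[\mc{D}]{\mc{E}}{}{}$, the unquoted body is divergent and never sends on $e$, so $\bot$ is observed. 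As there is no rule deriving $\jsynt{\cclose \commsim[\unirel] \bot}{\Tu}$, we get $\obsbr{\ctxh[\mc{C}]{\mc{E}}{}{}}_\Psi \commsim[\unirel] \obsbr{\ctxh[\mc{D}]{\mc{E}}{}{}}_\Psi$ failing, so $\mc{C} \not\sobssim{E} \mc{D}$. Together with the previous paragraph this separates $\sobssim{I}$ from $\sobssim{E}$.

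The main obstacle I anticipate is the universally-quantified internal claim rather than the single external witness: I must argue that the experiment's \emph{use} of the transmitted value cannot perturb the communication observed on $c$. This rests on the fact that $\jtoc{T}{v}{c}{A}$ depends only on the message facts in the support $\mc{T}$ of a fair execution (and is fair-execution-independent by \cref{theorem:main:1}), together with the fact that $\jmsg{c}{\mSendVP{c}{f}{c_1}}$ and $\jmsg{c_1}{\mClose{c_1}}$ are generated by $\mc{C}$ alone and persist in $\mc{T}$ even after the experiment consumes them. The remaining work is routine bookkeeping: checking well-typedness of $\mc{E}$ (the unquoting step requires $\tau = \Tproc{e : \Tu}{c : \Tu}$ up to channel renaming), and confirming that $c$, now internal to the composite, is legitimately observable under \cref{def:sill-obs-equiv/observ-comm:5}.
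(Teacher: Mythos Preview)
Your proof is correct and follows essentially the same approach as the paper: exhibit two configurations that send different functional values (so that $\unirel$ makes them indistinguishable on the internal channel $c$), then build an external experiment that \emph{runs} the received value and thereby exposes a $\cclose$/$\bot$ discrepancy on an outer channel. The paper's witness uses function types---it sends $\lambda x:\tau.x$ versus $\lambda x:\tau.\tFix{y}{y}$ and has the experiment apply the received function before re-sending---whereas you send quoted processes and have the experiment unquote them; the underlying mechanism is identical, and the bookkeeping you flag (channel-name alignment in the quoted-process type, persistence of the value message in $\mc{T}$ after consumption) is indeed routine.
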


\begin{proof}
  We construct a counter-example.
  Let the processes $P$ and $Q$ respectively be:
  \begin{align*}
    \jtypem{\cdot}{\cdot}{&\tSendV{c}{(\lambda x : \tau.x)}{\tClose c}}{c}{\Tand{(\tau \to \tau)}{\Tu}},\\
    \jtypem{\cdot}{\cdot}{&\tSendV{c}{(\lambda x : \tau. \tFix{y}{y})}{\tClose c}}{c}{\Tand{(\tau \to \tau)}{\Tu}}.
  \end{align*}
  Observe that for all experiments $(\jcfgt{\Lambda}{\ctxh{\mc{E}}{}{c : \Tand{(\tau \to \tau)}{\Tu}}}{\Xi}, c) \in \mc{X}^I$,
  \begin{align*}
    &\obsbr{\jcfgt{\Lambda}{\ctxh[\jproc{c}{P}]{\mc{E}}{}{c : \Tand{(\tau \to \tau)}{\Tu}}}{\Xi}}_c\\
    &= (c : (\cval{\lambda x : \tau.x}, \cclose))\\
    &\commsim[\unirel] (c : (\cval{\lambda x : \tau.\tFix{y}{y}}, \cclose))\\
    &= \obsbr{\jcfgt{\Lambda}{\ctxh[\jproc{c}{Q}]{\mc{E}}{}{c : \Tand{(\tau \to \tau)}{\Tu}}}{\Xi}}_c.
  \end{align*}
  It follows that \(\jtrelc{\sobssim{I}}{\cdot}{\jproc{c}{P}}{\jproc{c}{Q}}{c : \Tand{(\tau \to \tau)}{\Tu}}\).

  We show that it is not the case that $\jtrelc{\sobssim{E}}{\cdot}{\jproc{c}{P}}{\jproc{c}{Q}}{c : \Tand{(\tau \to \tau)}{\Tu}}$.
  Take $\tau$ to be $\rho \to \rho$ for some $\rho$, and consider the process $R$ given by
  \[
    \jtypem{\cdot}{c : \Tand{(\tau \to \tau)}{\Tu}}{\tRecvV{x}{c}{\tSendV{b}{(x(\lambda z : \rho.z))}{\tFwdP{c}{b}}}}{b}{\Tand{\tau}{\Tu}}.
  \]\
  Set $\ctxh{\mc{C}}{}{c : \Tand{(\tau \to \tau)}{\Tu}} = \ctxh{}{}{c : \Tand{(\tau \to \tau)}{\Tu}}, \jproc{b}{R}$.
  The experiment $(\ctxh{\mc{C}}{}{b : \Tand{\tau}{\Tu}}, b) \in \mc{X}^E$ can differentiate $\jproc{c}{P}$ and $\jproc{c}{Q}$.
  Intuitively, this is because $(\lambda x : \tau.x)(\lambda z : \rho.z)$ will converge in the first case, but the term $(\lambda x : \tau. \tFix{y}{y})(\lambda z : \rho.z)$ will diverge in the second.
  Explicitly,
  \begin{align*}
    &\obsbr{\jcfgt{\cdot}{\ctxh[\jproc{c}{P}]{\mc{C}}{}{}}{b : \Tand{(\tau \to \tau)}{\Tu}}}_b\\
    &= (b : (\cval (\lambda z : \rho.z), \cclose))\\
    &\ncommsim[\unirel] (b : \bot)\\
    &= \obsbr{\jcfgt{\cdot}{\ctxh[\jproc{c}{Q}]{\mc{C}}{}{}}{b : \Tand{(\tau \to \tau)}{\Tu}}}_b.
  \end{align*}
  This gives the result.
\end{proof}

Our observation system framework lets us prove general properties about observational simulations.
For example, we can use the fact that experiments are written in the same language as configurations to give an easy check that observational simulations are precongruences.
We say that an experiment set $\mc{X}$ is \defin{closed under composition with contexts} if for all contexts $\jcfgt{\Lambda}{\ctxh{\mc{C}}{\Gamma}{\Delta}}{\Xi}$ and all experiments $(\ctxh{\mc{E}}{\Lambda}{\Xi}, \Psi) \in \mc{X}$, there exists a $\Psi' \supseteq \Psi$ such that $(\ctxh[{\ctxh{\mc{C}}{\Gamma}{\Delta}}]{\mc{E}}{\Lambda}{\Xi}, \Psi') \in \mc{X}$.

\begin{proposition}
  \label{prop:sill-obs-equiv/main-equiv:2}
  Let $\mc{S} = (\mc{X}, {\leqslant})$ be an observation system.
  If $\mc{X}$ is closed under composition with contexts, then $\sobssim{\mc{S}}$ is a precongruence, \ie, $\jtrelc{\sobssim{\mc{S}}}{\Gamma}{\mc{C}}{\mc{D}}{\Delta}$ if and only if $\jtrelc{\mkcg{\left(\sobssim{\mc{S}}\right)}}{\Gamma}{\mc{C}}{\mc{D}}{\Delta}$.
\end{proposition}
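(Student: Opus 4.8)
The plan is to show that $\sobssim{\mc{S}}$ coincides with its own contextual interior $\mkcg{(\sobssim{\mc{S}})}$; the stated biconditional then falls out immediately. One containment is free: by \cref{def:sill-obs-equiv/relat-equiv:5}, $\mkcg{(\sobssim{\mc{S}})}$ is contained in $\sobssim{\mc{S}}$, so $\jtrelc{\mkcg{(\sobssim{\mc{S}})}}{\Gamma}{\mc{C}}{\mc{D}}{\Delta}$ always implies $\jtrelc{\sobssim{\mc{S}}}{\Gamma}{\mc{C}}{\mc{D}}{\Delta}$. For the reverse implication it suffices to prove that $\sobssim{\mc{S}}$ is \emph{itself} contextual in the sense of \cref{def:sill-obs-equiv/bisimulations:2}: a contextual relation contained in $\sobssim{\mc{S}}$ (namely $\sobssim{\mc{S}}$ itself) is then contained in the greatest such relation $\mkcg{(\sobssim{\mc{S}})}$. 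Since $\sobssim{\mc{S}}$ is already a preorder by \cref{prop:sill-obs-equiv/observ-comm-equiv:3}, establishing contextuality also certifies that it is a precongruence. Thus the entire argument reduces to contextuality.

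To prove contextuality I would assume $\jtrelc{\sobssim{\mc{S}}}{\Lambda}{\mc{C}}{\mc{D}}{\Xi}$, fix an arbitrary context $\jcfgt{\Gamma}{\ctxh{\mc{E}}{\Lambda}{\Xi}}{\Delta}$, and aim for $\jtrelc{\sobssim{\mc{S}}}{\Gamma}{\ctxh[\mc{C}]{\mc{E}}{\Lambda}{\Xi}}{\ctxh[\mc{D}]{\mc{E}}{\Lambda}{\Xi}}{\Delta}$. Unfolding \cref{def:sill-obs-equiv/main-equiv:2}, I would take an arbitrary experiment $(\ctxh{\mc{F}}{\Gamma}{\Delta}, \Psi) \in \mc{X}_{\Gamma \vdash \Delta}$ and must compare the $\Psi$-observations of $\ctxh[{\ctxh[\mc{C}]{\mc{E}}{\Lambda}{\Xi}}]{\mc{F}}{\Gamma}{\Delta}$ with its $\mc{D}$-counterpart. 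The key move is to reassociate the two nested pluggings: inserting $\mc{C}$ into $\mc{E}$ and then the result into $\mc{F}$ yields the same configuration as inserting $\mc{C}$ into the single composite context $\mc{G} = \ctxh[{\ctxh{\mc{E}}{\Lambda}{\Xi}}]{\mc{F}}{\Gamma}{\Delta}$, whose hole has interface $(\Lambda, \Xi)$. Because $\mc{X}$ is closed under composition with contexts (applied with inner context $\ctxh{\mc{E}}{\Lambda}{\Xi}$ and experiment $(\mc{F}, \Psi)$), there is a $\Psi' \supseteq \Psi$ with $(\mc{G}, \Psi') \in \mc{X}_{\Lambda \vdash \Xi}$. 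Applying the hypothesis $\jtrelc{\sobssim{\mc{S}}}{\Lambda}{\mc{C}}{\mc{D}}{\Xi}$ to this experiment gives $\obsbr{\ctxh[\mc{C}]{\mc{G}}{\Lambda}{\Xi}}_{\Psi'} \commsim[\leqslant] \obsbr{\ctxh[\mc{D}]{\mc{G}}{\Lambda}{\Xi}}_{\Psi'}$. Finally, since $\commsim[\leqslant]$ on tuples is defined component-wise and, by \cref{def:sill-obs-equiv/observ-comm:5}, the communication observed on a channel is independent of the ambient set of observed channels, the relation restricts from the index set $\Psi'$ to the smaller $\Psi \subseteq \Psi'$, giving exactly the required comparison on $\Psi$ after reassociating back.

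The main obstacle I anticipate is the reassociation of pluggings. I must verify, using the Replacement Property (\cref{lemma:sill-background-properties-traces:9}) and the plugging construction of \cref{def:sill-obs-equiv/relat-equiv:1}, that $\ctxh[{\ctxh[\mc{C}]{\mc{E}}{\Lambda}{\Xi}}]{\mc{F}}{\Gamma}{\Delta}$ and $\ctxh[\mc{C}]{\mc{G}}{\Lambda}{\Xi}$ denote the same configuration up to renaming of the freshly introduced internal channels, so that their observed communications agree; this is a matter of tracking how the internal-channel contexts $\Iota$ are threaded through nested plugging and of choosing the hidden names disjointly, after which the observation equality is forced. The only other point needing care is the passage from $\Psi'$ down to $\Psi$, but this is a routine consequence of the component-wise definition of $\commsim[\leqslant]$ together with the channel-wise independence of $\obsbr{\cdot}_{\Psi}$.
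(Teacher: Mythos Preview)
Your proposal is correct and follows essentially the same approach as the paper: show contextuality of $\sobssim{\mc{S}}$ by composing an arbitrary context with an arbitrary experiment, invoke closure under composition to obtain an experiment with a larger observation set $\Psi' \supseteq \Psi$, apply the hypothesis, and then restrict back to $\Psi$. The paper treats the reassociation of nested pluggings and the restriction from $\Psi'$ to $\Psi$ as immediate rather than spelling them out, but your added remarks on these points are accurate and do not diverge from the paper's argument.
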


\begin{proof}
  Assume that $\jtrelc{\sobssim{\mc{S}}}{\Gamma}{\mc{C}}{\mc{D}}{\Delta}$.
  Let $\jcfgt{\Lambda}{\ctxh{\mc{F}}{\Gamma}{\Delta}}{\Xi}$ be an arbitrary context and let $(\jcfgt{\Phi}{\ctxh{\mc{E}}{\Lambda}{\Xi}}{\Pi}, \Psi) \in \mc{X}$ be an arbitrary experiment.
  We must show that
  \[
    \obsbr{\jcfgt{\Phi}{\ctxh[{\ctxh[\mc{C}]{\mc{F}}{\Gamma}{\Delta}}]{\mc{E}}{\Lambda}{\Xi}}{\Pi}}_\Psi
    \commsim[\leqslant]
    \obsbr{\jcfgt{\Phi}{\ctxh[{\ctxh[\mc{D}]{\mc{F}}{\Gamma}{\Delta}}]{\mc{E}}{\Lambda}{\Xi}}{\Pi}}_\Psi.
  \]
  By closure under composition with contexts, $(\jcfgt{\Phi}{\ctxh[{\ctxh{\mc{F}}{\Gamma}{\Delta}}]{\mc{E}}{\Lambda}{\Xi}}{\Pi}, \Psi') \in \mc{X}$ for some $\Psi' \supseteq \Psi$.
  Then \(\jtrelc{\sobssim{\mc{S}}}{\Gamma}{\mc{C}}{\mc{D}}{\Delta}\) implies
  \[
    \obsbr{\jcfgt{\Phi}{\ctxh[{\ctxh[\mc{C}]{\mc{F}}{\Gamma}{\Delta}}]{\mc{E}}{\Lambda}{\Xi}}{\Pi}}_{\Psi'}
    \commsim[\leqslant]
    \obsbr{\jcfgt{\Phi}{\ctxh[{\ctxh[\mc{D}]{\mc{F}}{\Gamma}{\Delta}}]{\mc{E}}{\Lambda}{\Xi}}{\Pi}}_{\Psi'}.
  \]
  The result is immediate from the fact that $\Psi' \supseteq \Psi$.
\end{proof}

\begin{corollary}
  \label{prop:sill-obs-equiv/total-observ:3}
  Total observational simulation $\tocssim$ and external observational simulation $\eocssim$ are precongruences.
\end{corollary}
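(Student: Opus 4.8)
The plan is to derive both claims as instances of \cref{prop:sill-obs-equiv/main-equiv:2}, which guarantees that $\sobssim{\mc{S}}$ is a precongruence whenever the experiment set of $\mc{S}$ is closed under composition with contexts. Since $\tocssim = \sobssim{T}$ and $\eocssim = \sobssim{E}$, it therefore suffices to verify that the experiment families $\mc{X}^T$ and $\mc{X}^E$ are each closed under composition with contexts, in the sense of the definition preceding \cref{prop:sill-obs-equiv/main-equiv:2}.

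For the external observation system, I would fix an arbitrary configuration context $\jcfgt{\Lambda}{\ctxh{\mc{C}}{\Gamma}{\Delta}}{\Xi}$ and an arbitrary external experiment $(\ctxh{\mc{E}}{\Lambda}{\Xi}, \Phi\Theta) \in \mc{X}^E$, where $\jcfgti{\Phi}{\Iota}{\ctxh{\mc{E}}{\Lambda}{\Xi}}{\Theta}$. Plugging $\ctxh{\mc{C}}{\Gamma}{\Delta}$ into the hole of $\mc{E}$ (renaming internal channels to keep them disjoint, as permitted by \cref{def:sill-obs-equiv/relat-equiv:1}) yields the configuration context $\ctxh[{\ctxh{\mc{C}}{\Gamma}{\Delta}}]{\mc{E}}{\Lambda}{\Xi}$. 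By \cref{def:sill-obs-equiv/relat-equiv:1}, plugging affects only the hole region and threads the added internal channels through the derivation, so it preserves the external interface; the composite still has interface $(\Phi, \Theta)$. Hence $(\ctxh[{\ctxh{\mc{C}}{\Gamma}{\Delta}}]{\mc{E}}{\Lambda}{\Xi}, \Phi\Theta) \in \mc{X}^E_{\Gamma \vdash \Delta}$, and we may take $\Psi' = \Phi\Theta = \Psi$. This establishes closure for $\mc{X}^E$.

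For the total observation system I would argue analogously, but now the experiment observes every channel. Fixing a context $\jcfgt{\Lambda}{\ctxh{\mc{C}}{\Gamma}{\Delta}}{\Xi}$ and a total experiment $(\ctxh{\mc{E}}{\Lambda}{\Xi}, \Phi\Iota\Theta) \in \mc{X}^T$ with $\jcfgti{\Phi}{\Iota}{\ctxh{\mc{E}}{\Lambda}{\Xi}}{\Theta}$, plugging $\ctxh{\mc{C}}{\Gamma}{\Delta}$ into the hole again preserves the external interface $(\Phi, \Theta)$, but the internal channel context grows from $\Iota$ to some $\Iota'$: the internal channels of $\mc{C}$ and the former hole-boundary channels $\Lambda\Xi$ become internal to the composite, while the channels already internal to $\mc{E}$ are retained. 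Thus $\Iota \subseteq \Iota'$, so the total experiment $(\ctxh[{\ctxh{\mc{C}}{\Gamma}{\Delta}}]{\mc{E}}{\Lambda}{\Xi}, \Phi\Iota'\Theta) \in \mc{X}^T_{\Gamma \vdash \Delta}$ observes $\Psi' = \Phi\Iota'\Theta \supseteq \Phi\Iota\Theta = \Psi$. This establishes closure for $\mc{X}^T$, and \cref{prop:sill-obs-equiv/main-equiv:2} then delivers that $\tocssim$ and $\eocssim$ are precongruences.

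The verifications are essentially bookkeeping, and the only genuinely delicate point is tracking how the internal channel context changes under plugging in the total case---confirming that no previously internal channel is lost and that the newly hidden boundary channels $\Lambda\Xi$ are correctly accounted for, so that $\Iota \subseteq \Iota'$. I expect this to be the main obstacle, and I would resolve it by appealing directly to the mechanics of plugging in \cref{def:sill-obs-equiv/relat-equiv:1}, which inserts the derivation of $\mc{C}$ in place of the \getrn{conf-h} axiom and threads the fresh internal channels through the surrounding derivation via the replacement property (\cref{lemma:sill-background-properties-traces:9}); this makes the inclusion $\Iota \subseteq \Iota'$ immediate while leaving the interface $(\Phi,\Theta)$ untouched.
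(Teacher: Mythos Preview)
Your approach is correct and matches the paper's: the corollary is an immediate application of \cref{prop:sill-obs-equiv/main-equiv:2}, and the paper offers no further argument beyond placing it as a corollary of that proposition. Your verification that $\mc{X}^E$ and $\mc{X}^T$ are closed under composition with contexts is exactly the check that is implicitly required.

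One small imprecision in the total case: not all of the hole-boundary channels $\Lambda\Xi$ become internal in the composite. By the subformula property (\cref{prop:sill-background-stat-prop-typed-config:3}) applied to the \getrn{conf-h} axiom inside $\mc{E}$, we only have $\Lambda \subseteq \Phi\Iota$ and $\Xi \subseteq \Iota\Theta$, so some of these channels may already lie in the external interface $(\Phi,\Theta)$ and remain there after plugging. The replacement property (\cref{lemma:sill-background-properties-traces:9}) gives $\Iota' = \Iota, \Iota_C$ where $\Iota_C$ are the internal channels of $\mc{C}$, without absorbing $\Lambda\Xi$. This does not affect your argument: you only need $\Iota \subseteq \Iota'$ and that the interface $(\Phi,\Theta)$ is preserved, both of which you state correctly.
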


Internal, external, and total observational simulations and equivalences are defined by universal quantification on infinite sets of experiments.
However, to establish similarity or equivalence, it is sufficient to quantify over a restricted subset of experiments, a result reminiscent of Milner's ``context lemma''~\cite{milner_1977:_fully_abstr_model_typed}.
The experiments in this restricted collection are those with simply branched configuration contexts.
Simply branched configuration contexts are the context analogue of simply branched configurations (\cref{def:sill-obs-equiv/observ-comm:9}), where every instance of \getrn{conf-c} composes configurations along exactly one channel.
The definition of simply branched contexts is subtle: given a simply branched configuration context, the result of filling its hole should again be simply branched.
However, this need not always be the case for contexts satisfying the definition of a simply branched configuration: the context $\ctxh{}{\Gamma}{a : A, b : B}$ satisfies \cref{def:sill-obs-equiv/observ-comm:9}, but \cref{prop:sill-obs-equiv/observ-comm:3} implies that for no $\mc{C}$ is $\ctxh[\mc{C}]{}{\Gamma}{a : A, b : B}$ simply branched.
Instead, we use the characterization of \cref{prop:sill-obs-equiv/observ-comm:3} to define simply branched configuration contexts:

\begin{definition}
  \label{def:sill-background-relat-equiv:2}
  A configuration context $\jcfgt{\Lambda}{\ctxh{\mc{B}}{\Gamma}{\Delta}}{\Xi}$ is \defin{simply branched}\varindex{configuration context {simply branched}}{1!~2!3= 2!1~!3=}[|defin] if $\Xi$ contains exactly one channel.
\end{definition}

We will use the following \namecref{lemma:sill-obs-equiv/total-observ:1} to replace arbitrary experiments with simply branched experiments that induce the same observations:

\begin{lemma}
  \label{lemma:sill-obs-equiv/total-observ:1}
  Let $(\jcfgti{\Lambda}{\Iota}{\ctxh{\mc{E}}{\Gamma}{\Delta}}{\Xi}, \Psi)$ be an arbitrary experiment.
  There exists a simply branched context $\jcfgti{\Lambda}{\Iota\,\Xi}{\ctxh{\mc{B}}{\Gamma}{\Delta}}{a : \Tu}$ such that for all $\jcfgt{\Gamma}{\mc{C}}{\Delta}$,
  \begin{align*}
    \obsbr{\jcfgti{\Lambda}{\Iota\,\Xi}{\ctxh[\mc{C}]{\mc{B}}{\Gamma}{\Delta}}{a : \Tu}}_{\Psi} &= \obsbr{\jcfgti{\Lambda}{\Iota}{\ctxh[\mc{C}]{\mc{E}}{\Gamma}{\Delta}}{\Xi}}_\Psi,\\
    \obsbr{\jcfgti{\Lambda}{\Iota\,\Xi}{\ctxh[\mc{C}]{\mc{B}}{\Gamma}{\Delta}}{a : \Tu}}_{a} &= (a : \bot).
  \end{align*}
\end{lemma}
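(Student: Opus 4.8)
The plan is to cap off the experiment's provided channels $\Xi$ with the divergent process, forcing the resulting context to provide a single trivial channel while leaving all message traffic untouched.

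First I would construct $\mc{B}$. Let $\jtypem{\cdot}{\Xi}{\Omega}{a}{\Tu}$ be the divergent process of \cref{ex:sill-background-typing-mult-rewr:2} for a fresh channel name $a$, which yields the configuration $\jcfgt{\Xi}{\jproc{a}{\Omega}}{a : \Tu}$ by \getrn{conf-p}. I then take $\ctxh{\mc{B}}{\Gamma}{\Delta}$ to be the composition of $\ctxh{\mc{E}}{\Gamma}{\Delta}$ (which uses $\Lambda$, provides $\Xi$, and has internal channels $\Iota$) with $\jproc{a}{\Omega}$ along the full context $\Pi = \Xi$ via \getrn{conf-c}. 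Since only $\mc{E}$ carries a hole, $\mc{B}$ is again a configuration context; reading the interface off \getrn{conf-c}, it uses $\Lambda$, provides the single channel $a : \Tu$, and has internal channels $\Iota\,\Xi$ (the old $\Xi$ is now hidden by the composition). Because it provides exactly one channel, $\mc{B}$ is simply branched by \cref{def:sill-background-relat-equiv:2}.

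The second equation is immediate: by \cref{rem:sill-background-typing-mult-rewr:5} the divergent process communicates on none of its channels, so in any fair execution $T$ of $\ctxh[\mc{C}]{\mc{B}}{\Gamma}{\Delta}$ no message fact has $a$ as its carrier channel, the only process providing $a$ being $\jproc{a}{\Omega}$ and neither $\mc{C}$ nor $\mc{E}$ mentioning the fresh name $a$. Hence $\jtoc{T}{\bot}{a}{\Tu}$ by \getrn{O-bot}, that is, $\obsbr{\jcfgti{\Lambda}{\Iota\,\Xi}{\ctxh[\mc{C}]{\mc{B}}{\Gamma}{\Delta}}{a : \Tu}}_{a} = (a : \bot)$. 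For the first equation, the key point is that the rule \cref{eq:sill:msr-e-qt} governing $\jproc{a}{\Omega}$ rewrites this fact to itself: it produces and consumes no message facts, and its active multiset is disjoint from that of every other applicable instantiation. Given a fair execution of $\ctxh[\mc{C}]{\mc{E}}{\Gamma}{\Delta}$, I would interleave this self-loop infinitely often to obtain an execution $T'$ of $\ctxh[\mc{C}]{\mc{B}}{\Gamma}{\Delta}$; $T'$ is fair because every rule applicable to the $\mc{E}$-part is still applied (the $\Omega$-loop disables nothing) and the $\Omega$-loop, the only instantiation touching $\jproc{a}{\Omega}$, fires infinitely often. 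Since the $\Omega$-steps neither add nor remove message facts, the two executions carry the same message facts up to refreshing, hence the same observable message facts from each $c \in \check\Psi$. By \cref{prop:sill-obs-equiv/observ-comm:5} the observed communication on each such $c$ agrees, and since observations are independent of the choice of fair execution (\cref{theorem:main:1}), the tuples $\obsbr{\cdots}_\Psi$ coincide.

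I expect the main obstacle to be the fairness bookkeeping in the interleaving step: one must verify that threading the non-interfering $\Omega$-loop through a fair execution preserves fairness while leaving the set of observable message facts untouched. This is precisely the interference-freedom phenomenon of \cref{sec:ssos-fairness:prop-fair-trac}, and invoking \cref{prop:sill-background-dyn-prop-typed-config:1} together with the union-equivalence of fair executions (\cref{cor:sill-background-dyn-prop-typed-config:2}) should discharge it cleanly. Care is nonetheless needed because $\jproc{a}{\Omega}$ shares the typing channels $\Xi$ with $\mc{E}$, so the off-the-shelf Non-Interference result (\cref{prop:sill-obs-equiv/observ-comm:11}), which demands disjoint free channels, does not apply directly; the argument that $\Omega$ is operationally inert must be made by hand from the form of its only rewrite rule.
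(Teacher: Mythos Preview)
Your proposal is correct and follows essentially the same approach as the paper: construct $\mc{B}$ by composing $\ctxh{\mc{E}}{\Gamma}{\Delta}$ with $\jproc{a}{\Omega}$ along $\Xi$, then obtain a fair execution of $\ctxh[\mc{C}]{\mc{B}}{\Gamma}{\Delta}$ by interleaving a fair execution of $\ctxh[\mc{C}]{\mc{E}}{\Gamma}{\Delta}$ with the $\Omega$-self-loop, and conclude from union-equivalence of message facts. Your discussion of the fairness bookkeeping and the inapplicability of \cref{prop:sill-obs-equiv/observ-comm:11} is more explicit than the paper's terse ``It is fair,'' but the argument is the same.
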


\begin{proof}
  Recall from \cref{ex:sill-background-typing-mult-rewr:2} that there exists a divergent process $\jtypem{\cdot}{\Xi}{\Omega}{c}{\Tu}$.
  Let $\jcfgti{\Lambda}{\Iota\,\Xi}{\ctxh{\mc{B}}{\Gamma}{\Delta}}{a : \Tu}$ be given by the composition $\ctxh{\mc{E}}{\Gamma}{\Delta}, \jproc{c}{\Omega}$.
  Let $\jcfgt{\Gamma}{\mc{C}}{\Delta}$ be arbitrary.
  Let $T$ be an arbitrary fair trace of $\jcfgti{\Lambda}{\Iota}{\ctxh[\mc{C}]{\mc{E}}{\Gamma}{\Delta}}{\Xi}$.
  Let $T'$ be the trace of $\jcfgti{\Lambda}{\Iota\,\Xi}{\ctxh[\mc{C}]{\mc{B}}{\Gamma}{\Delta}}{a : \Tu}$ given by interleaving each step of $T$ with an application of \cref{eq:sill:msr-e-qt} to $\jproc{c}{\Omega}$.
  It is fair.
  The two traces are union equivalent, so they have the same observed communications on all channels.
  In particular, no message is sent on $a$, so $\jtoc{T'}{a}{\bot}{\Tu}$.
\end{proof}

\Cref{cor:sill-obs-equiv/main-equiv:1} implies that to show configurations are internal or total observationally similar, it is sufficient to consider only experiments with simply branched contexts:

\begin{proposition}
  \label{cor:sill-obs-equiv/main-equiv:1}
  Let $\mc{S} = (\mc{X}, {\leqslant})$ range over observation systems $I$ and $T$.
  Let $\mc{S}_B = (\mc{X}^B, {\leqslant})$ be its restriction to simply branched contexts, where $\mc{X}^B = \Set{ (\mc{E}, \Psi) \in \mc{X} \given \mc{E} \text{ is simply branched}\, }$.
  Then ${\jtrelc{\sobssim{\mc{S}}}{\Gamma}{\mc{C}}{\mc{D}}{\Delta}}$ if and only if $\jtrelc{\sobssim{\mc{S}_B}}{\Gamma}{\mc{C}}{\mc{D}}{\Delta}$.
\end{proposition}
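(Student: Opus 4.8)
The plan is to prove the two implications separately. The forward direction (full observation system simulation implies simply-branched simulation) is trivial: since $\mc{X}^B \subseteq \mc{X}$, any pair $(\mc{C}, \mc{D})$ that satisfies the simulation condition against \emph{all} experiments in $\mc{X}$ certainly satisfies it against the sub-collection $\mc{X}^B$. So I would dispose of this direction in one sentence, noting $\jtrelc{\sobssim{\mc{S}}}{\Gamma}{\mc{C}}{\mc{D}}{\Delta}$ quantifies over a superset of the experiments quantified over by $\jtrelc{\sobssim{\mc{S}_B}}{\Gamma}{\mc{C}}{\mc{D}}{\Delta}$.

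The substance is the converse: assuming $\jtrelc{\sobssim{\mc{S}_B}}{\Gamma}{\mc{C}}{\mc{D}}{\Delta}$, I must show the simulation holds against an \emph{arbitrary} experiment $(\jcfgti{\Lambda}{\Iota}{\ctxh{\mc{E}}{\Gamma}{\Delta}}{\Xi}, \Psi) \in \mc{X}$. The key tool is \cref{lemma:sill-obs-equiv/total-observ:1}: given this arbitrary experiment, it produces a \emph{simply branched} context $\jcfgti{\Lambda}{\Iota\,\Xi}{\ctxh{\mc{B}}{\Gamma}{\Delta}}{a : \Tu}$ (namely $\ctxh{\mc{E}}{\Gamma}{\Delta}$ composed with a divergent process $\jproc{c}{\Omega}$ providing the fresh channel $a$) such that for every $\jcfgt{\Gamma}{\mc{C}}{\Delta}$ the observed communications on $\Psi$ agree:
\[
  \obsbr{\jcfgti{\Lambda}{\Iota\,\Xi}{\ctxh[\mc{C}]{\mc{B}}{\Gamma}{\Delta}}{a : \Tu}}_{\Psi} = \obsbr{\jcfgti{\Lambda}{\Iota}{\ctxh[\mc{C}]{\mc{E}}{\Gamma}{\Delta}}{\Xi}}_\Psi.
\]
I would first check that $(\mc{B}, \Psi)$ is a legitimate experiment in $\mc{X}^B$ for the observation system in question. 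For both $I$ and $T$ the membership condition is purely about which channels are observed relative to the interface, and the observed channel set $\Psi$ of the original experiment is preserved by the construction in the \namecref{lemma:sill-obs-equiv/total-observ:1}; since $\mc{B}$ is simply branched by construction, the pair lies in $\mc{X}^B$. (For the $T$ system one should note $\Psi = \Lambda\Iota\Xi$ and that the new internal channels introduced by composing with $\Omega$ do not matter because we observe only on $\Psi$; this is exactly the content of the \namecref{lemma:sill-obs-equiv/total-observ:1}.)

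With $(\mc{B}, \Psi) \in \mc{X}^B$ in hand, the hypothesis $\jtrelc{\sobssim{\mc{S}_B}}{\Gamma}{\mc{C}}{\mc{D}}{\Delta}$ gives
\[
  \obsbr{\jcfgti{\Lambda}{\Iota\,\Xi}{\ctxh[\mc{C}]{\mc{B}}{\Gamma}{\Delta}}{a : \Tu}}_{\Psi} \commsim[\leqslant] \obsbr{\jcfgti{\Lambda}{\Iota\,\Xi}{\ctxh[\mc{D}]{\mc{B}}{\Gamma}{\Delta}}{a : \Tu}}_{\Psi},
\]
and applying the \namecref{lemma:sill-obs-equiv/total-observ:1}'s equality on each side rewrites this to the desired
\[
  \obsbr{\jcfgti{\Lambda}{\Iota}{\ctxh[\mc{C}]{\mc{E}}{\Gamma}{\Delta}}{\Xi}}_\Psi \commsim[\leqslant] \obsbr{\jcfgti{\Lambda}{\Iota}{\ctxh[\mc{D}]{\mc{E}}{\Gamma}{\Delta}}{\Xi}}_\Psi.
\]
Since the experiment was arbitrary, this establishes $\jtrelc{\sobssim{\mc{S}}}{\Gamma}{\mc{C}}{\mc{D}}{\Delta}$. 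The main obstacle, and the only place requiring genuine care, is verifying that $(\mc{B}, \Psi)$ really is an experiment of the \emph{restricted} system $\mc{S}_B$ for each of $I$ and $T$ — that is, checking that the observed-channel discipline defining $\mc{X}^I$ and $\mc{X}^T$ is respected by the rewritten context, given that composing with $\Omega$ changes the interface (the provided channels become $a : \Tu$ and the old provided channels $\Xi$ are pushed into the internal context). Everything else is bookkeeping licensed by \cref{lemma:sill-obs-equiv/total-observ:1}.
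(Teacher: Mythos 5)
Your overall strategy is exactly the paper's: the forward direction is immediate from $\mc{X}^B \subseteq \mc{X}$, and the converse replaces an arbitrary experiment by the simply branched context supplied by \cref{lemma:sill-obs-equiv/total-observ:1}, applies the hypothesis, and transfers the result back through the lemma's equality of observations. However, your membership check contains a genuine error in the case $\mc{S} = T$. You claim that $(\mc{B}, \Psi)$ is an experiment of $\mc{X}^T$, where $\Psi = \Lambda\Iota\,\Xi$ is the observed set of the original experiment. It is not: the context $\mc{B}$ is $\ctxh{\mc{E}}{\Gamma}{\Delta}$ composed with a divergent process providing a fresh channel $a : \Tu$, so it has typing $\jcfgti{\Lambda}{\Iota\,\Xi}{\ctxh{\mc{B}}{\Gamma}{\Delta}}{a : \Tu}$, and by \cref{def:sill-obs-equiv/main-equiv:4} the \emph{only} observed-channel set that pairs with $\mc{B}$ in $\mc{X}^T$ is the full set $\Lambda\,\Iota\,\Xi, a : \Tu$, which strictly contains $\Psi$. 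Consequently the hypothesis $\jtrelc{\sobssim{\mc{S}_B}}{\Gamma}{\mc{C}}{\mc{D}}{\Delta}$ cannot be applied to the pair $(\mc{B},\Psi)$ as you do, and your parenthetical remark that the new channels ``do not matter because we observe only on $\Psi$'' begs the question: observing only on $\Psi$ is precisely what the definition of $\mc{X}^T$ forbids. (For $I$ your claim is fine, since the observed set of an internal experiment is the hole's interface $\Gamma\Delta$, which the construction leaves unchanged.)

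The repair is small, and it is what the paper does: read the lemma as producing a simply branched experiment $(\mc{B}, D) \in \mc{X}$ with $\Psi \subseteq D$ (for $I$ one has $D = \Psi = \Gamma\Delta$; for $T$ one has $D = \Psi, a : \Tu$). Apply the hypothesis to $(\mc{B}, D)$ to obtain $\commsim[\leqslant]$ of the observed tuples on $D$; since the lifting of $\commsim[\leqslant]$ to tuples is componentwise, this restricts to $\commsim[\leqslant]$ on the subset $\Psi$; then the lemma's equalities rewrite both sides to the observations of $\ctxh[\mc{C}]{\mc{E}}{\Gamma}{\Delta}$ and $\ctxh[\mc{D}]{\mc{E}}{\Gamma}{\Delta}$ on $\Psi$, as desired. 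The lemma's second equation, stating that the observation on the fresh channel $a$ is $(a : \bot)$, explains why nothing is lost on the extra channel, but the explicit restriction step from $D$ to $\Psi$ is still needed in order to invoke the hypothesis at all. With that adjustment your proof coincides with the paper's.
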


\begin{proof}
  Sufficiency is immediate for both properties: every simply branched experiment is an experiment.
  To see necessity, assume that $\jtrelc{\sobssim{\mc{S}_B}}{\Gamma}{\mc{C}}{\mc{D}}{\Delta}$.
  Let $(\jcfgt{\Lambda}{\ctxh{\mc{E}}{\Gamma}{\Delta}}{\Xi}, C) \in \mc{X}$ be an arbitrary experiment.
  We must show that
  \[
    \obsbr{\jcfgt{\Lambda}{\ctxh[\mc{C}]{\mc{E}}{\Gamma}{\Delta}}{\Xi}}_C \commsim[\leqslant] \obsbr{\jcfgt{\Lambda}{\ctxh[\mc{D}]{\mc{E}}{\Gamma}{\Delta}}{\Xi}}_C.
  \]
  By \cref{lemma:sill-obs-equiv/total-observ:1}, there exists a simply branched experiment $(\jcfgt{\Lambda}{\ctxh{\mc{B}}{\Gamma}{\Delta}}{a : \Tu}, D) \in \mc{X}$ with $C \subseteq D$.
  In either case,
  \[
    \obsbr{\jcfgt{\Lambda}{\ctxh[\mc{C}]{\mc{B}}{\Gamma}{\Delta}}{a : \Tu}}_D \commsim[\leqslant] \obsbr{\jcfgt{\Lambda}{\ctxh[\mc{D}]{\mc{B}}{\Gamma}{\Delta}}{a : \Tu}}_D
  \]
  by assumption.
  Because $C \subseteq D$,
  \[
    \obsbr{\jcfgt{\Lambda}{\ctxh[\mc{C}]{\mc{B}}{\Gamma}{\Delta}}{a : \Tu}}_C \commsim[\leqslant] \obsbr{\jcfgt{\Lambda}{\ctxh[\mc{D}]{\mc{B}}{\Gamma}{\Delta}}{a : \Tu}}_C.
  \]
  By the \namecref{lemma:sill-obs-equiv/total-observ:1},
  \[
    \obsbr{\jcfgt{\Lambda}{\ctxh[\mc{C}]{\mc{E}}{\Gamma}{\Delta}}{\Xi}}_C \commsim[\leqslant] \obsbr{\jcfgt{\Lambda}{\ctxh[\mc{D}]{\mc{E}}{\Gamma}{\Delta}}{\Xi}}_C.\qedhere
  \]
\end{proof}

Recall from \cref{prop:sill-obs-equiv/bisimulations:2} that a preorder on configurations is a precongruence if and only if it is contextual.
By restricting our attention to simply branched contexts (instead of all contexts), we get restricted notion of contextual relation that we call ``simply branched contextual'', and a similarly restricted notion of contextual interior.
Our context lemma (\cf~\cite[6--7]{milner_1977:_fully_abstr_model_typed}) states that, in the cases of external and total simulations, this restricted notion of contextual interior coincides with the usual notion of contextual interior.
This reduces our proof burden when showing that configurations are precongruent: we only need to quantify over simply branched contexts instead of over all contexts.
Explicitly:

\begin{definition}
  \label{def:sill-background-relat-equiv:3}
  A typed relation $\relfnt{R}$ on configurations is \defin{simply branched contextual} if $\jtrelc{\relfnt{R}}{\Gamma}{\mc{C}}{\mc{D}}{\Delta}$ implies that $\jtrelc{\relfnt{R}}{\Phi}{\ctxh[\mc{C}]{\mc{B}}{\Gamma}{\Delta}}{\ctxh[\mc{D}]{\mc{B}}{\Gamma}{\Delta}}{c : C}$ for all simply branched contexts $\jcfgt{\Phi}{\ctxh{\mc{B}}{\Gamma}{\Delta}}{c : C}$.
\end{definition}

\begin{definition}
  \label{def:sill-background-relat-equiv:1}
  The \defin{simply branched contextual interior}\varindex{relation simply-branched {contextual interior} context}{1!3!2= 4!3!2=}[|defin] $\mkbcg{\relfnt{R}}$\glsadd{mkbcg} of a typed relation $\relfnt{R}$ on configurations is the greatest simply branched contextual typed relation contained in $\relfnt{R}$.
\end{definition}

\begin{proposition}[Context Lemma]
  \label{prop:sill-obs-equiv/extern-observ:6}
  Let $\mc{S} = (\mc{X}, {\leqslant})$ range over observation systems $E$ and $T$.
  Then $\jtrelc{\sobsprec{\mc{S}}}{\Gamma}{\mc{C}}{\mc{D}}{a : A}$ if and only if $\jtrelc{\mkbcg{\left(\sobssim{\mc{S}}\right)}}{\Gamma}{\mc{C}}{\mc{D}}{a : A}$.
\end{proposition}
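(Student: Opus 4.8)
The Context Lemma (Proposition~\ref{prop:sill-obs-equiv/extern-observ:6}) asserts that for $\mc{S}$ ranging over the external and total observation systems $E$ and $T$, we have $\jtrelc{\sobsprec{\mc{S}}}{\Gamma}{\mc{C}}{\mc{D}}{a : A}$ if and only if $\jtrelc{\mkbcg{\left(\sobssim{\mc{S}}\right)}}{\Gamma}{\mc{C}}{\mc{D}}{a : A}$.

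The plan is to show that, for $\mc{S}$ ranging over $E$ and $T$, both $\sobsprec{\mc{S}}$ and $\mkbcg{\left(\sobssim{\mc{S}}\right)}$ coincide with $\sobssim{\mc{S}}$ itself, so that the desired biconditional is just the equality of these two type-indexed relations. The key input is \cref{prop:sill-obs-equiv/total-observ:3}, which states that $\eocssim$ and $\tocssim$ are \emph{precongruences}. Recall that $\sobsprec{\mc{S}} = \mkcg{\left(\sobssim{\mc{S}}\right)}$ is, by the remarks following \cref{def:sill-obs-equiv/main-equiv:2}, the greatest precongruence contained in $\sobssim{\mc{S}}$. Since for $\mc{S} \in \{E, T\}$ the relation $\sobssim{\mc{S}}$ is a precongruence contained in itself, it is the greatest such, and therefore $\sobsprec{\mc{S}} = \sobssim{\mc{S}}$. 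Equivalently, by \cref{prop:sill-obs-equiv/bisimulations:2} a precongruence is contextual, so the contextual interior of the contextual relation $\sobssim{\mc{S}}$ is $\sobssim{\mc{S}}$ itself.

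With this identity in hand, I would prove the two inclusions. For $\sobsprec{\mc{S}} \subseteq \mkbcg{\left(\sobssim{\mc{S}}\right)}$: the relation $\sobsprec{\mc{S}}$ is a precongruence, hence contextual (\cref{prop:sill-obs-equiv/bisimulations:2}); since every simply branched context is in particular a context, contextuality entails simply branched contextuality (\cref{def:sill-background-relat-equiv:3}); and $\sobsprec{\mc{S}} \subseteq \sobssim{\mc{S}}$. Thus $\sobsprec{\mc{S}}$ is a simply branched contextual relation contained in $\sobssim{\mc{S}}$, so by maximality of the simply branched contextual interior (\cref{def:sill-background-relat-equiv:1}) it is contained in $\mkbcg{\left(\sobssim{\mc{S}}\right)}$. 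For the converse $\mkbcg{\left(\sobssim{\mc{S}}\right)} \subseteq \sobsprec{\mc{S}}$: by definition $\mkbcg{\left(\sobssim{\mc{S}}\right)} \subseteq \sobssim{\mc{S}}$, and $\sobssim{\mc{S}} = \sobsprec{\mc{S}}$ by the previous paragraph. Combining the two inclusions gives $\sobsprec{\mc{S}} = \mkbcg{\left(\sobssim{\mc{S}}\right)}$, which specializes to the stated equivalence on sequents $\Gamma \vdash a : A$ with a single provided channel.

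The substantive content, and thus the main obstacle, lies entirely in \cref{prop:sill-obs-equiv/total-observ:3}: that $\sobssim{\mc{S}}$ respects composition for $\mc{S} \in \{E,T\}$, which in turn rests on the experiment families $\mc{X}^E$ and $\mc{X}^T$ being closed under composition with contexts (\cref{prop:sill-obs-equiv/main-equiv:2}). This is precisely why the lemma is restricted to the external and total systems and excludes the internal system $I$, for which $\sobssim{I}$ need not be a precongruence. If a more self-contained argument for the converse inclusion were wanted, I would instead show directly that $\mkbcg{\left(\sobssim{\mc{S}}\right)}$ is fully contextual by decomposing an arbitrary context into simply branched pieces via \cref{prop:sill-obs-equiv/observ-comm-equiv:7} and threading the relation through each piece using closure under simply branched contexts together with the composition-respecting property of $\sobssim{\mc{S}}$; but given \cref{prop:sill-obs-equiv/total-observ:3} this detour is unnecessary, and the proof reduces to the inclusion bookkeeping above.
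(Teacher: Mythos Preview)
Your proof is correct but takes a different route from the paper's. The paper gives a more direct argument: it unfolds the definition of $\mkbcg{(\sobssim{\mc{S}})}$, takes an arbitrary context $\mc{F}$ and an arbitrary experiment $(\mc{E},C) \in \mc{X}$, observes that the composite $\ctxh[{\ctxh{\mc{F}}{\Gamma}{a:A}}]{\mc{E}}{\Phi}{\Psi}$ is itself an experiment in $\mc{X}^E$ (respectively $\mc{X}^T$) with an observation set containing $C$, and then instantiates the simply-branched hypothesis with the trivial hole $\mc{B} = \ctxh{}{\Gamma}{a:A}$. You instead factor everything through \cref{prop:sill-obs-equiv/total-observ:3}: since $\sobssim{\mc{S}}$ is already a precongruence for $\mc{S}\in\{E,T\}$, it is contextual, hence simply branched contextual, and so both $\mkcg{(\sobssim{\mc{S}})}$ and $\mkbcg{(\sobssim{\mc{S}})}$ collapse to $\sobssim{\mc{S}}$ itself. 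Both arguments bottom out in the same fact (closure of $\mc{X}^E$ and $\mc{X}^T$ under composition with contexts, \cref{prop:sill-obs-equiv/main-equiv:2}); yours makes cleaner reuse of that fact as already packaged in \cref{prop:sill-obs-equiv/total-observ:3}, and in fact yields the equality at all sequents rather than only those with a single provided channel. The paper's version has the minor advantage of being self-contained at the level of definitions, making visible exactly where the experiment-set closure is used.
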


\begin{proof}
  Sufficiency is immediate: every simply branched context is a context.
  To see necessity, assume that $\jtrelc{\mkbcg{\left(\sobssim{\mc{S}}\right)}}{\Gamma}{\mc{C}}{\mc{D}}{a : A}$, \ie, that
  \begin{equation}
    \label{eq:sill-obs-equiv/main-equiv:1}
    \obsbr{\jcfgt{\Lambda}{\ctxh[{\ctxh[\mc{C}]{\mc{B}}{\Gamma}{a : A}}]{\mc{E}}{\Phi}{b : B}}{\Xi}}_C
    \commsim[\leqslant]
    \obsbr{\jcfgt{\Lambda}{\ctxh[{\ctxh[\mc{D}]{\mc{B}}{\Gamma}{a : A}}]{\mc{E}}{\Phi}{b : B}}{\Xi}}_C
  \end{equation}
  for all simply branched contexts $\jcfgt{\Phi}{\ctxh{\mc{B}}{\Gamma}{a : A}}{b : B}$ and experiments $(\jcfgt{\Lambda}{\ctxh{\mc{E}}{\Phi}{b : B}}{\Xi}, C) \in \mc{X}$.
  Let $\jcfgti{\Phi}{\Iota_F}{\ctxh{\mc{F}}{\Gamma}{a : A}}{\Psi}$ be an arbitrary context, and let $(\jcfgti{\Lambda}{\Iota_E}{\ctxh{\mc{E}}{\Phi}{\Psi}}{\Xi}, C) \in \mc{X}$ be an arbitrary experiment.
  We must show that
  \begin{equation}
    \label{eq:sill-obs-equiv/main-equiv:2}
    \obsbr{\jcfgt{\Lambda}{\ctxh[{\ctxh[\mc{C}]{\mc{F}}{\Gamma}{a : A}}]{\mc{E}}{\Phi}{\Psi}}{\Xi}}_C
    \commsim[\leqslant]
    \obsbr{\jcfgt{\Lambda}{\ctxh[{\ctxh[\mc{D}]{\mc{F}}{\Gamma}{a : A}}]{\mc{E}}{\Phi}{\Psi}}{\Xi}}_C.
  \end{equation}
  The context $\ctxh[{\ctxh{\mc{F}}{\Gamma}{a : A}}]{\mc{E}}{\Phi}{\Psi}$ appears as the experiments $(\ctxh[{\ctxh{\mc{F}}{\Gamma}{a : A}}]{\mc{E}}{\Phi}{\Psi}, \Lambda\Xi) \in \mc{X}^E$ and $(\ctxh[{\ctxh{\mc{F}}{\Gamma}{a : A}}]{\mc{E}}{\Phi}{\Psi}, \Lambda\Iota_F\Iota_E\Xi) \in \mc{X}^T$.
  Instantiating \eqref{eq:sill-obs-equiv/main-equiv:1} with $\mc{B} = \ctxh{}{\Gamma}{a : A}$ gives the result.
\end{proof}

In this paper, we focus on external observational simulation.
It is better behaved than internal observational simulation (it is a precongruence).
It is also, in some sense, easier to work with.
This is because we can use the fact that we never observe input on external channels.
As a result, the observed communications are simpler: they are never bidirectional.

In \cref{sec:total-obs-equiv:total-observ}, we show that total observational equivalence is closed under execution.
We show in \cref{sec:sill-obs-equiv:intern-observ} that internal observational precongruence implies external observational precongruence.
In \cref{sec:sill-obs-equiv:extern-observ}, we develop external observational simulation.
We show that external observational congruence coincides with barbed congruence.
\Cref{fig:sill-obs-equiv/extern-observ:1} of \cref{sec:sill-obs-equiv:extern-observ:relations} summarizes the relationships between these different relations on configurations.
We show in \cref{sec:sill-obs-equiv:proc-equiv} how to relate relations on configurations and relations on processes.
This will give certain precongruences for processes.

\subsection{Total Observations for Configurations}
\label{sec:total-obs-equiv:total-observ}

Total observational equivalence is useful for showing properties that hold for $\mc{S}$-simulations and $\mc{S}$-equivalences in general.
This is because total observational equivalence is the finest notion of equivalence based on observed communications.
We start by relating observational simulations in general:

\begin{proposition}
  \label{prop:sill-obs-equiv/total-observ:2}
  Let $\mc{S}_i = (\mc{X}_i, {\leqslant}_i)$ for $i = 1, 2$ be observation systems.
  Then $\jtrelc{\sobssim{\mc{S}_1}}{\Gamma}{\mc{C}}{\mc{D}}{\Delta}$ implies $\jtrelc{\sobssim{\mc{S}_2}}{\Gamma}{\mc{C}}{\mc{D}}{\Delta}$ whenever both:
  \begin{enumerate}
  \item for all $(\jcfgt{\Lambda}{\ctxh{\mc{E}}{\Gamma}{\Delta}}{\Xi}, \Psi) \in \mc{X}_2$, there exists $\Psi' \supseteq \Psi$ such that $(\jcfgt{\Lambda}{\ctxh{\mc{E}}{\Gamma}{\Delta}}{\Xi}, \Psi') \in \mc{X}_1$, and
  \item ${\leqslant}_1 \subseteq {\leqslant}_2$.
  \end{enumerate}
\end{proposition}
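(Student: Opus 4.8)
The plan is to chase the definition of observational simulation directly, inserting the two hypotheses at exactly the points where they are needed. First I would assume $\jtrelc{\sobssim{\mc{S}_1}}{\Gamma}{\mc{C}}{\mc{D}}{\Delta}$ and fix an arbitrary experiment $(\jcfgt{\Lambda}{\ctxh{\mc{E}}{\Gamma}{\Delta}}{\Xi}, \Psi) \in \mc{X}_2$. By \cref{def:sill-obs-equiv/main-equiv:2}, to establish $\jtrelc{\sobssim{\mc{S}_2}}{\Gamma}{\mc{C}}{\mc{D}}{\Delta}$ it suffices to show that the observations on $\Psi$ are related by $\commsim[\leqslant_2]$. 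By hypothesis~(1) applied to this experiment, there is some $\Psi' \supseteq \Psi$ with $(\jcfgt{\Lambda}{\ctxh{\mc{E}}{\Gamma}{\Delta}}{\Xi}, \Psi') \in \mc{X}_1$, using the \emph{same} context $\mc{E}$.

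Second, I would apply the assumption $\jtrelc{\sobssim{\mc{S}_1}}{\Gamma}{\mc{C}}{\mc{D}}{\Delta}$ to this $\mc{S}_1$-experiment to obtain
\[
  \obsbr{\jcfgt{\Lambda}{\ctxh[\mc{C}]{\mc{E}}{\Gamma}{\Delta}}{\Xi}}_{\Psi'} \commsim[\leqslant_1] \obsbr{\jcfgt{\Lambda}{\ctxh[\mc{D}]{\mc{E}}{\Gamma}{\Delta}}{\Xi}}_{\Psi'},
\]
and then restrict this relation from $\Psi'$ down to $\Psi$. This step rests on two facts already established: the observed communication on any single channel $c$ is assigned channel-by-channel by \cref{def:sill-obs-equiv/observ-comm:5}, so it is independent of the observation set from which it is read; and simulation on tuples is defined component-wise. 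Since $\Psi \subseteq \Psi'$, the relation on the $\Psi'$-tuples yields the relation on each component indexed by a channel of $\Psi$, hence the relation on the $\Psi$-tuples.

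Third, I would weaken the modulus from $\leqslant_1$ to $\leqslant_2$. By \cref{prop:sill-obs-equiv/observ-comm-equiv:3}, the map $\commsim[({-})]$ is monotone, so hypothesis~(2), ${\leqslant_1} \subseteq {\leqslant_2}$, gives ${\commsim[\leqslant_1]} \subseteq {\commsim[\leqslant_2]}$ at every type, and therefore component-wise on tuples. Composing the two preceding steps yields
\[
  \obsbr{\jcfgt{\Lambda}{\ctxh[\mc{C}]{\mc{E}}{\Gamma}{\Delta}}{\Xi}}_{\Psi} \commsim[\leqslant_2] \obsbr{\jcfgt{\Lambda}{\ctxh[\mc{D}]{\mc{E}}{\Gamma}{\Delta}}{\Xi}}_{\Psi}.
\]
As the $\mc{S}_2$-experiment was arbitrary, this is precisely $\jtrelc{\sobssim{\mc{S}_2}}{\Gamma}{\mc{C}}{\mc{D}}{\Delta}$.

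I do not expect a genuine obstacle: the argument is a direct definition chase in which hypothesis~(1) supplies a larger observation set and hypothesis~(2) supplies the coarser modulus. The only point requiring a little care is the restriction step, where one must check that dropping channels from the observation set is harmless; this is immediate once one notes that observations are computed per-channel and that tuple simulation is component-wise, so the single subtlety is bookkeeping rather than mathematical content.
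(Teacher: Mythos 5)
Your proposal is correct and follows essentially the same route as the paper's proof: fix an $\mc{X}_2$-experiment, use hypothesis~(1) to pass to an $\mc{X}_1$-experiment over the larger set $\Psi'$, apply the $\mc{S}_1$-simulation, restrict from $\Psi'$ to $\Psi$, and finish with the monotonicity of $\commsim[({-})]$ from \cref{prop:sill-obs-equiv/observ-comm-equiv:3}. The extra justification you give for the restriction step (per-channel observations and component-wise tuple simulation) is implicit in the paper's one-line ``Because $\Psi \subseteq \Psi'$'' but otherwise the arguments coincide.
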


\begin{proof}
  Assume that $\jtrelc{\sobssim{\mc{S}_1}}{\Gamma}{\mc{C}}{\mc{D}}{\Delta}$.
  Let $(\jcfgt{\Lambda}{\ctxh{\mc{E}}{\Gamma}{\Delta}}{\Xi}, \Psi) \in \mc{X}_2$ be arbitrary.
  We must show that
  \[
    \obsbr{\jcfgt{\Lambda}{\ctxh[\mc{C}]{\mc{E}}{\Gamma}{\Delta}}{\Xi}}_{\Psi} \commsim[{\leqslant_2}] \obsbr{\jcfgt{\Lambda}{\ctxh[\mc{D}]{\mc{E}}{\Gamma}{\Delta}}{\Xi}}_{\Psi}.
  \]
  By assumption, $(\jcfgt{\Lambda}{\ctxh{\mc{E}}{\Gamma}{\Delta}}{\Xi}, \Psi') \in \mc{X}_1$ for some $\Psi' \supseteq \Psi$.
  This implies that
  \[
    \obsbr{\jcfgt{\Lambda}{\ctxh[\mc{C}]{\mc{E}}{\Gamma}{\Delta}}{\Xi}}_{\Psi'} \commsim[{\leqslant_1}] \obsbr{\jcfgt{\Lambda}{\ctxh[\mc{D}]{\mc{E}}{\Gamma}{\Delta}}{\Xi}}_{\Psi'}.
  \]
  Because $\Psi \subseteq \Psi'$,
  \[
    \obsbr{\jcfgt{\Lambda}{\ctxh[\mc{C}]{\mc{E}}{\Gamma}{\Delta}}{\Xi}}_{\Psi} \commsim[{\leqslant_1}] \obsbr{\jcfgt{\Lambda}{\ctxh[\mc{D}]{\mc{E}}{\Gamma}{\Delta}}{\Xi}}_{\Psi}.
  \]
  By monotonicity (\cref{prop:sill-obs-equiv/observ-comm-equiv:3}),
  \[
    \obsbr{\jcfgt{\Lambda}{\ctxh[\mc{C}]{\mc{E}}{\Gamma}{\Delta}}{\Xi}}_{\Psi} \commsim[{\leqslant_2}] \obsbr{\jcfgt{\Lambda}{\ctxh[\mc{D}]{\mc{E}}{\Gamma}{\Delta}}{\Xi}}_{\Psi}.\qedhere
  \]
\end{proof}

\begin{corollary}
  \label{prop:sill-obs-equiv/total-observ:1}
  Let $T = (\mc{X}^T, {=})$ be the total observation system (\cref{def:sill-obs-equiv/main-equiv:4}).
  For all observation systems $\mc{S} = (\mc{X}, {\leqslant})$,
  \begin{itemize}
  \item if $\jtrelc{\sobssim{T}}{\Gamma}{\mc{C}}{\mc{D}}{\Delta}$, then $\jtrelc{\sobssim{\mc{S}}}{\Gamma}{\mc{C}}{\mc{D}}{\Delta}$;
  \item if $\jtrelc{\sobssim{(\mc{X}^T, {\leqslant})}}{\Gamma}{\mc{C}}{\mc{D}}{\Delta}$, then $\jtrelc{\sobssim{\mc{S}}}{\Gamma}{\mc{C}}{\mc{D}}{\Delta}$.
  \end{itemize}
\end{corollary}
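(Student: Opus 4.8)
The plan is to obtain both implications as direct instantiations of \cref{prop:sill-obs-equiv/total-observ:2}, so the entire argument reduces to checking its two hypotheses when one of the two observation systems is the total system on the context set $\mc{X}^T$. No new machinery is needed; the only content is the observation that $\mc{X}^T$ observes \emph{all} free channels of each context, together with reflexivity of the preorder $\leqslant$.

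For the first bullet, I would apply \cref{prop:sill-obs-equiv/total-observ:2} with $\mc{S}_1 = T = (\mc{X}^T, {=})$ and $\mc{S}_2 = \mc{S} = (\mc{X}, {\leqslant})$. To verify hypothesis~(1), take an arbitrary experiment $(\jcfgti{\Lambda}{\Iota}{\ctxh{\mc{E}}{\Gamma}{\Delta}}{\Xi}, \Psi) \in \mc{X}$. By \cref{def:sill-obs-equiv/main-equiv:1}, its observed set satisfies $\Psi \subseteq \Lambda\Iota\,\Xi$, and by \cref{def:sill-obs-equiv/main-equiv:4} we have $(\ctxh{\mc{E}}{\Gamma}{\Delta}, \Lambda\Iota\,\Xi) \in \mc{X}^T$; hence $\Psi' = \Lambda\Iota\,\Xi$ is the required superset of $\Psi$ witnessing the same context in $\mc{X}^T$. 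For hypothesis~(2), the relation of $T$ is ${=}$ and that of $\mc{S}$ is $\leqslant$; since $\leqslant$ is assumed to be a preorder and hence reflexive, ${=} \subseteq {\leqslant}$ as needed. \Cref{prop:sill-obs-equiv/total-observ:2} then gives $\jtrelc{\sobssim{T}}{\Gamma}{\mc{C}}{\mc{D}}{\Delta} \implies \jtrelc{\sobssim{\mc{S}}}{\Gamma}{\mc{C}}{\mc{D}}{\Delta}$.

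For the second bullet, I would apply \cref{prop:sill-obs-equiv/total-observ:2} with $\mc{S}_1 = (\mc{X}^T, {\leqslant})$ and $\mc{S}_2 = \mc{S} = (\mc{X}, {\leqslant})$. Hypothesis~(1) concerns only the context sets $\mc{X}^T$ and $\mc{X}$, not the relations, so it holds by exactly the computation above. Hypothesis~(2) is trivial here, since ${\leqslant}_1 = {\leqslant}_2 = {\leqslant}$. This yields the second implication. There is no genuine obstacle: the whole proof is a bookkeeping check that the total observation system dominates every other experiment on a shared context (because it observes every free channel) and that equality refines any preorder; I would simply write it as two applications of \cref{prop:sill-obs-equiv/total-observ:2} with the hypotheses verified as above.
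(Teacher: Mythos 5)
Your proposal is correct and matches the paper's intent exactly: the paper states this result as a corollary of \cref{prop:sill-obs-equiv/total-observ:2} with no written proof, the implicit argument being precisely the two instantiations you give (total system as $\mc{S}_1$, observed set $\Psi' = \Lambda\Iota\,\Xi$ witnessing hypothesis~(1), and ${=} \subseteq {\leqslant}$ by reflexivity, respectively ${\leqslant} \subseteq {\leqslant}$ trivially, for hypothesis~(2)). Nothing is missing.
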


Total observational equivalence is closed under multiset rewriting:

\begin{proposition}
  \label{prop:sill-obs-equiv/observ-comm:8}
  If $\jcfgti{\Gamma}{\mc{C}}{\Delta}$ and $\mc{C} \msstep \mc{C}'$, then $\jtrelc{\strobsc}{\Gamma}{\mc{C}}{\mc{C}'}{\Delta}$.
\end{proposition}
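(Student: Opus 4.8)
The plan is to unfold $\strobsc = \sobseq{T}$ (\cref{def:sill-obs-equiv/main-equiv:6,def:sill-obs-equiv/main-equiv:4}) and show that, for every experiment $(\ctxh{\mc{E}}{\Gamma}{\Delta}, \Psi) \in \mc{X}^T_{\Gamma \vdash \Delta}$ with $\Psi = \Lambda\,\Iota\,\Xi$, the pluggings $\ctxh[\mc{C}]{\mc{E}}{\Gamma}{\Delta}$ and $\ctxh[\mc{C}']{\mc{E}}{\Gamma}{\Delta}$ induce equal observed communications on $\Psi$. Since the total observation system uses ${=}$, \cref{lemma:sill-obs-equiv/observ-comm:3} identifies $\commeq[=]$ with literal equality, so it suffices to prove
\[
  \obsbr{\jcfgt{\Lambda}{\ctxh[\mc{C}]{\mc{E}}{\Gamma}{\Delta}}{\Xi}}_\Psi = \obsbr{\jcfgt{\Lambda}{\ctxh[\mc{C}']{\mc{E}}{\Gamma}{\Delta}}{\Xi}}_\Psi .
\]
Preservation (\cref{prop:sill-background:1}) gives $\jcfgt{\Gamma}{\mc{C}'}{\Delta}$, so both pluggings are well-typed and the same experiment applies to each.

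First I would lift the step into the context: because multiset rewriting is local, $\mc{C} \msstep \mc{C}'$ entails $\ctxh[\mc{C}]{\mc{E}}{\Gamma}{\Delta} \msstep \ctxh[\mc{C}']{\mc{E}}{\Gamma}{\Delta}$ via the very same rule instance (its active multiset lies inside $\mc{C}$). Taking any fair execution $T'$ of $\ctxh[\mc{C}']{\mc{E}}{\Gamma}{\Delta}$ and prepending this step yields, by \cref{cor:ssos-fairness/prop-fair-trac:2} (using \cref{prop:sill-background-dyn-prop-typed-config:1,prop:ssos-fairness/prop-fair-trac:1} to secure interference-freedom), a fair execution $T$ of $\ctxh[\mc{C}]{\mc{E}}{\Gamma}{\Delta}$. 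By \cref{theorem:main:1} the observations in \cref{def:sill-obs-equiv/observ-comm:5} may be computed from $T$ and $T'$ respectively, so it remains to compare $T$ and $T'$ through \cref{prop:sill-obs-equiv/observ-comm:5}: I would show that $T$ and $T'$ have the same observable message facts from each $c \in \check\Psi$.

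The supports satisfy $\mc{T}' \subseteq \mc{T}$, and $\mc{T} \setminus \mc{T}'$ is contained in the active multiset of the prepended step. A case analysis on the rules of \cref{sec:sill-background:typing-mult-rewr} shows that this active multiset consists of process facts together with \emph{at most one} message fact: the sending (right) rules consume only a process, while the receiving (left) and forwarding rules consume a single message $\jmsg{a}{m}$ alongside a process receiving on $a$. Process facts are never observable, so the only possible discrepancy is $\jmsg{a}{m}$. Since $\mc{C} \msstep \mc{C}'$ is a step of $\mc{C}$ alone, both $\jmsg{a}{m}$ and its receiving process lie in $\mc{C}$ and share the carrier channel $a$; by \cref{lemma:sill-background-properties-traces:5}, $a$ is an internal channel of $\mc{C}$, hence (by the disjointness of internal channels in \cref{def:sill-obs-equiv/relat-equiv:1}) $a \notin \check\Psi$.

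The crux — and the step I expect to be the main obstacle — is showing that $\jmsg{a}{m}$ is nonetheless unobservable from every $c \in \check\Psi$, i.e.\ that $a$ never appears on the carrier/continuation chain $c = d_0, d_1, \dots$ traced out by the derivation of $\jtoc{T}{v_c}{c}{A}$. I would prove this by an induction establishing that any channel internal to $\mc{C}$ which occurs as a continuation $\contcn(\jmsg{c'}{m'})$ forces its carrier $c'$ to be internal to $\mc{C}$ as well: the send that generated the fresh continuation and the matching receive both take place on $\mc{C}$'s side, so both endpoints of $c'$ lie in $\mc{C}$. Consequently the entire chain terminating at an internal channel of $\mc{C}$ is composed of channels internal to $\mc{C}$, whereas the chain under consideration begins at $c \in \check\Psi$, which is \emph{not} internal to $\mc{C}$ — a contradiction. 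Making ``lies on $\mc{C}$'s side throughout the trace'' precise is the delicate point, and I would discharge it by carrying the $\mc{C}/\mc{E}$ decomposition along $T$ exactly as in the proof of \cref{prop:sill-obs-equiv/observ-comm:11}. Granting this, $\jmsg{a}{m}$ is absent from every observable chain, so $T$ and $T'$ agree on observable message facts from each observed channel; \cref{prop:sill-obs-equiv/observ-comm:5} then yields the displayed equality, and the desired $\jtrelc{\strobsc}{\Gamma}{\mc{C}}{\mc{C}'}{\Delta}$ follows.
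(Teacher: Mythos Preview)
Your overall architecture matches the paper's: lift the step into the context, relate a fair execution $T$ of $\ctxh[\mc{C}]{\mc{E}}{\Gamma}{\Delta}$ to one $T'$ of $\ctxh[\mc{C}']{\mc{E}}{\Gamma}{\Delta}$ (the paper takes the tail, you prepend; these are equivalent), note that the only possible discrepancy is a single consumed message fact $\jmsg{c}{m}$ with carrier $a$ internal to $\mc{C}$, and then argue that this fact is not observable from any channel in $\Psi$.

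Where you diverge is the crux. You propose to propagate ``internal to $\mc{C}$'' backward along the observation chain, maintaining a $\mc{C}/\mc{E}$ decomposition throughout the trace. The paper's argument is simpler and sidesteps that bookkeeping entirely. It observes that after the first step, the channel $a$ is \emph{gone}: $a$ was internal to $\mc{C}$, the rule consumes both the message fact and the receiving process fact (the only two facts mentioning $a$), and $a$ does not appear on the right-hand side of any receiving rule. Hence $a \notin \freecn(\ctxh[\mc{C}']{\mc{E}}{\Gamma}{\Delta})$, and by freshness $a$ cannot reappear later. Now if $\jmsg{c}{m}$ were observable at height $h>0$, there would be a message fact $\jmsg{b}{m'}$ at height $h-1$ with $a \in \freecn(\jmsg{b}{m'})$; but any such fact either lies in $\mc{C}$ (and then survives to $\mc{C}'$, contradicting $a \notin \freecn(\mc{C}')$) or appears later in the trace (where $a$ cannot be free). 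The height-$0$ case is exactly your \cref{lemma:sill-background-properties-traces:5} argument.

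Your backward-propagation plan is plausible but heavier, and your appeal to \cref{prop:sill-obs-equiv/observ-comm:11} is not quite right: that proof carries the decomposition only because $\mc{C}$ and $\mc{D}$ share \emph{no} channels, so no message ever crosses the boundary. Here $\mc{C}$ and $\mc{E}$ share the interface $\Gamma,\Delta$, messages do cross, and the interface itself evolves (each send on an interface channel replaces it with its continuation). Your inductive claim ``$d$ internal to $\mc{C}$ and $d = \contcn(\jmsg{c'}{m'})$ forces $c'$ internal to $\mc{C}$'' fails when $c'$ is an interface channel: a send on the interface yields a continuation that becomes the \emph{new} interface, not an internal channel. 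Making this precise would require tracking the evolving interface, which is exactly the bookkeeping the paper's ``$a$ vanishes'' argument avoids.
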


\begin{proof}
  Consider an arbitrary experiment $(\jcfgti{\Phi}{\Iota}{\ctxh{\mc{E}}{\Gamma}{\Delta}}{\Lambda}, \Phi\Iota\Lambda)$, and let $T$ be a fair trace of $\ctxh[\mc{C}]{\mc{E}}{\Gamma}{\Delta}$.
  By fairness and preservation, we can assume without loss of generality that the first step of $T$ is $\ctxh[\mc{C}]{\mc{E}}{\Gamma}{\Delta} \msstep \ctxh[\mc{C}']{\mc{E}}{\Gamma}{\Delta}$.
  If every message fact in $\mc{C}$ appears in $\mc{C}'$, then we are done.
  Indeed, the tail $T'$ of $T$ is a fair trace of $\ctxh[\mc{C}']{\mc{E}}{\Gamma}{\Delta}$ by the fair tail property (\cref{prop:main:4}).
  Both $T$ and $T'$ have the same sets of message facts, so they induce the same observations for each channel in $\Phi, \Iota, \Delta$.

  Now assume that $\mc{C} \msstep \mc{C}'$ consumes a message fact, \ie, that there is some message fact $\jmsg{c}{m} \in \mc{C}$ that is not in $\mc{C}'$.
  We must show that it is not observable from any channel $d$ in $\Phi, \Iota, \Lambda$ in $T$, \ie, that it does not appear in any derivation of $\jtoc{T}{u}{d}{A}$ for $d$ in $\Phi, \Iota, \Lambda$.

  A case analysis on the rules defining $\jtoc{T}{u}{d}{A}$ shows that if $\jtoc{T}{v}{a}{B}$ appears as a premise of a rule, then
  \begin{enumerate}
  \item the rule is due to a $\jmsg{b}{m}$ with $a \in \freecn(\jmsg{b}{m})$, and
  \item the conclusion of the rule is of the form $\jtoc{T}{w}{c}{C}$, where $c = \carrcn(\jmsg{b}{m})$.
  \end{enumerate}

  Suppose to the contrary that $\jmsg{c}{m}$ is observable from some $d$ in $\Phi, \Iota, \Lambda$ in $T$.
  We proceed by induction on the height $h$ of $\jmsg{c}{m}$ in the derivation of observed communication.
  Set $a = \carrcn(\jmsg{c}{m})$.
  \begin{proofcases}
  \item[$h = 0$] Then $\jmsg{c}{m}$ is observable because $\carrcn(\jmsg{c}{m}) = d$ is in $\Phi, \Iota, \Lambda$.
    We have $\jmsg{c}{m} \notin \mc{C}'$ only if $\jmsg{c}{m}$ was in the active portion of the rule used to make the step.
    However, a case analysis on the rule (equivalently, \cite[Lemma~5.9.5]{kavanagh_2021:_commun_based_seman}) implies that $d \in \freecn(\jproc{b}{P})$ for some $\jproc{b}{P} \in \mc{C}$.
    This implies that $d$ is an internal channel of $\mc{C}$ by \cref{lemma:sill-background-properties-traces:5}, which in turn implies that $d$ is not in $\Phi, \Iota, \Gamma, \Delta, \Lambda$.
    This is a contradiction.
  \item[$h = h' + 1$] Assume the result for $h'$.
    Then $\jmsg{c}{m}$ appears at height $h$ in the derivation, and there is a $\jmsg{b}{m'}$ at height $h'$ in the derivation such that $a \in \freecn(\jmsg{b}{m'})$.
    Because $\jmsg{c}{m} \in \mc{C}$ but $\jmsg{c}{m} \notin \mc{C}'$, we know that $\jmsg{c}{m}$ was in the active multiset of the rule used to make the step $\mc{C} \msstep \mc{C}'$.
    The same argument as the base case implies that $a$ was an internal channel and that it does not appear free on the right side of the rule.
    By preservation, it follows that $a$ is not free in $\ctxh[\mc{C}']{\mc{E}}{\Gamma}{\Delta}$.
    It follows that $a$ cannot appear free in $\jmsg{b}{m'}$, a contradiction.

    To see that $a$ cannot appear free in $\jmsg{b}{m'}$, we consider two cases:
    \begin{proofcases}
    \item[$\jmsg{b}{m'} \in \mc{C}$] A case analysis on the rules shows that we also have $\jmsg{b}{m'} \in \mc{C}'$, a contradiction of $a \notin \freecn(\ctxh[\mc{C}']{\mc{E}}{\Gamma}{\Delta})$.
    \item[$\jmsg{b}{m'} \notin \mc{C}$] Then $\jmsg{b}{m'}$ must appear in some $\mc{C}''$ such that $\ctxh[\mc{C}']{\mc{E}}{\Gamma}{\Delta} \mssteps \mc{C}''$.
      But each free channel in $\jmsg{b}{m'} \in \mc{C}''$ is either already in $\ctxh[\mc{C}']{\mc{E}}{\Gamma}{\Delta}$, or it is freshly generated, so not in $\ctxh[\mc{C}]{\mc{E}}{\Gamma}{\Delta}$ or $\ctxh[\mc{C}']{\mc{E}}{\Gamma}{\Delta}$.
      Both of these possibilities contradict the assumption that $a \in \freecn(\jmsg{b}{m'})$.\qedhere
    \end{proofcases}
  \end{proofcases}
\end{proof}

The following proposition shows that forwarding has no observable effect on communications, and that it acts only to rename channels:

\begin{proposition}
  \label{prop:sill-obs-equiv/observ-comm-equiv:2}
  For all $\jcfgt{\Gamma}{\mc{C}}{\Delta, c : C}$ and $\jcfgt{\Gamma, c : A}{\mc{A}}{\Delta}$, respectively,
  \begin{enumerate}
  \item if $C$ is positive, then $\jtrelc{\strobsc}{\Gamma}{\subst{d}{c}{\mc{C}}}{\mc{C}, \jproc{d}{\tFwdP{c}{d}}}{\Delta, d : C}$;
  \item if $A$ is positive, then $\jtrelc{\strobsc}{\Gamma, d : A}{\subst{d}{c}{\mc{A}}}{\jproc{c}{\tFwdP{d}{c}}, \mc{A}}{\Delta}$;
  \item if $A$ is negative, then $\jtrelc{\strobsc}{\Gamma, d : A}{\subst{d}{c}{\mc{A}}}{\jproc{c}{\tFwdN{d}{c}}, \mc{A}}{\Delta}$;
  \item if $C$ is negative, then $\jtrelc{\strobsc}{\Gamma}{\subst{d}{c}{\mc{C}}}{\mc{C}, \jproc{d}{\tFwdN{c}{d}}}{\Delta, d : C}$.
  \end{enumerate}
\end{proposition}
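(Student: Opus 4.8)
The plan is to prove all four claims by subjecting the two configurations to an arbitrary total experiment and showing that the resulting tuples of observed communications are \emph{equal}. Since $\strobsc = \sobseq{T}$ uses the total observation system $(\mc{X}^T, {=})$ and $\commeq[=]$ is literal equality by \cref{lemma:sill-obs-equiv/observ-comm:3}, this suffices. The four statements fall into two dual pairs according to the polarity of the forwarded channel: cases (1) and (2) use the positive forwarder and rule \cref{eq:sill:msr-fwdp}, while cases (3) and (4) use the negative forwarder and rule \cref{eq:sill:msr-fwdn}; within each pair the only difference is whether the forwarder sits on the provided or the used side. I would therefore give the argument for case (1) in detail and obtain the other three by the evident symmetry and duality.

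For case (1), fix an experiment $(\jcfgti{\Lambda}{\Iota}{\ctxh{\mc{E}}{\Gamma}{\Delta, d : C}}{\Xi}, \Lambda\Iota\Xi) \in \mc{X}^T$ and set $\mc{G}_1 = \ctxh[\subst{d}{c}{\mc{C}}]{\mc{E}}{\Gamma}{\Delta, d : C}$ and $\mc{G}_2 = \ctxh[\mc{C}, \jproc{d}{\tFwdP{c}{d}}]{\mc{E}}{\Gamma}{\Delta, d : C}$. The fixed set of observed channels $\Lambda\Iota\Xi$ contains $d$ (a hole channel, internal to the filled composite) but not the channel $c$, which arises only in $\mc{G}_2$. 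The structural crux is that the process fact $\jproc{d}{\tFwdP{c}{d}}$ is ephemeral and fires at most once: by \cref{eq:sill:msr-fwdp} it consumes the unique first message $\jmsg{c}{m^+}$ carried on $c$ and emits $\jmsg{d}{\subst{d}{c}{m^+}}$, which renames only the carrier $c$ to $d$ while leaving the label, value, and continuation channel of $m^+$ untouched. No forwarder survives, so every subsequent stream element flows directly on the fresh continuation channel generated by $\mc{C}$, shared between $\mc{C}$'s continuation and $\mc{E}$ exactly as in $\mc{G}_1$.

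I would make this precise with a simulation invariant on reachable configurations: a state of $\mc{G}_2$ matches the corresponding state of $\mc{G}_1$ up to (i) the possible presence of one transient message $\jmsg{c}{m^+}$ on the unobserved channel $c$ together with the forwarder, and (ii) a refreshing substitution that aligns the fresh continuation channels generated on the two sides as the stream unfolds. The initial states satisfy the invariant, and I would check that each rule instance preserves it, using fairness to guarantee that the applicable forwarder rule eventually fires so that the relayed message reaches $d$ --- without fairness the message would linger on $c$ and $d$ would wrongly observe $\bot$, as in the discussion following \cref{theorem:main:1}. From the invariant I extract corresponding fair executions $T_1$ of $\mc{G}_1$ and $T_2$ of $\mc{G}_2$.

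Two observations then close the argument. First, a message carried on $c$ is observable only from $c$: the derivation of $\jtoc{T_2}{v}{e}{B}$ for any observed channel $e$ descends through carrier channels, and $c = \carrcn(\jmsg{c}{m^+})$ is not observed (\cref{cor:sill-background-dyn-prop-typed-config:1} ensures $d$ carries exactly one message, so the observation at $d$ unambiguously selects the relayed message). Second, observed communications are trees whose grammar never mentions channel names, so the continuation-channel renaming recorded by the refreshing substitution does not alter any observed communication. Hence, after refreshing, $T_1$ and $T_2$ have the same sets of message facts observable from each observed channel, and \cref{prop:sill-obs-equiv/observ-comm:5} gives equality of the observed tuples; independence of the choice of fair execution is \cref{theorem:main:1}. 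I expect the main obstacle to be stating and preserving the simulation invariant against the asynchronous interleaving of $\mc{C}$, the forwarder, and $\mc{E}$ while carrying the growing refreshing substitution; a secondary, routine, difficulty is checking that the relay leaves everything but the carrier channel unchanged uniformly for every shape of $m^+$, and then transposing the whole argument to the negative forwarder of \cref{eq:sill:msr-fwdn} for cases (3) and (4), where $\mc{E}$ is the sender and the forwarder relays from $d$ back to the internal channel $c$.
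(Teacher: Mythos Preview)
Your proposal is correct and captures the essential mechanism: the forwarder fires at most once, renames only the carrier, and thereafter the two executions coincide up to a refreshing substitution. Your case split on whether a message ever arrives on $c$ (implicit in your invariant) and your use of fairness to ensure the forwarder eventually fires are both on target.

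The paper takes a shorter route that avoids tracking a simulation invariant through the entire execution. Its key observation is that once the forwarder has fired, the two filled configurations become \emph{literally the same} configuration (up to choice of fresh constants): $\ctxh[\subst{d}{c}{\mc{C}}]{\mc{E}}{}{}$ and $\ctxh[\mc{C},\jproc{d}{\tFwdP{c}{d}}]{\mc{E}}{}{}$ both step to $\ctxh[\jmsg{d}{\subst{d}{c}{m}}]{\mc{E}'}{}{}$. It then invokes \cref{prop:sill-obs-equiv/observ-comm:8} (closure of $\strobsc$ under stepping) and reflexivity to conclude, handling the no-message case separately by noting that the traces are identical modulo the inert forwarder. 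Your approach instead runs the two executions in parallel and compares observable message facts at the end via \cref{prop:sill-obs-equiv/observ-comm:5}. Both are valid; the paper's argument is shorter because it reduces to an already-established closure lemma rather than building the correspondence from scratch, while yours is more self-contained and makes explicit the role of the refreshing substitution and of $c$ being unobserved.
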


\begin{proof}
  Assume first that $\jcfgt{\Gamma}{\mc{C}}{\Delta, c : C}$ and that $C$ is positive.
  Let $(\ctxh{\mc{E}}{\Gamma}{\Delta, c : C}, D)$ be an arbitrary experiment.
  Then $\ctxh[\mc{C}]{\mc{E}}{\Gamma}{\Delta, c : C} \mssteps \ctxh[ \jmsg{c}{m}]{\mc{E}'}{\Gamma'}{\Delta', c : C}$ for some $\mc{E}'$ if and only if
  \[
    \ctxh[\subst{d}{c}{\mc{C}}]{\left(\subst{d}{c}{\mc{E}}\right)}{\Gamma}{\Delta, d : C} \mssteps \ctxh[\jmsg{d}{\subst{d}{c}{m}}]{(\subst{d}{c}{\mc{E}'})}{\Gamma'}{\Delta', d : C},
  \]
  and an induction shows that this holds if and only if
  \[
    \ctxh[\mc{C}, \jproc{d}{\tFwdP{c}{d}}]{\left(\subst{d}{c}{\mc{E}}\right)}{\Gamma}{\Delta, d : C} \mssteps \ctxh[\jmsg{c}{m}, \jproc{d}{\tFwdP{c}{d}}]{\left(\subst{d}{c}{\mc{E}'}\right)}{\Gamma'}{\Delta', d : C}.
  \]
  But this last multiset in turn steps to $\ctxh[\jmsg{d}{\subst{d}{c}{m}}]{(\subst{d}{c}{\mc{E}'})}{\Gamma'}{\Delta', d : C}$ by \cref{eq:sill:msr-fwdp}:
  \[
    \ctxh[\jmsg{c}{m}, \jproc{d}{\tFwdP{c}{d}}]{\left(\subst{d}{c}{\mc{E}'}\right)}{\Gamma'}{\Delta', d : C} \msstep \ctxh[\jmsg{d}{\subst{d}{c}{m}}]{(\subst{d}{c}{\mc{E}'})}{\Gamma'}{\Delta', d : C}.
  \]
  So if this collection of logical equivalences hold, we are done by \cref{prop:sill-obs-equiv/observ-comm:8} and the fact that $\strobsc$ is reflexive.

  If $\jmsg{c}{m}$ appears in no fair trace of $\ctxh[\mc{C}]{\mc{E}}{\Gamma}{\Delta, c : C}$, then an induction shows that every trace of $\ctxh[\mc{C}]{\mc{E}}{\Gamma}{\Delta, c : C}$ is a trace of $\ctxh[\mc{C}, \jproc{d}{\tFwdP{c}{d}}]{\left(\subst{d}{c}{\mc{E}}\right)}{\Gamma}{\Delta, d : C}$ (modulo the presence of the forwarding process), and that they have the same sets of message facts.
  So they induce the same observations on all channels and we are done.

  The remaining cases are analogous.
\end{proof}

\subsection{Internal Observations for Configurations}
\label{sec:sill-obs-equiv:intern-observ}

\Textcite[79]{atkey_2017:_obser_commun_seman_class_proces} states without proof that his internal-style observational equivalence is a congruence.
Internal observational equivalence is not a congruence in our setting because of value transmission and our use of \(\unirel\) to relate functional values, \ie, for the same reasons internal and external observational simulation do not coincide (see the remarks preceding \cref{prop:sill-obs-equiv-main-equiv:1}).
We conjecture that replacing \(\unirel\) by a suitable refinement would cause internal observational equivalence to be a congruence.

\begin{proposition}
  \label{prop:sill-obs-equiv/intern-observ:2}
  Internal observational simulation (equivalence) is not a precongruence (congruence).
\end{proposition}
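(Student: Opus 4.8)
The plan is to reuse the counterexample from the proof of \cref{prop:sill-obs-equiv-main-equiv:1}, reframing the external experiment used there as a question about contextuality of the internal simulation. By \cref{prop:sill-obs-equiv/bisimulations:2}, $\sobssim{I}$ is a precongruence if and only if it is contextual in the sense of \cref{def:sill-obs-equiv/bisimulations:2}, so it suffices to exhibit internally similar configurations that cease to be internally similar once placed in a common context. I would take exactly the processes of \cref{prop:sill-obs-equiv-main-equiv:1}: with $\tau = \rho \to \rho$, let $P = \tSendV{c}{(\lambda x : \tau.x)}{\tClose c}$ and $Q = \tSendV{c}{(\lambda x : \tau.\tFix{y}{y})}{\tClose c}$, both of type $\Tand{(\tau \to \tau)}{\Tu}$, and let $R = \tRecvV{x}{c}{\tSendV{b}{(x(\lambda z : \rho.z))}{\tFwdP{c}{b}}}$. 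That proof establishes $\jtrelc{\sobssim{I}}{\cdot}{\jproc{c}{P}}{\jproc{c}{Q}}{c : \Tand{(\tau\to\tau)}{\Tu}}$; because $\unirel$ is symmetric, the converse judgment holds by the identical computation, so $\jproc{c}{P}$ and $\jproc{c}{Q}$ are in fact internally observationally equivalent.

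Next I would form the configuration context $\mc{C}$ of \cref{prop:sill-obs-equiv-main-equiv:1}, namely the hole (with interface $(\cdot, c : \Tand{(\tau\to\tau)}{\Tu})$) composed with $\jproc{b}{R}$ along $c$. Plugging either process in yields a configuration that uses nothing and provides $b : \Tand{\tau}{\Tu}$, so $c$ becomes internal. The crucial observation is that the distinguishing external experiment of \cref{prop:sill-obs-equiv-main-equiv:1}---composing with $R$ and watching its exterior channel $b$---has now been absorbed into the context, so that observing $b$ is an observation on the \emph{interface} of the plugged configuration. This is exactly what the internal observation system of \cref{def:sill-obs-equiv/main-equiv:5} measures: applying the trivial internal experiment given by the hole context observing $\Gamma\Delta = b$ to $\ctxh[\jproc{c}{P}]{\mc{C}}{}{}$ and $\ctxh[\jproc{c}{Q}]{\mc{C}}{}{}$ reproduces the computations of \cref{prop:sill-obs-equiv-main-equiv:1}. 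In the $P$ case $R$ evaluates $(\lambda x.x)(\lambda z : \rho.z)$ to $\lambda z : \rho.z$ and transmits it, giving the observation $(b : (\cval{(\lambda z : \rho.z)}, \cclose))$; in the $Q$ case the term $(\lambda x.\tFix{y}{y})(\lambda z : \rho.z)$ diverges, so no message is ever sent on $b$ and the observation is $(b : \bot)$.

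To close the argument I would appeal to the syntax-directed shape of the rules defining $\commsim$: no rule has a conclusion of the form $(\cval f, v) \commsim \bot$, since \getrn{CS-tand} forces the right-hand communication to begin with a $\cval{\cdot}$ node while \getrn{CS-bot} admits $\bot$ only on the left. Hence $(\cval{(\lambda z : \rho.z)}, \cclose) \ncommsim[\unirel] \bot$, so the plugged configurations are not internally observationally similar, contradicting contextuality. This shows $\sobssim{I}$ is not a precongruence. The same witness refutes congruence of $\sobseq{I}$: a congruence must be contextual, yet the plugged observations $(b : (\cval{(\lambda z : \rho.z)}, \cclose))$ and $(b : \bot)$ fail to be related by $\commsim[\unirel]$ in either direction.

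The main point requiring care---rather than genuine difficulty---is the second paragraph's claim that absorbing $R$ into the context converts the external distinguishing power of \cref{prop:sill-obs-equiv-main-equiv:1} into an internal one. This rests entirely on the bookkeeping that, after composition, the interface is $(\cdot, b : \Tand{\tau}{\Tu})$ and that $\mc{X}^I_{\cdot \vdash b : \Tand{\tau}{\Tu}}$ contains the bare hole context observing $b$; the failure of the $\Tand{}{}$ case of $\commsim$ was already the heart of \cref{prop:sill-obs-equiv-main-equiv:1}, and the only new ingredient here is the contextuality framing together with the symmetry of $\unirel$ that upgrades the simulation to an equivalence.
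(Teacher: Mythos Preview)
Your proposal is correct and follows essentially the same approach as the paper: both reuse the $P$, $Q$, $R$ counterexample from \cref{prop:sill-obs-equiv-main-equiv:1}, form the same context $\mc{C}$ by composing the hole with $\jproc{b}{R}$, and observe that the trivial internal experiment on $b$ distinguishes the plugged configurations via $(\cval{(\lambda z:\rho.z)},\cclose)$ versus $\bot$. The only cosmetic difference is that the paper dispatches the equivalence case by invoking \cref{prop:sill-obs-equiv/observ-comm-equiv:10}, whereas you argue the two directions of $\commsim[\unirel]$ fail directly.
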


\begin{proof}
  We revisit the counter-example used to prove \cref{prop:sill-obs-equiv-main-equiv:1}.
  Let the processes $P$, $Q$, and \(R\) be
  \begin{align*}
    \jtypem{\cdot}{\cdot}{&\tSendV{c}{(\lambda x : \tau.x)}{\tClose c}}{c}{\Tand{(\tau \to \tau)}{\Tu}},\\
    \jtypem{\cdot}{\cdot}{&\tSendV{c}{(\lambda x : \tau. \tFix{y}{y})}{\tClose c}}{c}{\Tand{(\tau \to \tau)}{\Tu}}\\
    \jtypem{\cdot}{c : \Tand{(\tau \to \tau)}{\Tu}}{&\tRecvV{x}{c}{\tSendV{b}{(x(\lambda z : \rho.z))}{\tFwdP{c}{b}}}}{b}{\Tand{\tau}{\Tu}}
  \end{align*}
  respectively, and set $\ctxh{\mc{C}}{}{c : \Tand{(\tau \to \tau)}{\Tu}} = \ctxh{}{}{c : \Tand{(\tau \to \tau)}{\Tu}}, \jproc{b}{R}$.
  If internal observational simulation were a precongruence, then
  \[
    \obsbr{\jcfgt{\Lambda}{\ctxh[{\ctxh[\jproc{c}{P}]{\mc{C}}{\cdot}{c : \Tand{(\tau \to \tau)}{\Tu}}}]{\mc{E}}{}{b : \Tand{\tau}{\Tu}}}{\Xi}}_b \commsim[\unirel] \obsbr{\jcfgt{\Lambda}{\ctxh[{\ctxh[\jproc{c}{Q}]{\mc{C}}{\cdot}{c : \Tand{(\tau \to \tau)}{\Tu}}}]{\mc{E}}{}{b : \Tand{\tau}{\Tu}}}{\Xi}}_b
  \]
  for all experiments $(\jcfgt{\Lambda}{\ctxh{\mc{E}}{}{b : \Tand{(\tau \to \tau)}{\Tu}}}{\Xi}, b) \in \mc{X}^I$.
  However, by the proof of \cref{prop:sill-obs-equiv-main-equiv:1}, the experiment $(\ctxh{}{}{b : \Tand{\tau}{\Tu}}, b) \in \mc{X}^I$ can differentiate $\ctxh[\jproc{c}{P}]{\mc{C}}{\cdot}{c : \Tand{(\tau \to \tau)}{\Tu}}$ and $\ctxh[\jproc{c}{Q}]{\mc{C}}{\cdot}{c : \Tand{(\tau \to \tau)}{\Tu}}$.
  So $\iocssim$ is not a precongruence.
  It follows from \cref{prop:sill-obs-equiv/observ-comm-equiv:10} that $\icommeq$ is not a congruence.
\end{proof}

Internal-style precongruences imply their total counterparts:

\begin{proposition}
  \label{prop:sill-obs-equiv/intern-observ:1}
  Let $\mc{X}^I$ and $\mc{X}^T$ be the internal and total observation systems given by \cref{def:sill-obs-equiv/main-equiv:5} and \cref{def:sill-obs-equiv/main-equiv:4}, respectively.
  Then for all $\leqslant$, if $\jtrelc{\sobsprec{(\mc{X}^I, {\leqslant})}}{\Gamma}{\mc{C}}{\mc{D}}{\Delta}$, then $\jtrelc{\sobsprec{(\mc{X}^T, {\leqslant})}}{\Gamma}{\mc{C}}{\mc{D}}{\Delta}$.
\end{proposition}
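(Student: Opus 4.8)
The plan is to first collapse the precongruence on the right-hand side to an ordinary observational simulation, and then to establish that simulation channel-by-channel by absorbing a suitable part of each experiment into the configuration. First I would observe that $\mc{X}^T$ is closed under composition with contexts: composing a total experiment with a further configuration context only enlarges the set of observed channels. Hence by \cref{prop:sill-obs-equiv/main-equiv:2} the relation $\sobssim{(\mc{X}^T, {\leqslant})}$ is already a precongruence, so $\sobsprec{(\mc{X}^T, {\leqslant})} = \sobssim{(\mc{X}^T, {\leqslant})}$. It therefore suffices to prove that $\jtrelc{\sobsprec{(\mc{X}^I, {\leqslant})}}{\Gamma}{\mc{C}}{\mc{D}}{\Delta}$ implies $\jtrelc{\sobssim{(\mc{X}^T, {\leqslant})}}{\Gamma}{\mc{C}}{\mc{D}}{\Delta}$. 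Fixing a total experiment $(\jcfgti{\Lambda}{\Iota}{\ctxh{\mc{E}}{\Gamma}{\Delta}}{\Xi}, \Lambda\Iota\Xi)$ and recalling that $\commsim[\leqslant]$ acts component-wise on tuples, the goal reduces to showing, for each free channel $c$ of the combined configuration $\ctxh[\mc{C}]{\mc{E}}{\Gamma}{\Delta}$, that the communication observed on $c$ agrees up to $\commsim[\leqslant]$ when $\mc{C}$ is replaced by $\mc{D}$.

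The crux is to re-express the combined configuration so that $c$ becomes visible at the interface of a sub-configuration that still contains $\mc{C}$. Using the associativity and partial commutativity of the composition operator \getrn{conf-c} together with the configuration-context machinery of \cref{def:sill-obs-equiv/relat-equiv:1} and the Replacement Property (\cref{lemma:sill-background-properties-traces:9}), I would produce a decomposition $\ctxh[\mc{C}]{\mc{E}}{\Gamma}{\Delta} = \ctxh[\mc{M}_c]{\mc{K}_c}{\Gamma'}{\Delta'}$ in which $\mc{M}_c = \ctxh[\mc{C}]{\mc{G}_c}{\Gamma}{\Delta}$ is a sub-configuration containing $\mc{C}$ whose interface includes $c$, and $\mc{K}_c$ is a configuration context. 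When $c \in \Lambda\Xi$ is external this is immediate: take $\mc{M}_c$ to be the whole configuration and $\mc{K}_c$ the identity context, whose interface is exactly $(\Lambda, \Xi)$. When $c \in \Iota$ is internal, $c$ was introduced by a unique instance of \getrn{conf-c}; re-associating the composition around that node exhibits $c$ on the interface between the two composed parts, and since $\mc{C}$ is a single plugged block it lies entirely on one side, which I take to be $\mc{M}_c$ (splitting that side further if $c$ sits strictly below $\mc{C}$).

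Given such a decomposition I would conclude as follows. Because $\sobsprec{(\mc{X}^I, {\leqslant})}$ is a contextual interior (\cref{def:sill-obs-equiv/relat-equiv:5}) and hence contextual, the configurations $\ctxh[\mc{C}]{\mc{G}_c}{\Gamma}{\Delta}$ and $\ctxh[\mc{D}]{\mc{G}_c}{\Gamma}{\Delta}$ are themselves internally precongruent, so in particular internally similar. Instantiating internal similarity with the experiment $\mc{K}_c$, whose observed channel set is the interface of $\mc{M}_c$ and thus contains $c$, yields that the communications observed on $c$ in $\ctxh[\mc{M}_c]{\mc{K}_c}{}{}$ and in its $\mc{D}$-counterpart are $\commsim[\leqslant]$-related. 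Since observed communications depend only on the message facts of a fair execution and not on how the composite is parenthesized (\cref{theorem:main:1,prop:sill-obs-equiv/observ-comm:5}), this $c$-component coincides with the $c$-component of the total observation of $\ctxh[\mc{C}]{\mc{E}}{\Gamma}{\Delta}$. Ranging over all $c \in \Lambda\Iota\Xi$ assembles the component-wise relation, giving total simulation and hence, by the first step, total precongruence.

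The main obstacle is the structural re-association claim of the second paragraph: that every free channel of a composite configuration can be exposed on the interface of a sub-configuration containing a designated block $\mc{C}$. Establishing this rigorously will require an induction on the typing derivation of the composite, carefully tracking the \getrn{conf-c} node that introduces $c$ and the position of $\mc{C}$ relative to it, and invoking the Replacement Property (\cref{lemma:sill-background-properties-traces:9}) and the Simply Branched Decomposition (\cref{prop:sill-obs-equiv/observ-comm-equiv:7}) to justify the regroupings; matching the observation on $c$ across the different associations then rests on the union-equivalence of fair executions.
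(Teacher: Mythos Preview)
Your approach is essentially the same as the paper's: work channel-by-channel, and for each channel $c$ decompose the experiment so that $c$ sits on the interface between an inner piece containing the hole and an outer piece, then use contextuality of internal precongruence to absorb the inner piece and treat the outer piece as an internal experiment. The paper carries an explicit outer context $\mc{F}$ throughout (unfolding both contextual interiors simultaneously), whereas your opening reduction via \cref{prop:sill-obs-equiv/main-equiv:2} lets you drop that bookkeeping; this is a clean simplification but not a different idea. Your identification of the structural re-association as the one nontrivial step matches the paper, which dispatches it with the same one-line appeal to induction on the derivation of the experiment context.
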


\begin{proof}
  Let $\jcfgt{\Lambda}{\ctxh{\mc{F}}{\Gamma}{\Delta}}{\Xi}$ be an arbitrary context and $(\jcfgti{\Phi}{\Iota}{\ctxh{\mc{E}}{\Lambda}{\Xi}}{\Psi}, \Phi\Iota\Psi) \in \mc{X}^T$ an arbitrary experiment.
  We must show that
  \[
    \obsbr*{\jcfgt{\Phi}{\ctxh[{\ctxh[\mc{C}]{\mc{F}}{\Gamma}{\Delta}}]{\mc{E}}{\Lambda}{\Xi}}{\Psi}}_{\Phi\Iota\Psi}
    \commsim[\leqslant]
    \obsbr*{\jcfgt{\Phi}{\ctxh[{\ctxh[\mc{D}]{\mc{F}}{\Gamma}{\Delta}}]{\mc{E}}{\Lambda}{\Xi}}{\Psi}}_{\Phi\Iota\Psi}.
  \]
  This is the case if and only if for all $c : C \in \Phi,\Iota,\Xi$,
  \begin{equation}
    \label{eq:sill-obs-equiv/intern-observ:1}
    \obsbr*{\jcfgt{\Phi}{\ctxh[{\ctxh[\mc{C}]{\mc{F}}{\Gamma}{\Delta}}]{\mc{E}}{\Lambda}{\Xi}}{\Psi}}_{\Phi\Iota\Psi}(c)
    \commsim[\leqslant]
    \obsbr*{\jcfgt{\Phi}{\ctxh[{\ctxh[\mc{D}]{\mc{F}}{\Gamma}{\Delta}}]{\mc{E}}{\Lambda}{\Xi}}{\Psi}}_{\Phi\Iota\Psi}(c).
  \end{equation}
  Fix some arbitrary such $c : C$.
  Induction on $\jcfgti{\Phi}{\Iota}{\ctxh{\mc{E}}{\Lambda}{\Xi}}{\Psi}$ gives a decomposition of $\mc{E}$ as a composition of contexts $\ctxh{\mc{E}}{\Lambda}{\Xi} = \ctxh[{\ctxh{\mc{E}''}{\Lambda}{\Xi}}]{\mc{E}'}{\Lambda'}{\Xi'}$ such that $c : C \in \Lambda',\Xi'$.
  Observe that the composition $\ctxh[{\ctxh{\mc{F}}{\Gamma}{\Delta}}]{\mc{E}''}{\Lambda}{\Xi}$ is again a context, and that $(\ctxh{\mc{E}'}{\Lambda'}{\Xi'}, \Lambda'\Xi') \in \mc{X}^I$.
  Then by assumption,
  \[
    \obsbr*{\jcfgt{\Phi}{\ctxh[{\left(\ctxh[{\ctxh[\mc{C}]{\mc{F}}{\Gamma}{\Delta}}]{\mc{E}''}{\Lambda}{\Xi}\right)}]{\mc{E}'}{\Lambda'}{\Xi'}}{\Psi}}_{\Lambda'\Xi'}
    \commsim[\leqslant]
    \obsbr*{\jcfgt{\Phi}{\ctxh[{\left(\ctxh[{\ctxh[\mc{D}]{\mc{F}}{\Gamma}{\Delta}}]{\mc{E}''}{\Lambda}{\Xi}\right)}]{\mc{E}'}{\Lambda'}{\Xi'}}{\Psi}}_{\Lambda'\Xi'}.
  \]
  Because $c : C \in \Lambda', \Xi'$, this implies \eqref{eq:sill-obs-equiv/intern-observ:1} and we are done.
\end{proof}

Combining \cref{prop:sill-obs-equiv/intern-observ:1,prop:sill-obs-equiv/total-observ:1}, we conclude that internal observational precongruence implies external observational precongruence:

\begin{corollary}
  \label{cor:sill-obs-equiv/intern-observ:1}
  If $\jtrelc{\iocsprec}{\mc{C}}{\mc{D}}{\Delta}$, then $\jtrelc{\eocsprec}{\Gamma}{\mc{C}}{\mc{D}}{\Delta}$.
\end{corollary}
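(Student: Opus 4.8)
The plan is to chain the two immediately preceding results, bridging from internal through total to external precongruence. The key bookkeeping observation is that both the internal system $I$ and the external system $E$ carry the universal relation $\unirel$ on terms (\cref{def:sill-obs-equiv/main-equiv:5,def:sill-obs-equiv/main-equiv:3}), so that $\iocsprec = \sobsprec{(\mc{X}^I, \unirel)}$ and $\eocsprec = \sobsprec{(\mc{X}^E, \unirel)}$. Keeping $\unirel$ fixed throughout is what lets the hypotheses of the two cited results line up exactly.

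First I would apply \cref{prop:sill-obs-equiv/intern-observ:1} with ${\leqslant} = {\unirel}$. Starting from the hypothesis $\jtrelc{\sobsprec{(\mc{X}^I, \unirel)}}{\Gamma}{\mc{C}}{\mc{D}}{\Delta}$, that proposition yields total precongruence, namely $\jtrelc{\sobsprec{(\mc{X}^T, \unirel)}}{\Gamma}{\mc{C}}{\mc{D}}{\Delta}$. This step does the substantive work of re-expressing internal experiments against arbitrary contexts as total experiments, via the context decomposition inside that proof.

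Second, I would pass from total to external precongruence by a monotonicity argument at the level of contextual interiors. Recall from the remark following \cref{def:sill-obs-equiv/main-equiv:2} that, for any observation system $\mc{S}$, the precongruence $\sobsprec{\mc{S}}$ is exactly the contextual interior $\mkcg{\left(\sobssim{\mc{S}}\right)}$ of the associated simulation. By the second bullet of \cref{prop:sill-obs-equiv/total-observ:1}, instantiated at $\mc{S} = E = (\mc{X}^E, \unirel)$, the simulations satisfy the inclusion ${\sobssim{(\mc{X}^T, \unirel)}} \subseteq {\sobssim{E}}$ (family by family). Since taking contextual interiors is a monotone operation (\cref{lemma:sill-obs-equiv/observ-comm-equiv:2}), this inclusion lifts to $\sobsprec{(\mc{X}^T, \unirel)} = \mkcg{\left(\sobssim{(\mc{X}^T, \unirel)}\right)} \subseteq \mkcg{\left(\sobssim{E}\right)} = \eocsprec$. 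Combining with the first step gives $\jtrelc{\eocsprec}{\Gamma}{\mc{C}}{\mc{D}}{\Delta}$, as required.

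I do not expect any genuine obstacle, since \cref{prop:sill-obs-equiv/intern-observ:1,prop:sill-obs-equiv/total-observ:1} already carry the weight. The one point that warrants care is the jump from the \emph{simulation}-level statement of \cref{prop:sill-obs-equiv/total-observ:1} to the \emph{precongruence}-level conclusion we want: this is precisely where the identification of $\sobsprec{\mc{S}}$ with the contextual interior $\mkcg{\left(\sobssim{\mc{S}}\right)}$, together with the monotonicity of that operation, is indispensable. Flagging that identification explicitly is the only subtlety in an otherwise purely formal argument.
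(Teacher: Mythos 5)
Your proposal is correct and follows exactly the paper's route: the paper proves this corollary simply by combining \cref{prop:sill-obs-equiv/intern-observ:1} (instantiated at ${\leqslant} = {\unirel}$) with \cref{prop:sill-obs-equiv/total-observ:1}, which is precisely your two-step chain through $\sobsprec{(\mc{X}^T, \unirel)}$. Your explicit handling of the simulation-to-precongruence lift via the identification $\sobsprec{\mc{S}} = \mkcg{\left(\sobssim{\mc{S}}\right)}$ and monotonicity of contextual interiors (\cref{lemma:sill-obs-equiv/observ-comm-equiv:2}) is a detail the paper leaves implicit, and it is exactly the right way to fill that gap.
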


\subsection{External Observations for Configurations}
\label{sec:sill-obs-equiv:extern-observ}

We show that external observational precongruence coincides with weak barbed precongruence.
We first show some general properties about observations on external channels.

\begin{proposition}
  \label{prop:sill-obs-equiv/observ-comm:6}
  Assume that\/ $\jcfgt{\Gamma}{\mc{C}}{\Delta}$.
  We observe no communication on its input channels, \ie,  $\obsbr{\jcfgt{\Gamma}{\mc{C}}{\Delta}}(c) = \bot$ for all $c \in \inpcn(\mc{C})$.
\end{proposition}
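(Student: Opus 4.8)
The plan is to show that for an input channel $c \in \inpcn(\mc{C})$, no message fact can have $c$ as its carrier channel in any fair trace $T$ of $\mc{C}$, which by the rule \getrn{O-bot} forces the observed communication to be $\bot$. First I would recall from \cref{rem:sill-background-stat-prop-typed-config:1} that the carrier channel of any message fact is always an output channel of the configuration containing it, together with \cref{cor:sill-background-properties-traces:1}, which guarantees that every free channel in $T$ has a unique session type. The statically determined type of $c$ is negative precisely because $c \in \inpcn(\mc{C})$ (by \cref{def:sill-background-stat-prop-typed-config:3}).

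The key step is to argue that $c$ remains an input channel throughout the execution, so that it never becomes the carrier of a message. I would use the preservation property (\cref{prop:sill-background:1}) and its corollary following \cref{def:sill-background-properties-traces:1}: the type of each external channel is preserved across all steps of $T$, so $\jttp{T}{c}{A}$ for the fixed negative type $A$ of $c$. Now suppose for contradiction that some $\jmsg{d}{m} \in \mc{T}$ has $\carrcn(\jmsg{d}{m}) = c$. By the inversion principle (\cref{prop:main:3}), the carrier channel of a message fact carries a positive type whenever the message is a positively-directed message $m^+$, and more generally the typing of the message fact determines the polarity of its carrier. Scanning the cases of \cref{prop:main:3}, every message fact whose carrier is $c$ forces $c$ to have a type of the matching polarity; since $c$'s type $A$ is negative and fixed, only negatively-directed messages $m^-_{b,c}$ could be relevant, but inspection of \cref{def:sill-background-stat-prop-typed-config:1} shows that for these the carrier channel is an \emph{output} channel of the sender and, via \cref{prop:main:3}, must correspond to a channel on which the message travels \emph{away} from $\mc{C}$—contradicting that $c$ is an input channel of $\mc{C}$.

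Concretely, I would invoke \cref{rem:sill-background-stat-prop-typed-config:1} directly: the carrier channel of a message fact is always an output channel. Combining this with the fact that $\outcn(\mc{C})$ and $\inpcn(\mc{C})$ are disjoint (\cref{def:sill-background-stat-prop-typed-config:3} defines them by the complementary polarity conditions), and that the polarity—and hence the input/output classification—of the external channel $c$ is preserved by \cref{prop:sill-background:1}, we conclude $c \neq \carrcn(\jmsg{d}{m})$ for every message fact $\jmsg{d}{m} \in \mc{T}$. The side condition of \getrn{O-bot} is therefore satisfied for $c$, giving $\jtoc{T}{\bot}{c}{A}$, and by \cref{def:sill-obs-equiv/observ-comm:5} we obtain $\obsbr{\jcfgt{\Gamma}{\mc{C}}{\Delta}}(c) = \bot$.

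I expect the main obstacle to be pinning down the precise bookkeeping for the carrier-channel polarity: the claim "the carrier channel of a message is an output channel" is stated in \cref{rem:sill-background-stat-prop-typed-config:1} only as an illustrative remark, so I would need to make sure it holds \emph{relative to the configuration being observed} rather than merely relative to the sub-configuration producing the message. The subtlety is that a channel $c$ which is an \emph{input} channel of the whole configuration $\mc{C}$ could in principle be an \emph{output} channel of some internal sub-configuration that sends to $\mc{C}$ along $c$; but since $c$ is in the interface $\Gamma\Delta$ of $\mc{C}$ rather than internal, \cref{lemma:sill-background-properties-traces:5} rules this out, as any channel shared between two multisets of $\mc{C}$ is internal. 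Thus $c$ is genuinely consumed-from, never sent-on, by $\mc{C}$, and the argument closes.
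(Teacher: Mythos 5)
Your overall strategy is the paper's: suppose some message fact in a fair trace has the input channel $c$ as its carrier, use \cref{rem:sill-background-stat-prop-typed-config:1} to see that $c$ is then an output channel, derive a contradiction with the disjointness of $\inpcn$ and $\outcn$, and conclude with \getrn{O-bot}. But the middle of your argument has a genuine gap, rooted in a misreading of \cref{def:sill-background-stat-prop-typed-config:3}. You assert that ``the statically determined type of $c$ is negative precisely because $c \in \inpcn(\mc{C})$.'' That is false: by that definition, $c \in \inpcn(\mc{C})$ holds either when $c$ is a \emph{provided} channel of negative type \emph{or} when $c$ is a \emph{used} channel of \emph{positive} type. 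Consequently your polarity filter (``only negatively-directed messages $m^-_{b,c}$ could be relevant'') silently discards the sub-case that actually needs a structural argument: a positive message fact $\jmsg{c}{m^+}$, which provides its carrier $c$, while $c$ sits in $\Gamma$ with positive type. In that sub-case, inversion (\cref{prop:main:3}) and type uniqueness yield no contradiction at all --- the message forces $c$ to be positive, and a used input channel \emph{is} positive.

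Second, the tool you lean on to close the argument, \cref{lemma:sill-background-properties-traces:5}, cannot do the job by itself. That lemma only says something when $c$ is free in \emph{two distinct} multisets of the configuration; in the problematic sub-case above, nothing you have cited exhibits a second fact mentioning $c$, so the lemma is vacuous there. What is needed is the placement fact that a channel \emph{provided} by a sub-derivation can only reappear among the internal or provided channels of the whole configuration (and dually, a channel \emph{used} by a sub-derivation only among the used or internal ones). That is exactly the subformula property, \cref{prop:sill-background-stat-prop-typed-config:3}, which the paper's proof invokes (together with preservation) and which you never cite. You could instead patch your route by first proving that every channel in $\Gamma$ (resp.\ $\Delta$) of a well-typed configuration is used (resp.\ provided) by some fact inside it, so that the carrier is genuinely shared and your lemma applies; but that auxiliary claim is itself an induction on the typing derivation, i.e.\ a fragment of the subformula property. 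With \cref{prop:sill-background-stat-prop-typed-config:3} in hand, both sub-cases of $c \in \inpcn(\mc{C})$ collapse immediately and your argument becomes the paper's proof.
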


\begin{proof}
  Let $c \in \inpcn(\mc{C})$ be arbitrary, let $T$ be a fair execution, and let $\mc{T}$ be the union of all facts appearing in $T$.
  Suppose to the contrary that some $\jmsg{d}{m} \in \mc{T}$ has $c$ as its carrier channel.
  Then by \cref{rem:sill-background-stat-prop-typed-config:1}, $c$ is an output channel of $\jmsg{d}{m}$.
  By preservation, the subformula property (\cref{prop:sill-background-stat-prop-typed-config:3}) and \cref{def:sill-background-stat-prop-typed-config:3}, $c$ must also be an output channel of $\jcfgt{\Gamma}{\mc{C}}{\Delta}$.
  But the sets of input and output channels are disjoint, so this is a contradiction.
  It follows that no message fact in $\mc{T}$ has $c$ as its carrier, so $v_c = \bot$ by \getrn{O-bot}.
\end{proof}

We can relate the observed communications of different configurations using their externally observable messages.
Recall from \cref{def:sill-obs-equiv/observ-comm:13} that a message fact $\jmsg{a}{m}$ is observable from $c$ in a trace $T$ if appears in the derivation of $\jtoc{T}{v}{c}{A}$.

\begin{definition}
  \label{def:sill-obs-equiv/observ-comm:6}
  Let \(T\) be a trace of $\jcfgt{\Gamma}{\mc{C}}{\Delta}$.
  A message fact $\jmsg{a}{m}$ is \defin{externally observable in $T$} if it is observable from some $c \in \check \Gamma, \check \Delta$ in $T$.
\end{definition}

\begin{definition}
  \label{def:sill-obs-equiv/extern-observ:5}
  Two configurations $\jcfgt{\Gamma}{\mc{C}}{\Delta}$ and $\jcfgt{\Gamma}{\mc{C}'}{\Delta}$ have the same externally observable message facts if for some fair traces $T$ and $T'$ of $\mc{C}$ and $\mc{C}'$, respectively, for all channels $c \in \check \Gamma, \check \Delta$, the sets of messages observable from $c$ in $T$ and in $T'$ are equal.
\end{definition}

\begin{proposition}
  \label{prop:sill-obs-equiv/observ-comm:10}
  If $\jcfgt{\Gamma}{\mc{C}}{\Delta}$ and $\jcfgt{\Gamma}{\mc{D}}{\Delta}$ have the same sets of externally observable message facts, then $\obsbr{\jcfgt{\Gamma}{\mc{C}}{\Delta}} = \obsbr{\jcfgt{\Gamma}{\mc{D}}{\Delta}}$.
\end{proposition}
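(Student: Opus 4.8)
The plan is to reduce the statement to its single-channel analogue, \cref{prop:sill-obs-equiv/observ-comm:5}, and then argue componentwise over the interface channels. First I would unpack the hypothesis using \cref{def:sill-obs-equiv/extern-observ:5}: since $\jcfgt{\Gamma}{\mc{C}}{\Delta}$ and $\jcfgt{\Gamma}{\mc{D}}{\Delta}$ have the same externally observable message facts, there exist fair traces $T$ of $\mc{C}$ and $T'$ of $\mc{D}$ such that, for every interface channel $c \in \check\Gamma, \check\Delta$, the sets of messages observable from $c$ in $T$ and in $T'$ (in the sense of \cref{def:sill-obs-equiv/observ-comm:13}) coincide. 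I would also recall that, by convention, the unsubscripted observation $\obsbr{\jcfgt{\Gamma}{\mc{C}}{\Delta}}$ denotes $\obsbr{\jcfgti{\Gamma}{\Iota}{\mc{C}}{\Delta}}_{\Gamma\Delta}$, the tuple of observed communications on all interface channels, as in \cref{prop:sill-obs-equiv/observ-comm:6}.

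Next I would fix an arbitrary $c : A \in \Gamma, \Delta$ and let $v_c$ and $v_c'$ be the communications given by $\jtoc{T}{v_c}{c}{A}$ and $\jtoc{T'}{v_c'}{c}{A}$, which exist and are unique by \cref{theorem:main:2}. Because the sets of message facts observable from $c$ in $T$ and in $T'$ are already \emph{literally} equal, no renaming of fresh constants is required, so I may take the trivial refreshing $T'' = T$ when invoking \cref{prop:sill-obs-equiv/observ-comm:5}. That \namecref{prop:sill-obs-equiv/observ-comm:5} then yields $v_c = v_c'$ directly.

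Finally, I would assemble the componentwise equalities into equality of tuples. By \cref{def:sill-obs-equiv/observ-comm:5}, the observed communication is well-defined and independent of the choice of fair execution: by \cref{cor:sill-background-properties-traces:2} fair executions exist, and by \cref{theorem:main:1} the communication observed on each channel does not depend on which fair execution is chosen. Hence $\obsbr{\jcfgt{\Gamma}{\mc{C}}{\Delta}}$ may be computed from $T$ and $\obsbr{\jcfgt{\Gamma}{\mc{D}}{\Delta}}$ from $T'$. Since $v_c = v_c'$ for every $c \in \check\Gamma, \check\Delta$, the two tuples agree on every component, giving $\obsbr{\jcfgt{\Gamma}{\mc{C}}{\Delta}} = \obsbr{\jcfgt{\Gamma}{\mc{D}}{\Delta}}$.

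I expect no serious obstacle here, as the real work has been front-loaded into \cref{prop:sill-obs-equiv/observ-comm:5} and \cref{theorem:main:1}. The only point demanding care is the interplay with refreshing: one must observe that the literal equality of observable-message sets demanded by \cref{def:sill-obs-equiv/extern-observ:5} is exactly what lets the refreshing hypothesis of \cref{prop:sill-obs-equiv/observ-comm:5} be discharged trivially, so that the single-channel result applies without any auxiliary renaming argument.
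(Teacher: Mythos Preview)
Your proposal is correct and takes essentially the same approach as the paper: the paper's proof is the single sentence ``This is an immediate corollary of \cref{prop:sill-obs-equiv/observ-comm:5},'' and you have spelled out exactly what that corollary consists of, including the observation that the literal equality of observable-message sets in \cref{def:sill-obs-equiv/extern-observ:5} lets the refreshing hypothesis be discharged trivially.
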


\begin{proof}
  This is an immediate corollary of \cref{prop:sill-obs-equiv/observ-comm:5}.
\end{proof}

\subsubsection{Barbed Simulation and Precongruence}
\label{sec:sill-obs-equiv:extern-observ:barbs}

Barbed bisimulations and congruences~\cite{milner_sangiorgi_1992:_barbed_bisim, sangiorgi_1992:_expres_mobil_proces_algeb} are the canonical notion of equivalence for process calculi.
A barb is an observation predicate $\barb{}$ defined on terms in a calculus that specifies the most basic behavioural observable: the ability to perform an observable action.
When defining the barb predicate,
\begin{quote}
  the global observer [\ldots] can also recognize the production of an observable action, but in this case he cannot see neither the identity of the action produced nor the state reached. \cite[691]{milner_sangiorgi_1992:_barbed_bisim}
\end{quote}
Concretely, we follow \textcite[\S~3.2]{sangiorgi_1992:_expres_mobil_proces_algeb} and define barbs on a per-channel basis: the predicate $\barb[a]{(\cdot)}$ specifies the ability to perform an observable action on channel $a$.

This minimalist approach to defining barbs contrasts with some recent approaches~\cite{yoshida_2007:_linear_bisim, toninho_2015:_logic_found_session_concur_comput, kokke_2019:_better_late_than_never,} whose barbs distinguish between different kinds of actions.
For example, \textcite{yoshida_2007:_linear_bisim} found it necessary to observe which label was sent to ensure that barbed bisimulation was a congruence.
Following an argument by \textcite{atkey_2017:_obser_commun_seman_class_proces}, \Textcite{kokke_2019:_better_late_than_never} extended barbs to observe how a program used or disposed of non-linear resources (the minimalist approach can be recovered by dropping exponentials from their calculus).
By using a different barb for each kind of typed communication, \textcite[\S~6.2]{toninho_2015:_logic_found_session_concur_comput} was able to give a binary logical relation that was consistent by construction with barbed equivalence.

We prefer the minimalist approach for its conceptual simplicity and generality: it is calculus agnostic.
Instead of modifying the concept of a barb to ensure that barbed bisimulation is a congruence, we follow the original approach and extract ``barbed congruences'' from barbed bisimulations using contextual interiors~(\cf~\cites[Definition~8]{milner_sangiorgi_1992:_barbed_bisim}[Definition~3.2.6]{sangiorgi_1992:_expres_mobil_proces_algeb}).

\begin{definition}
  \label{def:sill-obs-equiv/barb-cont-congr:1}
  A \defin{barb} is the channel-indexed predicate $\barb[a]{(\cdot)}$\glsadd{barb} on processes and configurations inductively defined by:\\
  \newcommand{\tabitem}{\hspace{\leftMargin}\llap{\textbullet}~~}
  \hspace{\leftmargin}\begin{tabular}{ll}
    \tabitem $\barb[a]{\tClose{a}}$ & \tabitem $\barb[a]{\jproc{c}{P}}$ whenever $\barb[a]{P}$\\
    \tabitem $\barb[a]{\tSendL{a}{k}{P}}$ & \tabitem $\barb[a]{\jmsg{a}{m}}$ whenever $\barb[a]{m}$ \\
    \tabitem $\barb[a]{\tSendS{a}{P}}$ &  \tabitem $\barb[a]{(\jproc{b}{\tFwdN{a}{b}}, \jmsg{c}{m^-_{b,c}})}$\\
    \tabitem $\barb[a]{\tSendC{a}{b}{P}}$ &   \tabitem $\barb[b]{(\jmsg{a}{m^+}, \jproc{b}{\tFwdP{a}{b}})}$, and\\
    \tabitem $\barb[a]{\tSendU{a}{P}}$ &    \tabitem $\barb[a]{(\ctxh[\mc{D}]{\mc{C}}{\Gamma}{\Delta})}$ whenever $\barb[a]{\mc{D}}$. \\
    \tabitem $\barb[a]{\tSendV{a}{M}{P}}$ if \(\fneval{M}{v}\) for some \(v\) &
  \end{tabular}\\
  A \defin{weak barb} is the channel-indexed predicate $\wbarb[a]{(\cdot)}$\glsadd{wbarb} on configurations defined by the composition of relations $\steps\barb[a]{(\cdot)}$.
  We write $\nwbarb[a]{(\cdot)}$ for the negation of $\wbarb[a]{(\cdot)}$.
\end{definition}

Write $\relreflc{\msstep}$ for the reflexive closure of $\msstep$.

\begin{proposition}
  \label{prop:sill-obs-equiv/barb-cont-congr:6}
  For all configurations $\mc{C}$, $\barb[a]{\mc{C}}$ if and only if $\mc{C} \relreflc{\msstep} \mc{C}', \jmsg{c}{m}$ for some $\jmsg{c}{m}$ with $\carrcn(\jmsg{c}{m}) = a$.
\end{proposition}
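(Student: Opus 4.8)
The plan is to prove both implications by relating the syntactic barb predicate to the message-producing rules of the substructural operational semantics, using the carrier-channel bookkeeping of \cref{def:sill-background-stat-prop-typed-config:1}. The linchpin is a bridging observation about message facts: for any message fact, $\barb[a]{\jmsg{c}{m}}$ holds precisely when $\carrcn(\jmsg{c}{m}) = a$. I would establish this by expanding the message-process grammar of \cref{eq:sill-background-typing-mult-rewr:1,eq:sill-background-typing-mult-rewr:2}: each $m^+$ unfolds to a send (or $\tClose$) on its subject channel, and each $m^-_{b,c}$ to a send on $b$, so the process-level barb clauses fire exactly on the carrier, in agreement with \cref{def:sill-background-stat-prop-typed-config:1}. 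This reduces the base cases of both directions to the single-message characterization.

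For the forward implication, I would induct on the derivation of $\barb[a]{\mc{C}}$. The message-fact case is immediate from the bridging observation: $\mc{C}$ already contains a fact with carrier $a$, so the reflexive (zero-step) branch of $\relreflc{\msstep}$ applies. For each process-prefix case ($\tClose{a}$, $\tSendL{a}{k}{P}$, $\tSendS{a}{P}$, $\tSendC{a}{b}{P}$, $\tSendU{a}{P}$, and $\tSendV{a}{M}{P}$ with $M$ convergent), the corresponding right rule \cref{eq:sill:msr-tu-r,eq:sill:msr-tplus-r,eq:sill:msr-tds-r,eq:sill:msr-tot-r,eq:sill:msr-rhop-r,eq:sill:msr-tand-r} (or its negative-polarity dual from \cref{sec:sill-background:compl-list-mult} when $a$ is a used channel) fires in a single step and emits a message fact whose carrier is $a$; the side condition $\fneval{M}{v}$ guarantees applicability of \cref{eq:sill:msr-tand-r}. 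The two forwarding cases are handled by \cref{eq:sill:msr-fwdp,eq:sill:msr-fwdn}, which consume the forwarder-and-message pattern and produce a carrier-$a$ (respectively carrier-$b$) message in one step. Finally, the context case follows from the induction hypothesis: a rule applicable to $\mc{D}$ remains applicable within $\ctxh[\mc{D}]{\mc{C}}{\Gamma}{\Delta}$, so the reduction of $\mc{D}$ lifts to the composition and the emitted message persists in the reduct.

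For the reverse implication, I would split on whether $\mc{C} \relreflc{\msstep} (\mc{C}', \jmsg{c}{m})$ takes zero or one step. With zero steps, $\mc{C}$ itself contains $\jmsg{c}{m}$, and the bridging observation together with the context clause of the barb definition give $\barb[a]{\mc{C}}$. With one step $\mc{C} \msstep (\mc{C}', \jmsg{c}{m})$, I would use preservation (\cref{prop:sill-background:1}) to isolate the active multiset $\mc{E}$ consumed by the step, and then case on whether $\jmsg{c}{m}$ lies in the stationary part (reducing to the zero-step case) or is freshly emitted. In the latter situation, a carrier-$a$ message can only be produced by a send rule, by $\tClose{a}$, or by a forwarding rule; in each case the active multiset $\mc{E}$ is exactly the redex of one of the barb clauses (a send-prefixed process on $a$, hence $\barb[a]{\jproc{\cdot}{P}}$, or a forwarder-message pair), so $\barb[a]{\mc{E}}$ holds and the context clause yields $\barb[a]{\mc{C}}$. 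Receiving rules, cut, and spawning emit no message facts and so cannot arise here.

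The main obstacle is the exhaustive case analysis over the message-emitting rules in the reverse direction, together with the careful treatment of both polarities: I must check that every rule whose right-hand side contains a fact of carrier $a$ has a left-hand side matching one of the barb clauses, and that the dual (negative) rules elided from the main text behave symmetrically. The bridging observation carries most of the polarity bookkeeping, but verifying that the forwarding clauses align exactly with \cref{eq:sill:msr-fwdp,eq:sill:msr-fwdn}, and that context lifting preserves the carrier channel, requires appealing to \cref{def:sill-background-stat-prop-typed-config:1,prop:main:3,prop:sill-background:1}.
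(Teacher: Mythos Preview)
Your proposal is correct and follows essentially the same approach as the paper's proof: a case analysis on the derivation of $\barb[a]{\mc{C}}$ for the forward direction, and a split on whether $\jmsg{c}{m}$ is already present (case analysis on $m$) or produced by a step (case analysis on the rule) for the reverse direction. Your invocation of preservation (\cref{prop:sill-background:1}) to isolate the active multiset in the one-step case is heavier than needed---the paper simply inspects the rule instance directly, since each rewrite rule wears its active multiset on its sleeve---but this does not affect correctness, and your explicit ``bridging observation'' about message facts is a clean way to organize what the paper leaves implicit in its case analyses.
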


\begin{proof}
  Sufficiency follows by a case analysis on why $\barb[a]{\mc{C}}$.
  To see necessity, assume first that $\jmsg{c}{m} \in \mc{C}$.
  Then a case analysis on $m$ gives the result.
  If $\jmsg{c}{m} \notin \mc{C}$, then a case analysis on the (non-reflexive) step gives the result.
\end{proof}

\Cref{cor:sill-obs-equiv/extern-observ:3} states that an observable action eventually occurs on \(a\) (\(\wbarb[a]{\mc{C}}\)) if and only if we observe communication on \(a\) (\(\obsbr{\jcfgt{\Gamma}{\mc{C}}{\Delta}}_a(a) \neq \bot\)):

\begin{corollary}
  \label{cor:sill-obs-equiv/extern-observ:3}
  For all $\jcfgt{\Gamma}{\mc{C}}{\Delta}$ and $a \in \freecn(\mc{C})$, $\wbarb[a]{\mc{C}}$ if and only if $\obsbr{\jcfgt{\Gamma}{\mc{C}}{\Delta}}_a(a) \neq \bot$.
\end{corollary}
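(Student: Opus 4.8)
The plan is to reduce both directions to a single bridge fact: for $\jcfgt{\Gamma}{\mc{C}}{\Delta}$ and $a \in \freecn(\mc{C})$, one has $\obsbr{\jcfgt{\Gamma}{\mc{C}}{\Delta}}_a(a) \neq \bot$ if and only if some message fact $\jmsg{c}{m}$ with $\carrcn(\jmsg{c}{m}) = a$ lies in the support $\mc{T}$ of a fair execution $T$ of $\mc{C}$. To prove this bridge fact, I would fix a fair execution $T$ and let $A$ be the unique type with $\jttp{T}{a}{A}$ from \cref{theorem:main:2}. If no message fact in $\mc{T}$ has carrier $a$, then the side condition of \getrn{O-bot} holds, so $\jtoc{T}{\bot}{a}{A}$, and the observation is $\bot$ by the uniqueness clause of \cref{theorem:main:2}. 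Conversely, if a carrier-$a$ message fact exists, it is unique with this property by \cref{cor:sill-background-dyn-prop-typed-config:1}; a case analysis on its form, using \cref{prop:main:3} to determine the type, shows that exactly one of the rules \getrn{O-tu} through \getrn{O-timp} applies, and each of these produces a compound communication, which is never $\bot$. Independence of the chosen fair execution is supplied by \cref{theorem:main:1}.

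For the forward direction I would assume $\wbarb[a]{\mc{C}}$. Unfolding the definition $\wbarb[a]{(\cdot)} = \steps\barb[a]{(\cdot)}$ and applying \cref{prop:sill-obs-equiv/barb-cont-congr:6} yields a finite trace $\mc{C} \steps \mc{C}^\dagger$, where $\mc{C}^\dagger$ contains a message fact $\jmsg{c}{m}$ with $\carrcn(\jmsg{c}{m}) = a$; here I use that $\relreflc{\msstep} \subseteq \steps$, so composing the weak-barb step with the reflexive step of \cref{prop:sill-obs-equiv/barb-cont-congr:6} still gives $\steps$. By fair concatenation (\cref{cor:ssos-fairness/prop-fair-trac:2}) together with the existence of a fair execution from $\mc{C}^\dagger$ (\cref{cor:sill-background-dyn-prop-typed-config:2}), this finite trace extends to a fair execution $T$ of $\mc{C}$. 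The fact $\jmsg{c}{m}$ then lies in $\mc{C}^\dagger$, hence in $\mc{T}$, so the bridge fact gives $\obsbr{\jcfgt{\Gamma}{\mc{C}}{\Delta}}_a(a) \neq \bot$.

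For the backward direction I would assume $\obsbr{\jcfgt{\Gamma}{\mc{C}}{\Delta}}_a(a) \neq \bot$. The bridge fact supplies a fair execution $T = (\mc{C}_0, (r_i;\delta_i)_i)$ of $\mc{C}$ whose support contains a message fact $\jmsg{c}{m}$ with $\carrcn(\jmsg{c}{m}) = a$. By the definition of support, $\jmsg{c}{m} \in \mc{C}_n$ for some $n$, and the length-$n$ prefix of $T$ witnesses $\mc{C} \steps \mc{C}_n$. Since $\mc{C}_n$ contains $\jmsg{c}{m}$ and $\mc{C}_n \relreflc{\msstep} \mc{C}_n$ by reflexivity, \cref{prop:sill-obs-equiv/barb-cont-congr:6} yields $\barb[a]{\mc{C}_n}$; combining with $\mc{C} \steps \mc{C}_n$ gives $\wbarb[a]{\mc{C}}$.

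The hard part will be the bridge fact, and within it the claim that a non-$\bot$ observation forces a carrier-$a$ message into $\mc{T}$: this rests on the uniqueness of the observation derivation (\cref{theorem:main:2}) and the single-carrier property (\cref{cor:sill-background-dyn-prop-typed-config:1}) to exclude \getrn{O-bot}, and on verifying that none of the remaining observation rules can produce $\bot$ at the top level. The rest — extending a finite trace to a fair execution and locating the witnessing message at a finite stage — is routine given fair concatenation and the independence of observations from the choice of fair execution.
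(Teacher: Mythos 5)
Your proof is correct and follows essentially the same route as the paper's: both directions reduce, via \cref{prop:sill-obs-equiv/barb-cont-congr:6}, to the presence of a carrier-$a$ message fact in the support of a fair execution of $\mc{C}$, using fair concatenation to extend the finite witnessing trace to a fair execution and a case analysis on the observation rules (excluding \getrn{O-bot}) to connect that message fact to a non-$\bot$ observation. The only point you leave implicit is that applying \cref{cor:ssos-fairness/prop-fair-trac:2} requires interference-freedom from $\mc{C}$, which the paper supplies explicitly via \cref{prop:sill-background-dyn-prop-typed-config:1,prop:ssos-fairness/prop-fair-trac:1}.
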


\begin{proof}
  By definition, $\wbarb[a]{\mc{C}}$ if and only if \(\mc{C} \mssteps \mc{D}\) for some \(\mc{D}\) such that \(\barb[a]{\mc{D}}\).
  By \cref{prop:sill-obs-equiv/barb-cont-congr:6}, this implies that $\wbarb[a]{\mc{C}}$ if and only if \(\mc{C} \mssteps \mc{C}',\jmsg{c}{m}\) for some \(\mc{C}'\) and \(\jmsg{c}{m}\) such that \(\carrcn(\jmsg{c}{m}) = a\).
  The multiset rewriting system defining polarized SILL is non-overlapping by \cref{prop:sill-background-dyn-prop-typed-config:1}, so interference-free on \cref{prop:ssos-fairness/prop-fair-trac:1}.
  By the fair concatenation property (\cref{cor:ssos-fairness/prop-fair-trac:2}), we can extend \(\mc{C} \mssteps \mc{C}',\jmsg{c}{m}\) to a fair execution \(T\).
  A case analysis on \(\jmsg{c}{m}\) then implies \(\jtoc{T}{v}{a}{A}\) for some \(v\) and \(A\), so $\obsbr{\jcfgt{\Gamma}{\mc{C}}{\Delta}}_a(a) \neq \bot$.
  Conversely, if $\obsbr{\jcfgt{\Gamma}{\mc{C}}{\Delta}}_a(a) = v \neq \bot$, then \(\jtoc{T}{v}{a}{A}\) for some \(T\) and \(A\).
  A case analysis on the last rule used to form \(\jtoc{T}{v}{a}{A}\) shows that \(\mc{C} \mssteps \mc{C}',\jmsg{c}{m}\) for some \(\mc{C}'\) and \(\jmsg{c}{m}\) such that \(\carrcn{\jmsg{c}{m}} = a\).
  We have already show that this is logically equivalent to $\wbarb[a]{\mc{C}}$.
\end{proof}

The barbed simulation game requires that the simulating configuration match the simulated configuration's barbs:

\begin{definition}
  \label{def:sill-obs-equiv/barb-cont-congr:3}
  A typed relation $\relfnt{R}$ on configurations is a \defin{(weak) barbed simulation} if $\jtrelc{\relfnt{R}}{\Delta}{\mc{C}}{\mc{D}}{\Psi}$ implies
  \begin{enumerate}
  \item if $\jstep{\mc{C}}{\mc{C}'}$, then $\mc{D} \steps \mc{D}'$ with $\jtrelc{\relfnt{R}}{\Delta}{\mc{C}'}{\mc{D}'}{\Psi}$; and
  \item for all channels $a : A \in \Delta,\Psi$, if $\wbarb[a]{\mc{C}}$, then $\wbarb[a]{\mc{D}}$.
  \end{enumerate}
  \defin{(Weak) barbed similarity}, $\wbsim$\glsadd{wbsim}, is the largest barbed simulation.
  Two configurations $\jcfgt{\Delta}{\mc{C}}{\Psi}$ and $\jcfgt{\Delta}{\mc{D}}{\Psi}$ are \defin{(weak) barbed similar}, $\jtrelc{\wbsim}{\Delta}{\mc{C}}{\mc{D}}{\Psi}$, if $\jtrelc{\relfnt{R}}{\Delta}{\mc{C}}{\mc{D}}{\Psi}$ for some barbed simulation $\relfnt{R}$.
\end{definition}

We can define barbed bisimulation from barbed simulation in the usual manner:

\begin{definition}
  \label{def:sill-obs-equiv/extern-observ:3}
  A typed relation $\relfnt{R}$ on configurations is a \defin{(weak) barbed bisimulation} if both $\relfnt{R}$ and $\relfnt{R}^{-1}$ are barbed simulations.
  \defin{(Weak) barbed bisimilarity}, $\wbbisim$\glsadd{wbbisim}, is the largest barbed bisimulation.
  Two configurations $\jcfgt{\Delta}{\mc{C}}{\Psi}$ and $\jcfgt{\Delta}{\mc{D}}{\Psi}$ are \defin{(weak) barbed bisimilar}, $\jtrelc{\wbbisim}{\Delta}{\mc{C}}{\mc{D}}{\Psi}$, if $\jtrelc{\relfnt{R}}{\Delta}{\mc{C}}{\mc{D}}{\Psi}$ for some barbed bisimulation $\relfnt{R}$.
\end{definition}

Barbed bisimulation is an equivalence relation.
We do not develop its theory any further.

\begin{example}
  \label{ex:sill-obs-equiv/barb-cont-congr:1}
  The following two configurations are barbed bisimilar:
  \begin{align}
    &\jcfgt{\cdot}{\jproc{b}{\tCut{a}{\tClose{a}}{(\tWait{a}{\tClose{b}})}}}{b : \Tu}\label{eq:sill-obs-equiv/barb-cont-congr:2}\\
    &\jcfgt{\cdot}{\jproc{b}{\tClose{b}}}{b : \Tu}\label{eq:sill-obs-equiv/barb-cont-congr:3}
  \end{align}
  The unique execution of \eqref{eq:sill-obs-equiv/barb-cont-congr:2} is:
  \begin{align}
    &\jproc{b}{\tCut{a}{\tClose{a}}{(\tWait{a}{\tClose{b}})}}\label{eq:sill-obs-equiv/barb-cont-congr:4}\\
    &\msstep \jproc{a}{\tClose{a}}, \jproc{b}{\tWait{a}{\tClose{b}}}\label{eq:sill-obs-equiv/barb-cont-congr:5}\\
    &\msstep \jmsg{a}{\mClose{a}}, \jproc{b}{\tWait{a}{\tClose{b}}}\label{eq:sill-obs-equiv/barb-cont-congr:6}\\
    &\msstep \jproc{b}{\tClose{b}}\label{eq:sill-obs-equiv/barb-cont-congr:7}\\
    &\msstep \jmsg{b}{\mClose{b}},\label{eq:sill-obs-equiv/barb-cont-congr:8}
  \end{align}
  while the unique execution of \eqref{eq:sill-obs-equiv/barb-cont-congr:3} is:
  \begin{align}
    &\jproc{b}{\tClose{b}}\label{eq:sill-obs-equiv/barb-cont-congr:9}\\
    &\msstep \jmsg{b}{\mClose{b}}.\label{eq:sill-obs-equiv/barb-cont-congr:10}
  \end{align}
  Where the numbers refer to the configurations in the above executions, the following relation is a barbed bisimulation:
  \[
    \relfnt{R} = \{
    (\eqref{eq:sill-obs-equiv/barb-cont-congr:4}, \eqref{eq:sill-obs-equiv/barb-cont-congr:9}),
    (\eqref{eq:sill-obs-equiv/barb-cont-congr:5}, \eqref{eq:sill-obs-equiv/barb-cont-congr:9}),
    (\eqref{eq:sill-obs-equiv/barb-cont-congr:6}, \eqref{eq:sill-obs-equiv/barb-cont-congr:9}),
    (\eqref{eq:sill-obs-equiv/barb-cont-congr:7}, \eqref{eq:sill-obs-equiv/barb-cont-congr:9}),
    (\eqref{eq:sill-obs-equiv/barb-cont-congr:8}, \eqref{eq:sill-obs-equiv/barb-cont-congr:10})
    \}
  \]
  Indeed, it ensures that the two configurations remain related throughout the stepping game.
  It also satisfies the requirement that related configurations have the same barbs for channels in their interfaces: in each pair, both configurations satisfy the weak barb $\wbarb[b]{(\cdot)}$.
\end{example}

\begin{lemma}
  \label{lemma:sill-obs-equiv/barb-cont-congr:1}
  If $\jcfgt{\Gamma}{\mc{C}}{\Delta}$ and $\mc{C} \msstep \mc{C}'$, then for all $c \in \check \Gamma,\check \Delta$, we have $\wbarb[c]{\mc{C}}$ if and only if $\wbarb[c]{\mc{C}'}$.
\end{lemma}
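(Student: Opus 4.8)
The plan is to avoid any fresh confluence reasoning and instead route everything through the observed communication semantics, using two results already in hand. The first is Corollary~\ref{cor:sill-obs-equiv/extern-observ:3}: for $c \in \freecn(\mc{C})$, the weak barb $\wbarb[c]{\mc{C}}$ holds exactly when the observed communication on $c$ is non-empty, i.e. $\obsbr{\jcfgt{\Gamma}{\mc{C}}{\Delta}}_c(c) \neq \bot$. The second is Proposition~\ref{prop:sill-obs-equiv/observ-comm:8}: a single step $\mc{C} \msstep \mc{C}'$ yields $\jtrelc{\strobsc}{\Gamma}{\mc{C}}{\mc{C}'}{\Delta}$, so $\mc{C}$ and $\mc{C}'$ are strict total observationally equivalent. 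Since $\strobsc = \sobseq{T}$ for the total observation system $T = (\mc{X}^T, {=})$, I would first record that every interface channel $c \in \check\Gamma, \check\Delta$ is free in both $\mc{C}$ and $\mc{C}'$: it is free in $\mc{C}$ by Proposition~\ref{prop:sill-background:2} and the structure of configuration typing derivations, and $\mc{C}'$ has the same interface by preservation (Proposition~\ref{prop:sill-background:1}), so Corollary~\ref{cor:sill-obs-equiv/extern-observ:3} applies to each.

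Next I would instantiate total observational equivalence at the trivial experiment. The empty context $\ctxh{}{\Gamma}{\Delta}$ given by \getrn{conf-h} satisfies $\jcfgti{\Gamma}{\cdot}{\ctxh{}{\Gamma}{\Delta}}{\Delta}$, so $(\ctxh{}{\Gamma}{\Delta}, \Gamma\Delta) \in \mc{X}^T_{\Gamma \vdash \Delta}$. Plugging $\mc{C}$ (resp. $\mc{C}'$) into this context simply returns $\mc{C}$ (resp. $\mc{C}'$), so $\jtrelc{\sobseq{T}}{\Gamma}{\mc{C}}{\mc{C}'}{\Delta}$ gives $\obsbr{\jcfgt{\Gamma}{\mc{C}}{\Delta}}_{\Gamma\Delta} \commeq[=] \obsbr{\jcfgt{\Gamma}{\mc{C}'}{\Delta}}_{\Gamma\Delta}$. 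By Proposition~\ref{lemma:sill-obs-equiv/observ-comm:3} the relation $\commeq[=]$ is literal equality, and the component-wise lifting of $\commeq[=]$ to tuples then yields $\obsbr{\jcfgt{\Gamma}{\mc{C}}{\Delta}}_{\Gamma\Delta}(c) = \obsbr{\jcfgt{\Gamma}{\mc{C}'}{\Delta}}_{\Gamma\Delta}(c)$ for each such $c$. Because the observed communication on a single channel is determined by $\jtoc{T}{v_c}{c}{A}$ and is thus independent of the observation set (Definition~\ref{def:sill-obs-equiv/observ-comm:5} with the uniqueness in Theorem~\ref{theorem:main:2}), this equals $\obsbr{\jcfgt{\Gamma}{\mc{C}}{\Delta}}_c(c) = \obsbr{\jcfgt{\Gamma}{\mc{C}'}{\Delta}}_c(c)$.

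Finally I would chain the equivalences: for $c \in \check\Gamma, \check\Delta$,
\[
  \wbarb[c]{\mc{C}} \iff \obsbr{\jcfgt{\Gamma}{\mc{C}}{\Delta}}_c(c) \neq \bot \iff \obsbr{\jcfgt{\Gamma}{\mc{C}'}{\Delta}}_c(c) \neq \bot \iff \wbarb[c]{\mc{C}'},
\]
where the outer equivalences are Corollary~\ref{cor:sill-obs-equiv/extern-observ:3} applied to $\mc{C}$ and to $\mc{C}'$, and the middle one is the equality just derived. This establishes the lemma.

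I expect the only genuine subtlety — the ``main obstacle'' — to be the bookkeeping of the middle paragraph: verifying that the trivial context is a legitimate member of $\mc{X}^T$ and that the scalar observation $\obsbr{\cdot}_c(c)$ coincides with the $c$-component of the full tuple $\obsbr{\cdot}_{\Gamma\Delta}$. It is worth flagging that the implication $\wbarb[c]{\mc{C}} \Rightarrow \wbarb[c]{\mc{C}'}$ really does use that $c$ is external; a direct syntactic argument would instead need a persistence-of-external-barbs argument resting on non-interference (Proposition~\ref{prop:sill-background-dyn-prop-typed-config:1}) and the single-carrier property (Corollary~\ref{cor:sill-background-dyn-prop-typed-config:1}), but routing through the observed communication semantics lets me avoid this, since Proposition~\ref{prop:sill-obs-equiv/observ-comm:8} already encapsulates it. The reverse implication $\wbarb[c]{\mc{C}'} \Rightarrow \wbarb[c]{\mc{C}}$ is in any case immediate for arbitrary $c$, by composing $\mc{C} \msstep \mc{C}'$ with the reduction witnessing the barb.
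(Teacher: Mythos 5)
Your proposal is correct and takes essentially the same route as the paper: the paper's proof is precisely ``this is a consequence of \cref{prop:sill-obs-equiv/observ-comm:8,cor:sill-obs-equiv/extern-observ:3},'' and your argument just fills in the bookkeeping (the trivial experiment in $\mc{X}^T$, $\commeq[=]$ being equality, and the agreement of single-channel observations with tuple components) that the paper leaves implicit.
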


\begin{proof}
  This is a consequence of \cref{prop:sill-obs-equiv/observ-comm:8,cor:sill-obs-equiv/extern-observ:3}.
\end{proof}

We can characterize barbed similarity using \cref{prop:sill-obs-equiv/barb-cont-congr:8}.

\begin{proposition}
  \label{prop:sill-obs-equiv/barb-cont-congr:8}
  Two configurations are barbed similar, $\jtrelc{\wbsim}{\Delta}{\mc{C}}{\mc{D}}{\Psi}$, if and only if for all $c \in \check \Delta, \check \Psi$, if $\wbarb[c]{\mc{C}}$, then $\wbarb[c]{\mc{D}}$.
\end{proposition}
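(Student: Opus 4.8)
The plan is to prove the two directions separately, with necessity being essentially immediate and sufficiency requiring us to exhibit a concrete barbed simulation. For necessity, suppose $\jtrelc{\wbsim}{\Delta}{\mc{C}}{\mc{D}}{\Psi}$. Then $(\mc{C}, \mc{D})$ lies in some weak barbed simulation $\relfnt{R}$, and clause~(2) of \cref{def:sill-obs-equiv/barb-cont-congr:3} immediately yields that $\wbarb[c]{\mc{C}}$ implies $\wbarb[c]{\mc{D}}$ for every $c \in \check\Delta, \check\Psi$. No further work is needed here.

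For sufficiency, I would define the candidate type-indexed relation
\[
  \relfnt{R}_{\Delta \vdash \Psi} = \Set{ (\mc{C}, \mc{D}) \given \jcfgt{\Delta}{\mc{C}}{\Psi},\ \jcfgt{\Delta}{\mc{D}}{\Psi},\ \forall c \in \check\Delta, \check\Psi .\ \wbarb[c]{\mc{C}} \Rightarrow \wbarb[c]{\mc{D}} }
\]
and show that $\relfnt{R}$ is a weak barbed simulation containing the given pair. Membership of $(\mc{C}, \mc{D})$ is exactly the hypothesis, and clause~(2) of \cref{def:sill-obs-equiv/barb-cont-congr:3} holds for every pair in $\relfnt{R}$ by construction, so the only remaining task is to verify the stepping clause~(1).

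The crux of the argument is that the stepping challenge can always be answered by the zero-step move $\mc{D}' = \mc{D}$, because weak barbs on interface channels are invariant under reduction. Concretely, suppose $(\mc{C}, \mc{D}) \in \relfnt{R}_{\Delta \vdash \Psi}$ and $\mc{C} \msstep \mc{C}'$. By preservation (\cref{prop:sill-background:1}) the interface is unchanged, so $\jcfgt{\Delta}{\mc{C}'}{\Psi}$. For any $c \in \check\Delta, \check\Psi$ with $\wbarb[c]{\mc{C}'}$, \cref{lemma:sill-obs-equiv/barb-cont-congr:1} gives $\wbarb[c]{\mc{C}}$ (since $\mc{C} \msstep \mc{C}'$ and the two agree on interface barbs), whence $\wbarb[c]{\mc{D}}$ by the defining condition of $\relfnt{R}$. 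Thus $(\mc{C}', \mc{D}) \in \relfnt{R}_{\Delta \vdash \Psi}$, and answering with $\mc{D} \steps \mc{D}$ reflexively discharges clause~(1). Since $\relfnt{R}$ is then a barbed simulation containing $(\mc{C}, \mc{D})$, we conclude $\jtrelc{\wbsim}{\Delta}{\mc{C}}{\mc{D}}{\Psi}$.

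I expect no serious obstacle: the entire content is concentrated in \cref{lemma:sill-obs-equiv/barb-cont-congr:1}, which trivializes the simulation game by making the stepping requirement vacuous for interface barbs. The only points requiring care are confirming that $\relfnt{R}$ is genuinely type-indexed (both components share the interface $(\Delta, \Psi)$, which preservation guarantees survives a step) and checking that the barb condition is quantified over exactly the interface channels $\check\Delta, \check\Psi$ in both the definition of barbed simulation and the statement, so that the two match up.
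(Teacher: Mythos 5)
Your proof is correct and takes essentially the same route as the paper: the easy direction follows from clause~(2) of the definition, and the substantive direction is discharged by exhibiting an explicit barbed simulation whose verification rests entirely on \cref{lemma:sill-obs-equiv/barb-cont-congr:1}. The only cosmetic difference is the choice of witness relation --- the paper uses the pairs of reducts $\{(\mc{C}',\mc{D}') \mid \mc{C} \mssteps \mc{C}' \land \mc{D} \mssteps \mc{D}'\}$, making the stepping clause trivial and invoking the lemma for the barb clause, whereas you use all pairs satisfying the barb implication, making the barb clause trivial and invoking the lemma (with a zero-step answer) for the stepping clause.
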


\begin{proof}
  Sufficiency is obvious.
  To see necessity, let $\relfnt{R}$ be the relation given by $\{ (\mc{C}', \mc{D}') \mid \mc{C} \mssteps \mc{C}' \land \mc{D} \mssteps \mc{D}' \}$.
  It is a barbed simulation.
  Indeed, it is closed under stepping by construction.
  Moreover, if $\mc{C}' \mathrel{\relfnt{R}} \mc{D}'$ and $\wbarb[c]{\mc{C}'}$ for some $c \in \check \Delta, \check \Psi$, then $\wbarb[c]{\mc{D}'}$.
  To see that this is so, observe that if $\wbarb[c]{\mc{C}'}$, then $\wbarb[c]{\mc{C}}$, so $\wbarb[c]{\mc{D}}$ by assumption.
  Then $\wbarb[c]{\mc{D}'}$ by \cref{lemma:sill-obs-equiv/barb-cont-congr:1}.
\end{proof}

\begin{proposition}
  \label{prop:sill-obs-equiv/barb-cont-congr:2}
  Barbed similarity not a precongruence relation on configurations.
\end{proposition}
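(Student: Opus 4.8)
The plan is to exhibit a counterexample showing that barbed similarity $\wbsim$ fails to be a precongruence. By \cref{prop:sill-obs-equiv/barb-cont-congr:8}, two configurations are barbed similar exactly when the weak-barb behaviour on interface channels is preserved, \ie, $\jtrelc{\wbsim}{\Delta}{\mc{C}}{\mc{D}}{\Psi}$ holds whenever every interface channel on which $\mc{C}$ eventually communicates is also eventually communicated on by $\mc{D}$. The key observation is that this characterization ignores \emph{which} messages are sent and only records \emph{whether} a channel ever carries a message. I would exploit this coarseness: pick two configurations that have identical weak-barb behaviour (so they are barbed similar) but that send \emph{different} labels or messages on a channel, so that a suitable context can later turn that difference into a difference in barb behaviour on some fresh channel.

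First I would choose configurations that send distinct labels on an internal-choice channel. Concretely, take $\jcfgt{\cdot}{\jproc{c}{\tSendL{c}{k_1}{\Omega}}}{c : \Tplus \{ k_1 : \Tu, k_2 : \Tu \}}$ and $\jcfgt{\cdot}{\jproc{c}{\tSendL{c}{k_2}{\Omega}}}{c : \Tplus \{ k_1 : \Tu, k_2 : \Tu \}}$, where $\Omega$ is the divergent process of \cref{ex:sill-background-typing-mult-rewr:2}. Both eventually send a message on $c$ (by \cref{eq:sill:msr-tplus-r}), so $\wbarb[c]{(\cdot)}$ holds for each; since $c$ is their only interface channel, \cref{prop:sill-obs-equiv/barb-cont-congr:8} gives that they are barbed similar. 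Next I would construct a context $\ctxh{\mc{B}}{\cdot}{c : \Tplus \{ k_1 : \Tu, k_2 : \Tu \}}$ that receives the label on $c$ using a $\ms{case}$ statement (rule \getrn{L-tplus}), and that branches so that it produces a message on a fresh interface channel $b$ if and only if the received label is $k_1$ (say, by sending a close message on $b$ in the $k_1$ branch and diverging via $\Omega$ in the $k_2$ branch). Plugging in the two configurations then yields, after executing through \cref{eq:sill:msr-tplus-l}, a configuration in which $\wbarb[b]{(\cdot)}$ holds in the first case but fails in the second.

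I would then verify the two resulting plugged configurations are \emph{not} barbed similar in the direction that matters: the one built from the $k_1$-sender satisfies $\wbarb[b]{(\cdot)}$ while the one built from the $k_2$-sender satisfies $\nwbarb[b]{(\cdot)}$, so by \cref{prop:sill-obs-equiv/barb-cont-congr:8} barbed similarity fails. Since barbed similarity was shown to hold \emph{before} plugging but fails \emph{after}, the relation is not closed under contexts, hence not a precongruence. I would carry out the argument by first computing the (essentially unique up to fairness) executions of both plugged configurations using the relevant multiset rewrite rules, then reading off the weak barbs via \cref{prop:sill-obs-equiv/barb-cont-congr:6}.

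The main obstacle I anticipate is making sure the context is well-typed and that the divergent branch genuinely suppresses the barb on $b$: I must confirm that $\Omega$ never produces any message whose carrier is $b$, which follows from \cref{rem:sill-background-typing-mult-rewr:5} (the divergent process communicates on none of its channels), and that the forwarding/closing in the $k_1$ branch does produce a barb on $b$, which follows from \cref{prop:sill-obs-equiv/barb-cont-congr:6}. A secondary subtlety is the direction of the simulation: since $\wbsim$ is a (one-directional) similarity, I must set up the example so that the failure occurs in the direction in which similarity is claimed to propagate through the context; choosing the $k_1$-sender as $\mc{C}$ and the $k_2$-sender as $\mc{D}$ and checking $\wbarb[b]{\ctxh[\mc{C}]{\mc{B}}{}{}}$ together with $\nwbarb[b]{\ctxh[\mc{D}]{\mc{B}}{}{}}$ pins this down cleanly.
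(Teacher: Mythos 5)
Your overall strategy is essentially the mirror image of the paper's own proof, and it exploits the same weakness: barbs record only \emph{that} a message was sent on a channel, never \emph{which} message. The paper plugs two vacuously bisimilar \emph{receivers} (case statements with swapped branches, which cannot step at all in isolation) into a context supplying a \emph{sender} $\tSendL{a}{l}{\tClose{a}}$; you plug two bisimilar \emph{senders} into a context supplying a receiver whose case statement converts the label difference into a barb difference. The skeleton is fine, and your use of \cref{prop:sill-obs-equiv/barb-cont-congr:8} to establish similarity before plugging and refute it after plugging is exactly right.

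However, your concrete instantiation of the distinguishing branch fails, and this is a genuine gap. In the $k_1$ branch of your context's case statement, the channel $c$ --- which after the label exchange has type $\Tu$ --- remains in the linear context, and rule \getrn{R-tu} requires an \emph{empty} linear context, so the continuation $\tClose{b}$ is ill-typed: you cannot ``send a close message on $b$'' directly. The two well-typed repairs you gesture at both fail against your particular senders $\jproc{c}{\tSendL{c}{k_i}{\Omega}}$: since $\Omega$ communicates on none of its channels (\cref{rem:sill-background-typing-mult-rewr:5}), no message ever arrives on the continuation channel of $c$ after the label, so $\tWait{c}{\tClose{b}}$ blocks forever and $\tFwdP{c}{b}$ has nothing to forward. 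In either case \emph{neither} plugged configuration exhibits a barb on $b$, the two plugged configurations are again barbed similar by \cref{prop:sill-obs-equiv/barb-cont-congr:8}, and the counterexample collapses.

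Two small repairs rescue the argument. Either (i) give $b$ an internal choice type and emit a label in the $k_1$ branch, \eg\ take $E_1 = \tSendL{b}{\mt{y}}{\Omega}$ with $b : \Tplus\{\mt{y} : \Tu\}$, which is well-typed with $c : \Tu$ still in context (rule \getrn{R-tplus} does not require an empty context) and produces a barb on $b$ unconditionally --- this is precisely the $Y$-process device the paper uses in \cref{sec:sill-obs-equiv:extern-observ:relat-barbs-ext}; or (ii) keep $\tWait{c}{\tClose{b}}$ in the $k_1$ branch but change the senders to $\jproc{c}{\tSendL{c}{k_i}{\tClose{c}}}$, so the close message actually arrives and the wait completes. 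With either repair your proof goes through.
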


\begin{proof}
  Recall the process $\Omega$ from \cref{ex:sill-background-typing-mult-rewr:2}.
  Consider the processes $P$, $Q$, and $R$ respectively given by:
  \begin{align*}
    &\jtypem{\cdot}{a : \Tplus \{ l : \Tu, r : \Tu \}}{\tCase{a}{\{ l \Rightarrow \tWait{a}{\tClose c} \mid  r \Rightarrow \Omega \}}}{c}{\Tu},\\
    &\jtypem{\cdot}{a : \Tplus \{ l : \Tu, r : \Tu \}}{\tCase{a}{\{ l \Rightarrow \Omega \mid  r \Rightarrow \tWait{a}{\tClose c} \}}}{c}{\Tu},\\
    &\jtypem{\cdot}{\cdot}{\tSendL{a}{l}{\tClose{a}}}{a}{\Tplus \{ l : \Tu, r : \Tu \}}.
  \end{align*}
  We have the following pairs of barbed bisimilar configurations:
  \begin{align*}
    &\jtrelc{\wbbisim}{a : \Tplus \{ l : \Tu, r : \Tu \}}{\jproc{c}{P}}{\jproc{c}{Q}}{c : \Tu}\\
    &\jtrelc{\wbbisim}{\cdot}{\jproc{a}{R}}{\jproc{a}{R}}{(a : \Tplus \{ l : \Tu, r : \Tu \})}
  \end{align*}
  However, barbed similarity is not contextual (so not a congruence relation), for
  \[
    \jtrelc{\not\wbsim}{\cdot}{\jproc{a}{R},\jproc{c}{P}}{\jproc{a}{R},\jproc{c}{Q}}{c : \Tu}.
  \]
  Indeed, the left side steps to the configuration $\jproc{c}{\tClose{c}}$ which satisfies the barb $\wbarb[c]{(\cdot)}$, while right side cannot step to a configuration satisfying this barb.
\end{proof}

\begin{definition}
  \label{def:sill-obs-equiv/extern-observ:2}
  \defin{(Weak) barbed precongruence} $\mkcg{\wbsim}$ is the contextual interior of barbed similarity, \ie,\\
  ${\jtrelc{\mkcg{\wbsim}}{\Gamma}{\mc{C}}{\mc{D}}{\Xi}}$ if and only if if $\jtrelc{\wbsim}{\Delta}{\ctxh[\mc{C}]{\mc{E}}{\Gamma}{\Xi}}{\ctxh[\mc{C}]{\mc{E}}{\Gamma}{\Xi}}{\Lambda}$ for all contexts $\jcfgt{\Delta}{\ctxh{\mc{E}}{\Gamma}{\Xi}}{\Lambda}$.
\end{definition}

\subsubsection{Relating Barbed Simulations and External-style Observational Simulations}
\label{sec:sill-obs-equiv:extern-observ:relat-barbs-ext}

We show that barbed precongruence \(\mkcg{\wbsim}\) coincides with external observational simulation \(\eocssim\).
In fact, we show a more general result.
Let $E = (\mc{X}^E, {\unirel})$ be the external observation system given by \cref{def:sill-obs-equiv/main-equiv:3}, and let $\leqslant$ be a preorder such that $(\mc{X}^E, {\leqslant})$ is an observation system.
We instead relate $\mkcg{\wbsim}$ and $\mkcg{\left(\sobssim{(\mc{X}^E, {\leqslant})}\right)}$; the result for $\eocssim$ will follow as a special case.

\begin{proposition}
  \label{prop:sill-obs-equiv/barb-cont-congr:7}
  If $\jtrelc{\sobssim{(\mc{X}^E, {\leqslant})}}{\Gamma}{\mc{C}}{\mc{D}}{\Delta}$, then $\jtrelc{\wbsim}{\Gamma}{\mc{C}}{\mc{D}}{\Delta}$.
\end{proposition}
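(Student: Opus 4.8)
The plan is to reduce barbed similarity to a purely observational condition and then feed it a single, maximally informative experiment. By \cref{prop:sill-obs-equiv/barb-cont-congr:8}, to establish $\jtrelc{\wbsim}{\Gamma}{\mc{C}}{\mc{D}}{\Delta}$ it suffices to show that for every interface channel $c \in \check\Gamma, \check\Delta$, if $\wbarb[c]{\mc{C}}$ then $\wbarb[c]{\mc{D}}$. So the whole proof collapses to a channel-by-channel argument about weak barbs on the interface, and these barbs can be detected observationally: by \cref{cor:sill-obs-equiv/extern-observ:3}, for $c \in \freecn(\mc{C})$ we have $\wbarb[c]{\mc{C}}$ if and only if $\obsbr{\jcfgt{\Gamma}{\mc{C}}{\Delta}}_c(c) \neq \bot$, and likewise for $\mc{D}$.

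Next I would extract the needed observational comparison from the hypothesis $\jtrelc{\sobssim{(\mc{X}^E, {\leqslant})}}{\Gamma}{\mc{C}}{\mc{D}}{\Delta}$ by instantiating it with the trivial pure-hole context $\ctxh{}{\Gamma}{\Delta}$. This is a legitimate external experiment: the hole axiom \getrn{conf-h} gives $\jcfgti{\Gamma}{\cdot}{\ctxh{}{\Gamma}{\Delta}}{\Delta}$, so its exterior channels are $\Gamma\Delta$ and $(\ctxh{}{\Gamma}{\Delta}, \Gamma\Delta) \in \mc{X}^E_{\Gamma \vdash \Delta}$. Plugging $\mc{C}$ (respectively $\mc{D}$) into the hole yields $\mc{C}$ (respectively $\mc{D}$) itself by \cref{def:sill-obs-equiv/relat-equiv:1}, so external simulation specializes to $\obsbr{\jcfgt{\Gamma}{\mc{C}}{\Delta}}_{\Gamma\Delta} \commsim[\leqslant] \obsbr{\jcfgt{\Gamma}{\mc{D}}{\Delta}}_{\Gamma\Delta}$. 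Unfolding the componentwise lifting of $\commsim[\leqslant]$ to tuples gives, for each $c : C \in \Gamma\Delta$, the relation $\jsynt{\obsbr{\jcfgt{\Gamma}{\mc{C}}{\Delta}}(c) \commsim[\leqslant] \obsbr{\jcfgt{\Gamma}{\mc{D}}{\Delta}}(c)}{C}$.

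Finally I would close the argument by a one-line inversion on communication simulation. Fix $c$ with $\wbarb[c]{\mc{C}}$, so $v \coloneq \obsbr{\jcfgt{\Gamma}{\mc{C}}{\Delta}}(c) \neq \bot$, and write $w \coloneq \obsbr{\jcfgt{\Gamma}{\mc{D}}{\Delta}}(c)$ with $\jsynt{v \commsim[\leqslant] w}{C}$. Inspecting the rules defining $\commsim[\leqslant]$, the only rule whose right-hand communication may be $\bot$ is \getrn{CS-bot}, and that rule forces its left-hand communication to be $\bot$ as well; every other rule requires the two communications to share the same top-level constructor. Hence $v \neq \bot$ rules out $w = \bot$, so $w \neq \bot$, and \cref{cor:sill-obs-equiv/extern-observ:3} then yields $\wbarb[c]{\mc{D}}$, completing the barb-preservation check. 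I do not expect a genuine obstacle here: the only two points needing care are verifying that the pure-hole context qualifies as an external experiment with exterior channels exactly $\Gamma\Delta$, and performing the trivial inversion showing $\bot$ can only appear on the right of $\commsim[\leqslant]$ when it also appears on the left.
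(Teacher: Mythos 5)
Your proof is correct and follows essentially the same route as the paper's: reduce to barb preservation via \cref{prop:sill-obs-equiv/barb-cont-congr:8}, translate barbs into non-$\bot$ observations via \cref{cor:sill-obs-equiv/extern-observ:3}, and transfer non-$\bot$-ness across the hypothesis. The only difference is that you make explicit two steps the paper leaves implicit --- instantiating the external simulation with the trivial hole experiment $(\ctxh{}{\Gamma}{\Delta}, \Gamma\Delta)$ and the inversion showing $\commsim[\leqslant]$ cannot relate a non-$\bot$ communication to $\bot$ --- which is a faithful elaboration rather than a different argument.
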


\begin{proof}
  Assume that $\jtrelc{\sobssim{(\mc{X}^E, {\leqslant})}}{\Gamma}{\mc{C}}{\mc{D}}{\Delta}$.
  Let $c \in \check \Gamma, \check \Delta$ be arbitrary.
  By \cref{cor:sill-obs-equiv/extern-observ:3}, $\wbarb[c]{\mc{C}}$ if and only if $\obsbr{\jcfgt{\Gamma}{\mc{C}}{\Xi}}(c) \neq \bot$.
  In this case, $\obsbr{\jcfgt{\Gamma}{\mc{D}}{\Delta}}(c) \neq \bot$, which holds if and only if $\wbarb[c]{\mc{D}}$.
  We conclude that $\jtrelc{\wbsim}{\Gamma}{\mc{C}}{\mc{D}}{\Delta}$ by \cref{prop:sill-obs-equiv/barb-cont-congr:8}.
\end{proof}

\begin{corollary}
  \label{cor:sill-obs-equiv/extern-observ:4}
  If $\jtrelc{\mkcg{\left(\sobssim{(\mc{X}^E, {\leqslant})}\right)}}{\Gamma}{\mc{C}}{\mc{D}}{\Delta}$, then $\jtrelc{\mkcg{\wbsim}}{\Gamma}{\mc{C}}{\mc{D}}{\Delta}$.
\end{corollary}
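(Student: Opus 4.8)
The plan is to derive this corollary directly from \cref{prop:sill-obs-equiv/barb-cont-congr:7} together with the monotonicity of the contextual interior established in \cref{lemma:sill-obs-equiv/observ-comm-equiv:2}. First I would observe that \cref{prop:sill-obs-equiv/barb-cont-congr:7}, read uniformly across all interfaces $(\Gamma, \Delta)$, asserts exactly the containment of type-indexed relations ${\sobssim{(\mc{X}^E, {\leqslant})}} \subseteq {\wbsim}$: every pair of configurations related by external-style observational $(\mc{X}^E, {\leqslant})$-simulation is also barbed similar at the same interface.

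Next I would apply the contextual-interior operation $\mkcg{({-})}$ to both sides of this containment. By \cref{lemma:sill-obs-equiv/observ-comm-equiv:2}, taking the contextual interior is a monotone operation, so the containment is preserved, yielding ${\mkcg{\left(\sobssim{(\mc{X}^E, {\leqslant})}\right)}} \subseteq {\mkcg{\wbsim}}$. Since $\mkcg{\wbsim}$ is by definition (\cref{def:sill-obs-equiv/extern-observ:2}) the contextual interior of barbed similarity, the right-hand side is precisely weak barbed precongruence, and the left-hand side is the contextual interior of external observational simulation. Unpacking this inclusion at the interface $(\Gamma, \Delta)$ gives the desired implication: if $\jtrelc{\mkcg{\left(\sobssim{(\mc{X}^E, {\leqslant})}\right)}}{\Gamma}{\mc{C}}{\mc{D}}{\Delta}$, then $\jtrelc{\mkcg{\wbsim}}{\Gamma}{\mc{C}}{\mc{D}}{\Delta}$.

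There is no substantive obstacle here: the entire content is already carried by \cref{prop:sill-obs-equiv/barb-cont-congr:7}, and the only point to verify is that the inclusion it provides is a genuine pointwise containment of type-indexed relations, rather than a statement about a single fixed interface, so that the monotone operator of \cref{lemma:sill-obs-equiv/observ-comm-equiv:2} can legitimately be applied. Because \cref{prop:sill-obs-equiv/barb-cont-congr:7} is quantified over arbitrary $\Gamma$, $\Delta$, $\mc{C}$, and $\mc{D}$, this containment holds componentwise, and the argument goes through immediately.
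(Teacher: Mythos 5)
Your proof is correct and is exactly the intended argument: the paper states this as an immediate corollary of \cref{prop:sill-obs-equiv/barb-cont-congr:7}, and the implicit justification is precisely the pointwise containment ${\sobssim{(\mc{X}^E, {\leqslant})}} \subseteq {\wbsim}$ combined with the monotonicity of the contextual interior from \cref{lemma:sill-obs-equiv/observ-comm-equiv:2}. Your additional check that the proposition is quantified over all interfaces, so the containment really is one of type-indexed relations, is the right (and only) point needing verification.
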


The converse of \cref{prop:sill-obs-equiv/barb-cont-congr:7} is false.
This is because barbs do not distinguish between sent messages, but only identify that a message was sent.
Concretely,
\[
  \jtrelc{\wbsim}{\cdot}{\jproc{c}{\tSendL{c}{\mt{0}}{\tClose{c}}}}{\jproc{c}{\tSendL{c}{\mt{1}}{\tClose{c}}}}{c : \Tplus \{ \mt{0} : \Tu, \mt{1} : \Tu \}},
\]
but
\begin{align*}
  &\obsbr{\jcfgt{\cdot}{\jproc{c}{\tSendL{c}{\mt{0}}{\tClose{c}}}}{c : \Tplus \{ \mt{0} : \Tu, \mt{1} : \Tu \}}}\\
  &= (c : (\mt{0}, \cclose))\\
  &\ncommsim[\leqslant] (c : (\mt{1}, \cclose))\\
  &= \obsbr{\jcfgt{\cdot}{\jproc{c}{\tSendL{c}{\mt{1}}{\tClose{c}}}}{c : \Tplus \{ \mt{0} : \Tu, \mt{1} : \Tu \}}}.
\end{align*}

Despite this, the converse of \cref{cor:sill-obs-equiv/extern-observ:4} holds under certain reasonable hypotheses.
Assume that $\jtrelc{\mkcg{\wbsim}}{\Gamma}{\mc{C}}{\mc{D}}{\Xi}$.
We must show that $\jtrelc{\mkcg{\left(\sobssim{(\mc{X}^E, {\leqslant})}\right)}}{\Lambda}{\mc{C}}{\mc{D}}{\Xi}$.
By \cref{prop:sill-obs-equiv/main-equiv:2}, it is sufficient to show that $\jtrelc{\sobssim{(\mc{X}^E, {\leqslant})}}{\Lambda}{\mc{C}}{\mc{D}}{\Xi}$.
We reduce this problem to the following:

\begin{problem}
  \label{question:sill-obs-equiv/extern-observ:1}
  Consider an observed communication $\obsbr{\jcfgt{\Gamma}{\mc{C}}{\Delta}} = (c : v_c)_c$.
  Can we construct an experiment context\footnote{Though they serve similar purposes, these should not be confused with the experiments of observation systems.} $\jcfgt{\widehat{\Gamma}}{\ctxh{\mc{E}}{\Gamma}{\Delta}}{\widehat{\Delta}}$ (for some $\widehat{\Gamma}$ and $\widehat{\Delta}$ determined from $\Gamma$ and $\Delta$) such that for all $\jcfgt{\Gamma}{\mc{D}}{\Delta}$, $\wbarb[\widehat{c}]{\ctxh[\mc{D}]{\mc{E}}{\Gamma}{\Delta}}$ for all $\widehat{c} : \widehat{A} \in \widehat{\Gamma}, \widehat{\Delta}$ if and only if $\jtrelc{\sobssim{(\mc{X}^E, {\leqslant})}}{\Gamma}{\mc{C}}{\mc{D}}{\Delta}$?
\end{problem}

Indeed, given a solution to \cref{question:sill-obs-equiv/extern-observ:1}, we can show that $\jtrelc{\mkcg{\wbsim}}{\Gamma}{\mc{C}}{\mc{D}}{\Xi}$ implies $\jtrelc{\sobssim{(\mc{X}^E, {\leqslant})}}{\Lambda}{\mc{C}}{\mc{D}}{\Xi}$ as follows.
Let \(\ctxh{\mc{E}}{}{}\) be given by the solution.
By reflexivity, $\jtrelc{\sobssim{(\mc{X}^E, {\leqslant})}}{\Lambda}{\mc{C}}{\mc{C}}{\Xi}$.
By construction, $\wbarb[\widehat{c}]{\ctxh[\mc{C}]{\mc{E}}{\Gamma}{\Delta}}$ for all $\widehat{c} : \widehat{A} \in \widehat{\Gamma}, \widehat{\Delta}$.
By assumption, $\jtrelc{\wbsim}{\widehat{\Gamma}}{\ctxh[\mc{C}]{\mc{E}}{\Gamma}{\Delta}}{\ctxh[\mc{D}]{\mc{E}}{\Gamma}{\Delta}}{\widehat{\Xi}}$.
So by \cref{prop:sill-obs-equiv/barb-cont-congr:8}, $\wbarb[\widehat{c}]{\ctxh[\mc{D}]{\mc{E}}{\Gamma}{\Delta}}$ for all $\widehat{c} : \widehat{A} \in \widehat{\Gamma}, \widehat{\Delta}$.
Then $\jtrelc{\sobssim{(\mc{X}^E, {\leqslant})}}{\Gamma}{\mc{C}}{\mc{D}}{\Delta}$ by construction of $\ctxh{\mc{E}}{\Gamma}{\Delta}$.

The answer to \cref{question:sill-obs-equiv/extern-observ:1} is ``no, but almost''.
A finite number of experiment contexts is insufficient in the presence of recursion and channel transmission (\cf \cite[38]{hennessy_1983:_synch_async_exper_proces}).
We will instead construct families of contexts, one for each height $n$ approximation of the communications in $(c : v_c)_c$.
These height \(n\) approximations determine the \(v_c\) by \cref{prop:sill-obs-equiv/barb-cont-congr:11}, and we will use this fact to conclude the general result.

For brevity, we write $\commsim$ for $\commsim[\leqslant]$ in the remainder of this \namecref{sec:sill-obs-equiv:extern-observ:relat-barbs-ext}.

Before giving the full construction of experiment contexts, we illustrate the approach by a sequence of examples.
In these examples, we only consider configurations of the form $\jcfgt{\cdot}{\mc{C}}{a : A}$.
Given a channel name $a$, let $\widehat a$ be a globally fresh channel name.
Let $\mathbb{Y}^+$ be the ``positive answer type'' $\Tplus \{ \mt{y} : \mb{0}^+ \}$, where $\mb{0}^+$ is the positive empty type $\Tplus \{ \}$.
We will construct non-empty sets $\mc{E}(n, \jsynt{v}{A})$ of processes of type $\jtypem{\cdot}{a : A}{E}{\widehat{a}}{\mathbb{Y}^+}$.
They will satisfy the following weakened form of the general result (\cref{prop:sill-obs-equiv/extern-observ:1}):

\begin{proposition}
  Fix some $\jsynt{v}{A}$.
  Let $\jcfgt{\cdot}{\mc{C}}{a : A}$ be arbitrary, and set $w = \obsbr{\jcfgt{\cdot}{\mc{C}}{a : A}}(a)$.
  For all $n$, $\jsynt{\capxn{v}{n + 1} \commsim w}{A}$ if and only if $\obsbr{\jcfgt{\cdot}{\mc{C}, \jproc{\widehat a}{E}}{\widehat a : \mathbb{Y}^+}}(\widehat a) = (\mt{y}, \bot)$ for all $E \in \mc{E}(n, \jsynt{v}{A})$.
\end{proposition}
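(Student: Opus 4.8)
The plan is to first pin down the family $\mc{E}(n, \jsynt{v}{A})$ and then prove the biconditional by induction on $n$, quantifying uniformly over $A$, over $\jsynt{v}{A}$, and over the configuration $\jcfgt{\cdot}{\mc{C}}{a : A}$. Each $E \in \mc{E}(n, \jsynt{v}{A})$ will be a client process $\jtypem{\cdot}{a : A}{E}{\widehat{a}}{\mathbb{Y}^+}$ that probes the communication carried on $a$ following the shape of $v$ down to depth $n+1$ and emits the success label $\mt{y}$ on $\widehat{a}$ exactly when the probe matches. Concretely, I would define $\mc{E}$ by induction on $n$ and recursion on $v$: for $v = \bot$ the unique experiment sends $\mt{y}$ immediately and then diverges as a provider of the empty type $\mb{0}^+$; for a close, unfold, or shift communication the experiment consumes the corresponding message on $a$ and recurses on the continuation; for $v = (k, v')$ of choice type it receives a label, continuing with a sub-experiment for $v'$ if the label is $k$ and blocking (never emitting $\mt{y}$) otherwise; and for a pair $(v_1, v_2)$ of type $A \Tot B$ or $A \Tlolly B$ the set is the union of experiments probing $v_1$ on the transmitted channel with those probing $v_2$ on the continuation. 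This union is exactly what encodes the conjunctions in the rules defining $\commsim$ (\getrn{CS-tot}, \getrn{CS-tplus}, etc.): demanding that \emph{all} experiments succeed mirrors demanding that \emph{all} premises of the relevant $\commsim$ rule hold.

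Two structural facts underlie both directions. First, a communication of type $\mathbb{Y}^+ = \Tplus\{\mt{y} : \mb{0}^+\}$ is, by the coinductive typing rules and type agreement (\cref{theorem:main:2}), either $\bot$ or $(\mt{y}, \bot)$, so that the conditions $\obsbr{\mc{C}, \jproc{\widehat{a}}{E}}(\widehat{a}) = (\mt{y}, \bot)$ and $\obsbr{\mc{C}, \jproc{\widehat{a}}{E}}(\widehat{a}) = \bot$ are complementary. Second, $\obsbr{\mc{C}, \jproc{\widehat{a}}{E}}(\widehat{a}) \neq \bot$ is equivalent to the weak barb $\wbarb[\widehat{a}]{\mc{C}, \jproc{\widehat{a}}{E}}$ by \cref{cor:sill-obs-equiv/extern-observ:3}. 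I would also use that observations are independent of the chosen fair execution (\cref{theorem:main:1}), that they are determined entirely by their observable message facts (\cref{prop:sill-obs-equiv/observ-comm:5}), that each channel carries at most one message (\cref{cor:sill-background-dyn-prop-typed-config:1}), and that they are closed under stepping (\cref{prop:sill-obs-equiv/observ-comm:8}).

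For the forward direction, assuming $\jsynt{\capxn{v}{n+1} \commsim w}{A}$, I would take a fair execution of the composition $\mc{C}, \jproc{\widehat{a}}{E}$. Because $\capxn{v}{n+1} \commsim w$ forces the top message of $w$ on $a$, if any, to have exactly the kind and label that $E$ expects, $E$ can consume it and advance; by preservation and \cref{prop:sill-obs-equiv/observ-comm-equiv:2} the residual composition is, up to renaming of the continuation channel, of the form $\mc{C}', \jproc{\widehat{a}}{E'}$ with $E'$ a sub-experiment in $\mc{E}(n-1, v')$ and with the communication on the relevant sub-channel equal to the corresponding child $w'$, which satisfies $\capxn{v'}{n} \commsim w'$. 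The induction hypothesis then yields $(\mt{y}, \bot)$ for the residual, and closure under stepping (\cref{prop:sill-obs-equiv/observ-comm:8}) transports this back to the original composition. The base case $n = 0$ is immediate, since $\capxn{v}{1}$ constrains only the top message and $\mc{E}(0, v)$ probes only that message.

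The reverse direction I would prove by contraposition: if $\jsynt{\capxn{v}{n+1} \ncommsim w}{A}$ then, as $\capxn{v}{n+1}$ is a finite tree, there is a first point at which $w$ fails to match it — a differing message kind or label, a place where $w = \bot$ but $\capxn{v}{n+1}$ carries a message, or a failure of the value side condition $f \leqslant f'$. I would select the experiment in $\mc{E}(n, v)$ probing precisely the branch to that point; it then blocks, or, by the induction hypothesis applied to the diverging sub-channel, its residual observes $\bot$, so the composition never emits $\mt{y}$ and the observation is $\bot$. The main obstacle, common to both directions, is the decomposition step: rigorously reducing $\obsbr{\mc{C}, \jproc{\widehat{a}}{E}}(\widehat{a})$ to a recursive observation on a channel-disjoint sub-configuration, which requires combining preservation, non-interference (\cref{prop:sill-obs-equiv/observ-comm:11}), the forwarding-renaming identities (\cref{prop:sill-obs-equiv/observ-comm-equiv:2}), and fairness to guarantee that an awaited message actually arrives. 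The genuinely delicate sub-case is value transmission, where encoding a test of the abstract side condition $f \leqslant f'$ into a barb-observable experiment is where the construction of $\mc{E}$ must do real work and where I expect the bulk of the argument to concentrate.
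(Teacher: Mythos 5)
Your construction and proof plan follow the same route as the paper, which proves this statement as the special case $\Gamma = \cdot$ of the general result (\cref{prop:sill-obs-equiv/extern-observ:1}): there too the experiment sets (written $\mc{E}_R(n, i, r, \jsynt{v}{A})$) are defined by induction on $n$ and recursion on $v$, with a singleton $\{Y_r\}$ for $\bot$, label-matching experiments that diverge on a mismatch, and a two-part union for $A \Tot B$ (one half probing the transmitted channel, the other the continuation); the biconditional is then proved by induction on $n$ and case analysis on $\jsynt{v}{A}$, using fairness, preservation, and closure of observations under stepping, as you propose. However, your proposal has two genuine gaps. The first is value transmission: you flag the side condition $\jtrelf{\leqslant}{\cdot}{f}{f'}{\tau}$ as the place where the bulk of the work lies, but you offer no mechanism for it, and no cleverness in the construction alone can supply one --- $\leqslant$ is an arbitrary preorder on terms, and no fixed experiment process can decide an arbitrary semantic relation between a value it receives and the value $f$ recorded in $v$. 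The paper resolves this only by \emph{assuming} oracle processes $O^{\leqslant}_{f : \tau}$ (\cref{ass:sill-obs-equiv/extern-observ:1}): the experiment for $(\cval f, u)$ at type $\Tand{\tau}{A}$ cuts against the oracle, forwards the received value to it, and branches on its $\mt{tt}/\mt{ff}$ verdict. Without this hypothesis (or an equivalent one) the $\Tand{\tau}{A}$ case of the biconditional cannot be established, so your induction cannot be closed.

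The second gap is your treatment of negative types. For $A \Tlolly B$ you propose probing both components of the pair, but this breaks the biconditional: $A \Tlolly B$ is negative, so the provided channel $a$ is an input channel of $\mc{C}$, and therefore $w = \obsbr{\jcfgt{\cdot}{\mc{C}}{a : A \Tlolly B}}(a) = \bot$ always (\cref{prop:sill-obs-equiv/observ-comm:6}); hence $\jsynt{\capxn{v}{n+1} \commsim w}{A \Tlolly B}$ holds exactly when $v = \bot$. A probing experiment, however, must itself send a channel over $a$, and may then elicit and successfully test communications from $\mc{C}$ that are no part of $w$ at all; so for $v \neq \bot$ all of your experiments can emit $\mt{y}$ while the left-hand side of the biconditional is false. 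The paper's clause for every negative type is deliberately trivial --- $\{Y_r\}$ when $v = \bot$ and $\{\Omega\}$ otherwise --- precisely because nothing can ever be observed on a provided negative channel; the same remark applies to $\Timp{\tau}{A}$ and $\Tus{A}$, which your proposal does not address.
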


In particular, by \cref{cor:sill-obs-equiv/extern-observ:3}, $\obsbr{\jcfgt{\cdot}{\mc{C}, \jproc{\widehat a}{E}}{\widehat a : \mathbb{Y}^+}}(\widehat a) = (\mt{y}, \bot)$ if and only if $\wbarb[\widehat{a}]{\left(\mc{C}, \jproc{\widehat a}{E}\right)}$.
The set $\mc{E}(n, \jsynt{v}{A})$ checks that $\jsynt{\capxn{v}{n + 1} \commsim w}{A}$ instead of $\jsynt{\capxn{v}{n} \commsim w}{A}$ because it is always the case that $\jsynt{\capxn{w}{0} \commsim v}{A}$, and doing so simplifies the definition.
These sets of processes induce sets of experiment contexts: each process \(E\) gives the configuration context \(\ctxh{}{}{a : A},\jproc{\widehat a}{E}\).
Let $Y_c$ be the process $\tSendL{c}{\mt{y}}{\Omega}$, where $\Omega$ is given at each type by \cref{ex:sill-background-typing-mult-rewr:2}.

\begin{example}
  \label{ex:sill-obs-equiv/extern-observ:1}
  Set $v = (k, \cclose)$ and $A = \Tplus \{ k : \Tu, l : \Tu \}$.
  Set $w = \obsbr{\jcfgt{\cdot}{\mc{C}}{a : \Tplus \{ k : \Tu, l : \Tu \}}}(a)$ for some configuration $\mc{C}$.

  Recall that $\capxn{v}{1} = (k, \bot)$, and observe that $\jsynt{\capxn{(k, \cclose)}{1} \commsim w}{A}$ if and only if $\mc{C}$ sent the label $k$ on $a$, so if and only if
  \[
    \obsbr{\jcfgt{}{\mc{C}, \jproc{\widehat a}{\tCase{a}{\{ k \Rightarrow Y_{\widehat a} \mid \uscore \Rightarrow \Omega \}}}}{\widehat a : \mathbb{Y}^+}}(\widehat a) = (\mt{y}, \bot).
  \]
  Take $\mc{E}(0, \jsynt{v}{A}) = \{ \tCase{a}{\{ k \Rightarrow Y_{\widehat c} \mid \uscore \Rightarrow \Omega \} } \}$.

  Now observe that $\jsynt{\capxn{v}{2} \commsim w}{A}$ if and only if
  \[
    \obsbr{\jcfgt{}{\mc{C}, \jproc{\widehat a}{\tCase{a}{\{ k \Rightarrow \tWait{c}{Y_{\widehat c}} \mid \uscore \Rightarrow \Omega \}}}}{\widehat a : \mathbb{Y}^+}}(\widehat a) = (\mt{y}, \bot).
  \]
  Take $\mc{E}(1, \jsynt{v}{A}) = \{\tCase{a}{\{ k \Rightarrow \tWait{a}{Y_{\widehat c}} \mid \uscore \Rightarrow \Omega \}}\}$.
  Because $\capxn{v}{n} = \capxn{v}{2}$ for $n \geq 2$, take $\mc{E}(n, \jsynt{v}{A}) = \mc{E}(1, \jsynt{v}{A})$ for all $n \geq 1$.
\end{example}

Next, we illustrate why channel transmission forces $\mc{E}(n, \jsynt{v}{A})$ to be a set:

\begin{example}
  \label{ex:sill-obs-equiv/extern-observ:2}
  Set $v = ((\mt{l}, \cclose), (\mt{r}, \bot))$ and $A = \left(\Tplus \{ \mt{l} : \Tu \}\right) \Tot \left(\Tplus \{ \mt{r} : \Tu \}\right)$.
  Set $w = \obsbr{\jcfgt{}{\mc{C}}{a : A}}(a)$ for some configuration $\mc{C}$.
  Clearly, $\jsynt{\capxn{v}{1} = (\bot, \bot) \commsim w}{A}$ if and only if
  \[
    \obsbr{\jcfgt{}{\mc{C}, \jproc{\widehat a}{\tRecvC{a}{c}{Y_{\widehat c}}}}{\widehat a : \mathbb{Y}^+}}(\widehat a) = (\mt{y}, \bot).
  \]
  Our task becomes harder when we consider $\capxn{v}{2} = ((\mt{l}, \bot), (\mt{r}, \bot))$: we must somehow inspect the communications on $a$ and also those on $c$, and return a result on $\widehat a : \mathbb{Y}^+$.
  We do so by using two experiment contexts.
  Indeed, $\jsynt{\capxn{v}{2} \commsim w}{A}$ if and only if both
  \begin{align*}
    \obsbr{\jcfgt{}{\mc{C}, \jproc{\widehat a}{\tRecvC{a}{c}{\tCase{c}{\{ \mt{l} \Rightarrow Y_{\widehat a} \}}}}}{\widehat a : \mathbb{Y}^+}}(\widehat a) &= (\mt{y}, \bot),\\
    \obsbr{\jcfgt{}{\mc{C}, \jproc{\widehat a}{\tRecvC{a}{c}{\tCase{a}{\{ \mt{r} \Rightarrow Y_{\widehat a} \}}}}}{\widehat a : \mathbb{Y}^+}}(\widehat a) &= (\mt{y}, \bot).
  \end{align*}
  Accordingly, we take \(\mc{E}(1, \jsynt{v}{A}) = \{ \tRecvC{a}{c}{\tCase{c}{\{ \mt{l} \Rightarrow Y_{\widehat a} \}}}, \tRecvC{a}{c}{\tCase{a}{\{ \mt{r} \Rightarrow Y_{\widehat a} \}}} \}. \)
\end{example}

Having illustrated the approach, we define the family $\mc{E}_R(n, i, r, \jsynt{v}{A})$ of experiment processes by induction on $n$ and recursion on $\jsynt{v}{A}$, where $i$ is the ``input channel'' whose communications we are examining ($a$ in the above examples), and $r$ is the ``results channel'' ($\widehat a$ in the above examples).
We can lift these experiment processes to testing contexts in the obvious manner.
The family $\mc{E}_R(n, i, r, \jsynt{v}{A})$ checks that communications $w$ on $i$ satisfy $\jsynt{\capxn{v}{n + 1} \commsim w}{A}$.

We maintain the invariant that if $E \in \mc{E}_R(n, i, r, \jsynt{v}{A})$, then $\jtypem{\cdot}{i : A}{E}{r}{\mathbb{Y}^+}$ and $E$ can be weakened\footnote{Though the type system is linear, the presence of unbounded recursion allows us to ignore channels in non-terminating processes. See \cref{rem:sill-background-typing-mult-rewr:5} for a discussion of this fact.} to $\jtypem{\cdot}{\Delta, i : A}{E}{r}{\mathbb{Y}^+}$ for all $\Delta$ not mentioning $i$ or $r$.
In particular, the processes in $\mc{E}_R(n, i, r, \jsynt{v}{A})$ always listen from left and report results on the right; we will consider the symmetric case $\mc{E}_L(n, i, r, \jsynt{v}{A})$ later.
We also maintain the invariant that $E \in \mc{E}_R(n, i, r, \jsynt{v}{A})$ if and only if $\subst{i'}{i}{E} \in \mc{E}_R(n, i', r, \jsynt{v}{A})$ for $i \neq r$ and $i' \neq r$.
\begin{align*}
  \mc{E}_R(n, i, r, \jsynt{\bot}{A}) &= \left\{ Y_{r} \right\}\\
  \mc{E}_R(n, i, r, \jsynt{\cclose}{\Tu}) &= \left\{ \tWait{i}{Y_{r}} \right\}\\
  \mc{E}_R(0, i, r, \jsynt{(k, \uscore)}{\Tplus \{ l : A_l \}_{l \in L}}) &= \left\{ \tCase{i}{\{ k \Rightarrow Y_{r} \mid \uscore \Rightarrow \Omega \}} \right\}\\
  \mc{E}_R(n + 1, i, r, \jsynt{(k, v)}{\Tplus \{ l : A_l \}_{l \in L}}) &= \left\{ \tCase{i}{\{ k \Rightarrow E \mid \uscore \Rightarrow \Omega \}} \mid E \in \mc{E}_R(n, i, r, \jsynt{v}{A_k})  \right\}\\
  \mc{E}_R(0, i, r, \jsynt{(\cunfold, \uscore)}{\Trec{\alpha}{A}}) &= \{ \tRecvU{i}{Y_r} \}\\
  \mc{E}_R(n + 1, i, r, \jsynt{(\cunfold, v)}{\Trec{\alpha}{A}}) &= \left\{ \tRecvU{i}{E} \mid E \in \mc{E}_R(n, i, r, \jsynt{v}{\subst{\Trec{\alpha}{A}}{\alpha}{A}}) \right\}\\
  \mc{E}_R(0, i, r, \jsynt{(\cshift, \uscore)}{\Tus{A}}) &= \{ \tRecvS{i}{Y_r} \}\\
  \mc{E}_R(n + 1, i, r, \jsynt{(\cshift, v)}{\Tus{A}}) &= \left\{ \tRecvS{i}{E} \mid E \in \mc{E}_R(n, i, r, \jsynt{v}{A}) \right\}\\
  \mc{E}_R(0, i, r, \jsynt{(\uscore,\uscore)}{A \Tot B}) &= \left\{ \tRecvC{\uscore}{i}{Y_r} \right\}\\
  \mc{E}_R(n + 1, i, r, \jsynt{(u,v)}{A \Tot B}) &= \left\{ \tRecvC{\uscore}{i}{E} \mid E \in \mc{E}_R(n, i, r, \jsynt{v}{B}) \right\} {} \cup\\
                                   &\qquad\qquad{} \cup \left\{ \tRecvC{a}{i}{E} \mid E \in \mc{E}_R(n, a, r, \jsynt{u}{A}) \right\}
\end{align*}

Conspicuously absent are negative protocols.
Because any provided channel with a negative protocol is an input channel, we can only observe $\bot$ on that channel by \cref{prop:sill-obs-equiv/observ-comm:6}.
Accordingly, we define:
\begin{align*}
  \mc{E}_R(n, i, r, \jsynt{v}{A^-}) &= \begin{cases}
    \left\{ Y_r \right\} & v = \bot\\
    \left\{ \Omega \right\} & v \neq \bot
  \end{cases}
\end{align*}

Also absent are sets of experiment processes for the protocol $\Tand{\tau}{A}$.
We use an oracle process to check if two transmitted values are related by $\leqslant$.

\begin{definition}
  Let $P$ be a predicate on functional values of type $\tau$.
  An \defin{oracle process for $P$} is a process $\jtypem{\cdot}{\cdot}{O}{c}{\Timp{\tau}{\Tus{\Tplus \{ \mt{tt} : \Tu, \mt{ff} : \Tu \}}}}$ that receives a functional value $w$ and a shift message\footnote{Because types are polarized, the up shift is required to ensure that the type is well-formed.} over $c$, and sends $\mt{tt}$ if $P(w)$ and $\mt{ff}$ otherwise; and closes the channel in both cases.
  Explicitly,
  \begin{gather*}
    \jproc{c}{O}, \jmsg{d}{\mSendVN{c}{w}{d}}, \jmsg{e}{\mSendSN{d}{e}} \mssteps {}\\
    {} \mssteps \exists f .
    \begin{cases}
      \jmsg{f}{\mClose{f}}, \jmsg{e}{\mSendLP{e}{\ms{tt}}{f}} & \text{if } P(w)\\
      \jmsg{f}{\mClose{f}}, \jmsg{e}{\mSendLP{e}{\ms{ff}}{f}} & \text{otherwise}
    \end{cases}\qedhere
  \end{gather*}
\end{definition}

\begin{assumption}
  \label{ass:sill-obs-equiv/extern-observ:1}
  Assume that for all $\tau$ and values $\jtypef{\cdot}{v}{\tau}$, there is an oracle $O^\leqslant_{v : \tau}$ for the predicate $\jtrelf{\leqslant}{\cdot}{v}{({-})}{\tau}$.
\end{assumption}

Using the oracle, we define:
\begin{align*}
  \mc{E}_R(0, i, r, \jsynt{(\cval f, \uscore)}{\Tand{\tau}{A}}) &= \{ \tCut{c}{O_{f:\tau}}{\tRecvV{x}{i}{\tSendV{c}{x}{\tSendS{c}{\\
                                                              &\qquad\tCase{c}{\{ \mt{tt} \Rightarrow \tWait{c}{Y_r} \mid \mt{ff} \Rightarrow \Omega \}}}}}} \}\\
  \mc{E}_R(n + 1, i, r, \jsynt{(\cval f, u)}{\Tand{\tau}{A}}) &= \{ \tCut{c}{O_{f:\tau}}{\tRecvV{x}{i}{\tSendV{c}{x}{\tSendS{c}{\\
                                                              &\qquad\tCase{c}{\{ \mt{tt} \Rightarrow \tWait{c}{E} \mid \mt{ff} \Rightarrow \Omega \}}}}}} \mid E \in \mc{E}_R(n, i, r, \jsynt{u}{A}) \}
\end{align*}

Recall from \cref{prop:sill-obs-equiv/barb-cont-congr:11} that height \(n\) approximations characterize communication simulation, \ie, that $\jsynt{v \commsim[\leqslant] w}{A}$ if and only if, for all $n$, ${\jsynt{\capxn{v}{n} \commsim[\leqslant] w}{A}}$.\footnote{In fact, we can use \cref{prop:sill-obs-equiv/barb-cont-congr:12} to strengthen this characterization: $\jsynt{v \commsim[\leqslant] w}{A}$ if and only if there exists an \(N\) such that ${\jsynt{\capxn{v}{n} \commsim[\leqslant] w}{A}}$ for all $n \geq N$.}
\Cref{prop:sill-obs-equiv/extern-observ:1} gives an analogous characterization of observed communications on \(a\) in terms of experiments in \(\mc{E}_R(n, a, \widehat a, \jsynt{v}{A})\).
Indeed, it states that $\jsynt{\capxn{v}{n + 1} \commsim w}{A}$ if and only if every experiment context \(E \in \mc{E}_R(n, a, \widehat a, \jsynt{v}{A})\) sends the ``yes'' answer \((\mt{y}, \bot)\) on the results channel \(\widehat a\).
Applying \cref{prop:sill-obs-equiv/extern-observ:1} to all \(n\) determines whether $\jsynt{v \commsim[\leqslant] w}{A}$.

\begin{proposition}
  \label{prop:sill-obs-equiv/extern-observ:1}
  Let $\jcfgt{\Gamma}{\mc{C}}{a : A}$ and $\jsynt{v}{A}$ be arbitrary.
  Set $w = \obsbr{\jcfgt{\Gamma}{\mc{C}}{\Delta, a : A}}(a)$ and let $\widehat a$ be globally fresh.
  Then for all $n$, $\jsynt{\capxn{v}{n + 1} \commsim w}{A}$ if and only if for all $E \in \mc{E}_R(n, a, \widehat a, \jsynt{v}{A})$,
  \[
    \obsbr{\jcfgt{\Gamma}{\mc{C}, \jproc{\widehat a}{E}}{\Delta, \widehat a : \mathbb{Y}^+}}(\widehat a) = (\mt{y}, \bot).
  \]
\end{proposition}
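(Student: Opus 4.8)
The plan is to prove the displayed biconditional for every $n$ by induction on $n$, with an inner case analysis on the head of $v$ and the type $A$ that mirrors the recursive definition of $\mc{E}_R(n, a, \widehat a, \jsynt{v}{A})$. Write $P(n)$ for the statement that the biconditional holds for all $\jcfgt{\Gamma}{\mc{C}}{a : A}$ and all $\jsynt{v}{A}$; I would establish $P(n)$ for all $n$. The standing tools are: $\jsynt{w}{A}$ holds by \cref{theorem:main:2}, so I may case-analyse the top-level shape of $w$; $\capxn{v}{n+1}$ retains exactly the head constructor of $v$ and equals $\bot$ iff $v = \bot$; and, by inspection of the rules of \cref{def:sill-obs-equiv/observ-comm:7}, $\jsynt{\capxn{v}{n+1} \commsim w}{A}$ forces $w$ to carry the same head constructor as $v$, failing outright when the heads clash (in particular when $w = \bot$ but $v \neq \bot$). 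I would also use \cref{prop:main:3} to read off the type of each message fact, \cref{cor:sill-background-dyn-prop-typed-config:1} to know each channel carries at most one message, \cref{prop:sill-obs-equiv/observ-comm:6} to know that provided input channels observe only $\bot$, and, to thread observations through the experiment's leading receive, \cref{prop:sill-obs-equiv/observ-comm:8} together with \cref{theorem:main:1}.

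First I would dispatch the leaf cases, which hold uniformly in $n$. For $v = \bot$, the process $Y_{\widehat a}$ immediately emits $(\mt{y}, \bot)$ on $\widehat a$ while $\jsynt{\bot \commsim w}{A}$ always holds by \getrn{CS-bot}. For $v = \cclose$ at $\Tu$, the process $\tWait{a}{Y_{\widehat a}}$ emits $(\mt{y}, \bot)$ exactly when $\mc{C}$ sends a close message (i.e.\ $w = \cclose$), which by the rules at type $\Tu$ is exactly when $\jsynt{\cclose \commsim w}{\Tu}$. For a negative type $A$, \cref{prop:sill-obs-equiv/observ-comm:6} forces $w = \bot$, so the left side holds iff $\capxn{v}{n+1} = \bot$, i.e.\ iff $v = \bot$; this matches the split between $\{Y_{\widehat a}\}$ and $\{\Omega\}$ in the definition of $\mc{E}_R(n, i, r, \jsynt{v}{A^-})$. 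The base case $P(0)$ for the positive message constructors is similar: $\capxn{v}{1}$ exposes only the head message, and each singleton experiment (such as $\tCase{a}{\{k \Rightarrow Y_{\widehat a} \mid \uscore \Rightarrow \Omega\}}$ or $\tRecvU{a}{Y_{\widehat a}}$, with the value-transmission case additionally consulting the oracle $O^{\leqslant}_{f:\tau}$) emits $(\mt{y}, \bot)$ precisely when $\mc{C}$'s head message on $a$ matches $v$'s head, which is precisely the condition for $\jsynt{\capxn{v}{1} \commsim w}{A}$; a clashing head routes the experiment to $\Omega$ and a missing head leaves it blocked, both yielding $\bot$.

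For the inductive step, assuming $P(n)$ I would prove $P(n+1)$ by treating each positive message constructor of $v$. The uniform pattern is that $\capxn{v}{n+2}$ carries $v$'s head, so by the shape of $\commsim$ the left side holds iff $w$ carries the matching head message and its continuation $w'$, observed on the continuation channel $d$, satisfies the residual simulation at the residual type. Operationally, when $\mc{C}$ sends the matching head, the experiment's leading receive consumes it (renaming $a$ to $d$); by the renaming invariant $\subst{d}{a}{E'} \in \mc{E}_R(n, d, \widehat a, \jsynt{v'}{A'})$, and the step $\mc{C},\jproc{\widehat a}{E} \msstep \mc{C}^-,\jproc{\widehat a}{\subst{d}{a}{E'}}$ preserves the observation on $\widehat a$ by \cref{prop:sill-obs-equiv/observ-comm:8}, so I may invoke $P(n)$ at input channel $d$; when $\mc{C}$ sends a clashing head or none, the experiment diverges or blocks and emits $\bot$, matching the failure of the left side. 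The internal-choice, unfolding, and shift constructors are instances of this pattern. The channel-transmission case $A \Tot B$ is where $\mc{E}_R$ genuinely branches: since $\capxn{(u,v)}{n+2} = (\capxn{u}{n+1}, \capxn{v}{n+1})$ and \getrn{CS-tot} splits into independent conjuncts on the transmitted channel and the continuation, the union of the two families in $\mc{E}_R(n+1,\dots)$ encodes the conjunction, the universal quantifier over experiments distributes over it, and each conjunct is discharged by $P(n)$. The value-transmission case $\Tand{\tau}{A}$ additionally gates on the value: by \cref{ass:sill-obs-equiv/extern-observ:1} the oracle answers $\mt{tt}$ iff $f \leqslant g$, so the $\mt{tt}$ branch is reached exactly when the side condition of \getrn{CS-tand} holds, whereupon $\tWait{c}{E}$ proceeds with the continuation experiment and $P(n)$ applies.

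The main obstacle I anticipate is the bookkeeping that relates the observation on $\widehat a$ before and after the experiment fires its leading receive: I must confirm that consuming the head message on $a$, together with the renaming $a \mapsto d$, leaves the communication observed on $\widehat a$ unchanged. I would settle this by choosing, via independence from the fair execution (\cref{theorem:main:1}), a fair execution that performs this receive first, and then appealing to step-invariance of observations (\cref{prop:sill-obs-equiv/observ-comm:8}). The secondary delicate point is the oracle interaction in the $\Tand$ case, where I must verify from the displayed reduction of $O^{\leqslant}_{f:\tau}$ that control reaches $\tWait{c}{E}$ exactly under $f \leqslant g$, that $c$ is then closed so the continuation experiment can run, and that the private oracle exchange produces no spurious communication on $\widehat a$; this is a finite operational calculation but must be carried out carefully.
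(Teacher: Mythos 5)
Your proposal is correct and matches the paper's own proof: both proceed by induction on $n$ with an inner case analysis on $\jsynt{v}{A}$ mirroring the definition of $\mc{E}_R$, dispatch the $\bot$, $\Tu$, and negative-type cases via \cref{prop:sill-obs-equiv/observ-comm:6}, handle the inductive step by peeling off the head message and invoking the hypothesis at the renamed continuation channel, and treat the oracle interaction for $\Tand{\tau}{A}$ as a finite operational calculation. The only cosmetic difference is bookkeeping style—you reorder a fair execution via \cref{theorem:main:1} and \cref{prop:sill-obs-equiv/observ-comm:8}, while the paper phrases the same step as an equivalence between message facts appearing in fair traces of $\mc{C}$ and of $\mc{C},\jproc{\widehat a}{E}$—which does not change the substance of the argument.
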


\begin{proof}
  By induction on $n$.
  Assume first that $n = 0$.
  Then we proceed by case analysis on $\jsynt{v}{A}$.
  We give the representative cases; the rest will follow by analogy.
  \begin{proofcases}
  \item[$\jsynt{\bot}{A}$] The result is immediate.
  \item[$\jsynt{v}{A^-}$] Then $w = \bot$ by \cref{prop:sill-obs-equiv/observ-comm:6}.
    If $v = \bot$, then the result is immediate.
    If $v \neq \bot$, then it is not the case that $\jsynt{v \commsim w}{A}$, and the result also follows from the definition of the divergent process $\Omega$.
  \item[$\jsynt{\cclose}{\Tu}$] Then $\capxn{\cclose}{1} = \cclose$.
    By inversion, $\jsynt{\cclose \commsim w}{A}$ if and only if $w = \cclose$.
    This is the case if and only if $\jmsg{a}{\mClose{a}}$ appears in a fair trace of $\mc{C}$.
    The only element of $\mc{E}_R(n, a, \widehat a, \jsynt{\cclose}{\Tu})$ is $\left\{ \tWait{a}{Y_{\widehat a}} \right\}$.
    The fact $\jmsg{a}{\mClose{a}}$ appears in a fair trace of $\mc{C}$ if and only if every fair trace of $\mc{C}, \jproc{\widehat a}{\tWait{a}{Y_{\widehat a}}}$ has an instantiation
    \[
      \jmsg{a}{\mClose a}, \jproc{\widehat a}{\tWait{a}{Y_{\widehat a}}} \msstep \jproc{\widehat a}{Y_{\widehat a}}
    \]
    of \cref{eq:sill:msr-tu-l}.
    By fairness, $Y_{\widehat a}$ produces the observation $(\mt{y}, \bot)$ on $\widehat a$.
    So $\jmsg{a}{\mClose{a}}$ appears in a fair trace of $\mc{C}$ if and only if $\obsbr{\jcfgt{\Gamma}{\mc{C}, \jproc{\widehat a}{\tWait{a}{Y_{\widehat a}}}}{\widehat a : \mathbb{Y}^+}}(\widehat a) = (\mt{y}, \bot)$.
    This gives the result.
  \item[$\jsynt{(k, u)}{\Tplus \{ l : A_l \}_{l \in L}}$]
    Then $\capxn{(k, u)}{1} = (k, \bot)$.
    By inversion, $\jsynt{(k, \bot) \commsim w}{A}$ if and only if $w = (k, w')$ for some $w'$.
    This is the case if and only if $\jmsg{a}{\mSendLP{a}{k}{\uscore}}$ appears in a fair trace of $\mc{C}$.
    The only element of $\mc{E}_R(0, a, \widehat a, \jsynt{(k, u)}{A}$ is $\tCase{a}{\{ k \Rightarrow Y_{\widehat a} \mid \uscore \Rightarrow \Omega \}}$.
    The aforementioned fact appears in a fair trace of $\mc{C}$ if and only if every fair trace of $\mc{C}, \jproc{\widehat a}{\tCase{a}{\{ k \Rightarrow Y_{\widehat a} \mid \uscore \Rightarrow \Omega \}}}$ has an instantiation
    \[
      \jmsg{a}{\mSendLP{a}{k}{\uscore}}, \jproc{\widehat a}{\tCase{a}{\{ k \Rightarrow Y_{\widehat a} \mid \uscore \Rightarrow \Omega \}}} \msstep \jproc{\widehat a}{Y_{\widehat a}}
    \]
    of \cref{eq:sill:msr-tplus-l}.
    The remainder is analogous to the previous case.
  \item[$\jsynt{(\cval f, u)}{\Tand{\tau}{A}}$]
    Then $\capxn{(\cval f, u)}{1} = (\cval f, \bot)$.
    By inversion, $\jsynt{(\cval f, \bot) \commsim w}{\Tand{\tau}{B}}$ if and only if $w = (\cval g, w')$ for some $g$ and $w'$ and $\jtrelf{\leqslant}{\cdot}{f}{g}{\tau}$.
    This is the case if and only if $\jmsg{a}{\mSendVP{a}{g}{\uscore}}$ appears in a fair trace of $\mc{C}$ with $g$ satisfying the above relation.
    The only element of $\mc{E}_R(0, a, \widehat a, \jsynt{(\cval f, u)}{A})$ is
    \[
      \tCut{c}{O_{f:\tau}}{\tRecvV{x}{a}{\tSendV{c}{x}{\tSendS{c}{
              \tCase{c}{\{ \mt{tt} \Rightarrow \tWait{c}{Y_r} \mid \mt{ff} \Rightarrow \Omega \}}}}}}.
    \]
    Every fair trace of $\mc{C}$ has an instantiation
    \begin{align*}
      &\jproc{\widehat a}{\tCut{c}{O_{f:\tau}}{\tRecvV{x}{a}{\tSendV{c}{x}{\tSendS{c}{\tCase{c}{\{ \mt{tt} \Rightarrow \tWait{c}{Y_r} \mid \mt{ff} \Rightarrow \Omega \}}}}}}} \msstep {}\\
      &\msstep \jproc{c'}{\subst{c'}{c}{O_{f:\tau}}}, \jproc{\widehat a}{\tRecvV{x}{a}{\\
          &\hspace{4em}\tSendV{c'}{x}{\tSendS{c'}{\tCase{c'}{\{ \mt{tt} \Rightarrow \tWait{c'}{Y_r} \mid \mt{ff} \Rightarrow \Omega \}}}}}}
    \end{align*}
    of \cref{eq:sill:msr-cut}.
    The aforementioned message fact appears in a fair trace of $\mc{C}$ if and only if the trace contains the following instantiation of \cref{eq:sill:msr-tand-l}:
    \begin{align*}
      &\left(\jmsg{a}{\mSendVP{a}{g}{\uscore}}, \jproc{\widehat a}{\tRecvV{x}{a}{ \right.\\
      &\hspace{4em}\left.\tSendV{c'}{x}{\tSendS{c'}{\tCase{c'}{\{ \mt{tt} \Rightarrow \tWait{c'}{Y_r} \mid \mt{ff} \Rightarrow \Omega \}}}}}}\right) \msstep {} \\
      &\msstep \left(\jproc{\widehat a}{\tSendV{c'}{g}{\tSendS{c'}{\tCase{c'}{\{ \mt{tt} \Rightarrow \tWait{c'}{Y_r} \mid \mt{ff} \Rightarrow \Omega \}}}}}\right),
    \end{align*}
    and of \cref{eq:sill:msr-timp-l,eq:sill:msr-tus-l} (not necessarily in immediate succession of each other):
    \begin{gather*}
      \left(\jproc{\widehat a}{\tSendV{c'}{g}{\tSendS{c'}{\tCase{c'}{\{ \mt{tt} \Rightarrow \tWait{c'}{Y_r} \mid \mt{ff} \Rightarrow \Omega \}}}}}\right) \mssteps {} \\
      {} \mssteps \left(\jmsg{d}{\mSendVN{c'}{g}{d}}, \jmsg{e}{\mSendSN{d}{e}},\right.\\
      \left.\jproc{\widehat a}{\tCase{e}{\{ \mt{tt} \Rightarrow \tWait{e}{Y_r} \mid \mt{ff} \Rightarrow \Omega \}}}\right).
    \end{gather*}
    By fairness and the definition of the oracle, the above hold if and only if the oracle takes the steps
    \begin{gather*}
      \left(\jproc{c'}{\subst{c'}{c}{O_{f:\tau}}}, \jmsg{d}{\mSendVN{c'}{g}{d}}, \jmsg{e}{\mSendSN{d}{e}}\right)
      \mssteps {}\\
      {} \mssteps \jmsg{h}{\mClose{h}}, \jmsg{c'}{\mSendLP{c'}{\ms{tt}}{h}}.
    \end{gather*}
    From here, the proof is analogous to the previous cases.
  \end{proofcases}

  Now assume that the result holds for some $n$.
  We show the inductive step $n + 1$, again by case analysis on $\jsynt{v}{A}$.
  We give the representative case; the rest follow by analogy with this case or with base cases.
  \begin{proofcases}
  \item[$\jsynt{(k, u)}{\Tplus \{l : A_l \}_{l \in L}}$]
    The elements of $\mc{E}_R(n + 1, a, \widehat a, \jsynt{(k, u)}{\Tplus \{ l : A_l \}_{l \in L}})$ are of the form $\tCase{a}{\{ k \Rightarrow E \mid \uscore \Rightarrow \Omega \}}$ for $E \in \mc{E}_R(n, a, \widehat a, \jsynt{u}{A_k})$.
    Recall that $\capxn{(k, u)}{n + 2} = (k, \capxn{u}{n + 1})$.
    By inversion, $\jsynt{(k, \capxn{u}{n + 1}) \commsim w}{\Tplus \{l : A_l \}_{l \in L}}$ if and only if $w = (k, w')$ and $\jsynt{\capxn{u}{n + 1} \commsim w'}{A_k}$.
    But $w = (k, w')$ if and only if there is an observable message $\jmsg{a}{\mSendLP{a}{k}{d}}$ in a fair trace of $\mc{C}$ for some channel $d$ with $\jtoc{T}{w'}{d}{A_k}$.
    There is a message $\jmsg{a}{\mSendLP{a}{k}{d}}$ in a fair trace of $\mc{C}$ if and only if there is such a message in a fair trace of $\mc{C}, \jproc{\widehat a}{\tCase{a}{\{ k \Rightarrow E \mid \uscore \Rightarrow \Omega \}}}$.
    Fairness implies that a fair trace of $\mc{C}, \jproc{\widehat a}{\tCase{a}{\{ k \Rightarrow E \mid \uscore \Rightarrow \Omega \}}}$ has an instantiation
    \[
      \jmsg{a}{\mSendLP{a}{k}{d}}, \jproc{\widehat a}{\tCase{a}{\{ k \Rightarrow E \mid \uscore \Rightarrow \Omega \}}}
      \msstep
      \jproc{\widehat a}{\subst{d}{a}{E}}
    \]
    of \cref{eq:sill:msr-tplus-l} if and only if $\jmsg{a}{\mSendLP{a}{k}{d}}$ appears in the trace.
    If this is the case, then we apply the induction hypothesis: $\jtoc{T}{(\mt{y}, \bot)}{\widehat a}{\mathbb{Y}^+}$ for all $\subst{d}{a}{E} \in \mc{E}_R(n, d, \widehat a, \jsynt{u}{A_k})$ if and only if $\jsynt{\capxn{u}{n + 1} \commsim w'}{A_k}$, where we recall that $w'$ is given by $\jtoc{T}{w'}{d}{A_k}$.
    It follows that
    \[
      \obsbr{\mc{C}, \jproc{\widehat a}{T}}(\widehat a) = (\mt{y}, \bot),
    \]
    for all processes $T \in \mc{E}_R(n + 1, a, \widehat a, \jsynt{(k, u)}{\Tplus \{ l : A_l \}_{l \in L}}$ if and only if both $w = (k, w')$ and $\jsynt{\capxn{u}{n + 1} \commsim w'}{A_k}$, \ie, if and only if $\jsynt{\capxn{(k, u)}{n + 2}) \commsim w}{\Tplus \{l : A_l \}_{l \in L}}$.\qedhere
  \end{proofcases}
\end{proof}

Let $\mb{0}^-$ be the negative empty type $\Tamp \{ \}$ and let $\mathbb{Y}^-$ be the ``negative answer type'' $\Tamp \{ \mt{y} : \mb{0}^- \}$.
We can dualize the definition of $\mc{E}_R$ to get a family $\mc{E}_L$ of processes that listen on the right and reports on the left.
In fact, the processes carry over unchanged:\footnote{Here, we are taking an extrinsic~\cite[\S~15.4]{reynolds_1998:_theor_progr_languag} or ``Curry-style'' view of process typing.}
\begin{align*}
  \mc{E}_L(n, i, r, \jsynt{\bot}{A}) &= \mc{E}_R(n, i, r, \jsynt{\bot}{A})\\
  \mc{E}_L(n, i, r, \jsynt{(k, v)}{\Tamp \{ l : A_l \}_{l \in L}}) &= \mc{E}_R(n, i, r, \jsynt{(k, v)}{\Tplus \{ l : A_l \}_{l \in L}})\\
  \mc{E}_L(n, i, r, \jsynt{(\cunfold, v)}{\Trec{\alpha}{A}}) &= \mc{E}_R(n, i, r, \jsynt{(\cunfold, v)}{\Trec{\alpha}{A}})\\
  \mc{E}_L(n, i, r, \jsynt{(\cshift, v)}{\Tds{A}}) &= \mc{E}_R(n, i, r, \jsynt{(\cshift, v)}{\Tus{A}})\\
  \mc{E}_L(n, i, r, \jsynt{(u,v)}{A \Tlolly B}) &= \mc{E}_R(n, r, \jsynt{(u,v)}{A \Tot B})\\
  \mc{E}_L(n, i, r, \jsynt{v}{A^+}) &= \mc{E}_R(n, i, r, \jsynt{v}{B^-})\\
  \mc{E}_L(n, i, r, \jsynt{(\cval f, u)}{\Tand{\tau}{A}}) &= \mc{E}_R(n, i, r, \jsynt{(\cval f, u)}{\Timp{\tau}{A}}).
\end{align*}

The proof of the following \namecref{prop:sill-obs-equiv/extern-observ:4} is analogous to the proof of \cref{prop:sill-obs-equiv/extern-observ:1}.

\begin{proposition}
  \label{prop:sill-obs-equiv/extern-observ:4}
  Let $\jcfgt{\Gamma, a : A}{\mc{C}}{\Delta}$ and $\jsynt{v}{A}$ be arbitrary.
  Set $w = \obsbr{\jcfgt{\Gamma, a : A}{\mc{C}}{\Delta}}(a)$ and let $\widehat a$ be globally fresh.
  Then for all $n$, $\jsynt{\capxn{v}{n + 1} \commsim w}{A}$ if and only if for all $E \in \mc{E}_L(n, a, \widehat a, \jsynt{w}{A})$,
  \[
    \obsbr{\jcfgt{\Gamma, \widehat a : \mathbb{Y}^-}{\jproc{\widehat a}{E}, \mc{C}}{\Delta}}(\widehat a) = (\mt{y}, \bot).
  \]
\end{proposition}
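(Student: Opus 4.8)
The plan is to prove this by induction on $n$ with an inner case analysis on $\jsynt{v}{A}$, mirroring the proof of \cref{prop:sill-obs-equiv/extern-observ:1} but reversing every polarity and every left/right role. The conceptual reason the dualization is forced to go through is that the observation machinery of \cref{sec:sill-obs-equiv:observ-comm:comm-single-chan} depends only on which message facts appear in a fair trace and on their carrier channels (\cref{prop:sill-obs-equiv/observ-comm:5}), and not on whether the tested channel sits in the antecedent or the succedent of a configuration. In \cref{prop:sill-obs-equiv/extern-observ:1} the channel $a$ is provided and carries output precisely when $A$ is positive; here $a$ is used, and by \cref{def:sill-background-stat-prop-typed-config:3} it carries output precisely when $A$ is negative. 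The family $\mc{E}_L$ was defined exactly so as to realize this swap: each clause refers $\mc{E}_L(n,i,r,\jsynt{v}{A})$ to the $\mc{E}_R$ clause at the polar-dual type, and the underlying process terms are literally unchanged under the Curry-style reading of process typing. Reporting the verdict on $\widehat a : \mathbb{Y}^-$, the negative dual of $\mathbb{Y}^+$, then lets us read off $(\mt{y}, \bot)$ on $\widehat a$ exactly when the test succeeds.

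First I would record the dual invariants for $\mc{E}_L$, mirroring those stated for $\mc{E}_R$: the $\mc{E}_L$ processes listen toward the tested channel carrying $A$ and report on the answer channel, with the antecedent/succedent roles exchanged relative to $\mc{E}_R$, and membership is stable under renaming of the tested channel. For the base case $n = 0$ I would run through the same representative cases as in \cref{prop:sill-obs-equiv/extern-observ:1}: the trivial communication $\bot$; the \emph{positive} case $\jsynt{v}{A^+}$, where $a$ is now an input channel so $w = \bot$ by \cref{prop:sill-obs-equiv/observ-comm:6} (dual to the $A^-$ case there, forcing immediate success when $v = \bot$ and divergence otherwise); the close case; the external-choice case $\jsynt{(k,u)}{\Tamp\{l : A_l\}_{l \in L}}$, which proceeds like the internal-choice case of \cref{prop:sill-obs-equiv/extern-observ:1} but with the polar-dual rewrite rule for $\Tamp$ firing in place of \cref{eq:sill:msr-tplus-l}; and the $\jsynt{(\cval f, u)}{\Timp{\tau}{A}}$ case, which reuses the oracle of \cref{ass:sill-obs-equiv/extern-observ:1} with the value now transmitted along the used channel.

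The inductive step $n + 1$ is handled by the same case analysis: peel off one communication prefix, invoke the induction hypothesis on the continuation channel, and reassemble, exactly as in \cref{prop:sill-obs-equiv/extern-observ:1}. In each case the ``receive/branch then recurse'' shape of the $\mc{E}_L$ processes coincides with that of $\mc{E}_R$, and the only change is that the consumed message fact is produced by the polar-dual rewrite rule. This yields the characterization $\jsynt{\capxn{v}{n+1} \commsim w}{A}$ iff every test in $\mc{E}_L(n, a, \widehat a, \jsynt{v}{A})$ reports $(\mt{y}, \bot)$; quantifying over all $n$ and appealing to \cref{prop:sill-obs-equiv/barb-cont-congr:11} then gives the stated equivalence between communication simulation and observable verdicts.

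The main obstacle I anticipate is bookkeeping rather than any new idea: I must verify, connective by connective, that the Curry-style re-typing of each $\mc{E}_R$ term is genuinely well-typed at the dual type and that the dual rewrite rule fires with the same effect on message facts and carrier channels, keeping the provided/used and positive/negative placements of the tested channel $a$ and the answer channel $\widehat a$ consistent throughout the composition with $\mc{C}$. The most delicate case is $\Timp{\tau}{A}$, the dual of $\Tand{\tau}{A}$, since it combines value transmission with the oracle process and a polarity shift; there I would check that the value emitted by $\mc{C}$ along $a$ reaches the oracle and that the oracle's verdict still gates the emission of the success signal exactly as in the $\Tand$ case. I would also confirm that the negative/input clauses collapse correctly, so that the degenerate observation $w = \bot$ behaves dually to the treatment of $A^-$ in \cref{prop:sill-obs-equiv/extern-observ:1}.
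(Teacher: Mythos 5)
Your proposal is correct and is exactly the paper's intended argument: the paper proves this proposition only by remarking that its proof ``is analogous to the proof of \cref{prop:sill-obs-equiv/extern-observ:1}'', and your polarity-dualized induction on $n$ with case analysis on $\jsynt{v}{A}$ (using \cref{prop:sill-obs-equiv/observ-comm:6} for the now-positive input case, the dual rewrite rules for $\Tamp$, $\Timp{}{}$, etc., and the oracle of \cref{ass:sill-obs-equiv/extern-observ:1}) is precisely that analogy spelled out. The bookkeeping you flag is indeed all there is to check.
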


We combine \cref{prop:sill-obs-equiv/extern-observ:1,prop:sill-obs-equiv/extern-observ:4} to build families of experiment contexts.
Given $n$ observations $\overline{\jsynt{v_i}{A_i}/a_i}$ and $m$ observations $\overline{\jsynt{w_j}{C_j}/c_j}$ with $n \geq 0$ and $m \geq 1$, define the set of configuration contexts:
\begin{align*}
  &\mc{E}\left(n, \overline{\jsynt{v_i}{A_i}/a_i}, \overline{\jsynt{w_j}{C_j}/c_j} \right)\\
  &= \big\{ \jproc{a_1}{L_1}, \dotsc, \jproc{a_n}{L_n}, \ctxh{}{\overline{a_i : Ai}}{\overline{c_j : C_j}}, \jproc{\widehat{c_1}}{R_1}, \dotsc, \jproc{\widehat{c_m}}{R_m} \mid {} \\
  &\qquad\qquad\qquad {} \mid L_i \in \mc{E}_L\left(n, a_i, \widehat{a_i}, \jsynt{v_i}{A_i}\right), R_j \in \mc{E}_R\left(n, c_j, \widehat{c_j}, \jsynt{w_j}{C_j}\right) \big\}.
\end{align*}

\begin{proposition}
  \label{theorem:sill-obs-equiv/extern-observ:1}
  Let $\jcfgt{\Gamma}{\mc{C}}{\Delta}$ and $\jcfgt{\Gamma}{\mc{D}}{\Delta}$ be arbitrary.
  If $\Tand{\tau}{A}$ is a subphrase of a type in $\Delta$ or $\Timp{\tau}{A}$ is a subphrase of a type in $\Gamma$, then for each value $\jtypef{\cdot}{v}{\tau}$ assume the existence of an oracle process $O^\leqslant_{v : \tau}$ satisfying \cref{ass:sill-obs-equiv/extern-observ:1}.
  Let $v_i$ and $w_j$ be given by:
  \begin{align*}
    \obsbr{\jcfgt{\Gamma}{\mc{C}}{\Delta}}_{\Gamma} &= (a_i : v_i)_{a_i : A_i \in \Gamma},\\
    \obsbr{\jcfgt{\Gamma}{\mc{C}}{\Delta}}_{\Delta} &= (c_j : w_j)_{c_j : C_j \in \Delta}.
  \end{align*}
  Then
  \[
    \obsbr{\jcfgt{\Gamma}{\mc{C}}{\Delta}}_{\Gamma,\Delta} \commsim[\leqslant] \obsbr{\jcfgt{\Gamma}{\mc{D}}{\Delta}}_{\Gamma,\Delta}
  \]
  if and only if for all $n$ and $\ctxh{\mc{F}}{\Gamma}{\Delta} \in \mc{E}(n, \overline{\jsynt{v_i}{A_i}/a_i}, \overline{\jsynt{w_j}{C_j}/c_j})$,
  \[
    \obsbr*{\jcfgt{\overline{\widehat{a_i} : \mathbb{Y}^-}}{
        \ctxh[\mc{D}]{\mc{F}}{\Gamma}{\Delta}
      }{\overline{\widehat{c_j} : \mathbb{Y}^+}}}(b) = (\mt{y}, \bot)
  \]
  for all $b \in \overline{\widehat{a_i}}, \overline{\widehat{c_j}}$.
\end{proposition}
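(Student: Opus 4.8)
The plan is to reduce the multi-channel biconditional to the single-channel characterizations already established in \cref{prop:sill-obs-equiv/extern-observ:1,prop:sill-obs-equiv/extern-observ:4}, using the component-wise definition of $\commsim[\leqslant]$ on tuples together with the approximation characterization of \cref{prop:sill-obs-equiv/barb-cont-congr:11}. First I would unfold the tuple simulation: by the component-wise lifting, $\obsbr{\jcfgt{\Gamma}{\mc{C}}{\Delta}}_{\Gamma,\Delta} \commsim[\leqslant] \obsbr{\jcfgt{\Gamma}{\mc{D}}{\Delta}}_{\Gamma,\Delta}$ holds if and only if $\jsynt{v_i \commsim[\leqslant] \obsbr{\jcfgt{\Gamma}{\mc{D}}{\Delta}}(a_i)}{A_i}$ for every $a_i : A_i \in \Gamma$ and $\jsynt{w_j \commsim[\leqslant] \obsbr{\jcfgt{\Gamma}{\mc{D}}{\Delta}}(c_j)}{C_j}$ for every $c_j : C_j \in \Delta$. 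By \cref{prop:sill-obs-equiv/barb-cont-congr:11}, each per-channel simulation is equivalent to its height $n$ approximation holding for all $n$; since $\capxn{v_i}{0} = \bot$ always simulates by \getrn{CS-bot}, quantifying over $\capxn{\cdot}{n+1}$ for all $n \geq 0$ loses nothing. This lines up exactly with the left-hand sides $\jsynt{\capxn{v_i}{n+1} \commsim w}{A_i}$ and $\jsynt{\capxn{w_j}{n+1} \commsim w}{C_j}$ appearing in \cref{prop:sill-obs-equiv/extern-observ:4,prop:sill-obs-equiv/extern-observ:1}, whose oracle hypotheses are supplied by the assumed oracles (\cref{ass:sill-obs-equiv/extern-observ:1}).

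The substantive work is a non-interference reduction: I must show that the observation on a single report channel $\widehat{c_{j_0}}$ (respectively $\widehat{a_{i_0}}$) in the combined experiment $\ctxh[\mc{D}]{\mc{F}}{\Gamma}{\Delta}$ is unaffected by the other listeners, so that it matches the single-listener observation in \cref{prop:sill-obs-equiv/extern-observ:1}. I would first apply the simply branched decomposition (\cref{prop:sill-obs-equiv/observ-comm-equiv:7}) to write $\mc{D}$ as a composition of simply branched configurations $\mc{D}_j'$, where $\mc{D}_j'$ provides $c_j$ and uses a sub-context $\Gamma_j$ partitioning $\Gamma$, the branches sharing no channels. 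The branch carrying $\widehat{c_{j_0}}$ then consists of $\mc{D}_{j_0}'$, the listener $R_{j_0}$ attached along $c_{j_0}$, and the left listeners $L_i$ for $a_i \in \Gamma_{j_0}$; its free channels are disjoint from those of the remaining branches. Non-interference (\cref{prop:sill-obs-equiv/observ-comm:11}) then lets me discard the other branches without changing the observation on $\widehat{c_{j_0}}$.

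It remains to remove the left listeners $L_i$ inside this branch. The crucial observation is that every listener is \emph{receive-only} with respect to the plugged configuration: the processes in $\mc{E}_R$ and $\mc{E}_L$ only receive along the channel they are attached to (the $\Tand{\tau}{A}$ case merely routes the received value to a freshly spawned oracle over a private cut channel), and on an input channel $L_i$ is simply $Y_{\widehat{a_i}}$, which communicates nothing on $a_i$. Consequently the $L_i$ never produce a message that $\mc{D}_{j_0}'$ consumes, so they leave $\mc{D}_{j_0}'$'s message facts — and in particular its output on $c_{j_0}$ — unchanged. Folding the $L_i$ into the configuration argument of \cref{prop:sill-obs-equiv/extern-observ:1} and invoking \cref{prop:sill-obs-equiv/observ-comm:5} (observed communications are fixed by the observable message facts) shows that the observation on $\widehat{c_{j_0}}$ agrees with the single-listener case and that $w = \obsbr{\jcfgt{\Gamma}{\mc{D}}{\Delta}}(c_{j_0})$. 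The symmetric argument, using \cref{prop:sill-obs-equiv/extern-observ:4}, handles each $\widehat{a_{i_0}}$.

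Finally I would assemble the biconditional. Because the observation on each report channel depends only on its own listener, the condition ``for all $\mc{F} \in \mc{E}(n,\dots)$ and all $b$, the observation is $(\mt{y},\bot)$'' decouples into the per-channel conditions ``for all $R_j \in \mc{E}_R(n, c_j, \widehat{c_j}, \jsynt{w_j}{C_j})$'' and ``for all $L_i \in \mc{E}_L(n, a_i, \widehat{a_i}, \jsynt{v_i}{A_i})$'', which are precisely the right-hand sides of \cref{prop:sill-obs-equiv/extern-observ:1,prop:sill-obs-equiv/extern-observ:4}. Combining these per-channel equivalences over all $n$ with the approximation and component-wise steps of the first paragraph yields the claim in both directions. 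I expect the non-interference reduction of the second and third paragraphs to be the main obstacle, since it is where the forest structure of configurations and the receive-only nature of the experiment processes must be made to interact correctly; the surrounding steps are routine once that reduction is in place.
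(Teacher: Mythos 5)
Your proposal is correct and takes essentially the same route as the paper's proof: componentwise unfolding of the tuple simulation, reduction to height-$n$ approximations via \cref{prop:sill-obs-equiv/barb-cont-congr:11}, and then an appeal to the single-channel characterizations \cref{prop:sill-obs-equiv/extern-observ:1,prop:sill-obs-equiv/extern-observ:4}. The only difference is one of detail: the decoupling of the report channels that you establish via simply branched decomposition, non-interference, and the receive-only nature of the listener processes is precisely the step the paper compresses into ``the result then follows by fairness and \cref{prop:sill-obs-equiv/extern-observ:4,prop:sill-obs-equiv/extern-observ:1}'', so your elaboration fills in, rather than diverges from, the published argument.
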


\begin{proof}
  By definition,
  \[
    \obsbr{\jcfgt{\Gamma}{\mc{C}}{\Delta}}_{\Gamma,\Delta} \commsim[\leqslant] \obsbr{\jcfgt{\Gamma}{\mc{D}}{\Delta}}_{\Gamma,\Delta}
  \]
  if and only if
  \[
    (a_i : v_i, c_j : w_j)_{a_i : A_i \in \Gamma, c_j : C_j \in \Delta} \commsim[\leqslant] \obsbr{\jcfgt{\Gamma}{\mc{D}}{\Delta}}_{\Gamma,\Delta}.
  \]
  By \cref{prop:sill-obs-equiv/barb-cont-congr:11}, this is the case if and only if for all $n$,
  \[
    (a_i : \capxn{v_i}{n}, c_j : \capxn{w_j}{n})_{a_i : A_i \in \Gamma, c_j : C_j \in \Delta} \commsim[\leqslant] \obsbr{\jcfgt{\Gamma}{\mc{D}}{\Delta}}_{\Gamma,\Delta}.
  \]
  The result then follows by fairness and \cref{prop:sill-obs-equiv/extern-observ:4,prop:sill-obs-equiv/extern-observ:1}.
\end{proof}

\Cref{theorem:sill-obs-equiv/extern-observ:1} and the analysis following \cref{question:sill-obs-equiv/extern-observ:1} imply:

\begin{corollary}
  \label{cor:sill-obs-equiv/extern-observ:5}
  For each value $\jtypef{\cdot}{v}{\tau}$ assume the existence of an oracle process $O^\leqslant_{v : \tau}$ satisfying \cref{ass:sill-obs-equiv/extern-observ:1}.
  If $\jtrelc{\mkcg{\wbsim}}{\Gamma}{\mc{C}}{\mc{D}}{\Delta}$, then $\jtrelc{\sobssim{(\mc{X}^E, {\leqslant})}}{\Gamma}{\mc{C}}{\mc{D}}{\Delta}$.
\end{corollary}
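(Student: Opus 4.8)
The plan is to verify the defining condition of external observational simulation directly, pushing all of the real work onto the already-established characterization \cref{theorem:sill-obs-equiv/extern-observ:1}. Assume $\jtrelc{\mkcg{\wbsim}}{\Gamma}{\mc{C}}{\mc{D}}{\Delta}$ and fix an arbitrary external experiment $(\ctxh{\mc{E}}{\Gamma}{\Delta}, \Lambda\Xi) \in \mc{X}^E$, so that $\jcfgt{\Lambda}{\ctxh{\mc{E}}{\Gamma}{\Delta}}{\Xi}$. Writing $\mc{C}' = \ctxh[\mc{C}]{\mc{E}}{\Gamma}{\Delta}$ and $\mc{D}' = \ctxh[\mc{D}]{\mc{E}}{\Gamma}{\Delta}$, both with interface $(\Lambda, \Xi)$, my goal is $\obsbr{\mc{C}'}_{\Lambda\Xi} \commsim[\leqslant] \obsbr{\mc{D}'}_{\Lambda\Xi}$; proving this for every experiment is precisely $\jtrelc{\sobssim{(\mc{X}^E, {\leqslant})}}{\Gamma}{\mc{C}}{\mc{D}}{\Delta}$ by \cref{def:sill-obs-equiv/main-equiv:2}, so no appeal to contextual interiors is needed.

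Let $(a_i : v_i)$ and $(c_j : w_j)$ be the components of $\obsbr{\mc{C}'}_{\Lambda}$ and $\obsbr{\mc{C}'}_{\Xi}$ (note $\Xi$ is nonempty, so the hypothesis $m \geq 1$ of \cref{theorem:sill-obs-equiv/extern-observ:1} holds). Since $\leqslant$ is a preorder and the oracles provided by \cref{ass:sill-obs-equiv/extern-observ:1} are available, \cref{theorem:sill-obs-equiv/extern-observ:1} applied to $\mc{C}'$ and $\mc{D}'$ reduces the goal to showing that for every $n$ and every $\ctxh{\mc{F}}{\Lambda}{\Xi} \in \mc{E}(n, \overline{\jsynt{v_i}{A_i}/a_i}, \overline{\jsynt{w_j}{C_j}/c_j})$ one has $\obsbr{\ctxh[\mc{D}']{\mc{F}}{\Lambda}{\Xi}}(b) = (\mt{y}, \bot)$ for every results channel $b \in \overline{\widehat{a_i}}, \overline{\widehat{c_j}}$. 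I therefore fix such $n$, $\mc{F}$, and $b$, and aim to establish this ``yes'' answer for $\mc{D}'$.

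The crux is to route this through barbed precongruence. By associativity of context composition, $\ctxh[\mc{C}']{\mc{F}}{\Lambda}{\Xi} = \ctxh[\mc{C}]{\mc{G}}{\Gamma}{\Delta}$ and $\ctxh[\mc{D}']{\mc{F}}{\Lambda}{\Xi} = \ctxh[\mc{D}]{\mc{G}}{\Gamma}{\Delta}$, where $\mc{G} = \ctxh[{\ctxh{\mc{E}}{\Gamma}{\Delta}}]{\mc{F}}{\Lambda}{\Xi}$ is the context obtained by composing $\ctxh{\mc{E}}{\Gamma}{\Delta}$ into the hole of $\mc{F}$, and $b$ is an interface channel of the resulting configurations. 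Applying the ``only if'' direction of \cref{theorem:sill-obs-equiv/extern-observ:1} to $\mc{C}'$ against itself, using reflexivity of $\commsim[\leqslant]$ (\cref{prop:sill-obs-equiv/observ-comm-equiv:3}), gives $\obsbr{\ctxh[\mc{C}']{\mc{F}}{\Lambda}{\Xi}}(b) = (\mt{y}, \bot)$; since $(\mt{y}, \bot) \neq \bot$, \cref{cor:sill-obs-equiv/extern-observ:3} yields $\wbarb[b]{\ctxh[\mc{C}]{\mc{G}}{\Gamma}{\Delta}}$. The definition of barbed precongruence (\cref{def:sill-obs-equiv/extern-observ:2}) applied to the context $\mc{G}$ gives $\jtrelc{\wbsim}{\Phi}{\ctxh[\mc{C}]{\mc{G}}{\Gamma}{\Delta}}{\ctxh[\mc{D}]{\mc{G}}{\Gamma}{\Delta}}{\Theta}$ for the resulting interface $(\Phi, \Theta)$, and since $b$ lies in that interface, \cref{prop:sill-obs-equiv/barb-cont-congr:8} transfers the barb to $\wbarb[b]{\ctxh[\mc{D}]{\mc{G}}{\Gamma}{\Delta}} = \wbarb[b]{\ctxh[\mc{D}']{\mc{F}}{\Lambda}{\Xi}}$.

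Finally, \cref{cor:sill-obs-equiv/extern-observ:3} turns this barb back into a non-$\bot$ observation $\obsbr{\ctxh[\mc{D}']{\mc{F}}{\Lambda}{\Xi}}(b) \neq \bot$; because $b$ carries an answer type ($\mathbb{Y}^+$ for $\widehat{c_j}$, $\mathbb{Y}^-$ for $\widehat{a_i}$), whose only non-$\bot$ communication is $(\mt{y}, \bot)$, this forces the value to be exactly $(\mt{y}, \bot)$, as required. As $n$, $\mc{F}$, $b$, and the experiment were arbitrary, \cref{theorem:sill-obs-equiv/extern-observ:1} then delivers $\obsbr{\mc{C}'}_{\Lambda\Xi} \commsim[\leqslant] \obsbr{\mc{D}'}_{\Lambda\Xi}$ and hence the corollary. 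The genuine mathematical content has already been spent in building the families $\mc{E}_R, \mc{E}_L$ and proving \cref{theorem:sill-obs-equiv/extern-observ:1}; the remaining obstacle is organizational, namely keeping the composite context $\mc{G} = \ctxh[{\ctxh{\mc{E}}{\Gamma}{\Delta}}]{\mc{F}}{\Lambda}{\Xi}$ straight so that barbed precongruence truly applies to $\mc{C}$ and $\mc{D}$, and checking that the answer types make ``weak barb on $b$'' and ``yes answer on $b$'' strictly equivalent through \cref{cor:sill-obs-equiv/extern-observ:3}.
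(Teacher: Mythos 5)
Your proof is correct and takes essentially the same route as the paper: the paper's ``proof'' of this corollary is exactly the barbed-precongruence transfer argument sketched after \cref{question:sill-obs-equiv/extern-observ:1} (reflexivity gives the $(\mt{y},\bot)$ answers for $\mc{C}$, precongruence plus \cref{prop:sill-obs-equiv/barb-cont-congr:8} transfers them to $\mc{D}$, and \cref{theorem:sill-obs-equiv/extern-observ:1} converts back to $\commsim[\leqslant]$), which is what you carry out. Your write-up merely makes explicit the details the paper leaves implicit --- applying the theorem per experiment to the composites $\ctxh[\mc{C}]{\mc{E}}{\Gamma}{\Delta}$ and $\ctxh[\mc{D}]{\mc{E}}{\Gamma}{\Delta}$, the associativity of context composition needed so that barbed precongruence applies to $\mc{C}$ and $\mc{D}$ themselves, and the conversion between weak barbs and $(\mt{y},\bot)$ answers via \cref{cor:sill-obs-equiv/extern-observ:3} and the emptiness of $\mb{0}^{\pm}$ --- all of which are exactly the right glue.
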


\begin{corollary}
  \label{cor:sill-obs-equiv/extern-observ:2}
  If $\jtrelc{\mkcg{\wbsim}}{\Gamma}{\mc{C}}{\mc{D}}{\Delta}$, then $\jtrelc{\eocssim}{\Gamma}{\mc{C}}{\mc{D}}{\Delta}$.
\end{corollary}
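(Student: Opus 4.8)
The plan is to obtain this corollary as the special case $\leqslant = \unirel$ of Corollary~\ref{cor:sill-obs-equiv/extern-observ:5}. Recall that external observational simulation is by definition $\eocssim = \sobssim{E}$, where the external observation system $E = (\mc{X}^E, \unirel)$ of Definition~\ref{def:sill-obs-equiv/main-equiv:3} compares functional values using the universal relation $\unirel$. Hence $\eocssim = \sobssim{(\mc{X}^E, \unirel)}$, and it suffices to instantiate Corollary~\ref{cor:sill-obs-equiv/extern-observ:5} with $\leqslant = \unirel$, provided we discharge the two side conditions that corollary places on $\leqslant$.

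The first side condition is that $\leqslant$ be a preorder, as required by Definition~\ref{def:sill-obs-equiv/main-equiv:2}; since $\unirel$ relates every pair of terms of a common type, it is trivially reflexive and transitive, hence a preorder. The second, and the only substantive point, is that Corollary~\ref{cor:sill-obs-equiv/extern-observ:5} assumes, for each type $\tau$ and value $\jtypef{\cdot}{v}{\tau}$, the existence of an oracle process $O^\unirel_{v : \tau}$ satisfying Assumption~\ref{ass:sill-obs-equiv/extern-observ:1} for the predicate $\jtrelf{\unirel}{\cdot}{v}{(-)}{\tau}$. Because $\unirel$ is universal, this predicate holds of every value, so the oracle never answers $\mt{ff}$: the always-accepting process
\[
  O^\unirel_{v:\tau} = \tRecvV{x}{c}{\tRecvS{c}{\tSendL{c}{\mt{tt}}{\tClose{c}}}}
\]
does the job. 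I would verify that it has the required type $\jtypem{\cdot}{\cdot}{O^\unirel_{v:\tau}}{c}{\Timp{\tau}{\Tus{\Tplus \{ \mt{tt} : \Tu, \mt{ff} : \Tu \}}}}$ by applying \getrn{R-timp}, \getrn{R-tus}, \getrn{R-tplus}, and \getrn{R-tu} in sequence, and that it exhibits the reduction behaviour demanded of an oracle, namely $\jproc{c}{O^\unirel_{v:\tau}}, \jmsg{d}{\mSendVN{c}{w}{d}}, \jmsg{e}{\mSendSN{d}{e}} \mssteps \jmsg{f}{\mClose{f}}, \jmsg{e}{\mSendLP{e}{\mt{tt}}{f}}$ for every $w$, using the corresponding multiset rewrite rules for value reception, shift reception, label transmission, and closing.

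With both side conditions met, Corollary~\ref{cor:sill-obs-equiv/extern-observ:5} applies verbatim: $\jtrelc{\mkcg{\wbsim}}{\Gamma}{\mc{C}}{\mc{D}}{\Delta}$ implies $\jtrelc{\sobssim{(\mc{X}^E, \unirel)}}{\Gamma}{\mc{C}}{\mc{D}}{\Delta}$, which is precisely $\jtrelc{\eocssim}{\Gamma}{\mc{C}}{\mc{D}}{\Delta}$. The only real work is the routine operational-and-typing check that the trivial oracle meets its specification; there is no genuine mathematical obstacle, since the universal relation collapses the oracle's decision problem to the constant predicate ``true.''
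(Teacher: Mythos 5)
Your proposal is correct and follows exactly the paper's own route: the paper also proves this corollary by instantiating \cref{cor:sill-obs-equiv/extern-observ:5} with $\leqslant = \unirel$ and exhibiting the always-accepting oracle $\jtypem{\cdot}{\cdot}{\tRecvV{\uscore}{c}{\tRecvS{c}{\tSendL{c}{\mt{tt}}{\tClose{c}}}}}{c}{\Timp{\tau}{\Tus{\Tplus \{ \mt{tt} : \Tu, \mt{ff} : \Tu \}}}}$, which is the same process you construct (up to naming the discarded variable). Your additional checks (that $\unirel$ is a preorder, and the typing/reduction verification of the oracle) are routine and consistent with what the paper leaves implicit.
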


\begin{proof}
  By \cref{cor:sill-obs-equiv/extern-observ:5}.
  The oracle $O^{\unirel}_{v : \tau}$ assumed by \cref{cor:sill-obs-equiv/extern-observ:5} is given by:
  \[
    \jtypem{\cdot}{\cdot}{\tRecvV{\uscore}{c}{\tRecvS{c}{\tSendL{c}{\mt{tt}}{\tClose{c}}}}}{c}{\Timp{\tau}{\Tus{\Tplus \{ \mt{tt} : \Tu, \mt{ff} : \Tu \}}}}.\qedhere
  \]
\end{proof}

Combining \cref{cor:sill-obs-equiv/extern-observ:2,prop:sill-obs-equiv/barb-cont-congr:7} gives:

\begin{theorem}
  \label{theorem:sill-obs-equiv/extern-observ:2}
  For all $\jcfgt{\Gamma}{\mc{C}}{\Delta}$ and $\jcfgt{\Gamma}{\mc{D}}{\Delta}$, $\jtrelc{\mkcg{\wbsim}}{\Gamma}{\mc{C}}{\mc{D}}{\Delta}$ if and only if $\jtrelc{\eocssim}{\Gamma}{\mc{C}}{\mc{D}}{\Delta}$.
\end{theorem}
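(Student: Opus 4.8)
The plan is to prove the biconditional by establishing the two inclusions $\mkcg{\wbsim} \subseteq \eocssim$ and $\eocssim \subseteq \mkcg{\wbsim}$ separately, each as a direct consequence of machinery already developed. Essentially all the real work has been carried out in the two cited results (\cref{cor:sill-obs-equiv/extern-observ:2} and \cref{prop:sill-obs-equiv/barb-cont-congr:7}), so the theorem itself amounts to assembling them together with a short lattice-theoretic argument about contextual interiors.

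For the forward direction, I would simply invoke \cref{cor:sill-obs-equiv/extern-observ:2}: if $\jtrelc{\mkcg{\wbsim}}{\Gamma}{\mc{C}}{\mc{D}}{\Delta}$, then $\jtrelc{\eocssim}{\Gamma}{\mc{C}}{\mc{D}}{\Delta}$. This is immediate and requires only the oracle process $O^{\unirel}_{v : \tau}$ exhibited in the proof of that corollary; for the universal relation $\unirel$ the oracle unconditionally answers ``true'', so \cref{ass:sill-obs-equiv/extern-observ:1} is discharged for free.

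For the reverse direction, I would argue that $\eocssim$ sits inside the contextual interior $\mkcg{\wbsim}$ by maximality. First, recall from \cref{prop:sill-obs-equiv/total-observ:3} that $\eocssim = \sobssim{(\mc{X}^E, \unirel)}$ is a precongruence, hence contextual. Second, instantiating \cref{prop:sill-obs-equiv/barb-cont-congr:7} with $\leqslant = \unirel$ gives the inclusion $\eocssim \subseteq \wbsim$ of type-indexed relations. Now $\mkcg{\wbsim}$ is, by \cref{def:sill-obs-equiv/extern-observ:2} and \cref{def:sill-obs-equiv/relat-equiv:5}, the greatest contextual type-indexed relation contained in $\wbsim$; since $\eocssim$ is itself a contextual relation contained in $\wbsim$, maximality yields $\eocssim \subseteq \mkcg{\wbsim}$. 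Thus $\jtrelc{\eocssim}{\Gamma}{\mc{C}}{\mc{D}}{\Delta}$ implies $\jtrelc{\mkcg{\wbsim}}{\Gamma}{\mc{C}}{\mc{D}}{\Delta}$, completing the proof.

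The only point requiring care is the reverse direction's appeal to maximality: it hinges on knowing that $\eocssim$ is genuinely contextual with respect to \emph{all} contexts (not merely simply-branched ones), which is exactly what \cref{prop:sill-obs-equiv/total-observ:3} supplies, and on the contextual interior defining $\mkcg{\wbsim}$ being taken over all contexts as well, so that the containment $\eocssim \subseteq \mkcg{\wbsim}$ is the inclusion we want. I expect no genuine obstacle here, since the difficult constructions---the experiment-context families $\mc{E}_R$ and $\mc{E}_L$ together with the height-$n$ approximation argument of \cref{prop:sill-obs-equiv/extern-observ:1,prop:sill-obs-equiv/extern-observ:4}---are entirely absorbed into \cref{cor:sill-obs-equiv/extern-observ:2}, leaving only the bookkeeping of combining the two implications.
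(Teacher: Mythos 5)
Your proposal is correct and takes essentially the same approach as the paper: the paper's entire proof is the combination of \cref{cor:sill-obs-equiv/extern-observ:2} (the forward direction) with \cref{prop:sill-obs-equiv/barb-cont-congr:7} (the reverse direction), leaving implicit exactly the bookkeeping you spell out---that $\eocssim$ is contextual by \cref{prop:sill-obs-equiv/total-observ:3}, so its containment in $\wbsim$ lifts to containment in the contextual interior $\mkcg{\wbsim}$ by maximality.
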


\subsection{Summary of Relations}
\label{sec:sill-obs-equiv:extern-observ:relations}

\Cref{fig:sill-obs-equiv/extern-observ:1} summarizes the main results for relations on configurations.
Double arrows denote implications.
Dashed arrows denote conjectured implications.
Missing arrows (when not implied by transitivity) indicate falsehoods.
We recall that: \( \mssteps \) is the reflexive, transitive closure of \( \msstep \); \( \iocssim \) is internal observational simulation; \( \iocsprec\) is internal observational precongruence; \( \tcommeq \) is total observational equivalence; \( \tocssim \) is total observational simulation; \( \tocsprec \) is total observational precongruence; \( \eocssim \) is external observational simulation; \( \eocsprec \) is external observational precongruence; \( \wbsim \) is barbed simulation; and \( \mkcg{\wbsim} \) is barbed precongruence.

\begin{figure}
  \begin{equation*}
    \begin{tikzcd}[column sep=5em, row sep=2em]
      {\mssteps}
      \ar[d, Rightarrow, "{\text{\cref*{prop:sill-obs-equiv/observ-comm:8}}}"]
      &
      \iocssim
      &
      &
      \iocsprec
      \ar[ll, Rightarrow, swap, "{\text{def.}}"]
      \ar[dd, Rightarrow, bend left, sloped, "{\text{\cref*{cor:sill-obs-equiv/intern-observ:1}}}"]
      \\
      \tcommeq
      \ar[r, Rightarrow, "{\text{\cref*{prop:sill-obs-equiv/observ-comm-equiv:10}}}"]
      &
      \tocssim
      \ar[r, Leftrightarrow, "{\text{\cref*{prop:sill-obs-equiv/total-observ:3}}}"]
      \ar[u, Rightarrow, swap, "{\text{\cref*{prop:sill-obs-equiv/total-observ:1}}}"]
      \ar[d, Rightarrow, "{\text{\cref*{prop:sill-obs-equiv/total-observ:1}}}"]
      &
      \tocsprec
      \ar[ur, Rightarrow, sloped, "{\text{monot.}}"]
      \ar[dr, Rightarrow, sloped, "{\text{monot.}}"]
      &
      \\
      &
      \eocssim
      \ar[rr, Leftrightarrow, "{\text{\cref*{prop:sill-obs-equiv/total-observ:3}}}"]
      \ar[d, Rightarrow, "{\text{\cref*{prop:sill-obs-equiv/barb-cont-congr:7}}}"]
      &
      &
      \eocsprec
      \ar[d, Leftrightarrow, swap, "{\text{\cref*{theorem:sill-obs-equiv/extern-observ:2}}}"]
      \ar[uu, Rightarrow, dashed, bend left, sloped, "{\text{conj.}}"]
      \\
      &
      \wbsim
      &
      &
      \mkcg{\wbsim}
      \ar[ll, Rightarrow, swap, "{\text{def.}}"]
    \end{tikzcd}
  \end{equation*}
  \caption{Relationship between relations of \cref{sec:sill-obs-equiv:extern-observ}}
  \label{fig:sill-obs-equiv/extern-observ:1}
\end{figure}
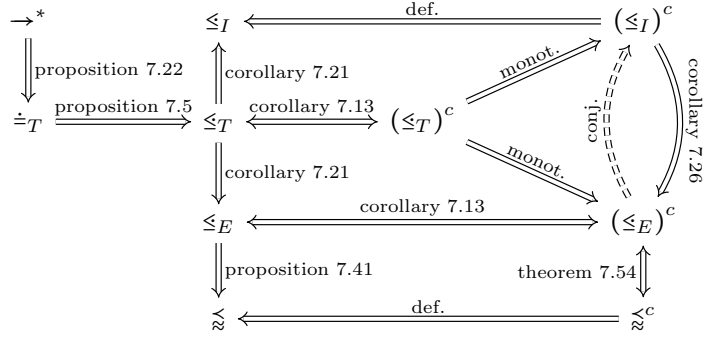

We conjecture in \cref{fig:sill-obs-equiv/extern-observ:1} that external communication precongruence implies internal communication precongruence.
We give an intuitive justification for this conjecture.
Assume that \(\jtrelc{\eocsprec}{\Gamma}{\mc{C}}{\mc{D}}{\Delta}\).
This means that for all \(\jcfgt{\Lambda}{\ctxh{\mc{F}}{\Gamma}{\Delta}}{\Phi}\) and \(\jcfgt{\Psi}{\ctxh{\mc{E}}{\Lambda}{\Phi}}{\Xi}\), we observe similar communications on the external channels \(\Psi\Xi\) of \(\ctxh[{\ctxh[\mc{C}]{\mc{F}}{\Gamma}{\Delta}}]{\mc{E}}{\Lambda}{\Phi}\) and \(\ctxh[{\ctxh[\mc{C}]{\mc{F}}{\Gamma}{\Delta}}]{\mc{E}}{\Lambda}{\Phi}\):
\begin{equation}
  \label{eq:sill-obs-equiv-summary-relations:1}
  \obsbr{\jcfgt{\Psi}{\ctxh[{\ctxh[\mc{C}]{\mc{F}}{\Gamma}{\Delta}}]{\mc{E}}{\Lambda}{\Phi}}{\Xi}}_{\Psi\Xi} \commsim[\unirel]
  \obsbr{\jcfgt{\Psi}{\ctxh[{\ctxh[\mc{D}]{\mc{F}}{\Gamma}{\Delta}}]{\mc{E}}{\Lambda}{\Phi}}{\Xi}}_{\Psi\Xi}.
\end{equation}
To prove the conjecture, we must show that we observe similar communications on the internal channels~\(\Lambda\Phi\):
\[
  \obsbr{\jcfgt{\Psi}{\ctxh[{\ctxh[\mc{C}]{\mc{F}}{\Gamma}{\Delta}}]{\mc{E}}{\Lambda}{\Phi}}{\Xi}}_{\Lambda\Phi} \commsim[\unirel]
  \obsbr{\jcfgt{\Psi}{\ctxh[{\ctxh[\mc{D}]{\mc{F}}{\Gamma}{\Delta}}]{\mc{E}}{\Lambda}{\Phi}}{\Xi}}_{\Lambda\Phi}.
\]
Intuitively, this must be the case: if we observed dissimilar communications on \(\Lambda\Phi\), then we should be able to construct an experiment \(\ctxh{\mc{E}}{\Lambda}{\Phi}\) that uses these dissimilar communications to produce dissimilar communications on its external channels, contradicting \cref{eq:sill-obs-equiv-summary-relations:1}.
Put differently, we suspect that \cref{theorem:sill-obs-equiv/extern-observ:1} could be adapted to establish the conjecture.
The difficulty in doing so stems from the fact that the families of experiment contexts used in \cref{theorem:sill-obs-equiv/extern-observ:1} only had to account for unidirectional communications (we were only concerned with communication on external channels), while in this case, we must construct families of experiment contexts that account for bidirectional communications (we are now concerned with communication on internal channels).

\subsection{Precongruences for Processes}
\label{sec:sill-obs-equiv:proc-equiv}

In this section, we relate relations on configurations to relations on processes.
Recall that to show that two configurations are total or external precongruent, it is sufficient by \cref{prop:sill-obs-equiv/extern-observ:6} to consider only simply branched contexts.
In \cref{sec:sill-obs-equiv:proc-equiv:relat-simply-branch}, we show that simply branched configuration contexts closely mirror ``observation contexts'' for processes.
In \cref{sec:sill-obs-equiv:proc-equiv:lift-relat-conf}, we show how to lift relations from configurations to open processes.
We show that there is a tight correspondence between various notions of precongruence on configurations and on open processes.

\subsubsection{Relating Simply Branched Contexts and Observation Contexts}
\label{sec:sill-obs-equiv:proc-equiv:relat-simply-branch}

Observation contexts capture the idea of processes experimenting on processes through communication.
Intuitively, an observation context is a process context that only interacts with its hole by communicating:

\begin{definition}
  \label{def:sill-obs-equiv/relat-equiv:4}
  An \defin[context!observation|defin]{observation context} is a typed context derived using exactly one instance of the axiom \getrn{Ct-p-hole}, plus zero or more instances of the derived rules \getrn{Ct-hole-cut-l} and \getrn{Ct-hole-cut-r},
  \begin{gather*}
    \getrule{Ct-hole-cut-l}
    \\
    \getrule{Ct-hole-cut-r}
  \end{gather*}
  such that the context satisfies the grammar:
  \[
    \ctxh{O}{\Delta}{a : A} \Coloneqq \ctxh{}{\Delta}{a:A} \mid \tCut{b}{\ctxh{O}{\Delta}{a:A}}{P} \mid \tCut{b}{P}{\ctxh{O}{\Delta}{a:A}}.\qedhere
  \]
\end{definition}

Recall from \cref{def:sill-obs-equiv/bisimulations:2} the definition of a contextual relation on processes.
``Observational contextuality'' weakens this notion from arbitrary contexts to observation contexts:

\begin{definition}
  \label{def:sill-obs-equiv/proc-equiv:2}
  A typed-indexed relation $\relfnt{R}$ on processes is \defin{observationally contextual} if ${\jtrelp{\relfnt{R}}{\cdot}{\Delta}{P}{Q}{a}{A}}$ implies $\jtrelp{\relfnt{R}}{\cdot}{\Lambda}{\ctxh[P]{O}{\Delta}{a : A}}{\ctxh[Q]{O}{\Delta}{a : A}}{b}{B}$ for all observation contexts ${\jtypem{\cdot}{\Lambda}{\ctxh{O}{\Delta}{a:A}}{b}{B}}$.
\end{definition}

\begin{definition}
  \label{def:sill-obs-equiv/proc-equiv:4}
  The \defin{observationally contextual interior} of a typed relation $\relfnt{R}$ on processes is the greatest observationally contextual typed relation $\mkocg{\relfnt{R}}$\glsadd{mkocg} contained in $\relfnt{R}$.
\end{definition}

By analogy with \cref{prop:sill-obs-equiv/bisimulations:2}, we can think of \(\mkocg{\relfnt{R}} \subseteq \relfnt{R}\) as a limited precongruence on processes when \(\relfnt{R}\) is a preorder.

There is an obvious translation from observation contexts to configuration contexts, where we inductively map \getrn{Ct-p-hole} to \getrn{conf-h}, and \getrn{Ct-hole-cut-l} and \getrn{Ct-hole-cut-r} to \getrn{conf-c}:

\begin{proposition}
  \label{prop:sill-obs-equiv/relat-equiv:1}
  Let $\jtypem{\cdot}{\Lambda}{\ctxh{O}{\Delta}{a:A}}{c}{C}$ be an observation context.
  There exists a configuration context $\jcfgt{\Lambda}{\ctxh{\mc{O}}{\Delta}{a : A}}{c : C}$ such that $\jproc{c}{\ctxh[Q]{O}{\Delta}{a : A}} \mssteps \ctxh[\jproc{a}{Q}]{\mc{O}}{\Delta}{a : A}$ for all processes $\jtypem{\cdot}{\Delta}{Q}{a}{A}$.
\end{proposition}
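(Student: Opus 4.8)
The plan is to proceed by structural induction on the observation context $\ctxh{O}{\Delta}{a:A}$, following the grammar of \cref{def:sill-obs-equiv/relat-equiv:4} (equivalently, by induction on its derivation using \getrn{Ct-p-hole}, \getrn{Ct-hole-cut-l}, and \getrn{Ct-hole-cut-r}). Throughout, the hole interface $(\Delta, a : A)$ stays fixed while the provided channel of the context varies, so I would prove the slightly stronger statement: for every observation context $\jtypem{\cdot}{\Lambda}{\ctxh{O}{\Delta}{a:A}}{c}{C}$ there is a configuration context $\jcfgt{\Lambda}{\ctxh{\mc{O}}{\Delta}{a:A}}{c : C}$ with $\jproc{c}{\ctxh[Q]{O}{\Delta}{a:A}} \mssteps \ctxh[\jproc{a}{Q}]{\mc{O}}{\Delta}{a:A}$ for all $\jtypem{\cdot}{\Delta}{Q}{a}{A}$. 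The translation replaces the process hole \getrn{Ct-p-hole} by the configuration hole \getrn{conf-h} and each cut by a \getrn{conf-c} composition along the cut channel.

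In the base case $\ctxh{O}{\Delta}{a:A} = \ctxh{}{\Delta}{a:A}$, I take $\mc{O}$ to be the configuration hole, so that plugging $\jproc{a}{Q}$ yields $\jproc{a}{Q}$ itself; since $c = a$ and $C = A$ here, the required reduction $\jproc{a}{Q} \mssteps \jproc{a}{Q}$ holds by reflexivity. For the left-cut case $\ctxh{O}{\Delta}{a:A} = \tCut{b}{\ctxh{O'}{\Delta}{a:A}}{P}$, the derivation via \getrn{Ct-hole-cut-l} gives $\jtypem{\cdot}{\Delta_1}{\ctxh{O'}{\Delta}{a:A}}{b}{B}$ and $\jtypem{\cdot}{\Delta_2, b:B}{P}{c}{C}$. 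The induction hypothesis supplies a configuration context $\jcfgt{\Delta_1}{\ctxh{\mc{O'}}{\Delta}{a:A}}{b:B}$, and I define $\ctxh{\mc{O}}{\Delta}{a:A}$ to be its \getrn{conf-c} composition with $\jproc{c}{P}$ along $b$, so that $b$ becomes internal. Then $\jproc{c}{\ctxh[Q]{O}{\Delta}{a:A}} = \jproc{c}{\tCut{b}{\ctxh[Q]{O'}{\Delta}{a:A}}{P}}$ reduces by one application of \cref{eq:sill:msr-cut} to $\jproc{b}{\ctxh[Q]{O'}{\Delta}{a:A}}, \jproc{c}{P}$ (after renaming the generated channel to $b$), and the induction hypothesis together with the fact that $\mssteps$ is preserved under extension by the stationary multiset $\jproc{c}{P}$ reduces this to $\ctxh[\jproc{a}{Q}]{\mc{O'}}{\Delta}{a:A}, \jproc{c}{P} = \ctxh[\jproc{a}{Q}]{\mc{O}}{\Delta}{a:A}$. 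Transitivity of $\mssteps$ concludes this case, and the right-cut case using \getrn{Ct-hole-cut-r} and \cref{eq:sill:msr-cut} is symmetric.

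The main obstacle is bookkeeping for channel names rather than any conceptual difficulty. The cut rule \cref{eq:sill:msr-cut} introduces the private channel existentially as a globally fresh constant, whereas the composition channel of \getrn{conf-c} is a fixed internal name $b$; identifying the two requires the semantic irrelevance of fresh constants from \cref{sec:ssos-fairness:mult-rewr-syst:semant-irrel-fresh}, so that I may refresh the trace to use $b$. Relatedly, lifting the induction hypothesis from $\jproc{b}{\ctxh[Q]{O'}{\Delta}{a:A}}$ to the larger multiset requires that each step of a trace on a submultiset also be a step of the whole; this follows because every rewrite rule touches only an active portion and leaves the remaining stationary multiset unchanged, provided the fresh constants generated along the way are chosen disjoint from the channels of $\jproc{c}{P}$, which is always possible. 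I would also appeal to \cref{def:sill-obs-equiv/relat-equiv:1} to justify that the \getrn{conf-c} composition is a well-formed configuration context whose plugging operation threads the internal channel $b$ correctly.
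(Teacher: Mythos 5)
Your proposal is correct and follows essentially the same route as the paper's proof: induction on the derivation of the observation context, mapping \getrn{Ct-p-hole} to \getrn{conf-h} and each \getrn{Ct-hole-cut-l}/\getrn{Ct-hole-cut-r} to a \getrn{conf-c} composition along the cut channel, with one application of \cref{eq:sill:msr-cut} followed by the induction hypothesis and transitivity of $\mssteps$. The only difference is that you make explicit two bookkeeping points the paper leaves implicit---identifying the freshly generated cut channel with the fixed internal name $b$ via refreshing, and framing the inductive trace by the stationary fact $\jproc{c}{P}$---both of which are handled correctly.
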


\begin{proof}
  By induction on the derivation of the observation context.
  \begin{proofcases}
  \item[\getrn{Ct-p-hole}] Let $\mc{O}$ be given by \getrn{conf-h}.
    The step is given by reflexivity.
  \item[\getrn{Ct-hole-cut-l}] The observation context is $\getrc{Ct-hole-cut-l}$, and it is formed by:
    \[
      \getrule{Ct-hole-cut-l}
    \]
    By the induction hypothesis, there exists a configuration context $\jcfgti{\Delta_1}{\Iota'}{\ctxh{\mc{O}'}{\Delta}{a : A}}{b : B}$ such that $\jproc{b}{\ctxh[Q]{O}{\Delta}{a : A}} \mssteps \ctxh[\jproc{a}{Q}]{\mc{O}'}{\Delta}{a : A}$.
    This implies that
    \[
      \jproc{c}{\tCut{b}{\ctxh[Q]{O}{\Delta}{a : A}}{P}} \msstep \jproc{b}{\ctxh[Q]{O}{\Delta}{a : A}}, \jproc{c}{P} \mssteps \ctxh[\jproc{a}{Q}]{\mc{O}'}{\Delta}{a : A}, \jproc{c}{P}.
    \]
    Let $\mc{O}$ be given by:
    \[
      \infer[\getrn{conf-c}]{
        \jcfgti{\Delta_1 \Delta_2}{\Iota', b : B}{\ctxh{\mc{O}'}{\Delta}{a : A}, \jproc{c}{P}}{c : C}
      }{
        \jcfgti{\Delta_1}{\Iota'}{\ctxh{\mc{O}'}{\Delta}{a : A}}{b : B}
        &
        \infer[\getrn{conf-p}]{
          \jcfgti{\Delta_2, b : B}{\cdot}{\jproc{c}{P}}{c : C}
        }{
          \getrh{Ct-hole-cut-l}{2}
        }
      }
    \]
    Plugging $\jproc{a}{Q}$ into $\mc{O}$ gives the configuration
    \[
      \jcfgti{\Delta_1, \Delta_2}{\Iota', b : B}{\ctxh[\jproc{a}{Q}]{\mc{O}'}{\Delta}{a : A}, \jproc{c}{P}}{c : C}
    \]
    that we recognize as the result of the above sequence of rewrite steps.
    We conclude that
    \[
      \jproc{c}{\tCut{b}{\ctxh[Q]{O}{\Delta}{a : A}}{P}} \mssteps \ctxh[\jproc{a}{Q}]{\mc{O}}{\Delta}{a : A}.
    \]
  \item[\getrn{Ct-hole-cut-r}]
    This case is symmetric to the previous case.\qedhere
  \end{proofcases}
\end{proof}

The translation in the opposite direction is more subtle.
To translate configuration contexts to observation contexts, one must be able to translate configurations to processes.
Naïvely, one would hope that:

\begin{falsehood}
  If $\jcfgt{\Gamma}{\mc{C}}{c : C}$, then there exists a process $\jtypem{\cdot}{\Gamma}{P}{c}{C}$ such that $\jproc{c}{P} \mssteps \mc{C}$.
\end{falsehood}

This is often impossible when configurations contain message facts.
Consider, for example, the configuration $\jcfgti{d : A}{\cdot}{\jmsg{c}{\mSendCP{c}{k}{d}}}{c : \Tplus \{ k : A \}}$.
The only plausible solutions are variations on the theme $\jtypem{\cdot}{d : A}{\tSendL{c}{k}{\tFwdP{d}{c}}}{d}{c : \Tplus \{ k : A \}}$.
However,
\[
  \jproc{c}{\tSendL{c}{k}{\tFwdP{d}{c}}} \msstep \jproc{d'}{\tFwdP{d}{d'}}, \jmsg{c}{\mSendCP{c}{k}{d'}}
\]
and there is no way to get rid of the forwarding process fact.
Instead, we settle for \cref{prop:sill-obs-equiv/observ-comm-equiv:5}.
It states that every simply branched configuration \(\mc{P}\) can be encoded as a process \(P\), and that \(\mc{P}\) is total observationally equivalent to the process fact \(\jproc{c}{P}\).
Because total observational equivalence is the finest observational equivalence (\cref{prop:sill-obs-equiv/total-observ:1}), this implies that no observational simulation can distinguish \(\mc{P}\) and \(\jproc{c}{P}\).

\begin{proposition}
  \label{prop:sill-obs-equiv/observ-comm-equiv:5}
  If $\jcfgt{\Delta}{\mc{P}}{c : A}$, then $\jtrelc{\strobsc}{\Delta}{\mc{P}}{\jproc{a}{P}}{c : A}$ for some process $\jtypem{\cdot}{\Delta}{P}{c}{A}$
\end{proposition}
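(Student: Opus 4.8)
The plan is to argue by structural induction on a simply branched derivation of $\jcfgt{\Delta}{\mc{P}}{c : A}$. Since $\mc{P}$ provides exactly one channel, it is simply branched by \cref{prop:sill-obs-equiv/observ-comm:3}, so such a derivation exists and is built from \getrn{conf-m}, \getrn{conf-p}, and instances of \getrn{conf-c} that compose along a single channel. Throughout I will use three facts about $\strobsc$: it is closed under multiset rewriting (\cref{prop:sill-obs-equiv/observ-comm:8}), it is invariant under renaming of internal channels (\cref{def:sill-obs-equiv/bisimulations:1}), and it is a congruence. The last point follows because $\tocssim$ is a precongruence by \cref{prop:sill-obs-equiv/total-observ:3} and $\strobsc = \sobseq{T}$ is its symmetrization by \cref{prop:sill-obs-equiv/observ-comm-equiv:10}; symmetrizing a precongruence yields an equivalence relation that still respects composition.

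The base case \getrn{conf-p} is immediate: if $\mc{P} = \jproc{c}{P}$ with $\jtypem{\cdot}{\Delta}{P}{c}{A}$, then $P$ itself works by reflexivity. For the base case \getrn{conf-m}, write $\mc{P} = \jmsg{c}{m}$ with $\jtypem{\cdot}{\Delta}{m}{c}{A}$, and take $P = m$, viewing the message process as an ordinary process. Running $\jproc{c}{m}$ fires the corresponding send rule, producing a message fact together with a forwarding process on the freshly generated continuation channel, e.g.\ $\jproc{c}{\mSendLP{c}{k}{d}} \msstep \jproc{d'}{\tFwdP{d}{d'}}, \jmsg{c}{\mSendLP{c}{k}{d'}}$; the close message is the degenerate case with no continuation. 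By \cref{prop:sill-obs-equiv/observ-comm:8}, $\jproc{c}{m}$ is $\strobsc$-equivalent to this result, and \cref{prop:sill-obs-equiv/observ-comm-equiv:2} (forwarding has no observable effect beyond renaming) eliminates the forwarding residue, identifying it with $\jmsg{c}{m}$ up to renaming of the continuation channel. This yields $\jproc{c}{m} \strobsc \jmsg{c}{m}$.

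For the inductive step, a \getrn{conf-c} instance composing along a single channel $\pi$ decomposes $\mc{P}$ as $\mc{C}, \mc{D}$, where (matching provided contexts against the single provided channel $c : A$, and using that provided contexts are non-empty) $\mc{C}$ is simply branched providing $\pi : B$ and using some $\Gamma$, while $\mc{D}$ is simply branched providing $c : A$ and using $\pi : B, \Lambda$, with $\Delta = \Gamma\Lambda$. The induction hypothesis gives processes $\jtypem{\cdot}{\Gamma}{C}{\pi}{B}$ and $\jtypem{\cdot}{\pi : B, \Lambda}{D}{c}{A}$ with $\mc{C} \strobsc \jproc{\pi}{C}$ and $\mc{D} \strobsc \jproc{c}{D}$. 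By the congruence property, $\mc{P} = \mc{C}, \mc{D} \strobsc \jproc{\pi}{C}, \jproc{c}{D}$. Setting $P = \tCut{\pi}{C}{D}$ (well-typed by \getrn{cut}), \cref{eq:sill:msr-cut} gives $\jproc{c}{P} \msstep \jproc{\pi'}{\subst{\pi'}{\pi}{C}}, \jproc{c}{\subst{\pi'}{\pi}{D}}$; closure under stepping and renaming invariance of the internal channel $\pi'$ then yield $\jproc{c}{P} \strobsc \jproc{\pi}{C}, \jproc{c}{D}$. Transitivity closes the case.

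The main obstacle is the \getrn{conf-m} base case, where the naïve hope that $\jproc{c}{P} \mssteps \mc{P}$ fails precisely because of the forwarding residue noted in the Falsehood above. The work there is to discharge that residue via \cref{prop:sill-obs-equiv/observ-comm-equiv:2}, selecting the polarity-dependent variant appropriate to whether the continuation channel is a positive used channel or a negative one, and tracking the continuation-channel renaming carefully. A secondary point requiring care is confirming that $\strobsc$ respects composition, so that the congruence step in the inductive case is legitimate.
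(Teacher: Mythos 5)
Your proposal is correct and follows essentially the same route as the paper's proof: induction on the simply branched derivation obtained from \cref{prop:sill-obs-equiv/observ-comm:3}, with \getrn{conf-p} by reflexivity, \getrn{conf-m} handled by taking $P = m$, stepping via \cref{prop:sill-obs-equiv/observ-comm:8}, and discharging the forwarding residue with \cref{prop:sill-obs-equiv/observ-comm-equiv:2}, and \getrn{conf-c} handled by applying the induction hypothesis, forming $\tCut{b}{C}{D}$, stepping, and invoking congruence of $\strobsc$. Your additional care in justifying the congruence property of $\strobsc$ (via \cref{prop:sill-obs-equiv/total-observ:3,prop:sill-obs-equiv/observ-comm-equiv:10}) and in tracking the fresh continuation-channel renaming are refinements of details the paper leaves implicit, not a different argument.
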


\begin{proof}
  By \cref{prop:sill-obs-equiv/observ-comm:3}, $\jcfgt{\Delta}{\mc{P}}{c : A}$ has a simply-branched derivation.
  We proceed by induction on this derivation.
  We give only the illustrative cases.
  \begin{proofcases}
  \item[\getrn{conf-m}] We proceed by case analysis on the particular message fact in
    \[
      \getrule{conf-m}
    \]
    \begin{proofcases}
    \item[$m = \mClose c$] Take $P = \tClose{c}$ and apply \cref{eq:sill:msr-tu-r} and \cref{prop:sill-obs-equiv/observ-comm:8} (closure of \(\strobsc\) under stepping).
    \item[$m = \mSendLP{c}{k}{d}$] Take $P = m$ and apply \cref{eq:sill:msr-tplus-r,prop:sill-obs-equiv/observ-comm:8} to get:
      \[
        \jtrelc{\strobsc}{d : A_k}{\jproc{c}{\mSendLP{c}{k}{d}}}{\jproc{e}{\tFwdP{d}{e}}, \jmsg{c}{\mSendLP{c}{k}{e}}}{c : \Tplus \{ l : A_l \}_{l \in L}}.
      \]
      By \cref{prop:sill-obs-equiv/observ-comm-equiv:2},
      \[
        \jtrelc{\strobsc}{d : A_k}{\jmsg{c}{\mSendLP{c}{k}{d}}}{\jproc{e}{\tFwdP{d}{e}}, \jmsg{c}{\mSendLP{c}{k}{e}}}{c : \Tplus \{ l : A_l \}_{l \in L}}.
      \]
      We conclude the result by transitivity and symmetry.
    \end{proofcases}
  \item[\getrn{conf-p}] Immediate by reflexivity.
  \item[\getrn{conf-c}] By assumption, both branches of the rule
    \[
      \getrule{conf-c}
    \]
    are simply branched, and $\Pi = b : B$ contains a single channel.
    By the induction hypothesis, there exist processes $C$ and $D$ such that
    \( \jtrelc{\strobsc}{\Gamma}{\mc{C}}{\jproc{b}{C}}{b : B} \) and \( \jtrelc{\strobsc}{b : B, \Lambda}{\mc{D}}{\jproc{c}{D}}{c : A} \).
    Take the process $\jtypem{\cdot}{\Gamma\Lambda}{\tCut{b}{C}{D}}{c}{A}$.
    Then
    \[
      \jproc{a}{\tCut{b}{D}{C}} \msstep \jproc{b}{C}, \jproc{c}{D},
    \]
    so $\jtrelc{\strobsc}{\Gamma\Lambda}{\jproc{a}{\tCut{b}{C}{D}}}{\jproc{b}{C}, \jproc{c}{D}}{c : A}$ by \cref{prop:sill-obs-equiv/observ-comm:8}.
    Because $\strobsc$ is a congruence,
    \[
      \jtrelc{\strobsc}{\Gamma\Lambda}{\jproc{a}{\tCut{b}{C}{D}}}{\mc{C}, \mc{D}}{c : A}
    \]
    as desired.\qedhere
  \end{proofcases}
\end{proof}

\Cref{prop:sill-obs-equiv/observ-comm-equiv:6} extends \cref{prop:sill-obs-equiv/relat-equiv:1} to give the correspondence between observation contexts for processes and simply branched configuration contexts.

\begin{proposition}
  \label{prop:sill-obs-equiv/observ-comm-equiv:6}
  \leavevmode
  \begin{enumerate}
  \item For all configuration contexts $\jcfgt{\Lambda}{\ctxh{\mc{O}}{\Delta}{a : A}}{c : C}$, there exists an observation context $\jtypem{\cdot}{\Lambda}{\ctxh{O}{\Delta}{a : A}}{c}{C}$ such that for all $\jtypem{\cdot}{\Delta}{Q}{a}{A}$,
    \[
      \jtrelc{\strobsc}{\Lambda}{\ctxh[\jproc{a}{Q}]{\mc{O}}{\Delta}{a:A}}{\jproc{c}{\ctxh[Q]{O}{\Delta}{a : A}}}{c : C}.
    \]
  \item For all observation contexts $\jtypem{\cdot}{\Lambda}{\ctxh{O}{\Delta}{a:A}}{c}{C}$, there exists a configuration context $\jcfgt{\Lambda}{\ctxh{\mc{O}}{\Delta}{a : A}}{c : C}$ such that for all $\jtypem{\cdot}{\Delta}{Q}{a}{A}$,
    \[
      \jtrelc{\strobsc}{\Lambda}{\ctxh[\jproc{a}{Q}]{\mc{O}}{\Delta}{a : A}}{\jproc{c}{\ctxh[Q]{O}{\Delta}{a : A}}}{c : C}.
    \]
  \end{enumerate}
\end{proposition}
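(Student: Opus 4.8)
The plan is to prove the two directions separately, leveraging the machinery already established. Both directions should assert the existence of a context (of the appropriate kind) together with a total observational equivalence $\strobsc$ between the two natural ways of ``running'' the plugged context---either as a process fact $\jproc{c}{\ctxh[Q]{O}{\Delta}{a : A}}$ or as a configuration $\ctxh[\jproc{a}{Q}]{\mc{O}}{\Delta}{a : A}$. The key technical inputs are \cref{prop:sill-obs-equiv/relat-equiv:1} (translating observation contexts to configuration contexts), \cref{prop:sill-obs-equiv/observ-comm:8} (closure of $\strobsc$ under single rewrite steps), and the fact that $\strobsc$ is a congruence (\cref{prop:sill-obs-equiv/total-observ:3} together with \cref{prop:sill-obs-equiv/observ-comm-equiv:10}).

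For the second item, I would proceed by induction on the derivation of the observation context $\jtypem{\cdot}{\Lambda}{\ctxh{O}{\Delta}{a:A}}{c}{C}$, closely mirroring the proof of \cref{prop:sill-obs-equiv/relat-equiv:1}. In fact, \cref{prop:sill-obs-equiv/relat-equiv:1} already produces the required configuration context $\jcfgt{\Lambda}{\ctxh{\mc{O}}{\Delta}{a : A}}{c : C}$ together with the reduction $\jproc{c}{\ctxh[Q]{O}{\Delta}{a : A}} \mssteps \ctxh[\jproc{a}{Q}]{\mc{O}}{\Delta}{a : A}$. The desired conclusion then follows immediately: since $\strobsc$ is closed under $\mssteps$ by \cref{prop:sill-obs-equiv/observ-comm:8} (applied iteratively and using transitivity of $\strobsc$), the reduction witnesses $\jtrelc{\strobsc}{\Lambda}{\jproc{c}{\ctxh[Q]{O}{\Delta}{a : A}}}{\ctxh[\jproc{a}{Q}]{\mc{O}}{\Delta}{a : A}}{c : C}$, and symmetry of $\strobsc$ gives the stated orientation.

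For the first item, the direction from configuration contexts to observation contexts, I would again induct on the derivation of $\jcfgt{\Lambda}{\ctxh{\mc{O}}{\Delta}{a : A}}{c : C}$, using the characterization of \cref{prop:sill-obs-equiv/observ-comm:3} that simply branched configurations provide exactly one channel. The base cases \getrn{conf-h} and the process/message leaves are handled much as in \cref{prop:sill-obs-equiv/observ-comm-equiv:5}: a \getrn{conf-h} axiom maps to \getrn{Ct-p-hole}, and the stationary message or process facts are absorbed into $\tCut{}{}{}$ redexes using \cref{prop:sill-obs-equiv/observ-comm-equiv:5} to replace any non-hole branch $\mc{P}$ by an equivalent process fact $\jproc{b}{P}$. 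The composition case \getrn{conf-c} splits into two subcases according to which branch contains the hole, giving rise to $\tCut{b}{\ctxh{O}{\Delta}{a : A}}{P}$ or $\tCut{b}{P}{\ctxh{O}{\Delta}{a : A}}$ respectively; in each subcase I would invoke the induction hypothesis on the hole-containing branch, use \cref{prop:sill-obs-equiv/observ-comm-equiv:5} on the other branch, perform the $\tCut{}{}{}$-reduction, and appeal to \cref{prop:sill-obs-equiv/observ-comm:8} and congruence of $\strobsc$ to assemble the pieces.

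The main obstacle is the first item's \getrn{conf-c} case, specifically the bookkeeping required to thread internal channels correctly and to ensure that the branch not containing the hole is genuinely closed (so that \cref{prop:sill-obs-equiv/observ-comm-equiv:5} applies). One must verify that the single shared channel $\Pi = b : B$ guaranteed by simple branchedness lets us form a well-typed $\tCut{}{}{}$, and that replacing a subconfiguration by an equivalent process fact commutes with plugging $\jproc{a}{Q}$ into the hole---this relies on the congruence property of $\strobsc$ and on the replacement property (\cref{lemma:sill-background-properties-traces:9}) to justify that the substitution of one branch does not disturb the interface of the composite. Once these typing and channel-freshness conditions are discharged, the equivalences chain together by transitivity exactly as in \cref{prop:sill-obs-equiv/observ-comm-equiv:5}.
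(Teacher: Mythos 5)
Your proposal is correct and follows essentially the same route as the paper: part (2) is discharged immediately by combining \cref{prop:sill-obs-equiv/relat-equiv:1} with closure of $\strobsc$ under stepping (\cref{prop:sill-obs-equiv/observ-comm:8}), and part (1) proceeds by induction on the simply branched derivation (via \cref{prop:sill-obs-equiv/observ-comm:3}), mapping \getrn{conf-h} to \getrn{Ct-p-hole} and handling \getrn{conf-c} by applying the induction hypothesis to the hole-containing branch, \cref{prop:sill-obs-equiv/observ-comm-equiv:5} to the other branch, and then assembling the pieces with a $\tCut{}{}{}$ step, \cref{prop:sill-obs-equiv/observ-comm:8}, congruence of $\strobsc$, and transitivity. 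The bookkeeping you flag as the main obstacle is handled in the paper exactly as you suggest, via the congruence property of $\strobsc$; no separate appeal to \cref{lemma:sill-background-properties-traces:9} turns out to be needed.
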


\begin{proof}
  We show the first part of the proposition.
  Let $\jcfgt{\Lambda}{\ctxh{\mc{O}}{\Delta}{a : A}}{c : C}$ be arbitrary.
  By \cref{prop:sill-obs-equiv/observ-comm:3}, it has a simply-branched derivation.
  We proceed by induction on this derivation to construct $\jtypem{\cdot}{\Lambda}{\ctxh{O}{\Delta}{a : A}}{c}{C}$.
  The possible cases are:
  \begin{proofcases}
  \item[\getrn{conf-h}] If $\mc{O}$ is a hole, then let $O = \ctxh{}{\Delta}{a : A}$.
  \item[\getrn{conf-c}] Then the context is formed by an instance of
    \[
      \getrule{conf-c}
    \]
    By simple-branching, $\Pi = b : B$ contains a single channel.
    If the hole is in the left branch, \ie, if $\mc{C} = \ctxh{\mc{O}'}{\Delta}{a : A}$, then by the induction hypothesis, there exists an observation context $\jtypem{\cdot}{\Lambda}{\ctxh{O'}{\Delta}{a : A}}{b}{B}$ such that for all $\jtypem{\cdot}{\Delta}{Q}{a}{A}$,
    \[
      \jtrelc{\strobsc}{\Lambda}{\ctxh[\jproc{a}{Q}]{\mc{O}'}{\Delta}{a:A}}{\jproc{c}{\ctxh[Q]{O'}{\Delta}{a : A}}}{b : B}.
    \]
    Let $D$ be given for $\mc{D}$ by \cref{prop:sill-obs-equiv/observ-comm-equiv:5} such that $\jtrelc{\strobsc}{b : B, \Lambda}{\mc{D}}{\jproc{c}{D}}{c : A}$.
    Take $\jtypem{\cdot}{\Lambda}{\ctxh{O}{\Delta}{a : A}}{c}{C}$ to be given by $\tCut{b}{\ctxh{O'}{\Delta}{a : A}}{D}$.
    Then for all $\jtypem{\cdot}{\Delta}{Q}{a}{A}$,
    \[
      \jproc{c}{\ctxh[Q]{O}{\Delta}{a : A}} \msstep \jproc{b}{\ctxh[Q]{O'}{\Delta}{a : A}}, \jproc{c}{D},
    \]
    so by \cref{prop:sill-obs-equiv/observ-comm:8}, \( \jtrelc{\strobsc}{\Lambda}{\jproc{c}{\ctxh[Q]{O}{\Delta}{a : A}}}{\jproc{b}{\ctxh[Q]{O'}{\Delta}{a : A}}, \jproc{c}{D}}{c : C} \).
    But $\strobsc$ is a congruence, so
    \[
      \jtrelc{\strobsc}{\Lambda}{\jproc{c}{\ctxh[Q]{O}{\Delta}{a : A}}}{\ctxh[\jproc{a}{Q}]{\mc{O}'}{\Delta}{a:A}, \mc{D}}{c : C},
    \]
    \ie, $\jtrelc{\strobsc}{\Lambda}{\jproc{c}{\ctxh[Q]{O}{\Delta}{a : A}}}{\ctxh[\jproc{a}{Q}]{\mc{O}}{\Delta}{a:A}}{c : C}$.
    The result follows by symmetry.
    The case for when the hole is in the right branch is analogous.
  \end{proofcases}

  The second part of the proposition is immediate by \cref{prop:sill-obs-equiv/observ-comm:8,prop:sill-obs-equiv/relat-equiv:1}.
\end{proof}

\subsubsection{Relating Precongruences on Configurations and Processes}
\label{sec:sill-obs-equiv:proc-equiv:lift-relat-conf}

The precongruence relations we studied in \cref{sec:total-obs-equiv:total-observ,sec:sill-obs-equiv:intern-observ,sec:sill-obs-equiv:extern-observ,sec:sill-obs-equiv:extern-observ:relations} were all defined on configurations.
In particular, they are \emph{configuration} precongruences.
By \cref{prop:sill-obs-equiv/bisimulations:2}, this means that they are closed under composition with configuration contexts.
In \cref{sec:sill-obs-equiv:proc-equiv:relat-simply-branch}, we showed how to translate between configuration contexts and observation contexts on processes.
In this subsection, we show that contextual relations on configurations lift to contextual relations on open processes.
By analogy with \cref{prop:sill-obs-equiv/bisimulations:2}, we can think of these results as stating that restricted forms of precongruence on configurations lift to restricted forms of precongruences on processes.

In this section, we frequently assume that if \(\preccurlyeq\) is a relation on configurations, then ${\strobsc} \subseteq {\preccurlyeq}$.
Recall from \cref{fig:sill-obs-equiv/extern-observ:1} that this assumption is satisfied by all observational preorders $\preccurlyeq$ that we have considered thus far.

\Cref{def:sill-obs-equiv/observ-comm:1} lifts type-indexed relations \(\relfnt{R}\) on configurations to type-indexed relations \(\relcpl{\relfnt{R}}\) on open processes using an approach reminiscent of Howe's ``open extensions''~\cite[Definition~2.2]{howe_1996:_provin_congr_bisim}, which used closing substitutions to extend relations on closed terms to relations on open terms.
Recall that a substitution ${\jcmf{\sigma}{\Phi}{x_1 : \tau_1, \dotsc, x_n : \tau_n}}$ is a list $\sigma$ of terms $N_1, \dotsc, N_n$ satisfying $\jtypef{\Phi}{N_i}{\tau_i}$ for all $1 \leq i \leq n$, and that we write $\apprs{\sigma}{P}$ for the simultaneous substitution $\subst{\vec N}{\vec x}{P}$.

\begin{definition}
  \label{def:sill-obs-equiv/proc-equiv:1}
  A \defin{closing substitution} is a substitution $\jcmf{\sigma}{\cdot}{\Psi}$ such that $\fnval{\sigma(x)}$ for all $x : \tau \in \Psi$.
\end{definition}

\begin{definition}
  \label{def:sill-obs-equiv/observ-comm:1}
  Let $\relfnt{R}$ be a type-indexed relation on configurations.
  Write $\jtrelp{\relcpl{\relfnt{R}}}{\Psi}{\Delta}{P}{Q}{c}{C}$ if $\jtrelc{\relfnt{R}}{\Delta}{\jproc{c}{\apprs{\sigma}{P}}}{\jproc{c}{\apprs{\sigma}{Q}}}{c : C}$ for all closing substitutions $\jcmf{\sigma}{\cdot}{\Psi}$.
\end{definition}

When \(\relfnt{R}\) is an observational \(\mc{S}\)-equivalence, $\jtrelp{\relcpl{\relfnt{R}}}{\Psi}{\Delta}{P}{Q}{c}{C}$ means that whenever we close \(P\) and \(Q\) using some closing substitution \(\sigma\), the experiments in \(\mc{S}\) cannot distinguish the processes \(\apprs{\sigma}{P}\) and \(\apprs{\sigma}{Q}\).
Closing substitutions are defined in terms of closed \emph{values} to capture the call-by-value semantics of Polarized SILL's functional layer: variables in processes stand for values and, operationally, we only ever substitute closed values for variables.

Recall the definition of simply branched contextual interior $\mkbcg{\relfnt{R}}$ of a relation on configurations from \cref{def:sill-background-relat-equiv:1}, and the observationally contextual interior \(\mkocg{\relfnt{R}}\) of a relation on processes from \cref{def:sill-obs-equiv/proc-equiv:4}.
The following \namecref{cor:sill-obs-equiv/proc-equiv:4} uses the translation between configuration contexts and observation contexts (\cref{prop:sill-obs-equiv/observ-comm-equiv:6}) to translate between a restricted form of precongruence \(\mkbcg{\preccurlyeq}\) on configurations and a restricted form of precongruence \(\mkocg{\relcpl*{\preccurlyeq}}\) on processes.

\begin{proposition}
  \label{cor:sill-obs-equiv/proc-equiv:4}
  Let $\preccurlyeq$ be a transitive type-indexed relation on configurations such that ${\strobsc} \subseteq {\preccurlyeq}$.
  The following are equivalent:
  \begin{enumerate}
  \item $\jtrelp{\relcpl*{\mkbcg{\preccurlyeq}}}{\Psi}{\Delta}{P}{Q}{c}{C}$;\label{item:sill-obs-equiv/observ-comm-equiv:1}
  \item $\jtrelp{\mkocg{\relcpl*{\preccurlyeq}}}{\Psi}{\Delta}{P}{Q}{c}{C}$.\label{item:sill-obs-equiv/observ-comm-equiv:3}
  \end{enumerate}
\end{proposition}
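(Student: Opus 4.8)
The plan is to treat both statements as instances of a contextual-interior characterization and to bridge them with \cref{prop:sill-obs-equiv/observ-comm-equiv:6}, which matches observation contexts against simply branched configuration contexts. The engine throughout is that corresponding contexts produce $\strobsc$-equivalent configurations once their holes are filled, so---because $\strobsc \subseteq \preccurlyeq$, $\preccurlyeq$ is transitive, and $\strobsc = \sobseq{T}$ is symmetric (it is an equivalence relation)---one may transport a $\preccurlyeq$-relationship along any such $\strobsc$-equivalence at either endpoint: from $\mc{X} \mathrel{\strobsc} \mc{X}'$, $\jtrelc{\preccurlyeq}{}{\mc{X}'}{\mc{Y}'}{}$, and $\mc{Y}' \mathrel{\strobsc} \mc{Y}$ one deduces $\jtrelc{\preccurlyeq}{}{\mc{X}}{\mc{Y}}{}$. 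First I would record the standard characterizations of the two interiors. Since the identity context lies in each family and each family is closed under composition of contexts, $\jtrelc{\mkbcg{\preccurlyeq}}{\Delta}{\mc{C}}{\mc{D}}{c : C}$ holds iff $\jtrelc{\preccurlyeq}{\Phi}{\ctxh[\mc{C}]{\mc{B}}{\Delta}{c : C}}{\ctxh[\mc{D}]{\mc{B}}{\Delta}{c : C}}{d : D}$ for every simply branched configuration context $\mc{B}$, and dually $\jtrelp{\mkocg{\relfnt{S}}}{\Psi}{\Delta}{P}{Q}{c}{C}$ holds iff $\ctxh[P]{O}{\Delta}{c : C}$ and $\ctxh[Q]{O}{\Delta}{c : C}$ are related by $\relfnt{S}$ for every observation context $O$.

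Next I would reduce both sides to the closed-process case. For statement~(1) this is immediate from \cref{def:sill-obs-equiv/observ-comm:1}: $\jtrelp{\relcpl*{\mkbcg{\preccurlyeq}}}{\Psi}{\Delta}{P}{Q}{c}{C}$ unfolds to $\jtrelc{\mkbcg{\preccurlyeq}}{\Delta}{\jproc{c}{\apprs{\sigma}{P}}}{\jproc{c}{\apprs{\sigma}{Q}}}{c : C}$ for every closing substitution $\sigma$. For statement~(2) I would check that the observationally contextual interior of a lift is again governed by closing substitutions, using that a closing substitution commutes with the cuts that build an observation context and with the embedding $\jproc{c}{({-})}$; this lets me pull the quantifier over $\sigma$ outside and land on the same closed instances $\apprs{\sigma}{P}, \apprs{\sigma}{Q}$. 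It then suffices to prove, for arbitrary closed processes $R, R'$ with $\jtypem{\cdot}{\Delta}{R}{c}{C}$ and $\jtypem{\cdot}{\Delta}{R'}{c}{C}$, that $\jtrelc{\mkbcg{\preccurlyeq}}{\Delta}{\jproc{c}{R}}{\jproc{c}{R'}}{c : C}$ iff $\jtrelp{\mkocg{\relcpl*{\preccurlyeq}}}{\cdot}{\Delta}{R}{R'}{c}{C}$.

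This closed equivalence is where \cref{prop:sill-obs-equiv/observ-comm-equiv:6} does the work. Given a simply branched configuration context $\mc{B}$, part~(1) supplies an observation context $O$ with $\jtrelc{\strobsc}{\Phi}{\ctxh[\jproc{c}{R}]{\mc{B}}{\Delta}{c : C}}{\jproc{d}{\ctxh[R]{O}{\Delta}{c : C}}}{d : D}$ for every $R$; conversely, part~(2) turns any observation context into a matching simply branched configuration context. Using the transport principle above, the quantification ``$\preccurlyeq$ survives every simply branched configuration context'' and the quantification ``$\preccurlyeq$ survives every observation context (after the $\jproc{d}{({-})}$ embedding)'' become interchangeable, yielding both directions of the closed equivalence. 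I expect the main obstacle to be the reduction of statement~(2) to its closed instances: one must verify that $\mkocg{({-})}$ interacts correctly with the closing-substitution lift $\relcpl$ and with the functional-context bookkeeping of observation contexts, and that the two context families' quantifiers line up exactly under the correspondence of \cref{prop:sill-obs-equiv/observ-comm-equiv:6} rather than merely cofinally. The remaining steps---the two interior characterizations and the $\strobsc$-transport---should be routine given the earlier results.
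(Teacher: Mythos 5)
Your proposal is correct and follows essentially the same route as the paper's proof: both directions hinge on \cref{prop:sill-obs-equiv/observ-comm-equiv:6} to trade each observation context for a matching simply branched configuration context (and conversely), and then transport $\preccurlyeq$ across the resulting $\strobsc$-equivalences using ${\strobsc} \subseteq {\preccurlyeq}$, symmetry of $\strobsc$, and transitivity---exactly your ``transport principle.'' The ``main obstacle'' you flag (the interaction of $\mkocg{({-})}$ with the closing-substitution lift $\relcpl{({-})}$) is precisely what the paper dispatches with its opening observation that $\jtrelp{\mkocg{\relcpl*{\preccurlyeq}}}{\Psi}{\Delta}{P}{Q}{c}{C}$ holds iff $\relcpl*{\preccurlyeq}$ relates $P$ and $Q$ as well as all of their closed, observation-context-wrapped instances, so your plan and the paper's proof treat that bookkeeping at the same level.
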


\begin{proof}
  Remark that, because $\strobsc$ is a congruence, ${\strobsc} \subseteq {\preccurlyeq}$ implies ${\strobsc} \subseteq {\mkbcg{\preccurlyeq}}$.
  Observe that $\jtrelp{\mkocg{\relcpl*{\preccurlyeq}}}{\Psi}{\Delta}{P}{Q}{c}{C}$ if and only if both
  \begin{enumerate}[label=(\roman*)]
  \item $\jtrelp{\relcpl*{\preccurlyeq}}{\Psi}{\Delta}{P}{Q}{c}{C}$; and
  \item $\jtrelp{\relcpl*{\preccurlyeq}}{\cdot}{\Gamma}{\ctxh[\apprs{\sigma}{P}]{O}{\Delta}{c : C}}{\ctxh[\apprs{\sigma}{Q}]{O}{\Delta}{c : C}}{b}{B}$ for all closing substitutions $\jcmf{\sigma}{\cdot}{\Psi}$ and all observation contexts $\jtypem{\cdot}{\Gamma}{\ctxh{O}{\Delta}{c:C}}{b}{B}$.
  \end{enumerate}

  To see that \cref{item:sill-obs-equiv/observ-comm-equiv:1} implies \cref{item:sill-obs-equiv/observ-comm-equiv:3}, assume that $\jtrelp{\relcpl*{\mkbcg{\preccurlyeq}}}{\Psi}{\Delta}{P}{Q}{c}{C}$.
  This implies that $\jtrelp{\relcpl*{\mkbcg{\preccurlyeq}}}{\cdot}{\Delta}{\apprs{\sigma}{P}}{\apprs{\sigma}{Q}}{c}{C}$ for all $\jcmf{\sigma}{\cdot}{\Psi}$.
  Let $\jtypem{\cdot}{\Gamma}{\ctxh{O}{\Delta}{c:C}}{b}{B}$ be an arbitrary observation context, and let $\jcmf{\sigma}{\cdot}{\Psi}$ be an arbitrary closing substitution.
  We must show that
  \begin{equation}
    \label{eq:sill-obs-equiv/proc-equiv:3}
    \jtrelp{\relcpl*{\preccurlyeq}}{\cdot}{\Gamma}{\ctxh[\apprs{\sigma}{P}]{O}{\Delta}{c : C}}{\ctxh[\apprs{\sigma}{Q}]{O}{\Delta}{c : C}}{b}{B}.
  \end{equation}
  By \cref{prop:sill-obs-equiv/observ-comm-equiv:6}, there exists a simply branched context $\jcfgt{\Gamma}{\ctxh{\mc{O}}{\Delta}{c : C}}{b : B}$ such that
  \begin{gather*}
    \jtrelc{\strobsc}{\Gamma}{\ctxh[\jproc{c}{\apprs{\sigma}{P}}]{\mc{O}}{\Delta}{c : C}}{\jproc{b}{\ctxh[\apprs{\sigma}{P}]{O}{\Delta}{c : C}}}{b : B},\\
    \jtrelc{\strobsc}{\Gamma}{\ctxh[\jproc{c}{\apprs{\sigma}{Q}}]{\mc{O}}{\Delta}{c : C}}{\jproc{b}{\ctxh[\apprs{\sigma}{Q}]{O}{\Delta}{c : C}}}{b : B}.
  \end{gather*}
  Because $\jtrelp{\relcpl*{\mkbcg{\preccurlyeq}}}{\Psi}{\Delta}{P}{Q}{c}{C}$ and $\mkbcg{\preccurlyeq}$ is simply branched contextual,
  \[
    \jtrelc{\mkbcg{\preccurlyeq}}{\Gamma}{\ctxh[\jproc{c}{\apprs{\sigma}{P}}]{\mc{O}}{\Delta}{c : C}}{\ctxh[\jproc{c}{\apprs{\sigma}{Q}}]{\mc{O}}{\Delta}{c : C}}{b : B}.
  \]
  By assumption, the symmetric relation $\strobsc$ is contained in $\mkbcg{\preccurlyeq}$.
  By transitivity of $\mkbcg{\preccurlyeq}$,
  \[
    \jtrelc{\mkbcg{\preccurlyeq}}{\Gamma}{\jproc{b}{\ctxh[\apprs{\sigma}{P}]{O}{\Delta}{c : C}}}{\jproc{b}{\ctxh[\apprs{\sigma}{Q}]{O}{\Delta}{c : C}}}{b : B}.
  \]
  But ${\mkbcg{\preccurlyeq}} \subseteq {\preccurlyeq}$, so we conclude \cref{eq:sill-obs-equiv/proc-equiv:3}.

  To see that \cref{item:sill-obs-equiv/observ-comm-equiv:3} implies \cref{item:sill-obs-equiv/observ-comm-equiv:1}, assume that $\jtrelp{\mkocg{\relcpl*{\preccurlyeq}}}{\Psi}{\Delta}{P}{Q}{c}{C}$, and let $\jcfgt{\Gamma}{\ctxh{\mc{O}}{\Delta}{c : C}}{b : B}$ and $\jcmf{\sigma}{\cdot}{\Psi}$ be arbitrary.
  We must show that $\jtrelc{\preccurlyeq}{\Gamma}{\ctxh[\jproc{c}{\apprs{\sigma}{P}}]{\mc{O}}{\Delta}{c : C}}{\ctxh[\jproc{c}{\apprs{\sigma}{Q}}]{\mc{O}}{\Delta}{c : C}}{b : B}$.
  By  \cref{prop:sill-obs-equiv/observ-comm-equiv:6}, there exists an observation context $\jcfgt{\Gamma}{\ctxh{O}{\Delta}{c : C}}{b : B}$ such that
  \begin{gather*}
    \jtrelc{\strobsc}{\Gamma}{\ctxh[\jproc{c}{\apprs{\sigma}{P}}]{\mc{O}}{\Delta}{c : C}}{\jproc{b}{\ctxh[\apprs{\sigma}{P}]{O}{\Delta}{c : C}}}{b : B},\\
    \jtrelc{\strobsc}{\Gamma}{\ctxh[\jproc{c}{\apprs{\sigma}{Q}}]{\mc{O}}{\Delta}{c : C}}{\jproc{b}{\ctxh[\apprs{\sigma}{Q}]{O}{\Delta}{c : C}}}{b : B}.
  \end{gather*}
  The symmetric relation $\strobsc$ is contained in $\preccurlyeq$, so
  \begin{gather*}
    \jtrelc{\preccurlyeq}{\Gamma}{\ctxh[\jproc{c}{\apprs{\sigma}{P}}]{\mc{O}}{\Delta}{c : C}}{\jproc{b}{\ctxh[\apprs{\sigma}{P}]{O}{\Delta}{c : C}}}{b : B},\\
    \jtrelc{\preccurlyeq}{\Gamma}{\jproc{b}{\ctxh[\apprs{\sigma}{Q}]{O}{\Delta}{c : C}}}{\ctxh[\jproc{c}{\apprs{\sigma}{Q}}]{\mc{O}}{\Delta}{c : C}}{b : B}.
  \end{gather*}
  By assumption, $\jtrelc{\preccurlyeq}{\Gamma}{\jproc{b}{\ctxh[\apprs{\sigma}{P}]{O}{\Delta}{c:C}}}{\jproc{b}{\ctxh[\apprs{\sigma}{Q}]{O}{\Delta}{c:C}}}{b : B}$.
  We are done by transitivity of $\preccurlyeq$.
\end{proof}

We would like to strengthen the above equivalence to give a full precongruence on processes.
Recall that we write \(\mkcg{\relfnt{R}}\) for the contextual interior of \(\relfnt{R}\), \ie, the greatest precongruence contained in \(\relfnt{R}\).

\begin{conjecture}
  \label{conj:sill-obs-equiv/proc-equiv:1}
  Let $\preccurlyeq$ be a transitive type-indexed relation on configurations such that ${\strobsc} \subseteq {\preccurlyeq}$.
  Then $\jtrelp{\mkocg{\relcpl*{\preccurlyeq}}}{\Psi}{\Delta}{P}{Q}{a}{A}$ if and only if $\jtrelp{\mkcg{\relcpl*{\preccurlyeq}}}{\Psi}{\Delta}{P}{Q}{a}{A}$.
\end{conjecture}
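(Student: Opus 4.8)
The plan is to prove the two inclusions separately, the forward one being the genuinely hard part.

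The reverse inclusion, that $\jtrelp{\mkcg{\relcpl*{\preccurlyeq}}}{\Psi}{\Delta}{P}{Q}{a}{A}$ implies $\jtrelp{\mkocg{\relcpl*{\preccurlyeq}}}{\Psi}{\Delta}{P}{Q}{a}{A}$, is immediate. Every observation context (\cref{def:sill-obs-equiv/relat-equiv:4}) is in particular a process context, so any contextual relation on processes is \emph{a fortiori} observationally contextual. Hence $\mkcg{\relcpl*{\preccurlyeq}}$ is an observationally contextual relation contained in $\relcpl*{\preccurlyeq}$, and since $\mkocg{\relcpl*{\preccurlyeq}}$ is by \cref{def:sill-obs-equiv/proc-equiv:4} the greatest such relation, the inclusion follows.

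For the forward inclusion I would first invoke \cref{cor:sill-obs-equiv/proc-equiv:4} to rewrite $\mkocg{\relcpl*{\preccurlyeq}}$ as $\relcpl*{\mkbcg{\preccurlyeq}}$, reducing the goal to showing that $\relcpl*{\mkbcg{\preccurlyeq}}$ is contextual: once that is established, $\relcpl*{\mkbcg{\preccurlyeq}} \subseteq \relcpl*{\preccurlyeq}$ (as $\mkbcg{\preccurlyeq} \subseteq \preccurlyeq$) forces $\relcpl*{\mkbcg{\preccurlyeq}} \subseteq \mkcg{\relcpl*{\preccurlyeq}}$ by maximality of the contextual interior. I would prove contextuality by structural induction on the process context, noting that each former commutes with closing substitutions (\cref{def:sill-obs-equiv/bisimulations:1}), so it suffices to treat the closed case. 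The composition (cut) cases are handled directly by observational contextuality of $\relcpl*{\mkbcg{\preccurlyeq}} = \mkocg{\relcpl*{\preccurlyeq}}$ together with the translation between observation contexts and simply branched configuration contexts of \cref{prop:sill-obs-equiv/observ-comm-equiv:6}. For a prefix on the provided channel, say $\tSendL{a}{k}{(-)}$, I would use the forwarder encoding $\tSendL{a}{k}{P} \strobsc \tCut{d}{\subst{d}{a}{P}}{\tSendL{a}{k}{\tFwdP{d}{a}}}$, valid by closure of $\strobsc$ under stepping (\cref{prop:sill-obs-equiv/observ-comm:8}) and forwarding transparency (\cref{prop:sill-obs-equiv/observ-comm-equiv:2}); the right-hand side places $P$ in an observation context, so closure under such prefixes follows from closure under cuts together with $\strobsc \subseteq \preccurlyeq$ and transitivity. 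Prefixes on used channels need a fairness-based case split: either the triggering message eventually arrives, whence a step-closure argument reduces to an observation context applied to the continuation, or it never arrives, whence both sides observe $\bot$ on every affected channel and agree outright.

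The main obstacle is the higher-order case, where the hole occurs inside a quoted process $\tProcQ{a}{(-)}{\overline{a_i}}$ that is treated as a functional value and may be transmitted, bound to a variable, and unquoted (\cref{eq:sill:msr-e-qt}) elsewhere in the term. Here plain structural induction breaks: closing a process with a substitution whose image contains a quoted copy of the hole yields a configuration that is not a structurally smaller instance of the inductive hypothesis, so the recursion on context shape does not go through. I expect this to require recasting $\relcpl*{\mkbcg{\preccurlyeq}}$ through Howe's method~\cite{howe_1996:_provin_congr_bisim}: define the compatible (Howe) closure, show by construction that it is contextual and contains $\relcpl*{\mkbcg{\preccurlyeq}}$, and then prove the key lemma that it is in turn contained in $\relcpl*{\mkbcg{\preccurlyeq}}$. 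Establishing that key lemma — in effect a substitutivity property that survives transmission and unquoting of processes carrying the hole, while remaining compatible with the observed-communication semantics and the closing-substitution quantifier in $\relcpl*{(-)}$ — is the crux that keeps the statement a conjecture.
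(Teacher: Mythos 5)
This statement is a \emph{conjecture} in the paper: the paper offers no proof of it, only a discussion (immediately following the conjecture) of why a proof is elusive. Your proposal does not close the conjecture either, and you say so yourself, so the honest comparison is between your plan and the paper's own analysis of the obstruction --- and the two coincide almost point for point.

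The parts of your plan that work are exactly the parts the paper establishes. The reverse inclusion is indeed immediate: observation contexts are a subclass of process contexts, so the contextual interior $\mkcg{\relcpl*{\preccurlyeq}}$ is observationally contextual and contained in $\relcpl*{\preccurlyeq}$, hence contained in $\mkocg{\relcpl*{\preccurlyeq}}$ by maximality. Your reduction of the forward inclusion via \cref{cor:sill-obs-equiv/proc-equiv:4} to contextuality of $\relcpl*{\mkbcg{\preccurlyeq}}$ is sound, and the structural induction you sketch --- cut cases via the translation of \cref{prop:sill-obs-equiv/observ-comm-equiv:6}, prefix cases via step closure (\cref{prop:sill-obs-equiv/observ-comm:8}) and forwarding transparency (\cref{prop:sill-obs-equiv/observ-comm-equiv:2}), and a fairness-based case split for inputs --- is essentially the argument the paper carries out in \cref{theorem:sill-obs-equiv/proc-equiv:1}, for precisely the fragment where it goes through: pure process contexts, i.e., contexts whose hole does not occur in a subderivation of \getrn{I-proc}.

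The gap you name is the gap the paper names. When the hole sits under quotation, the functional layer can duplicate it (the paper's example \cref{eq:sill-obs-equiv-proc-equiv:1}), so the induction requires a substitutivity property that is essentially the statement being proved, and the recursion on context shape collapses. Your proposed remedy, Howe's method, is also the paper's: the authors report preliminary, unsuccessful attempts at it, and they locate the difficulty exactly where your sketch is silent. A Howe closure for this language cannot be a single relation on processes; it must be built simultaneously with a companion relation on functional terms, the two must be shown to agree (quoting and unquoting preserve relatedness), and every construction in the method must preserve that agreement while remaining compatible with the closing-substitution quantifier in $\relcpl*{({-})}$ and with the observed-communication semantics. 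So your attempt is a faithful reconstruction of the state of the problem rather than a resolution of it: the deferred Howe-style key lemma is precisely where the conjecture remains open.
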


A proof of \cref{conj:sill-obs-equiv/proc-equiv:1} is elusive because of the subtle interplay between the process and functional layers.
Naïvely, we proceed by induction on the process context (see the proof of \cref{theorem:sill-obs-equiv/proc-equiv:1} below).
In the case where the context's hole is within a quoted process, we need a substitution property to account for the fact that the quoted process can be copied by terms in the functional layer.\footnote{Consider, for example, the process context
  \begin{equation}
    \label{eq:sill-obs-equiv-proc-equiv:1}
    \jtypem{\cdot}{a : \Tplus \{ l : \Tu, r : \Tu \}}{\tProc{a}{\left( \lambda x : \Tproc{b : \Tu}{a : \Tu} . \tCase{a}{\Set{ l \Rightarrow \tProc{a}{x}{b} \mid r \Rightarrow \tProc{a}{x}{b} }}\right)\left(\tProc{a}{\ctxh{}{a : \Tu}{b : \Tu}}{b}\right)}{b}}{b}{\Tu}.
  \end{equation}
  At run-time, the process in the hole will get copied to both branches of the case statement.
}
Unfortunately, this substitution property is extremely similar to the result we are trying to show and we get stuck.

This difficulty is similar to the one that arises when trying to naïvely show that Abramsky's~\cite{abramsky_1990:_lazy_lambd_calcul} applicative bisimilarity for the lazy \(\lambda\)-calculus is congruence (\cf~\cite[\S~5.4]{pitts_2012:_howes_method_higher_order_languag}).
Howe's method~\cite{howe_1996:_provin_congr_bisim} was designed to address this difficulty, and it is a standard technique for showing congruence properties of program equivalences.
We have made preliminary attempts to generalize Howe's method to prove this result, but we do not present these attempts here.
This generalization is non-trivial because a single relation on processes is insufficient: we also need a relation on terms.
We must show that these two relations agree with each other, \ie, that quoting and unquoting preserves relatedness.
We must also show that all constructions in Howe's method preserve this agreement.

Abramsky used a denotational semantics to show that applicative bisimilarity is a congruence.
We conjecture that the same approach could show restricted forms of \cref{conj:sill-obs-equiv/proc-equiv:1}.
In \cite{kavanagh_2021:_commun_based_seman}, we give Polarized SILL a denotational semantics and show that, subject to a few simplify assumptions, it is sound relative to external observational congruence.
We conjecture that these results could be extended to show \cref{conj:sill-obs-equiv/proc-equiv:1} where \(\preccurlyeq\) is external observational congruence.

We suspect that a step-indexed logical relations approach to \cref{conj:sill-obs-equiv/proc-equiv:1} would be challenging.
First, we do not want to conflate non-terminating processes at a given type, but step-indexed models typically do so~\cite[138]{toninho_2015:_logic_found_session_concur_comput}.
Second, Polarized SILL's two layers have very different semantics: the process layer is specified by a multiset rewriting semantics, while the functional layer is given by an evaluation semantics.
These distinct semantics increase the technical complexity of any logical relations argument.

Despite these difficulties, we can still significantly generalize \cref{cor:sill-obs-equiv/proc-equiv:4} to handle contexts whose hole does not cross the boundary between processes and functional programs.
This restriction eliminates the difficulty posed by contexts that copy their holes.
We call these contexts ``pure process contexts''.

\begin{definition}
  \label{def:sill-obs-equiv/proc-equiv:7}
  A \defin{pure process context} $\jtypem{\Psi}{\Delta}{\ctxh{C_p}{\Gamma;\Lambda}{b:B}}{a}{A}$ is a process context with exactly one hole such that its instance of \getrn{Ct-p-hole} does not appear in a subderivation of \getrn{I-proc}.
\end{definition}

\begin{remark}
  \Cref{def:sill-obs-equiv/proc-equiv:7} imposes no other restrictions on a context's use of the functional layer.
  Pure process contexts can send and receive functional values (including quoted processes), and they can quote and unquote processes.
  They cannot, however, send functional values involving the hole, and they cannot quote or unquote processes involving the hole.
  We place no restrictions on which processes can be placed in a pure process context's hole.
\end{remark}

\begin{definition}
  \label{def:sill-obs-equiv/proc-equiv:3}
  A typed-indexed relation $\relfnt{R}$ on processes is \defin{purely process contextual} if ${\jtrelp{\relfnt{R}}{\Psi}{\Delta}{P}{Q}{a}{A}}$ implies $\jtrelp{\relfnt{R}}{\Phi}{\Lambda}{\ctxh[P]{C_p}{\Psi; \Delta}{a : A}}{\ctxh[Q]{C_p}{\Delta}{a : A}}{b}{B}$ for all pure process contexts ${\jtypem{\Phi}{\Lambda}{\ctxh{C_p}{\Psi; \Delta}{a:A}}{b}{B}}$.
  The \defin{purely process contextual interior} of a typed relation $\relfnt{R}$ on processes is the greatest purely process contextual typed relation $\mkpcg{\relfnt{R}}$ contained in $\relfnt{R}$.
\end{definition}

\begin{theorem}
  \label{theorem:sill-obs-equiv/proc-equiv:1}
  Let $\preccurlyeq$ be a transitive type-indexed relation on configurations such that ${\strobsc} \subseteq {\preccurlyeq}$.
  Then $\jtrelp{\mkocg{\relcpl*{\preccurlyeq}}}{\Psi}{\Delta}{P}{Q}{a}{A}$ if and only if $\jtrelp{\mkpcg{\relcpl*{\preccurlyeq}}}{\Psi}{\Delta}{P}{Q}{a}{A}$.
\end{theorem}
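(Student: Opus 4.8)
The plan is to prove $\mkocg{\relcpl*{\preccurlyeq}} = \mkpcg{\relcpl*{\preccurlyeq}}$ by mutual inclusion, exploiting one structural fact: every observation context (\cref{def:sill-obs-equiv/relat-equiv:4}) is a \emph{pure} process context, since it is built solely from \getrn{cut} and its hole never lies within a subderivation of \getrn{I-proc}. Hence closure under pure process contexts is a strictly stronger property than closure under observation contexts. For the direction $\jtrelp{\mkpcg{\relcpl*{\preccurlyeq}}}{\Psi}{\Delta}{P}{Q}{a}{A}$ implies $\jtrelp{\mkocg{\relcpl*{\preccurlyeq}}}{\Psi}{\Delta}{P}{Q}{a}{A}$, I would argue purely at the level of the interior operators: $\mkpcg{\relcpl*{\preccurlyeq}}$ is purely process contextual, so restricting to observation contexts (and to closed functional contexts) it is observationally contextual; being contained in $\relcpl*{\preccurlyeq}$, it is contained in the greatest observationally contextual relation below $\relcpl*{\preccurlyeq}$, namely $\mkocg{\relcpl*{\preccurlyeq}}$. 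This yields $\mkpcg{\relcpl*{\preccurlyeq}} \subseteq \mkocg{\relcpl*{\preccurlyeq}}$ with no induction.

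The substantive direction is $\mkocg{\relcpl*{\preccurlyeq}} \subseteq \mkpcg{\relcpl*{\preccurlyeq}}$, for which it suffices to show that $\mkocg{\relcpl*{\preccurlyeq}}$ is itself purely process contextual. I would fix $P, Q$ with $\jtrelp{\mkocg{\relcpl*{\preccurlyeq}}}{\Psi}{\Delta}{P}{Q}{a}{A}$, a pure process context $\jtypem{\Phi}{\Lambda}{\ctxh{C_p}{\Psi;\Delta}{a:A}}{b}{B}$, and — unfolding \cref{def:sill-obs-equiv/observ-comm:1} — an arbitrary closing substitution $\rho$ of $\Phi$, so that it suffices to relate the \emph{closed} configurations $\jproc{b}{\apprs{\rho}{\ctxh[P]{C_p}}}$ and $\jproc{b}{\apprs{\rho}{\ctxh[Q]{C_p}}}$ by $\mkbcg{\preccurlyeq}$. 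I then induct on the derivation of $C_p$. Because the hole is a process position not occurring inside \getrn{I-proc}, the outermost rule is one of: \getrn{Ct-p-hole} (base case, immediate); a \getrn{cut} with the hole in a premise (\getrn{Ct-hole-cut-l} or \getrn{Ct-hole-cut-r}); or a communication prefix whose continuation or branch contains the hole. The rules \getrn{E-proc}, \getrn{fwdp}, \getrn{fwdn}, and \getrn{R-tu} carry no process premise holding the hole, and \getrn{I-proc} is excluded by purity, so none of these arises. In the cut cases the inner context is a smaller pure process context, so the induction hypothesis relates its two fillings; wrapping them in the cut is the action of an observation context, and observational contextuality of $\mkocg{\relcpl*{\preccurlyeq}}$ — applicable because we now work with closed processes — closes the case.

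The communication-prefix cases I would reduce to the cut cases by stepping, using that $\strobsc$ is closed under multiset rewriting (\cref{prop:sill-obs-equiv/observ-comm:8}), is a congruence (\cref{prop:sill-obs-equiv/total-observ:3}), and satisfies ${\strobsc} \subseteq {\preccurlyeq}$. For instance, when $C_p = \tSendL{a}{k}{C_p'}$, applying \cref{eq:sill:msr-tplus-r} rewrites $\jproc{b}{\ctxh[P]{C_p}}$ to the composition of the renamed continuation $\ctxh[P]{C_p'}$ with a message fact carrying the label $k$; this message fact does not mention the hole and is, by \cref{prop:sill-obs-equiv/observ-comm-equiv:5}, $\strobsc$-equivalent to a process fact. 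The induction hypothesis relates the two continuations, observational contextuality transports that relation through the composition with the message fact, and closure of $\strobsc$ under stepping together with ${\strobsc} \subseteq {\preccurlyeq}$ and transitivity of $\mkocg{\relcpl*{\preccurlyeq}}$ (the observationally contextual interior of a transitive relation is transitive) transports it back to the prefixed processes. The receiving, choice, shift, and unfold prefixes follow the same scheme via their respective rewrite rules.

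The main obstacle I expect is the bookkeeping of the functional context across the induction: $\relcpl*{({-})}$ is defined through closing substitutions (\cref{def:sill-obs-equiv/proc-equiv:1,def:sill-obs-equiv/observ-comm:1}), and prefixes such as $\tRecvV{x}{a}{C_p'}$ extend the hole's functional context. The purity hypothesis is precisely what makes this tractable: since the hole never occurs inside a quoted process, plugging commutes with functional substitution, so that $\apprs{\rho}{\ctxh[P]{C_p}} = \ctxh[\apprs{\sigma}{P}]{\apprs{\rho}{C_p}}$ for the induced closing substitution $\sigma$ of $\Psi$, and the hole is never duplicated. This is exactly what sidesteps the substitution property that blocks the general \cref{conj:sill-obs-equiv/proc-equiv:1}, and it lets the closing substitutions be threaded through each inductive step without circularity.
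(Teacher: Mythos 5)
Your overall strategy coincides with the paper's: the inclusion $\mkpcg{\relcpl*{\preccurlyeq}} \subseteq \mkocg{\relcpl*{\preccurlyeq}}$ is immediate because every observation context is a pure process context, and the substantive inclusion is proved by induction on the pure process context, using closure of $\strobsc$ under multiset rewriting (\cref{prop:sill-obs-equiv/observ-comm:8}), the hypothesis ${\strobsc} \subseteq {\preccurlyeq}$, transitivity, and the fact that purity lets plugging commute with closing substitutions. Your base case, cut case, and send case are sound; the only stylistic difference is that where the paper states its induction invariant over all \emph{simply branched configuration contexts} (obtained by applying \cref{cor:sill-obs-equiv/proc-equiv:4} once at the outset) and simply absorbs the spawned message fact into a larger such context, you convert the message fact into a process fact via \cref{prop:sill-obs-equiv/observ-comm-equiv:5} so that the wrapper becomes an observation context. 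Both routes go through.

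The genuine gap is your claim that ``the receiving, choice, shift, and unfold prefixes follow the same scheme via their respective rewrite rules.'' They do not, and this is where most of the paper's proof effort lies. A sending prefix steps unconditionally, but a receiving prefix such as $\tCase{e}{\{ l \Rightarrow P_l \}_{l \in L}}$ or $\tRecvV{x}{e}{\ctxh{C'_p}{\Psi;\Delta}{a:A}}$ is blocked until the surrounding configuration supplies a message on $e$; you cannot ``apply the rewrite rule'' and then invoke the induction hypothesis. The paper's cases \getrn{L-tplus} and \getrn{L-tand} therefore split on the behaviour of the environment: (i) a message fact $\jmsg{e}{\mSendLP{e}{l}{e'}}$ appears in some fair trace and $l$ selects the branch containing the hole --- only here does the induction hypothesis apply, and for \getrn{L-tand} it must be applied with the closing substitution extended by the \emph{received} value $v$, which is supplied by the environment rather than by your $\rho$; (ii) a message arrives but selects a branch \emph{not} containing the hole --- then both fillings step to the \emph{same} process, the hole disappears, and the case is closed with no use of the induction hypothesis; (iii) no message ever arrives --- then the two composites have equal traces modulo an inert process fact, hence equal observed communications, giving $\strobsc$ and so $\preccurlyeq$. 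Your scheme covers only subcase (i); without (ii) and (iii) the induction does not close, so the proposal as written fails precisely on the blocked-receive cases.
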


\begin{proof}
  Necessity is immediate, so we show sufficiency.
  Assume that $\jtrelp{\mkocg{\relcpl{\preccurlyeq}}}{\Psi}{\Delta}{P}{Q}{c}{C}$.
  By \cref{cor:sill-obs-equiv/proc-equiv:4}, this implies for all simply branched contexts $\jcfgt{\Lambda}{\ctxh{\mc{B}}{\Delta}{c : C}}{b : B}$ and all closing substitutions $\jcmf{\sigma}{\cdot}{\Phi}$ that
  \[
    \jtrelc{\preccurlyeq}{\Lambda}{\ctxh[\jproc{c}{\apprs{\sigma}{P}}]{\mc{B}}{\Delta}{c : C}}{\ctxh[\jproc{c}{\apprs{\sigma}{Q}}]{\mc{B}}{\Delta}{c : C}}{b : B}.
  \]
  We show the stronger property that $\jtrelp{\mkpcg{\relcpl*{\mkbcg{\preccurlyeq}}}}{\Psi}{\Delta}{P}{Q}{a}{A}$.
  This means that we must show for all pure process contexts ${\jtypem{\Gamma}{\Phi}{\ctxh{C_p}{\Psi; \Delta}{a:A}}{b}{B}}$ and all $\jcmf{\sigma}{\cdot}{\Gamma}$ that
  \[
    \jtrelc{\mkbcg{\preccurlyeq}}{\Phi}{%
      \jproc{b}{\apprs{\sigma}{\left(\ctxh[P]{C_p}{\Psi; \Delta}{a : A}\right)}}
    }{%
      \jproc{b}{\apprs{\sigma}{\left(\ctxh[Q]{C_p}{\Psi; \Delta}{a : A}\right)}}
    }{b : B}.
  \]
  This in turn requires that we show for all simply branched contexts $\jcfgt{\Lambda}{\ctxh{\mc{B}}{\Phi}{b : B}}{d : D}$ that:
  \[
    \jtrelc{\preccurlyeq}{\Lambda}{%
      \ctxh[\jproc{b}{\apprs{\sigma}{\left(\ctxh[P]{C_p}{\Psi; \Delta}{a : A}\right)}}]{\mc{B}}{\Phi}{b : B}
    }{%
      \ctxh[\jproc{b}{\apprs{\sigma}{\left(\ctxh[Q]{C_p}{\Psi; \Delta}{a : A}\right)}}]{\mc{B}}{\Phi}{b : B}
    }{d : D}
  \]
  Let $\jcfgt{\Lambda}{\ctxh{\mc{B}}{\Phi}{b : B}}{d : D}$ be an arbitrary simply branched context.
  We proceed by induction on the context ${\jtypem{\cdot}{\Phi}{\ctxh{C_p}{\Psi; \Delta}{a:A}}{b}{B}}$ and give only the illustrative cases.
  \begin{proofcases}
  \item[\getrn{Ct-p-hole}] The result is immediate by assumption.
  \item[\getrn{fwdp}] This case is impossible because there is no hole, while pure process contexts have exactly one hole.
  \item[\getrn{cut}] Then $C_p$ is either $\tCut{e}{\ctxh{C'_p}{\Psi; \Delta}{a:A}}{R}$ or $\tCut{e}{L}{\ctxh{C'_p}{\Psi; \Delta}{a:A}}$ for some \( C'_p \) and $L$ or $R$.
    Assume that we fall in the first case.
    Let $e'$ be globally fresh.
    Then by \cref{eq:sill:msr-cut},
    \[
      \ctxh[
      \jproc{b}{\apprs{\sigma}{\ctxh[P]{C_p}{\Psi; \Delta}{a : A}}}
      ]{\mc{B}}{\Phi}{b : B}
      \msstep
      \ctxh[
      \jproc{e'}{\subst{e'}{e}{\left(\apprs{\sigma}{\left(\ctxh[P]{C'_p}{\Psi; \Delta}{a : A}\right)}\right)}},
      \jproc{b}{\subst{e'}{e}{\left(\apprs{\sigma}{R}\right)}}
      ]{\mc{B}}{\Phi}{b : B}.
    \]
    By \cref{prop:sill-obs-equiv/observ-comm:8},
    \[
      \jtrelc{\strobsc}{\Lambda}{%
        \ctxh[
        \jproc{b}{\apprs{\sigma}{\ctxh[P]{C_p}{\Psi; \Delta}{a : A}}}
        ]{\mc{B}}{\Phi}{b : B}
      }{%
        \ctxh[
        \jproc{e'}{\subst{e'}{e}{\left(\apprs{\sigma}{\left(\ctxh[P]{C'_p}{\Psi; \Delta}{a : A}\right)}\right)}},
        \jproc{b}{\subst{e'}{e}{\left(\apprs{\sigma}{R}\right)}}
        ]{\mc{B}}{\Phi}{b : B}
      }{d : D}.
    \]
    We recognize the right side as $\ctxh[\jproc{e'}{\subst{e'}{e}{(\apprs{\sigma}{(\ctxh[P]{C'_p}{\Psi; \Delta}{a : A})})}}]{\mc{B}'}{}{}$ where $\mc{B}'$ is the simply branched context $\ctxh[ \ctxh{}{\Delta}{a : A}, \jproc{b}{\subst{e'}{e}{R}} ]{\mc{B}}{\Phi}{b : B}$.
    Analogously,
    \begin{gather*}
      \jtrelc{\strobsc}{\Lambda}{%
        \ctxh[
        \jproc{b}{\apprs{\sigma}{\ctxh[Q]{C_p}{\Psi; \Delta}{a : A}}}
        ]{\mc{B}}{\Phi}{b : B}
      }{%
        \ctxh[
        \jproc{e'}{\subst{e'}{e}{\left(\apprs{\sigma}{\left(\ctxh[Q]{C'_p}{\Psi; \Delta}{a : A}\right)}\right)}}
        ]{\mc{B}'}{}{}
      }{d : D}.
    \end{gather*}
    By the induction hypothesis and the fact that $\preccurlyeq$ is type-indexed, so closed under renamings of channels,
    \[
      \jtrelc{\preccurlyeq}{\Lambda}{%
        \ctxh[
        \jproc{e'}{\subst{e'}{e}{\left(\apprs{\sigma}{\left(\ctxh[P]{C'_p}{\Psi; \Delta}{a : A}\right)}\right)}}
        ]{\mc{B}'}{}{}
      }{%
        \ctxh[
        \jproc{e'}{\subst{e'}{e}{\left(\apprs{\sigma}{\left(\ctxh[Q]{C'_p}{\Psi; \Delta}{a : A}\right)}\right)}}
        ]{\mc{B}'}{}{}
      }{d : D}.
    \]
    We are done by transitivity, the assumption that ${\strobsc} \subseteq {\preccurlyeq}$, and symmetry of $\strobsc$.
    The case of $\tCut{e}{L}{\ctxh{C'_p}{\Psi; \Delta}{a:A}}$ for some \( C'_p \) and $L$ is analogous.
  \item[\getrn{L-tplus}]
    Then $C_p$ is of the form $\tCase{e}{\{ l \Rightarrow P_l \}_{l \in L}}$ where $P_k = \ctxh{C'_p}{\Psi; \Delta}{a : A}$ for some unique $k \in L$.
    We observe that
    \[
      \ctxh[\jproc{b}{\apprs{\sigma}{\left(\ctxh[P]{C_p}{\Psi; \Delta}{a : A}\right)}}]{\mc{B}}{\Phi}{b : B}
      \msstep
      \ctxh[\jmsg{e}{\mSendLP{e}{l}{e'}}, \jproc{b}{\apprs{\sigma}{\left(\ctxh[P]{C_p}{\Psi; \Delta}{a : A}\right)}}]{\mc{B}'}{\Phi'}{b : B}
    \]
    if and only if
    \[
      \ctxh[\jproc{b}{\apprs{\sigma}{\left(\ctxh[Q]{C_p}{\Psi; \Delta}{a : A}\right)}}]{\mc{B}}{\Phi}{b : B}
      \msstep
      \ctxh[\jmsg{e}{\mSendLP{e}{l}{e'}}, \jproc{b}{\apprs{\sigma}{\left(\ctxh[Q]{C_p}{\Psi; \Delta}{a : A}\right)}}]{\mc{B}'}{\Phi'}{b : B}.
    \]
    If this is the case and $l \neq k$, then both
    \begin{gather*}
      \ctxh[\jproc{b}{\apprs{\sigma}{\left(\ctxh[P]{C_p}{\Psi; \Delta}{a : A}\right)}}]{\mc{B}}{\Phi}{b : B}
      \msstep
      \ctxh[\jproc{b}{\subst{e'}{e}{\left(\apprs{\sigma}{P_l}\right)}}]{\mc{B}'}{\Phi'}{b : B},\\
      \ctxh[\jproc{b}{\apprs{\sigma}{\left(\ctxh[Q]{C_p}{\Psi; \Delta}{a : A}\right)}}]{\mc{B}}{\Phi}{b : B}
      \msstep
      \ctxh[\jproc{b}{\subst{e'}{e}{\left(\apprs{\sigma}{P_l}\right)}}]{\mc{B}'}{\Phi'}{b : B}.
    \end{gather*}
    We are done by \cref{prop:sill-obs-equiv/observ-comm:8}, transitivity, and the inclusion ${\strobsc} \subseteq {\preccurlyeq}$.
    If $l = k$, then
    \begin{gather*}
      \ctxh[\jproc{b}{\apprs{\sigma}{\left(\ctxh[P]{C_p}{\Psi; \Delta}{a : A}\right)}}]{\mc{B}}{\Phi}{b : B}
      \msstep
      \ctxh[\jproc{b}{\subst{e'}{e}{\left(\apprs{\sigma}{\left(\ctxh[P]{C'_p}{\Psi; \Delta}{a : A}\right)}\right)}}]{\mc{B}'}{\Phi'}{b : B},\\
      \ctxh[\jproc{b}{\apprs{\sigma}{\left(\ctxh[Q]{C_p}{\Psi; \Delta}{a : A}\right)}}]{\mc{B}}{\Phi}{b : B}
      \msstep
      \ctxh[\jproc{b}{\subst{e'}{e}{\left(\apprs{\sigma}{\left(\ctxh[Q]{C'_p}{\Psi; \Delta}{a : A}\right)}\right)}}]{\mc{B}'}{\Phi'}{b : B}.
    \end{gather*}
    By the induction hypothesis and the fact that $\preccurlyeq$ is type-indexed, so closed under renamings of channels,
    \[
      \jtrelc{\preccurlyeq}{\Lambda}{%
        \ctxh[\jproc{b}{\subst{e'}{e}{\left(\apprs{\sigma}{\left(\ctxh[P]{C'_p}{\Psi; \Delta}{a : A}\right)}\right)}}]{\mc{B}'}{\Phi'}{b : B}
      }{%
        \ctxh[\jproc{b}{\subst{e'}{e}{\left(\apprs{\sigma}{\left(\ctxh[Q]{C'_p}{\Psi; \Delta}{a : A}\right)}\right)}}]{\mc{B}'}{\Phi'}{b : B}
      }{d : D}.
    \]
    We are done by \cref{prop:sill-obs-equiv/observ-comm:8}, transitivity, the assumption that ${\strobsc} \subseteq {\preccurlyeq}$, and symmetry of $\strobsc$.

    Finally, assume that in no fair trace do we get a message fact $\jmsg{e}{\mSendLP{e}{l}{e'}}$.
    Then by case analysis on the rules, no rule ever applies to $\jproc{b}{\apprs{\sigma}{\left(\ctxh[P]{C_p}{\Psi; \Delta}{a : A}\right)}}$ or $\jproc{b}{\apprs{\sigma}{\left(\ctxh[P]{C_p}{\Psi; \Delta}{a : A}\right)}}$.
    It follows that $\ctxh[\jproc{b}{\apprs{\sigma}{\left(\ctxh[P]{C_p}{\Psi; \Delta}{a : A}\right)}}]{\mc{B}}{\Phi}{b : B}$ and $\ctxh[\jproc{b}{\apprs{\sigma}{\left(\ctxh[Q]{C_p}{\Psi; \Delta}{a : A}\right)}}]{\mc{B}}{\Phi}{b : B}$ have the same traces (modulo the unused process fact), so the same observable messages and observed communications.
    This completes the case.
  \item[\getrn{L-tand}] Then $C_p$ is of the form $\tRecvV{x}{e}{\ctxh{C'_p}{\Psi; \Delta}{a : A}}$.
    We observe that
    \[
      \ctxh[\jproc{b}{\apprs{\sigma}{\left(\ctxh[P]{C_p}{\Psi; \Delta}{a : A}\right)}}]{\mc{B}}{\Phi}{b : B}
      \msstep
      \ctxh[\jmsg{e}{\mSendVP{e}{v}{e'}}, \jproc{b}{\apprs{\sigma}{\left(\ctxh[P]{C_p}{\Psi; \Delta}{a : A}\right)}}]{\mc{B}'}{\Phi'}{b : B}
    \]
    if and only if
    \[
      \ctxh[\jproc{b}{\apprs{\sigma}{\left(\ctxh[P]{C_p}{\Psi; \Delta}{a : A}\right)}}]{\mc{B}}{\Phi}{b : B}
      \msstep
      \ctxh[\jmsg{e}{\mSendVP{e}{v}{e'}}, \jproc{b}{\apprs{\sigma}{\left(\ctxh[Q]{C_p}{\Psi; \Delta}{a : A}\right)}}]{\mc{B}'}{\Phi'}{b : B}.
    \]
    If this is the case, then both
    \begin{align}
      \ctxh[\jproc{b}{\apprs{\sigma}{\left(\ctxh[P]{C_p}{\Psi; \Delta}{a : A}\right)}}]{\mc{B}}{\Phi}{b : B}
      &\msstep
      \ctxh[\jproc{b}{\subst{e',v}{e,x}{\left(\apprs{\sigma}{\left(\ctxh[P]{C'_p}{\Psi; \Delta}{a : A}\right)}\right)}}]{\mc{B}'}{\Phi'}{b : B},\label{eq:sill-obs-equiv/proc-equiv:1}\\
      \ctxh[\jproc{b}{\apprs{\sigma}{\left(\ctxh[Q]{C_p}{\Psi; \Delta}{a : A}\right)}}]{\mc{B}}{\Phi}{b : B}
      &\msstep
      \ctxh[\jproc{b}{\subst{e',v}{e,x}{\left(\apprs{\sigma}{\left(\ctxh[Q]{C'_p}{\Psi; \Delta}{a : A}\right)}\right)}}]{\mc{B}'}{\Phi'}{b : B}.\label{eq:sill-obs-equiv/proc-equiv:2}
    \end{align}
    We remark that the composition $\subst{v}{x}{} \circ \sigma$ determines a closing substitution $\jcmf{\sigma'}{\cdot}{\Gamma,x : \tau}$ for $\ctxh{C'_p}{\Psi; \Delta}{a : A}$.
    So the right sides of \labelcref{eq:sill-obs-equiv/proc-equiv:1,eq:sill-obs-equiv/proc-equiv:2} are respectively equal to:
    \begin{align*}
      &\ctxh[\jproc{b}{\subst{e'}{e}{\left(\apprs{\sigma'}{\left(\ctxh[P]{C'_p}{\Psi; \Delta}{a : A}\right)}\right)}}]{\mc{B}'}{\Phi'}{b : B},\\
      &\ctxh[\jproc{b}{\subst{e'}{e}{\left(\apprs{\sigma'}{\left(\ctxh[Q]{C'_p}{\Psi; \Delta}{a : A}\right)}\right)}}]{\mc{B}'}{\Phi'}{b : B}.
    \end{align*}
    By the induction hypothesis and the fact that $\preccurlyeq$ is type-indexed, so closed under renamings of channels,
    \[
      \jtrelc{\preccurlyeq}{\Lambda}{%
        \ctxh[\jproc{b}{\subst{e'}{e}{\left(\apprs{\sigma'}{\left(\ctxh[P]{C'_p}{\Psi; \Delta}{a : A}\right)}\right)}}]{\mc{B}'}{\Phi'}{b : B}
      }{%
        \ctxh[\jproc{b}{\subst{e'}{e}{\left(\apprs{\sigma'}{\left(\ctxh[Q]{C'_p}{\Psi; \Delta}{a : A}\right)}\right)}}]{\mc{B}'}{\Phi'}{b : B}
      }{d : D}.
    \]
    We are done by \cref{prop:sill-obs-equiv/observ-comm:8}, transitivity, the assumption that ${\strobsc} \subseteq {\preccurlyeq}$, and symmetry of $\strobsc$.
    Finally, assume that in no fair trace do we get a message fact $\jmsg{e}{\mSendVP{e}{v}{d}}$.
    Then the analysis is the same as in the previous case.
  \end{proofcases}
  All other cases are analogous to the above.
  Explicitly, \getrn{fwdn} and \getrn{R-tu} are analogous to \getrn{fwdp}.
  The definition of pure process context implies that there is no hole in the case of \getrn{E-proc}, so it too is analogous to \getrn{fwdp}.
  All of the cases in which the hole sends a message are analogous to \getrn{cut}.
  In particular, the definition of pure process context guarantees that the hole does not appear in the sent term in the cases of \getrn{R-tand} and \getrn{L-timp}.
  All of the cases in which the hole receives a message are analogous to \getrn{L-tplus} or \getrn{L-tand}, depending on whether or not the message carries a functional value.
\end{proof}

\Cref{cor:sill-obs-equiv/observ-comm-equiv:1} summarizes our results.
This \namecref{cor:sill-obs-equiv/observ-comm-equiv:1} lets us use the precongruences of \cref{fig:sill-obs-equiv/extern-observ:1} to reason about processes.
Intuitively, \cref{cor:sill-obs-equiv/observ-comm-equiv:1} states that, though these precongruences were defined on configurations, they lift to relations on open processes, and that these liftings are contextual.
Put differently, it states that these liftings are certain restricted forms of (pre)congruences.
As a result, we can use \cref{cor:sill-obs-equiv/observ-comm-equiv:1} to reason modularly about processes: it lets us ``replace equals by equals'' in a large class of contexts.

\begin{corollary}
  \label{cor:sill-obs-equiv/observ-comm-equiv:1}
  Let $\preccurlyeq$ be a transitive, type-indexed precongruence on configurations such that ${\strobsc} \subseteq {\preccurlyeq}$.
  The following are equivalent:
  \begin{enumerate}
  \item $\jtrelp{\relcpl{\preccurlyeq}}{\Psi}{\Delta}{P}{Q}{a}{A}$;
  \item $\jtrelp{\mkocg{\relcpl{\preccurlyeq}}}{\Psi}{\Delta}{P}{Q}{a}{A}$, where \(\mkocg{\relcpl{\preccurlyeq}}\) is the observationally contextual interior of \(\relcpl{\preccurlyeq}\);
  \item $\jtrelp{\mkpcg{\relcpl{\preccurlyeq}}}{\Psi}{\Delta}{P}{Q}{a}{A}$, where \(\mkpcg{\relcpl{\preccurlyeq}}\) is the purely process contextual interior of \(\relcpl{\preccurlyeq}\).
  \end{enumerate}
\end{corollary}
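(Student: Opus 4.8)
The plan is to obtain the corollary by assembling \Cref{cor:sill-obs-equiv/proc-equiv:4} and \Cref{theorem:sill-obs-equiv/proc-equiv:1}, both of which apply here: a precongruence is in particular a transitive type-indexed relation on configurations, and by hypothesis ${\strobsc} \subseteq {\preccurlyeq}$, so the standing assumptions of both results are met. The equivalence of the second and third statements is exactly \Cref{theorem:sill-obs-equiv/proc-equiv:1}, so no new work is required there; the only remaining task is to show that the first statement is equivalent to the second.

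For that equivalence, the key observation I would isolate is the identity $\mkbcg{\preccurlyeq} = {\preccurlyeq}$. Indeed, $\preccurlyeq$ is a precongruence, hence contextual (this direction is immediate from \cref{def:sill-obs-equiv/bisimulations:4}, as noted in the proof of \cref{prop:sill-obs-equiv/bisimulations:2}). Every simply branched configuration context is a configuration context, so any contextual relation is, a fortiori, simply branched contextual in the sense of \cref{def:sill-background-relat-equiv:3}. Thus $\preccurlyeq$ is itself a simply branched contextual relation contained in $\preccurlyeq$; since $\mkbcg{\preccurlyeq}$ is by \cref{def:sill-background-relat-equiv:1} the \emph{greatest} such relation, we conclude $\mkbcg{\preccurlyeq} = {\preccurlyeq}$.

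Because the lifting $\relcpl{({-})}$ of \cref{def:sill-obs-equiv/observ-comm:1} depends only on the underlying configuration relation, the identity $\mkbcg{\preccurlyeq} = {\preccurlyeq}$ gives $\relcpl*{\mkbcg{\preccurlyeq}} = \relcpl{\preccurlyeq}$. Applying \Cref{cor:sill-obs-equiv/proc-equiv:4} to $\preccurlyeq$ then yields that $\jtrelp{\relcpl*{\mkbcg{\preccurlyeq}}}{\Psi}{\Delta}{P}{Q}{a}{A}$ holds if and only if $\jtrelp{\mkocg{\relcpl{\preccurlyeq}}}{\Psi}{\Delta}{P}{Q}{a}{A}$ does, which by the preceding identity is precisely the equivalence of the first and second statements. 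Chaining this with \Cref{theorem:sill-obs-equiv/proc-equiv:1} closes the three-way equivalence.

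I do not expect a substantive obstacle: the corollary is essentially bookkeeping over the two preceding results, and the only point demanding care is the identity $\mkbcg{\preccurlyeq} = {\preccurlyeq}$, which rests on the fact that a precongruence is contextual and that simply branched contexts form a subclass of all contexts. The genuine difficulty was already discharged in \Cref{cor:sill-obs-equiv/proc-equiv:4} and, above all, in \Cref{theorem:sill-obs-equiv/proc-equiv:1}, whose induction over pure process contexts is where the complication of holes being copied by the functional layer was confined.
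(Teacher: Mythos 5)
Your proposal is correct and follows essentially the same route as the paper: the paper's proof consists precisely of the observation $\mkbcg{\left(\mkcg{\preccurlyeq}\right)} = \mkcg{\preccurlyeq}$ (which, since a precongruence is contextual and so equals its own contextual interior, is your identity $\mkbcg{\preccurlyeq} = {\preccurlyeq}$) followed by an appeal to \cref{cor:sill-obs-equiv/proc-equiv:4,theorem:sill-obs-equiv/proc-equiv:1}. Your write-up merely makes explicit the bookkeeping that the paper leaves implicit.
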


\begin{proof}
  Observe that ${\mkbcg{\left(\mkcg{\left(\preccurlyeq\right)}\right)}} = {\mkcg{\preccurlyeq}}$.
  The result follows from \cref{cor:sill-obs-equiv/proc-equiv:4,theorem:sill-obs-equiv/proc-equiv:1}.
\end{proof}

We consider a few examples to illustrate how we can (and cannot) use \cref{cor:sill-obs-equiv/observ-comm-equiv:1} to reason modularly about processes.

\begin{example}
  \label{ex:sill-obs-equiv-proc-equiv:1}
  We revisit \cref{ex:sill-obs-equiv-main-equiv:3,ex:sill-obs-equiv-main-equiv:4} and use \cref{cor:sill-obs-equiv/observ-comm-equiv:1} to show that:
  \begin{equation}
    \label{eq:sill-obs-equiv-proc-equiv:2}
    \jtrelp{\relcpl{\strobsc}}{\cdot}{a : \Tamp \{ l : \Tus{\Tu} \}}{\tSendL{a}{l}{\tSendL{b}{r}{\tSendS{a}{\tFwdP{a}{b}}}}}{\tSendL{b}{r}{\tSendL{a}{l}{\tSendS{a}{\tWait{a}{\tClose b}}}}}{b}{\Tplus \{r : \Tu\}}.
  \end{equation}
  By an argument analogous to the one in \cref{ex:sill-obs-equiv-main-equiv:4},
  \[
    \jtrelc{\strobsc}{a : \Tamp \{ l : \Tus{\Tu} \}}{\jproc{b}{\tSendL{a}{l}{\tSendL{b}{r}{\tSendS{a}{\tFwdP{a}{b}}}}}}{\jproc{b}{\tSendL{b}{r}{\tSendL{a}{l}{\tSendS{a}{\tFwdP{a}{b}}}}}}{b : \Tplus \{r : \Tu\}}.
  \]
  It follows that \( \jtrelp{\relcpl{\strobsc}}{\cdot}{a : \Tamp \{ l : \Tus{\Tu} \}}{\tSendL{a}{l}{\tSendL{b}{r}{\tSendS{a}{\tFwdP{a}{b}}}}}{\tSendL{b}{r}{\tSendL{a}{l}{\tSendS{a}{\tFwdP{a}{b}}}}}{b}{\Tplus \{r : \Tu\}}\).
  In \cref{ex:sill-obs-equiv-main-equiv:3}, we showed that \( \jtrelc{\strobsc}{a : \Tu}{\jproc{b}{\tFwdP{a}{b}}}{\jproc{b}{\tWait{a}{\tClose{b}}}}{b : \Tu} \).
  This implies by definition of \(\relcpl{\strobsc}\) that \(\jtrelp{\relcpl{\strobsc}}{\cdot}{a : \Tu}{\tFwdP{a}{b}}{\tWait{a}{\tClose{b}}}{b}{\Tu}\).
  The context \(\tSendL{b}{r}{\tSendL{a}{l}{\tSendS{a}{\ctxh{}{a:\Tu}{b:\Tu}}}}\) is a pure process context, so by \cref{cor:sill-obs-equiv/observ-comm-equiv:1} and definition of \(\mkpcg{\relcpl{\strobsc}}\):
  \[
    \jtrelp{\relcpl{\strobsc}}{\cdot}{a : \Tamp \{ l : \Tus{\Tu} \}}{\tSendL{b}{r}{\tSendL{a}{l}{\tSendS{a}{\tFwdP{a}{b}}}}}{\tSendL{b}{r}{\tSendL{a}{l}{\tSendS{a}{\tWait{a}{\tClose b}}}}}{b}{\Tplus \{r : \Tu\}}.
  \]
  \Cref{eq:sill-obs-equiv-proc-equiv:2} now follows by transitivity of \(\relcpl{\strobsc}\).
\end{example}

\begin{example}
  \label{ex:sill-obs-equiv-proc-equiv:2}
  In \cref{ex:sill-obs-equiv-proc-equiv:1}, we used contextuality and \cref{cor:sill-obs-equiv/observ-comm-equiv:1} to show that two processes were equivalent.
  In this example, we emphasize that we cannot rely on full contextuality, but only on observational contextuality or on pure process contextuality.
  Let \(\jtypem{\cdot}{a : \Tplus \{ l : \Tu, r : \Tu \}}{\ctxh{C}{a : \Tu}{b : \Tu}}{b}{\Tu}\) be given by \cref{eq:sill-obs-equiv-proc-equiv:1} on \cpageref{eq:sill-obs-equiv-proc-equiv:1}.
  Though \( \jtrelp{\relcpl{\strobsc}}{\cdot}{a : \Tu}{\tFwdP{a}{b}}{\tWait{a}{\tClose{b}}}{b}{\Tu} \), \cref{cor:sill-obs-equiv/observ-comm-equiv:1} is not strong enough to show directly that \( \jtrelp{\relcpl{\strobsc}}{\cdot}{a : \Tplus \{ l : \Tu, r : \Tu \}}{\ctxh[\tFwdP{a}{b}]{C}{a : \Tu}{b : \Tu}}{\ctxh[\tWait{a}{\tClose{b}}]{C}{a : \Tu}{b : \Tu}}{b}{\Tu}\).
  This is because \(\ctxh{C}{a : \Tu}{b : \Tu}\) is not a pure process context (its hole occurs in a subderivation of \getrn{I-proc}), and \cref{cor:sill-obs-equiv/observ-comm-equiv:1} only applies to observation contexts or pure process contexts.
\end{example}

\section{Related Work}
\label{sec:ssos-fairness:related-work}

\subsection{Multiset Rewriting Systems}

Multiset rewriting systems with existential quantification were first introduced by \textcite{cervesato_1999:_meta_notat_protoc_analy}.
They were used to study security protocols and were identified as the first-order Horn fragment of linear logic.
Since, MRSs have modelled other security protocols, and strand spaces~\cite{cervesato_2000:_inter_stran_linear_logic,cervesato_2005:_compar_between_stran}.
\Textcite{cervesato_scedrov_2009:_relat_state_based} studied the relationship between MRSs and linear logic.
These works do not explore~fairness.

Substructural operational semantics~\cite{simmons_2012:_subst_logic_specif} based on multiset rewriting are widely used to specify the operational behaviour of session-typed languages arising from proofs-as-processes interpretations of linear logic and adjoint logic.
Examples include functional languages with session-typed concurrency~\cite{toninho_2013:_higher_order_proces_funct_session}, languages with run-time monitoring~\cite{gommerstadt_2018:_session_typed_concur_contr}, message-passing interpretations of adjoint logic~\cite{pruiksma_pfenning_2019:_messag_passin_inter_adjoin_logic}, and session-typed languages with sharing~\cite{balzer_pfenning_2017:_manif_sharin_with_session_types}.

\subsection{Fairness}

Fairness finds its roots in work of \textcite{lamport_1977:_provin_correc_multip_progr,park_1980:_seman_fair_paral}.
\Textcite{lamport_1977:_provin_correc_multip_progr} studied the correctness of multiprocessor programs.
He described fairness constraints on schedulers using clocks, where process clocks were assumed to advance a certain amount in every period of real time.
Instead of using clocks, \textcite{park_1980:_seman_fair_paral} defined fairness using a ``fair merge'' operator on traces.
Early work~\cite{lehmann_1981:_impar_justic_fairn,park_1980:_seman_fair_paral} on fairness was concerned with fair termination.
\Textcite{francez_1986:_fairn} gave a comprehensive analysis of fair termination.
Weak and strong fairness were introduced by \textcite{apt_olderog_1982:_proof_rules_dealin_with_fairn, park_1982:_predic_trans_weak_fair_iterat} in the context of do-od languages.
Fairness was subsequently adapted to process calculi, \eg, by \textcite{grumberg_1984:_fair_termin_commun_proces} for CSP-like languages and by \textcite{costa_stirling_1987:_weak_stron_fairn_ccs} for Milner's CCS.
\Textcite{hennessy_1987:_algeb_theor_fair} studied fairness in the setting of asynchronous communicating processes.
\Textcite{leu_1988:_inter_among_various} introduced fairness for Petri nets.
\Textcite{kwiatkowska_1989:_survey_fairn_notion} surveys these notions and others, and gives a taxonomy of varieties of fairness.
Bounded fairness~\cite{dershowitz_2003:_bound_fairn} places bounds on how long we must wait before an event occurs.

Fair scheduling algorithms are an active research area in the programming languages and systems communities.
For example, \textcite{sistla_1983:_theor_issues_desig} studied the complexity of fair scheduling algorithms.
\Textcite{henry_1984:_unix_system} gave a fair scheduler for processes on UNIX systems.
\Textcite{muller_2019:_fairn_respon_paral, muller_2018:_compet_paral} studied fair scheduling for interactive computation and in the presence of priorities.
\Textcite{lahav_2020:_makin_weak_memor_model_fair} gave an account of process fairness under weak memory models.

Several applications of fairness to session-typed processes have been studied.
\Textcite{glabbeek_2021:_assum_just_enoug} develop a session-typed system that is complete for lock-freedom when assuming justness, a minimal fairness assumption.
Their calculus supports internal and external choice and guarded reduction.
Its semantics is given by a labelled transition system.
\Textcite{ciccone_padovani_2021:_infer_system_with} used ``generalized inference systems'' to specify fair compliance and fair subtyping for session-typed processes.
Fair compliance is a liveness property that ensures that clients interacting with servers will always reach a configuration where the client is satisfied and the server has not failed.
Fair subtyping is a refinement notion for session types that preserves fair compliance.

\subsection{Session-Typed Languages}

\Textcite{honda_1993:_types_dyadic_inter,takeuchi_1994:_inter_based_languag} introduced session types to describe sessions of interaction.
\Textcite{caires_pfenning_2010:_session_types_intuit_linear_propos} observed a proofs-as-programs correspondence between the session-typed $\pi$-calculus and intuitionistic linear logic, where the \rn{Cut} rule captures process communication.
\Textcite{toninho_2013:_higher_order_proces_funct_session} built on this correspondence and introduced SILL's monadic integration between functional and synchronous message-passing programming.
They specified SILL's operational behaviour using a substructural operational semantics.
\Textcite{gay_vasconcelos_2009:_linear_type_theor} introduced asynchronous communication for session-typed languages.
They used an operational semantics and buffers to model asynchronicity.
\Textcite{pfenning_griffith_2015:_polar_subst_session_types} observed that the polarity of a type determines the direction of communication along a channel.
They observed that synchronous communication can be encoded in an asynchronous setting using explicit shift operators.
They gave a computational interpretation to polarized adjoint logic.
In this interpretation, linear propositions, affine propositions, and unrestricted propositions correspond to different modes in which resources can be used.

There are several process calculi and session-typed programming languages that are closely related to Polarized SILL, and to which we conjecture our techniques could be extended.
\Textcite{wadler_2014:_propos_as_session} introduced ``Classical Processes'' (CP), a proofs-as-programs interpretation of classical linear logic that builds on the ideas of \textcite{caires_pfenning_2010:_session_types_intuit_linear_propos}.
CP supports replication but not recursion.
Though CP does not natively support functional programming, Wadler gives a translation for GV, a linear functional language with pairs but no recursion, into CP.
In contrast, Polarized SILL uniformly integrates functional programming and message-passing concurrency.
CP has a synchronous communication semantics and does not have an explicit treatment of polarities.
Polarized SILL has an asynchronous communication semantics, and synchronous communication is encoded using polarity shifts, even though we do not detail this construction here.
\Textcite{montesi_2017:_class_higher_order_proces} extended CP to support process mobility as in the higher-order \(\pi\)-calculus.

\Textcite{kokke_2019:_better_late_than_never} introduced ``hypersequent classical processes'' (HCP).
HCP is a revised proofs-as-processes interpretation between classical linear logic and the $\pi$-calculus.
Building on Atkey's~\cite{atkey_2017:_obser_commun_seman_class_proces} semantics for CP, they gave HCP a denotational semantics using Brzozowski derivatives~\cite{brzozowski_1964:_deriv_regul_expres}.
HCP does not include recursion, shifts, or functional value transmission.
\Textcite{fowler_2021:_separ_session_smoot} introduce ``Hypersequent GV'' (HGV), a calculus for functional programming with session types that enjoys deadlock-freedom, confluence, and strong normalization.
They give translations between HGV and HCP that preserve and reflect reduction.

\Textcite{montesi_peressotti_2021:_linear_logic_their_metat} gave a design recipe for developing session-typed process calculi based on linear logic.
This recipe unifies prior proofs-as-processes interpretations, gives a logical reconstruction of session fidelity, and elucidates connections between the metatheory of derivations and behavioural theory.

\Textcite{gommerstadt_2018:_session_typed_concur_contr} introduced run-time monitors for a dependent version of Polarized SILL.
Our type system for configurations is inspired by theirs~\cite[786]{gommerstadt_2018:_session_typed_concur_contr}.

\Textcite{pruiksma_pfenning_2021:_messag_passin_inter_adjoin_logic} gave a message passing interpretation to adjoint logic.
It supports richer communication topologies than Polarized SILL.
For example, it supports \emph{multicast}, \emph{replicable services}, and \emph{cancellation}.
Its operational semantics is specified by a multiset rewriting system.
It enjoys session fidelity and deadlock-freedom.

\Textcite{toninho_yoshida_2018:_polym_session_funct} gave a fully abstract and mutually inverse encoding between the polymorphic session-typed \(\pi\)-calculus and a linear formulation of System F.
Using this encoding, they use properties of their polymorphic \(\lambda\)-calculus to deduce properties of their polymorphic session-typed \(\pi\)-calculus, including the existence of inductive and coinductive sessions.

\subsection{Equivalence for Session-Typed Processes}

Various notions of program equivalence have been introduced for session-typed processes.
\Textcite{castellan_yoshida_2019:_two_sides_same_coin} gave a game semantics interpretation of the session $\pi$-calculus with recursion.
They showed that it is fully abstract relative to barbed congruence.
\Textcite{perez_2012:_linear_logic_relat,perez_2014:_linear_logic_relat} introduced linear logical relations for session-typed processes.
The denotational semantics \textcite{kokke_2019:_better_late_than_never} gave for hypersequent classical processes is fully abstract relative to barbed congruence.
It is unclear how to extend these approaches to support functional value transmission, general recursion, or polarity shifts.

Our observed communication semantics was partly motivated by efforts to relate a denotational semantics for Polarized SILL \cite{kavanagh_2021:_commun_based_seman} to Polarized SILL's substructural operational semantics.
There, processes denote continuous functions between domains of session-typed communications, and process composition is given by a trace operator.
Subject to certain simplifying assumptions, this denotational semantics is sound relative to barbed congruence and external observational congruence.

Barbed equivalences were introduced by \textcite{milner_sangiorgi_1992:_barbed_bisim} for CCS.
\Textcite{sangiorgi_1992:_expres_mobil_proces_algeb} introduced barbed simulations for the $\pi$-calculus~\cite{milner_1992:_calcul_mobil_proces_i,milner_1992:_calcul_mobil_proces_ii}, HO$\pi$, and CHOCS~\cite{thomsen_1993:_plain_chocs_secon}.
Barbed equivalences have since been generalized to a variety of calculi.

Defining process equivalence through experimentation dates at least as far back as the early 1980s~\cite{milner_1980:_calcul_commun_system,darondeau_1982:_enlar_defin_compl,hennessy_1983:_synch_async_exper_proces}.
\Textcite{milner_1980:_calcul_commun_system,denicola_hennessy_1984:_testin_equiv_proces} experimented on programs using programs.
\Textcite{denicola_hennessy_1984:_testin_equiv_proces,hennessy_1983:_synch_async_exper_proces} used a state-based approach, and experiments were successful if they reached a ``success'' state.
In contrast, our experiments on processes are ``successful'' if their observed communications are suitably related by a preorder or equivalence relation.
Our typed setting also simplifies our experimentation.
In particular, we do not need to worry about ``useless'' observers that cannot interact with the process being tested (\cf~\cite[\S~2.3.2]{denicola_1985:_testin_equiv_fully}): session types ensure that observers and observed processes can communicate.
Because fairness ensures a unique observation for each process, we do not need to worry about nondeterministic observations.
In particular, this means that the ``may'' and ``must'' preorders of \textcite{hennessy_1983:_synch_async_exper_proces, denicola_1985:_testin_equiv_fully} coincide in our setting.

\section{Summary and Future Work}
\label{sec:conclusion}

We gave the first analysis of fairness for multiset rewriting systems, and we developed communication-based techniques for reasoning about Polarized SILL and its programs.
We summarize these contributions and we discuss their potential applications to future research.
We also discuss open problems that are directly related to our contributions.

In \cref{sec:sill-obs-equiv:observ-comm}, we developed an observed communication semantics for Polarized SILL.
We defined the meaning of a session type to be the set of communications it allows, and we showed that this set could be endowed with a notion of approximation.
Then, we showed how to observe the communications sent by processes and configurations in the course of an execution.
Importantly, we showed that all fair executions of configurations resulted in the same observed communications.
This fact reflects the confluence property satisfied by Polarized SILL.
We also showed that the type of an observed communication agrees with the type of the channel on which it was observed, giving a semantic account of session fidelity.

We introduced a framework for extensional, observational notions of equivalence for Polarized SILL in \cref{cha:sill-obs-equiv}.
It was inspired by the ``testing equivalences'' framework of De Nicola and Hennessy~\cite{denicola_hennessy_1984:_testin_equiv_proces,hennessy_1983:_synch_async_exper_proces,denicola_1985:_testin_equiv_fully}.
Both frameworks are similar in that they deem processes to be equivalent whenever they are indistinguishable through experimentation.
The frameworks differ, however, in the notion of experimentation.
Subjecting processes to classical experiments could potentially result in a ``success'' state, and two processes were equivalent if they succeeded the same experiments.
Instead of defining experimental indistinguishability using observed states, we defined it in terms of observed communications.
In particular, our experiments communicated with processes (strictly speaking, with configurations of processes), and we deemed processes to be equivalent if we could not observe any differences in their communications.
We had a certain latitude in choosing which channels to observe, and this latitude resulted in different notions of process equivalence.
One of these, ``external observational equivalence'', coincided with barbed congruence.
We showed how to lift observational congruences on configurations to restricted forms of congruences on~processes.

For our observed communication semantics to be well-defined and for it to capture our semantic intuitions, we had to first develop fairness for multiset rewriting systems in \cref{sec:three-vari-fairn,sec:ssos-fairness:prop-fair-trac}.
We discovered three independent varieties of fairness---rule fairness, fact fairness, and instantiation fairness---and saw how each subdivided along the axis of weak and strong fairness.
These notions were all subsumed by a particularly strong form of fairness called \emph{über fairness}.
We studied properties of fair traces, constructed a scheduler, and gave sufficient conditions for multiset rewriting systems to have fair traces.
We observed that under certain conditions, all varieties of fairness coincided.
We introduced ``union equivalence'' for traces and studied the effects of permutations on fairness.
In particular, we showed that subject to certain conditions, fairness was preserved by permutation and that all fair executions were permutations of each other.

There are many open questions related to the above contributions.
We highlight some of the most important:
\begin{enumerate}
\item \emph{How do we lift observational congruences on configurations to (full) congruences on processes?}
  We showed in \cref{sec:sill-obs-equiv:proc-equiv} that observational congruences on configurations induced certain restricted classes of congruences on processes.
  However, the subtle interplay between the process and functional layers prevented us from showing that they induced full congruences.
  We conjecture that we could adapt Howe's method~\cite{howe_1996:_provin_congr_bisim} to show this result.

  This question has important implications for practical applications of our observational congruences.
  Indeed, the reason congruence relations are so sought after is that they allow us to replace equals by equals.
  If we could do so, then we could use them to reason about, \eg, program optimizations.
\item \emph{Can we use our observed communication semantics and communication-based testing equivalences to relate processes written in different languages?}
  Though our observed communication semantics is not denotational (it is not defined by induction on programs), it has a denotational flavour: it abstracts away a program's operational behaviour to define its meaning in terms of language-independent mathematical objects.
  We conjecture that these language-independent objects---observed communications---could be used to relate processes across languages.
  What properties must each language's observed communication semantics satisfy for these cross-language relations to be semantically meaningful?
\item \emph{What is the relationship between internal and external observational precongruence?}
  \Cref{cor:sill-obs-equiv/intern-observ:1} states that internal observational precongruence implies external observational precongruence, and we conjectured the converse in \cref{sec:sill-obs-equiv:proc-equiv}.
  This converse would simplify reasoning about processes, for it would let us reason directly using internal communications around the experiment's hole, instead of having to consider how they percolate through the experiment to its external channels.
  In light of \cref{theorem:sill-obs-equiv/extern-observ:2}, this converse would also imply that internal observational precongruence and barbed precongruence coincide.
\item \emph{Can we meaningfully redefine external, internal, and total observational simulation using a finer relation on functional values?}
  These three simulations were defined using the universal relation \(\unirel\) because we deemed values in Polarized SILL to be unobservable (they are either functions or quoted processes).
  This meant that values \(\lambda : \tau.x\) and \(\lambda x : \tau.\tFix{y}{y}\) were indistinguishable when sent over channels.
  We used this fact to show that internal and external observational simulation are distinct (\cref{prop:sill-obs-equiv-main-equiv:1}), and that internal observational simulation is not a precongruence (\cref{prop:sill-obs-equiv/intern-observ:2}).
  We conjecture that by using a finer relation on values, that internal and external observational simulation would coincide, and that internal observational simulation would be a precongruence.
\end{enumerate}

The unifying theme of our contributions is that we defined the meaning of processes in terms of their communications.
In doing so, we have stayed faithful to the process abstraction, \ie, to the premise that communication is the only phenomenon of processes.
A key benefit is that by defining meaning in terms of communication, we can abstract away concrete implementation or language details.
As a result, we believe our contributions can scale to handle more interesting protocols or communication patterns.
For example, we conjecture that our work could be extended to handle the dependent protocols captured by dependent session types~\cite{toninho_yoshida_2018:_depen_session_typed_proces, thiemann_vasconcelos_2019:_label_depen_session_types, toninho_2011:_depen_session_types, das_pfenning_2020:_session_types_arith}.
We also believe that our techniques can scale to handle more interesting communication patterns.
For example, we conjecture that they could be extended to handle features like \emph{multicast} (sending one message to multiple clients) and \emph{cancellation} (discarding channels without communicating on them) found in computational interpretations of adjoint logic~\cite{pruiksma_pfenning_2021:_messag_passin_inter_adjoin_logic}.

\section*{Acknowledgements}
\label{sec:ssos-fairness:concl-ackn}

The author thanks Stephen Brookes, Iliano Cervesato, Frank Pfenning, and the anonymous reviewers for their helpful comments.

This research was sponsored by Microsoft Corporation award 5005283 and by a Carnegie Mellon University School of Computer Science Presidential Fellowship.
The views and conclusions contained in this document are those of the author and should not be interpreted as representing the official policies, either expressed or implied, of any sponsoring institution or any other entity.

\setcounter{biburlnumpenalty}{100}  %
\setcounter{biburlucpenalty}{100}   %
\setcounter{biburllcpenalty}{100}   %
\emergencystretch=0.2em
\printbibliography
\emergencystretch=0em

\printunsrtglossary[type={symbols},style={topic}]

\appendix

\makeatletter
\gdef\thesection{\@Alph\c@section}%
\makeatother

\section{Complete Definition of Polarized SILL}

\subsection{Complete Listing of Typing Rules for Polarized SILL}
\label{sec:sill-background:compl-list-typing}

For ease of reference, we collect all of the rules for Polarized SILL in this appendix.

\subsubsection{Rules for Term Formation}
\label{sec:sill-den-sem:rules-term-formation}

\begin{gather*}
  \getrule{I-proc}
  \quad
  \getrule{F-var}
  \quad
  \getrule{F-fix}
  \\
  \getrule{F-fun}
  \quad
  \getrule{F-app}
\end{gather*}

\subsubsection{Rules for Process Formation}
\label{sec:sill-den-sem:rules-proc-form}

\begingroup
\allowdisplaybreaks

\begin{gather*}
  \getrule{fwdp}
  \quad
  \getrule{fwdn}
  \\
  \getrule{cut}
  \\
  \getrule{E-proc}
  \\
  \getrule{R-tu}
  \quad
  \getrule{L-tu}
  \\
  \getrule{R-tds}
  \quad
  \getrule{L-tds}
  \\
  \getrule{R-tus}
  \quad
  \getrule{L-tus}
  \\
  \begin{adjustbox}{max width=\textwidth}
    \getrule{R-tplus}
    \quad
    \getrule{L-tplus}
  \end{adjustbox}
  \\
  \begin{adjustbox}{max width=\textwidth}
    \getrule{R-tamp}
    \quad
    \getrule{L-tamp}
  \end{adjustbox}
  \\
  \getrule{R-tot}
  \quad
  \getrule{L-tot}
  \\
  \getrule{R-tlolly}
  \quad
  \getrule{L-tlolly}
  \\
  \getrule{R-tand}
  \quad
  \getrule{L-tand}
  \\
  \begin{adjustbox}{max width=\textwidth}
    \getrule{R-timp}
    \quad
    \getrule{L-timp}
  \end{adjustbox}
  \\
  \begin{adjustbox}{max width=\textwidth}
    \getrule{R-rhop}
    \,
    \getrule{L-rhop}
  \end{adjustbox}
  \\
  \begin{adjustbox}{max width=\textwidth}
    \getrule{R-rhon}
    \,
    \getrule{L-rhon}
  \end{adjustbox}
\end{gather*}

\endgroup

\subsubsection{Rules for Type Formation}
\label{sec:sill-den-sem:rules-type-formation}

\begingroup
\allowdisplaybreaks

\begin{gather*}
  \getrule{T-tu}
  \quad
  \getrule{T-var}
  \\
  \getrule{T-rhop}
  \quad
  \getrule{T-rhon}
  \\
  \getrule{T-tds}
  \quad
  \getrule{T-tus}
  \\
  \getrule{T-tplus}
  \quad
  \getrule{T-tamp}
  \\
  \getrule{T-tot}
  \quad
  \getrule{T-tlolly}
  \\
  \getrule{T-tand}
  \quad
  \getrule{T-timp}
  \\
  \getrule{T-proc}
  \quad
  \getrule{T-to}
\end{gather*}

\endgroup

\subsection{Complete Listing of Multiset-Rewriting Rules for Polarized SILL}
\label{sec:sill-background:compl-list-mult}

\newcommand{\numberthis}{\stepcounter{equation}\tag{\theequation}}

\begingroup
\allowdisplaybreaks

\begin{gather*}
  \tag{\ref{eq:sill:msr-fwdp}}
  \\
  \tag{\ref{eq:sill:msr-fwdn}}
  \\
  \tag{\ref{eq:sill:msr-cut}}
  \\
  \tag{\ref{eq:sill:msr-e-qt}}
  \\
  \tag{\ref{eq:sill:msr-tu-r}}
  \\
  \tag{\ref{eq:sill:msr-tu-l}}
  \\
  \tag{\ref{eq:sill:msr-tds-r}}
  \\
  \tag{\ref{eq:sill:msr-tds-l}}
  \\
  \label[msr]{eq:sill:msr-tus-r}\numberthis
  \jproc{a}{\tRecvS{a}{P}},
\jmsg{d}{\mSendSN{a}{d}}
\to
\jproc{d}{\subst{d}{a}{P}}
\\
  \label[msr]{eq:sill:msr-tus-l}\numberthis
  \jproc{c}{\tSendS{a}{P}}
\to
\exists d.
\jmsg{d}{\mSendSN{a}{d}},
\jproc{c}{\subst{d}{a}{P}}
\\
  \tag{\ref{eq:sill:msr-tplus-r}}
  \\
  \tag{\ref{eq:sill:msr-tplus-l}}
  \\
  \label[msr]{eq:sill:msr-tamp-r}\numberthis
  \jproc{a}{\tCase{a}{\left\{l \Rightarrow P_l\right\}_{l \in L}}},
\jmsg{d}{\mSendLN{a}{k}{d}}
\to
\jproc{d}{\subst{d}{a}{P_k}}
\\
  \label[msr]{eq:sill:msr-tamp-l}\numberthis
  \jproc{c}{\tSendL{a}{k}{P}}
\to {}
\exists d.
\jmsg{d}{\mSendLN{a}{k}{d}},
\jproc{c}{\subst{d}{a}{P}}
\\
  \tag{\ref{eq:sill:msr-tot-r}}
  \\
  \tag{\ref{eq:sill:msr-tot-l}}
  \\
  \tag{\ref{eq:sill:msr-tlolly-r}}
  \\
  \tag{\ref{eq:sill:msr-tlolly-l}}
  \\
  \tag{\ref{eq:sill:msr-tand-r}}
  \\
  \tag{\ref{eq:sill:msr-tand-l}}
  \\
  \label[msr]{eq:sill:msr-timp-r}\numberthis
  \jproc{a}{\tRecvV{x}{a}{P}}, \jmsg{d}{\mSendVN{a}{v}{d}} \to \jproc{d}{\subst{d,v}{a,x}{P}}
\\
  \label[msr]{eq:sill:msr-timp-l}\numberthis
  \jeval{M}{v}, \jproc{c}{\tSendV{a}{M}{P}} \to \exists d . \jmsg{d}{\mSendVN{a}{v}{d}}, \jproc{c}{\subst{d}{a}{P}}
\\
  \tag{\ref{eq:sill:msr-rhop-r}}
  \\
  \tag{\ref{eq:sill:msr-rhop-l}}
  \\
  \label[msr]{eq:sill:msr-rhon-r}\numberthis
  \jproc{a}{\tRecvU{a}{P}},
\jmsg{d}{\mSendUN{a}{d}}
\to
\jproc{d}{\subst{d}{a}{P}}
\\
  \label[msr]{eq:sill:msr-rhon-l}\numberthis
  \jproc{c}{\tSendU{a}{P}}
\to
\exists d.
\jmsg{d}{\mSendUN{a}{d}},
\jproc{c}{\subst{d}{a}{P}}

\end{gather*}

\endgroup

\end{document}